\numberwithin{equation}{section}
\theoremstyle{plain} 
\newtheorem{thm}{Theorem}[section] 
\newtheorem{cor}{Corollary}[section]
\newtheorem{lem}{Lemma}[section]
\newtheorem{prop}{Proposition}[section]
\theoremstyle{definition}
\newtheorem{defn}{Definition}[section]
\theoremstyle{remark}
\newtheorem{oss}{Remark}[section]
  \newcommand{\abs}[1]{\lvert#1\rvert}
\title{Asymptotic Structure and \\ Bondi-Metzner-Sachs group in General Relativity}
\author{Francesco Alessio}
\begin{document}
\maketitle
\frontmatter
\tableofcontents
\mainmatter
\section*{Introduction}
In this work the asymptotic structure of space-time and the main properties of the Bondi-Metzner-Sachs (BMS) group, which is the asymptotic symmetry group of asymptotically flat space-times, are analysed. Every chapter, except the fourth, begins with a brief summary of the topics that will be dealt through it and an introduction to the main concepts. The work can be divided into three principal parts. \\The first part includes the first two chapters and is devoted to the development of the mathematical tools that will be used throughout all of the work. In particular we will introduce the notion of space-time and will review the main features of what is referred to as its causal structure and the spinor formalism, which is fundamental in the understanding of the asymptotic properties. \\In the second part, which includes the third, fourth and fifth chapters, the topological and geometrical properties of null infinity, $\mathscr{I}$, and the behaviour of the fields in its neighbourhood will be studied. Particular attention will be paid to the peeling property. \\The last part is completely dedicated to the BMS group. We will solve the asymptotic Killing equations and find the generators of the group, discuss its group structure and Lie algebra and eventually try to obtain the Poincar\'e group as its normal subgroup.\\
The work ends with a brief conclusion in which are reviewed the main modern applications of the BMS group.
\chapter{Causal Structure}
\label{chap:1}
\begin{center}
\begin{large}
\textbf{Abstract}
\end{large}
\end{center}
In this first chapter we analyse what is referred to as the causal structure of the space-time. In particular we will start by giving, in the first section, the definition of a space-time, that will be used throughout all of the work. In the other sections the notions of orientability and causality, i.e. chronological and causal past and future sets and their topological properties, will be considered. Having defined these concepts, from section \ref{sect:1.6} we will start to explore the meaning of \lq causality-violating\rq\hspace{0.1mm} space-time and to take into account the restrictions to impose on a space-time for it to be \lq physical\rq. The last section is devoted to the global hyperbolicity and the existence of Cauchy surfaces and will be very important to discuss the asymptotic properties, which are subjects of the last chapters. \\
The main bibliography for this chapter is furnished by the beautiful works of Penrose, Hawking and Geroch. 
\section{Introduction}
\label{sect:1.1}
The mathematical model we shall use for the description of space-time, i.e. the collection of all events, is a four-dimensional manifold $\mathscr{M}$ (see Appendix \ref{A} for definitions). In fact a manifold corresponds naturally to our intuitive ideas of the continuity of space and time. So far this continuity is thought to be valid for distances greater than a certain cut-off of about $10^{-33}$ cm (the Planck length) and actually has been established for distances down to $10^{-15}$ cm by experiments on pion scattering. For the description of phenomena that occur at distances lesser than this cutoff our model for space-time could become inappropriate and other different structures may emerge, due to quantum effects. It is worth remarking that the first physicist who introduced the Planck scale value was the Soviet theoretical physicist Matvei Petrovich Bronstein in his work \lq Quantization of Gravitational Waves\rq\hspace{0.1mm} of 1936 in which he analysed the problem of the measurability of the gravitational field. He calculated the ``\textit{absolute minimum for the indeterminacy}" in the weak-field framework and formulated the following conclusion: \\
``\textit{The elimination of the logical inconsistencies connected with this requires a radical reconstruction of the theory, and in particular, the rejection of a Riemannian geometry dealing, as we have seen here, with values unobservable in principle, and perhaps also the rejection of our ordinary concepts of space and time, replacing them by some much deeper and non-evident concepts.}"
\\ 
\citep{Bron}\\
In such a way, the quantum limits of General Relativity were revealed for the first time.\\
Before investigating the causal structure of space-time, which explores the causal relationships between the events,
we will start by asking the question  ``What is the underlying manifold of our universe?". To answer we need to make some  physical and reasonable assumptions.\\
The first consideration is that no \lq edges\rq \hspace{1.5mm}of the universe have ever been observed. The edges can be mathematically represented by boundaries and hence we assume $\mathscr{M}$ to have no such boundaries. Furthermore we take $\mathscr{M}$ to be a connected Hausdorff manifold.  In fact we don't have knowledge of any disconnected components and moreover there could not be any communication between separated connected components of our universe. The Hausdorff condition says that any pair of points can be separated by disjoint neighbourhoods. Thus violating Hausdorff condition would imply a violation of concept of \lq distinct events\rq.\\
We know that General Relativity requires more than merely a manifold: there must be a metric tensor field $g$ defined over it that possesses a Lorentz signature. The following theorem is remarkable.
\begin{thm}{\citep{Ger68}}$\\ $
Let $\mathscr{M}$ be a connected, Hausdorff 4-manifold with a $C^{\infty}$ Lorentzian metric tensor $g$. Then the topology of $\mathscr{M}$ has a countable basis.
\end{thm}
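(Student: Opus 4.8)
The plan is to trade the second-countability statement for a statement about the existence of a Riemannian metric, and then to pin down the one step that carries real content. For a connected Hausdorff manifold one has the standard chain of equivalences \emph{second countable} $\Longleftrightarrow$ \emph{$\sigma$-compact} $\Longleftrightarrow$ \emph{Lindel\"of} $\Longleftrightarrow$ \emph{paracompact} $\Longleftrightarrow$ \emph{metrizable} $\Longleftrightarrow$ \emph{admits a $C^{\infty}$ Riemannian metric}: if $\mathscr{M}$ is paracompact, a partition of unity subordinate to a cover by coordinate charts glues the Euclidean forms into a global Riemannian metric; conversely a Riemannian metric induces a distance that metrizes the topology, a metrizable manifold is paracompact by Stone's theorem, and a connected, locally compact, paracompact Hausdorff space is $\sigma$-compact, hence second countable. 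So it suffices to manufacture \emph{some} smooth Riemannian metric on $\mathscr{M}$ starting only from the given Lorentzian metric $g$, and that is what I would aim for.

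The natural vehicle is a timelike line field. Suppose $\mathscr{M}$ carried a smooth rank-one sub-bundle $L\subset T\mathscr{M}$ on which $g$ is definite (a ``timelike'' line field). Then $g$ is nondegenerate on each fibre of $L$, so $T\mathscr{M}=L\oplus L^{\perp_{g}}$ is a smooth $g$-orthogonal splitting, $g$ is definite on $L$ and definite of the opposite sign on $L^{\perp_{g}}$, and replacing $g$ by $-g$ on exactly one of the two summands produces a $C^{\infty}$ Riemannian metric $h$ --- pure fibrewise linear algebra. (If one prefers to work with vector fields, one first passes to the time-orientation double cover $\pi\colon\widehat{\mathscr{M}}\to\mathscr{M}$, again a Hausdorff manifold, carrying $\pi^{*}g$ and a global smooth field of future light cones; since $\pi$ is a continuous open surjection, second countability of a connected component of $\widehat{\mathscr{M}}$ descends to $\mathscr{M}$, and upstairs a global future-directed timelike vector field spanning $L$ would do.) In this way the whole theorem is reduced to the single assertion: \emph{a connected Hausdorff manifold carrying a Lorentzian metric admits a smooth timelike line field.}

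That reduced assertion is the hard part, and its obvious-looking proof is a trap. One is tempted to cover $\mathscr{M}$ by coordinate charts $U_{\alpha}$, choose in each a timelike coordinate vector field (always possible locally), and patch with a partition of unity subordinate to $\{U_{\alpha}\}$, convexity of the light cones making the patched field still timelike. But a partition of unity subordinate to an \emph{arbitrary} open cover exists only once $\mathscr{M}$ is known to be paracompact --- precisely the conclusion being chased --- so the argument is circular; the same circularity hides in the phrasing ``the fibre $O(1,3)/\!\left(O(1)\times O(3)\right)$ is contractible, hence the reduction to the maximal compact subgroup exists'', a principle which genuinely fails over non-paracompact bases such as the long line. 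The real content of Geroch's theorem is to break this circle from the other side: to show that the mere presence of a Lorentzian metric already forces the topological tameness one needs, by using the light-cone structure of $g$ to control how the coordinate neighbourhoods of $\mathscr{M}$ can fit together and thereby prove $\sigma$-compactness directly. This is the step I expect to be genuinely delicate; by comparison the equivalences above and the linear algebra producing $h$ from $L$ are routine. Stated compactly, one is proving the nontrivial half of the characterisation ``a connected Hausdorff manifold carries a Lorentzian metric if and only if it is paracompact and carries a line field'', namely the implication \emph{Lorentzian metric $\Rightarrow$ paracompact} --- which is this theorem.
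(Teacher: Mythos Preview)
The paper does not supply a proof of this theorem; it is stated with a bare citation to \cite{Ger68} and then immediately used to infer paracompactness. So there is nothing in the paper to compare your argument against.

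On its own merits: your reduction is correct and your diagnosis of the circularity trap is exactly right --- the tempting partition-of-unity construction of a timelike line field presupposes paracompactness, which is what one is trying to prove. But having flagged the trap, you do not escape it. Your proposal ends precisely where the actual work begins: you write that ``the real content of Geroch's theorem is to break this circle from the other side \ldots\ This is the step I expect to be genuinely delicate,'' and then stop. What you have produced is an accurate map of the terrain, not a crossing of it. As a proof it is incomplete; the central step --- showing directly, from the light-cone structure of $g$ and connectedness alone, that $\mathscr{M}$ is $\sigma$-compact --- is identified but not carried out.

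If you want to complete it, Geroch's 1968 argument does roughly what you anticipate: one fixes a point and builds, using only the local Lorentzian geometry and connectedness, a countable exhaustion of $\mathscr{M}$, arguing that the exhaustion cannot terminate before covering all of $\mathscr{M}$. The mechanism that blocks long-line-type pathologies is genuinely tied to the Lorentzian signature, which is why the analogous statement fails for bare smooth manifolds.
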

Thus we may infer that $\mathscr{M}$ is paracompact, according to \ref{thm:cba}. This property, physically, prevents a manifold from \lq being too large\rq . Paracompatness has a number of important consequences. It can be shown that paracompactness implies that all connected components of $\mathscr{M}$ can be covered by a countable family of charts and that there exists a partition of unity which allows us to define a Riemannian metric over $\mathscr{M}$  as discussed in Appendix A (see \eqref{cor:sco}, \ref{thm:par} and \eqref{cor:met}).\\
The order of differentiability, $r$, of the metric must be sufficient for the field equation to be defined. Those equations, involving the metric tensor components $g_{ab}$, can be defined in a distributional sense if $g_{ab}$ and its inverse $g^{ab}$ are continuous and have locally square integrable generalized first derivatives with respect to the coordinate system. But this condition is not sufficient, since it guarantees neither the existence nor the uniqueness of geodesics, for which a $C^2$ metric is required.  In the remainder we will simply assume the metric to be $C^{\infty}$ because probably the order of differentiability of the metric is not physically relevant. In fact, since one can never measure the metric exactly, but only with some margin of error, one could never determine that there is an actual discontinuity in its derivatives of any order. Thus we are led to this definition of space-time:
\begin{defn} $\\ $
\label{defn:st}
A \textit{space-time} $(\mathscr{M},g)$ is a real, four-dimensional connected $C^{\infty}$ Hausdorff manifold without boundary with a globally defined $C^{\infty}$ tensor field $g$ of type $(0,2)$, which is non-degenerate and Lorentzian. By \textit{Lorentzian} is meant that for any $p\in\mathscr{M}$ there is a basis in $T_p$ (the tangent space to $\mathscr{M}$ at $p$) relative to which $g_p$ is represented by the matrix $\mathrm{diag}(1,-1,-1,-1)$.
\end{defn}
\begin{oss} $\\ $
Two space-times $(\mathscr{M},g)$ and $(\mathscr{M'},g')$ will be taken to be equivalent if there is a diffeomorphism $\theta:\mathscr{M}\rightarrow\mathscr{M'}$ which carries the metric $g$ into the metric $g'$, i.e. $\theta_*g=g'$. So it would be more correct to define the space-time to be the equivalence class of $(\mathscr{M},g)$, two space-times being equivalent if their metrics are linked by a diffeomorphism. However we will work with just one representative member of the above mentioned equivalence class.
\end{oss}
\section{Orientability}
\label{sect:1.2}
The presence of the metric tensor enables us to give the following
\begin{defn}$\\ $
Let $(\mathscr{M},g)$ be a space-time, with $p\in\mathscr{M}$. Then any tangent vector $X_p\in T_{p}$ is said to be: \textit{timelike}, \textit{spacelike} or \textit{null} according as $g(X_p,X_p)=g_{ab}X^a_pX^b_p$ (summation over the repeated indices, according to Einstein's convention) is positive, negative or zero.
\end{defn}
The \textit{null cone} at $p$ is the set of null vectors in $T_p$. The null cone in $T_p$ disconnects  the timelike vectors into two separate components, the \textit{future-directed} one and the \textit{past-directed} one. Similarly, the set of all triads of unit, mutually orthogonal, spacelike vectors at $p$ can be divided into two classes, which could be designated the \textit{left-handed} and \textit{right-handed} triads.\\
Physically the designation of future- and past-directed timelike vectors corresponds to a choice of a direction for the arrow of time, while the designation of left- and right-handed triads to a choice of spatial parity. Those choices can be made at each point of $\mathscr{M}$. We may ask whether or not such designations can be made globally over the entire $\mathscr{M}$. Then we are led to the following
\begin{defn}$\\ $
\label{defn:to}
A space-time $(\mathscr{M},g)$ is said to be \textit{time-orientable} if a designation of which timelike vectors are to be future-directed and which past-directed can be made at each of its point, where this designation is continuous from point to point over the entire manifold $\mathscr{M}$.
\end{defn}
Each of these two designations is called a \textit{time-orientation}. A similar definition holds for \textit{space-orientability} and \textit{space-orientation}, involving the triads mentioned above.\\
A space-time is clearly time-orientable if there exists a nowhere vanishing timelike vector field, i.e. one can choose at each point one of the two oppositely directed unit timelike vectors along a given direction, this choice being continuous from point to point. Conversely, since a space-time is paracompact, there exists a smooth Riemannian metric $h_{ab}$ defined on $\mathscr{M}$. Thus at $p\in\mathscr{M}$ there will be a unique future-directed timelike vector $t^a$ that can be chosen to be the unit eigenvector with positive eigenvalue $\lambda$, of $g$ with respect to $k$, i.e. $(g_{ab}-\lambda h_{ab})t^b=0$, $h_{ab}t^at^b=1$. Thus we have obtained the following
\begin{prop}$\\ $
A space-time $(\mathscr{M},g)$ is time-orientable if and only if there exists a smooth non-vanishing timelike vector field $t^a$ on $\mathscr{M}$.
\end{prop}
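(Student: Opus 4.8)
The plan is to prove the two implications separately. First I would handle the easy direction ($\Leftarrow$): assuming a smooth nowhere-vanishing timelike field $t^a$ exists, I declare, at each $p\in\mathscr{M}$, the connected component of the set of timelike vectors in $T_p$ containing $t^a(p)$ to be the future-directed one. To see this designation is continuous in the sense of Definition \ref{defn:to}, I would note that a timelike vector $X_p$ lies in the future cone precisely when $g_{ab}X^a t^b(p)>0$; working in a local chart, the function $(p,X)\mapsto g_{ab}(p)X^a t^b(p)$ is continuous (indeed $C^{\infty}$) because both $g$ and $t^a$ are, so the splitting into future and past cones varies continuously with $p$. Hence $(\mathscr{M},g)$ is time-orientable.

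For the converse ($\Rightarrow$), the idea is the one sketched just before the statement, made precise. Since $\mathscr{M}$ is paracompact, there is a smooth Riemannian metric $h_{ab}$ on $\mathscr{M}$. At each $p$ I would consider the generalized eigenvalue problem $(g_{ab}-\lambda h_{ab})t^b=0$; as $g$ is symmetric with respect to the positive-definite $h$, it is $h$-diagonalizable with real eigenvalues, and its Lorentzian signature forces exactly one eigenvalue $\lambda(p)$ to be positive — with one-dimensional eigenspace spanned by a timelike vector — and three to be negative. Normalizing by $h_{ab}t^a t^b=1$ pins down this eigenvector up to sign, and the time-orientation lets me choose, at each $p$, the future-directed representative $t^a(p)$.

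The hard part will be verifying that the $t^a$ so defined is globally smooth. I would argue locally: the positive eigenvalue $\lambda(p)$ is a simple root of the characteristic polynomial $\det(g_{ab}-\lambda h_{ab})=0$, whose coefficients depend smoothly on $p$; simplicity of the root lets the implicit function theorem produce a smooth local function $\lambda(\cdot)$, after which the one-dimensional kernel of the operator $g_{ab}-\lambda(p)h_{ab}$ admits a smooth local basis vector, and dividing by the smooth positive function $\sqrt{h_{ab}t^a t^b}$ keeps it smooth. This yields, near every point, a smooth unit timelike field determined up to an overall sign; the two sign choices correspond exactly to the future- and past-directed designations, and since that designation is continuous by hypothesis, the correct sign is locally constant. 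The only subtlety — and where I would be careful — is checking that this continuity is precisely what rules out a sign monodromy around loops, so that the local smooth fields glue into a single global smooth nowhere-vanishing timelike vector field $t^a$, completing the proof.
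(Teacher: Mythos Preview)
Your proof is correct and follows essentially the same approach as the paper: the easy direction uses $t^a$ to pick out the future cone continuously, and the converse uses paracompactness to obtain an auxiliary Riemannian metric $h_{ab}$ and then selects the $h$-unit eigenvector of $g$ with the unique positive generalized eigenvalue, with the time-orientation fixing the sign. In fact you supply more detail than the paper, which leaves the smoothness of the resulting $t^a$ implicit; your use of simplicity of the positive eigenvalue and the implicit function theorem to secure local smoothness, together with the observation that the sign ambiguity is resolved globally by the continuous time-orientation, is exactly the right way to fill that gap.
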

The above definitions of orientability are not in general the most useful. Fortunately there is a  much simpler characterization, that is given below.\\
Consider a space-time $(\mathscr{M},g)$, and fix a point $p\in\mathscr{M}$. Consider then a closed curve $\gamma\in\mathscr{M}$ beginning and ending at $p$. We can fix a time-orientation at $p$ and carry this choice continuously about $\gamma$ from point to point, until we revert to $p$, and thus have a final orientation which will be either the same or the opposite as with which we began, according to the fact that $\gamma$ is time-preserving or time-reversing. Then by definition of time-orientability we have that $(\mathscr{M},g)$ is time-orientable if any closed curve $\gamma$ through each of its point $p$ is time-preserving. We can easily show the converse. Consider again a point $p\in\mathscr{M}$ and choose there a time-orientation. Then choose a time-orientation at any other point $q\in\mathscr{M}$ by carrying the choice made at $p$ continuously along some curve joining $p$ and $q$. The resulting time-orientation at $q$ will be unambiguous: in fact, given another curve joining $p$ and $q$, it can be combined with the original to obtain a closed curve through $p$, that has to be, by hypothesis, time-preserving. Thus, repeating this procedure for every point of $\mathscr{M}$ we can obtain in each of them a definite time-orientation. We can conclude that
\begin{prop} $\\ $
A space-time $(\mathscr{M},g)$ is time-orientable if and only if every closed curve through each of its points $p$ is time-preserving.
\end{prop}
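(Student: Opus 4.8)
The plan is to prove the two implications separately, making precise the continuity arguments that the preceding paragraph sketches informally.

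For the forward direction, I would assume $(\mathscr{M},g)$ is time-orientable, so that by Definition \ref{defn:to} there is a designation $\mathcal{T}$, continuous over all of $\mathscr{M}$, of which timelike vectors are future-directed. Let $\gamma:[0,1]\to\mathscr{M}$ be a closed curve with $\gamma(0)=\gamma(1)=p$. Carrying a time-orientation continuously along $\gamma$ amounts to selecting, for each parameter value $s$, one of the two components of the timelike cone at $\gamma(s)$, in such a way that the choice varies continuously in $s$. The restriction of $\mathcal{T}$ to the image of $\gamma$ is one such continuous selection; and any two continuous selections along the connected interval $[0,1]$ that agree at one value of $s$ agree for all $s$, since the set of $s$ where they coincide is both open and closed. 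Hence whichever orientation we start with at $s=0$, the transported orientation at $s=1$ is the same, i.e. $\gamma$ is time-preserving. As $\gamma$ and $p$ were arbitrary, every closed curve through each point is time-preserving.

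For the converse, I would assume every closed curve through every point is time-preserving, fix $p\in\mathscr{M}$, and choose a time-orientation $\mathcal{T}_p$ at $p$. Since $\mathscr{M}$ is connected and manifolds are locally path-connected, $\mathscr{M}$ is path-connected, so for each $q\in\mathscr{M}$ there is a curve from $p$ to $q$; transporting $\mathcal{T}_p$ continuously along it yields an orientation $\mathcal{T}_q$ at $q$. This is independent of the chosen curve: given a second curve from $p$ to $q$, its concatenation with the reverse of the first is a closed curve through $p$, which by hypothesis is time-preserving, and this forces the two transported orientations at $q$ to agree. Thus $q\mapsto\mathcal{T}_q$ is a well-defined pointwise designation on all of $\mathscr{M}$, agreeing with the chosen one at $p$.

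It remains to show this designation is continuous, and this I expect to be the main obstacle, since the construction so far only produces $\mathcal{T}$ point by point. Around any $q$ take a connected coordinate ball $U$ carrying a continuous timelike vector field $X$ on $U$ (for instance the unit $g$-eigenvector field relative to an auxiliary smooth Riemannian metric, exactly as in the Proposition above, or $\partial/\partial t$ in suitable local coordinates). For any $q'\in U$, joining $q$ to $q'$ by a curve inside $U$ and transporting the orientation along it shows, by the same open-and-closed argument as before, that $\mathcal{T}_{q'}$ is the component of the timelike cone at $q'$ on the same side of $X_{q'}$ as $\mathcal{T}_q$ is of $X_q$; hence on $U$ the designation $\mathcal{T}$ is determined continuously by the continuous field $X$. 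Therefore $\mathcal{T}$ is continuous on a neighbourhood of every point, hence on $\mathscr{M}$, and it is by construction a time-orientation, so $(\mathscr{M},g)$ is time-orientable. A secondary point to state carefully is the passage from connectedness to path-connectedness, which is what legitimizes the transport used in this direction.
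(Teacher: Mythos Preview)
Your proof is correct and follows essentially the same route as the paper's argument (which appears in the paragraph immediately preceding the proposition rather than in a formal proof block): transport a chosen orientation along curves, use the closed-curve hypothesis to establish path-independence, and conclude. The only substantive addition you make is the explicit verification that the pointwise assignment $q\mapsto\mathcal{T}_q$ is continuous, via a local continuous timelike field; the paper simply asserts that one obtains ``in each of them a definite time-orientation'' without checking continuity, so your version is a strict sharpening of the same argument rather than a different approach.
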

Again, the same argument can be carried out for space-orientability.\\
Thus, in order to decide whether  or not a space-time is time orientable one has only to \lq test\rq\hspace{0.1mm} time-orientation about all the closed curves through a given point $p$. However the number of curves to test can be reduced considerably, as will be shown. We call two closed curves, $\gamma$ and $\gamma'$ through $p$ \textit{homotopic} if $\gamma$ can be continuously deformed into $\gamma'$, i.e. if there exists a 1-parameter family of curves, $\gamma_{\lambda}$, with the parameter $\lambda$ varying in $[0,1]$, such that $\gamma_0=\gamma$ and $\gamma_1=\gamma'$. For example any two closed curves through the origin in the plane are homotopic, while a closed curve on the annulus that wraps around the hole is not homotopic to one which does not. Since a continuous deformation cannot result in the discontinuous change from time-preserving to time-reversing, two closed homotopic curves are both time-preserving or time-reversing. Then we just need to evaluate the time-orientation of curves belonging to different homotopy equivalence classes. 
\begin{defn} $\\ $
A manifold $\mathscr{M}$ is \textit{simply connected} if any two closed curves through any $p\in\mathscr{M}$ are homotopic.
\end{defn}
Then we have
\begin{prop}$\\ $
Every space-time $(\mathscr{M},g)$ based on a simply connected manifold is both time- and space-orientable.
\end{prop}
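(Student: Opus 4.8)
The plan is to derive the statement directly from the immediately preceding proposition, which characterises time-orientability through the behaviour of closed curves, together with the homotopy-invariance of the time-preserving/time-reversing dichotomy already observed in the discussion above.

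First I would recall that, by that proposition, $(\mathscr{M},g)$ is time-orientable if and only if every closed curve through each point $p\in\mathscr{M}$ is time-preserving; hence it suffices to verify this condition under the hypothesis of simple connectedness. So I fix $p\in\mathscr{M}$ and an arbitrary closed curve $\gamma$ based at $p$. The constant curve $\gamma_p$ sitting at $p$ is also a closed curve through $p$, and it is trivially time-preserving, since carrying a time-orientation ``around'' it does nothing. By the definition of simple connectedness, $\gamma$ and $\gamma_p$ are homotopic; and since, as already noted, a continuous deformation of curves cannot produce the discontinuous jump from time-preserving to time-reversing, $\gamma$ must share the time-preserving character of $\gamma_p$. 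As $p$ and $\gamma$ were arbitrary, every closed curve through every point is time-preserving, and the proposition above then yields time-orientability.

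Finally, I would observe that nothing in this argument is sensitive to which of the two structures is being propagated: replacing ``time-orientation'' by ``space-orientation'' and ``timelike vector'' by ``orthonormal spacelike triad'' word for word, and invoking the space-orientation analogue of the characterisation proposition (valid by the remark that the same argument can be carried out for space-orientability), gives space-orientability as well. This yields both halves of the statement.

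The only point needing genuine care is the homotopy-invariance step: one must be certain that the time-preserving/time-reversing label attached to a closed curve is a bona fide homotopy invariant, i.e. depends only on the homotopy class and not on the particular continuous carrying of the orientation. Above this was argued informally from the impossibility of a discontinuous change under continuous deformation; a fully rigorous treatment would phrase it as the continuity (hence local constancy) of the resulting $\mathbb{Z}_2$-valued function on the space of loops at $p$, equivalently as the fact that time-orientability along loops is controlled by a homomorphism $\pi_1(\mathscr{M},p)\to\mathbb{Z}_2$, which is necessarily trivial when $\pi_1(\mathscr{M},p)$ is. Everything else is immediate.
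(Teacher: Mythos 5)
Your argument is correct and is precisely the one the paper intends: it combines the preceding characterisation of time-orientability via closed curves with the observed homotopy-invariance of the time-preserving property, notes that under simple connectedness every loop is homotopic to the (trivially time-preserving) constant loop, and repeats the argument verbatim for space-orientation. The paper states the proposition without a displayed proof, but the surrounding discussion is exactly this reasoning, so there is nothing to add beyond your (appropriate) remark that the homotopy-invariance step is where the real content lies.
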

We illustrate now what are the appropriate orientation properties of a \lq physically\rq\hspace{0.1mm} realistic model of our universe. Consider a space-time $(\mathscr{M},g)$ and fix a point $p\in\mathscr{M}$. Each point of $\mathscr{M}$ defines several points of a new space-time, $(\mathscr{\tilde{M}},\tilde{g})$, as follows. Consider pairs $(\gamma,q)$, where $q$ is a point of $\mathscr{M}$ and $\gamma$ a generic curve in $\mathscr{M}$ from $p$ to $q$. Two of such pairs $(\gamma,q)$ and $(\gamma',q')$ are called equivalent if $q=q'$ and $\gamma$ and $\gamma'$ are homotopic, and we say $(\gamma',q')\sim(\gamma,q)$. We define $\mathscr{\tilde{M}}$ to be the set of the just defined equivalence classes, i.e. $\mathscr{\tilde{M}}\equiv\{(\gamma',q') :(\gamma',q')\sim(\gamma,q)\,\hspace{2mm}\forall\gamma,q\in\mathscr{M}\}$. Thus, a point of $\mathscr{\tilde{M}}$ is just a point of $\mathscr{M}$ and a curve from $p$ to that point, up to a continuous deformation of the curve. We build the metric $\tilde{g}_{ab}$ defining the distance between two points $\tilde{q}=(\gamma,q)$ and $\tilde{q}'=(\gamma',q')$ of $\mathscr{\tilde{M}}$ just as the distance between $q$ and $q'$ in $\mathscr{M}$. The resulting space-time $(\mathscr{\tilde{M}},\tilde{g})$ is called \textit{universal covering space-time} of $(\mathscr{M},g)$. The two space-times are locally indistinguishable, but globally they are not. In fact if we suppose $(\mathscr{M},g)$ to be simply connected then $(\gamma',q')\sim(\gamma,q)$ only provided that $q=q'$. In this case the universal covering will be identical to the original space-time. To each point of $\mathscr{M}$ there corresponds just one point of $\mathscr{\tilde{M}}$. But if we take $(\mathscr{M},g)$ to be, for example, the two dimensional annulus (which is not simply connected), as shown in figure \eqref{fig:1.1}, then for the point $q$ in $\mathscr{M}$ the curve $\gamma$ can be continuously deformed into $\gamma'$, $(\gamma,q)\sim(\gamma',q)$, and hence the two pairs define the same point on $\mathscr{\tilde{M}}$, but it cannot be deformed continuously into $\gamma''$ which winds one time around the hole, $(\gamma,q)$ is not equivalent to $(\gamma'',q)$ and the two pairs define different points in $\mathscr{\tilde{M}}$. More generally a curve which reaches $q$ from $p$ after winding around the hole $n$ times will be deformable to a curve which winds around the hole the same number of times. Each point $q$ of $\mathscr{M}$, therefore, will give rise to an infinite number of points of $\mathscr{\tilde{M}}$, one for each value of the integer $n$. 
\begin{figure}[h]
\begin{center}
\includegraphics[scale=0.5]{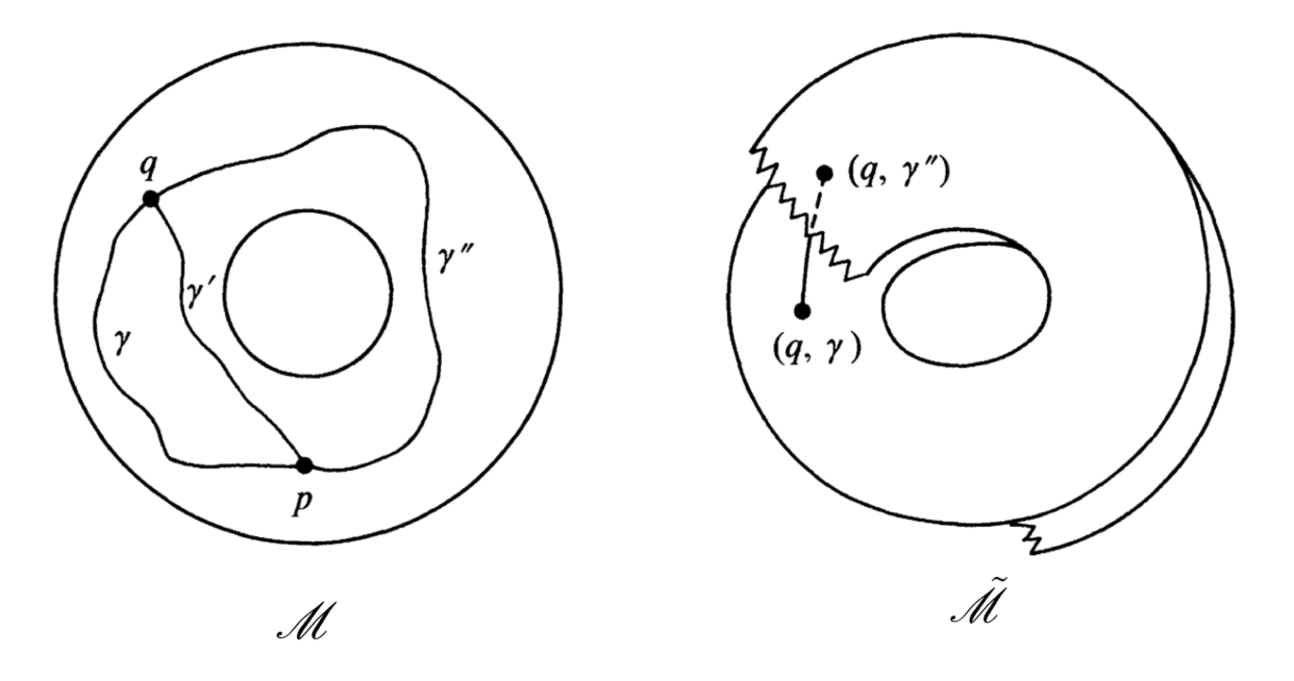}
\caption{The universal covering space $\mathscr{\tilde M}$ of the two-dimensional annulus. Each point of $\mathscr{M}$ defines an infinite number of points of $\mathscr{\tilde{M}}$.}
\label{fig:1.1}
\end{center}
\end{figure}
\\As shown in figure \ref{fig:1.1} we have just unwrapped the annulus.
Universal covering space-times of any space-time are always simply connected and therefore always time- and space-orientable.
 Their importance is related to the fact that they are physically indistinguishable form the original space-time because the only effect of taking the universal covering space-times is to produce possibly several copies of each local region in the original one, leaving the \lq local physics\rq\hspace{0.1mm} unaffected. The conclusion is that no physical possibilities would be lost by demanding that one space-time should be time- or space-orientable. We can eventually state the following
 \begin{prop}
 If a space-time $(\mathscr{M},g)$ is not simply connected (and hence not time-orientable), there always exists a simply connected (and hence time-orientable) space-time $(\mathscr{\tilde{M}},\tilde{g})$, which is its universal covering.
 \end{prop}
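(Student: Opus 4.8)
The plan is to promote the already-constructed object $\mathscr{\tilde M}$ — the set of pairs $(\gamma,q)$, with $q\in\mathscr{M}$ and $\gamma$ a path from the fixed $p$ to $q$, taken modulo homotopy with fixed endpoints — to a genuine space-time in the sense of Definition \ref{defn:st}, and then to invoke the earlier Proposition that every simply connected space-time is time- and space-orientable. Write $\pi\colon\mathscr{\tilde M}\to\mathscr{M}$ for the projection $[(\gamma,q)]\mapsto q$. Since $\mathscr{M}$ is a manifold it is locally Euclidean, hence locally path-connected and semi-locally simply connected, so the classical theory of covering spaces applies: one topologises $\mathscr{\tilde M}$ by taking as a basis the sets $\widetilde U_{[\gamma]}=\{[(\gamma\ast\alpha,q')]:\alpha\text{ a path in }U\text{ starting at }q\}$, where $U\subseteq\mathscr{M}$ ranges over open path-connected sets with $\pi_1(U)\to\pi_1(\mathscr{M})$ trivial, and one checks that with this topology $\pi$ is a covering map.

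Granting this, the manifold axioms come for free. Composing $\pi$ with a chart of $\mathscr{M}$ around $\pi(x)$, restricted to a single basic open set containing $x$, gives a chart of $\mathscr{\tilde M}$ around $x$; the transition functions are locally those of $\mathscr{M}$, so $\mathscr{\tilde M}$ is a connected $C^\infty$ $4$-manifold without boundary. Connectedness holds because $\mathscr{M}$, being a connected manifold, is path-connected, so any $[(\gamma,q)]$ is joined to the base point $\tilde p=[(\mathrm{const}_p,p)]$ by the lift of $\gamma$ issuing from $\tilde p$. Hausdorffness is inherited: two points of $\mathscr{\tilde M}$ either have distinct images in the Hausdorff space $\mathscr{M}$, in which case one separates the images and pulls them back, or they lie in distinct sheets over a common $U$, which are disjoint by construction.

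For the metric one sets $\tilde g:=\pi^{\ast}g$. Because $\pi$ is a local diffeomorphism, $\tilde g$ is a globally defined $C^\infty$ symmetric $(0,2)$-tensor field whose value at $x$ is, read off in the chart above, exactly $g_{\pi(x)}$; hence $\tilde g$ is non-degenerate and Lorentzian, and it reproduces the distance prescription described above since $\pi$ is then a local isometry. Thus $(\mathscr{\tilde M},\tilde g)$ is a space-time and $\pi$ exhibits it as a covering of $(\mathscr{M},g)$; a different choice of base point yields an isometric cover, so this is well defined up to the equivalence of space-times noted in the Remark following Definition \ref{defn:st}.

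It remains to show $\mathscr{\tilde M}$ is simply connected, after which time- and space-orientability are exactly the earlier Proposition. Let $\tilde\sigma$ be a loop at $\tilde p$ and $\sigma=\pi\circ\tilde\sigma$ its projection, a loop at $p$. By the very definition of $\mathscr{\tilde M}$, the unique lift of $\sigma$ starting at $\tilde p$ is $t\mapsto[(\sigma|_{[0,t]},\sigma(t))]$, whose endpoint is $[\sigma]$; since $\tilde\sigma$ is a loop this endpoint is $\tilde p=[(\mathrm{const}_p,p)]$, so $\sigma$ is null-homotopic in $\mathscr{M}$. Lifting such a null-homotopy through $\pi$ (homotopy-lifting property of a covering) gives a null-homotopy of $\tilde\sigma$, so $\pi_1(\mathscr{\tilde M},\tilde p)$ is trivial. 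I expect the only real work to be the point-set step of the first paragraph — fixing the topology on $\mathscr{\tilde M}$ and verifying that $\pi$ is a covering map with the path- and homotopy-lifting properties; once that is done, transporting the smooth structure, pulling back the metric, and killing $\pi_1$ are all routine, and the orientability conclusion is just a citation of the proposition already established.
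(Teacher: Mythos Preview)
Your proof is correct and follows essentially the same approach as the paper: construct the universal cover as homotopy classes of paths from a fixed base point, pull back the metric via the covering projection, and then invoke the earlier Proposition that simply connected space-times are time- (and space-) orientable. The paper does not give a formal proof of this proposition; it is stated as a summary of the preceding informal discussion of the universal covering construction, so what you have written is a rigorous elaboration of exactly the argument the paper sketches rather than a genuinely different route.
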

\section{The Exponential Map}
\label{sect:1.3}
To proceed further we shall need some simple properties of the \textit{exponential map}. Here and in the remainder we will assume the presence of a $C^{\infty}$, torsion-free connection $\nabla$ on the space-time $(\mathscr{M},g)$. For any $p\in\mathscr{M}$, the exponential map is a smooth ($C^{\infty}$) map from some open subset of the tangent space $T_p$, into $\mathscr{M}$
\begin{equation*}
\mathrm{exp}_p: V\in T_p\longrightarrow \mathrm{exp}_p(V)=q\in\mathscr{M}
\end{equation*}
such that the affinely parametrized geodesic with tangent vector $V$ at $p$ and parameter value $0$ at $p$ acquires the parameter value $1$ at $q$. This map is not defined for all $V\in T_p$, since a geodesic $\gamma(t)$ may not be defined for all $t$. If $t$ takes all values the geodesic is said to be a \textit{complete} geodesic. The manifold $\mathscr{M}$ is said to be \textit{geodesically complete} if all geodesics on $\mathscr{M}$ are complete, i.e. $\mathrm{exp}_p$ maps the whole $T_p$ into $\mathscr{M}$ for every $p\in\mathscr{M}$. That means that every affinely parametrized geodesic in $\mathscr{M}$ extends to arbitrarily large parameter values. But whether $\mathscr{M}$ is complete or not, it may well be that several different elements of $T_p$ are mapped to the same point of $\mathscr{M}$, as shown in Figure \ref{fig:1.3}, or that the map is badly behaved for certain elements of $T_p$ (because its jacobian vanishes). 
\begin{figure}[h]
\begin{center}
\includegraphics[scale=0.39]{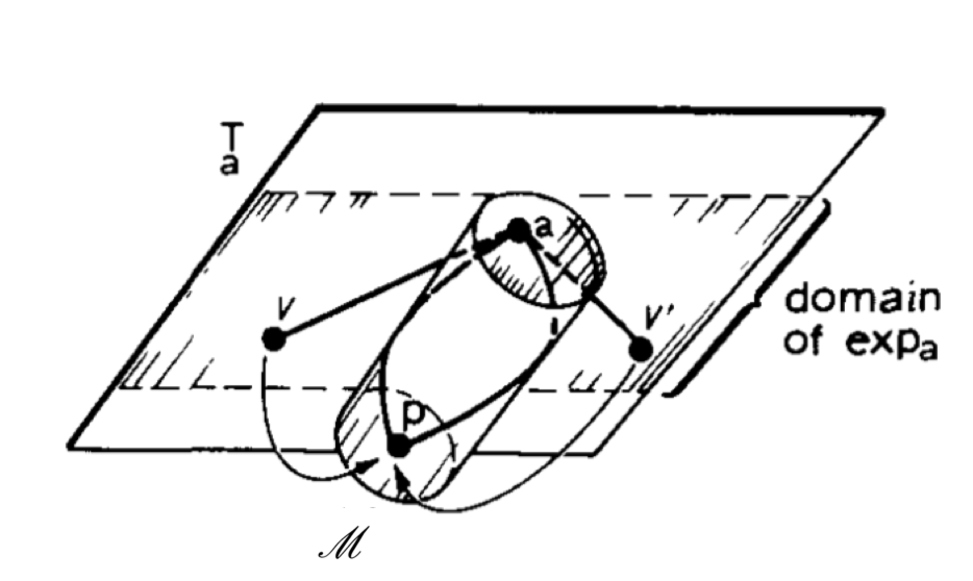}
\hspace{10mm}
\includegraphics[scale=0.39]{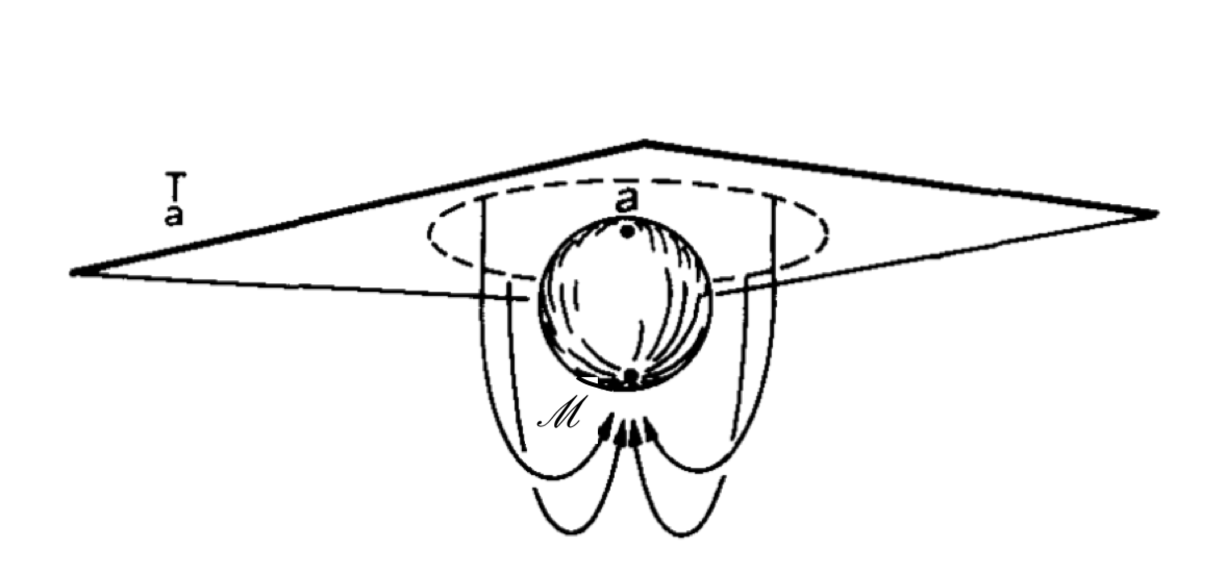}
\caption{On the left $\mathscr{M}$ is a 2-dimensional cylinder. Here $\mathrm{exp}_a$ maps a strip in $T_a$ onto $\mathscr{M}$, wrapping it around $\mathscr{M}$ infinitely many times so that the point $p\in\mathscr{M}$ is the image of infinitely many points in $T_a$, in particular of $v$ and $v'$. On the right $\mathscr{M}$ is a 2-sphere and all the circles of $k\pi$ radius and center at the origin of $T_a$ are mapped to a single point of $\mathscr{M}$.}
\label{fig:1.3}
\end{center}
\end{figure}
\\We require for the present that, for each $p\in\mathscr{M}$ there is some open neighbourhood $\mathscr{U}_0$ of the origin in $T_p$ and an open neighbourhood $\mathscr{U}_p$ of $p$ in $\mathscr{M}$ such that $\mathrm{exp}_p$ is a $C^{\infty}$ diffeomorphism from $\mathscr{U}_0$ to $\mathscr{U}_p$. Such a neighbourhood $\mathscr{U}_p$ is called \textit{normal  neighbourhood} of $p$. Furthermore, one can choose $\mathscr{U}_p$ to be \textit{convex}, i.e. to be such that any point $q$ in $\mathscr{U}_p$ can be joined to any other point $r$ in $\mathscr{U}_p$ by a unique geodesic starting at $q$ and totally contained in $\mathscr{U}_p$. Within a normal neighbourhood one can define coordinates $(x^1,...,x^4)$ by choosing any point $q\in \mathscr{U}_p$, choosing a basis $\{\mathrm{\textbf{E}}_a\}$ of $T_q$, and defining the coordinate of the point $r\in \mathscr{U}_p$ by the relation $r=\mathrm{exp}_q(\mathrm{\textbf{E}}_ax^a)$. In this way one assigns to $r$ the coordinates, with respect to the basis $\{\mathrm{\textbf{E}}_a\}$, of the point $\mathrm{exp}^{-1}_q(r)$ in $T_q$. Then $\left.(\partial/\partial x^i)\right|_{q}=\mathrm{\textbf{E}}_i$ and $\left.\Gamma^i_{jk}\right|_q=0$. Such coordinates will be called \textit{normal coordinates} based on $q$. 
 \section{Chronology and Causality}
 \label{sect:1.4}
Let $(\mathscr{M},g)$ be a space-time with fixed time-orientation and $p$ and $q$ any two points of $\mathscr{M}$. 
\begin{defn}$\\ $
\label{defn:prec}
The point $p$ \textit{chronologically precedes} $q$, $p\ll q$, if there exists a future-directed timelike curve (i.e. whose tangent vector is timelike future-directed) with past endpoint $p$ and future endpoint $q$.  
\end{defn}
The \lq precedes\rq\hspace{0.1mm} relation is the central one of what is called the causal structure of space-time. Physically it means that the events represented by $p$ and $q$ are causally related, in the sense that a signal can be sent from $p$ to be received later from $q$.\\
We have immediately the following
\begin{thm}$\\ $
\label{thm:prec}
If $p\ll q$ and $q\ll r$ then $p\ll r$. 
\end{thm}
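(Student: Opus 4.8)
The plan is to produce a future-directed timelike curve from $p$ to $r$ by splicing together the two curves supplied by the hypotheses, and then — if ``curve'' in Definition \ref{defn:prec} is meant in the $C^{\infty}$ sense rather than the piecewise-smooth one — to round off the corner that the splicing creates at $q$. By hypothesis there are future-directed timelike curves $\gamma_1$ from $p$ to $q$ and $\gamma_2$ from $q$ to $r$. Traversing $\gamma_1$ and then $\gamma_2$ is already a continuous, piecewise-smooth, future-directed timelike curve from $p$ to $r$, so if piecewise-smooth curves are admitted the theorem is immediate; the remaining work is entirely local, in a neighbourhood of the single junction $q$.

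To smooth the junction I would use Section \ref{sect:1.3} to fix a normal neighbourhood $\mathscr{U}_q$ of $q$, taken to be a small coordinate ball in normal coordinates based at $q$, small enough that it is convex and that the metric components differ only slightly from $\mathrm{diag}(1,-1,-1,-1)$ throughout it (they equal it exactly at $q$, where also $\Gamma^i_{jk}=0$). Since $\gamma_1$ arrives at $q$ with future-directed timelike tangent and $\gamma_2$ leaves $q$ with future-directed timelike tangent, in these coordinates a point $p'$ on $\gamma_1$ just before $q$ lies in the open past light cone of $q$ and a point $r'$ on $\gamma_2$ just after $q$ lies in the open future light cone of $q$; choose two such points, both in $\mathscr{U}_q$, with the intervening arcs of $\gamma_1$ and $\gamma_2$ inside $\mathscr{U}_q$. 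Then the coordinate segment $s\mapsto (1-s)\,p'+s\,r'$ stays in the ball $\mathscr{U}_q$ and has constant coordinate velocity $r'-p'$, which is the sum of the future-directed timelike vector ``$r'-q$'' and the future-directed timelike vector ``$q-p'$'', hence future-directed timelike at $q$; since timelikeness is an open condition and the metric is nearly Minkowskian on $\mathscr{U}_q$, this segment is a future-directed timelike curve $\sigma$ from $p'$ to $r'$. (Alternatively one could take the unique geodesic from $p'$ to $r'$, which stays in $\mathscr{U}_q$ by convexity.)

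Finally I would assemble the result: follow $\gamma_1$ from $p$ to $p'$, then $\sigma$ from $p'$ to $r'$, then $\gamma_2$ from $r'$ to $r$, after reparametrising each piece in small intervals about $p'$ and $r'$ so that all their derivatives agree at those two points. This matching is possible because at $p'$ and at $r'$ the incoming and outgoing tangents are both future-directed timelike, so a bump-function interpolation of their directions through the open future timelike cone remains timelike. The outcome is a single $C^{\infty}$ future-directed timelike curve from $p$ to $r$, that is, $p\ll r$.

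I expect the middle step to be the real obstacle: one must be sure that the curve built inside $\mathscr{U}_q$ is timelike along its whole length and not merely at its endpoints, and that the arcs of $\gamma_1$ and $\gamma_2$ used to reach $p'$ and $r'$ really stay in $\mathscr{U}_q$. Both are secured by the ``closeness to Minkowski'' of a small normal neighbourhood, which controls how far the $g_{ab}$ can drift from their values at $q$; so the heart of the argument is simply the choice of a sufficiently small $\mathscr{U}_q$ together with $p'$, $r'$ close to $q$, while the concatenation and the corner-rounding reparametrisation are routine.
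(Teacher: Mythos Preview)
Your approach is essentially the paper's: concatenate the two timelike curves at $q$. The paper's proof stops there, since in this text timelike curves are taken to be piecewise differentiable (as made explicit at the start of Section~\ref{sect:1.5}), so your careful corner-rounding in a normal neighbourhood, while correct, is unnecessary here.
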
 
\begin{proof}It is sufficient to draw the timelike curves $\gamma_{pq}$ and $\gamma_{qr}$, that must exist by hypothesis and joining them at $q$. Their union, $\gamma_{pr}$ is a union of timelike curves, and hence timelike, which connects $p$ and $r$. Thus $p\ll r$.
\end{proof} 
\begin{defn}$\\ $
\begin{itemize}
\item The set $I^-(p)=\{q\in\mathscr{M}:q\ll p\}$ is called the \textit{chronological past} of $p$;
\item The set $I^+(p)=\{q\in\mathscr{M}:p\ll q\}$ is called the \textit{chronological future} of $p$;
\item Given a subset $S\subset\mathscr{M}$ the set $I^-[S]=\bigcup_{p\in S}I^-(p)$ is called the \textit{chronological past} of $S$;
\item Given a subset $S\subset\mathscr{M}$ the set $I^+[S]=\bigcup_{p\in S}I^+(p)$ is called the \textit{chronological future} of $S$.
\end{itemize}
\end{defn}
Since one can always perform a sufficiently small deformation of a timelike curve while preserving the timelike nature of the curve, it follows that for all $q\in I^{\pm}(p)$ there exists an open neighbourhood $O$ of $q$ such that $O\subset I^{\pm}(p)$. Thus
\begin{prop}$\\ $
\label{thm:opensets}
$I^{\pm}(p)$ is an open subset of $\mathscr{M}$, for every $p\in\mathscr{M}$. 
\end{prop}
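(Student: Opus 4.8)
The plan is to verify that every point of $I^{+}(p)$ is an interior point of it; since a subset of a manifold is open precisely when all of its points are interior, this gives the claim for $I^{+}(p)$, and the statement for $I^{-}(p)$ follows at once by reversing the time-orientation (equivalently, by applying the $I^{+}$-result to $(\mathscr{M},g)$ with the opposite choice of future cone). So I would fix $q\in I^{+}(p)$ and a future-directed timelike curve $\gamma\colon[0,1]\to\mathscr{M}$ with $\gamma(0)=p$ and $\gamma(1)=q$. Following the remark preceding the statement, the idea is to keep $\gamma$ fixed away from $q$ and to deform its tail so that it ends at an arbitrary point $q'$ near $q$ while remaining timelike; the collection of admissible $q'$ will be the neighbourhood $O$ we are after.

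Concretely: choose a coordinate chart $(\mathscr{V},x^{a})$ about $q$ and a small $\delta>0$ such that $\gamma([1-\delta,1])\subset\mathscr{V}$ and, on this segment, $\dot\gamma$ is future-directed timelike with $g(\dot\gamma,\dot\gamma)$ bounded below by a positive constant. Put $r:=\gamma(1-\delta)$, so that $p\ll r$. Given a point $q'$ with coordinate displacement $\xi$ from $q$, replace $\gamma|_{[1-\delta,1]}$ by a curve from $r$ to $q'$ whose coordinate representation interpolates (affinely, say, with the corner at $r$ rounded off inside the timelike cone) between the two endpoints; its velocity differs from that of $\gamma$ on the original segment by a term of size $O(|\xi|/\delta)$, so for $|\xi|$ small enough compared with $\delta$ it is still future-directed timelike, uniformly along the segment. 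Concatenating $\gamma|_{[0,1-\delta]}$ with this deformed tail produces a future-directed timelike curve from $p$ to $q'$ — exactly as in the proof of Theorem \ref{thm:prec}, where two timelike curves are joined end to end — whence $q'\in I^{+}(p)$. Hence the coordinate ball $O:=\{\,q':|\xi|<\varepsilon\,\}$ is, for $\varepsilon$ chosen small, an open neighbourhood of $q$ contained in $I^{+}(p)$, and since $q$ was arbitrary $I^{+}(p)$ is open.

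The step I expect to carry the weight is the quantitative one: that a sufficiently small endpoint-deformation preserves timelikeness. This rests only on the fact that the future-directed timelike vectors form an open cone in each tangent space, but one must make this openness uniform over the compact segment $\gamma([1-\delta,1])$ and then track how the interpolating curve's velocity depends on $\xi$ and $\delta$. That bookkeeping, together with the routine smoothing of the corner at $r$ — or, alternatively, simply adopting the convention that causal curves may be piecewise smooth, which is already implicit in Theorem \ref{thm:prec} — is the only genuinely computational content.

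A cleaner, coordinate-free alternative, which I would mention but not pursue in full, is to take $\delta$ small enough that $r=\gamma(1-\delta)$ lies in a \emph{convex normal neighbourhood} $\mathscr{U}$ of $q$ with $\gamma([1-\delta,1])\subset\mathscr{U}$ (such $\mathscr{U}$ exist by the discussion of Section \ref{sect:1.3}); invoking the classical lemma that a future-directed timelike curve joining two points of a convex normal neighbourhood is the image under $\mathrm{exp}_{r}$ of a vector in the open future timelike cone $\mathcal{C}\subset T_{r}$, one gets $q\in\mathrm{exp}_{r}(\mathcal{C})$, and then $O:=\mathrm{exp}_{r}(\mathcal{C})\cap\mathscr{U}$ is open (the image of the open set $\mathcal{C}$ under the diffeomorphism $\mathrm{exp}_{r}$), contains $q$, and lies in $I^{+}(p)$ because every point of it is joined to $r$ by a future-directed timelike geodesic while $p\ll r$, so that $p\ll q'$ by Theorem \ref{thm:prec}. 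In this route the one nontrivial ingredient is the quoted lemma on convex normal neighbourhoods, whose proof proceeds via the behaviour of the world function on $\mathscr{U}$ and is standard.
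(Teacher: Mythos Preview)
Your proposal is correct and follows exactly the approach the paper takes: the paper's entire argument is the single sentence preceding the proposition, namely that ``one can always perform a sufficiently small deformation of a timelike curve while preserving the timelike nature of the curve, it follows that for all $q\in I^{\pm}(p)$ there exists an open neighbourhood $O$ of $q$ such that $O\subset I^{\pm}(p)$.'' You have simply fleshed out this deformation idea with coordinate estimates and a clean convex-normal-neighbourhood variant, both of which are standard ways to make the assertion rigorous.
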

The same property holds for $I^{\pm}[S]$, being the union of open sets. \\
As an example, in Minkowski space-time with the usual coordinates $(t,x,y,z)$, if $p=(0,0,0,0)$ then $I^-(p)=\{(t,x,y,z):t<-(x^2+y^2+z^2)^{1/2}
\}$ and $I^+(p)=\{(t,x,y,z):t>(x^2+y^2+z^2)^{1/2}
\}$ are just the interiors of the past and future light-cones of $p$.
\begin{defn}$\\ $
\label{defn:cprec}
The point $p$ \textit{causally precedes} $q$, $p\prec q$, if there exists a future-directed causal curve (i.e. whose tangent vector is timelike or null future-directed) with past endpoint $p$ and future endpoint $q$.  
\end{defn}
\begin{oss}$\\ $
Note that \cite{PenTDT} defines the chronologically and causally precedes relations using not arbitrary timelike and null curves, but geodesics which are easier to handle mathematically.
\end{oss}
We have similar definitions for the \textit{causal past} of $p$, $J^-(p)$ and for the \textit{causal future} of $p$, $J^+(p)$. In the remainder, to show the topological properties of the above defined sets, we will only use the future ones, being clear that they are valid for the past ones too. \\
In the above example of Minkowski space-time we have $J^-(p)=\{(t,x,y,z):t\leq-(x^2+y^2+z^2)^{1/2}
\}$ and $J^+(p)=\{(t,x,y,z):t\geq(x^2+y^2+z^2)^{1/2}
\}$, these sets being closed, and furthermore we have that the boundaries $\dot{I}^{\pm}(p)$ of $I^{\pm}(p)$ are generated by the null geodesics starting from $p$. However neither of this last properties is valid in general. We can immediately give an example of space-time in which $J^{\pm}(p)$ is not closed and where $ \dot{I}^{\pm}(p)$ are not generated by null geodesics starting from $p$. Let $(\mathscr{M},g)$ be a space-time and consider a closed subset $C$ of $\mathscr{M}$. Then $\mathscr{M}-C$ with the metric induced by $g$ is, if connected, itself a space-time (that $C$ be closed was necessary to ensure that $\mathscr{M}-C$ even be a manifold). This simple argument allows us to buil a new space-time from a given one, by, for example, removing a point of it. If now we consider Minkowski space-time with one point removed, as shown in Figure \ref{fig:1.4}, then $J^+(p)$ is not closed since the null geodesic beyond the removed point, which extends from $p$, is not part of $J^+(p)$ whereas it is part of $\dot{J}^+(p)$. 
\begin{figure}[h]
\begin{center}
\includegraphics[scale=0.4]{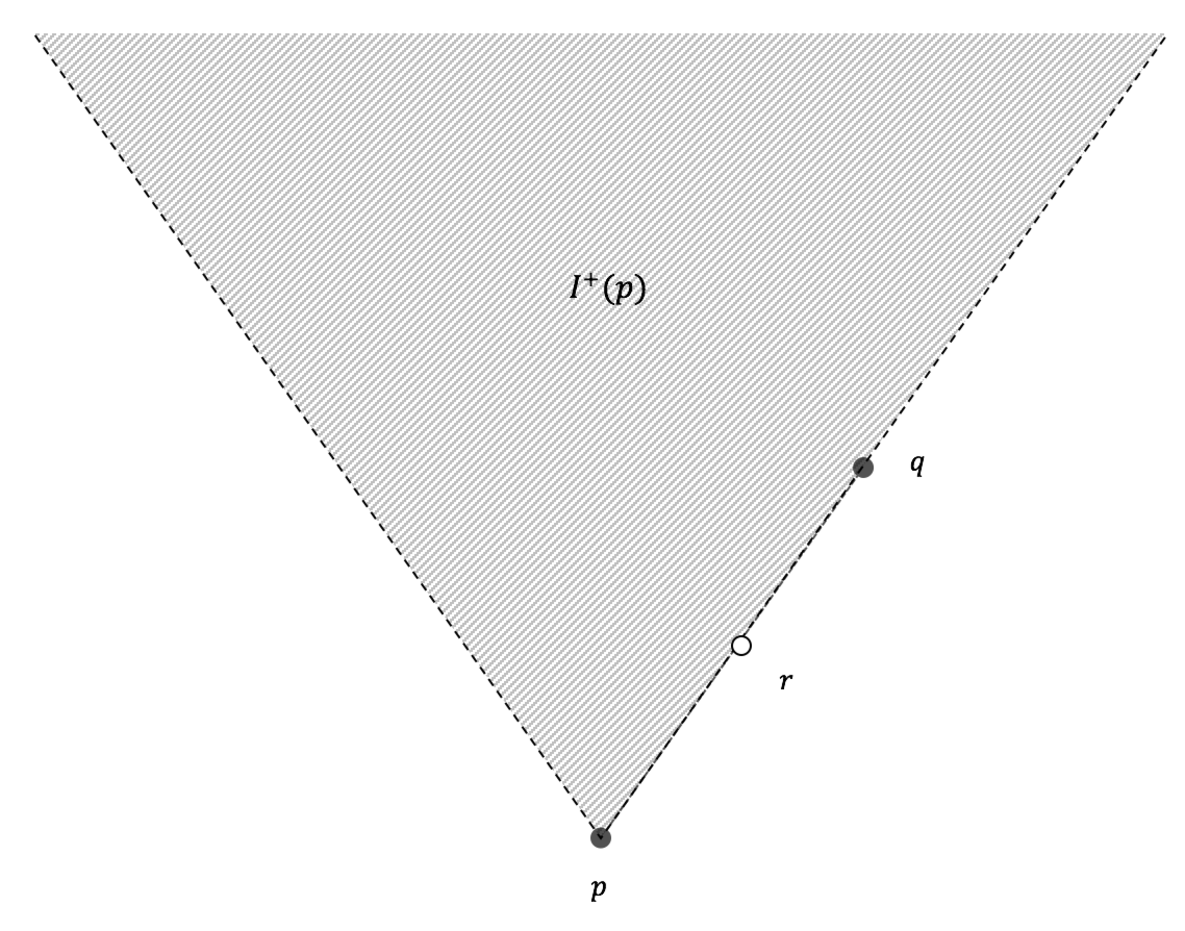}
\caption{Minkowski space-time with a point $r$ removed. In this space-time no causal curve connects $p$ and $q$, hence $q$ is not in $J^+(p)$. However $q\in\overline{J^+}(p)$. Thus $J^+(p)$ is not closed.}
\label{fig:1.4}
\end{center}
\end{figure}
\\Anyway the two properties mentioned above remain valid \textit{locally}, as stated by the following 
\begin{thm}$\\ $
\label{thm:nge}
Let $(\mathscr{M},g)$ be an arbitrary space-time and let $p\in\mathscr{M}$. Then there exists a convex normal neighbourhood  $\mathscr{U}$ of $p$\\
Furthermore, for any such $\mathscr{U}$, $\left.I^{+}(p)\right|_{\mathscr{U}}$, i.e. the chronological future of $p$ in the space-time $(\mathscr{U},g)$, consists of all points reached by future-directed timelike geodesics starting from $p$ and contained within $\mathscr{U}$, and has its boundary $\left.\dot{I}^{+}(p)\right|_{\mathscr{U}}$ generated by future-directed null geodesics in $\mathscr{U}$ starting from $p$.
\end{thm}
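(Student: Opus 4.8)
The plan is to do everything inside one convex normal neighbourhood, so that via the exponential map the causal relations at $p$ become statements about the quadratic form $g_p$ on the tangent space $T_p$. The existence of $\mathscr{U}$ was essentially already recorded in Section~\ref{sect:1.3}: the assignment $(q,V)\mapsto(q,\mathrm{exp}_q V)$ is a $C^{\infty}$ diffeomorphism from a neighbourhood of the zero section of $T\mathscr{M}$ onto a neighbourhood of the diagonal in $\mathscr{M}\times\mathscr{M}$, because its differential at $(p,0)$ is invertible (recall $(d\,\mathrm{exp}_p)_0=\mathrm{id}$); a standard argument then yields, for each $p$, a geodesically convex open $\mathscr{U}\ni p$ that is the diffeomorphic image under $\mathrm{exp}_p$ of a star-shaped open set $\mathscr{U}_0\subset T_p$ about the origin. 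I would fix such a $\mathscr{U}$, introduce normal coordinates based at $p$ so that $g_p=\mathrm{diag}(1,-1,-1,-1)$ at the origin, and write $C^{+}\subset T_p$ for the open cone of future-directed timelike vectors.

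\emph{The interior statement.} The inclusion $\mathrm{exp}_p(C^{+}\cap\mathscr{U}_0)\subseteq\left.I^{+}(p)\right|_{\mathscr{U}}$ is immediate, since for $V\in C^{+}\cap\mathscr{U}_0$ the radial geodesic $t\mapsto\mathrm{exp}_p(tV)$ is future-directed timelike and lies in $\mathscr{U}$. For the converse I would use the ``signed squared geodesic distance'' $\rho:\mathscr{U}\to\mathbb{R}$, $\rho(x)=g_p\!\big(\mathrm{exp}_p^{-1}(x),\mathrm{exp}_p^{-1}(x)\big)$, together with the radial vector field $P$ whose value at $x=\mathrm{exp}_p(V)$ is the velocity at $x$ of that radial geodesic, $P(x)=(d\,\mathrm{exp}_p)_V V$. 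Two facts are needed: (i) since the radial geodesic is affinely parametrised, $g_x(P,P)=g_p(V,V)=\rho(x)$, and parallel transport along it shows that $V$ is future timelike (resp. future null) precisely when $P(x)$ is; (ii) Gauss's lemma, which gives $d\rho|_x=2\,g_x(P,\,\cdot\,)$. Granting these, let $\gamma:[0,1]\to\mathscr{U}$ be a future-directed timelike curve with $\gamma(0)=p$, $\gamma(1)=q$. Since $\rho(\gamma(s))=g_p(\dot\gamma(0),\dot\gamma(0))\,s^2+o(s^2)>0$ for small $s>0$ and $\mathrm{exp}_p^{-1}(\gamma(s))\approx s\,\dot\gamma(0)$ is future-directed, the point $\gamma(s)$ lies in $\mathrm{exp}_p(C^{+}\cap\mathscr{U}_0)$, hence $P(\gamma(s))$ is future timelike, for small $s>0$. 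Wherever $P(\gamma(s))$ is future timelike one has $\frac{d}{ds}\rho(\gamma(s))=2\,g(P,\dot\gamma)>0$, because the inner product of two future-directed timelike vectors is positive. If $\gamma$ ever left the open region $\{P\text{ future timelike}\}$, then at the first such parameter $s_0$ the vector $P(\gamma(s_0))$ would be future null, forcing $\rho(\gamma(s_0))=0$; but $\rho\circ\gamma$ is strictly increasing on $(0,s_0)$ from $\rho(p)=0$, a contradiction. Hence $P(q)$ is future timelike, so $\mathrm{exp}_p^{-1}(q)\in C^{+}$, i.e.\ $q$ is reached from $p$ by a future-directed timelike geodesic inside $\mathscr{U}$.

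\emph{The boundary statement.} Because $\mathrm{exp}_p:\mathscr{U}_0\to\mathscr{U}$ is a homeomorphism and $\left.I^{+}(p)\right|_{\mathscr{U}}=\mathrm{exp}_p(C^{+}\cap\mathscr{U}_0)$, the boundary relative to $\mathscr{U}$ is $\mathrm{exp}_p\!\big(\partial C^{+}\cap\mathscr{U}_0\big)$; and $\partial C^{+}$ is exactly the future null cone of $g_p$ together with the origin, whose image under $\mathrm{exp}_p$ consists precisely of the future-directed null geodesics issuing from $p$ in $\mathscr{U}$. This yields $\left.\dot{I}^{+}(p)\right|_{\mathscr{U}}$ as claimed.

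\emph{Main obstacle.} The substantive part is Gauss's lemma and the resulting identity $d\rho=2\,g(P,\,\cdot\,)$ valid on all of $\mathscr{U}$, not merely infinitesimally at $p$: this is a Jacobi-field computation, and it is what turns the monotonicity of $\rho\circ\gamma$ into a global statement inside $\mathscr{U}$. Once it is in hand, the ``first-exit-time'' argument for $\gamma$ and the purely topological identification of the boundary are routine.
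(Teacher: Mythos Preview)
The paper does not actually prove this theorem: it merely cites Hicks for the existence of convex normal neighbourhoods and Hawking--Ellis (p.~103) for the causal part. Your argument is correct and is precisely the standard proof found in that reference (their Proposition~4.5.1, essentially also O'Neill's version): pull back to $T_p$ via $\mathrm{exp}_p$, use Gauss's lemma to obtain $d\rho=2\,g(P,\cdot)$, and run the first-exit-time monotonicity argument along an arbitrary future-directed timelike curve to show it cannot leave $\mathrm{exp}_p(C^{+}\cap\mathscr{U}_0)$; the boundary statement then follows from the homeomorphism. So you have supplied exactly what the cited source contains.
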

The proof of the first proposition can be found in \cite[pg.~32]{Hicks} while the second in \cite[pg.~103]{HawEll}.\\
Let $q\in J^+(p)$ and $\gamma$ be a causal curve beginning at $p$ and ending at $q$. Since $\gamma$ is a compact subset of $\mathscr{M}$ (being the continuous image of a closed interval) it can be covered by a finite number of convex normal neighbourhoods $\mathscr{U}_i$, $i=1,..n$ as shown in Figure \ref{fig:1.5}.
\begin{figure}[h]
\begin{center}
\includegraphics[scale=0.4]{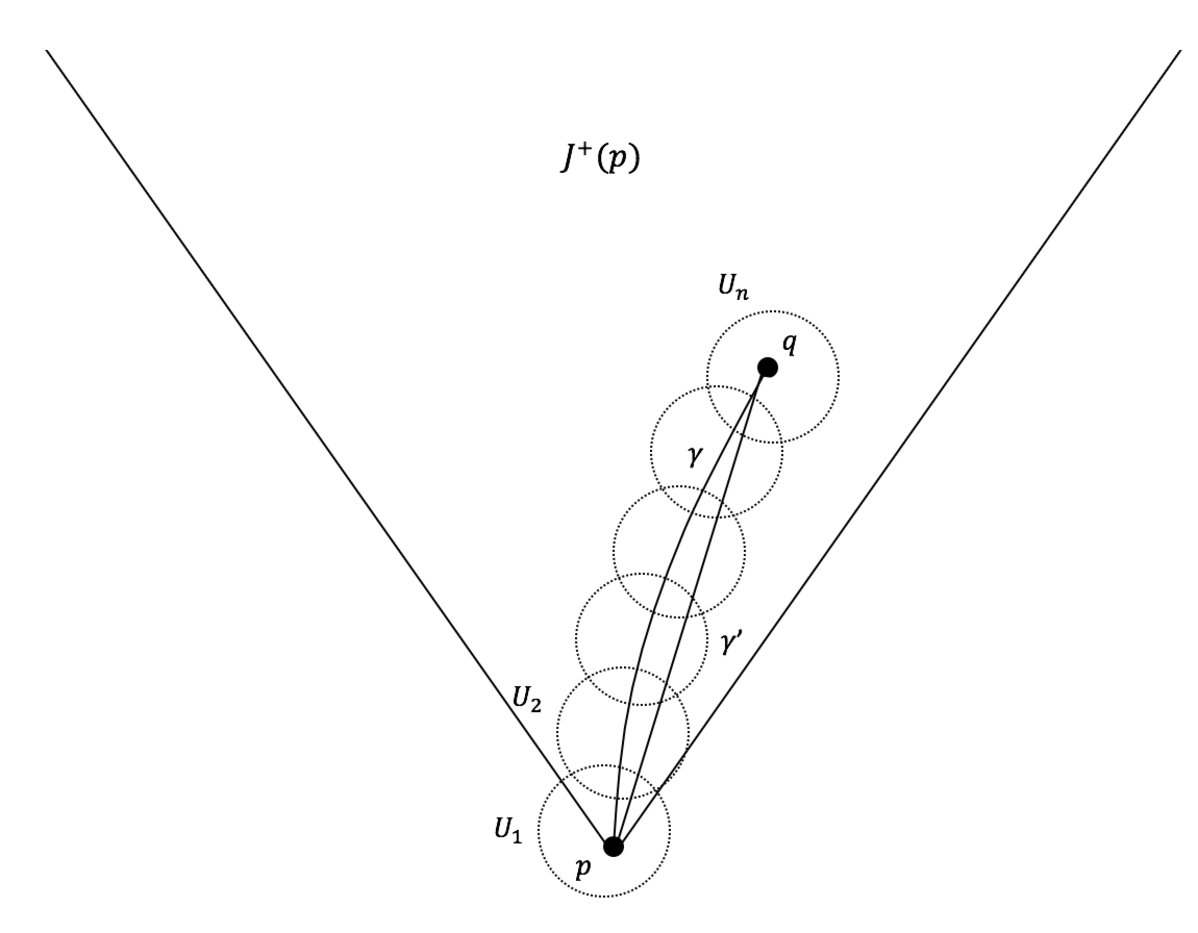}
\caption{A causal curve $\gamma$ from $p$ to $q$ and its deformation $\gamma'$ into a timelike geodesic.}
\label{fig:1.5}
\end{center}
\end{figure}
\\ If $\gamma$ failed to be a null geodesic in any such neighbourhood, then, using Theorem \ref{thm:nge} we can always deform $\gamma$ into a timelike geodesic in that neighbourhood and then extend this deformation to the other neighbourhoods to obtain a timelike curve from $p$ to $q$, call it $\gamma'$. Thus we have the following
\begin{cor} $\\ $
\label{cor:ngeo}
If $q\in J^{+}(p)-I^{+}(p)$, then any causal curve connecting $p$ to $q$ must be a null geodesic.
\end{cor}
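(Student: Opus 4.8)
The claim is to be read up to reparametrisation: a causal curve from $p$ to $q$ whose image is (a segment of) a null geodesic is allowed, and everything else must be excluded. I would argue by contradiction. Suppose $\gamma\colon[0,1]\to\mathscr{M}$ is a future-directed causal curve with $\gamma(0)=p$, $\gamma(1)=q$, $q\in J^{+}(p)-I^{+}(p)$, and that the image of $\gamma$ is \emph{not} a null geodesic segment. The goal is to manufacture a future-directed \emph{timelike} curve from $p$ to $q$, i.e.\ to show $p\ll q$, which contradicts $q\notin I^{+}(p)$. The skeleton is exactly the covering argument sketched just before the statement: since $\gamma([0,1])$ is compact and, by Theorem \ref{thm:nge}, every point of $\mathscr{M}$ has a convex normal neighbourhood, finitely many such neighbourhoods cover $\gamma$, and a Lebesgue-number argument produces a partition $0=t_{0}<t_{1}<\dots<t_{n}=1$ together with convex normal neighbourhoods $\mathscr{U}_{1},\dots,\mathscr{U}_{n}$ such that $\gamma([t_{k-1},t_{k}])\subset\mathscr{U}_{k}$. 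Writing $p_{k}=\gamma(t_{k})$, each sub-arc $\gamma|_{[t_{k-1},t_{k}]}$ is then a causal curve joining $p_{k-1}$ to $p_{k}$ inside a single convex normal neighbourhood.

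The technical heart is a local lemma: if $\mathscr{U}$ is a convex normal neighbourhood and $\sigma$ is a causal curve in $\mathscr{U}$ from $a$ to $b$ whose image is not a null geodesic segment, then $b\in\left.I^{+}(a)\right|_{\mathscr{U}}$, hence $a\ll b$. I would deduce this from Theorem \ref{thm:nge}: a causal curve can be perturbed into a timelike one, so $\left.J^{+}(a)\right|_{\mathscr{U}}\subseteq\overline{\left.I^{+}(a)\right|_{\mathscr{U}}}=\left.I^{+}(a)\right|_{\mathscr{U}}\cup\left.\dot I^{+}(a)\right|_{\mathscr{U}}$, while by Theorem \ref{thm:nge} the boundary $\left.\dot I^{+}(a)\right|_{\mathscr{U}}$ is swept out exactly by the future-directed null geodesics issuing from $a$ in $\mathscr{U}$. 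Thus $b$ is either already in $\left.I^{+}(a)\right|_{\mathscr{U}}$, or lies on one of these null geodesics, and the remaining task is to show that the second alternative forces $\sigma$ to coincide with that null geodesic. This is the step I expect to be the real obstacle; at the level of this thesis I would import it from \cite{HawEll}, where it follows from the monotonicity of the geodesic world function $z\mapsto g\bigl(\mathrm{exp}^{-1}_{a}(z),\mathrm{exp}^{-1}_{a}(z)\bigr)$ along causal curves in a convex normal neighbourhood: it is non-decreasing in general, and strictly increasing as soon as the curve deviates from a null direction.

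Granting the local lemma, the rest is bookkeeping. If some sub-arc $\gamma|_{[t_{k-1},t_{k}]}$ is not a null geodesic segment, the local lemma gives $p_{k-1}\ll p_{k}$. Otherwise every sub-arc is a null geodesic segment, and since the whole image is not one, two consecutive segments must meet at some $p_{k}$ with $0<k<n$ at a genuine corner (the one-sided tangents are not positively proportional); then for small $\varepsilon>0$ the arc $\gamma|_{[t_{k}-\varepsilon,\,t_{k}+\varepsilon]}$ lies in a convex normal neighbourhood of $p_{k}$, is causal, and is not a null geodesic, so the local lemma yields $\gamma(t_{k}-\varepsilon)\ll\gamma(t_{k}+\varepsilon)$. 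In every case we obtain parameters $0\le s_{1}<s_{2}\le 1$ with $\gamma(s_{1})\ll\gamma(s_{2})$.

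It remains to propagate this chronological relation to the endpoints, i.e.\ to conclude $p\ll q$. I would do this by a ``push-up'' along $\gamma$: in a convex normal neighbourhood straddling $\gamma(s_{2})$, the timelike arc realising $\gamma(s_{1})\ll\gamma(s_{2})$ can be prolonged slightly into the abutting causal arc of $\gamma$ — again by Theorem \ref{thm:nge}, since the interior of the light cone at a point is precisely its timelike region, a curve that runs timelike and then continues causally can be pushed wholly into the open future cone — whence $\gamma(s_{1})\ll\gamma(t)$ for $t$ a little past $s_{2}$; iterating this move across the finitely many $\mathscr{U}_{k}$ gives $\gamma(s_{1})\ll q$, and the symmetric move towards the past, applied to the timelike curve from $\gamma(s_{1})$ to $q$, gives $p\ll q$. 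Transitivity (Theorem \ref{thm:prec}) and openness of $I^{\pm}$ (Proposition \ref{thm:opensets}) keep all of this consistent. This contradicts $q\in J^{+}(p)-I^{+}(p)$, so $\gamma$ must have been a null geodesic after all. The only genuinely hard ingredient is the local lemma — upgrading ``the causal curve touches $\dot I^{+}$'' to ``it enters $I^{+}$ unless it is a null geodesic'' — everything else being the compactness/partition argument, the corner analysis, and the push-up, all routine once Theorem \ref{thm:nge} is in hand.
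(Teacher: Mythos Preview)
Your proposal is correct and follows essentially the same approach as the paper: the corollary is stated without a separate proof, being an immediate consequence of the paragraph preceding it, which covers $\gamma$ by finitely many convex normal neighbourhoods and appeals to Theorem~\ref{thm:nge} to deform any non-null-geodesic piece into a timelike one and then ``extend this deformation to the other neighbourhoods''. Your write-up is considerably more detailed than the paper's sketch --- in particular you make explicit the corner analysis and the push-up propagation that the paper leaves implicit --- but the underlying strategy is identical.
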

Sometimes the set $E^{+}(p)=J^{+}(p)-I^{+}(p)$ is called the \textit{future horismos} of $p$.\\
Since for any set $S\subset\mathscr{M}$ it can be shown that $J^{+}[S]\subset \overline{I^{+}}[S]$ and clearly $I^{+}[S]\subset J^{+}[S]$, it follows immediately that \begin{equation*}
\overline{J^{+}}[S]=\overline{I^{+}}[S].
\end{equation*}
Similarly we have $I^{+}[S]=\mathrm{int}\left[J^{+}[S]\right]$ and hence
\begin{equation*}
\dot{J}^{+}[S]=\dot{I}^{+}[S].
\end{equation*}
\section{Pasts, Futures and Achronal Boundaries}
\label{sect:1.5}
From \ref{thm:nge} we saw that the boundary of $I^+(p)$ or $J^+(p)$ is formed, at least locally, by the future-directed null geodesics starting from $p$. To derive the properties of more general boundaries we introduce the concepts of achronal and future sets.
\begin{defn}$\\ $
\label{defn:ach}
A set $S\subset\mathscr{M}$ is said to be an \textit{achronal set} if $I^+[S]\cap S=\emptyset$, i.e. if no two points of $S$ are chronologically related. 
\end{defn}
Note that a set can be locally spacelike without being achronal, as shown in Figure \ref{fig:1.6}. Examples of achronal sets are the future light cone in Minkowski space-time, $t=(x^2+y^2+z^2)^{1/2}$, the null hyperplane $t=z$ and the spacelike plane $t=0$.
\begin{figure}[h]
\begin{center}
\includegraphics[scale=0.4]{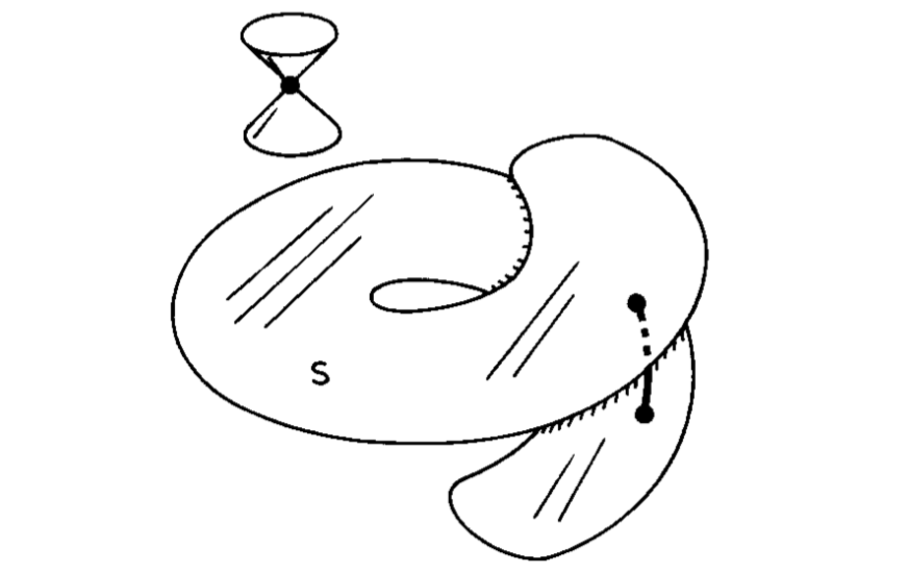}
\caption{A spacelike hypersurface in Minkowski space that is not achronal, since there are two points joined by a timelike curve, i.e. chronologically related.}
\label{fig:1.6}
\end{center}
\end{figure}
\begin{defn}$ \\	$
A set $F\subset\mathscr{M}$ is said to be \textit{future set} if $F=I^+[S]$ for some set $S\subset\mathscr{M}$.
\end{defn}
By proposition \ref{thm:opensets} a future set $F$ is always open.
\begin{defn}$\\ $
If $F$ is a future set, its boundary $\dot{F}$ is called \textit{achronal boundary}, i.e. $\dot{F}=\dot{I}^+[S]$.
\end{defn}
The next theorem asserts that the boundary of the future of a set, the above defined achronal boundary,  even if does not need to be smooth, always forms a \lq well behaved\rq , 3-dimensional, achronal surface.
\begin{thm}$\\ $
\label{thm:embman}
Let $(\mathscr{M},g)$ be a space-time and let $F$ be a future set for $S\subset\mathscr{M}$, $F=I^+[S]$. Then the achronal boundary $\dot{F}=\dot{I}^+[S]$ is an achronal, 3-dimensional, embedded $C^0$ topological submanifold of $\mathscr{M}$.
\end{thm}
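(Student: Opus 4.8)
The plan is to produce the submanifold charts for $\dot F$ out of the integral curves of a timelike vector field. Since the space-time carries a fixed time orientation, the Proposition on time-orientability gives a smooth, nowhere-vanishing, future-directed timelike vector field $t^a$ on $\mathscr M$; its integral curves are future-directed timelike curves foliating $\mathscr M$ locally. The crucial preliminary fact I would isolate is the \emph{propagation lemma}: if $F=I^+[S]$ then $I^+[\overline F]\subseteq F$. To prove it, take $x\in\overline F$ with $x\ll y$; then $I^-(y)$ is an open neighbourhood of $x$ (past analogue of Proposition \ref{thm:opensets}), so it meets $F=I^+[S]$ at some point $z$; hence $w\ll z\ll y$ for some $w\in S$, and transitivity of $\ll$ (Theorem \ref{thm:prec}) gives $w\ll y$, i.e. $y\in I^+[S]=F$.

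Two consequences follow. First, $\dot F=\overline F\setminus F$ is \emph{achronal}: if $p\ll q$ with $p,q\in\dot F$, then $q\in I^+(p)\subseteq I^+[\overline F]\subseteq F$, contradicting $q\notin F$. Second, along any future-directed timelike curve $\lambda$ parametrised toward the future, the set of parameters with $\lambda\in F$ and the set with $\lambda\in\overline F$ are both "upper" sets (once the curve enters $F$, resp.\ $\overline F$, it stays, by the lemma applied to each point), and $\dot F$ is met \emph{at most once}, exactly at the transition parameter, with $\mathscr M\setminus\overline F$ strictly before it and $F$ strictly after.

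Next I would localise. Fix $p\in\dot F$; by the straightening (flow-box) theorem there is a chart $\psi\colon\mathscr O\to(-\delta,\delta)^3\times(-\delta,\delta)$ around $p$ in which the integral curves of $t^a$ are the vertical segments $s\mapsto\psi^{-1}(x,s)$. Let $\pi\colon\mathscr O\to(-\delta,\delta)^3$ be the projection killing the last coordinate. By the second consequence above, each vertical segment meets $\dot F$ at most once, so $\pi$ is injective on $\dot F\cap\mathscr O$ and $\dot F\cap\mathscr O$ is the graph of a function $h$ over the set $W=\{x:\dot F\ \text{meets the segment over}\ x\}\subseteq(-\delta,\delta)^3$. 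Everything then reduces to showing $h$ is continuous and $W$ is open. This is the heart of the argument: using the transition structure I would show $\{x:h(x)<c\}=\bigcup_{s<c}\{x:\psi^{-1}(x,s)\in F\}$, which is open because $F$ is open, and $\{x:h(x)>c\}=\bigcup_{s>c}\{x:\psi^{-1}(x,s)\notin\overline F\}$, which is open because $\overline F$ is closed. Continuity of $h$ (as an extended-real-valued function) makes $W=\{-\delta<h<\delta\}$ open and shows $\pi\big|_{\dot F\cap\mathscr O}$ is a homeomorphism onto $W$ with continuous inverse $x\mapsto\psi^{-1}(x,h(x))$; these are the desired $3$-dimensional, embedded, $C^0$ submanifold charts, and since $h$ need not be better than continuous, $C^0$ is all one can assert.

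I expect the main obstacle to be precisely this continuity step: verifying that the two "transition-set" identities hold exactly as stated — which is where openness of $F$, closedness of $\overline F$, and the propagation lemma are all used together — and then checking that the resulting graph charts endow $\dot F$ with the subspace topology, so that "embedded topological submanifold" (rather than merely "immersed") is justified. Achronality, by contrast, is a short corollary of the propagation lemma, and the existence of convex normal / flow-box neighbourhoods is standard.
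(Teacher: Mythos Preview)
Your proposal is correct and follows essentially the same route as the paper: prove $I^+(q)\subset F$ (your propagation lemma) to get achronality, then straighten a timelike direction in a coordinate box around each $p\in\dot F$ so that every timelike coordinate line crosses $\dot F$ exactly once, exhibiting $\dot F$ locally as the graph of a height function over the three spatial coordinates. The only real difference is emphasis: the paper uses normal coordinates with $\partial/\partial x^0$ timelike and simply asserts that the intersection value $x^0$ is a $C^0$ function of $(x^1,x^2,x^3)$, whereas you supply the missing detail by writing the sub- and super-level sets of $h$ explicitly in terms of the open set $F$ and the closed set $\overline F$ --- a welcome clarification rather than a different argument.
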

\begin{proof}
Let $q\in\dot{F}$. If $p\in I^+(q)$, then $q\in I^-(p)$ and since $I^-(p)$ is open, an open neighbourhood $O$ of $q$ is contained in $I^-(p)$ (see Figure \ref{fig:1.7}). We have $O\cap F\neq\emptyset$, $q$ being on $\dot{F}$. Thus we have $p\in I^+(q)\subset I^+[O\cap F]\subset F$. In particular $I^+(q)\subset F$. By following the same argument we have $I^-(q)\subset\mathscr{M}-F$. If $\dot{F}$ failed to be achronal we could find two points in it, say $q$ and $r$, such that $r\in I^+(q)$, and hence $r\in F$, by the previous result. However, this is impossible since $F$ is open and there is no point lying both in $F$ and $\dot{F}$. Thus $\dot{F}$ is achronal. 
\begin{figure}[h]
\begin{center}
\includegraphics[scale=0.4]{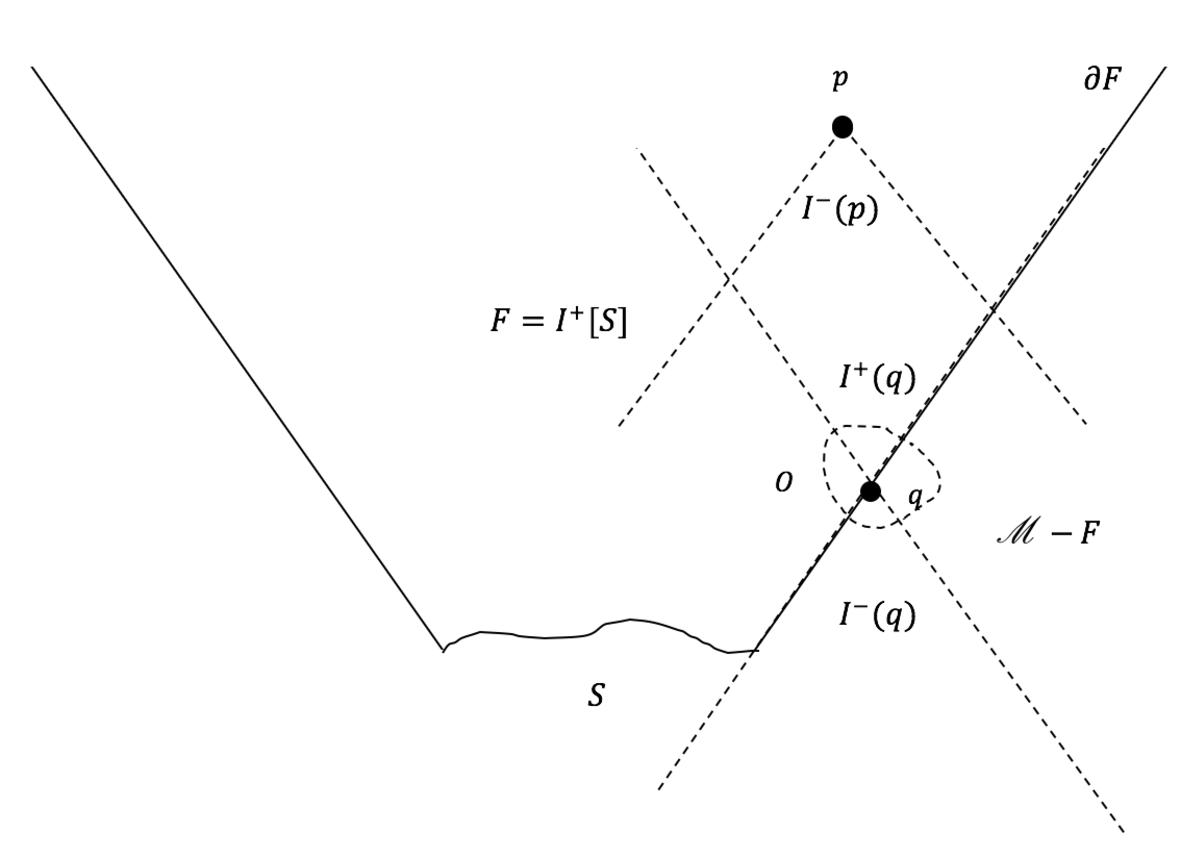}
\caption{A space-time diagram showing a set $S$ and the boundary of its future, $F$.}
\label{fig:1.7}
\end{center}
\end{figure}
\\To obtain the manifold structure of $\dot{F}$ we introduce normal coordinates $(x^0,x^1,x^2,x^3)$ in a neighbourhood $\mathscr{U}_{\alpha}$ of $q$ such that $\partial/\partial x^0$ is timelike in $\mathscr{U}_{\alpha}$ and that the integral curves of $\partial/\partial x^0$, $\{x^i=\mathrm{const},i=1,2,3\}$ enter $I^+(q)\subset F$ and  $I^-(q)\subset \mathscr{M}-\overline{F}$. But this implies that each such curve intersects $\dot{F}$, and since $\dot{F}$ is achronal, it must intersect it at precisely one point (otherwise we would obtain two or more points joined by a timelike curve). 
\begin{figure}[h]
\begin{center}
\includegraphics[scale=0.4]{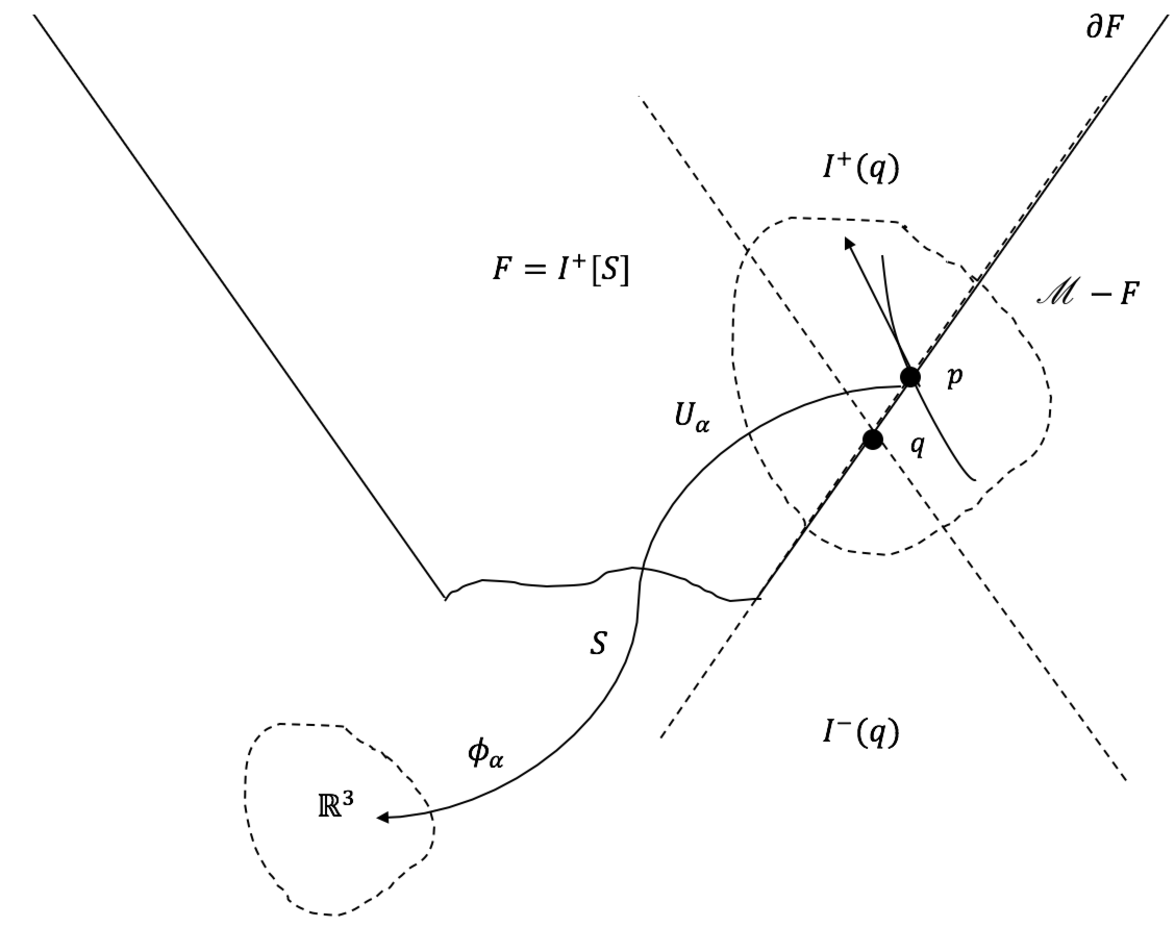}
\caption{This picture shows the timelike vector and its integral curve that intersect $\dot{F}$ in just one point $p$, defining a coordinate system.}
\label{fig:1.8}
\end{center}
\end{figure}
\\Thus in each such neighbourhood, we get a one-to-one association of points of $\dot{F}$ with the coordinates $(x^1,x^2,x^3)$ characterizing the integral curve of $\partial/\partial x^0$, i.e. $\phi_{\alpha}:\dot{F}\cap \mathscr{U}_{\alpha}\rightarrow \mathbb{R}^3$ defined by $\phi_{\alpha}(p)=x^i(p)$ ($i=1,2,3$) for $p\in\dot{F}\cap \mathscr{U}_{\alpha}$. Furthermore the value of $x^0$ at the intersection point must be a $C^0$ function of the coordinates $(x^1,x^2,x^3)$ and thus the map $\phi_{\alpha}$ is a homomorphism. Since this construction can be repeated for all $q\in\dot{F}$ we obtain a collection $\{\dot{F}\cap \mathscr{U}_{\alpha},\phi_{\alpha}\}$ that is a $C^0$ atlas for $\dot{F}$, which makes it an embedded topological manifold.
\end{proof}
For the purpose of what follows we need to introduce several definitions that will play an important role. First, it will be convenient to extend the definition of timelike and causal curve from piecewise differentiable to continuous, it being essential in taking limits. 
\begin{defn}$\\ $
A continuous curve $\gamma:I\rightarrow\mathscr{M}$, where $I$ is an interval of $\mathbb{R}$, is \textit{future-directed causal} if for every $t\in I$, there is a neighbourhood $G$ of $t$ in $I$ and a convex normal neighbourhood $\mathscr{U}$ of $\gamma (t)$ in $\mathscr{M}$ such that for any $t_1\in G$, $\gamma(t_1)\in J^-[\gamma (t)]$ if $t_1<t$ and $\gamma(t_1)\in J^+[\gamma (t)]$ if $t_1>t$.
\end{defn}
\begin{figure}[h]
\begin{center}
\includegraphics[scale=0.5]{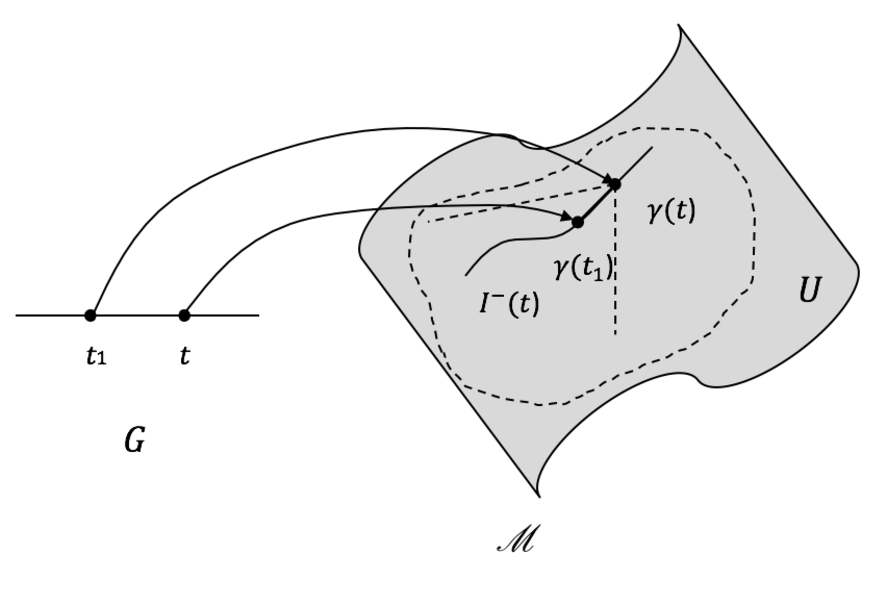}
\caption{A continuous timelike curve in the case in which $t_1<t$.}
\label{fig:1.9}
\end{center}
\end{figure}
The same definition holds for timelike curve, with $I^{\pm}[\gamma (t)]$ replacing $J^{\pm}[\gamma (t)]$. The sense of the definition is that, for a continuous curve, locally, pairs of points on the curve can be joined by a differentiable timelike or causal curve. Note that the timelike or causal nature of the curve is left unchanged by a continuous, one-to-one, reparametrization, and hence two curves which differ by such a reparametrization will be considered equivalent.\\
Next we need the notion of extendibility of a curve, and hence we give before the definition of endpoint of a non-spacelike curve.
\begin{defn}$\\ $
A point $p\in\mathscr{M}$ will be said to be \textit{future endpoint} of a future-directed causal curve $\gamma:I\rightarrow\mathscr{M}$ if for every neighbourhood $O$ of $p$ there is a $t\in I$ such that $\gamma (t_1)\in O$ for every $t_1\geq t$.
\end{defn}
Note that the endpoint need not lie on the curve, i.e. there need not exist a value of $t$ such that $p=\gamma (t)$. This allows us to give the following
\begin{defn}$\\ $
A causal curve is \textit{future-inextendible} if it has no future endpoint.
\end{defn}
A similar definition holds for \textit{past-inextendibility}.\\
We give now the definition of convergence of causal curves.
\begin{defn}$\\ $
\label{defn:conv}
Let $\{\lambda_n\}$ be an infinite sequence of causal curves. 
\begin{itemize}
\item A point $p$ will be said to be a \textit{convergence point} of $\{\lambda_n\}$ if, given any open neighbourhood $O$ of p, there exists an $N$ such that $\lambda_n\cap O\neq\emptyset$ for all $n>N$.\\
\item A curve $\lambda$ will be said to be a \textit{convergence curve} of  $\{\lambda_n\}$ if each $p\in\lambda$ is a convergence point.\\
\item A point $p$ will be said to be a \textit{limit point} of $\{\lambda_n\}$ if every open neighbourhood of $p$ intersects infinitely many $\lambda_n$.\\
\item A curve $\lambda$ will be said to be the \textit{limit curve} of $\{\lambda_n\}$ if there exists a subsequence $\{\lambda'_n\}$ for which $\lambda$ is a convergence curve.
\end{itemize}
\end{defn}
The previous definitions allow us to state and prove the following
\begin{thm}$\\ $ 
\label{thm:caus}
Let $S$ be an open set and $\{\lambda_n\}$ be an infinite sequence of causal curves which are future-inextendible with limit point $p$. Then through $p$ there is a causal curve $\lambda$ which is future-inextendible and which is a limit curve of $\{\lambda_n\}$. 
\end{thm}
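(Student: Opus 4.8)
The plan is to build $\lambda$ as a limit of the $\lambda_n$ one short piece at a time, working inside convex normal neighbourhoods — where, by Theorem \ref{thm:nge} and Corollary \ref{cor:ngeo}, causal curves are well controlled and the local causal relation is closed — and then to assemble the pieces into a single future-inextendible curve by a diagonal argument. Throughout I would regard the $\lambda_n$ as curves lying in $S$ and future-inextendible in $S$, with $p\in S$.

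\textbf{First piece.} Fix a smooth Riemannian metric $h$ on $S$ (available since the space-time is paracompact). Since $p$ is a limit point of $\{\lambda_n\}$, after passing to a subsequence we may choose $q_n\in\lambda_n$ with $q_n\to p$, and reparametrise the portion of $\lambda_n$ from $q_n$ onward by $h$-arclength; each such curve is then $1$-Lipschitz on a maximal interval, along which it eventually leaves every compact subset of $S$. Choose a convex normal neighbourhood $\mathscr{U}_0$ of $p$ with $\overline{\mathscr{U}_0}$ compact in $S$, and a closed coordinate ball $\overline{B_0}=\overline{B(p,r_0)}\subset\mathscr{U}_0$; for $n$ large, $q_n\in B_0$. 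Since $\lambda_n$ is future-inextendible it cannot stay in the compact set $\overline{B_0}$ — here one uses that a convex normal neighbourhood admits a smooth function strictly increasing along every future-directed causal curve, so no such curve can be future-imprisoned in $\overline{B_0}$ — hence the tail of $\lambda_n$ past $q_n$ first meets $\partial B_0$ at a point $y_n$. By compactness of $\partial B_0$, refine the subsequence so that $y_n\to y_0\in\partial B_0$. The segments $\sigma_n=\lambda_n|_{[q_n,\,y_n]}$ are uniformly $1$-Lipschitz with values in $\overline{B_0}$, so by the Arzel\`a--Ascoli theorem a subsequence converges uniformly to a curve $\mu_0$ from $p$ to $y_0$; that $\mu_0$ is future-directed causal follows by applying the closedness of the causal relation inside $\mathscr{U}_0$ (a consequence of Theorem \ref{thm:nge} and Corollary \ref{cor:ngeo}) to every pair of parameter values.

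\textbf{Iteration and assembly.} Now $y_0$, together with the surviving subsequence of future-inextendible tails $\lambda_n|_{[y_n,\,\cdot)}$ having $y_0$ as a limit point, is again in the initial situation, so the construction produces a further causal segment $\mu_1$ from $y_0$, and so on. To turn this into a genuine limit curve I would fix a locally finite cover of $S$ by convex normal neighbourhoods with compact closures and perform the construction with a diagonal choice over the stages, so that one subsequence serves every $\mu_k$ at once; the concatenation $\lambda=\mu_0\cup\mu_1\cup\cdots$ is then a future-directed causal curve through $p$ which is a convergence curve of that subsequence, i.e.\ a limit curve of $\{\lambda_n\}$ in the sense of Definition \ref{defn:conv}. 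Finally $\lambda$ is future-inextendible: each $\mu_k$ contributes $h$-length at least the radius of the ball used at that stage (bounded below on the compact pieces of the cover), so either the construction runs through infinitely many stages and $\lambda$ has infinite $h$-length, or it stalls only because the curves escape every compact subset of $S$; in either case $\lambda$ has no future endpoint in $S$.

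I expect the main obstacle to be this last, organisational step: arranging the infinitely many successive subsequence extractions so that a single subsequence works for all of $\lambda$, and guaranteeing that the chain of pieces neither terminates prematurely nor folds back on itself. The two supporting facts that most need care are the non-imprisonment of future-directed causal curves in the chosen normal balls and the passage to the limit showing the limiting broken curve is causal everywhere, both of which rest on the local structure of causal curves in convex normal neighbourhoods supplied by Theorem \ref{thm:nge}.
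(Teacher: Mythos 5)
Your proposal is correct, and its overall architecture coincides with the paper's: work in a convex normal neighbourhood of $p$, extract a causal limit segment reaching the boundary sphere of a coordinate ball, restart the construction from the boundary point, and tie everything together with a diagonal subsequence so that inextendibility follows from the fact that the construction never terminates. Where you genuinely diverge is in the local extraction step. The paper picks, for each rational multiple $\tfrac{j}{i}b$ of the ball radius, a limit point $x_{ij}$ of a nested chain of subsequences on the corresponding coordinate sphere, and then \emph{defines} $\lambda$ as the closure of the union of the $x_{ij}$, asserting that pairwise causal separation of these points makes that closure a causal curve; the subsequence $\{\lambda'_m\}$ is then chosen to meet shrinking balls about the $x_{mj}$. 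You instead reparametrise the segments by arclength of an auxiliary Riemannian metric $h$, obtain a uniform Lipschitz bound, and apply Arzel\`a--Ascoli, deducing causality of the uniform limit from the closedness of the local causal relation in a convex normal neighbourhood (Theorem \ref{thm:nge}). Your route buys a cleaner passage to the limit: continuity of $\lambda$ and the convergence required by Definition \ref{defn:conv} come for free from uniform convergence, whereas the paper must argue separately that the closure of a countable dense set of causally separated limit points is a connected curve. The price is that you owe a uniform bound on the $h$-length of a causal segment crossing the ball $\overline{B_0}$ (so that Arzel\`a--Ascoli applies on a fixed compact parameter interval); this follows from the same local time function you invoke for non-imprisonment, since its rate of increase along unit-$h$-speed causal curves is bounded below on the compact set $\overline{B_0}$, but it is worth stating explicitly. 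With that bound supplied, and the diagonal bookkeeping you already flag carried out, the argument is complete.
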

\begin{proof}
Let $\mathscr{U}_1$ be a convex normal coordinate neighbourhood about $p$ and let $\mathscr{B}(p,b)$ be the open ball of coordinate radius $b>0$ with center $p$. Let $\{\lambda(1,0)_n\}$ be a subsequence of $\{\lambda_n\}\cap \mathscr{U}_1$ which converges to $p$. Since the sphere $\dot{\mathscr{B}}(p,b)$ is compact it will contain the limit point of the $\{\lambda(1,0)_n\}$, the latter being a subsequence. Any such limit point must lie either in $J^-(p)$ or $J^+(p)$ because of the causal nature of the curves. Choose\begin{equation}
 x_{11}\in J^{+}(p)\cap\dot{\mathscr{B}}(p,b)
 \end{equation}
 to be one of these limit points, and choose $\{\lambda(1,1)_n\}$ to be a subsequence of $\{\lambda(1,0)_n\}$ which converges to $x_{11}$. We can continue inductively, defining\begin{equation*}
x_{ij}\in J^{+}\cap \dot{\mathscr{B}}\left(p,\frac{i}{j}b\right)
\end{equation*}
as a limit point of the subsequence $\{\lambda(i,j-1)_n\}$ for $i\geq j\geq 1$, $\{\lambda(i-1,i-1)_n\}$ for $i\geq 0$ and $j=0$, and defining $\{\lambda(i,j)_n\}$ as a subsequence of the above subsequence which converges to $x_{ij}$.  
\begin{figure}[h]
\begin{center}
\includegraphics[scale=0.5]{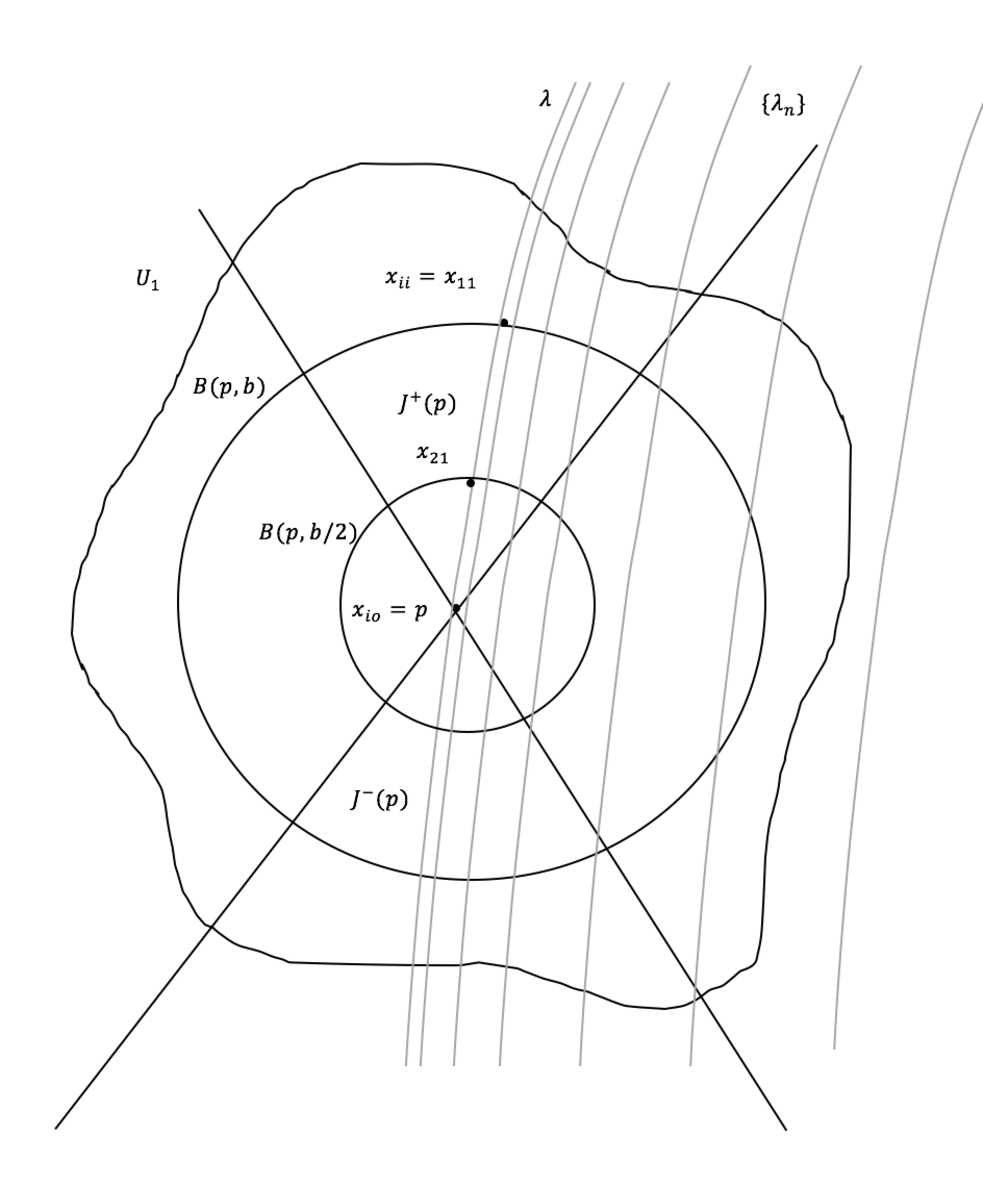}
\caption{The causal curve $\lambda$ through $p$ of a family of causal curves $\{\lambda_n\}$ for which $p$ is a limit point}
\label{fig:1.10}
\end{center}
\end{figure}
For example the point $x_{21}$, situated on $\mathscr{B}\left(p,\frac{1}{2}b\right)$ is a limit point of  $\{\lambda(2,0)_n\}$ and a convergence point of  $\{\lambda(2,1)_n\}$. We are just constructing, in turn, all the coordinate spheres whose radii are rational multiples, between $0$ and $1$ of $b$ and continuing to extract limit points lying on these spheres and subsequences converging to these points. Since any two of the $x_{ij}$ will have a causal separation, the closure of the union of all the $x_{ij}$ will give a causal curve $\lambda$ from $p=x_{i0}$ to $x_{11}=x_{ii}$. To show that $\lambda$ is a limit curve of $\{\lambda_n\}$ we have to construct a subsequence $\{\lambda'_n\}$ of the $\{\lambda_n\}$ such that for each $q\in\lambda$, $\{\lambda'_n\}$ converges to $q$. We choose $\{\lambda'_m\}$ to be a member $\{\lambda(m,m)_n\}$ which intersects each of the balls $\mathscr{B}(x_{mj},m^{-1}b)$ for $0\leq j \leq m$. Since each curve of the family $\{\lambda'_m\}$ above defined intersects the balls constructed with centers the various points of $\lambda$, we can say that $\{\lambda'_m\}$ converges to $\lambda$ and thus that $\lambda$ is a limit curve for $\{\lambda_n\}$. We can repeat this construction by letting $\mathscr{U}_2$ be a convex neighbourhood about $x_{11}$ and using as starting sequence $\lambda'_n$. In this way one can extend $\lambda$ indefinitely and thus $\lambda$ is future-inextendible. 
\end{proof}
As application of the previous statement, we prove now a fundamental theorem characterizing the nature of achronal boundaries.
\begin{thm}$\\ $
Let $C$ be a closed subset of the space-time manifold $\mathscr{M}$ and let $F$ be its chronological future, i.e. $I^+[C]=F$. Then every point $p\in\dot{F}$, the achronal boundary, with $p\notin C$ (i.e. $p\in\dot{F}-C$) lies on a null geodesic $\lambda$ which lies entirely in $\dot{F}$ and either is past-inextendible or has a past endpoint on $C$. 
\end{thm}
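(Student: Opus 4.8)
The plan is to realize $\lambda$ as a limit curve of a sequence of timelike curves witnessing the relation $I^+[C]=F$, and then to use achronality to upgrade the resulting causal curve to a null geodesic. Since $F=I^+[C]$ is open (Proposition \ref{thm:opensets}) and $p\in\dot F$, we have $p\notin F$, i.e.\ no point of $C$ chronologically precedes $p$; but $p\in\overline F$, so I would fix a sequence $x_n\in F$ with $x_n\to p$. For each $n$ choose a point $c_n\in C$ and a future-directed timelike curve $\lambda_n$ from $c_n$ to $x_n$, to be viewed as past-directed, issuing from $x_n$. Because $p\notin C$ and $C$ is closed, Theorem \ref{thm:nge} lets me pick a convex normal neighbourhood $\mathscr U$ of $p$ with compact closure disjoint from $C$; discarding finitely many indices I may assume every $x_n\in\mathscr U$, so each $\lambda_n$ must cross $\partial\mathscr U$. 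Note that every point of $\lambda_n$ other than $c_n$ lies in $I^+(c_n)\subset F$, so $\lambda_n\subset\overline F$.

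First I would produce the curve. Applying the construction in the proof of Theorem \ref{thm:caus} with the time orientation reversed, $p$ is a limit point of $\{\lambda_n\}$, and since each $\lambda_n$ reaches at least as far as $\partial\mathscr U$, the construction yields a past-directed causal limit curve $\lambda$ through $p$ of positive extent. I would then iterate: at each stage I take a convex normal neighbourhood of the current past endpoint $w$ of $\lambda$ and try to continue the limit-curve construction there. Either this succeeds indefinitely, in which case $\lambda$ is past-inextendible, or along the relevant subsequence the points $c_n$ accumulate at some stage's endpoint $w$, in which case $w\in\overline C=C$ and $\lambda$ has a past endpoint on $C$.

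Making this dichotomy precise is the step I expect to be the main obstacle. Theorem \ref{thm:caus} as stated applies to future-inextendible curves, whereas here the $\lambda_n$ terminate on $C$, and one cannot simply extend them past $c_n$ since the extensions need not lie in $\overline F$. So I would have to rerun the inductive limit-point argument of Theorem \ref{thm:caus} while tracking, at each coordinate radius, whether the relevant subsequence of the $\lambda_n$ has already run out (reached $C$) or still has room to continue; equivalently one proves a slightly strengthened limit-curve lemma in which the limit curve inherits a past endpoint from the endpoints $c_n$ whenever a subsequence of them stays in a compact set and converges, and is past-inextendible otherwise. The case analysis (endpoints converging in a compact region versus escaping every compact set) is where the real work lies.

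It remains to verify the two geometric properties, which are short. For $\lambda\subset\dot F$: any $r\in\lambda$ is a limit of points lying on the $\lambda_n\subset\overline F$, so $r\in\overline F$; and $r\notin F$, for if $r\in I^+[C]$ we could choose $c\in C$ with $c\ll r$, and since the portion of $\lambda$ from $r$ to $p$ is a future-directed causal curve we have $r\prec p$, whence $c\ll p$ by the standard refinement of Theorem \ref{thm:prec} that a chronological relation followed by a causal one is again chronological (deform the corner inside a convex normal neighbourhood), contradicting $p\notin F$. Hence $r\in\overline F- F=\dot F$. Finally, $\dot F$ is achronal by Theorem \ref{thm:embman}, so for any two points $a,b$ on $\lambda$ with $a$ to the past of $b$ we have $b\in J^+(a)$ but $b\notin I^+(a)$, i.e.\ $b\in J^+(a)-I^+(a)$; by Corollary \ref{cor:ngeo} the segment of $\lambda$ between $a$ and $b$ is a null geodesic, and since $a,b$ were arbitrary $\lambda$ is a null geodesic throughout, which completes the argument.
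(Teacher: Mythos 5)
Your overall architecture matches the paper's: approximate $p$ by points $x_n\in F$, run past-directed timelike curves $\lambda_n$ from them down to $C$, extract a causal limit curve through $p$, and then use $\lambda_n\subset\overline F$ together with achronality of $\dot F$ and Corollary \ref{cor:ngeo} to conclude $\lambda\subset\dot F$ and that $\lambda$ is a null geodesic. That closing portion of your argument is fine (and your justification that $r\notin F$ via the composition of a chronological with a causal relation is, if anything, more explicit than the paper's appeal to Corollary \ref{cor:ngeo}).

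The genuine gap is exactly the step you flag as ``where the real work lies'': you never actually establish the dichotomy that the limit curve is either past-inextendible or acquires a past endpoint on $C$. You propose to re-prove a strengthened limit-curve lemma by iterating convex normal neighbourhoods and tracking whether the subsequences ``run out,'' but you leave that case analysis undone, so as written the proof is incomplete. The idea you are missing is the one the paper uses to make the difficulty evaporate: pass to the space-time $\mathscr{M}-C$. Since $C$ is closed, $\mathscr{M}-C$ is again a space-time, and on it each $\lambda_n$ \emph{is} past-inextendible (its would-be past endpoint $c_n$ has been deleted from the manifold), so the time-reverse of Theorem \ref{thm:caus} applies verbatim and yields a past-inextendible causal limit curve $\lambda$ through $p$ in $\mathscr{M}-C$. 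Reinstating $C$, a curve that is past-inextendible in $\mathscr{M}-C$ is, in $\mathscr{M}$, either still past-inextendible or has a past endpoint lying on $C$ --- which is precisely the dichotomy you need, obtained for free. This is also the real role of the hypothesis that $C$ is closed; in your write-up you use closedness only to find a neighbourhood of $p$ disjoint from $C$, which is not where it earns its keep.
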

\begin{proof}
Choose a sequence $\{q_n\}$ of points in $F$ which converges to $p\in\dot{F}$. For each $q_n$ we can consider $\lambda_n$, a past-directed timelike curve connecting $q_n$ to a point in $C$. Consider the space-time manifold $\mathscr{M}-C$ (here we use the assumption that $C$ is closed, for otherwise $\mathscr{M}-C$ would not define a manifold). On $\mathscr{M}-C$, each $\lambda_n$ is obviously a past-inextendible causal limit curve and hence $p$ is a limit point of the sequence $\{\lambda_n\}$. 
\begin{figure}[h]
\begin{center}
\includegraphics[scale=0.5]{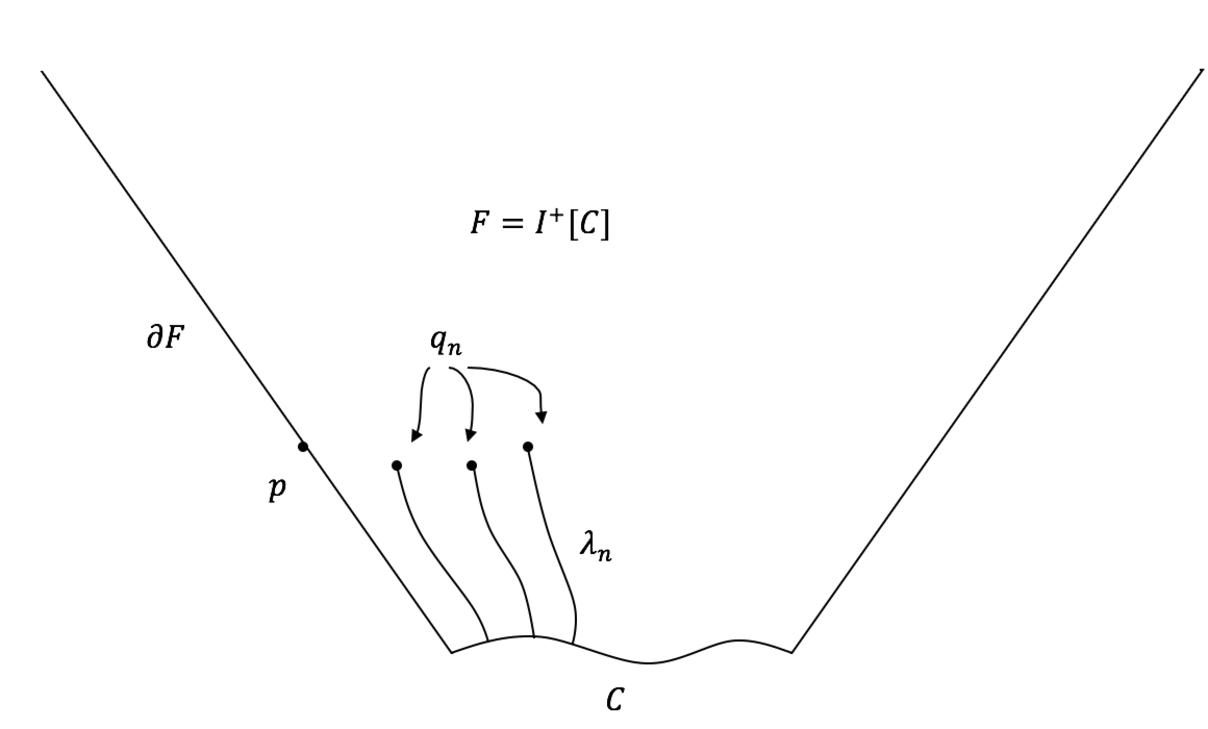}
\caption{A space-time diagaram showing a sequence of points in $F$ convering to $p\in\dot{F}$.}
\label{fig:1.11}
\end{center}
\end{figure}
\\Then, using theorem \ref{thm:caus}, there exists a past inextendible causal limit curve $\lambda$ passing through $p$, whose points are limit points of $\{\lambda_n\}$ in $F$. Hence $\lambda\subset\overline{F}[C]$. But if $\lambda$ were in $F$, then by corollary \eqref{cor:ngeo} we would have $p\in F=I^+[C]$, since $p$ could be connected to $C$ by a causal curve which is not null geodesic. This contradicts the fact that $p\in\dot{F}$. Thus $\lambda\in\dot{F}$. Furthermore, since $\dot{F}$ is achronal (it is an achronal boundary), using corollary  \eqref{cor:ngeo}, we obtain that $\lambda$ is a null geodesic. Since $\lambda$ is past-inextendible in $\mathscr{M}-C$, in $\mathscr{M}$ it must either remain past-inextendible or have past endpoint on $C$.
\end{proof}
An example where $\lambda$ is past-inextendible is provided by point $q$ in Figure \ref{fig:1.4}.
\section{Global Causality Conditions}
\label{sect:1.6}
In this section we will investigate the concept of a \lq globally causally well behaved\rq\hspace{0.1mm} space-time. In fact, according to theorem \ref{thm:nge}, space-times in General Relativity \textit{locally} have the same qualitative causal structure as in Special Relativity, but \textit{globally} very significant differences may occur.\\
The postulate of \textit{local causality} \citep[see][pg.~60]{HawEll} asserts that the equations governing the matter fields must be such that if $\mathscr{U}$ is a convex normal neighbourhood and $p$ and $q$ are points in $\mathscr{U}$, then a signal can be sent in $\mathscr{U}$ between $p$ and $q$ if and only if $p$ and $q$ can be joined by a causal curve lying entirely in $\mathscr{U}$. Obviously whether the signal can be sent from $p$ to $q$ or from $q$ to $p$  will depend on the direction of time in $\mathscr{U}$ and hence it is a problem regarding the orientability, already discussed in section \ref{sect:1.2}.\\
It is this postulate which sets the metric $g$ apart from the other fields and gives it its distinctive geometrical character. In fact, observation of local causality allows one to measure the metric up to a conformal factor, using the experimental fact that nothing travels faster than light, which is a consequence of the particular equations of electromagnetism.\\
However globally, as remarked, nothing ensures us that the space-time may be not causality-violating. But we may now wonder what we actually mean by causality violations. The most obvious manifestation of such violation would be the existence, on a large scale, of closed timelike or causal curves, i.e., with the notation of \eqref{defn:prec} and \eqref{defn:cprec}, that an event $p$ would satisfy $p\ll p$ or two events $p$ and $q$ would satisfy $p\prec q$, $q\prec p$ with $p\neq q$. In fact the existence in a space-time of such curves, would seem to lead to the possibility of logical paradoxes. An example can be that one could travel with a rocketship round a closed timelike curve and, arriving back before one's departure, one could prevent oneself from setting out. Hence, if we are assuming that there is a simple notion of \lq free will\rq , i.e. the ability to choose how to act, one could have no difficulty in altering and influencing his own past. We might argue that individuals with this abilities violate our most basic conceptions of how the world operates, and so it is entirely proper to impose, as an additional condition for physically acceptable space-times that they possess no such causality violations. We note here, that the mere Einstein field equations do not put any restriction on the causality behaviour of the space-time and hence those restrictions have to be imposed \lq artificially\rq\hspace{0.1mm}. A concrete example of this is the \textit{anti-de Sitter (AdS) space-time} , the space of constant curvature $R<0$. It has the topology of $S^1\times\mathbb{R}^3$ and can be represented as the hyperboloid
\begin{equation*}
-u^2-v^2+x^2+y^2+z^2=-1
\end{equation*} in the flat five-dimensional space $\mathbb{R}^5$ with metric
\begin{equation*}
g=-du\otimes du-dv\otimes dv+dx\otimes dx+dy\otimes dy+dz\otimes dz.
\end{equation*}
It can be shown that there exist closed timelike curves in this space \citep[see]{ADS}. However AdS space-time is not simply connected, and if one unwraps the circle $S^1$ one obtains the universal covering space of anti-de Sitter space which does not contain any closed timelike curves, which has the topology of $\mathbb{R}^4$. By \lq anti-de Sitter  space\rq\hspace{0.1mm} one usually means its universal covering.\\
But even if it is generally believed and is customary to dismiss space-time with closed causal curves, retaining them \lq physically unrealistic\rq , it is often convenient to study space-times possessing causality violations because an unrealistic model, in physics, may as well have an important, but indirect and not immediately tangible, physical value.\\
Another simple example of (flat) space-time with topology $S^1\times\mathbb{R}^3$ which possesses closed timelike curves is obtained by identifying the $t=0$ and $t=10$ hyperplanes of Minkowski space-time, as shown in Figure \ref{fig:1.12}. 
\begin{figure}[h]
\begin{center}
\includegraphics[scale=0.5]{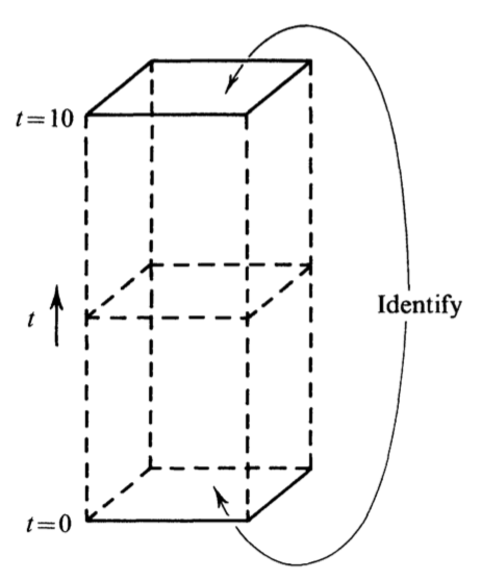}
\caption{Tube of Minkowski space-time with top and bottom identified.}
\label{fig:1.12}
\end{center}
\end{figure}
\\In this space-time the integral curves of the vector $\partial/\partial t$ will be closed space-time curves and it is not difficult to see that for all $p\in\mathscr{M}$ we have $I^+(p)=I^-(p)=\mathscr{M}$. However there are other examples of space-times with closed causal curves, which are not obtained making topological identifications in an \lq artificial\rq\hspace{0.1mm} way, but opportunely twisting the light cones, as in Figure \ref{fig:1.13}.
\begin{figure}[h]
\begin{center}
\includegraphics[scale=0.5]{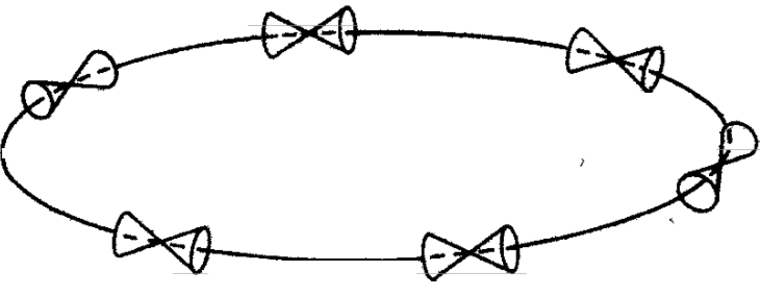}
\caption{A space-time where the light cones \lq tip over\rq\hspace{0.1mm} sufficiently to permit the existence of closed timelike curves.}
\label{fig:1.13}
\end{center}
\end{figure}
\\From the previous arguments, following \cite{HawEll}, we give the following
\begin{defn}$\\ $
\begin{itemize}
\item A spacetime $(\mathscr{M},g)$ is said to satisfy the \textit{chronology condition} if it does not contain closed timelike curves;\\
\item The set of points at which the chronology condition does not hold, i.e. those points through which pass closed timelike curves, is called \textit{chronology violating set}.
\end{itemize}
\end{defn}
The following theorems hold
\begin{thm}$\\ $
The \textit{chronology violating} set of $\mathscr{M}$ is the disjoint union of sets of the form $I^+(q)\cap I^-(q)$, $q\in\mathscr{M}$. 
\end{thm}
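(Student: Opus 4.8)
The plan is to write the chronology violating set as the union of all sets $I^+(q)\cap I^-(q)$ and then to recognise those that are nonempty as the equivalence classes of a natural relation on that set, which automatically renders the union disjoint.

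Write $\mathscr{V}$ for the chronology violating set, so $\mathscr{V}=\{p\in\mathscr{M}:p\ll p\}$. First I would establish the set equality $\mathscr{V}=\bigcup_{q\in\mathscr{M}}\bigl(I^+(q)\cap I^-(q)\bigr)$. The inclusion $\subseteq$ is immediate: if $p\ll p$ then $p\in I^+(p)\cap I^-(p)$. For $\supseteq$, if $r\in I^+(q)\cap I^-(q)$ then $q\ll r$ and $r\ll q$, and Theorem \ref{thm:prec} (transitivity of $\ll$) gives $r\ll r$, so $r\in\mathscr{V}$; the same chain also yields $q\ll q$, so every \emph{nonempty} set $I^+(q)\cap I^-(q)$ already occurs with $q\in\mathscr{V}$ and is contained in $\mathscr{V}$.

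Next I would introduce on $\mathscr{V}$ the relation $q\sim q'$ defined by $q\ll q'$ and $q'\ll q$. It is reflexive on $\mathscr{V}$ by the very definition of $\mathscr{V}$, symmetric by construction, and transitive by Theorem \ref{thm:prec}; hence it is an equivalence relation, and its classes partition $\mathscr{V}$ into mutually disjoint subsets. The key point is that the class of $q\in\mathscr{V}$ equals $I^+(q)\cap I^-(q)$: unwinding the definitions, $q'\sim q$ says exactly $q\ll q'$ and $q'\ll q$, i.e. $q'\in I^+(q)\cap I^-(q)$. Combined with the previous paragraph (which shows every nonempty set of this form arises from a point of $\mathscr{V}$), this exhibits $\mathscr{V}$ as the disjoint union of the sets $I^+(q)\cap I^-(q)$, $q\in\mathscr{V}$, which is the assertion.

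I do not expect a genuine obstacle. The only substantive step is that two sets $I^+(q)\cap I^-(q)$ and $I^+(q')\cap I^-(q')$ sharing a point must coincide, and this is pure transitivity: if $r$ lies in both and $s\in I^+(q)\cap I^-(q)$ is arbitrary, then $q'\ll r\ll q\ll s$ and $s\ll q\ll r\ll q'$ place $s$ in the second set, and symmetrically the reverse inclusion holds; packaging this as the equivalence relation $\sim$ is merely a tidy way of recording it. One should, however, phrase the union over $q\in\mathscr{V}$ (equivalently, over those $q$ for which the set is nonempty), since for a generic $q\in\mathscr{M}$ the set $I^+(q)\cap I^-(q)$ can be empty and then contributes nothing.
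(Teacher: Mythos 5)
Your proof is correct and is essentially the standard argument: the paper itself does not reproduce a proof of this theorem (it defers to Hawking--Ellis, pp.~189--190), and the argument given there is precisely your observation that the nonempty sets $I^+(q)\cap I^-(q)$ are the equivalence classes of the relation $q\sim q'\Leftrightarrow q\ll q'\ll q$ on the chronology violating set, with disjointness following from transitivity of $\ll$. Your remark that the union should effectively be taken over $q$ in the violating set (the other sets being empty) is the right reading of the statement.
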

\begin{thm}$\\ $
If $\mathscr{M}$ is compact, the chronology violating set of $\mathscr{M}$ is non-empty.
\end{thm}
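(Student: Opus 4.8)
The plan is to exploit compactness to turn the (a priori infinite) collection of chronological futures into a finite one, and then run a pigeonhole argument on the resulting finite list of points. First I would observe that by Proposition \ref{thm:opensets} each $I^+(p)$ is open, and that the family $\{I^+(p):p\in\mathscr{M}\}$ covers $\mathscr{M}$: indeed, given any $q\in\mathscr{M}$, pick any point $q'\in I^-(q)$ (such a point exists, e.g.\ by following a past-directed timelike geodesic a short affine distance, using the local structure of Theorem \ref{thm:nge}), so that $q\in I^+(q')$. Since $\mathscr{M}$ is compact, this open cover admits a finite subcover, so there are points $p_1,\dots,p_n$ with
\begin{equation*}
\mathscr{M}=\bigcup_{i=1}^{n} I^+(p_i).
\end{equation*}

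Next I would build a chain. Set $q_0:=p_1$. Since $q_0\in\mathscr{M}$, it lies in some $I^+(p_{i_1})$, i.e.\ $p_{i_1}\ll q_0$; put $q_1:=p_{i_1}$. Likewise $q_1\in I^+(p_{i_2})$ for some index $i_2$, so $q_2:=p_{i_2}\ll q_1$, and inductively we obtain a sequence of points $q_0,q_1,q_2,\dots$, each belonging to the finite set $\{p_1,\dots,p_n\}$, with
\begin{equation*}
\cdots\ll q_2\ll q_1\ll q_0.
\end{equation*}
Because the $q_k$ take values in a set of only $n$ elements, by the pigeonhole principle there are indices $a<b$ with $q_a=q_b$. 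Applying transitivity of $\ll$ (Theorem \ref{thm:prec}) along the chain $q_b\ll q_{b-1}\ll\cdots\ll q_a$ gives $q_b\ll q_a=q_b$, that is, $q_b\ll q_b$. Hence there is a closed timelike curve through $q_b$, so $q_b$ belongs to the chronology violating set, which is therefore non-empty.

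I do not expect a genuine obstacle here: the argument is essentially compactness plus pigeonhole plus transitivity, all of which are available from the excerpt. The only point deserving a line of care is the verification that the chronological futures actually cover $\mathscr{M}$ (equivalently, that every point has a nonempty chronological past), which follows from the local causal structure established in Theorem \ref{thm:nge}; everything else is bookkeeping on the finite index set.
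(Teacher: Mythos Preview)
Your argument is correct and is precisely the standard Hawking--Ellis proof: cover $\mathscr{M}$ by the open sets $I^+(p)$, extract a finite subcover by compactness, and run the pigeonhole argument on the resulting chain $\cdots\ll q_2\ll q_1\ll q_0$ to produce a point $q$ with $q\ll q$. The paper does not supply its own proof of this statement but refers the reader to \cite[pg.~189--190]{HawEll}, where exactly this argument appears; your write-up therefore matches the cited source, and the one point you flagged as needing care (that $\{I^+(p)\}_{p\in\mathscr{M}}$ actually covers $\mathscr{M}$) is indeed the only nontrivial ingredient beyond compactness and transitivity.
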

The proofs can be found in \cite[pg.~189-190]{HawEll}.
From this last result it would seem reasonable to assume that a space-time should not be compact, in agreement with the arguments carried out in section \ref{sect:1.1}. Similarly we can define the \textit{causality condition} and hence the \textit{causality violating set} and it turns out that it is formed by the disjoint union of sets of the form $J^+(q)\cap J^-(q)$, $q\in\mathscr{M}$. As we will see the chronology and causality conditions are the \lq largest\rq\hspace{0.1mm} restrictions one can impose a space-time.\\
Also there are other possible types of causality violations, weaker than the existence of closed causal curves. In fact it would seem reasonable to exclude situations in which there are causal curves who return arbitrarily close to their point of origin or which pass arbitrarily close to other causal curve, because an arbitrary small perturbation of the metric in space-times like these would produce causality violation. As an example we can consider the space-time in Figure \ref{fig:1.14} in which there exist causal curves which come arbitrarily close to intersecting themselves, although none of them actually do. In fact here the light cones on the cylinder tip over until one null direction is horizontal, and then tip back up.
\begin{defn}$\\ $
A space-time is $(\mathscr{M},g)$ is \textit{future-distinguishing} at $p\in\mathscr{M}$ if $I^+(p)\neq I^+(q)$ for each $q\in\mathscr{M}$, with $q\neq p$. If a space-time is future distinguishing at every $p\in\mathscr{M}$ it is said to satisfy the \textit{future-distinguishing condition}.
\end{defn}
A similar definition holds for the concept of \textit{past-distinction}. Clearly if a space-time contains closed causal curves, it cannot be either past- or future-distinguishing. In fact, if a space-time would contain a closed causal curve, each pair $(p,q)$, with $p\neq q$, of points on that closed curve would be such that $I^+(p)=I^+(q)$. Hence we have the simple 
\begin{prop}$\\ $
If a space-time time $(\mathscr{M},g)$ is past- and future-distinguishing at $p$, then $(\mathscr{M},g)$ is causal at $p$.
\end{prop}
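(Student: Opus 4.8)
The plan is to prove the contrapositive: if $(\mathscr{M},g)$ fails to be causal at $p$ — i.e.\ some closed future-directed causal curve $\gamma$ passes through $p$ — then it is neither future- nor past-distinguishing at $p$. First I would note that such a $\gamma$ cannot be constant (a causal curve has a nowhere-vanishing causal tangent), so it contains a point $q=\gamma(t_0)$ with $q\neq p$. Cutting $\gamma$ at $p$ and at $q$ exhibits a causal arc from $p$ to $q$ and a causal arc from $q$ to $p$, so $p\prec q$ and $q\prec p$.

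The core of the argument is the claim that $I^{+}(p)=I^{+}(q)$. To get $I^{+}(q)\subseteq I^{+}(p)$, pick $r\in I^{+}(q)$, so that $q\in I^{-}(r)$, which is open by Proposition~\ref{thm:opensets}; hence some open neighbourhood $O$ of $q$ lies in $I^{-}(r)$. Since $p\prec q$ we have $q\in J^{+}(p)\subseteq\overline{I^{+}(p)}$ (the inclusion $J^{+}[S]\subseteq\overline{I^{+}}[S]$ recorded above), so $O$ meets $I^{+}(p)$ at some point $s$; then $p\ll s$ and $s\ll r$, whence $p\ll r$ by Theorem~\ref{thm:prec}, i.e.\ $r\in I^{+}(p)$. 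Interchanging the roles of $p$ and $q$ (now using $q\prec p$ in place of $p\prec q$) gives the reverse inclusion, so $I^{+}(p)=I^{+}(q)$ with $q\neq p$, contradicting future-distinction at $p$. The time-reversed version of the same computation yields $I^{-}(p)=I^{-}(q)$, contradicting past-distinction; so in fact either half of the hypothesis already forces causality at $p$.

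An alternative route to $I^{+}(p)=I^{+}(q)$ avoids the limiting step: concatenating the causal arc $p\to q$ with a timelike curve $q\to r$ for any $r\in I^{+}(q)$ produces a causal curve from $p$ to $r$ that is not a null geodesic, so by the contrapositive of Corollary~\eqref{cor:ngeo} one has $r\notin J^{+}(p)-I^{+}(p)$, i.e.\ $r\in I^{+}(p)$; symmetry finishes it.

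I do not anticipate a genuine obstacle here — the statement is essentially a repackaging of earlier results. The only points that require care are (i) ensuring that a legitimate closed causal curve really contributes a point $q\neq p$, so that the distinguishing conditions are violated by a pair of \emph{distinct} events, and (ii) the step ``causal from $p$ to $q$ followed by timelike from $q$ to $r$ implies $p\ll r$'', which must be justified through the openness and transitivity of the chronology relation (or through Corollary~\eqref{cor:ngeo}) rather than treated as self-evident.
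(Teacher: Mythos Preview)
Your argument is correct and follows the same contrapositive route as the paper: a closed causal curve through $p$ yields $q\neq p$ with $I^{+}(p)=I^{+}(q)$, violating future-distinction. The paper simply asserts this equality of chronological futures without justification, whereas you supply two clean proofs of it (via openness of $I^{-}(r)$ and $J^{+}(p)\subseteq\overline{I^{+}(p)}$, or via Corollary~\ref{cor:ngeo}); so your write-up is strictly more detailed but not different in spirit.
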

\begin{defn}$\\ $
\label{defn:stcaus}
A space-time $(\mathscr{M},g)$ is said to be \textit{strongly causal} at $p\in\mathscr{M}$ if every neighbourhood $O$ of $p$ contains a neighbourhood $O'$ of $p$ which is not intersected more than once by any causal curve. If a space-time  is strongly causal at every $p\in\mathscr{M}$ it is said to satisfy the \textit{strong causality condition}.
\end{defn}
\begin{oss}$\\ $
\label{oss:stcaus}
By defining an open set $O$ to be \textit{causally convex} if and only if for every $p,q\in O$, $p\ll r\ll q$ implies $r\in O$, an equivalent definition of the strong causality may be the following: $(\mathscr{M},g)$ is strongly causal at $p\in\mathscr{M}$ if and only if $p$ has arbitrarily small causally convex neighbourhoods. Here \lq arbitrarily small\rq\hspace{0.1mm} means that such a neighbourhood $O$ of $p$ can be found inside any open set containing $p$. Thus this definition is equivalent to the previous one.
\end{oss}
\begin{figure}[h]
\begin{center}
\includegraphics[scale=0.3]{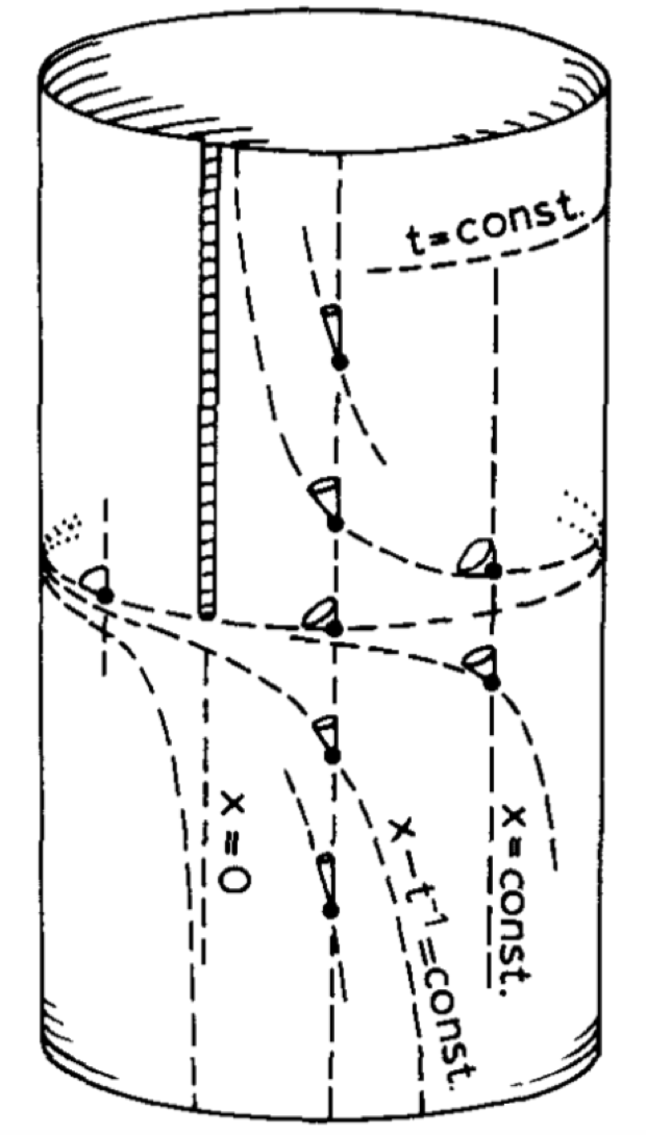}
\caption{Consider the metric form $ds^2=dtdx+t^2dx^2$, with $\partial/\partial t$ future pointing, on the strip $x\leq1$ of the $(t,x)$ plane. If we identify $(t,-1)$ with $(t,1)$ for each $t$ we obtain a space-time with a closed causal curve, the null geodesic $t=0$. Removing the point $(0,0)$ leaves a space-time with no closed causal curves, but which is neither future- nor past-distinguishing, since for any two different points $p$ and $q$ on the strip $t=0$ we have $I^+(p)=I^+(q)$. Hence the space-time is not strongly causal. If we remove the future-endless null geodesic $x=0$, $t\geq 0$ the new space-time obtained is not past distinguishing but future distinguishing.}
\label{fig:1.14}
\end{center}
\end{figure}
Hence, roughly speaking, if a space-time is not strongly causal at $p$, near $p$ there exist causal curves which come arbitrarily close to intersecting themselves.\\
Suppose that the future-distinguishing condition does not hold, i.e. there exist $q$ and $p$ such that $I^+(p)=I^+(q)$ with $q\neq p$. Choose $O_p$ and $O_q$ to be two disjoint open sets around $p$ and $q$ and choose $x\in I^+(p)\cap O_p$, then $q\ll x$. Choose $y$ in $O_q$ with $q\ll y \ll x.$ Then $p\ll y$ and hence, by \ref{thm:prec}, $p\ll x$, i.e. there is a timelike curve from $p$ to $x$ via $y\notin O_p$. Hence there is a causal curve intersecting more then once a neighbourhood of $p$ and, since this holds for arbitrary small $O_p$, the strong causality condition does not hold in $p$. We obtained the following
\begin{prop}$\\ $
If a space-time $(\mathscr{M},g)$ is strongly causal at $p$, then $(\mathscr{M},g)$ is future-distinguishing at $p$. 
\end{prop}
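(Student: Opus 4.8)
The plan is to argue by contraposition: I will show that if $(\mathscr{M},g)$ fails to be future-distinguishing at $p$, then it fails to be strongly causal at $p$. So suppose there is a point $q\in\mathscr{M}$ with $q\neq p$ and $I^+(p)=I^+(q)$. The goal is to produce, inside an arbitrarily small neighbourhood of $p$, a causal curve that leaves that neighbourhood and later returns to it, in direct conflict with Definition \ref{defn:stcaus}.

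First I would separate $p$ and $q$: since $\mathscr{M}$ is Hausdorff (Definition \ref{defn:st}), choose disjoint open sets $O_p\ni p$ and $O_q\ni q$. Let $O'$ be any neighbourhood of $p$ with $O'\subseteq O_p$. Because one can always follow a future-directed timelike curve a little way out of $p$, we have $p\in\overline{I^+(p)}$, so we may pick $x\in I^+(p)\cap O'$. Since $I^+(p)=I^+(q)$, this $x$ also lies in $I^+(q)$, i.e.\ $q\ll x$; let $\sigma$ be a future-directed timelike curve from $q$ to $x$. As $q$ is the past endpoint of $\sigma$, every neighbourhood of $q$ — in particular $O_q$ — contains points of $\sigma$ other than $q$, so I can choose $y\in\sigma\cap O_q$ with $q\ll y\ll x$. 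Then $y\in I^+(q)=I^+(p)$, i.e.\ $p\ll y$, and combining this with $y\ll x$, the transitivity of $\ll$ (Theorem \ref{thm:prec}) gives $p\ll x$ via a timelike curve $\gamma$ that can be taken to pass through $y$, obtained by concatenating a witness of $p\ll y$ with $\sigma$ from $y$ to $x$.

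This curve $\gamma$ does the job: it starts at $p\in O'$, ends at $x\in O'$, but passes through $y\in O_q$, which is disjoint from $O_p\supseteq O'$; hence $\gamma^{-1}(O')$ contains parameter values near the start and near the end but not the value mapping to $y$, so it is disconnected and $\gamma$ meets $O'$ in more than one connected piece. Since $O'$ was an arbitrary neighbourhood of $p$ contained in $O_p$, the neighbourhood $O_p$ witnesses that strong causality fails at $p$ (every $O'\subseteq O_p$ is met more than once by some causal curve). Reading the chain of implications backwards, strong causality at $p$ forces the future-distinguishing condition at $p$.

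The step I expect to need the most care is the construction of the intermediate point $y$: one must be sure the connecting timelike curve $\sigma$ genuinely enters the prescribed neighbourhood $O_q$ of its own past endpoint, and that a single timelike curve from $p$ to $x$ can be routed through $y$ — both are intuitively clear (the first from the definition of endpoint, the second from concatenation of timelike curves as in Theorem \ref{thm:prec}), but this is where the argument actually uses that we deal with honest timelike curves together with the openness of the chronological sets (Proposition \ref{thm:opensets}). A secondary point to pin down is the precise reading of \emph{intersected more than once} in Definition \ref{defn:stcaus}, namely that the preimage of the test neighbourhood under the curve fails to be connected.
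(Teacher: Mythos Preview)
Your argument is correct and follows essentially the same contrapositive route as the paper: separate $p$ and $q$ by disjoint open sets, pick $x\in I^+(p)$ near $p$, use $I^+(p)=I^+(q)$ to find an intermediate point $y$ near $q$ with $q\ll y\ll x$, deduce $p\ll y$, and concatenate to obtain a timelike curve from $p$ to $x$ that detours through $y\notin O_p$. Your version is in fact slightly more careful with the quantifier structure of Definition~\ref{defn:stcaus} (fixing $O_p$ and letting $O'\subseteq O_p$ be arbitrary), whereas the paper phrases this as ``$O_p$ arbitrarily small''; the content is the same.
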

The following definition is very important, since, as we will see, it allows us to regard the causal structure of a space-time as a fundamental structure from which the topology of the space-time manifold can be derived, under appropriate hypothesis. 
\begin{defn}$\\ $
A \textit{local causality neighbourhood} is a causally convex open set $\mathscr{U}$ with compact closure.
\end{defn}
It can be shown that, in virtue of the previous definition, the following theorem holds:
\begin{thm}$\\ $
\label{thm:lcn}
A space-time $(\mathscr{M},g)$ is strongly causal at $p$ if and only if $p$ is contained in some local causality neighbourhood.
\end{thm}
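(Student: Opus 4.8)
The statement is an equivalence; the forward implication is essentially Remark~\ref{oss:stcaus}, and all the content sits in the converse. \emph{Strong causality at $p$ $\Rightarrow$ a local causality neighbourhood:} by Remark~\ref{oss:stcaus}, strong causality at $p$ says precisely that $p$ has arbitrarily small causally convex open neighbourhoods; since $\mathscr{M}$ is a manifold, $p$ also has some open neighbourhood $W$ with $\overline{W}$ compact, so choosing a causally convex open $\mathscr{U}$ with $p\in\mathscr{U}\subseteq W$ gives a set with $\overline{\mathscr{U}}\subseteq\overline{W}$ compact, i.e.\ a local causality neighbourhood containing $p$.

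\emph{A local causality neighbourhood $\Rightarrow$ strong causality at $p$:} suppose $p\in\mathscr{U}$ with $\mathscr{U}$ causally convex, open, $\overline{\mathscr{U}}$ compact. By Remark~\ref{oss:stcaus} it is enough to fit, inside any prescribed open $O\ni p$, a causally convex neighbourhood of $p$. Fix a future-directed timelike curve ending at $p$ and one starting at $p$, both with image inside $\mathscr{U}$; for a point $a$ of the first ($a\ll p$) and a point $b$ of the second ($p\ll b$) put $V_{a,b}:=I^{+}(a)\cap I^{-}(b)\cap\mathscr{U}$. Using Theorem~\ref{thm:prec}, the mixed ``push-up'' rules $x\ll y\prec z\Rightarrow x\ll z$ and $x\prec y\ll z\Rightarrow x\ll z$ (consequences of Corollary~\ref{cor:ngeo}), and the causal convexity of $\mathscr{U}$, one checks directly that each $V_{a,b}$ is open, contains $p$, and is itself causally convex. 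So the whole task reduces to showing that the family $\{V_{a,b}\}$ shrinks to $p$: for every open $O\ni p$ there are $a,b$ with $V_{a,b}\subseteq O$.

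Suppose this failed for some $O$. Taking $a_n\to p$ and $b_n\to p$ along the two curves and picking $q_n\in V_{a_n,b_n}\setminus O$, compactness of $\overline{\mathscr{U}}$ yields a subsequence $q_n\to q\in\overline{\mathscr{U}}$ with $q\notin O$, hence $q\neq p$. The timelike segments realising $a_n\ll q_n$ and $q_n\ll b_n$ stay in $\mathscr{U}$ by causal convexity, hence in the compact set $\overline{\mathscr{U}}$, so the limit-curve construction of Theorem~\ref{thm:caus} --- in its form for sequences of causal curves with convergent endpoints lying in a compact set --- produces a causal curve from $p$ to $q$ and one from $q$ to $p$. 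Thus $p\prec q\prec p$ with $q\neq p$: a closed causal curve through $p$. Hence the converse follows once one knows that a causally convex open set with compact closure can contain no closed causal curve through one of its points (the degenerate case $\mathscr{U}=\mathscr{M}$ with $\mathscr{M}$ compact being excluded by any causality assumption, cf.\ the discussion preceding Definition~\ref{defn:stcaus}).

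For that last fact one argues as follows: such a closed causal curve $\mu$ lies entirely in $\mathscr{U}$ (causal convexity and push-up again), so traversing it indefinitely gives a future-inextendible causal curve totally imprisoned in $\overline{\mathscr{U}}$ while never leaving $\mathscr{U}$; a contradiction is then extracted from this imprisonment --- for instance, by deforming $\mu$ to a closed timelike curve so that the chronology-violating component $I^{+}(p)\cap I^{-}(p)$ becomes a precompact open subset of $\mathscr{U}$ whose null-generated boundary is forced to be compact, which the limit-curve techniques of Theorem~\ref{thm:caus} rule out. I expect this step --- showing that a causally convex neighbourhood with compact closure admits no closed causal curve --- to be the main obstacle, since it is exactly where compactness of $\overline{\mathscr{U}}$ enters in an essential, non-formal way; by contrast the construction and causal convexity of the sets $V_{a,b}$ and the reduction above are only bookkeeping with the relations $\ll$, $\prec$ and the openness of the sets $I^{\pm}$.
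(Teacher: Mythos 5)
The paper gives no proof of this theorem at all --- it simply defers to \cite[pg.~30]{PenTDT} --- so your proposal has to stand on its own. The forward implication is fine, and your reduction of the converse is sound: the sets $V_{a,b}=I^{+}(a)\cap I^{-}(b)\cap\mathscr{U}$ are open, contain $p$, and are causally convex (only the definition of causal convexity and transitivity of $\ll$ are needed), so strong causality at $p$ would indeed follow once the $V_{a,b}$ are shown to shrink to $p$. The first genuine gap is the limit-curve step. Theorem \ref{thm:caus} is a statement about \emph{inextendible} curves with a limit point; for causal curves running from $a_n$ to $b_n$ through $q_n$, with $a_n,b_n\to p$ and $q_n\to q\neq p$ inside the compact set $\overline{\mathscr{U}}$, the correct general statement is a dichotomy: either one obtains a causal limit curve from $p$ through $q$ back to $p$, or the lengths diverge and one obtains only a future-inextendible limit curve from $p$ and a past-inextendible one into $p$, both imprisoned in $\overline{\mathscr{U}}$ but not joined. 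You take the first horn silently; the second is not a closed causal curve and must be disposed of separately.

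The more serious gap is the one you flag yourself, and it is worse than an obstacle: the lemma that a causally convex open set with compact closure admits no closed causal curve through its points is \emph{false} under the paper's bare definition. If $\mathscr{M}$ is compact, then $\mathscr{M}$ itself is open, trivially causally convex, and has compact closure, yet it contains closed timelike curves; your parenthetical appeal to ``any causality assumption'' does not rescue this, since the theorem carries no such hypothesis. Your suggested repair also fails as written: the closed causal curve your argument produces may be a single closed null geodesic, which cannot be deformed to a closed timelike curve, and total imprisonment of an inextendible causal curve in a compact set is not by itself contradictory (compact space-times realise it). The missing ingredient is the clause the paper has dropped from Penrose's definition: a local causality neighbourhood is required to lie inside a \emph{simple region} (a precompact convex normal neighbourhood). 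That hypothesis does the work twice over: a closed causal curve through a point of $\mathscr{U}$ lies in $\mathscr{U}$ by causal convexity and push-up, hence in the simple region, where causal curves increase a normal time coordinate and so cannot close; and the diamonds can then be taken \emph{within} the simple region, where they manifestly shrink to $p$, so the global limit-curve machinery is not needed at all. Without restoring that hypothesis, your converse cannot be completed.
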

The proof to this theorem can be found in \cite[pg.~30]{PenTDT}. 
Hence we have
\begin{prop}$\\ $
\label{prop:lcn}
Let $(\mathscr{M},g)$ be a space-time. Let $S\subset\mathscr{M}$ and suppose that strong causality holds at every point of $\mathscr{M}$. Then $\mathscr{M}$ can be covered by a locally finite (countable) system of local causality neighbourhoods. If $S$ is compact, then a finite number of such neighbourhoods will suffice.
\end{prop}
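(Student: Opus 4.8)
The plan is to combine two facts already in hand: strong causality provides every point with arbitrarily small causally convex neighbourhoods (Remark \ref{oss:stcaus}, equivalently Theorem \ref{thm:lcn}), and $\mathscr{M}$, having a countable basis, is paracompact and can be exhausted by compact sets.

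First I would note that every $p\in\mathscr{M}$ possesses a local causality neighbourhood, and indeed an arbitrarily small one: given any open $O\ni p$ with $\overline{O}$ compact (such $O$ exist since $\mathscr{M}$ is a manifold), Remark \ref{oss:stcaus} supplies a causally convex open set $W$ with $p\in W\subset O$, and then $\overline{W}\subseteq\overline{O}$ is compact, so $W$ is a local causality neighbourhood contained in $O$. In particular, if $S$ is compact the last assertion follows at once: choose such a neighbourhood around each point of $S$ and extract a finite subcover.

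For the locally finite countable cover of $\mathscr{M}$ I would fix a compact exhaustion $\mathscr{M}=\bigcup_{n\geq 1}K_n$ with each $K_n$ compact and $K_n\subset\operatorname{int}K_{n+1}$ (available because $\mathscr{M}$ is a connected, locally compact, second countable manifold), and set $A_n=K_n\setminus\operatorname{int}K_{n-1}$ and $G_n=\operatorname{int}K_{n+1}\setminus K_{n-2}$ (with $K_j=\emptyset$ for $j\leq 0$). Then the $A_n$ are compact and cover $\mathscr{M}$, each $G_n$ is open with $A_n\subset G_n$ and $\overline{G_n}\subset K_{n+1}$ compact, and $\{G_n\}$ is locally finite. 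For each $n$ and each $q\in A_n$ I would apply the construction of the second paragraph inside the open set $G_n$ to get a local causality neighbourhood $W_q$ with $q\in W_q\subset G_n$; compactness of $A_n$ lets me keep finitely many of them, say $W_{q_1},\dots,W_{q_{m_n}}$, covering $A_n$. The union over all $n$ of these finite families is then a countable cover of $\mathscr{M}$ by local causality neighbourhoods, and it is locally finite because any point has a neighbourhood meeting only finitely many $G_n$, hence only finitely many of the selected $W_{q_j}$.

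The step that requires care — and where the obvious shortcut fails — is producing the refining sets as genuine \emph{local causality} neighbourhoods. One cannot merely take a locally finite open refinement of the cover of $\mathscr{M}$ by local causality neighbourhoods, because causal convexity is not inherited by open subsets; intersecting a fixed such neighbourhood with a small $G_n$ need not leave it causally convex. Re-invoking strong causality \emph{inside} each $G_n$ regenerates causal convexity at the finer scale, while the compactness of $A_n$ controls the cardinality and so preserves local finiteness. The remaining ingredients — existence of the exhaustion $\{K_n\}$ and local finiteness of $\{G_n\}$ — are standard manifold facts that I would cite rather than reprove.
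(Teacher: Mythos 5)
Your proof is correct and follows exactly the route the paper indicates, since the paper dispatches this proposition in one line (``follows from Theorem~\ref{thm:lcn} and the definition of paracompactness'') without supplying details. The one substantive thing you add is the observation that a bare appeal to paracompactness is not quite enough --- an arbitrary locally finite open refinement of a cover by local causality neighbourhoods need not consist of causally convex sets --- and your fix, re-invoking Remark~\ref{oss:stcaus} inside each shell $G_n$ of a compact exhaustion to regenerate causal convexity at the finer scale, is the right way to close that gap; the compact case of $S$ is handled correctly and immediately.
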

This proposition follows from \ref{thm:lcn} and from the definition of paracompactness. \\
We can now construct a collection of subsets of a space-time manifold $\mathscr{M}$ in the following way. Let $O$ be an open subset of $\mathscr{M}$ and let $p$,$q\in O$. The we write that $p\ll_{O}q$ if a timelike curve lying in $O$ exists from $p$ to $q$ and $p\prec_{O}q$ if a causal curve in $O$ exists from $p$ to $q$. We define
\begin{equation*}
\left<p,q\right>_O=\{\left.r\right| p\ll_{O}r\ll_{O}q\}
\end{equation*}
and 
\begin{equation*}
\left<p,q\right>=\left<p,q\right>_{\mathscr{M}}
\end{equation*}
so that $\left<p,q\right>=I^+(p)\cap I^-(q)$. Obviously the sets $\left<p,q\right>$ and $\left<p,q\right>_O$ are open. It can be shown that \citep[see][sec.~4]{PenTDT}:
\begin{itemize}
\item Any point $r\in\mathscr{M}$ is contained in some set $\left<p,q\right>$;
\item If $x$,$p$,$q$,$r$,$s\in\mathscr{M}$ are such that $x\in\left<p,q\right>\cap\left<r,s\right>$, then there exist $u$,$v\in\mathscr{M}$ such that $x\in\left<u,v\right>\subset\left<p,q\right>\cap\left<r,s\right>$.
\end{itemize}
Hence we can put a topology on $\mathscr{M}$, called the \textit{Alexandrov topology}. The base for such a topology is constituted by the sets of the form $\left<p,q\right>$, i.e. a set is defined to be an open set in the Alexandrov topology if it is a union of sets of the form $\left<p,q\right>$. An important question may be whether or not the Alexandrov topology agrees with the manifold topology. In fact, generally, it turns out the Alexandrov topology is \lq coarser\rq\hspace{0.1mm} \citep{Haw76} than the manifold topology. The next theorem gives the complete condition that the two topologies should agree.
\begin{thm}$\\ $
The following three restrictions on a space-time $(\mathscr{M},g)$ are equivalent:
\begin{enumerate}
\item $(\mathscr{M},g)$ is strongly causal;
\item The Alexandrov topology agrees with the manifold topology;
\item The Alexandrov topology is Hausdorff.
\end{enumerate}
\end{thm}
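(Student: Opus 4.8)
The plan is to establish the cycle of implications $(1)\Rightarrow(2)\Rightarrow(3)\Rightarrow(1)$. For $(1)\Rightarrow(2)$, note first that \emph{without any causality hypothesis} the Alexandrov topology is contained in the manifold topology, since each basic set $\langle p,q\rangle=I^{+}(p)\cap I^{-}(q)$ is manifold-open by Proposition \ref{thm:opensets} and hence so is every union of such sets. For the reverse inclusion I would use strong causality: given a manifold-open set $O$ and a point $r\in O$, the characterization of strong causality in Remark \ref{oss:stcaus} provides a causally convex open set $N$ with $r\in N\subseteq O$; inside a convex normal neighbourhood $\mathscr{U}\subseteq N$ of $r$, Theorem \ref{thm:nge} lets one take a short future-directed timelike geodesic through $r$ with past endpoint $x$ and future endpoint $y$, so that $r\in\langle x,y\rangle$. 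Since $x,y\in N$, causal convexity of $N$ forces any $w$ with $x\ll w\ll y$ to lie in $N$, i.e.\ $\langle x,y\rangle\subseteq N\subseteq O$; thus $O$ is Alexandrov-open and the two topologies coincide. The implication $(2)\Rightarrow(3)$ is immediate, a space-time being Hausdorff in its manifold topology by Definition \ref{defn:st}.

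For $(3)\Rightarrow(1)$ I would argue by contraposition: assuming strong causality fails at some $p$, I exhibit two distinct points of $\mathscr{M}$ that cannot be separated in the Alexandrov topology. Each basic set $\langle a,b\rangle$ is causally convex (if $x,y\in\langle a,b\rangle$ and $x\ll w\ll y$ then $a\ll w\ll b$), so by Remark \ref{oss:stcaus} the failure of strong causality at $p$ yields a fixed open $V\ni p$ such that no basic Alexandrov neighbourhood of $p$ is contained in $V$. Taking a decreasing neighbourhood basis $W_n\downarrow\{p\}$ with each $W_n\subseteq V$, every $W_n$ fails to be causally convex, producing $a_n,b_n\in W_n$ and $r_n\notin W_n$ with $a_n\ll r_n\ll b_n$; in particular $a_n,b_n\to p$. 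Extending the timelike curves through the $r_n$ to future-inextendible causal curves and using that $p=\lim a_n$ is a limit point of this sequence, Theorem \ref{thm:caus} yields a future-inextendible causal curve $\gamma$ with past endpoint $p$ that is a limit curve of the sequence. The crucial point is that $\gamma$ re-enters every neighbourhood of $p$, i.e.\ $\gamma(t_k)\to p$ for suitable parameters $t_k\to\infty$, which reflects the fact that the approximating curves leave each $W_n$ while both of their endpoints return to $W_n$. Granting this, choose any $q\neq p$ on $\gamma$ and let $\langle a,b\rangle\ni p$ and $\langle c,d\rangle\ni q$ be arbitrary basic Alexandrov neighbourhoods. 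Using openness of $I^{\pm}$ and the elementary rules that $\ll$ composed (on either side) with $\prec$ again gives $\ll$, one checks that $\gamma(t_k)\in\langle a,b\rangle\cap\langle c,d\rangle$ for $k$ large: $a\ll p\prec\gamma(t_k)$ gives $a\ll\gamma(t_k)$ and $c\ll q\prec\gamma(t_k)$ gives $c\ll\gamma(t_k)$, while from $p\ll b$ and $p\ll d$ (the latter since $q\ll d$ and $p\prec q$) together with $\gamma(t_k)\to p$ one gets $\gamma(t_k)\ll b$ and $\gamma(t_k)\ll d$. Hence $p$ and $q$ admit no disjoint Alexandrov neighbourhoods, so the Alexandrov topology is not Hausdorff.

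I expect the main obstacle to be the imprisonment step in $(3)\Rightarrow(1)$: deducing from the bare failure of strong causality that the limit curve $\gamma$ genuinely returns to every neighbourhood of $p$, together with the degenerate possibility that the excursion points $r_n$ escape every compact set, in which case one should instead separate $p$ directly from a limit point of $\{r_n\}$ by the same push-up computation. Everything else reduces to the openness of the chronological sets, the causal convexity of the sets $\langle p,q\rangle$, and the limit curve theorem already available in the text.
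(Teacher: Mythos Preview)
The paper does not supply its own proof of this theorem; immediately after the statement it writes ``Proofs and further details can be found in \cite{PenKro}'' and moves on. So there is no argument in the paper to compare your proposal against.

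For what it is worth, your $(1)\Rightarrow(2)$ and $(2)\Rightarrow(3)$ are the standard arguments and are correct as written. Your sketch of $(3)\Rightarrow(1)$ is along the right lines (contrapose, extract a limit curve from the failure of strong causality, and show that some point $q$ on it cannot be Alexandrov-separated from $p$), and you have correctly identified the genuine difficulty: one must actually prove that the limit curve keeps returning to every neighbourhood of $p$, or alternatively handle the case where the excursion points $r_n$ escape to infinity. In the Kronheimer--Penrose treatment this is done by a more careful compactness argument (working inside a convex normal neighbourhood with compact closure and tracking where the violating curves first exit it), which pins down a specific $q\neq p$ with $I^{\pm}(q)=I^{\pm}(p)$ and hence makes the non-Hausdorff conclusion immediate. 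Your push-up computation at the end is fine once such a $q$ is in hand.
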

Proofs and further details can be found in \cite{PenKro}.\\
This means essentially that, under the assumption of strong causality condition, one can determine the topological structure of the space-time just by observation of causal relationships. Then, in a certain way, we can say that causal structure is more fundamental than other structures. 
\\As we have seen, various degrees of causality restriction on a space-time are possible, e.g. in order of decreasing restrictiveness: strong causality, future- and past-distinction, causality condition, chronology condition (is worth noting that, as shown by \cite{Car71} or by \cite{Beem} there are a number of inequivalent conditions, each more restrictive than strong causality on which we are not focusing). Each of these is \lq reasonable\rq\hspace{0.1mm} from the physical point of view since if any of one is violated it is possible to slightly modify the metric in order to obtain closed causal trips, and thus causality violation. However, it is possible to construct examples (see Figure \ref{fig:1.15}) where strong causality is still satisfied, but a modification of the metric tensor in an arbitrarily small neighbourhood of two or more points produces closed causal curves. 
\begin{figure}[h]
\begin{center}
\includegraphics[scale=0.4]{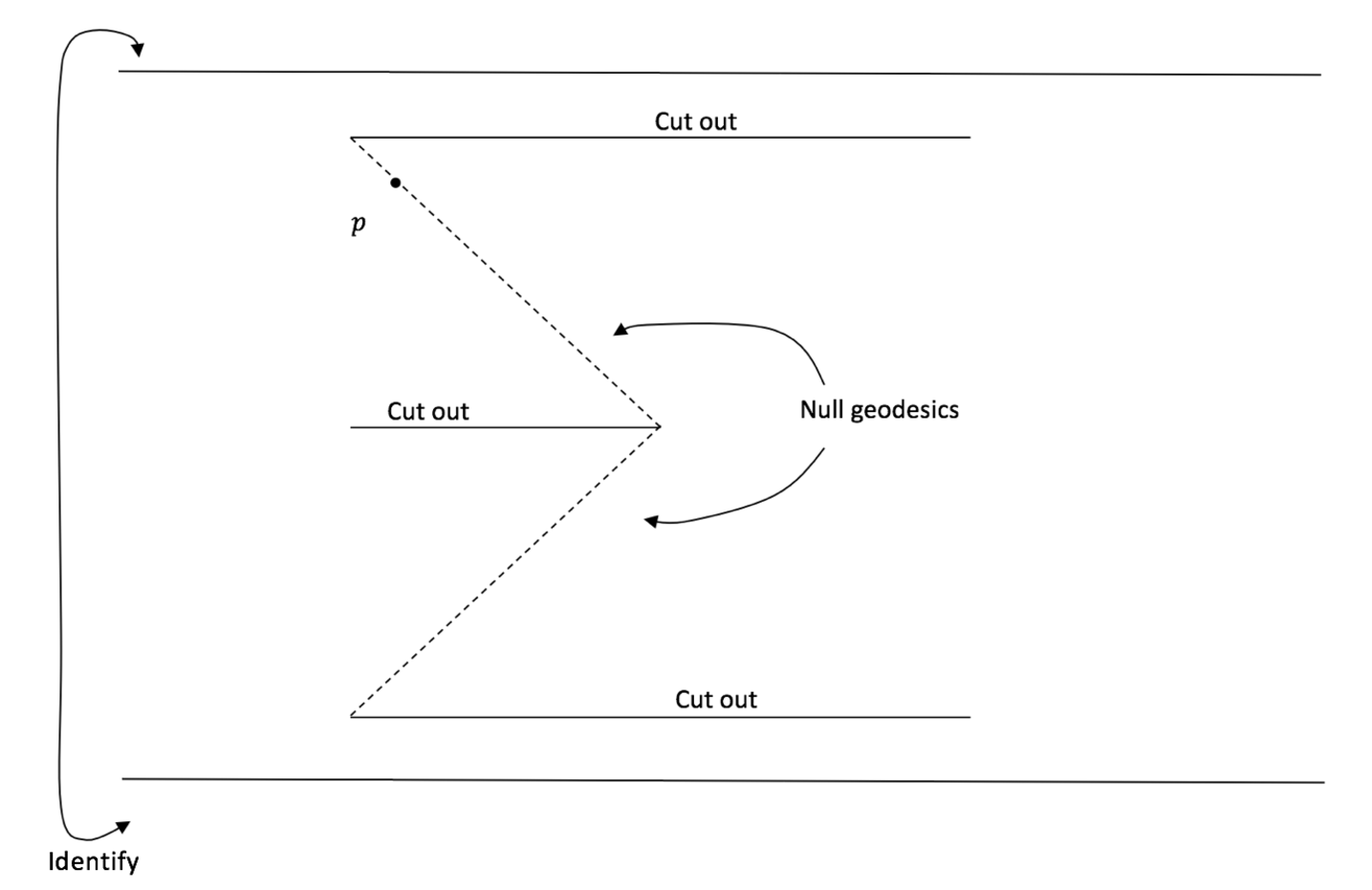}
\caption{A space-time satisfying the strong causality condition, but in which a slight variation of the metric would permit there to be closed timelike curves through $p$. The strips have been removed from the cylinder and the light cone is at $45^{\circ}$.}
\label{fig:1.15}
\end{center}
\end{figure}
\\Again, it would seem inappropriate to regard such space-times as having satisfactory causal behaviour. A motivation for this statement is that General Relativity is presumably the classical limit of a quantum theory of space-time, as remarked in the introduction to this chapter, and hence the metric tensor, according to the Uncertainty Principle, does not have an exact value at every point. Thus in order to be physically significant, a space-time must have some kind of stability, that has to be a property of \lq nearby\rq\hspace{0.1mm} space-times. To give a precise mathematical meaning to \lq nearby\rq\hspace{0.1mm} we have to define a topology on the set of all space-times, all non-compact four-dimensional manifolds and all Lorentz metrics on them. Here we do not consider the problem of uniting under the same topological space manifolds with different topologies, and focus only on putting a topology on the set of all $C^{\infty}$ Lorentzian metrics. There are various way in which this can be done, whether one defines \lq nearby\rq\hspace{0.1mm} metrics to be nearby just in its values ($C^0$ topology) or also in its derivatives up to the $k$th order ($C^k$ topology) and whether one requires it to be nearby everywhere (open topology) or only on compact sets (compact open topology). \	\
Let $\mathscr{M}$ be a space-time manifold. Let $\mathscr{D}(\mathscr{M})$ be the collection of all $C^{\infty}$, symmetric, $(0,2)$ rank tensors on $\mathscr{M}$. The set of Lorentzian metrics is a subset of $\mathscr{D}(\mathscr{M})$ and so will inherit the topology from $\mathscr{D}(\mathscr{M})$. Let $h_{ab}$ be any positive-definite metric on $\mathscr{M}$ (that exists in virtue of paracompatness of space-times) with associate covariant derivative $\nabla_a$, $C$ any closed subset of $\mathscr{M}$ and $k$ any non-negative integer. We define a distance function on pairs of elements $t_{ab},t'_{ab}\in\mathscr{D}(\mathscr{M})$ as follows:
\begin{equation}
\label{eqn:134}
\rho(t_{ab},t'_{ab})=\sup_{C}\sum_{n=0}^{k}2^{-n}\frac{\left| t-t'\right|_n}{1+\left| t-t'\right|_n}
\end{equation}
where 
\begin{equation*}
\left| t-t'\right|_n=\{\left[\nabla_{a_1}...\nabla_{a_n}\left(t_{rs}-t'_{rs}\right)\right]\left[\nabla_{b_1}...\nabla_{b_n}\left(t_{uv}-t'_{uv}\right)\right]h^{a_1b_1}...h^{sv}\}^{1/2},
\end{equation*}
This definition can be found in \cite{GerSing}. The sense of the above defined distance between elements of $\mathscr{D}(\mathscr{M})$ is that two elements are \lq close\rq\hspace{0.1mm} if their values and first $k$ derivatives are close, respect to the metric $h_{ab}$ on the set $C$. The complicated form of \eqref{eqn:134} is necessary to ensure that the least upper bound exists even if $\left| g-g'\right|_n$ may be unbounded. Thus, fixing the integer $k$, for any arbitrary choice of $h_{ab}$ and $C$, it remains defined a topology on $\mathscr{D}(\mathscr{M})$: a neighbourhood of the metric $g_{ab}\in\mathscr{D}(\mathscr{M})$ consists of all $g'_{ab}\in\mathscr{D}(\mathscr{M})$ such that $\rho(g_{ab},g'_{ab})<\epsilon$, for $\epsilon>0$. Since we are not interested in such arbitrary choices, we fix the pair $(h_{ab},C)$ and this choice defines a distance function \eqref{eqn:134} and a family of open sets on $\mathscr{D}(\mathscr{M})$. The aggregate of all finite intersection and arbitrary unions of all open sets of the above defined family defines a topology on $\mathscr{D}(\mathscr{M})$. \\
For our purpose we are only interested in the \textit{$C^0$ open topology} obtained with the choice $k=0$ and requiring $C=\mathscr{M}$. It is worth noting that the construction we used for the $C^0$ open topology has been made following the work of \cite{GerSing} (in which can be found more features and examples), but an equivalent formulation can be made in terms of the bundle of metrics over a manifold \cite[see][]{HawEll}. We are now ready to give the \lq maximally restrictive\rq\hspace{0.1mm} and \lq right\rq\hspace{0.1mm} causality condition for a space-time, that is due to \cite{Haw69}. 
\begin{defn}$\\ $
\label{defn:stc}
A space-time $(\mathscr{M},g)$ is said to be \textit{stably causal} if $g$ has an open neighbourhood in the $C^0$ open topology such that there are no closed causal curves in any metric belonging to the neighbourhood.
\end{defn}
In other words a space-time is stably causal if it cannot be made to contain closed causal curves by arbitrarily small perturbations of the metric and hence by arbitrarily small expansions of the light cones.\\
Other authors give the definition of stable causality in a different, but equivalent, way \citep[see][]{Wald,GerHor}. This definition lies on the idea that if we define a new metric $\tilde{g}_{ab}$ at $p\in\mathscr{M}$ as 
\begin{equation}
\label{eqn:stca}
\tilde{g}_{ab}=g_{ab}-t_at_b,
\end{equation}
with $t^a$ timelike vector at $p$, then $\tilde{g}_{ab}$ is also Lorentzian and its light cone is strictly larger than that of $g_{ab}$ (see Figure \ref{fig:1.16}).
\begin{figure}[h]
\begin{center}
\includegraphics[scale=0.4]{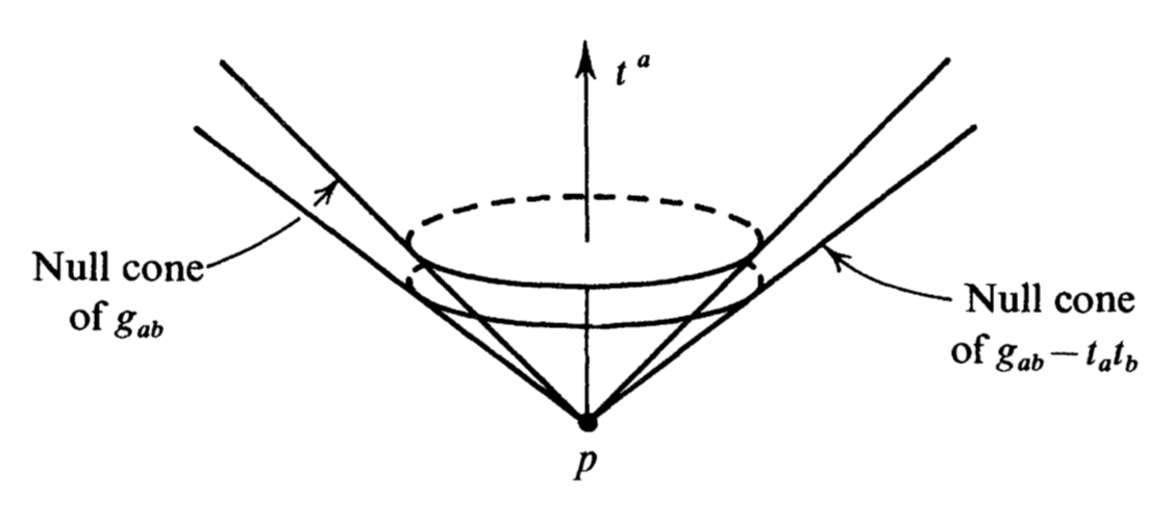}
\caption{The null cones of the metric $\tilde{g}_{ab}$ are more \lq opened out\rq\hspace{0.1mm} than those of $g_{ab}$.}
\label{fig:1.16}
\end{center}
\end{figure}
\\If we \lq open out\rq\hspace{0.1mm} the light cone at each point, if the space-time is stably causal, then we expect it to still not contain closed causal curves.  Hence they define a space-time $(	\mathscr{M},g)$ to be stably causal if there exists a continuous non-vanishing timelike vector field $t^a$ such that the space-time $(	\mathscr{M},\tilde{g})$, with $\tilde{g}$ given by \eqref{eqn:stca}, possesses no closed causal curves. The main consequence of stable causality is given by the following characterization.
\begin{thm}$\\ $
\label{thm:tf}
A space-time $(\mathscr{M},g)$ is stably causal if and only if there exists a differentiable function $f$ on $\mathscr{M}$ such that $\nabla^af$ is a future-directed timelike vector field.
\end{thm}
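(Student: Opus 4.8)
The plan is to prove the two implications separately. The direction ``such an $f$ exists $\Rightarrow$ $(\mathscr{M},g)$ is stably causal'' is a short openness argument; the converse requires constructing a global time function and is where essentially all the work lies.

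\textbf{Sufficiency.} Suppose $f$ is differentiable with $\nabla^a f=g^{ab}\partial_b f$ future-directed timelike everywhere. If $\gamma$ is any future-directed causal curve with tangent $X^a$, then $\tfrac{d}{dt}(f\circ\gamma)=X^a\nabla_a f=g_{ab}X^a\nabla^b f>0$, because the $g$-scalar product of two future-directed causal vectors, at least one of them timelike, is strictly positive. Hence $f$ strictly increases along $\gamma$, so no future-directed causal curve can close up and $(\mathscr{M},g)$ has no closed causal curve. For stability, note that ``$\partial_a f$ is timelike future-directed for the metric $g'$'', i.e.\ $g'^{ab}\partial_a f\,\partial_b f>0$ with a fixed time orientation, is an open condition on the metric in the $C^{0}$ open topology; hence it persists for all $g'$ in a suitable $C^{0}$-neighbourhood of $g$, and the argument just given (applied to $g'$) shows each such $g'$ has no closed causal curves. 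The one point requiring care is that the openness estimate must hold uniformly over all of $\mathscr{M}$, while $g^{ab}\partial_a f\,\partial_b f$ need not be bounded away from zero at infinity; I would handle this by letting the admissible size of the perturbation shrink suitably near infinity, a routine technicality of the $C^{0}$ topology.

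\textbf{Necessity.} Assume $(\mathscr{M},g)$ is stably causal. First I would fix an auxiliary metric $g'$ with strictly wider null cones that is still free of closed causal curves --- for instance $g'=g+\epsilon\,h$ with $h$ a positive-definite metric and $\epsilon>0$ small, or $\tilde g$ as in \eqref{eqn:stca}; by Definition \ref{defn:stc} such a $g'$ exists, and by construction \emph{every nonzero $g$-causal vector is $g'$-timelike}. Since stable causality is stronger than strong causality, Proposition \ref{prop:lcn} covers $\mathscr{M}$ by countably many local causality neighbourhoods, out of which (via a subordinate partition of unity) I build a finite Borel measure $\mu$ on $\mathscr{M}$ that is positive on every non-empty open set. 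The candidate time function is $f(p)=\mu(I^-_{g'}(p))$, chronological pasts being taken in $(\mathscr{M},g')$. Monotonicity is then the routine part: if $q\in J^+_g(p)$ with $q\neq p$, the connecting $g$-causal curve is $g'$-timelike, so $q\in I^+_{g'}(p)$ and, by transitivity (Theorem \ref{thm:prec}), $I^-_{g'}(p)\subseteq I^-_{g'}(q)$; since $g'$ has no closed causal curves this inclusion is proper, and a local analysis near the connecting curve --- using that the $g'$-cones strictly contain the $g$-cones --- shows the difference contains a non-empty open set, whence $f(q)>f(p)$. Thus $f$ strictly increases along every future-directed $g$-causal curve.

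\textbf{The main obstacle.} What remains --- upgrading $f$ to a genuinely differentiable function with everywhere timelike gradient --- is the hard part, and it has two facets. \emph{Continuity of $f$:} lower semicontinuity is immediate since $I^-_{g'}$ is open (Proposition \ref{thm:opensets}) and inner continuous in $p$, but upper semicontinuity can fail for the naive $f$ above, because $I^-_{g'}(p)$ may jump discontinuously with $p$ under mere stable (as opposed to causal) continuity; the standard remedy --- and the place where the full strength of stable causality beyond strong causality is really used --- is to replace the single $g'$ by a one-parameter family $\{g_\lambda\}$ of mutually cone-widening causal metrics furnished by stable causality and to average, $f(p)=\int\mu(I^-_{g_\lambda}(p))\,d\nu(\lambda)$, which smooths away the jumps. \emph{Mollification:} a merely continuous function increasing along all causal curves has spacelike level sets only in a weak sense, so a final smoothing is needed --- convolving against the measure over slightly-opened cones, so that strict monotonicity survives --- to produce a $C^{\infty}$ representative with timelike gradient at every point; in the classical treatment (cf.\ \cite{HawEll}) this follows from the averaging just described, the fully rigorous smooth version being considerably more delicate. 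In short, the cone algebra and the monotonicity of $f$ are routine once the widened metric is available, and \textbf{the continuity of $f$ and the final smoothing are where the real difficulty lies.}
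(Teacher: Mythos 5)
Your proposal is correct in outline and attempts strictly more than the paper does, so it is worth comparing the two routes. For sufficiency, the paper does not argue directly with $C^{0}$-neighbourhoods as you do: after establishing monotonicity of $f$ along causal curves exactly as you have, it forms the cone-widened metric $\tilde g_{ab}=g_{ab}-t_at_b$ with $t_a=\nabla_af$ (equation \eqref{eqn:stca}), computes $\tilde g^{ab}\nabla_af\nabla_bf=t^at_a/(1-t^ct_c)>0$, and concludes that $f$ is still a time function for $\tilde g$, so $(\mathscr{M},\tilde g)$ has no closed causal curves; this is stable causality in the equivalent Wald/Geroch--Horowitz formulation that the paper states just before the theorem. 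That route buys exactly what you identified as the delicate point in yours: by exhibiting one explicit strictly-wider metric that is still causal, it never has to show that a \emph{uniform} $\epsilon$-ball in the $C^{0}$ open topology works, and so the non-compactness issue (that $g^{ab}\nabla_af\nabla_bf$ need not be bounded away from zero at infinity) never arises. Your direct argument can be repaired along the lines you indicate, but only because the $C^{0}$ open topology admits neighbourhoods of variable width (as in the bundle-of-metrics formulation of \cite{HawEll}); with a single fixed $\epsilon$ it would genuinely fail, so this is the one step you should not call routine without invoking that fact. For necessity, the paper proves nothing at all --- it states that the converse is harder and cites \cite{HawEll} --- whereas you correctly sketch the Geroch--Hawking construction (volume of chronological pasts with respect to a family of widened metrics, averaging to restore continuity, then smoothing) and correctly locate the two genuine difficulties; your sketch is not a proof, but it is an accurate map of the one the paper declines to give.
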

\begin{proof}
Suppose that there exists $f$ such that $t^a=\nabla^af$ is future-directed timelike. If we consider an arbitrary future-directed causal curve $\gamma$ with tangent vector $v^a$, we have $g_{ab}v^a\nabla^bf>0$ and thus $v(f)>0$. As consequence there can be no closed causal curves in $(\mathscr{M},g)$ since $f$ cannot return to its initial value because it is monotonically decreasing along the curve $\gamma$.\\
Now let $t^a=\nabla^a f$ and set $\tilde{g}_{ab}$ as in \eqref{eqn:stca}. It is easy to see that the inverse of $\tilde{g}_{ab}$ is given by 
\begin{equation*}
\tilde{g}^{ab}=g^{ab}+t^at^b/(1-t^ct_c).
\end{equation*}
Then we obtain 
\begin{equation*}
\tilde{g}^{ab}\nabla_af\nabla_bf=t^at_a+\left(t_at^a\right)^2/(1-t^ct_c)=t^at_a/(1-t^ct_c)>0.
\end{equation*}
Hence $\tilde{g}^{ab}\nabla_bf$ is a timelike vector in the metric $\tilde{g}_{ab}$. By repeating the previous argument it follows that the space-time $(\mathscr{M},\tilde{g})$ contains no closed causal curves. Thus $(\mathscr{M},g)$ is stably causal.
The converse is more complicated to show and its rigorous proof can be found in \cite{HawEll}.
\end{proof}
\begin{oss}$\\ $
The existence of function $f$ can be thought as an assignment of a sort of \lq cosmic time\rq\hspace{0.1mm} on the space-time, in the sense that it increases along every future-directed causal curve. In virtue of this property $f$ is called \textit{time-function}. 
\end{oss}
Hence, as we have just shown, it is always possible to introduce a meaningful concept of time in space-times that satisfy the stable causality condition. Although this result does perhaps gives confidence that the stable causality is the \lq right\rq\hspace{0.1mm} condition, it has to be pointed out that the time-function has a little direct physical significance. In particular, the spacelike surfaces given by $f=\mathrm{const}$ may be thought of as surfaces of simultaneity in space-time, though they are not unique. \\Anyway, there is an important corollary of theorem \ref{thm:tf}.
\begin{cor}$\\ $
If a space-time $(\mathscr{M},g)$ is stably causal, then $(\mathscr{M},g)$ is strongly causal.
\end{cor}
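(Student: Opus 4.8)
The plan is to use Theorem~\ref{thm:tf}. Stable causality gives a smooth function $f$ on $\mathscr{M}$ with $\nabla^{a}f$ everywhere future-directed timelike, so $f$ is \emph{strictly} increasing along every future-directed causal curve: the computation $v(f)=g_{ab}v^{a}\nabla^{b}f>0$ from the proof of Theorem~\ref{thm:tf} applies to any future-directed causal $v^{a}$, not merely timelike ones. What remains is purely local: for an arbitrary point $p$ and an arbitrary neighbourhood $W$ of $p$, I must produce an open causally convex neighbourhood $O$ of $p$ with $O\subset W$; by Remark~\ref{oss:stcaus} this is exactly what strong causality at $p$ demands, and establishing it for every $p$ gives the strong causality condition, proving the corollary. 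A slab $\{x:\,\abs{f(x)-f(p)}<\epsilon\}$ is automatically causally convex — monotonicity of $f$ keeps any $r$ with $q_{1}\ll r\ll q_{2}$ sandwiched between the values of $f$ at $q_{1}$ and $q_{2}$ — but it is far too big, so the idea is to whittle it down to $W$ by perturbing $f$ with a bump supported in $W$.

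Choose a smooth $\chi\geq 0$ with compact support contained in $W$, with $\chi(p)>0$, and with gradient small enough that both $f_{+}:=f+\chi$ and $f_{-}:=f-\chi$ still have everywhere future-directed timelike gradient; each is then, like $f$, strictly increasing along future-directed causal curves. Fix $\epsilon$ with $0<\epsilon<\chi(p)$ and put
\[
O:=\bigl\{x\in\mathscr{M}:f_{+}(x)>f(p)+\chi(p)-\epsilon\bigr\}\cap\bigl\{x\in\mathscr{M}:f_{-}(x)<f(p)-\chi(p)+\epsilon\bigr\}.
\]
This $O$ is open and contains $p$. It lies in $W$: on $O$ one has $f_{+}-f_{-}>2\chi(p)-2\epsilon>0$, and since $f_{+}-f_{-}=2\chi$ this forces $\chi>\chi(p)-\epsilon>0$ on $O$, so $O\subset\{\chi>0\}\subset W$. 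And $O$ is causally convex: if $q_{1},q_{2}\in O$ and $q_{1}\ll r\ll q_{2}$, monotonicity of $f_{+}$ gives $f_{+}(r)>f_{+}(q_{1})>f(p)+\chi(p)-\epsilon$ and monotonicity of $f_{-}$ gives $f_{-}(r)<f_{-}(q_{2})<f(p)-\chi(p)+\epsilon$, so $r\in O$. Thus $p$ has arbitrarily small causally convex neighbourhoods, which by Remark~\ref{oss:stcaus} is strong causality at $p$.

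The only step carrying real content is the choice of $\chi$: one must guarantee that adding $\pm\chi$ does not tilt $\nabla^{a}f$ out of the future timelike cone. This is where compactness of the support of $\chi$ enters — on that compact set $df$ lies in the open set of future-directed timelike covectors with a uniform margin, measured in the auxiliary Riemannian metric that exists by paracompactness, so any bump rescaled to have gradient below that margin does the job, and such bumps exist because $\mathscr{M}$ is a smooth paracompact manifold. Everything else is bookkeeping with the monotonicity of $f_{\pm}$. (One could instead argue through the limit-curve Theorem~\ref{thm:caus}: a failure of strong causality at $p$ produces causal curves that return arbitrarily close to $p$ while escaping a fixed neighbourhood, and a limit curve of such a family would be ``almost closed'', contradicting the strict monotonicity of $f$; but controlling where the escaping portions of those curves accumulate is delicate, so I would prefer the direct construction above.)
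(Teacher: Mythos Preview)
Your argument is correct. Both you and the paper start from the time function $f$ supplied by Theorem~\ref{thm:tf}, but the constructions diverge from there. The paper's proof is a one-line sketch: it asserts that inside any neighbourhood $O$ of $p$ one can find $O'\subset O$ so that the value of $f$ on any future-directed causal curve as it leaves $O'$ already exceeds the value of $f$ on any such curve as it enters $O'$, and then monotonicity prevents a second visit. The mechanism behind that choice of $O'$ is left implicit --- one typically fills it in using the local causal structure of a convex normal neighbourhood. Your route is more self-contained: you perturb $f$ by a compactly supported bump $\chi$ to manufacture two auxiliary time functions $f_\pm$, and then take $O$ to be the intersection of a superlevel set of $f_+$ with a sublevel set of $f_-$. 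This buys you an explicit, globally defined causally convex neighbourhood without any appeal to normal coordinates, at the cost of the extra analytic step of controlling $\nabla\chi$ on its compact support (which, as you note, is routine via an auxiliary Riemannian metric). Both arguments ultimately rest on the same principle --- monotonicity of a time function forbids causal returns --- but yours packages it differently, via Remark~\ref{oss:stcaus} and causal convexity rather than Definition~\ref{defn:stcaus} directly.
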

\begin{proof}
Let $f$ be a time-function on $(\mathscr{M},g)$. Given any $p\in\mathscr{M}$ and any open neighbourhood $O$ of $p$, we can choose an open neighbourhood $O'\subset O$ of $p$ such that the limiting value of $f$ along every future-directed causal curve leaving $O'$ is greater than the limiting value of $f$ on every future-directed causal curve entering $O$. Thus, since $f$ increases along every future-directed causal curve, no causal curve can enter $O'$ twice.
\end{proof}
This corollary puts the stable causality as the maximally restrictive causality condition which is acceptable on physical ground. 
\section{Domains of Dependence and Global Hyperbolicity\hspace{1cm}}
\label{sect:1.7}
All prerelativistic theories of space-time were governed by the concept of instantaneous action-at-distance. This means that to predict events at future points in space-time one has to know the state of the \textit{entire} universe at a certain time and assume some reasonable boundary condition at infinity. However, for relativity theory, we introduced in section \ref{sect:1.7} the postulate of local relativity that asserts, essentially, that locally two points $p$ and $q$ can be causally related if and only if there exists some causal curve joining them. Hence, so far, we have only discussed about whether or not an event $p$ can influence an event $q$ by means of a signal, i.e. what is called the \textit{domain of influence} \citep{GerHor,Ger71} of a point $p$. We ask now a slightly different, but related, question, i.e. whether information given on a certain set $S$ of space-time will determine the physical situation in some other region. It is immediately clear that the mathematics appropriate to answer this question will not be a relation between single points of space-time, but between regions.
\begin{defn}$\\ $
\label{defn:dd}
Let $(\mathscr{M},g)$ be a space-time and let $S\subset\mathscr{M}$ be an achronal set. Then
\begin{center}
$D^+[S]=\{\left.p\in\mathscr{M}\right|$ every past-inextendible causal curve through $p$ intersects $S\}$
\end{center}
is called the \textit{future domain of dependence of $S$} or the \textit{future Cauchy development of $S$} .
\end{defn}
The \textit{past domain of dependence of $S$} or the \textit{past Cauchy development of $S$} is defined similarly by interchanging the roles of past an future and is denoted by $D^-[S]$. Finally the \textit{total domain of dependence of $S$} or the \textit{total Cauchy development of $S$} is $D[S]=D^+[S]\cup D^-[S]$.
\begin{oss}$\\ $
This definition agrees with ones given in \cite{Pen67}, \cite{PenTDT}, \cite{Ger70b} and \cite{GerHor}, but note that \cite{Wald} and \cite{HawEll}  replace \lq timelike\rq\hspace{0.1mm} with \lq casual\rq\hspace{0.1mm}. If we denote the latter set by $\tilde{D}^+[S]$ it is easy to show that $\overline{\tilde{D}^+}[S]=D^+[S]$ \citep[see][pg.~202]{HawEll}. Hence, the only effect of such a change would be to eliminate certain boundary points from $D^+[S]$.
\end{oss}
\begin{oss}$\\ $
For simplicity one may also normally restrict attention to the case when $S$ is closed. This is due to the fact that if we knew data on an open set, that on its closure would follow by assuming, reasonably, continuity of the data. The \lq achronal\rq\hspace{0.1mm} property requested for $S$ is due to the fact that it does not appear to be generally useful to define $D^+[S]$ when $S$ is not achronal. Clearly we have $S\subset D^+[S]$.
\end{oss}
The set $D^+[S]$ is of interest because, if nothing can travel faster than light, then any signal sent to $p\in D^+[S]$ must have \lq registered\rq\hspace{0.1mm} on $S$. Thus, if we are given appropriate information about initial conditions on $S$, we should be able to predict what happens at any $p\in D^+[S]$. If a point $p\in I^+[S]$ but $p\notin D^+[S]$, then it should be possible to send a signal to $p$ without influencing $S$ and a knowledge of conditions on $S$ should not suffice to determine conditions at $p$. The physical meaning of $D[S]$ is that, roughly speaking, it represents the complete region of space-time throughout which the physical situation would be expected to be determined, given suitable data, i.e. information,  on an achronal $S$. All the above arguments are valid assuming that the local physic laws are of a suitable \lq deterministic\rq\hspace{0.1mm} and \lq causal\rq\hspace{0.1mm} nature. Some examples illustrating domains of dependence are given in Figure \ref{fig:1.17}.
\begin{figure}[h]
\begin{center}
\includegraphics[scale=0.35]{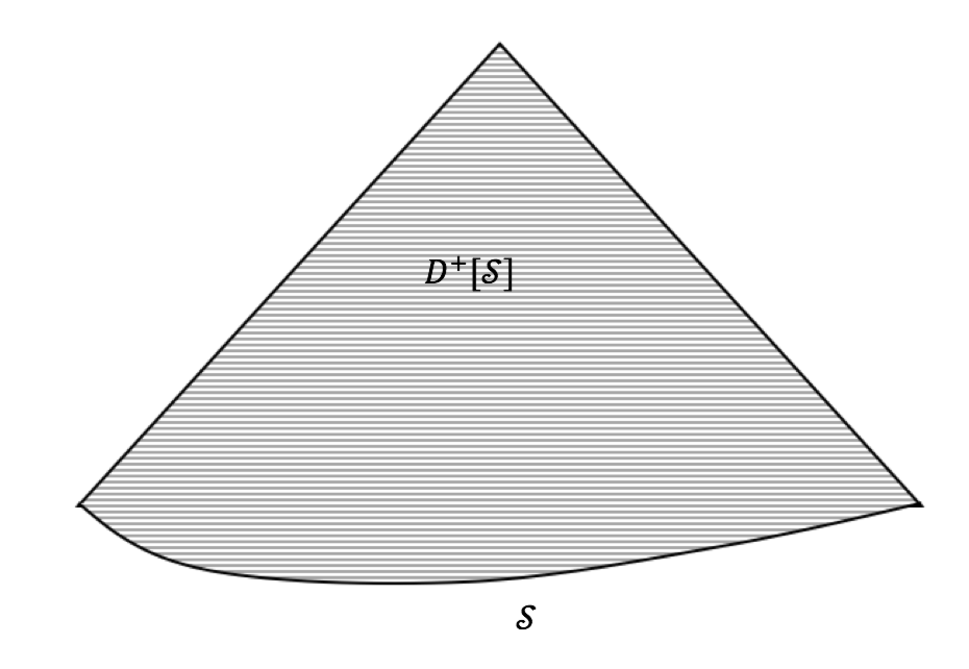}
\hspace{5mm}
\includegraphics[scale=0.35]{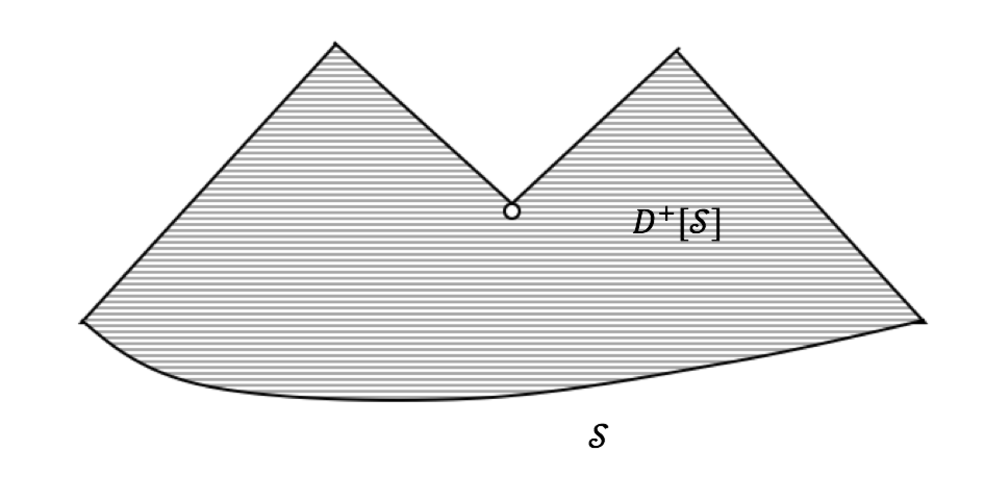}
\hspace{5mm}
\includegraphics[scale=0.35]{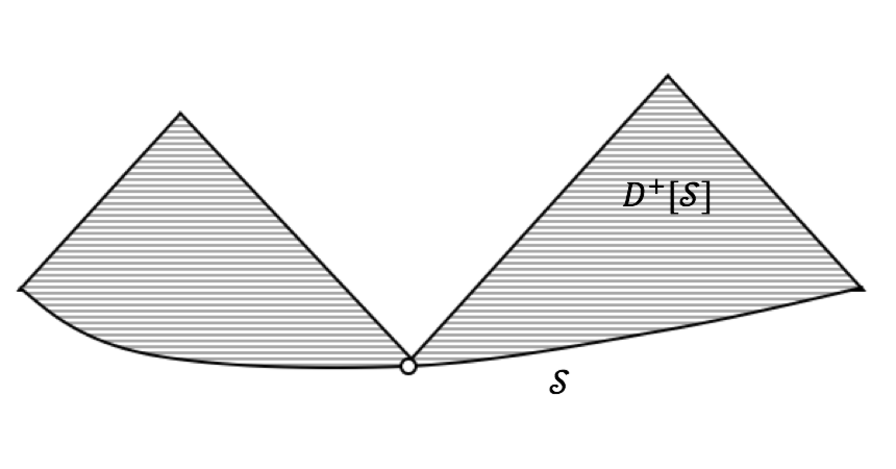}
\caption{On the left, the future domain of dependence of an achronal set $S$. On the right, the effect on $D^+[S]$ of removing a point from the manifold $\mathscr{M}$. Below, the effect on $D^+[S]$ of removing a point from the achronal set $S$.}
\label{fig:1.17}
\end{center}
\end{figure}
\begin{thm}
\label{thm:closed}
Let $S\subset\mathscr{M}$ be an achronal set. Then $\overline{D^+}[S]=D^+[S]\cup \overline{S}$. In particular, if $S$ is closed, so is $D^+[S]$. 
\end{thm}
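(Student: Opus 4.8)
The plan is to prove the two inclusions separately; the inclusion $D^+[S]\cup\overline S\subseteq\overline{D^+}[S]$ is the easy one. First note $S\subseteq D^+[S]$, since a past-inextendible timelike curve through a point $p\in S$ already meets $S$ at $p$ itself; taking closures gives $\overline S\subseteq\overline{D^+}[S]$, and as $D^+[S]\subseteq\overline{D^+}[S]$ trivially, the union is contained in $\overline{D^+}[S]$.

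The substance is the reverse inclusion $\overline{D^+}[S]\subseteq D^+[S]\cup\overline S$. I would take $p\in\overline{D^+}[S]$; if $p\in\overline S$ there is nothing to prove, so assume $p\notin\overline S$ and show $p\in D^+[S]$. Since $\overline S$ is closed, $p$ has a convex normal neighbourhood $V$ with $V\cap\overline S=\emptyset$ (Theorem \ref{thm:nge}, together with the fact that such neighbourhoods can be taken small, cf.\ \S\ref{sect:1.3}). Let $\gamma$ be any past-inextendible timelike curve through $p$, with $\gamma(0)=p$; I must check $\gamma$ meets $S$, so suppose it does not. By continuity there is $s>0$ with $\gamma([-s,0])\subseteq V$; put $x=\gamma(-s)$, so that $\gamma|_{[-s,0]}$ is a future-directed timelike curve from $x$ to $p$ and hence $x\ll p$. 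By Proposition \ref{thm:opensets} the set $I^+(x)$ is open and contains $p$, so $I^+(x)\cap V$ is an open neighbourhood of $p$; as $p\in\overline{D^+}[S]$, it contains some $q'\in D^+[S]$. Now $x\ll q'$ with both points in the convex set $V$, so there is a future-directed timelike geodesic $\sigma$ from $x$ to $q'$ lying in $V$. The curve $\gamma'$ that coincides with $\gamma$ on the part of $\gamma$ to the past of $x$ and then runs along $\sigma$ from $x$ to $q'$ is a past-inextendible timelike curve through $q'$ which avoids $S$: its $\gamma$-portion does by assumption, and its $\sigma$-portion lies in $V$, disjoint from $S$. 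This contradicts $q'\in D^+[S]$. Hence every past-inextendible timelike curve through $p$ meets $S$, i.e.\ $p\in D^+[S]$, which completes the reverse inclusion. In particular, if $S$ is closed then $\overline S=S\subseteq D^+[S]$, so the identity becomes $\overline{D^+}[S]=D^+[S]$, and $D^+[S]$ is closed.

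The step needing the most care is the surgery producing $\gamma'$: one has to secure an auxiliary point $x$ on $\gamma$ strictly to the past of some $q'\in D^+[S]$ lying arbitrarily close to $p$, and a connecting arc from $x$ to $q'$ that stays clear of $S$. The device is to do everything inside one fixed convex normal neighbourhood $V$ of $p$ with $V\cap\overline S=\emptyset$, and to exploit that $\gamma$ is timelike near $p$, which forces $x\ll p$ and hence $x\ll q'$ once $q'$ is close enough to $p$; convexity of $V$ then supplies the connecting timelike geodesic inside $V$. I am reading Definition \ref{defn:dd} with \emph{timelike} inextendible curves, as the Remark following it indicates; with the causal-curve variant $\tilde D^+$ the same plan goes through, but one must additionally dispose of the case in which the chosen inextendible curve through $p$ is a past-inextendible null geodesic missing $S$ (where Corollary \ref{cor:ngeo} would enter to rule out competing non-geodesic causal curves), and that is the only genuine extra obstacle.
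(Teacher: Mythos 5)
Your proof is correct and follows essentially the same route as the paper's: both reduce to showing that a past-inextendible timelike curve $\gamma$ from $p\notin\overline S$ which missed $S$ could be rerouted near $p$ into a past-inextendible timelike curve through a nearby $q'\in D^+[S]$ still missing $S$, a contradiction; your version of the surgery (working inside a convex normal neighbourhood disjoint from $\overline S$ and joining via a timelike geodesic) is in fact more explicit than the paper's, which merely asserts that a suitable $\gamma'$ joining $\gamma$ in $O$ exists. The one point to tighten is that you should choose $q'$ in the local chronological future $\left.I^+(x)\right|_{V}$ (still an open neighbourhood of $p$, witnessed by $\gamma|_{[-s,0]}$) rather than in $I^+(x)\cap V$, since only then does Theorem \ref{thm:nge} guarantee that the unique geodesic from $x$ to $q'$ inside $V$ is timelike.
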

\begin{proof}
Since $S\subset D^+[S]$, we have $D^+[S]\cup\overline{S}\subset\overline{D^+}[S]$. If we let $\{p_i\}$ be a sequence of point in $D^+[S]$ with accumulation point $p$, to prove the theorem we have to show that $p\in D^+[S]$ or in $\overline{S}$. Suppose that $p\notin\overline{S}$, so that there is some neighbourhood $O$ of $p$ which does not intersect $S$. Let $\gamma$ be any  past-inextendible timelike curve from $p$. Since the sequence $\{p_i\}$ accumulate at $p$, there is some $p_j$ and some timelike curve $\gamma'$ from $p_j$ into the past such that $\gamma'$ joins $\gamma$ in $O$ and thereafter coincides with $\gamma$. But $p_j\in D^+[S]$ and so $\gamma'$ intersects $S$. The intersection cannot occur in $O$, and so must take place after $\gamma$ and $\gamma'$ coincide. Consequently, $\gamma$ intersects $S$ and so $p\in D^+[S]$.
\end{proof}
\begin{thm}$\\ $
\label{thm:U}
Let $p$ be a point of $D^+[S]$. Then $I^-(p)\cap I^+[S]$ is contained in $D^+[S]$. 
\end{thm}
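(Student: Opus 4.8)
The plan is to take an arbitrary point $r \in I^-(p) \cap I^+[S]$ and show that every past-inextendible causal curve $\gamma$ through $r$ meets $S$. The idea is to use $r \ll p$ to splice a short timelike segment from $p$ down to $r$ onto $\gamma$, producing a past-inextendible causal curve through $p$; since $p \in D^+[S]$, that curve must meet $S$, and then one argues the intersection has to occur on the $\gamma$-part rather than on the added segment.

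First I would fix $r \in I^-(p) \cap I^+[S]$, so there is a future-directed timelike curve $\mu$ from $r$ to $p$, and also (from $r \in I^+[S]$) a point $s \in S$ with $s \ll r$. Let $\gamma$ be any past-inextendible causal curve with future endpoint $r$ (by the remark following Definition \ref{defn:dd} one may take causal curves here, or timelike ones in the convention of that definition; the argument is the same). Now form the curve $\tilde\gamma$ obtained by first traversing $\mu$ backward from $p$ to $r$ and then following $\gamma$ into the past. By the concatenation argument of Theorem \ref{thm:prec} (applied to curves rather than just to the relation), $\tilde\gamma$ is a causal curve; it has future endpoint $p$, and it is past-inextendible because $\gamma$ already is. Since $p \in D^+[S]$, the curve $\tilde\gamma$ must intersect $S$.

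The key step is to locate that intersection. The segment $\mu$ (reversed) runs from $p$ to $r$ through points of $I^-(p) \cap I^+(r)$; I claim it cannot meet $S$. Indeed any point $x$ on the interior of $\mu$ satisfies $x \ll p$ and $r \ll x$, hence $s \ll r \ll x$ gives $s \ll x$ with $s \in S$, so $x \in I^+[S] \cap \dots$ — more to the point, I would show directly that two points of $\mu$ are chronologically related to points of $S$ in a way contradicting achronality of $S$: since $s \ll r$ and $r \preceq x$ along $\mu$, if $x \in S$ then $s \ll x$ with both $s, x \in S$, contradicting $I^+[S] \cap S = \emptyset$. (The endpoint $r$ itself lies in $I^+[S]$, so $r \notin S$ for the same reason.) Therefore the intersection of $\tilde\gamma$ with $S$ must lie on the $\gamma$-portion, i.e. $\gamma$ itself meets $S$. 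Since $\gamma$ was an arbitrary past-inextendible causal curve through $r$, we conclude $r \in D^+[S]$, and since $r$ was arbitrary, $I^-(p) \cap I^+[S] \subset D^+[S]$.

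The main obstacle is the bookkeeping at the splice point: one must be careful that $\tilde\gamma$ genuinely inherits past-inextendibility from $\gamma$ (which is immediate once $\gamma$ is a terminal segment of $\tilde\gamma$) and, more delicately, that the added segment $\mu$ contains no point of $S$, including its endpoints. This is exactly where achronality of $S$ together with $r \in I^+[S]$ is used: every point on $\mu$ lies in the chronological future of the same $s \in S$, so none of them can belong to $S$. Once that is settled the rest is just the defining property of $D^+[S]$ applied to $\tilde\gamma$.
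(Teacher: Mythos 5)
Your proof is correct and follows essentially the same route as the paper's: extend the given past-inextendible curve through $r$ forward to $p$ along a timelike segment, invoke $p\in D^+[S]$ to force an intersection with $S$, and use achronality of $S$ together with $r\in I^+[S]$ to push that intersection onto the original curve. Your version merely spells out more explicitly why no point of the spliced segment can lie in $S$, which the paper leaves implicit.
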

\begin{proof}
Let $q\in I^-(p)\cap I^+[S]$, and let $\gamma$ be a past-inextendible timelike curve from $q$. To prove the theorem we must show that $\gamma$ intersects $S$. Since $q\in I^-(p)$, $\gamma$ can surely be extended to the future to $p$. This new curve, call it $\gamma'$, since it passes through $p\in D^+[S]$ by hypothesis, must intersect $S$. But $q\in I^+[S]$, and so, since $S$ is achronal, $\gamma'$ must intersect $S$ at a point to the past of $q$. Thus $\gamma$ intersects $S$.
\end{proof}
From the previous theorem, by taking the union over all $p\in D^+[S]$ we obtain the following
\begin{cor}$\\ $
\label{cor:int}
$$\mathrm{int}[D^+[S]]=I^-[D^+[S]]\cap I^+[S]$$
\end{cor}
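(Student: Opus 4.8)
The plan is to establish the two inclusions of the set equality separately. The inclusion $I^-[D^+[S]]\cap I^+[S]\subseteq\mathrm{int}[D^+[S]]$ is the one the phrase ``by taking the union over all $p\in D^+[S]$'' refers to, and it is essentially immediate from Theorem~\ref{thm:U}; the reverse inclusion $\mathrm{int}[D^+[S]]\subseteq I^-[D^+[S]]\cap I^+[S]$ requires a short point-chasing argument.

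First I would prove $I^-[D^+[S]]\cap I^+[S]\subseteq\mathrm{int}[D^+[S]]$. Distributing the union in the definition of $I^-[D^+[S]]$ gives
$$I^-[D^+[S]]\cap I^+[S]=\bigcup_{p\in D^+[S]}\bigl(I^-(p)\cap I^+[S]\bigr).$$
By Theorem~\ref{thm:U}, for each $p\in D^+[S]$ the set $I^-(p)\cap I^+[S]$ is contained in $D^+[S]$, and it is open, being the intersection of the open sets $I^-(p)$ and $I^+[S]$ (Proposition~\ref{thm:opensets} and the remark following it). Hence the left-hand side is a union of open subsets of $D^+[S]$, so it is itself an open subset of $D^+[S]$, and therefore lies in $\mathrm{int}[D^+[S]]$.

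For the reverse inclusion, fix $q\in\mathrm{int}[D^+[S]]$ and pick an open neighbourhood $O$ of $q$ with $O\subseteq D^+[S]$. Any past-directed timelike curve issuing from $q$ remains in $O$ for small parameter values, so $I^-(q)\cap O\neq\emptyset$; similarly $I^+(q)\cap O\neq\emptyset$. Choosing $p\in I^+(q)\cap O\subseteq D^+[S]$ shows $q\in I^-(p)\subseteq I^-[D^+[S]]$. To show $q\in I^+[S]$, choose $q''\in I^-(q)\cap O\subseteq D^+[S]$ and let $\gamma$ be any past-inextendible timelike curve through $q''$. Since $q''\in D^+[S]$, $\gamma$ meets $S$ at some point $r$; as $r$ lies on a past-directed timelike curve through $q''$, either $r=q''$ or $r\ll q''$, and in both cases $r\ll q$ (using $q''\ll q$ and Theorem~\ref{thm:prec}). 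Hence $q\in I^+(r)\subseteq I^+[S]$, and so $q\in I^-[D^+[S]]\cap I^+[S]$.

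The main obstacle is the step $\mathrm{int}[D^+[S]]\subseteq I^+[S]$: a point of the interior could a priori lie on $S$ itself rather than strictly to its chronological future, so one cannot apply the defining property of $D^+[S]$ at $q$ directly to conclude $q\in I^+[S]$. The resolution is exactly to use that an open neighbourhood of $q$ sits inside $D^+[S]$, which lets one descend to a point $q''\ll q$ still in $D^+[S]$ and then push the intersection with $S$ strictly into the past of $q$. Everything else is a routine use of the openness of chronological futures and pasts and of the transitivity of $\ll$.
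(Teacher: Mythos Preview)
Your argument is correct and, for the inclusion $I^-[D^+[S]]\cap I^+[S]\subseteq\mathrm{int}[D^+[S]]$, it is exactly what the paper intends by ``taking the union over all $p\in D^+[S]$'': each $I^-(p)\cap I^+[S]$ is open and, by Theorem~\ref{thm:U}, contained in $D^+[S]$, so the union is an open subset of $D^+[S]$.

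The paper does not spell out the reverse inclusion at all; it simply states the corollary after that one sentence. Your point-chasing argument for $\mathrm{int}[D^+[S]]\subseteq I^-[D^+[S]]\cap I^+[S]$ is the natural one and fills the gap: passing to a point $p\in I^+(q)\cap O$ gives $q\in I^-[D^+[S]]$ immediately, and descending to $q''\in I^-(q)\cap O\subseteq D^+[S]$ before invoking the defining property of $D^+[S]$ is precisely the move needed to force the intersection with $S$ strictly into the chronological past of $q$. (To make the step ``$r=q''$ or $r\ll q''$'' watertight, it suffices to take $\gamma$ with future endpoint $q''$, so that every other point of $\gamma$ lies in $I^-(q'')$; this is implicit in your phrasing.) So your proof is in fact more complete than the paper's, which treats the second inclusion as self-evident.
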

\begin{defn}$\\ $
The \textit{future Cauchy horizon} of an achronal set $S$ is defined as
\begin{equation*}
H^+[S]=D^+[S]-I^-[D^+[S]].
\end{equation*}
\end{defn}
Similarly we have an analogous definition for the \textit{past Cauchy horizon}, $H^-[S]$, and for the \textit{total Cauchy horizon}, $H[S]=H^+[S]\cup H^-[S]$. \\
\begin{figure}[h]
\begin{center}
\includegraphics[scale=0.35]{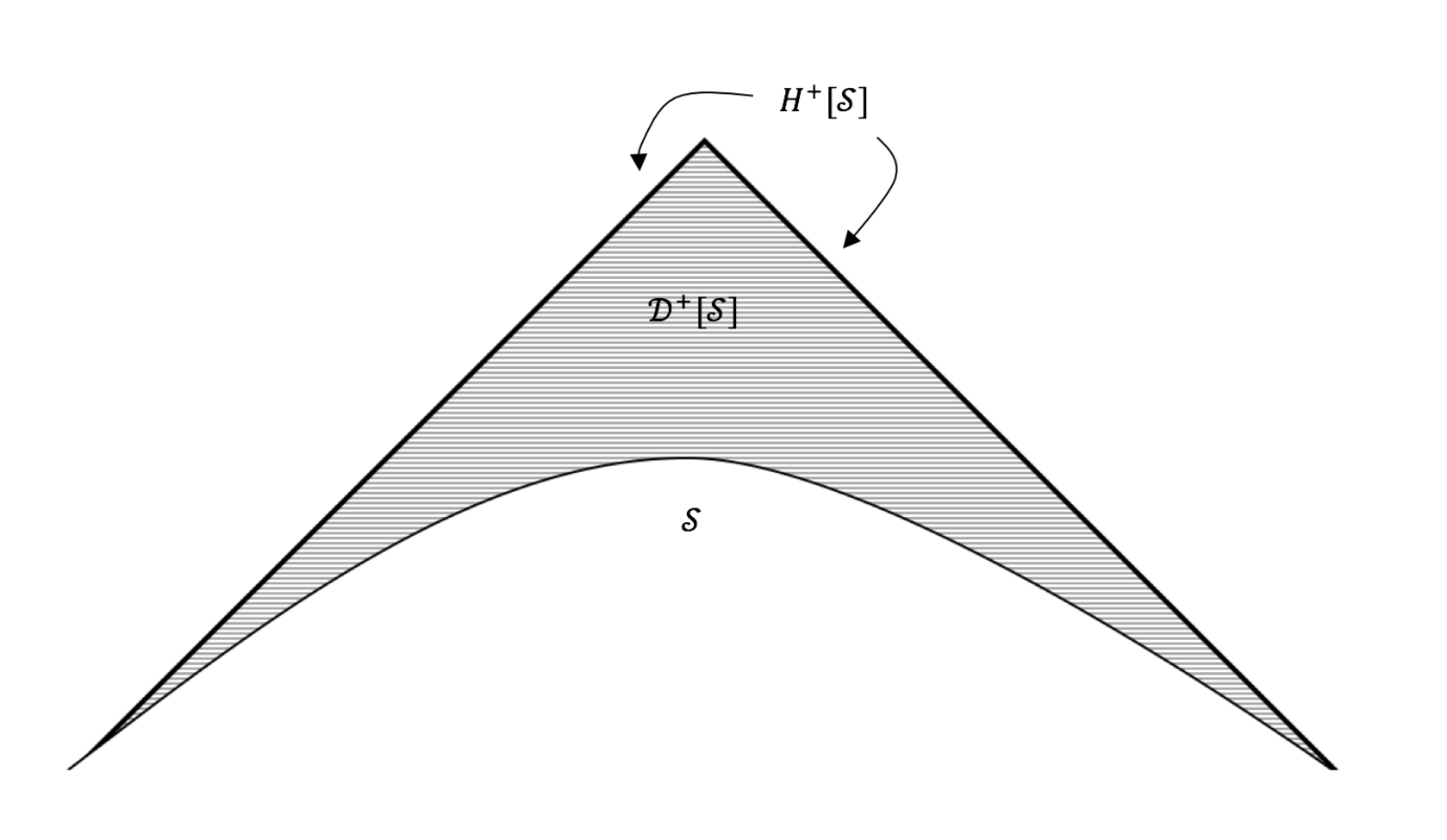}
\caption{If $S$ is the spacelike hypersurface $t=-(x^2+y^2+z^2+1)^{1/2}$ in Minkowski space, then $D^+[S]=\{\left.t\right| (x^2+y^2+z^2)^{1/2}\leq-t\leq (x^2+y^2+z^2+1)^{1/2}\}$ and $H^+[S]=\{\left.t\right| t=-(x^2+y^2+z^2)^{1/2}\}$. We also have $D^-[S]=\{\left.t\right| t\leq -(x^2+y^2+z^2+1)^{1/2}\}$ and $H^-[S]=\emptyset$.}
\label{fig:1.18}
\end{center}
\end{figure}
\begin{thm}$\\ $
Let $S\subset\mathscr{M}$ be a closed and achronal set. Then $H^+[S]$ is closed and achronal.
\end{thm}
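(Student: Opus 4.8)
The plan is to read the whole statement off the definition $H^+[S]=D^+[S]-I^-[D^+[S]]$, using only two facts already available: that $D^+[S]$ is closed whenever $S$ is closed (Theorem \ref{thm:closed}), and that the chronological past of \emph{any} subset of $\mathscr{M}$ is open (Proposition \ref{thm:opensets} together with the remark that $I^{\pm}[S]$, being a union of sets $I^{\pm}(p)$, is open).

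\textbf{Closedness.} First I would observe that $I^-[D^+[S]]=\bigcup_{p\in D^+[S]}I^-(p)$ is open, being a union of open sets, so that $\mathscr{M}-I^-[D^+[S]]$ is closed. Since $S$ is closed, Theorem \ref{thm:closed} tells us $D^+[S]$ is closed. Hence
\[
H^+[S]=D^+[S]-I^-[D^+[S]]=D^+[S]\cap\bigl(\mathscr{M}-I^-[D^+[S]]\bigr)
\]
is an intersection of two closed sets, and therefore closed.

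\textbf{Achronality.} Here I would argue by contradiction. Suppose there were points $p,q\in H^+[S]$ with $p\ll q$. Since $H^+[S]\subset D^+[S]$, in particular $q\in D^+[S]$, so $I^-(q)\subset I^-[D^+[S]]$. But $p\ll q$ means $p\in I^-(q)$, whence $p\in I^-[D^+[S]]$, contradicting $p\in H^+[S]=D^+[S]-I^-[D^+[S]]$. Thus no two points of $H^+[S]$ are chronologically related, i.e.\ $H^+[S]$ is achronal.

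There is no real obstacle here: the only care needed is to invoke the correct earlier results — openness of chronological pasts for the first half and closedness of $D^+[S]$ (Theorem \ref{thm:closed}) for the second half. It is worth noting that the achronality half uses nothing beyond the very definition of $H^+$ and holds regardless of whether $S$ is closed or achronal; the hypotheses on $S$ enter only to guarantee that $D^+[S]$, and hence $H^+[S]$, is closed.
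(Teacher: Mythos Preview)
Your argument is correct on both counts: the closedness follows exactly as you say from Theorem~\ref{thm:closed} and the openness of $I^{-}[D^+[S]]$, and the achronality drops out of the very definition of $H^+[S]$ by the contradiction you give. The paper itself does not supply a proof here but simply refers the reader to Wald, p.~203; the argument found there is essentially the one you have written, so there is no divergence to discuss.
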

The proof can be found in
{Wald}, pg. 203.
The future Cauchy Horizon may be described as the future boundary of $D^+[S]$ and it marks the limit of the region that can be predicted from knowledge of data on $S$.
\begin{thm}$\\ $
\label{thm:S}
Let $p\in D^+[S]-H^+[S]$. Then every past-inextendible causal curve from $p$ intersects $S$.
\end{thm}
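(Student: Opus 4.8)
The plan is to reduce the causal statement to the timelike one built into Definition~\ref{defn:dd}. If $p\in S$ there is nothing to prove (the curve meets $S$ at $p$), so assume $p\notin S$. From $p\in D^+[S]$ and $p\notin H^+[S]=D^+[S]-I^-[D^+[S]]$ I get $p\in I^-[D^+[S]]$: there is a point $q\in D^+[S]$ with $p\ll q$, and I fix a future-directed timelike curve $\tau$ from $p$ to $q$. Two preliminary observations will be needed. First, choosing a past-inextendible timelike curve with future endpoint $p$ (extend a past-directed timelike geodesic from $p$), it must meet $S$ since $p\in D^+[S]$, and because $p\notin S$ the meeting point $s$ satisfies $s\ll p$; hence $p\in I^+[S]$, so $\tau\subseteq I^+(p)\cup\{p\}\subseteq I^+[S]$ by transitivity (Theorem~\ref{thm:prec}), and since $S$ is achronal $I^+[S]$ — and therefore $\tau$ — is disjoint from $S$. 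Second, by paracompactness $\mathscr{M}$ carries a Riemannian metric $h$, which I use below only as an auxiliary yardstick.

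Now let $\lambda$ be a past-inextendible causal curve with future endpoint $p$; reparametrising, take its domain to be $[0,\infty)$ with $\lambda(0)=p$, and suppose for contradiction that $\lambda$ never meets $S$ (here I use that $S$ is taken closed). The heart of the argument is to manufacture a past-inextendible \emph{timelike} curve $\mu$ with future endpoint $q$ that ``shadows'' $\tau\cup\lambda$ so tightly that it, too, misses $S$ — contradicting $q\in D^+[S]$. I build $\mu$ as an infinite concatenation $\mu^{(1)}*\mu^{(2)}*\cdots$. For the first piece, fix $n_1>0$: since $\lambda(n_1)\preceq\lambda(0)=p\ll q$, the concatenation of $\lambda|_{[0,n_1]}$ (run from $\lambda(n_1)$ to $p$) with $\tau$ is a causal curve from $\lambda(n_1)$ to $q$ carrying a genuine timelike segment. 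By the local deformation already used in the discussion of Corollary~\ref{cor:ngeo} and Theorem~\ref{thm:nge} — deform the curve to a timelike one inside each of a finite chain of convex normal neighbourhoods covering it, at the cost of pushing its past endpoint slightly into its own chronological future — this may be replaced by a timelike curve $\mu^{(1)}$ from a point $w_1\in I^+(\lambda(n_1))$ near $\lambda(n_1)$ to $q$, contained in any preassigned neighbourhood of $\tau\cup\lambda|_{[0,n_1]}$. I choose that neighbourhood to be an $h$-tube of radius less than $1$ and less than $\tfrac12\,\mathrm{dist}_h(\tau\cup\lambda|_{[0,n_1]},S)$, which is positive since $\tau\cup\lambda|_{[0,n_1]}$ is compact and disjoint from the closed set $S$. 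Inductively, given $w_{k-1}\in I^+(\lambda(n_{k-1}))$ close to $\lambda(n_{k-1})$, fix $n_k>n_{k-1}$; since $\lambda(n_k)\preceq\lambda(n_{k-1})\ll w_{k-1}$, the same deformation produces a timelike $\mu^{(k)}$ from $w_{k-1}$ to a point $w_k\in I^+(\lambda(n_k))$ near $\lambda(n_k)$, lying in an $h$-tube about $\lambda|_{[n_{k-1},n_k]}$ of radius less than $1/k$ and less than $\tfrac12\,\mathrm{dist}_h(\lambda|_{[n_{k-1},n_k]},S)$.

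Let $n_k\to\infty$ and set $\mu=\mu^{(1)}*\mu^{(2)}*\cdots$, parametrised on $[0,\infty)$; it is a past-directed timelike curve with future endpoint $q$. It misses $S$, since each $\mu^{(k)}$ lies within $h$-distance $\tfrac12\,\mathrm{dist}_h(\cdot,S)$ of a set disjoint from $S$. It is past-inextendible: were $y$ a past endpoint of $\mu$, then for every $\varepsilon>0$ some tail of $\mu$ would lie in the $h$-ball $B(y,\varepsilon)$, and since the $k$-th piece shadows $\lambda|_{[n_{k-1},n_k]}$ within $1/k$, a corresponding tail $\lambda|_{[N,\infty)}$ would lie in $B(y,2\varepsilon)$, making $y$ a past endpoint of $\lambda$ — impossible. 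Thus $\mu$ is a past-inextendible timelike curve through $q\in D^+[S]$ that avoids $S$, contradicting Definition~\ref{defn:dd}; hence $\lambda$ meets $S$.

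The one genuinely delicate point is the deformation/shadowing step: it requires the \emph{quantitative} form of the fact underlying Corollary~\ref{cor:ngeo} — that a causal curve carrying a timelike sub-segment at its future end can be pushed, inside an arbitrarily small neighbourhood, to a timelike curve whose past endpoint is displaced only infinitesimally into the chronological future — together with the bookkeeping (the shrinking radii $1/k$ and the distances to $S$) that makes the infinite concatenation simultaneously timelike, $S$-avoiding and past-inextendible. The remaining ingredients — reparametrising a past-inextendible causal curve onto $[0,\infty)$, the existence of the auxiliary metric $h$, and the mixed transitivity $r\preceq p\ll q\Rightarrow r\ll q$ — are routine and follow from material established earlier in the chapter.
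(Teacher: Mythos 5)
Your proof is correct, and its engine --- a past-inextendible \emph{timelike} curve from a point $q\in D^+[S]\cap I^+(p)$ that shadows the given causal curve within a prescribed Riemannian distance, built piece by piece through local deformations in convex normal neighbourhoods --- is exactly the engine of the paper's proof, where that curve is called $\gamma'$ and the shadowing conditions are \eqref{eqn:req}. Where you genuinely diverge is in how the conclusion is extracted. You argue by contradiction: assuming $\lambda$ misses the closed set $S$, you impose the extra requirement that each piece of the shadow stay within half the Riemannian distance from the corresponding compact arc of $\tau\cup\lambda$ to $S$, so that the shadow is itself a past-inextendible timelike curve through $q$ avoiding $S$, contradicting $q\in D^+[S]$. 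The paper argues directly: its shadow $\gamma'$ \emph{does} meet $S$, the condition $\gamma(t)\in I^-[\gamma'(t)]$ then forces $\gamma$ into the past of $S$, and the first point $r$ at which $\gamma$ leaves $D^+[S]$ is shown to lie on $S$ by combining $\overline{D^+}[S]=D^+[S]\cup\overline{S}$ (Theorem \ref{thm:closed}) with Theorem \ref{thm:U}. Your route buys a cleaner finish --- no exit-point analysis and no appeal to those two theorems --- at the price of the distance-to-$S$ bookkeeping and an explicit closedness hypothesis on $S$ (the paper's exit-point argument is what absorbs the $\overline{S}$ versus $S$ distinction). Both proofs ultimately rest on the same quantitative deformation lemma that you rightly single out as the delicate step; the paper does not prove it either, simply asserting the existence of a $\gamma'$ satisfying \eqref{eqn:req}.
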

\begin{proof}
Let $\gamma(t)$ be a past-inextendible causal curve from $p=\gamma(0)$, with $t\in[0,+\infty)$. Since $\mathscr{M}$ is a space-time, we can always introduce a Riemannian metric over $\mathscr{M}$ and we denote the associated distance by $d$. Since $p\in D^+[S]-H^+[S]$ we can consider a point $q\in D^+[S]\cap I^+(p)$ such that $d(p,q)<1$. We have that $\gamma(1)\in I^-(q)$ and hence we may find a past-directed timelike curve $\gamma'(t)$ for $t\in[0,1]$ from $q$ that satisfies the following requirements:
\begin{equation}
\label{eqn:req}
\gamma(t)\in I^-[\gamma'(t)],\hspace{1.5cm}d(\gamma(t),\gamma'(t))<(1+t)^{-1},\hspace{1.5cm}t\in[0,1].
\end{equation}
Since $\gamma(2)\in I^-(\gamma'(1))$, in virtue of the above hypothesis, we may extend $\gamma'$ so that this extension is timelike and keeps on satisfying the first two requirements of \eqref{eqn:req}, but for $t\in[0,2]$. Continuing in this way we can construct a timelike curve $\gamma'$ always subjected to the first two conditions of \eqref{eqn:req}, whose parameter $t\in[0,+\infty)$ and without past endpoint, otherwise for the second of \eqref{eqn:req} $\gamma$ would have such an endpoint too. Since $\gamma'$ is a timelike curve from $q\in D^+[S]$ it follows that it must intersect $S$. For the first of \eqref{eqn:req} there must be some points of $\gamma$ in the past of  $S$. Let $r$ be the first of those points, at which $\gamma$ leaves $D^+[S]$.
\begin{figure}[h]
\begin{center}
\includegraphics[scale=0.3]{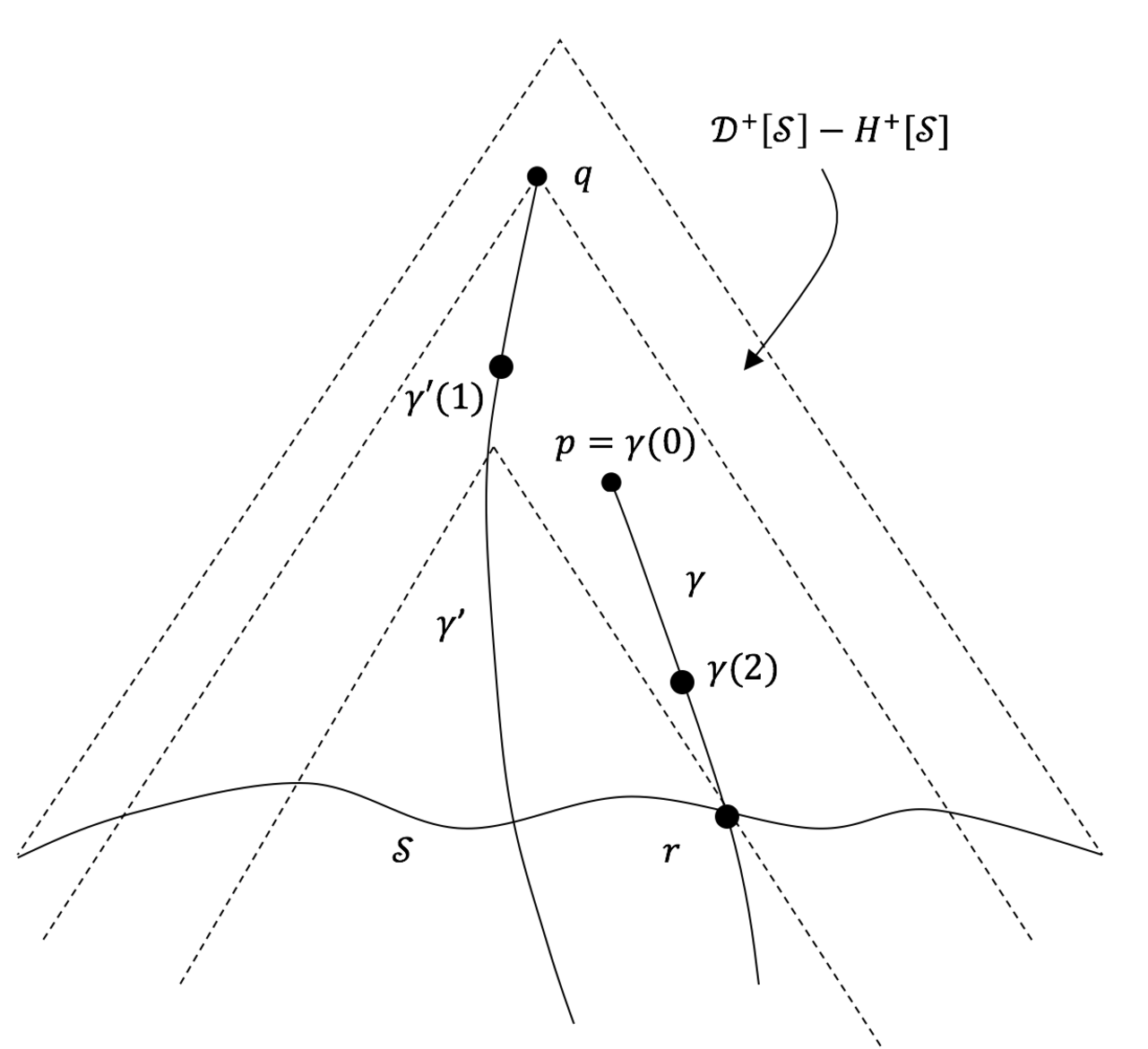}
\caption{A space-time diagram showing the construction used in theorem \ref{thm:S}.}
\label{fig:1.21}
\end{center}
\end{figure}
Since $\overline{D^+}[S]=D^+[S]\cup\overline{S}$ (theorem \ref{thm:S}), $r$ is either in $\overline{S}$ or in $D^+[S]$. The point $r$ cannot lie in $I^+[S]$, otherwise in virtue of theorem \ref{thm:U} $r\in I^-(q)\cap I^+[S]\subset\mathrm{int}[D^+[S]]$ and it would not be leaving $D^+[S]$. Then $r$ must be in $\overline{S}$. Finally $r$ must be in $S$ itself, for otherwise a timelike curve from $q$ and passing through $r$ would violate the fact that $q\in D^+[S]$. 
\end{proof}
The future Cauchy horizon will intersect $S$ if $S$ is null or if $S$ has an \lq edge\rq , has shown in Figure \ref{fig:1.22} 
\begin{figure}[h]
\begin{center}
\includegraphics[scale=0.281]{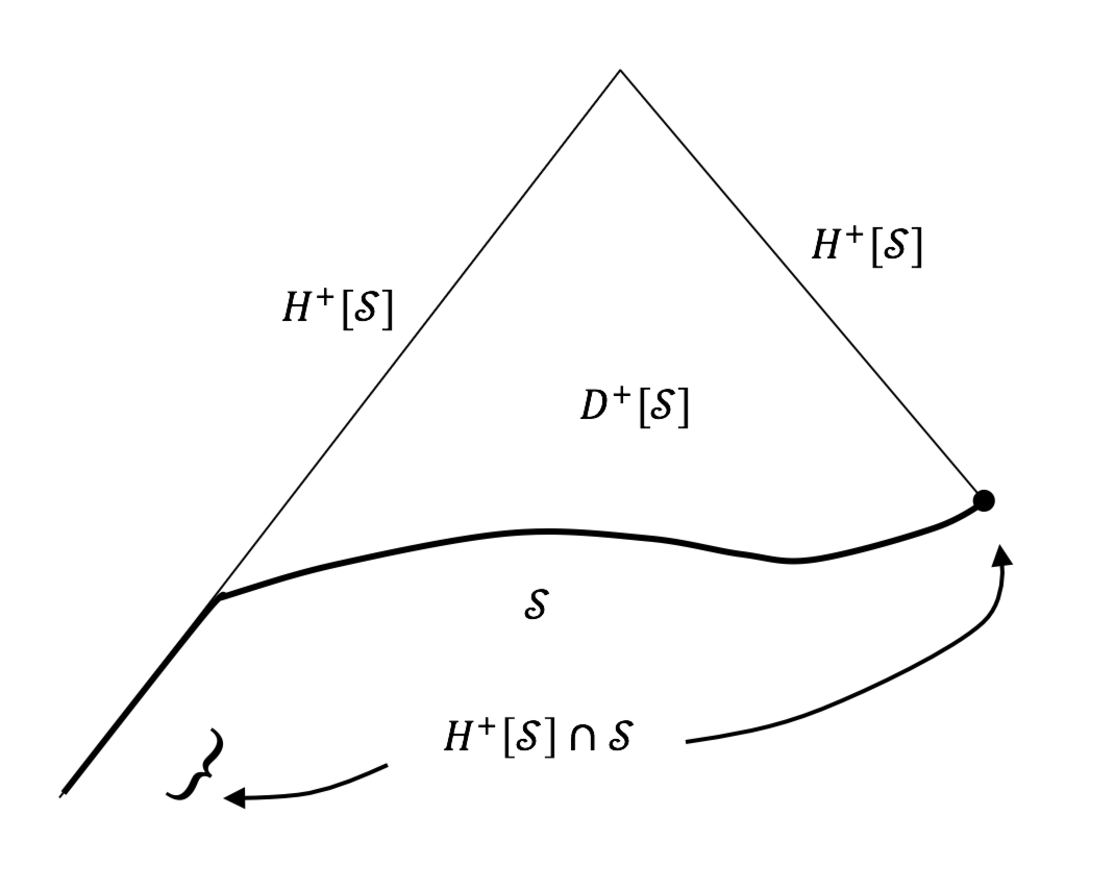}
\caption{The intersection between $S$ and $H^+[S]$.}
\label{fig:1.22}
\end{center}
\end{figure}
\\To make this precise we give the following. 
\begin{defn}$\\ $ 
Let $S$ be an achronal set. The \textit{edge of $S$}, $\mathrm{edge}[S]$, is defined as the set of points $p\in\overline{S}$ such that for every open neighbourhood $O$ of $p$ contains $q\in I^+(p)$ and $r\in I^-(p)$ and a timelike curve $\lambda$ from $r$ to $q$ which does not intersect $S$.
\begin{figure}[h]
\begin{center}
\includegraphics[scale=0.35]{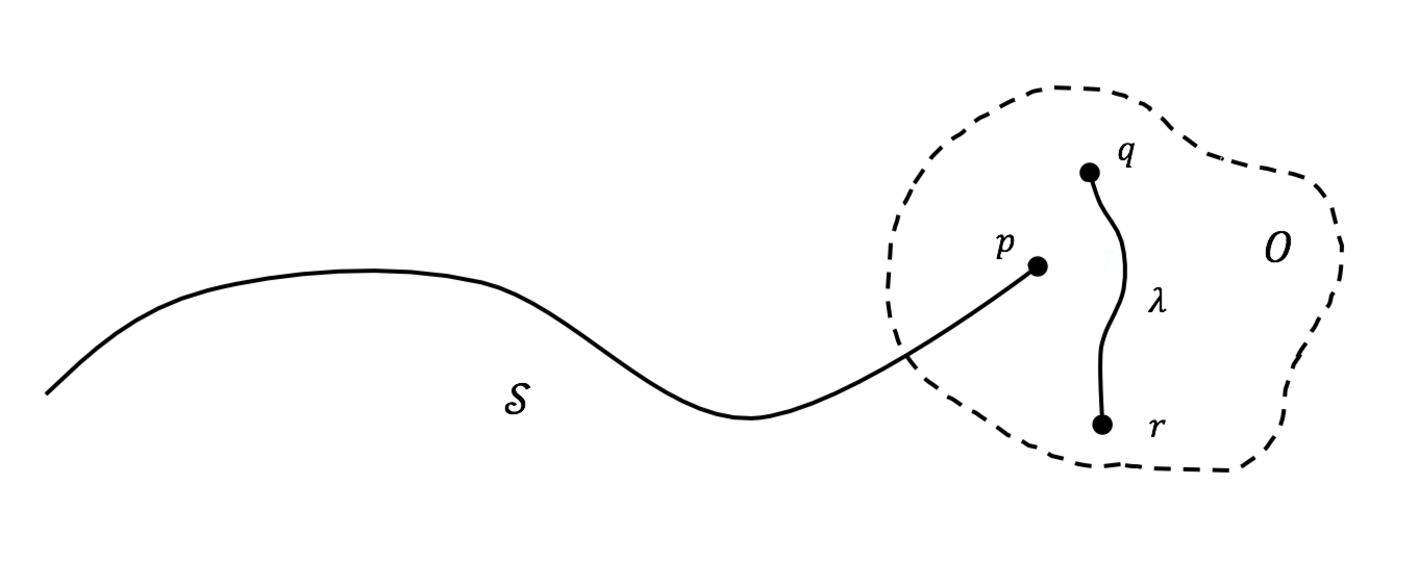}
\caption{The edge of a closed and schronal set $S$.}
\label{fig:1.19}
\end{center}
\end{figure}
\end{defn}
The intuitive meaning of $\mathrm{edge}[S]$ is illustrated in Figure \ref{fig:1.20}. Clearly we have $\overline{S}-S\subset\mathrm{edge}[S]$. We can think $\mathrm{edge}[S]$, roughly speaking, as the set of limit points of $S$ not in $S$, together with the set of points in whose vicinity $S$ fails to be a topological 3-manifold, i.e. those points of $S$ at which $S$ is not locally homeomorphic to $\mathbb{R}^3$.
\begin{figure}[h]
\begin{center}
\includegraphics[scale=0.35]{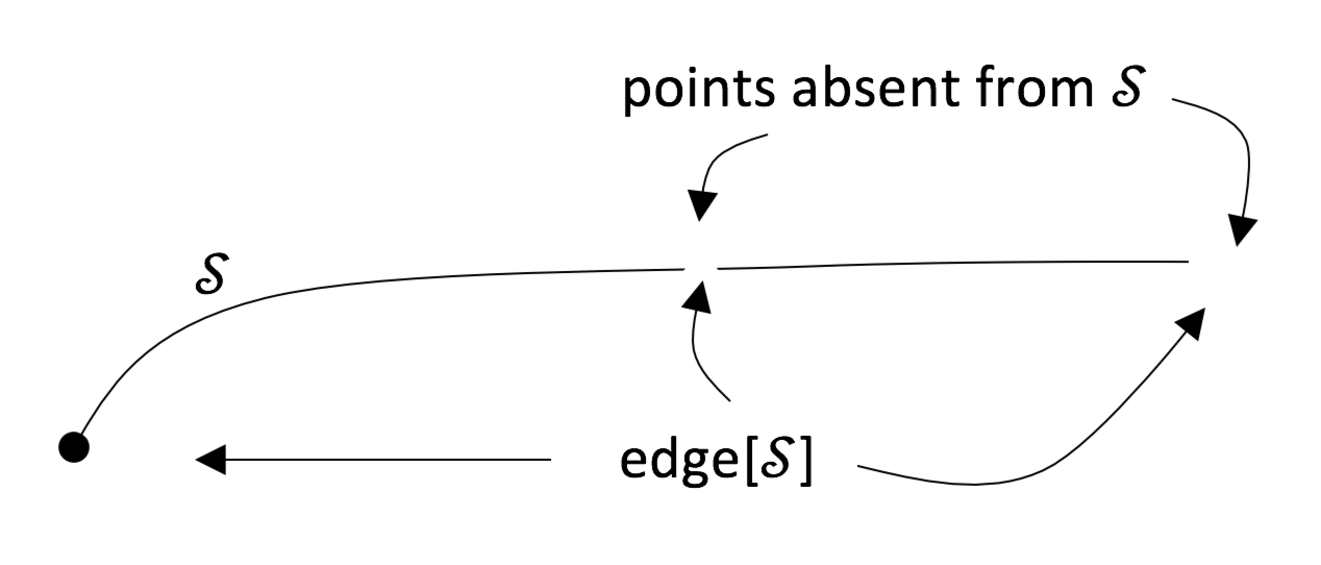}
\caption{An example of an achronal set $S$ and its edge. Note that here $S$ is not closed.}
\label{fig:1.20}
\end{center}
\end{figure}
\\The next theorem, whose proof is similar to that of theorem \ref{thm:embman} makes this statement more precise.
\begin{thm}$\\ $
\label{thm:edge}
If $S$ is an achronal set with $\mathrm{edge}[S]=\emptyset$, then $S$ is a three-dimensional, embedded $C^0$ topological submanifold.
\end{thm}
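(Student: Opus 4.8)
The plan is to adapt the proof of Theorem~\ref{thm:embman}: around each point of $S$ I will produce a $C^0$ chart by showing that, in suitable normal coordinates, $S$ is locally the graph of a continuous function over a spacelike slice, the graph being taken along the integral curves of a timelike coordinate vector field. A first observation is that, since $\overline{S}-S\subseteq\mathrm{edge}[S]=\emptyset$, the set $S$ is closed; this will be needed for the continuity step at the end.

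Fix $p\in S$. Using Theorem~\ref{thm:nge} choose a convex normal neighbourhood $\mathscr{U}=\{|x^{\mu}|<\epsilon\}$ of $p$ with normal coordinates $(x^0,x^1,x^2,x^3)$ arranged so that $\partial/\partial x^0$ is timelike and future-directed on $\mathscr{U}$. Since $p\notin\mathrm{edge}[S]$, after shrinking $\epsilon$ I may assume $\mathscr{U}\subseteq O$ for a neighbourhood $O$ of $p$ with the property that \emph{every} timelike curve contained in $O$ running from a point of $I^-(p)\cap O$ to a point of $I^+(p)\cap O$ meets $S$. Next pick $\eta\in(0,\epsilon)$; the points $(\pm\eta,0,0,0)$ lie in the open sets $I^{\pm}(p)$ (the $x^0$-axis segment joining them to $p$ is timelike), so there is $\delta\in(0,\epsilon)$ such that $(\pm\eta,\vec{x})\in I^{\pm}(p)$ whenever $|x^i|<\delta$ for $i=1,2,3$, with $\vec{x}=(x^1,x^2,x^3)$. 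Now for each such $\vec{x}$ the vertical segment $\sigma_{\vec{x}}=\{(t,\vec{x}):|t|\le\eta\}$ is a future-directed timelike curve in $\mathscr{U}\subseteq O$ from $(-\eta,\vec{x})\in I^-(p)$ to $(\eta,\vec{x})\in I^+(p)$, hence $\sigma_{\vec{x}}$ meets $S$; and it does so in exactly one point, because two points of $S$ on $\sigma_{\vec{x}}$ would be chronologically related, contradicting achronality of $S$. Writing $(f(\vec{x}),\vec{x})$ for that unique point defines a function $f$ on the open cube $|x^i|<\delta$, and the same achronality argument shows $S\cap V=\{(f(\vec{x}),\vec{x}):|x^i|<\delta\}$, where $V=\{(t,\vec{x}):|t|<\eta,\ |x^i|<\delta\}$. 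Continuity of $f$ follows since $S$ is closed: along a sequence $\vec{x}_n\to\vec{x}$ the points $(f(\vec{x}_n),\vec{x}_n)\in S$ stay in a compact set, every convergent subsequence has limit in $S\cap\sigma_{\vec{x}}=\{(f(\vec{x}),\vec{x})\}$, so $f(\vec{x}_n)\to f(\vec{x})$. Hence $\phi:S\cap V\to\mathbb{R}^3$, $\phi(t,\vec{x})=\vec{x}$, is a homeomorphism onto the open cube, being a continuous bijection (by the graph property) with continuous inverse $\vec{x}\mapsto(f(\vec{x}),\vec{x})$. Since $S\cap V$ is open in $S$ and $p$ was arbitrary, these charts form a $C^0$ atlas exhibiting $S$ as a $3$-dimensional topological manifold, and because each chart is a restriction of ambient normal coordinates, the inclusion $S\hookrightarrow\mathscr{M}$ is a topological embedding.

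The crux of the argument — and the only place where $\mathrm{edge}[S]=\emptyset$ is actually used — is the surjectivity step: that each vertical segment $\sigma_{\vec{x}}$ genuinely meets $S$. This is precisely the negation of the edge condition at $p$, available once the coordinate box has been nested inside the witnessing neighbourhood $O$ and the caps $(\pm\eta,\vec{x})$ have been checked to lie in $I^{\pm}(p)$. By contrast, single-valuedness of $f$ and its continuity are the routine achronality/closedness arguments already used in the proof of Theorem~\ref{thm:embman}; the remaining effort is just the bookkeeping needed to choose $\epsilon$, $\eta$, $\delta$ and $O$ compatibly.
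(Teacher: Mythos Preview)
Your proof is correct and follows exactly the approach the paper indicates: the paper does not spell out a proof of this theorem but simply states that it is ``similar to that of theorem~\ref{thm:embman}'', and your argument is precisely the natural adaptation of that proof, replacing the role of the open future set $F$ (which guaranteed that vertical timelike segments cross $\dot F$) by the negation of the edge condition (which guarantees that the segments $\sigma_{\vec{x}}$ meet $S$). The additional observation that $\mathrm{edge}[S]=\emptyset$ forces $S$ to be closed, used for the continuity of $f$, is the one extra ingredient not already present verbatim in the proof of Theorem~\ref{thm:embman}, and you handle it correctly.
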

Also it is easy to show \citep{HawEll} that if $S$ achronal $$\mathrm{edge}[S]=\mathrm{edge}[H^+[S]].$$
In the example of Figure \ref{fig:1.18} we have seen that the Cauchy Horizon is a null hypersurface. Even if this property does not hold in general, it turns out that $H^+[S]$, always contains null geodesics through its points not included in $\mathrm{edge}[S]$. This important property of the Cauchy horizon has been used in theorems about singularities.
\begin{thm}$\\ $
\label{thm:HNG}
Let $S\subset\mathscr{M}$ be achronal and $p\in H^+[S]-\mathrm{edge}[S]$. Then there exists a segment $\Gamma$ of a past-directed null geodesic from $p$ which remains entirely in $H^+[S]$, and which either has no past endpoint or else has a past endpoint on $\mathrm{edge}[S]$.
\end{thm}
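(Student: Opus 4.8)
The plan is to run the limit-curve construction of Theorem \ref{thm:caus} --- the same device that shows achronal boundaries are ruled by null geodesics --- with the Cauchy horizon $H^+[S]$ playing the role of the achronal boundary. Three ingredients will do the work: $H^+[S]$ is closed and achronal (so, by Corollary \ref{cor:ngeo}, \emph{any} causal curve lying inside it is automatically a null geodesic); $\mathrm{int}[D^+[S]]=I^-[D^+[S]]\cap I^+[S]$ (Corollary \ref{cor:int}); and, since every point of $D^+[S]$ lies on a past-inextendible timelike curve meeting $S$, one has $D^+[S]\subset\overline S\cup I^+[S]$, whence the decomposition $D^+[S]\setminus\mathrm{int}[D^+[S]]\subset H^+[S]\cup\overline S$. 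Throughout I would assume, as in the remarks after Definition \ref{defn:dd}, that $S$ is closed, and dispose of the degenerate case $p\in S$ at the very end.

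First I would produce an approximating sequence. Since $p\in H^+[S]\subset D^+[S]$ while $p\notin I^-[D^+[S]]$, fix a past-inextendible timelike curve through $p$; by definition of $D^+[S]$ it meets $S$, and (assuming $p\notin S$) its points strictly between $S$ and $p$ lie in $I^-(p)\cap I^+[S]\subset\mathrm{int}[D^+[S]]$ by Corollary \ref{cor:int}. Letting them run up to $p$ gives $q_n\to p$ with $q_n\in\mathrm{int}[D^+[S]]$. For each $n$, I would pick a timelike curve $\lambda_n$ issuing from $q_n$ that is past-inextendible in the open submanifold $\mathscr M\setminus S$; since $q_n\in D^+[S]$ such a curve must accumulate on $S$, and the same computation shows all its points lie in $\mathrm{int}[D^+[S]]$. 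As $q_n$ is the future endpoint of $\lambda_n$ and $q_n\to p$, the point $p$ is a limit point of $\{\lambda_n\}$, so Theorem \ref{thm:caus} (with past and future interchanged) yields a causal curve $\lambda$ through $p$, past-inextendible in $\mathscr M\setminus S$, which is a limit curve of $\{\lambda_n\}$. Being a limit of points of the closed set $D^+[S]$, all of $\lambda$ lies in $D^+[S]$; since $\lambda$ avoids $S$ by construction, the decomposition above gives $\lambda\subset H^+[S]\cup\mathrm{int}[D^+[S]]$.

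Next I would extract from $\lambda$ the longest initial portion $\Gamma$ (measured from $p$ into the past) that stays inside $H^+[S]$. This $\Gamma$ is a causal curve lying in the achronal set $H^+[S]$, so Corollary \ref{cor:ngeo} forces it to be a null geodesic segment. If $\Gamma$ exhausts $\lambda$ it is past-inextendible in $\mathscr M\setminus S$, hence in $\mathscr M$ either past-inextendible or with a past endpoint on $S$. If $\Gamma$ stops short, its past endpoint $q'$ lies on $\partial(\mathrm{int}[D^+[S]])\setminus S\subset H^+[S]$; if $q'\notin\mathrm{edge}[S]$ I would repeat the whole construction starting at $q'$, obtaining a further null geodesic segment of $H^+[S]$ leaving $q'$ into the past. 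Because a broken null geodesic inside the achronal set $H^+[S]$ could be rounded off into a timelike curve joining two of its points --- contradicting achronality --- these successive segments concatenate smoothly into one null geodesic. Iterating, $\Gamma$ is extended within $H^+[S]$ until it is past-inextendible, or reaches $\mathrm{edge}[S]$, or reaches $S$; and a past endpoint on $S\setminus\mathrm{edge}[S]$ is impossible, since at such a point $S$ would have to be null (otherwise the future development opens up there and the point cannot lie on the horizon) and then the null generator of $S$ prolongs $\Gamma$, so it is not an endpoint. This leaves exactly the two alternatives in the statement. (If $p\in S$ from the start, then $p\notin\mathrm{edge}[S]$ forces $S$ to be null near $p$, and a null generator of $S$ through $p$ is the desired $\Gamma$.)

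The step I expect to be the real obstacle is the bookkeeping compressed into ``stays inside $H^+[S]$'': showing that the limit curve does not peel off the horizon into $\mathrm{int}[D^+[S]]$ already at $p$, so that $\Gamma$ is nondegenerate, and pinning each possible past endpoint precisely on $\mathrm{edge}[S]$ rather than merely on $\overline S$ --- the latter using $\overline S\setminus S\subset\mathrm{edge}[S]$, the identity $\mathrm{edge}[H^+[S]]=\mathrm{edge}[S]$, and the local analysis of the null case above. This is where the argument is genuinely more delicate than its counterpart for ordinary achronal boundaries (Theorem \ref{thm:embman} and the result on $\dot I^+[C]$), because here the curve lives on a one-sided boundary of $D^+[S]$; the existence of the limit curve itself is a direct appeal to Theorem \ref{thm:caus}.
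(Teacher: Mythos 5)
There is a genuine gap, and it sits exactly where you suspected: your approximating sequence is taken from the wrong side of the horizon, and as a result nothing forces the limit curve to stay on $H^+[S]$. Your curves $\lambda_n$ issue from points $q_n\in\mathrm{int}[D^+[S]]$ and run down to $S$, so every point of every $\lambda_n$ lies in $\mathrm{int}[D^+[S]]$. A limit curve of such a sequence through $p$ is only constrained to lie in the closed set $D^+[S]$; it can perfectly well be a timelike curve that dives immediately into $\mathrm{int}[D^+[S]]$ (indeed, by Theorem \ref{thm:U}, $I^-(p)\cap I^+[S]\subset D^+[S]$, so past-directed timelike curves from $p$ generically do exactly that --- consider $p$ on the past light cone $H^+[S]$ in the example of Figure \ref{fig:1.18}). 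In that case your ``longest initial portion $\Gamma$ of $\lambda$ inside $H^+[S]$'' degenerates to the single point $p$, and the argument produces nothing. Restarting at the endpoint $q'$ does not rescue this, since the same degeneracy can recur at every stage.

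The paper's proof avoids this by approximating $p$ from the \emph{future} and from \emph{outside} the domain of dependence: it takes $p_i\in I^+(p)$ converging to $p$, which (since $p\in H^+[S]=D^+[S]-I^-[D^+[S]]$) cannot lie in $D^+[S]$, and through each $p_i$ a past-inextendible timelike curve $\gamma_i$ missing $S$ and never entering $D^+[S]$. The limit curve is then trapped on the horizon by a two-sided squeeze: each of its points $q$ satisfies $q\in D^+[S]$ (every past-directed timelike curve from $q$ enters $I^-(p)\cap I^+[S]\subset D^+[S]$ by Theorem \ref{thm:U}, hence meets $S$), while no point of $I^+(q)$ lies in $D^+[S]$ (otherwise some $\gamma_i$, which passes arbitrarily close to $q$, would enter $D^+[S]$); together these give $q\in H^+[S]$. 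That squeeze is the missing idea in your proposal. Your remaining machinery --- Corollary \ref{cor:ngeo} to force the segment to be a null geodesic, the corner-smoothing argument for concatenation, and $\overline{D^+}[S]=D^+[S]\cup\overline{S}$ together with $\overline{S}-\mathrm{int}[S]\subset\mathrm{edge}[S]$ to locate the eventual past endpoint --- matches the paper and would go through once the limit curve is correctly pinned to the horizon.
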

\begin{proof}
Choose, in the chronological future of $p$, $I^+(p)$, a sequence of points $\{p_i\}$ which converges to $p$. Since $p\in H^+[S]$, none of the $p_i$ lies in $D^+[S]$. Therefore, through each $p_i$ we may draw a past-directed timelike curve $\gamma_i$, without endpoint, such that $\gamma_i$ does not intersect $S$ and, in particular, does not enter $D^+[S]$. Since $p$ is a limit point of the sequence of curves $\{\gamma_i\}$, in virtue of theorem \ref{thm:caus} there exists a limit curve of that sequence through $p$. In particular we denote by $\Gamma'\subset I^+[S]$ a partial limit segment of a past-directed null geodesic from $p$, such that, given any point $q\in\Gamma'$ and any neighbourhoods $O_q$ and $O_{\Gamma'}$ of $q$ and $\Gamma'$,  respectively, an infinite number of $\gamma_i$ remain in $O_{\Gamma'}$, at least until they reach $O_q$. Let $q\in\Gamma'$. Then each past-directed timelike curve from $q$ enters $I^-(p)\cap I^+[S]\subset D^+[S]$ (see theorem \ref{thm:U}), and so intersects $S$. Furthermore, no point of $I^+(q)$ is in $D^+[S]$, for otherwise at least one $\gamma_i$ would enter in $D^+[S]$. Thus $q\in H^+[S]$ and each partial limit segment is in $H^+[S]$.
\begin{figure}[h]
\begin{center}
\includegraphics[scale=0.35]{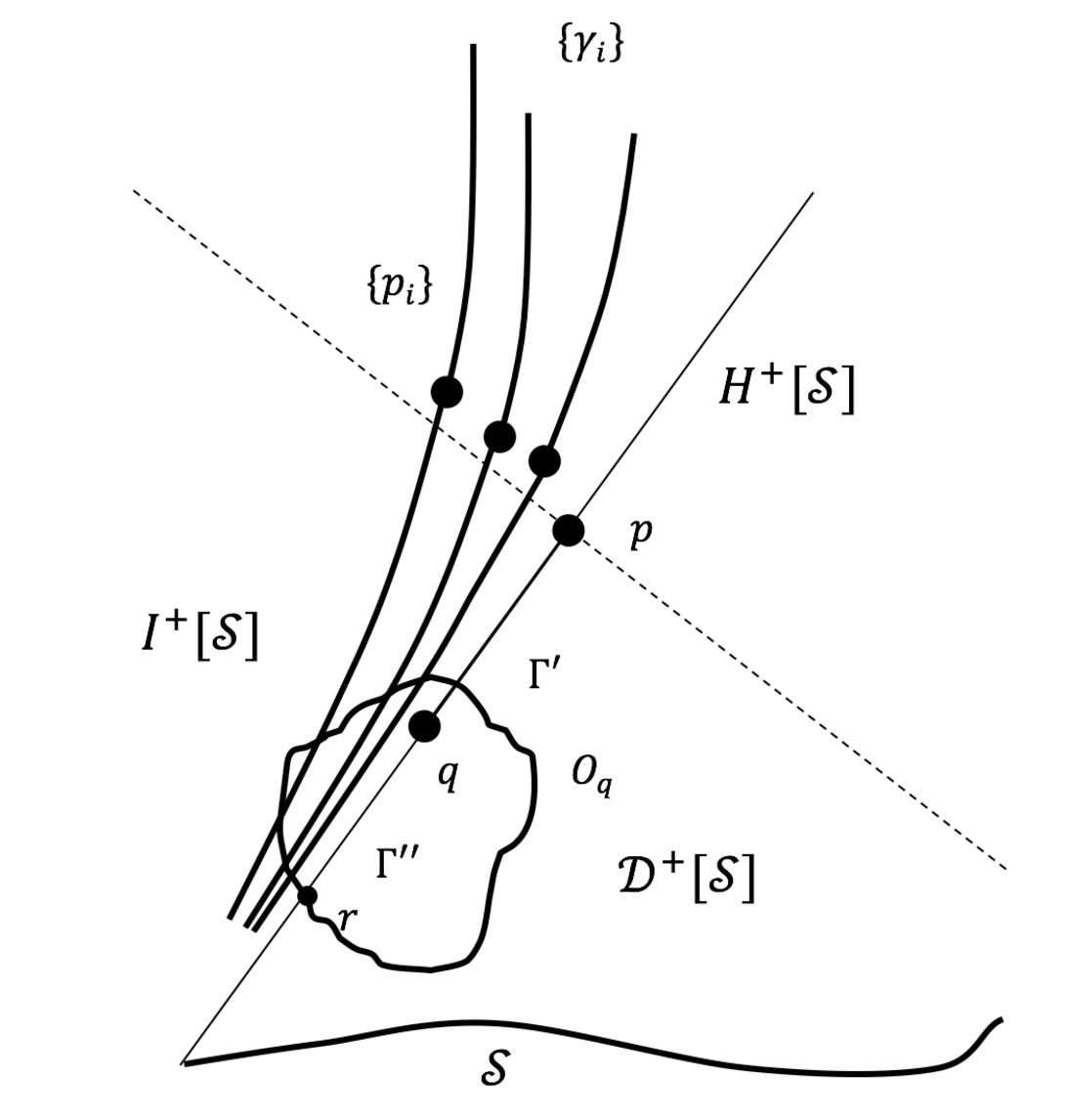}
\caption{The scheme used in the proof of theorem \ref{thm:HNG}.}
\label{fig:1.23}
\end{center}
\end{figure}
\\Let $\Gamma'$ be a partial limit and suppose that $\Gamma'$ has a past endpoint $q\in I^+[S]$ (see Figure \ref{fig:1.22}). Choose a small compact neighbourhood $O_q$ of $q$. The sequence of points at which the $\gamma_i$ first leave $O_q$ must have, by compactness, an accumulation point $r$. Since the $\gamma_i$ cannot enter $D^+[S]$, it follows that $r\notin I^-(q)$. Moreover the $\gamma_i$ pass arbitrarily close to $q$ and are timelike curves, thus $I^+(q)\subset I^+(r)$. Therefore there is a segment $\Gamma''\subset O_q$ of a null geodesic joining $q$ to $r$. The tangent vectors to $\Gamma'$ and $\Gamma''$ must agree at $q$, for otherwise $\Gamma''$,  and hence the $\gamma_i$, would enter $I^-(p)\cap I^+[S]\subset D^+[S]$. Therefore $\Gamma'\cup\Gamma''$ is a partial limit of the $\gamma_i$. We have shown that each partial limit which has a past endpoint in $I^+[S]$ may be extended, as a partial limit, beyond that endpoint. It follows immediately that there exists a partial limit $\Gamma$ which either has no past endpoint or else has a past endpoint $s\notin I^+[S]$. In the latter case, since $\Gamma\subset D^+[S]$, we must have $s\in \overline{D^+}[S]=D^+[S]\cup\overline{S}$ (see theorem \ref{thm:closed}) and, since $s\notin D^+[S]$, it follows that $s\in\overline{S}$. Furthermore $s$ cannot be in $\mathrm{int}[S]$, for otherwise the partial limit $\Gamma$ from $p$ to $s$ would be timelike. Hence $s\in\overline{S}-\mathrm{int}[S]\subset \mathrm{edge}[S]$. 
\end{proof}
We have as consequence, using theorem \ref{thm:edge} the following
\begin{cor}$\\ $
If $\mathrm{edge}[S]=\emptyset$, then $H^+[S]$ is an achronal, three-dimensional, embedded $C^0$ topological manifold which is generated by null geodesic segments which have no past endpoint.
\end{cor}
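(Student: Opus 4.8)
The plan is to assemble the corollary from the three structural facts already established for Cauchy horizons: that $H^+[S]$ is achronal, that its edge coincides with $\mathrm{edge}[S]$, and the null-geodesic generating property of Theorem \ref{thm:HNG}. No new analytic work is needed; the proof is a bookkeeping of these ingredients.

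First I would record that $H^+[S]$ is achronal for any achronal $S$: if $p,q\in H^+[S]$ with $p\ll q$, then since $q\in D^+[S]$ we would get $p\in I^-(q)\subset I^-[D^+[S]]$, contradicting $p\in H^+[S]=D^+[S]-I^-[D^+[S]]$. (This is also the content of the theorem on closedness and achronality of $H^+[S]$ quoted above, in the case $S$ closed.) Next, using the hypothesis $\mathrm{edge}[S]=\emptyset$ together with the identity $\mathrm{edge}[H^+[S]]=\mathrm{edge}[S]$ recorded just before Theorem \ref{thm:HNG}, we obtain $\mathrm{edge}[H^+[S]]=\emptyset$. Thus $H^+[S]$ is an achronal set with empty edge, and Theorem \ref{thm:edge}, applied to $H^+[S]$ in the role of $S$, yields immediately that $H^+[S]$ is a three-dimensional, embedded $C^0$ topological submanifold of $\mathscr{M}$.

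For the generating property, let $p\in H^+[S]$ be arbitrary. Since $\mathrm{edge}[S]=\emptyset$ we have $p\in H^+[S]-\mathrm{edge}[S]$, so Theorem \ref{thm:HNG} furnishes a past-directed null geodesic segment $\Gamma$ from $p$ lying entirely in $H^+[S]$ which either has no past endpoint or has a past endpoint on $\mathrm{edge}[S]$; the second alternative is vacuous because $\mathrm{edge}[S]=\emptyset$, so $\Gamma$ is past-inextendible. Hence through every point of $H^+[S]$ there passes a past-endless null geodesic generator contained in $H^+[S]$, which is precisely the assertion that $H^+[S]$ is generated by null geodesic segments without past endpoint.

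The only step requiring any care is the reduction to Theorem \ref{thm:edge}: one must be sure that the hypotheses of that theorem — achronality and vanishing edge — are genuinely inherited by $H^+[S]$, i.e. that $H^+[S]$ is achronal in the required sense and that the identity $\mathrm{edge}[H^+[S]]=\mathrm{edge}[S]$ really applies here. Once those two points are secured the corollary follows with no further effort, the substantive content having already been carried by Theorems \ref{thm:caus}, \ref{thm:edge} and \ref{thm:HNG}.
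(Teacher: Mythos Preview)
Your proposal is correct and matches the paper's approach exactly: the paper states the corollary as an immediate consequence of Theorem~\ref{thm:edge} (applied to $H^+[S]$, using $\mathrm{edge}[H^+[S]]=\mathrm{edge}[S]=\emptyset$ and achronality) together with Theorem~\ref{thm:HNG} (whose second alternative becomes vacuous). You have simply spelled out the details that the paper leaves implicit.
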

We turn now attention to Cauchy surfaces in a certain space-time $(\mathscr{M},g)$.
\begin{defn}$\\ $
An achronal set $S$ for which $D[S]=\mathscr{M}$ is said to be a \textit{Cauchy surface}.
\end{defn}
It follows immediately that for any Cauchy surface $S$, we must have $\mathrm{edge}[S]=\emptyset$. Hence by theorem \ref{thm:edge} every Cauchy surface is a three-dimensional, embedded $C^0$ topological submanifold of $\mathscr{M}$.\\
The set $S$ in the example in Figure \ref{fig:1.18} clearly $S$ is not a Cauchy surface. However it is easy to see that in Minkowski space-time there are Cauchy surfaces. If we consider, for example, the plane $S$ given by $t=0$  it is clear that $D[S]=\mathscr{M}$, and thus $S$ is a Cauchy surface. The same surface is also a Cauchy surface in the (extended) Schwarzschild solution, but this is not true in Reissner-N\"ordstrom solution (with which we are not dealing). Hence it is clear that to say that $S$ is a Cauhy surface for $\mathscr{M}$ is a statement about both $S$ and the whole space-time $(\mathscr{M},g)$ in which it is embedded.\\Intuitively, that $S$ be a Cauchy surface for $\mathscr{M}$ means that initial data on $S$ determines the entire evolution of $\mathscr{M}$, past and future. Thus, in a certain sense,  one could think of a space-time with a Cauchy surface as being \lq predictive\rq. Conversely, in space-times with no Cauchy surfaces we have a breakdown of predictability in the sense that a complete knowledge of conditions at a single \lq instant of time\rq\hspace{0.1mm} can never suffice to determine the entire history of the universe. Hence there are some good reasons for believing that all physically realistic space-times must admit a Cauchy surface. However one could not know the initial data on a certain surface $S$ unless one was to the future of every point in the surface, which would be impossible in most cases. In fact, in general, it is not possible to tell, by examining only a neighbourhood of $S$, whether or not $S$ will be a Cauchy surface because a space-time $\mathscr{M}$ in which it appears, during the early stages of evolution, that $S$ will be a Cauchy surface may, at some much later time, develop so as to have no Cauchy surface (see the Reissner-Nordstr\"om example in \cite{Ger71}, pg. 94). Furthermore there are a number of known exact solutions of the Einstein equations which do not admit such surfaces (AdS, Taub-NUT, Reissner-Nordstr\"om, etc.) and one must take into account that, often, there could be extra information coming in from infinity or from the singularity which would upset any predictions made simply on the basis of data on $S$. Thus in General Relativity one's ability to predict the future is limited both by the difficulty of knowing data on the whole of a spacelike surface and by the possibility that even if one did it would still be insufficient. It follows that the assumption of the existence of a Cauchy surface seems to be a rather strong condition to impose and there does not seem to be any physically compelling reason for believing that the universe should admit a Cauchy surface (this will be even more justified later). However, it is worth noting that requiring the existence of a Cauchy surface is a very useful tool if we want to study the Cauchy problem in General Relativity. \\
Before briefly discussing the global hyperbolicity we give some important results about Cauchy surfaces.
\begin{thm}$\\ $
\label{thm:cs}
Let $S$ be an achronal set and $\mathscr{M}$ be connected. Then $S$ is a Cauchy surface for $\mathscr{M}$ if and only if $H[S]=\emptyset$.
\end{thm}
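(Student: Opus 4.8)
The plan is to prove the two implications separately; it suffices throughout to argue for $H^{+}[S]$ and $D^{+}[S]$, the past statements following by time reversal. The tools I will use are Theorem \ref{thm:closed} (closedness of $D^{+}[S]$ when $S$ is closed), Corollary \ref{cor:int} ($\mathrm{int}[D^{+}[S]]=I^{-}[D^{+}[S]]\cap I^{+}[S]$), Theorem \ref{thm:HNG} (null generators of $H^{+}[S]$), Theorem \ref{thm:edge}, the limit curve Theorem \ref{thm:caus}, Corollary \ref{cor:ngeo}, the relations $\mathrm{edge}[S]=\mathrm{edge}[H^{+}[S]]$ and $\overline{S}-S\subseteq\mathrm{edge}[S]$, and two elementary facts: an achronal set contains no point chronologically between two of its points, and $D^{+}[S]\setminus S\subseteq I^{+}[S]$ (a past-inextendible timelike curve with future endpoint $p\in D^{+}[S]\setminus S$ must meet $S$ strictly to the past of $p$). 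For the forward implication, assume $D[S]=\mathscr{M}$, so that $\mathrm{edge}[S]=\emptyset$ as already noted, and suppose $p\in H^{+}[S]$, i.e. $p\in D^{+}[S]$ and $I^{+}(p)\cap D^{+}[S]=\emptyset$. Since $p\notin\mathrm{edge}[S]$, Theorem \ref{thm:HNG} gives a past-inextendible null geodesic $\Gamma$ from $p$ with $\Gamma\subseteq H^{+}[S]\subseteq D^{+}[S]$; pick $x\in\Gamma\setminus\{p\}$ and any $z\in I^{+}(x)$. Then $z\notin D^{+}[S]$ (since $z\in I^{+}(x)$ and $x\in H^{+}[S]$), hence $z\in D^{-}[S]$ because $D[S]=\mathscr{M}$; and $x\in D^{+}[S]\subseteq S\cup I^{+}[S]$ gives some $s_{1}\in S$ with $s_{1}\ll x$ (or $x\in S$), whence $s_{1}\ll z$ and $z\in I^{+}[S]$, so in particular $z\notin S$ and, $z$ being in $D^{-}[S]$, a future-inextendible timelike curve with past endpoint $z$ meets $S$ at some $s_{2}\gg z$. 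Now $s_{1}\ll z\ll s_{2}$ with $s_{1},s_{2}\in S$ contradicts achronality of $S$, so $H^{+}[S]=\emptyset$, and by time reversal $H[S]=\emptyset$.

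For the converse, assume $H[S]=\emptyset$. Then $\mathrm{edge}[S]=\mathrm{edge}[H^{+}[S]]=\emptyset$, so $S$ is closed and, by Theorem \ref{thm:edge}, an embedded $C^{0}$ $3$-submanifold, and by Theorem \ref{thm:closed} the sets $D^{\pm}[S]$ and hence $D[S]$ are closed. Also $H^{+}[S]=\emptyset$ gives $D^{+}[S]\subseteq I^{-}[D^{+}[S]]$, so Corollary \ref{cor:int} yields $D^{+}[S]\cap I^{+}[S]=\mathrm{int}[D^{+}[S]]$; together with $D^{+}[S]\setminus S\subseteq I^{+}[S]$ and $S\cap I^{+}[S]=\emptyset$ this gives the disjoint decompositions $D^{\pm}[S]=\mathrm{int}[D^{\pm}[S]]\sqcup S$, whence $D[S]=\mathrm{int}[D^{+}[S]]\cup\mathrm{int}[D^{-}[S]]\cup S$. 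Suppose $D[S]\neq\mathscr{M}$ and let $W$ be a connected component of the nonempty open set $\mathscr{M}\setminus D[S]$. For $w\in W$, since $w$ lies in neither $D^{+}[S]$ nor $D^{-}[S]$ there are a past-inextendible and a future-inextendible timelike curve through $w$, both missing $S$; joining them at $w$ produces an inextendible timelike curve $\lambda_{w}\ni w$ with $\lambda_{w}\cap S=\emptyset$. As $\lambda_{w}$ is past-inextendible and avoids $S$, it cannot meet $D^{+}[S]$ (a point of $D^{+}[S]$ on it would force a meeting with $S$), and similarly it cannot meet $D^{-}[S]$; hence $\lambda_{w}\subseteq\mathscr{M}\setminus D[S]$, and by connectedness $\lambda_{w}\subseteq W$. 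Thus every inextendible timelike curve through a point of $W$ stays in $W$ and avoids $S$.

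Since $S\neq\emptyset$ and $\mathscr{M}$ is connected, $W\neq\mathscr{M}$, so $\partial W\neq\emptyset$; moreover $\partial W\subseteq D[S]$, and no point of $\partial W$ lies in the open sets $\mathrm{int}[D^{\pm}[S]]$, so $\partial W\subseteq S$. Fix $p\in\partial W$ and $w_{n}\in W$ with $w_{n}\to p$. Put $S$ into its local normal form near $p$ (as in the proof of Theorem \ref{thm:embman}: a convex normal neighbourhood $\mathscr{U}$ with $\partial/\partial t$ timelike in which $S\cap\mathscr{U}$ is a graph $\{t=f(\vec{x})\}$, $|\nabla f|\le1$, separating $\mathscr{U}$ into $\{t>f\}$ and $\{t<f\}$); passing to a subsequence, all $w_{n}$ lie on one fixed side, say $w_{n}\in\{t>f\}$, so $w_{n}\in I^{+}(s_{n})$ with $s_{n}:=(f(\vec{x}_{n}),\vec{x}_{n})\in S$ and $s_{n}\to p$. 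If $S$ is spacelike at $p$, i.e. $|\nabla f|\le1-\delta<1$ near $p$, then along the past-directed part of $\lambda_{w_{n}}$ issuing from $w_{n}$ the function $t-f(\vec{x})$ decreases by more than $\delta$ times the decrease of $t$, so it reaches $0$ --- i.e. $\lambda_{w_{n}}$ meets $S$ --- after a $t$-lapse at most $\delta^{-1}(t_{n}-f(\vec{x}_{n}))\to0$, well before $\lambda_{w_{n}}$ can leave $\mathscr{U}$, contradicting $\lambda_{w_{n}}\cap S=\emptyset$. Hence $S$ must be null at $p$; then the $\lambda_{w_{n}}$ can avoid $S$ near $p$ only by running asymptotically along the null direction of $S$, and the limit curve Theorem \ref{thm:caus} produces a past-inextendible causal limit curve $\lambda$ through $p$ that is a null geodesic near $p$ and lies in $\overline{W}\subseteq W\cup S$; by Corollary \ref{cor:ngeo} any causal arc of $\lambda$ inside $S$ is a null geodesic, so following $\lambda$ and re-running the argument of Theorem \ref{thm:HNG} forces $p$ onto $H^{+}[S]$ or $H^{-}[S]$, contradicting $H[S]=\emptyset$. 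Therefore $D[S]=\mathscr{M}$, i.e. $S$ is a Cauchy surface.

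I expect the forward implication to be routine once Theorem \ref{thm:HNG} is available. The main obstacle is the openness of $D[S]$ in the converse: after the clean topological reduction (the set $W$, its curves $\lambda_{w}$, and the conclusion $\partial W\subseteq S$), one must exclude the possibility of a boundary point $p$ of $\mathscr{M}\setminus D[S]$ lying on $S$, and the genuinely delicate case is when $S$ has a null tangent at $p$ --- there the elementary "gap" estimate fails and one has to control the limiting null geodesic and feed it into the Cauchy-horizon machinery of Theorem \ref{thm:HNG} and Corollary \ref{cor:ngeo}.
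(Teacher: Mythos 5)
Your forward implication is correct, though the detour through Theorem \ref{thm:HNG} is unnecessary: for $p\in H^{+}[S]$ one already has $I^{+}(p)\cap D^{+}[S]=\emptyset$, so any $z\in I^{+}(p)$ lies simultaneously in $D^{-}[S]$ (since $D[S]=\mathscr{M}$) and in $I^{+}[S]$ (since $p\in S\cup I^{+}[S]$), and the achronality contradiction $s_{1}\ll z\ll s_{2}$ follows without ever invoking the generators of the horizon.

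The converse, however, has a genuine gap, exactly at the point you flag yourself. First, the dichotomy ``$S$ spacelike at $p$'' versus ``$S$ null at $p$'' is not well defined: an achronal set is only the graph of a Lipschitz function of the normal coordinates (Theorem \ref{thm:edge} yields a $C^{0}$ submanifold and nothing more), so $S$ need have no tangent plane at $p$, and a Lipschitz-$1$ graph need be neither uniformly Lipschitz-$(1-\delta)$ near $p$ nor ``null'' there in any usable sense; your case split does not exhaust the possibilities. Second, and more seriously, the null case is not an argument: Theorem \ref{thm:HNG} describes the generators of a Cauchy horizon through a point already known to lie on it, and gives no criterion for concluding $p\in H^{\pm}[S]$ from the existence of a limiting null geodesic in $\overline{W}$; ``re-running'' its proof would require precisely the membership you are trying to establish. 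The paper's proof avoids all of this by using the hypothesis $H[S]=\emptyset$ directly at every $p\in D[S]$: $p\notin H^{+}[S]$ produces $r\in I^{+}(p)\cap D^{+}[S]$, while $p\notin H^{-}[S]$ (using $p\in D^{-}[S]$ when $p\in S$, or a point of $S$ chronologically preceding $p$ when $p\in I^{+}[S]$) produces $s\in I^{-}(p)\cap D^{-}[S]$; Theorem \ref{thm:U} and its time-reverse then show that the open neighbourhood $I^{+}(s)\cap I^{-}(r)$ of $p$ lies in $D[S]$. No local normal form for $S$, no case analysis on its causal character, and no limit-curve construction is needed; closedness (Theorem \ref{thm:closed}) and connectedness of $\mathscr{M}$ finish the proof. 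I recommend replacing your entire treatment of the boundary points $p\in\partial W\subseteq S$ with this neighbourhood construction.
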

\begin{proof}
If $S$ is a Cauchy surface, then $D[S]=\mathscr{M}$ by definition, and hence $H[S]=\emptyset$. Conversely if $H^+[S]=\emptyset$ it is in particular closed and hence $D^+[S]$ is closed too. Let $p\in D^+[S]$. Since $p\notin H^+[S]$, there is a point $r$ to the future of $p$ in $D^+[S]$. Since $p\notin H^-[S]$, there is a point $s$ to the past of $p$ in $D^-[S]$. By theorem \ref{thm:U} the open neighbourhood $I^-(r)\cap I^+(s)$ of $p$ is in $D[S]$. Since $D[S]$ is both open and closed and $\mathscr{M}$ is connected $D[S]=\mathscr{M}$.
\end{proof}
Hence the mere presence of a non-empty Cauchy horizon means that an achronal surface cannot be a Cauchy surface. 
\begin{thm}$\\ $
\label{thm:ccvS}
Let $S$ be a Cauchy surface and let $\gamma$ be an inextendible causal curve. Then $\gamma$ intersects $S$, $I^+[S]$ and $I^-[S]$.
\end{thm}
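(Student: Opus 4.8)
The plan is to deduce everything from the structure of the sets $D^{\pm}[S]$. I would first record two facts valid for any Cauchy surface $S$ (an achronal set with $D[S]=\mathscr{M}$), both immediate from the definitions. (i) $D^{\pm}[S]\setminus S\subseteq I^{\pm}[S]$: given $p\in D^{+}[S]\setminus S$, extend a past-directed timelike curve from $p$ to a past-inextendible one; by the definition of $D^{+}[S]$ it meets $S$ at some $s$, and the timelike arc from $s$ to $p$ shows $p\in I^{+}(s)\subseteq I^{+}[S]$. (ii) $I^{\pm}[S]\subseteq D^{\pm}[S]$: if $z\in I^{-}[S]$ then $z\notin S$ and $z\notin I^{+}[S]$ by achronality, so by (i) $z\notin D^{+}[S]$; since $D^{+}[S]\cup D^{-}[S]=\mathscr{M}$, we get $z\in D^{-}[S]$. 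Together with achronality these give the disjoint decomposition $\mathscr{M}=I^{-}[S]\sqcup S\sqcup I^{+}[S]$ with $I^{\pm}[S]$ open.

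\textbf{The curve meets $S$.} This is then immediate: choosing $p_0\in\gamma$ we have $p_0\in D^{+}[S]$ or $p_0\in D^{-}[S]$, and the corresponding past- or future-inextendible half of $\gamma$ through $p_0$ is a past- or future-inextendible causal curve, so it meets $S$ by the definition of $D^{\pm}[S]$. By the time-reversal symmetry of the statement it now suffices to prove that $\gamma$ meets $I^{+}[S]$; the intersection with $I^{-}[S]$ follows by reversing time.

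\textbf{Reducing to a null geodesic inside $S$.} Suppose, for contradiction, $\gamma\cap I^{+}[S]=\emptyset$. For any $p\in\gamma$, $p$ lies in $D^{+}[S]$ or $D^{-}[S]$; a point of $D^{+}[S]\setminus S$ would lie in $I^{+}[S]$ by (i), which is excluded, so $\gamma\subseteq S\cup I^{-}[S]$. We already know $\gamma$ meets $S$; fix $t_0$ with $\gamma(t_0)\in S$. For $t>t_0$ we have $\gamma(t_0)\preceq\gamma(t)$, and if $\gamma(t)\in I^{-}[S]$, say $\gamma(t)\ll s\in S$, then $\gamma(t_0)\ll s$, contradicting achronality of $S$; hence $\gamma(t)\in S$. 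Thus $\ell:=\gamma|_{[t_0,\infty)}$ is a future-inextendible causal curve lying entirely in the achronal set $S$, so by Corollary~\ref{cor:ngeo} every segment of $\ell$ is a null geodesic, i.e. $\ell$ is a future-inextendible null geodesic contained in $S$.

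\textbf{The main obstacle: ruling this out.} It remains to rule this out, and this is the step I expect to be the hard one. The idea: set $q_0:=\ell(t_0)\in S$ and pick $r\in I^{-}(q_0)$, so $r\in I^{-}[S]\subseteq D^{-}[S]$ by (ii); since $r\ll q_0\preceq\ell(t)$ we have $r\in I^{-}(\ell(t))$ for every $t\ge t_0$, and these sets increase in $t$. Taking $t_0<t_1<t_2<\cdots\to\infty$ one builds inductively $r=z_0\ll z_1\ll z_2\ll\cdots$ with $z_n\in I^{-}(\ell(t_n))$, and concatenates timelike arcs $z_n\to z_{n+1}$ into a future-directed timelike curve $\tau$ through $r$; every point of $\tau$ lies in some $I^{-}(\ell(t_n))\subseteq I^{-}[S]$, so $\tau$ avoids $S$. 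If $\tau$ is future-inextendible this contradicts $r\in D^{-}[S]$, finishing the proof. The delicate point is exactly the future-inextendibility of $\tau$: if the $z_n$ can be chosen close to $\ell(t_n)$ and $\ell$ leaves every compact set, then $\tau$ has no future endpoint and one is done; but if $\ell$ is imprisoned one must instead extract $\tau$ as a limit curve (Theorem~\ref{thm:caus}) of timelike curves shadowing $\ell$ inside the open set $I^{-}[S]$, and verifying that the limit curve is future-inextendible and still avoids $S$ is the part that needs real care.
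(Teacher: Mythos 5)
The paper never actually proves this theorem --- it is stated without proof and, like Theorem \ref{thm:ngcs}, deferred to Geroch --- so your argument has to stand on its own. Your preliminary lemma giving the decomposition $\mathscr{M}=I^{-}[S]\sqcup S\sqcup I^{+}[S]$ with $I^{\pm}[S]=D^{\pm}[S]-S$ open is correct, as is the deduction that $\gamma$ meets $S$ (taking Definition \ref{defn:dd} as literally written with causal curves; if one uses the timelike version indicated in the remark following it, the same step follows from Theorem \ref{thm:S} together with $H[S]=\emptyset$ from Theorem \ref{thm:cs}), and so is the reduction, via Corollary \ref{cor:ngeo}, of the remaining claim to the impossibility of a future-inextendible null geodesic $\ell$ lying in $S$. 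But the last step is a genuine gap, and you have located it precisely: the concatenated curve $\tau$ through $z_{0}\ll z_{1}\ll\cdots$ may accumulate at a point of $\overline{I^{-}[S]}\subseteq I^{-}[S]\cup S$ and thereby acquire a future endpoint, in which case it contradicts nothing; and the limit-curve alternative fares no better, because Theorem \ref{thm:caus} applied in the spacetime $\mathscr{M}-S$ only yields a curve that is future-inextendible \emph{there}, and such a curve may still have a future endpoint lying on $S$ when regarded in $\mathscr{M}$. As written, neither variant produces the future-inextendible curve through $r$ avoiding $S$ that the contradiction requires.

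The missing device is the one the paper itself uses to prove Theorem \ref{thm:S}: build the shadowing timelike curve with a quantitative closeness constraint. Fix a Riemannian distance $d$ on $\mathscr{M}$, choose $r\in I^{-}(\gamma(t_{0}))$ with $d(r,\gamma(t_{0}))<1$, and construct, exactly as in \eqref{eqn:req} with past and future interchanged, a future-directed timelike curve $\tau$ from $r$ satisfying $\tau(t)\in I^{-}(\gamma(t_{0}+t))$ and $d(\tau(t),\gamma(t_{0}+t))<(1+t)^{-1}$ for all $t\geq 0$. If $\tau$ had a future endpoint, the distance condition would force $\gamma|_{[t_{0},\infty)}$ to have one as well, which is excluded; so $\tau$ is future-inextendible. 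Under your contradiction hypothesis every $\gamma(t_{0}+t)$ lies in $S\cup I^{-}[S]$, hence $\tau(t)\in I^{-}[S]$ for all $t$, and $\tau$ never meets the achronal set $S$, contradicting $r\in I^{-}[S]\subseteq D^{-}[S]$. Note that this closes the argument with no appeal to Theorem \ref{thm:caus}, no case split on imprisonment, and in fact no need for the null-geodesic reduction at all: the only input required from your third step is that $\gamma$ avoids $I^{+}[S]$.
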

\begin{thm}$\\ $
\label{thm:ngcs}
Let $S$ be a closed, achronal set. Then $S$ is a Cauchy surface if and only if every inextendible null geodesic in $\mathscr{M}$ intersects $S$ and enters $I^+[S]$ and $I^-[S]$, i.e. intersects and then re-emerge from $S$.
\end{thm}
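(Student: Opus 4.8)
The plan is to prove the two directions separately. The forward direction is immediate: if $S$ is a Cauchy surface then an inextendible null geodesic is in particular an inextendible causal curve, so Theorem~\ref{thm:ccvS} already guarantees that it meets $S$, $I^+[S]$ and $I^-[S]$, and there is nothing more to do.

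For the converse I would argue by contraposition: assuming $S$ is not a Cauchy surface, I will exhibit an inextendible null geodesic that fails the condition. Since $\mathscr{M}$ is connected and $S$ is closed and achronal, Theorem~\ref{thm:cs} gives $H[S]\neq\emptyset$, and by time symmetry I may assume $H^+[S]\neq\emptyset$; the target is an inextendible null geodesic that never enters $I^-[S]$. Two bookkeeping facts are needed. First, $D^+[S]\cap I^-[S]=\emptyset$: if $p\ll s$ for some $s\in S$ then achronality forces $p\notin S$, and every past-inextendible timelike curve from $p$ lies inside $I^-(s)$, which misses $S$ again by achronality, so $p\notin D^+[S]$. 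Second, for $p\in H^+[S]\subseteq D^+[S]$ one has $J^+(p)\cap I^-[S]=\emptyset$, because a point $q\in J^+(p)\cap I^-[S]$ would give $p\prec q\ll s$ with $s\in S$, whence the concatenated causal curve from $p$ to $s$ is not a null geodesic and Corollary~\ref{cor:ngeo} yields $p\ll s$, contradicting the first fact.

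Now choose $p\in H^+[S]-\mathrm{edge}[S]$. By Theorem~\ref{thm:HNG} there is a past-directed null geodesic segment $\Gamma$ through $p$ contained in $H^+[S]$ which is either past-inextendible or has a past endpoint on $\mathrm{edge}[S]$. In the favourable case ($\Gamma$ past-inextendible) I extend $\Gamma$ maximally to the future to get an inextendible null geodesic $\lambda$; its past part lies in $\Gamma\subseteq H^+[S]\subseteq D^+[S]$ and its future part lies in $J^+(p)$, so by the two facts above $\lambda$ never meets $I^-[S]$, contradicting the hypothesis. When $\mathrm{edge}[S]=\emptyset$ the corollary to Theorem~\ref{thm:HNG} guarantees that $H^+[S]$ is generated by null geodesics without past endpoint, so every generator is of the favourable type; applying the same reasoning to $H^-[S]$ with $I^+[S]$ replacing $I^-[S]$ then gives $H[S]=\emptyset$, and $S$ is a Cauchy surface by Theorem~\ref{thm:cs}.

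The remaining point, which I expect to be the main obstacle, is $\mathrm{edge}[S]$: when the generator $\Gamma$ terminates on $\mathrm{edge}[S]$, its past continuation leaves $H^+[S]$ and I lose the inclusion $\Gamma\subseteq D^+[S]$ that excluded $I^-[S]$. I would close this gap by showing that the null-geodesic hypothesis already forces $\mathrm{edge}[S]=\emptyset$: since a Cauchy surface has empty edge, I assume an edge point $q^{*}$ exists, feed the timelike curves witnessing the edge condition at $q^{*}$ into the limit-curve Theorem~\ref{thm:caus} to obtain an inextendible null geodesic through $q^{*}$, and verify that this geodesic cannot enter one of $I^+[S]$, $I^-[S]$ because near $q^{*}$ it is trapped against $S$ — the desired contradiction. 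With $\mathrm{edge}[S]=\emptyset$ in hand the argument above applies without change.
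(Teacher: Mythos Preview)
The paper does not actually prove this theorem: it refers to \cite{Ger70} and remarks only that the forward implication is a special case of Theorem~\ref{thm:ccvS}, while the converse exemplifies the general principle that Theorems~\ref{thm:HNG} and~\ref{thm:cs} together detect Cauchy surfaces. Your argument follows exactly this outline, and the forward direction together with the $\mathrm{edge}[S]=\emptyset$ case of the converse are handled correctly: the two bookkeeping facts are sound, and the maximally extended generator $\lambda$ indeed never meets $I^-[S]$.

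The edge case, however, remains a genuine gap, and your sketch does not close it. The curves witnessing the edge condition at $q^*$ are short timelike arcs inside shrinking neighbourhoods; they are not inextendible, so Theorem~\ref{thm:caus} does not apply to them, and extending them to be inextendible destroys the property of avoiding $S$. Even granting a limit curve through $q^*$, Theorem~\ref{thm:caus} yields only a \emph{causal} curve, not a null geodesic, so the hypothesis of the theorem is not engaged. Finally, the claim that such a limit is ``trapped against $S$'' and therefore misses one of $I^\pm[S]$ is unsupported: the edge witnesses themselves run from $r_n\in I^-(q^*)\subseteq I^-[S]$ to $q_n\in I^+(q^*)\subseteq I^+[S]$, so they already visit both sets. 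The case $\mathrm{edge}[S]\neq\emptyset$ is precisely where Geroch's argument requires additional work, and you have not supplied it.
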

The proof can be found in \cite{Ger70}. Often theorems \ref{thm:HNG} and \ref{thm:cs} can be used together to establish the existence of a Cauchy surface. In fact, if there is some condition that ensure us that the geodesics required in \ref{thm:HNG} do not exist for a certain $S$, then it follows that $H[S]$ is empty and hence that $S$ is a Cauchy surface. Theorems \ref{thm:ccvS} and \ref{thm:ngcs} just provide, following these arguments, some characterization of a Cauchy surface in terms of the behaviour of causal curves and null geodesics. Note that the \lq only if\rq\hspace{0.1mm} part of theorem \ref{thm:ngcs} is just a special case of theorem \ref{thm:ccvS}.\\
To discuss about global hyperbolicity there are various definitions and it can be shown they are completely equivalent one to each other. In particular one can choose to follow the definitions of \cite{Wald}, \cite{HawEll}, whose approach is  adopted here, or \cite{Ger70} and \cite{Ler}. However it must be pointed out the original idea was introduced by \cite{Ler}.
\begin{defn}{\cite{HawEll}}$\\ $
\label{defn:ghbH}
A set $N$ is \textit{globally hyperbolic} if:
\begin{itemize}
\item Strong causality holds in $N$;
\item $\forall p$, $q\in N$ the set $J^+(p)\cap J^-(q)$ is compact and contained in $N$ .
\end{itemize}
\end{defn} 
Clearly to obtain the definition of a hyperbolic space-time $(\mathscr{M},g)$ it suffices to replace $N$ by $\mathscr{M}$ in the previous definition.\\
This can be thought of as saying that $J^+(p)\cap J^-(q)$ does not contain any points on the \lq edge\rq\hspace{0.1mm} of space-time, i.e. at infinity or at a singularity. The reason for the name \lq global hyperbolicity\rq\hspace{0.1mm} is that the wave equation for a $\delta$-function source at a point $p$ located inside a globally hyperbolic set $N$ has a unique solution which vanishes outside $N-J^+(p)$. We will see how the concept of global hyperbolicity is strictly related to that of existence of a Cauchy surface. 
\begin{defn}$\\ $
\label{defn:causim}
A set $N$ is said to be \textit{causally simple} if for every compact set $H$ contained in $N$, $J^+[H]\cap N$ and $J^-[H]\cap N$ are closed in $N$. 
\end{defn}
We have seen in section \ref{sect:1.4} that the sets $J^+(p)$ and $J^-(p)$ are not always closed. However if we suppose the space-time to be hyperbolic they are closed, how is stated by the next theorem.
\begin{thm}$\\ $
Let $(\mathscr{M},g)$ be a globally hyperbolic space-time. Then it is causally simple, i.e. for a compact set $H\subset\mathscr{M}$ the sets $J^{\pm}[H]$ are closed.
\end{thm}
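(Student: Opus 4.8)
The plan is first to reduce to showing that $J^{+}[H]$ is closed — the statement for $J^{-}[H]$ then follows by reversing the time orientation — and then to run a limit-curve argument in which global hyperbolicity is used to confine the relevant causal curves to a fixed compact set. Concretely, I would take a point $q\in\overline{J^{+}[H]}$, write it as $q=\lim q_n$ with $q_n\in J^{+}(p_n)$ and $p_n\in H$, fix future-directed causal curves $\gamma_n$ from $p_n$ to $q_n$, and — using compactness of $H$ — pass to a subsequence along which $p_n\to p\in H$. If $q\in H$ then $q\in J^{+}[H]$ and we are done, so I would assume $q\notin H$.

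Next I would trap the $\gamma_n$. Choosing $s\in I^{-}(p)$ and $r\in I^{+}(q)$ (both nonempty in any space-time, e.g.\ along short timelike geodesics), openness of $I^{+}(s)$ and $I^{-}(r)$ gives $p_n\in I^{+}(s)$ and $q_n\in I^{-}(r)$ for all large $n$. Then for every point $x$ on such a $\gamma_n$ one has $s\prec p_n\prec x\prec q_n\prec r$ — using $\ll\,\subseteq\,\prec$ and transitivity of $\prec$, the causal analogue of Theorem \ref{thm:prec} — so $\gamma_n\subseteq J^{+}(s)\cap J^{-}(r)=:K$, and $K$ is compact since $(\mathscr{M},g)$ is globally hyperbolic. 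I would then extend each $\gamma_n$ ($n$ large) to a past-inextendible causal curve $\hat\gamma_n$; since $q_n\in\hat\gamma_n$ and $q_n\to q$, the point $q$ is a limit point of $\{\hat\gamma_n\}$, and the time-reversed form of Theorem \ref{thm:caus} produces a past-inextendible causal curve $\lambda$ through $q$ which is a limit curve of a subsequence of $\{\hat\gamma_n\}$. To finish, I would argue that $\lambda$, followed into the past from $q$, meets $H$, say at a point $p'$; then the segment of $\lambda$ from $p'$ to $q$ is a future-directed causal curve, so $q\in J^{+}(p')\subseteq J^{+}[H]$, and as $q$ was arbitrary this gives $\overline{J^{+}[H]}=J^{+}[H]$.

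The hard part will be this last step — showing that the limit curve $\lambda$ actually reaches $H$ and does not, so to speak, ``run off to the edge'' before getting there. The input it needs is the standard fact that in a strongly causal space-time (and strong causality is built into global hyperbolicity) a causal curve contained in a compact set has arc length bounded by a constant depending only on the set, with respect to any auxiliary Riemannian metric $h$ — equivalently, that an inextendible causal curve cannot be imprisoned in such a compact set. I would establish this by covering $K$ by finitely many local causality neighbourhoods via Proposition \ref{prop:lcn}, and then, parametrising the portions of $\hat\gamma_n$ running from $q_n$ back to $p_n$ by $h$-arc length, the uniform bound forces $p=\lim p_n$ onto $\lambda$ at a finite parameter value. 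In effect one is appealing to the strong form of the limit-curve theorem for sequences of causal curves whose two endpoints converge inside a compact set; isolating and proving that statement is where the real work — and the only genuine use of strong causality — lies.
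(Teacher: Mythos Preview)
Your approach is correct but takes a genuinely different route from the paper. The paper proves only the single-point case $H=\{p\}$ (citing Hawking--Ellis for the general compact $H$), and does so by a short closure argument with no limit curves at all: assuming $J^+(p)$ is not closed, it picks $r\in\overline{J^+}(p)\setminus J^+(p)$, chooses $q\in I^+(r)$, and observes that then $r\in\overline{J^+(p)\cap J^-(q)}$ (since $I^-(q)$ is an open neighbourhood of $r$) but $r\notin J^+(p)\cap J^-(q)$, contradicting the fact that $J^+(p)\cap J^-(q)$ is compact and hence closed. That is the whole proof.

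The comparison is instructive. The paper's argument is far cleaner for a single point---it uses nothing beyond the compactness in Definition~\ref{defn:ghbH} and the openness of $I^-(q)$---but it does not immediately handle compact $H$, because the definition only gives compactness of $J^+(p)\cap J^-(q)$ for \emph{points} $p,q$; one needs the subsequent corollary that $J^+[H_1]\cap J^-[H_2]$ is compact, and the paper defers that step. Your limit-curve argument, by contrast, attacks the general compact case head-on and is essentially the Hawking--Ellis proof in spirit: trap the approximating curves in a single compact diamond $K=J^+(s)\cap J^-(r)$, invoke non-imprisonment under strong causality (your use of Proposition~\ref{prop:lcn} is the right tool), and extract a two-endpoint limit curve. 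You have correctly identified where the work lies---the uniform $h$-length bound forcing $p$ onto $\lambda$---and your sketch of that step is sound. In short: the paper trades generality for brevity; you trade brevity for a self-contained argument in the full compact case.
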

We prove this theorem in the simple case in which $H$ consists of a single point $p$. The complete proof can be found in \cite{HawEll}, pg. 207.
\begin{proof}
Choose $p\in\mathscr{M}$ and suppose $J^+(p)$ is not closed. It follows that we can find a point $r\in\overline{J^+}(p)$ with $r\notin J^+(p)$. Choose $q\in I^+(r)$. Then we would have $r\in \overline{J^+(p)\cap J^-(q)}$ but $r\notin J^+(p)\cap J^-(q)$, which is a contradiction since $J^+(p)\cap J^+(q)$ is compact, and hence closed by hypothesis of global hyperbolicity (see theorem \ref{thm:cc1}). Hence $J^+(p)$ is closed. 
\end{proof}
It can be shown to be valid the following
\begin{cor}$\\ $
Let $(\mathscr{M},g)$ be a globally hyperbolic space-time. If $H_1$ and $H_2$ are compact sets in $\mathscr{M}$ then $J^+[H_1]\cap J^-[H_2]$ is compact.
\end{cor}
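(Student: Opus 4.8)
The plan is to combine causal simplicity, which the previous theorem has just supplied, with a finite-subcover argument that traps $J^+[H_1]\cap J^-[H_2]$ inside a finite union of sets of the form $J^+(p)\cap J^-(q)$, each of which is compact directly by Definition \ref{defn:ghbH}.

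First I would replace the compact sets $H_1$ and $H_2$ by finite sets of points. For each $p\in H_1$, choose $p'\in I^-(p)$ --- possible because, by Theorem \ref{thm:nge}, $I^-(p)$ is a non-empty open set. Then $p\in I^+(p')$, so $\{I^+(p')\}_{p\in H_1}$ is an open cover of $H_1$, and compactness yields a finite subcover $H_1\subset\bigcup_{i=1}^{n} I^+(p'_i)$. If $p\in H_1$ then $p'_i\ll p$ for some $i$, hence $p'_i\prec p$, and concatenating with any causal curve realizing $p\prec y$ (exactly as in the proof of Theorem \ref{thm:prec}) gives $p'_i\prec y$; thus $J^+(p)\subset J^+(p'_i)$ for that $i$, and therefore $J^+[H_1]\subset\bigcup_{i=1}^{n} J^+(p'_i)$. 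Symmetrically, picking $q'\in I^+(q)$ for each $q\in H_2$ and extracting a finite subcover $H_2\subset\bigcup_{j=1}^{m} I^-(q'_j)$, one obtains $J^-[H_2]\subset\bigcup_{j=1}^{m} J^-(q'_j)$.

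Intersecting the two inclusions gives $J^+[H_1]\cap J^-[H_2]\subset\bigcup_{i=1}^{n}\bigcup_{j=1}^{m}\bigl(J^+(p'_i)\cap J^-(q'_j)\bigr)$. By the second clause of Definition \ref{defn:ghbH} every set $J^+(p'_i)\cap J^-(q'_j)$ is compact, so the right-hand side is a finite union of compact sets and hence compact. On the other hand, since $H_1$ and $H_2$ are compact, the causal simplicity established in the previous theorem makes $J^+[H_1]$ and $J^-[H_2]$ closed, so their intersection is a closed subset of $\mathscr{M}$. A closed subset of a compact set in the Hausdorff manifold $\mathscr{M}$ is compact, and the corollary follows.

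The argument is essentially routine once one sees the right move, which is the only mild obstacle: one cannot extract a finite subcover directly from $J^+[H_1]=\bigcup_{p\in H_1}J^+(p)$, since these sets are not open, so one must first pass to the open chronological sets $I^{\pm}$ and only afterwards return to the causal ones. The remaining ingredients --- non-emptiness of $I^{\pm}(p)$, transitivity of $\prec$, and ``closed inside compact is compact'' --- are immediate from Theorem \ref{thm:nge}, Theorem \ref{thm:prec} and the Hausdorff property of space-times.
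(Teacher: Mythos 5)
Your argument is correct and complete; note that the paper itself offers no proof of this corollary (it is stated with ``It can be shown to be valid the following''), so there is nothing to compare it against line by line. The two ingredients you combine are exactly the right ones. The thickening trick --- replacing $H_1$ by finitely many points $p'_i$ with $H_1\subset\bigcup_i I^+(p'_i)$, which is legitimate because the $I^+(p'_i)$ are open (Proposition \ref{thm:opensets}) while the $J^+(p)$ are not --- yields $J^+[H_1]\cap J^-[H_2]$ inside the compact set $\bigcup_{i,j}\bigl(J^+(p'_i)\cap J^-(q'_j)\bigr)$ via transitivity of the causal relation, and the closedness of $J^+[H_1]$ and $J^-[H_2]$ supplied by the causal-simplicity theorem immediately preceding the corollary is genuinely needed, since a non-closed subset of a compact set need not be compact. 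One pedantic remark: ``closed subset of a compact set is compact'' holds in any topological space (the subset is closed in the compact subspace), so the appeal to the Hausdorff property is superfluous there; Hausdorffness is only relevant for the converse implication, which you do not use.
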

Hence in definition \ref{defn:ghbH} of global hyperbolicity the points $p$ and $q$ can be replaced by compact sets $H_1$ and $H_2$.\\
In order to define global hyperbolicity according to \cite{Ler} it is necessary to introduce a topology on certain collection of curves in the space-time $(\mathscr{M},g)$. For points $p$,$q\in\mathscr{M}$ such that strong causality holds in $J^+(p)\cap J^-(q)$ we define $C(p,q)$ to be the space of all causal curves from $p$ to $q$. A point in $C(p,q)$ is a causal curve from $p$ to $q$, up to a reparametrization. In this way two curves $\gamma(t)$ and $\lambda(u)$ will be considered equivalent and to represent the same point in $C(p,q)$ if there exists a continuous monotonic function $f(u)$ such that $\gamma(f(u))=\lambda(u)$. We are interested in defining a topology $\mathscr{T}$ on $C(p,q)$ to make $(C(p,q),\mathscr{T})$ into a topological space. We say that a neighbourhood $W$ of a point $\gamma$ in $C(p,q)$ consists of all the curves in $C(p,q)$ whose points in $\mathscr{M}$ lie in a neighbourhood $O$ of the points of $\gamma$ in $\mathscr{M}$. Let $O\subset\mathscr{M}$ be open, and define $W(O)\subset C(p,q)$ by \begin{equation}
\label{eqn:tplg}
W(O)=\{\left.\gamma\in C(p,q)\right|\gamma\subset O\}.
\end{equation}
We define the topology by calling a subset, $W$, \textit{open} if it can be expressed as 
\begin{equation*}
W=\bigcup W(O),
\end{equation*}
where each $W(O)$ has the form \eqref{eqn:tplg}.
\begin{figure}[h]
\begin{center}
\includegraphics[scale=0.4]{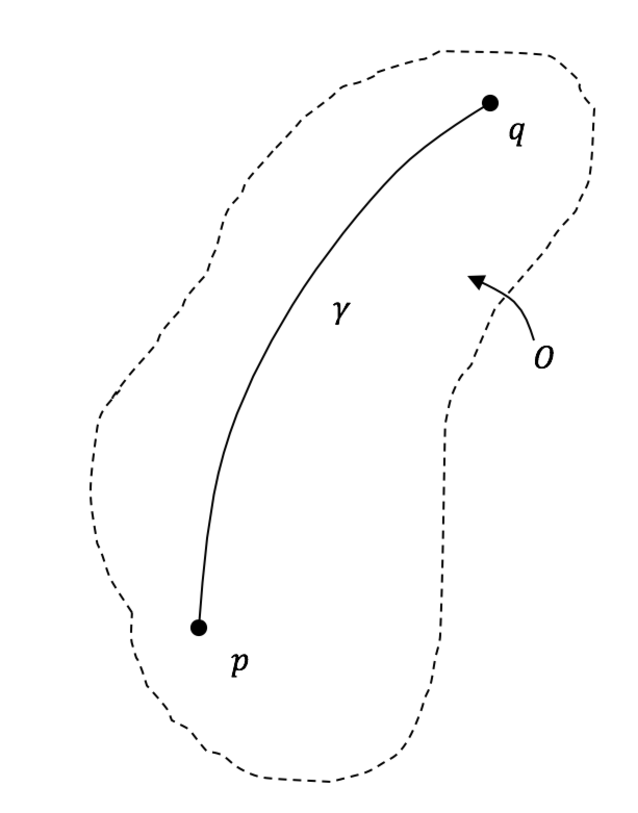}
\caption{A neighbourhood $O$ of the points of $\gamma$ in $\mathscr{M}$. A neighbourhood of $\gamma$ in $C(p,q)$ consists of all causal curves from $p$ to $q$ whose points lie in $\mathscr{M}$.}
\label{fig:1.24}
\end{center}
\end{figure}
We are saying that the set of all curves in $C(p,q)$ which lie in $O$, whilst $O$ ranges over all open sets in $\mathscr{M}$, defines a basis for the topology $\mathscr{T}$ on $C(p,q)$. Since we are requiring strong causality to hold, there exist no closed causal curves on $\mathscr{M}$. It is easy to see that this property implies that the topology $\mathscr{T}$ is Hausdorff. Furthermore, in absence of closed causal curves it can be shown that the topological space $(C(p,q),\mathscr{T})$ has a countable basis and hence is second countable \citep{Ger70}. It is worth noting that in the original work by Leray, the topology $\mathscr{T}$ is introduced in an arbitrary space-time in which there are no causality restrictions. Furthermore the notion of convergence defined by $\mathscr{T}$ is the following: $\gamma_n\rightarrow\gamma$ if for every set $O\subset\mathscr{M}$ with $\gamma\in O$, there exists a $N$ such that $\lambda_n\subset O$ for all $n>N$. This definition of convergence, in absence of closed causal curves, coincides with definition \ref{defn:conv}.
\begin{thm}$\\ $
\label{thm:equiv1}
Let strong causality hold on an open set $N$ such that $$N=J^-[N]\cap J^+[N].$$ Then $N$ is globally hyperbolic if and only if $C(p,q)$ is compact for all $p$,$q\in N$. 
\end{thm}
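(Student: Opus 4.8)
My plan is to prove the two implications separately, in each case reducing everything to the set $K:=J^+(p)\cap J^-(q)$. First I would isolate the sole use of the hypothesis $N=J^-[N]\cap J^+[N]$: since $p,q\in N$ we have $K\subset J^+[N]\cap J^-[N]=N$, so as soon as $K$ is shown to be compact it is automatically a compact subset of $N$, which is exactly the second clause of Definition \ref{defn:ghbH} (the first clause, strong causality, being a standing hypothesis). I would also record two elementary facts used repeatedly: every point of a causal curve belonging to $C(p,q)$ lies in $K$ (split the curve at that point and read off $p\prec\cdot\prec q$ using Theorem \ref{thm:prec}), and, by paracompactness, $\mathscr{M}$ carries a Riemannian distance $d$, so subsets of $\mathscr{M}$ are metrizable and sequential compactness there coincides with compactness, while in the second-countable space $(C(p,q),\mathscr{T})$ sequential compactness likewise implies compactness. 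We may also assume $p\neq q$, the case $p=q$ being trivial since then strong causality forces $K=\{p\}$ and $C(p,q)=\emptyset$.

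For the implication $(\Leftarrow)$, assume $C(p,q)$ is compact for every $p,q\in N$ and fix $p,q$; I would show $K$ is sequentially compact. Given $\{r_n\}\subset K$, for each $n$ choose a future-directed causal curve from $p$ to $r_n$ and one from $r_n$ to $q$ and concatenate them, so that the result $\lambda_n$ is a genuine element of $C(p,q)$ passing through $r_n$. By compactness a subsequence converges, $\lambda_n\to\lambda\in C(p,q)$, in the topology $\mathscr{T}$ whose basic open sets are the $W(O)$ of \eqref{eqn:tplg}: for each open $O\supset\lambda$ one has $\lambda_n\subset O$ eventually, hence $r_n\in O$ eventually. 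Feeding in the $d$-neighbourhoods $O_\epsilon=\{x:d(x,\lambda([0,1]))<\epsilon\}$ of the compact set $\lambda([0,1])$ gives $d(r_n,\lambda([0,1]))\to0$, and compactness of $\lambda([0,1])$ then produces a sub-subsequence of $\{r_n\}$ converging to a point of $\lambda([0,1])\subset K$. Thus $K$ is sequentially compact, hence compact, and contained in $N$; together with strong causality this is global hyperbolicity of $N$.

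For the implication $(\Rightarrow)$, assume $N$ is globally hyperbolic and fix $p,q$; if $C(p,q)=\emptyset$ there is nothing to prove, so take $\{\lambda_n\}\subset C(p,q)$, each $\lambda_n$ lying in the compact, strongly causal set $K$. Since $p$ is a limit point of $\{\lambda_n\}$, I would re-run inside $K$ the limit-curve construction used in the proof of Theorem \ref{thm:caus} to obtain a causal limit curve $\lambda$ from $p$ together with a subsequence converging to it in the sense of Definition \ref{defn:conv}. By Proposition \ref{prop:lcn} and Theorem \ref{thm:lcn}, $K$ is covered by finitely many local causality neighbourhoods, each met by any causal curve in a connected arc (Definition \ref{defn:stcaus}); hence no future-inextendible causal curve can remain imprisoned in $K$, and so $\lambda$ must possess a future endpoint $r\in K$. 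The decisive step is that $r=q$: otherwise, since each $\lambda_n$ is eventually contained in any prescribed neighbourhood of $\lambda$ yet must still reach $q$, one argues — again through the finite cover, or by re-starting the limit-curve extension at $r$ — that $\lambda$ can be prolonged past $r$ as a limit curve, something the construction can be arranged to preclude once it has arrived at $q$. Having obtained $\lambda\in C(p,q)$, the last task is to upgrade ``$\lambda$ is a limit curve'' to genuine convergence $\lambda_n\to\lambda$ in $\mathscr{T}$, i.e.\ eventual containment in every open $O\supset\lambda([0,1])$; this again uses compactness of $\lambda([0,1])$ and the finite cover, and yields compactness of $C(p,q)$.

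The easy half is $(\Leftarrow)$. In $(\Rightarrow)$ the real work is, first, the no-imprisonment property of a future-inextendible causal curve in the compact strongly causal set $K$ (proved from the finite cover by local causality neighbourhoods) and, above all, verifying that the future endpoint of the limit curve is exactly $q$ and that the convergence takes place in the curve topology $\mathscr{T}$ rather than merely exhibiting $\lambda$ as a limit curve. Making this precise requires careful bookkeeping of parametrisations and a diagonal argument across the finite cover, and that is the step I would devote the most attention to.
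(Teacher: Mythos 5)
Your proposal is correct and follows essentially the same route as the paper: in one direction, convergence of curves through the $r_n$ in the topology $\mathscr{T}$ forces the $r_n$ to accumulate on the limit curve, which lies in $J^+(p)\cap J^-(q)\subset N$; in the other, the limit-curve theorem plus the impossibility of imprisonment in a compact set covered by finitely many local causality neighbourhoods produces a limit curve from $p$ to $q$, and a finite chain of neighbourhoods along it upgrades this to convergence in $\mathscr{T}$. The only refinement worth noting is that the paper handles the endpoint issue you flag as delicate by working in the space-time $\mathscr{M}-q$, where the $\gamma_n$ become future-inextendible, so Theorem \ref{thm:caus} applies directly and the limit curve, being unable to be imprisoned in $J^+(p)\cap J^-(q)$ or to leave it except at $q$, must end at $q$.
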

\begin{proof}
Suppose first that $C(p,q)$ is compact. Let $\{r_n\}$ be a sequence of points in $J^-(q)\cap J^+(p)$ and let $\{\gamma_n\}$ be a sequence of curves in $C(p,q)$ through the corresponding $r_n$. Since $C(p,q)$ is compact there exists a subsequence $\{\gamma'_n\}\subset\{\gamma_n\}$ which converges to a curve $\gamma$ in $C(p,q)$ in the topology $\mathscr{T}$. Since $\gamma$, regarded ad a subset of $\mathscr{M}$, is compact we can find an open neighbourhood $O$ of $\gamma$ with compact closure $\overline{O}$ (Cover $\gamma$ with open sets with compact closure , use compactness of $\gamma$ to extract a finite subcover, and take the union). Call $\{r'_n\}\subset\{r_n\}$ the subsequence of points through which the curves $\{\gamma'_n\}$ pass. Then, by the notion of convergence, there exists a $N$ such that $\gamma'_n\subset O$ for all $n>N$ and, since $r'_n\in\gamma'_n$ and $O\subset\overline{O}$, the sequence $\{r'_n\}$ converges to a point $r\in \overline{O}$, by compactness of $\overline{O}$. The point $r$ must lie on $\gamma$, for otherwise we would contradict the fact that $\gamma$ is the limit curve of $\{\gamma'_n\}$. Thus every infinite sequence in $J^-(q)\cap J^+(p)$ has a subsequence converging to a point in $J^-(q)\cap J^+(p)$ and hence, by theorem \ref{thm:BW},  $J^-(q)\cap J^+(p)$ is compact, i.e. $N$ is globally hyperbolic. \\
Conversely, suppose $J^-(q)\cap J^+(p)$ is compact, i.e. $N$ is globally hyperbolic. Suppose $\{\gamma_n\}$ be an infinite sequence of causal curves from $p$ to $q$, so that $\gamma_n\in C(p,q)$. In the space-time whose manifold is $\mathscr{M}-q$ the sequence $\{\gamma_n\}$ is a sequence of future-inextendible causal curves. Hence, using theorem \ref{thm:caus}, in $\mathscr{M}-q$ there will be a future-directed causal curve $\gamma$ from $p$ which is inextendible and such that there is a subsequence $\{\gamma'_n\}\subset\{\gamma_n\}$ which converges to $r$, for every $r\in\gamma$, i.e. $\gamma$ is a limit curve of $\{\gamma_n\}$.
The compact set $J^-(q)\cap J^+(p)$ can be covered by a finite number of local causality neighbourhood $U_i$. By strong causality any future-inextendible causal curve which intersects one of these neighbourhoods must leave it and not re-enter it. Hence no future-inextendible causal curve can be \textit{imprisoned} \citep{HawEll} in $J^-(q)\cap J^+(p)$. Thus the curve $\gamma$ in $\mathscr{M}$ must have a future endpoint at $q$, because it cannot be imprisoned in the compact set $J^-(q)\cap J^+(p)$ and it cannot leave the set except at $q$. Let $U$ be any neighbourhood of $\gamma$ in $\mathscr{M}$ and let $r_i$ $(1\leq i\leq k)$ be a finite set of points on $\gamma$ such that $r_1=p$ and $r_k=q$ and each $r_i$ has a neighbourhood $O_i$ with $J^-[O_{i+1}]\cap J^+[O_i]\subset U$. For sufficiently large $n$, $\lambda'_n$ will be contained in $U$ and thus the sequence $\{\lambda'_n\}$ converges to $\lambda$ in the topology of $C(p,q)$ and so $C(p,q)$ is compact.
\end{proof}
The above theorem proves the equivalence between definition \ref{defn:ghbH} and the following, which is due to Leray. 
\begin{defn}{\cite{Ler}}$\\ $
\label{defn:ghbL}
An open set $N$ is \textit{globally hyperbolic} if:
\begin{itemize}
\item Strong causality holds in $N$;
\item $C(p,q)$ is compact for every $p$,$q\in N$. 
\end{itemize}
\end{defn}
The next theorem relates the existence of Cauchy surfaces to the absence of causal curves and will play a fundamental role in the establishment of the above claimed equivalence between the different definitions of global hyperbolicity.
\begin{thm}$\\ $
\label{thm:GHSC}
Let $(\mathscr{M},g)$ be a space-time which admits a Cauchy surface $S$. Then $(\mathscr{M},g)$ is strongly causal.
\end{thm}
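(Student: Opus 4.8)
The plan is to argue by contradiction: assume $(\mathscr{M},g)$ admits the Cauchy surface $S$ but is not strongly causal at some $p$. I would begin with two preliminary observations. Since $S$ is achronal, $I^{+}[S]\cap S=I^{-}[S]\cap S=I^{+}[S]\cap I^{-}[S]=\emptyset$, and since $D[S]=\mathscr{M}$, running a past-inextendible timelike curve from any point of $D^{+}[S]\setminus S$ down into $S$ shows $D^{\pm}[S]\subseteq S\cup I^{\pm}[S]$; hence $\mathscr{M}=I^{-}[S]\sqcup S\sqcup I^{+}[S]$, and we may assume $p\in I^{+}[S]$ (the cases $p\in S$ and $p\in I^{-}[S]$ being identical). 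Next, by the characterization of strong causality in Remark \ref{oss:stcaus}, its failure at $p$ produces, after fixing by Theorem \ref{thm:nge} a convex normal neighbourhood $U$ of $p$ with compact closure, a sequence of future-directed causal curves $\lambda_{n}$ with past endpoint $a_{n}\to p$ and future endpoint $b_{n}\to p$ which are \emph{not} contained in $U$. Since $I^{+}[S]$ is a future set, hence open, each such $\lambda_{n}$ with $a_{n}\in I^{+}[S]$ lies wholly in $I^{+}[S]$.

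Before using the $\lambda_{n}$, I would show that $(\mathscr{M},g)$ contains no closed causal curve. If $\mu$ were one, then traversing it periodically yields an inextendible causal curve with the same image; by Theorem \ref{thm:ccvS} this curve meets $S$ at a point $x$ and meets $I^{+}[S]$ at a point $z$, both lying on the loop. Travelling once around the loop from $x$ one obtains $x\prec z$ and $z\prec x$; choosing $w\in S$ with $w\ll z$ this gives $w\ll z\prec x$, hence $w\ll x$, i.e. $x\in I^{+}[S]\cap S$, contradicting achronality. So there are no closed causal curves.

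Now extend each $\lambda_{n}$ to a future-inextendible causal curve; since $a_{n}\to p$, the point $p$ is a limit point of the sequence, so Theorem \ref{thm:caus} furnishes a future-inextendible causal curve $\gamma$ through $p$ which is a limit curve of a subsequence. The crux is that $\gamma$ must return arbitrarily close to $p$. Each $\lambda_{n}$ first meets $\partial U$ at some $q_{n}$, and by compactness of $\partial U$ a subsequence has $q_{n}\to q\in\partial U$ with $q\neq p$, while beyond $q_{n}$ the curve comes back to $b_{n}\to p$; thus the approximating curves pass near $q$ and then near $p$. Passing to the limit, $\gamma$ threads through a neighbourhood of $q$ and then through neighbourhoods of $p$ a second time — and here it is essential that the existence of $S$ rules out the two ways this could fail, namely $\gamma$ being future-imprisoned in a compact set, or $\gamma$ escaping through an "edge" of $\mathscr{M}$, since in either case the argument of the previous paragraph, applied to the limit of the return-segments, produces a closed causal curve. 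A final application of Theorem \ref{thm:caus} to the return-segments, which (past $q$) are confined to a fixed compact set, then assembles a non-trivial closed causal curve through $p$, contradicting the preceding paragraph. Hence strong causality holds at $p$, and so everywhere.

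The reduction in the first paragraph and the "no closed causal curve" step are routine. The substance of the argument is the middle of the third paragraph: making precise that the limit curve $\gamma$ actually returns to $p$. This requires (a) careful bookkeeping with the first-exit and last-return points of the $\lambda_{n}$ on the compact boundary $\partial U$, (b) the compact-set form of the limit curve theorem, applied repeatedly to splice the return-segments into a single closed causal curve, and, above all, (c) using the full hypothesis that $S$ is a Cauchy surface — through Theorems \ref{thm:ccvS} and \ref{thm:ngcs} — to exclude the escape phenomenon that does occur in the causal-but-not-strongly-causal space-times of Figures \ref{fig:1.14}–\ref{fig:1.15}, which are precisely the ones admitting no Cauchy surface.
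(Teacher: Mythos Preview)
Your overall architecture---contradiction, a sequence of causal curves that leave a convex neighbourhood and return, passage to a limit curve---matches the paper's. But the two proofs diverge at the decisive step, and yours takes a harder road with a real gap in it.

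The paper never tries to build a closed causal curve. Instead it observes that the approximating curves $\gamma_n$, being future-directed with past endpoint in $I^{+}[S]$, lie entirely in $I^{+}[S]$; hence the limit curve $\gamma$ through $p$ (which is inextendible, or can be made so by iteration if it is closed) lies in $\overline{I^{+}[S]}$ and in particular never enters $I^{-}[S]$. That already contradicts Theorem~\ref{thm:ccvS}, which says every inextendible causal curve meets $I^{-}[S]$. The case $p\in S$ is handled by arranging that the $\gamma_n$ leave the nested neighbourhoods into $I^{+}[S]$, so the same confinement argument applies. No closed-curve lemma, no splicing of return segments, no bookkeeping on $\partial U$.

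Your third paragraph, by contrast, aims to show that the limit curve itself returns to $p$ and thereby closes up. You are right that this is the crux, and also right to be uneasy: Theorem~\ref{thm:caus} only gives a future-inextendible limit curve from $p$; nothing in it forces $\gamma$ to pass through $p$ twice. The sentence beginning ``here it is essential that the existence of $S$ rules out the two ways this could fail'' is where the argument becomes circular---you invoke ``the argument of the previous paragraph, applied to the limit of the return-segments'' to exclude escape, but the previous paragraph only ruled out \emph{closed} causal curves, which is exactly what you are still trying to produce. The ``repeated application of the compact-set form of the limit curve theorem'' you gesture at can be made to work (this is essentially how one proves the stronger fact that failure of strong causality in a causal space-time forces an almost-closed causal curve), but it is substantially more delicate than what you have written, and it is unnecessary here: the Cauchy-surface hypothesis gives you the one-line contradiction via Theorem~\ref{thm:ccvS} once you note the limit curve is trapped in $\overline{I^{+}[S]}$.
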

\begin{proof}
Clearly we have $\mathscr{M}=D^+[S]\cup D^-[S]\cup S$. Suppose strong causality were violated at $p\in I^+[S]$. From definition \ref{defn:stcaus} it follows that we could find a convex normal neighbourhood $U$ of $p$ contained in $I^+[S]$ and a nested family of open sets $O_n\subset U$ which converges to $p$ such that for each $n$ there exists a future-directed causal curve $\gamma_n$ which begins in $O_n$, leaves $U$, and ends in $O_n$. Using theorem \ref{thm:caus}, there exists a limit causal curve $\gamma$ through $p$. This curve must be either inextendible or closed through $p$, in which case it could be made inextendible by going \lq around and around\rq. Since none of the $\gamma_n$ can enter in $I^-[S]$, for otherwise $S$ would not be achronal, $\gamma$ also cannot enter $I^-[S]$. However, this contradicts theorem \ref{thm:ccvS} and hence strong causality cannot be violated in $I^+[S]$. Similarly one can repeat the above arguments for  $I^-[S]$. In the case $p\in S$ we can choose the family $\{O_n\}$ so that any future-directed causal curve starting in $O_n$ leaves $O_n$ in $I^+[S]$. Thus the limit curve $\gamma$ could not enter $I^-[S]$, which is again in contrast with theorem \ref{thm:ccvS}.
\end{proof}
The following theorem, whose proof is due to \cite{Ger70}, gives a fundamental characterization of globally hyperbolic space-times in terms of existence of Cauchy surfaces.
\begin{thm}{\cite{Ger70}}$\\ $
\label{thm:equiv2}
A space-time $(\mathscr{M},g)$ is globally hyperbolic, according to definition \ref{defn:ghbL}, if and only if it has a Cauchy surface.
\end{thm}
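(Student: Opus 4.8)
The plan is to prove the two implications separately, using Theorem~\ref{thm:equiv1} to pass freely between the two formulations of global hyperbolicity: applied with $N=\mathscr{M}$ (so that the hypothesis $N=J^{-}[N]\cap J^{+}[N]$ is trivial), it shows that, in the presence of strong causality, compactness of $J^{+}(p)\cap J^{-}(q)$ for all $p,q$ is equivalent to compactness of every $C(p,q)$, i.e. to Definition~\ref{defn:ghbL}; in particular Definitions~\ref{defn:ghbH} and \ref{defn:ghbL} agree once strong causality holds, so the earlier consequences of global hyperbolicity (causal simplicity, compactness of $J^{+}[H_{1}]\cap J^{-}[H_{2}]$ for compact $H_{i}$) become available. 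The easy implication is that a Cauchy surface forces global hyperbolicity; the substantial one is Geroch's converse, which manufactures a Cauchy surface from a normalised volume function.

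For ``Cauchy surface $\Rightarrow$ globally hyperbolic'': strong causality is already granted by Theorem~\ref{thm:GHSC}, so by the reduction above it suffices to show $K:=J^{+}(p)\cap J^{-}(q)$ is compact for all $p,q$ (the case $q\notin J^{+}(p)$ being vacuous). Given $\{r_{n}\}\subset K$, I would concatenate causal curves $p\to r_{n}\to q$, extend each to a future-inextendible causal curve $\hat\gamma_{n}$, and apply the limit-curve Theorem~\ref{thm:caus} to get a future-inextendible causal limit curve $\hat\gamma$ through $p$ together with a subsequence $\hat\gamma'_{m}\to\hat\gamma$; by Theorem~\ref{thm:ccvS} this $\hat\gamma$ meets $S$, $I^{+}[S]$ and $I^{-}[S]$. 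The delicate point is that $\hat\gamma$ must thread through $q$ as well as $p$: this is where strong causality (no causal curve is imprisoned in a compact set) and the Cauchy property — through Theorem~\ref{thm:ccvS}, which forces the inextendible approximants across $S$ — combine to prevent the portions of the $\hat\gamma_{n}$ between $p$ and $q$ from running off to infinity, so that a subsequence of their parameter values at $q$ converges and $q\in\hat\gamma$. Granting this, the $r_{n}$ lie on the $\hat\gamma'_{m}$, which converge to the compact arc of $\hat\gamma$ from $p$ to $q$; hence a subsequence of $\{r_{n}\}$ converges to a point of that arc, and $K$ is compact.

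For ``globally hyperbolic $\Rightarrow$ Cauchy surface'' I would follow Geroch \citep{Ger70}. Since $\mathscr{M}$ has a countable basis it carries a smooth measure $\mu$, absolutely continuous with respect to Lebesgue measure in every chart, with $\mu(\mathscr{M})<\infty$ and $\mu$ positive on non-empty open sets. Put
\begin{equation*}
f^{-}(p)=\mu\big(I^{-}(p)\big),\qquad f^{+}(p)=\mu\big(I^{+}(p)\big),\qquad \tau(p)=f^{-}(p)/f^{+}(p).
\end{equation*}
Three facts are then needed. (i) $\tau$ is \emph{strictly increasing} along every future-directed causal curve: immediate from Theorem~\ref{thm:prec}, openness of the chronological sets and positivity of $\mu$ on open sets, since moving to the future strictly enlarges $I^{-}$ and strictly shrinks $I^{+}$. (ii) $f^{\pm}$, hence $\tau$, are \emph{continuous}: lower semicontinuity follows from inner regularity of $\mu$ and the fact that $I^{\mp}(p)$ varies ``openly'' in $p$; upper semicontinuity is where global hyperbolicity is genuinely used, via causal simplicity ($J^{\pm}$ closed) and compactness of $J^{+}[H_{1}]\cap J^{-}[H_{2}]$, which force any point lying in $I^{-}(p_{n})$ for infinitely many $p_{n}\to p$ to lie in $J^{-}(p)$; since $\dot{I}^{\pm}(p)$ is a $C^{0}$ hypersurface (Theorem~\ref{thm:embman}) and hence $\mu$-null, this gives $\limsup_{p_{n}\to p}f^{-}(p_{n})\le\mu(J^{-}(p))=f^{-}(p)$, and similarly for $f^{+}$. (iii) Along a future-inextendible causal curve $\gamma$ one has $\tau(\gamma(t))\to\infty$ (and $\tau\to0$ along a past-inextendible one): if some $x$ lay in $\bigcap_{t}I^{+}(\gamma(t))$ then $\gamma$ would be a future-inextendible causal curve imprisoned in the compact set $J^{+}(\gamma(0))\cap J^{-}(x)$, contradicting strong causality; hence $\bigcap_{t}I^{+}(\gamma(t))=\emptyset$, so $f^{+}(\gamma(t))\to0$ by continuity of the finite measure $\mu$, while $f^{-}(\gamma(t))\ge f^{-}(\gamma(0))>0$.

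Finally, set $S=\tau^{-1}(1)$; it is non-empty because, by (iii), the continuous $\tau$ takes values arbitrarily close to $0$ and to $\infty$ along any inextendible causal curve. By (i), $S$ is achronal; by (ii), it is closed. If $p\in\mathscr{M}$ has $\tau(p)\ge1$, then along any past-inextendible causal curve through $p$ the function $\tau$ is continuous and strictly decreasing toward the past with past-limit $0$ (by (iii)), so it attains the value $1$ and the curve meets $S$; thus $p\in D^{+}[S]$, and the symmetric argument with future-inextendible curves gives $p\in D^{-}[S]$ when $\tau(p)\le1$. Hence $D[S]=\mathscr{M}$ and $S$ is a Cauchy surface — equivalently one may conclude via Theorem~\ref{thm:cs} by noting $H[S]=\emptyset$ — which completes the proof. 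I expect step (ii), the continuity of the volume functions, to be the main obstacle: it is precisely there that global hyperbolicity (causal simplicity and the absence of ``holes'') together with the absolute continuity of $\mu$ must be exploited. The imprisonment argument in (iii) and the ``$\hat\gamma$ reaches $q$'' step of the first implication are the two further places where the hypothesis does real work, the remaining steps being essentially formal.
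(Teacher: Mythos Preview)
The paper does not actually supply a proof of this theorem: it states the result with a citation to \cite{Ger70} and adds only the remark that Theorem~\ref{thm:GHSC} is essential for the ``if'' direction. Your proposal is precisely Geroch's original argument---the limit-curve/compactness argument for the easy implication and the volume-function construction $\tau=\mu(I^{-})/\mu(I^{+})$ for the hard one---so you have written out exactly the proof the paper defers to, and your identification of the continuity step (ii) and the ``$\hat\gamma$ reaches $q$'' step as the places where the hypotheses do real work is accurate.
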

\begin{oss}$\\ $
Note that theorem \ref{thm:GHSC} plays a fundamental role in the \lq if\rq\hspace{0.1mm} part.
\end{oss}
The above result allows us to give the last definition of global hyperbolicity, which is due to Wald.
\begin{defn}{\cite{Wald}}$\\ $
\label{defn:ghbW}
A space-time $(\mathscr{M},g)$ is \textit{globally hyperbolic} if it possesses a Cauchy surface.
\end{defn}
Together with theorem \ref{thm:equiv1}, theorem \ref{thm:equiv2} shows the complete above mentioned equivalence between all three definition (\ref{defn:ghbH}, \ref{defn:ghbL} and \ref{defn:ghbW}) of global hyperbolicity.\\
The last result we are going to discuss, which is due to \cite{Ger70}, greatly strenghtens theorem \ref{thm:GHSC} and provides an important topological property of globally hyperbolic space-times.
\begin{thm}{\cite{Ger70}}$\\ $
Let $(\mathscr{M},g)$ be a globally hyperbolic space-time. Then $(\mathscr{M},g)$ is stably causal. Furthermore, a global time function, $f$, can be chosen such that each Cauchy surface of constant $f$ is a Cauchy surface. Thus $\mathscr{M}$ can be foliated by Cauchy surface and its topology is $\mathbb{R}\times S$, where $S$ denotes any Cauchy surface.
\end{thm}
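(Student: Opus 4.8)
The plan is to carry out Geroch's volume-function construction. By paracompactness (Section~\ref{sect:1.1}) fix a smooth measure $\mu$ on $\mathscr{M}$ arising from a positive density, rescaled if necessary so that $\mu(\mathscr{M})\le 1$; every nonempty open set then has positive $\mu$-measure. Introduce the past and future volume functions
\begin{equation*}
\theta^{-}(p)=\mu(I^{-}(p)),\qquad \theta^{+}(p)=\mu(I^{+}(p)),
\end{equation*}
and put $f=\log(\theta^{-}/\theta^{+})$. I would reduce the whole theorem to three properties of $f$: (i) $f$ is strictly increasing along every future-directed causal curve; (ii) $f$ is continuous; (iii) $f\to-\infty$ at the past end and $f\to+\infty$ at the future end of every inextendible causal curve.

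For (i): global hyperbolicity contains strong causality (Definition~\ref{defn:ghbH}) and causal simplicity (established above), so for $p\ll q$ one has $q\notin J^{-}(p)=\overline{I^{-}(p)}$ (else $q\ll q$), whence $I^{-}(q)\setminus\overline{I^{-}(p)}$ is a nonempty open set and $\theta^{-}(q)>\theta^{-}(p)$; dually $\theta^{+}$ strictly decreases, so $f$ strictly increases, first along timelike and then, by a limiting argument using Corollary~\ref{cor:ngeo}, along causal curves. Property (ii) is the main obstacle and the one genuine use of the compactness clause of global hyperbolicity: $\theta^{\pm}$ are automatically lower semicontinuous since $I^{\pm}(p)$ are open, and upper semicontinuity is obtained by combining causal simplicity (so that $\overline{I^{-}(p)}=J^{-}(p)$, with boundary $\dot I^{-}(p)$ an achronal, hence locally Lipschitz, $C^{0}$ hypersurface — Theorem~\ref{thm:embman} — and therefore $\mu$-null) with the limit-curve theorem (Theorem~\ref{thm:caus}): for $p_{n}\to p$, any point of $\limsup_{n} I^{-}(p_{n})$ is shown to lie in $J^{-}(p)$, so $\limsup_{n}\theta^{-}(p_{n})\le\theta^{-}(p)$, and symmetrically for $\theta^{+}$. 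Without global hyperbolicity this fails, as in Minkowski space with one point removed (Figure~\ref{fig:1.4}).

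For (iii): if $\gamma$ is past-inextendible, the sets $I^{-}(\gamma(t))$ decrease as $t$ runs toward the past end, so $\theta^{-}(\gamma(t))\downarrow\mu(\bigcap_{t}I^{-}(\gamma(t)))$; were this intersection of positive measure it would contain a point $r$, and then $\gamma$ restricted to the past of some $\gamma(t_{0})$ would be a past-inextendible causal curve imprisoned in the compact set $J^{+}(r)\cap J^{-}(\gamma(t_{0}))$, which strong causality forbids (this is exactly the imprisonment argument used in the proof of Theorem~\ref{thm:equiv1}). Hence $\theta^{-}(\gamma(t))\to 0$, while $\theta^{+}(\gamma(t))$ stays bounded below by $\theta^{+}(\gamma(t_{0}))>0$, so $f\to-\infty$; the dual argument gives $f\to+\infty$ along future-inextendible curves.

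Granting (i)--(iii), the conclusions follow. For each $c\in\mathbb{R}$ the level set $S_{c}=f^{-1}(c)$ is achronal by (i), closed by (ii), and by (i)+(ii)+(iii) every inextendible causal curve meets it exactly once; hence $S_{c}$ is a Cauchy surface, with $\mathrm{edge}[S_{c}]=\emptyset$, so by Theorem~\ref{thm:edge} it is an embedded $C^{0}$ three-manifold. For stable causality I would first smooth $f$ to a $C^{\infty}$ function $\tilde f$ still strictly increasing along causal curves, i.e.\ with $\nabla^{a}\tilde f$ future-directed timelike — mollifying $f$ chart-by-chart on the locally finite cover by causally convex neighbourhoods of Proposition~\ref{prop:lcn} and using that the ``increasing along causal curves'' condition is stable under small perturbations — and then invoke Theorem~\ref{thm:tf} to conclude $(\mathscr{M},g)$ is stably causal, the level sets of $\tilde f$ being again Cauchy surfaces. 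Finally, normalising $V^{a}=\nabla^{a}\tilde f/(\nabla^{b}\tilde f\,\nabla_{b}\tilde f)$ and rescaling it by a positive function so as to be complete yields a complete future-directed timelike field with $V(\tilde f)=1$ whose integral curves are inextendible timelike curves crossing each level set once; flowing $S=\tilde f^{-1}(0)$ along $V$ then gives the diffeomorphism $\mathbb{R}\times S\cong\mathscr{M}$. I expect the continuity statement (ii) to be the technical heart of the argument, with the smoothing step needed to feed Theorem~\ref{thm:tf} the other delicate point.
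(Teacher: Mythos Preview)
The paper does not prove this theorem: it is stated with attribution to \cite{Ger70} and then the text moves directly to its consequences, so there is no ``paper's own proof'' to compare against.

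Your proposal reconstructs Geroch's original volume-function argument and is correct in outline. A couple of remarks. For step~(i), the extension from timelike to causal curves does not really need Corollary~\ref{cor:ngeo}; what you need is that for $p\prec q$, $p\neq q$, the open set $I^{-}(q)\setminus J^{-}(p)$ is nonempty, and this follows directly from causal simplicity together with the causality condition (if every $r\ll q$ lay in $J^{-}(p)$, a sequence $r_{n}\to q$ would give $q\in\overline{J^{-}(p)}=J^{-}(p)$, hence $q\prec p\prec q$). For the smoothing step, you are right to flag it as delicate: Geroch's original paper produces only a continuous time function, and the passage to a $C^{\infty}$ function with everywhere timelike gradient --- which is what Theorem~\ref{thm:tf} requires --- was only made fully rigorous much later (Bernal--S\'anchez). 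Your chart-by-chart mollification sketch is the right idea but hides real work; within the scope of this paper it is reasonable to quote that result rather than redo it.
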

All the results we have obtained confirm the fact that the existence of a Cauchy surface in a space-time, i.e. the property of global hyperbolicity, is a very strong condition. In particular there cannot be any kind of causal anomalies due to the presence of the stable causality condition. We can say that sufficiently small variations in the metric do not destroy global hyperbolicity. 
\begin{figure}[h]
\begin{center}
\includegraphics[scale=0.48]{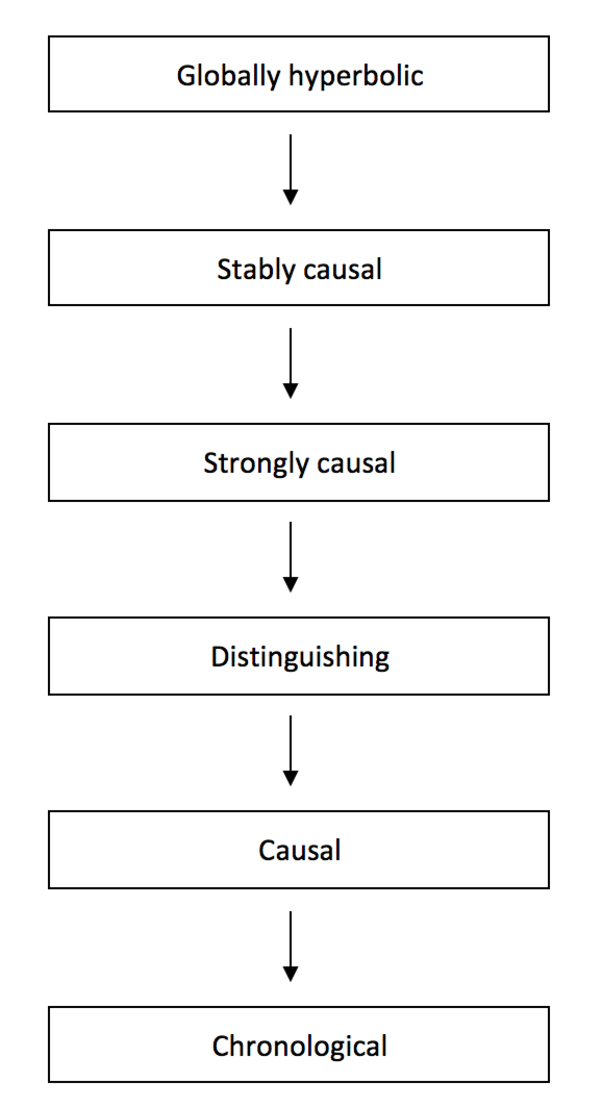}
\caption{A diagram illustrating the strengths of the causality conditions. Global hyperbolicity is the most restrictive causality assumption.}
\end{center}
\end{figure}
\\Furthermore there is a severe restriction on the topology. If we fix a timelike vector field on the space-time and consider two Cauchy surfaces of constant $f$, $S$ and $S'$, we can define a mapping from $S$ to $S'$ which sends each point $p$ of $S$ to that point of $S'$ reached by the integral curve of our vector field passing through $p$ (there must existc such a point, since $S'$ is a Cauchy surface and it must be unique, by achronality of $S'$). This mapping is smooth and its inverse exists (reversing the role of $S$ and $S'$) and so we have produced a diffeomorphism from $S$ to $S'$. Hence all the Cauchy surfaces of constant $f$ are topologically identical, i.e. diffeomorphic. Thus the global structure is very \lq dull\rq\hspace{0.1mm} and \lq tame\rq. 
All the theorems proven above for globally asymptotic space-time can always be applied to any region of the form $\mathrm{int}[D[S]]$, for any closed achronal set $S$.\\
It is also worth remarking that global hyperbolicity plays a key role in proving singularity theorems. In fact, if $p$ and $q$ are points lying in a hyperbolic set $N$ with $q\in J^+(p)$, then it can be shown that there exists a causal geodesic from $p$ to $q$ whose length is greater than or equal to that of any other causal curve form $p$ to $q$. The proof of this result can be found in \cite{Ave63} and in \cite{Seif67}. 
\chapter{Spinor Approach to General Relativity}
\label{chap:2}
\begin{center}
\begin{large}
\textbf{Abstract}
\end{large}
\end{center}
In this chapter we will deal with the spinor formalism, which was firstly developed by \cite{Pen60}. Even if the reader may find these sections more mathematical than physical, a very large use of this method will be done in the remainder of the work. In particular, the asymptotic properties of the space-time will be discussed by making use of the spinor approach, which makes them easier to develop. Furthermore it will be shown that this method is more than just a mere mathematical instrument equivalent to the tensors. In fact the spinor structure of a space-time emerges as deeper and more basic even than its pesudo-Riemannian structure. Moreover the range of applications of the spinor formalism is quite large and there is no possibility of even trying to give a reasonable discussion of them. A brief list of some more familiar or important examples of topics where it has been applied is:
\begin{itemize}
\item Exact solutions;
\item Gravitational radiation;
\item Numerical computations;
\item Black Hole Physics.
\end{itemize}
\section{Introduction}
\label{sect:2.1}
A way to deal with the theory of space-time, different from the usual one based on tensor calculus, is given by the spinor formalism. In fact the spinor structure of space-time gives, in a certain way, a deeper description of it as we will see that pseudo-Riemannian structure naturally emerges as its consequence.\\
The formalism is essentially based on the $(2\rightarrow 1)$ homomorphism between the group $\mathrm{SL}(2,\mathbb{C})$ of unimodular ($2\times 2$) complex matrices and the connected component of Lorentz group $\mathscr{L}$ (that is usually denoted by $L^{\uparrow}_{+}$). The easier way to express this isomorphism is to associate to a 4-vector $u^a=(u^0,u^1,u^2,u^3)$ a hermitian matrix $A$ such that 
\begin{equation}
\label{eqn:11}
A=\left(\begin{matrix}u^{00} & u^{01} \\ u^{10} & u^{11}\end{matrix}\right)=\left(\begin{matrix}u^0+u^{3}& u^1-iu^2\\u^1+iu^2 & u^0-u^3\end{matrix}\right)=u^{a}\sigma_{a},
\end{equation}
where we introduced $\sigma_{a}$ as
\begin{equation*}
\sigma_{a}=(\mathbb{I},\vec{\sigma}),
\end{equation*}
the matrices \begin{equation}
\label{eqn:pauli}
\sigma_1=\left(\begin{matrix}0 & 1 \\ 1 & 0\end{matrix}\right),\hspace{0.5cm}\sigma_2=\left(\begin{matrix}0& -i\\i & 0\end{matrix}\right),\hspace{0,5cm} \sigma_3=\left(\begin{matrix}1 & 0\\ 0& -1\end{matrix}\right),
\end{equation}
being the Pauli matrices. The components of the vector $u^a$ can be obtained as 
\begin{equation}
\label{eqn:8}
u_{a}=\frac{1}{2}\mathrm{tr}(\sigma_aA)
\end{equation}
since 
\begin{equation*}
\frac{1}{2}\mathrm{tr}(\sigma_aA)=\frac{1}{2}\mathrm{tr}(\sigma_a\sigma_bu^b)=\frac{1}{2}\mathrm{tr}\left(([\sigma_a,\sigma_b]/2+\{\sigma_a,\sigma_b \}/2\right)u^b)=
\end{equation*}
\begin{equation*}
=\frac{1}{2}\mathrm{tr}\left(i\epsilon_{abc}\sigma_cu^b+\eta_{ab}\mathbb{I}u^b\right)=u_a,
\end{equation*}
where the last equality is due to the traceless property of Pauli matrices. 
If now we consider, given any matrix $Q\in \mathrm{SL}(2,\mathbb{C})$, i.e. a $(2\times 2)$ complex matrix with $\mathrm{det}[Q]=1$ the product 
\begin{equation}
\label{eqn:9}
A'=QAQ^{\dagger},
\end{equation}
preserves both the determinant, $\mathrm{det}[A]=\mathrm{det}[A']$, i.e. the form
\begin{equation*}
\eta_{ab}u^au^b=\left(u^0\right)^2-\left(u^1\right)^2-\left(u^2\right)^2-\left(u^3\right)^2,
\end{equation*}
\begin{equation*}
\eta_{ab}=\textrm{diag}(1,-1,-1,-1),
\end{equation*} 
that expresses the pseudo-norm of $u^a$, and the hermicity. 
\\The argument can be carried out more generally. 
\begin{defn} $\\ $
\label{defn:action}
Let $G$ be a Lie group and $\mathscr{M}$ be a manifold. Define the \textit{action} of $G$ on $\mathscr{M}$ as a differentiable map $\sigma:G\times M\rightarrow M$ which satisfies the conditions
\begin{enumerate}
\item $\sigma(e,p)=p$\hspace{1.5cm} for any $p\in\mathscr{M}$
\item $\sigma(g_1,\sigma(g_2,p))=\sigma(g_1g_2,p)$
\end{enumerate}
 where $e$ is the identity of $G$, $g_1$ and $g_2$ are elements of $G$ and $g_1g_2$ is the product operation in $G$. 
 \end{defn}
The operation \eqref{eqn:9} can be now regarded to be an action $\sigma(A,u)$  of $\mathrm{SL}(2,\mathbb{C})$ on the space-time point of coordinates $u^a$ which possesses two important properties. We thus obtain a linear transformation of $u^a$ which preserves both reality and pseudo-norm, i.e. a Lorentz transformation
\begin{equation*}
u^a\rightarrow \Lambda^{a}{}_{b}u^b,
\end{equation*}
with $\Lambda^{a}{}_{b}\in \mathscr{L}$. Indeed if we obtain $u'^a$ from \eqref{eqn:9} using \eqref{eqn:8}, those components will be related to the old ones by a Lorentz transformation determined by $Q$. In this way we have built a homomorphism $\psi$: 
\begin{equation}
\label{eqn:10}
\psi:\pm Q\in \mathrm{SL}(2,\mathbb{C})\longrightarrow \psi(\pm Q)= \Lambda\in\mathscr{L}.
\end{equation}
It is easy to show \citep{Oblak16} that $\mathscr{L}\cong\mathrm{SL}(2,\mathbb{C})/\mathbb{Z}_2$. In other words $\mathrm{SL}(2,\mathbb{C})$ is the double cover of the connected component of the Lorentz group in four dimensions, and it is also its universal cover.\\
Usually we consider the components of a vector in a pseudo-orthonormal frame as an ordered array. However we have just shown that a completely equivalent way to order them and to perform transformations is to regard them as elements of a matrix \eqref{eqn:11}. Looking at a vector represented with a matrix allows us to interpret it as not the most elementary \lq vectorial\rq\hspace{0.1mm} object in space-time, but as a sort of \lq divalent\rq\hspace{0.1mm} quantity, composition of two monovalent quantities, called spin vectors.
\section{Spinor Algebra}
\label{sect:2.2}
\begin{defn} $\\ $
A \textit{spin space} $S$ is a complex 2-dimensional vector space equipped with a symplectic form, $\epsilon$, i.e. a bilinear skew-symmetric form. The elements of $S$ are called \textit{spin vectors} or \textit{spinors}.
\end{defn}
The presence of the symplectic form $\epsilon$ allows us to introduce in $S$ a skew-symmetric scalar product.
\begin{defn}$\\ $
The bilinear map defined as
\begin{equation*}
 (\xi,\eta)\in S\times S\longrightarrow[\xi,\eta]=-[\eta,\xi]\in \mathbb{C}.
\end{equation*}  
is called \textit{skew-symmetric scalar product}.
\end{defn}
With such a scalar product every spin vector is self-orthogonal. If $\eta$ is orthogonal to $\xi$  and not proportional to it the two spin vectors constitute a basis for $S$. Hence we give the following:
\begin{defn}$\\ $
Two spin vectors $(o,\iota)$ are said to form a \textit{normalized spin basis} if they satisfy $$[o,\iota]=1.$$
\end{defn}
\begin{oss}$\\ $
It is also possible to work with a non-normalized spin basis, as done in \cite{Penrin1}. In the remainder the indices $A,B,...$ will take values $0$ and $1$ and are referred to the components of spin vectors in a certain basis.
\end{oss}
Thus any spin vector $\xi\in S$ admits the representation 
\begin{equation*}
\xi=\xi^0o+\xi^1\iota,
\end{equation*}
and its components in the basis $(o,\iota)$ are denoted by $\xi^A$. 
\\Obviously we have\begin{equation*}
o^A=(1,0),\hspace{2cm}\iota^A=(0,1).
\end{equation*}
Since $S$ is a vector space, it admits a dual, denoted by  $S^*$. Through the skew-symmetric scalar product it is possible to define a natural isomorphism between $S$ and $S^*$:
\begin{equation}
\label{eqn:12}
\xi\in S\longrightarrow[\xi,\hspace{1.6mm}]\in S^*,
\end{equation}
that is a linear map
\begin{equation*}
\eta\in S\longrightarrow [\xi,\eta]\in\mathbb{C}.
\end{equation*}
The symplectic form can be identified with an element of $S^*\otimes S^*$,  $\epsilon_{AB}=-\epsilon_{BA}$, such that 
\begin{equation*}
[\xi,\eta]=\epsilon_{AB}\xi^A\eta^B.
\end{equation*}
The condition for $(o,\iota)$ to be a spin basis becomes
\begin{equation*}
\epsilon_{AB}o^Ao^B=\epsilon_{AB}\iota^A\iota^B=0,\hspace{1cm}\epsilon_{AB}o^A\iota^B=1.
\end{equation*}
In the frame $(o,\iota)$ we have
\begin{equation*}
\epsilon_{AB}=\left(\begin{matrix}0 & 1 \\ -1& 0\end{matrix}\right).
\end{equation*}
Since $\epsilon_{AB}$ is non-singular, there exists the inverse $(\epsilon^{-1})^{AB}$ which can be identified with an element of $S\otimes S$. By convention we denote
\begin{equation*}
\epsilon^{AB}=-(\epsilon^{-1})^{AB}=\left(\begin{matrix}0 & 1 \\ -1& 0\end{matrix}\right).
\end{equation*}
It is easy to show that
\begin{equation}
\label{eqn:13}
\epsilon^{AB}=o^A\iota^B-\iota^Ao^B.
\end{equation}
In this way the natural isomorphism \eqref{eqn:12} can be read in the following way: given a spin vector $\xi$ of components $\xi^B$ its dual can be identified with $\xi_B=\epsilon_{AB}\xi^A=\xi^A\epsilon_{AB}$. From this last property it follows that $\xi^B=\epsilon^{BC}\xi_{C}$.
Simple relations to show are the following:
\begin{align*}
&\epsilon_C ^{\enskip A}=\delta_C^{\enskip A}=-\epsilon^A_{\enskip C},\\
&\xi_A^{\enskip A}=-\xi^A_{\enskip A},\\
&\epsilon^{AB}\epsilon_{AB}=\epsilon_A^{\enskip A}=-\epsilon^A_{\enskip A}=2.
\end{align*}
We can always apply the usual symmetrization $(\quad)$ and antisymmetrization $[\quad]$ operations to a multivalent spinor $\tau_{...AB...}$. Since $S$ is a 2-dimensional space, for any multivalent spinor $\tau_{...AB...}$, we have $\tau_{...[ABC]...}=0$ because at least two of the bracketed indices must be equal.
As consequence we have the Jacobi identity
\begin{equation*}
\epsilon_{A[B}\epsilon_{CD]}=0=\epsilon_{AB}\epsilon_{CD}+\epsilon_{AC}\epsilon_{DB}+\epsilon_{AD}\epsilon_{BC}.
\end{equation*}
\begin{thm} $\\ $
\label{thm:asy}
Let $\tau_{...AB...}$ be a multivalent spinor. Then
 \begin{equation}
 \label{eqn:19}
 \tau_{...[AB]...}=\frac{1}{2}\epsilon_{AB}\tau_{...C}^{\quad C}{}_{...}
\end{equation} 
\end{thm}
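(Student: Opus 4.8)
The plan is to reduce \eqref{eqn:19} to a statement about a single skew-symmetrised pair of spinor indices and then to exploit that $\dim S=2$. I would freeze every index of $\tau$ other than $A$ and $B$, carry them along passively, and abbreviate $\mu_{AB}:=\tau_{...AB...}$; it then suffices to prove that $\mu_{[AB]}=\tfrac12\,\epsilon_{AB}\,\mu_{C}{}^{C}$, and to reinstate the inert indices at the end.

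\textbf{Step 1: proportionality to $\epsilon_{AB}$.} The object $\mu_{[AB]}$ is skew-symmetric in $A$ and $B$. Because $S$ is two-dimensional, the space of skew bivalent spinors is one-dimensional and spanned by $\epsilon_{AB}$ (the lowered-index form of \eqref{eqn:13}); equivalently this is the content of the identity $\tau_{...[ABC]...}=0$, which forces any rank-two antisymmetric part to be a multiple of $\epsilon$. Hence there is a unique scalar $\lambda$, depending only on the frozen indices, with $\mu_{[AB]}=\lambda\,\epsilon_{AB}$.

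\textbf{Step 2: fixing $\lambda$.} I would contract this relation with $\epsilon^{AB}$. Since $\epsilon^{AB}$ is itself skew, the antisymmetrisation is invisible under the contraction, $\epsilon^{AB}\mu_{[AB]}=\epsilon^{AB}\mu_{AB}$, while $\epsilon^{AB}\epsilon_{AB}=2$; therefore $2\lambda=\epsilon^{AB}\mu_{AB}$. By the raising convention of Section~\ref{sect:2.2} ($\xi^{B}=\epsilon^{BC}\xi_{C}$) one has $\mu_{C}{}^{C}=\epsilon^{CD}\mu_{CD}=\epsilon^{AB}\mu_{AB}$, so $\lambda=\tfrac12\,\mu_{C}{}^{C}$, and restoring the suppressed indices gives $\tau_{...[AB]...}=\tfrac12\,\epsilon_{AB}\,\tau_{...C}{}^{C}{}_{...}$, which is \eqref{eqn:19}.

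I do not anticipate a real obstacle; the only care needed is in the $\epsilon$-index bookkeeping — the order of indices on $\epsilon^{AB}$ inside the contraction and the sign in $\mu_{C}{}^{C}=-\mu^{C}{}_{C}$. As a compact cross-check, the whole computation collapses to one identity: from $\epsilon_{AB}\epsilon^{CD}=\delta_{A}{}^{C}\delta_{B}{}^{D}-\delta_{A}{}^{D}\delta_{B}{}^{C}$ (another face of two-dimensionality, derivable from the Jacobi identity recorded above) one gets, on contracting with $\mu_{CD}$, that $\mu_{AB}-\mu_{BA}=\epsilon_{AB}\,\mu_{C}{}^{C}$, i.e. \eqref{eqn:19} after dividing by $2$.
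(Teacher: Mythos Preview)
Your argument is correct. The paper does not actually supply its own proof of this statement; it simply remarks that the proof is easy and defers to Penrose--Rindler and Stewart. What you have written is precisely the standard argument found in those references: two-dimensionality of spin space forces the antisymmetric part to be a scalar multiple of $\epsilon_{AB}$, and contraction with $\epsilon^{AB}$ fixes the scalar. Your cross-check via $\epsilon_{AB}\epsilon^{CD}=\delta_{A}{}^{C}\delta_{B}{}^{D}-\delta_{A}{}^{D}\delta_{B}{}^{C}$ is the one-line version of the same reasoning and is also standard. The index bookkeeping you flag is handled correctly under the paper's conventions ($\xi^{B}=\epsilon^{BC}\xi_{C}$, $\epsilon^{AB}\epsilon_{AB}=2$), so there is nothing to amend.
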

The proof of this theorem is quite easy and can be found in \cite{Penrin1} or in \cite{Stew}.\\
For any multivalent spinor $\tau_{...AB...}$ the following decomposition holds
 \begin{equation}
 \label{eqn:14}
\tau_{...AB...}=\tau_{...(AB)...}+\tau_{...[AB]...}
\end{equation}
By theorem \ref{thm:asy} we get
\begin{equation}
\label{eqn:15}
\tau_{...AB...}=\tau_{...(AB)...}+\frac{1}{2}\epsilon_{AB}\tau_{...C}^{\quad C}{}_{...}
\end{equation}
\section{Spinors and Vectors\hspace{1cm}}
\label{sect:2.3}
\begin{defn}
We define the \textit{conjugation operation} as the following map of spin vectors from $S$ to a new spin space $S'$
\begin{equation*}
\alpha^A+c\beta^A\in S\longrightarrow\overline{\alpha^A+c\beta^A}=\bar{\alpha}^{A'}+\bar{c}\bar{\beta}^{A'}\in S'.
\end{equation*}
\end{defn}
\begin{oss}$\\ $
Some authors denote this operation as anti-isomorphism because its action on a complex number $c$ is to map it into its complex conjugate $\bar{c}$. For example, if we consider 
\begin{equation*}
\xi^A=\left(\begin{matrix}
a \\ b 
\end{matrix}\right),
\end{equation*}
then
\begin{equation*}
\overline{\xi^A}=\bar{\xi}^{A'}=\left(\begin{matrix}
\bar{a}\\ \bar{b}
\end{matrix}\right).
\end{equation*}
\end{oss}
Having introduced $S$ and $S'$ we are now ready to build tensorial quantities. Define a \textit{hermitian spinor} $\tau$ as one for which $\bar{\tau}=\tau$. Of course for this to make sense $\tau$ must have as many primed indices as unprimed ones, and their relative positions must be the same. For example, take an element of the tensor product $S\otimes S'$, $\tau^{AA'}$. Let $(o,\iota)$ and $(\bar{o},\bar{\iota})$ be the spin bases respectively for $S$ and $S'$. Then there exist scalars $\xi$, $\eta$, $\zeta$ and $\sigma$ such that
\begin{equation*}
\tau^{AA'}=\xi o^{A}\bar{o}^{A'}+\eta\iota^{A}\bar{\iota}^{A'}+\zeta o^{A}\bar{\iota}^{A'}+\sigma\iota^{A}\bar{o}^{A'}.
\end{equation*}
The hermiticity condition is equivalent to the statement that $\xi$ and $\eta$ are real and that $\zeta$ and $\sigma$ are complex conjugates. Thus the set of hermitian spin vectors $\tau^{AA'}$ forms a real vector space of dimension 4. This is the reason for which there exists an isomorphism between $T_pM$, the tangent space at a point in a 4-manifold and $S\otimes S'$:
\begin{equation*}
T_pM\cong S\otimes S'.
\end{equation*}
Similarly the set of hermitian spinors $\tau_{AA'}$ forms the dual of the vector space described above, isomorphic to $T^*_pM$. We can express those isomorphisms using the Infeld-van der Waerden symbols, $\sigma^a_{\enskip AA'}$ and $\sigma_a^{\enskip AA'}$. Note that the index $a$ is vectorial, and runs from $0$ to $3$. The correspondence $(A,A')\rightarrow a$ between spin vectors and vectors thus reads as
\begin{subequations}
\begin{align}
\label{eqn:16}
&v^{AA'}\longrightarrow v^a\equiv \sigma^a_{\enskip AA'}v^{AA'},\\
\label{eqn:17}
&v^a\longrightarrow v^{AA'}\equiv v^a\sigma_a^{\enskip AA'}.
\end{align}
\end{subequations}
We note now that, since $S$ and $S'$ are different vector spaces, we do not need to distinguish between $S\otimes S'$ and $S'\otimes S$. It follows that primed and unprimed indices can be interchanged, i.e. we have
\begin{equation*}
\tau_{AA'}=\tau_{A'A}.
\end{equation*}
Often in the remainder the Infeld-van der Waerden symbols will not appear again, their use being implicit.\\
Every spin basis defines a tetrad of vectors $(l,n,m,\bar{m})$ as
\begin{align}
&l^a=o^A\bar{o}^{A'},\hspace{0.5cm}n^a=\iota^A\bar{\iota}^{A'},\hspace{0.5cm}m^a=o^A\bar{\iota}^{A'},\hspace{0.5cm}\bar{m}^a=\iota^A\bar{o}^{A'}, \nonumber \\
\label{eqn:18}
&l_a=o_A\bar{o}_{A'},\hspace{0.5cm}n_a=\iota_A\bar{\iota}_{A'},\hspace{0.5cm}m_a=o_A\bar{\iota}_{A'},\hspace{0.5cm}\bar{m}_a=\iota_A\bar{o}_{A'}.
\end{align}
It is easy to show that $l^al_a=n^an_a=m^am_a=\bar{m}^a\bar{m}_a=0$ while $l_an^a=-m^a\bar{m}_a=1$ and the other mixed products vanish.
\begin{defn} $\\ $
The tetrad of vectors \eqref{eqn:18} is called  a \textit{Newman-Penrose (N-P) null tetrad}.
\end{defn}
We introduce the hermitian spinors 
\begin{equation}
\label{eqn:54}
g_{ABA'B'}=\epsilon_{AB}\epsilon_{A'B'},\hspace{1cm}g^{ABA'B'}=\epsilon^{AB}\epsilon^{A'B'},
\end{equation}
with tensor equivalent $g_{ab}$ and $g^{ab}$, which are obviously symmetric. It is easy to show, using \eqref{eqn:18}, that 
\begin{equation}
\label{eqn:69}
g_{ab}=2l_{(a}n_{b)}-2m_{(a}\bar{m}_{b)},\hspace{1cm}g^{ab}=2l^{(a}n^{b)}-2m^{(a}\bar{m}^{b)},
\end{equation}
and that $g_{ab}$ possesses all the properties of a metric tensor,
\begin{equation*}
l^a=g^{ab}l_b,\hspace{1cm}l_a=g_{ab}l^b,\hspace{1cm}etc.
\end{equation*}
i.e. can be used to raise or lower vectorial indices, and it satisfies
\begin{equation*}
g_{ab}g^{bc}=\delta^c_a,\hspace{1cm}g_{ab}g^{ab}=4.
\end{equation*}
\\
Furthermore, defining a tetrad of vectors
\begin{align}
&e_{\hat{0}}=\frac{(l+n)}{\sqrt{2}}=\frac{o^A\bar{o}^{A'}+\iota^A\bar{\iota}^{A'}}{\sqrt{2}},\hspace{1cm}e_{\hat{1}}=\frac{(m+\bar{m})}{\sqrt{2}}=\frac{o^A\iota^{A'}+\iota^{A}o^{A'}}{\sqrt{2}}, \nonumber \\
\label{eqn:35}
&e_{\hat{2}}=\frac{i(m-\bar{m})}{\sqrt{2}}=\frac{i(o^A\bar{\iota}^{A'}-\iota^{A}\bar{o}^{A'})}{\sqrt{2}},\hspace{1cm}e_{\hat{3}}=\frac{(l-n)}{\sqrt{2}}=\frac{o^A\bar{o}^{A'}-\iota^{A}\bar{\iota}^{A'}}{\sqrt{2}},
\end{align}
we have in such a base
\begin{equation}
\label{eqn:71}
g_{\hat{a}\hat{b}}=\eta_{\hat{a}\hat{b}}=\textrm{diag}(1,-1,-1,-1).
\end{equation}
The existence of a spinor structure fixes the signature of space-time to be Minkowskian.
In this case a suitable choice for the Infeld-van der Waerden symbols could be
\begin{equation*}
\sigma_{\hat{a}}^{\enskip AA'}=\frac{1}{\sqrt{2}}\sigma_{\hat{a}},\hspace{1cm}a=0,1,2,3
\end{equation*}
where $\sigma_{\hat{a}}$ are the usual Pauli matrices \eqref{eqn:pauli}.
\begin{figure}[h]
\begin{center}
\includegraphics[scale=0.45]{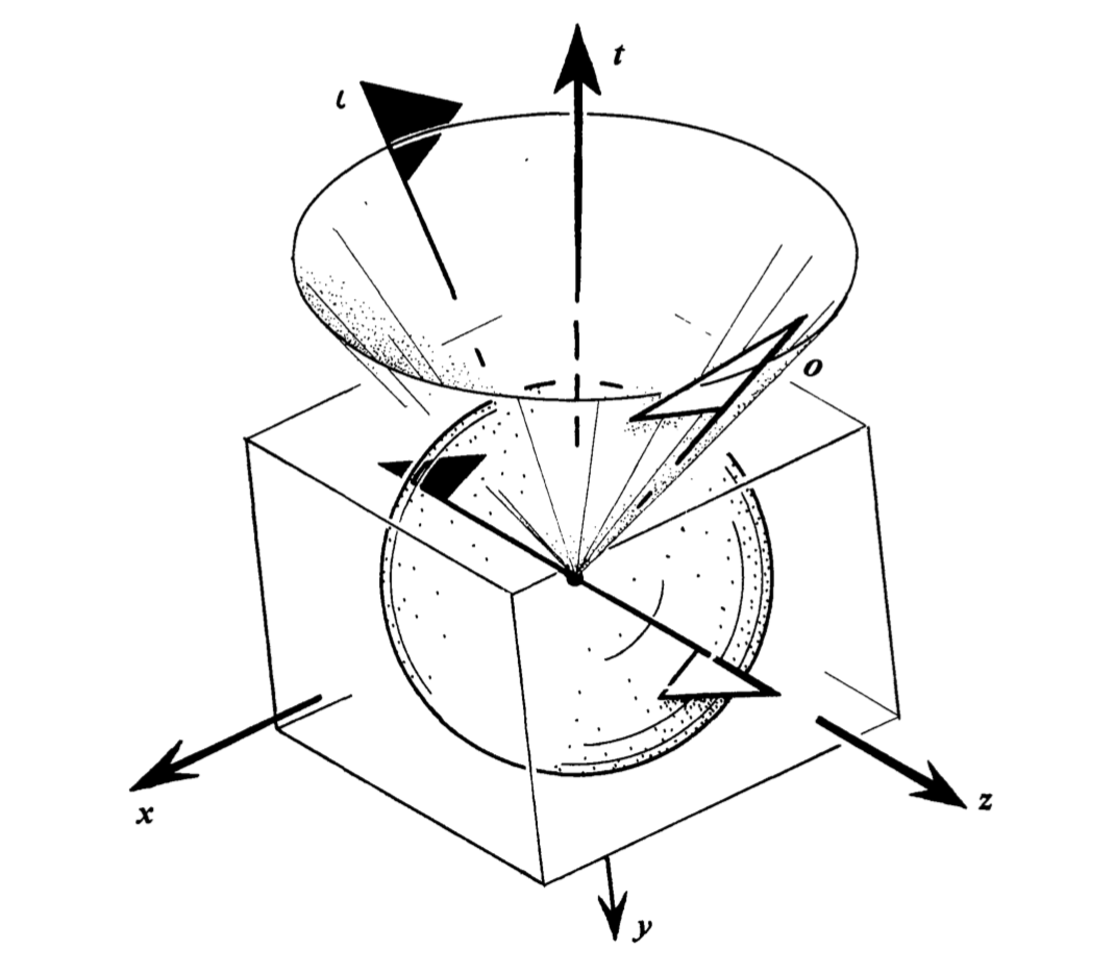}
\caption{The standard relation between a spin-frame $(o^A,\iota^A)$ and the tetrad of equation \eqref{eqn:35}, with $t=e_{\hat{0}}$, $x=e_{\hat{1}}$, $y=e_{\hat{2}}$ and $z=e_{\hat{3}}$.}
\label{fig:2.6}
\end{center}
\end{figure}
\begin{defn} $\\ $
The tetrad of vectors introduced in \eqref{eqn:35} is called a \textit{Minkowski tetrad}.
\end{defn}
We may ask now how to describe a curved space-time, that is of main interest in General Relativity. To do that the most efficient way is to use the tetrad formalism. \\
We introduce a tetrad vector $e_a^{\enskip\hat{c}}$ and a dual basis of covector $e^a_{\enskip\hat{c}}$, i.e. $e_a^{\enskip\hat{c}}e^b{}_{\hat{c}}=\delta_a{}^b$. Note that the hatted indices label the vectors, while unhatted ones label the components with respect to some arbitrarily chosen basis. For any generic metric tensor $g$, we have
\begin{equation*}
g=g_{ab}dx^a\otimes dx^b=e_a^{\enskip\hat{c}}e_b^{\enskip\hat{d}}\eta_{\hat{c}\hat{d}}dx^a\otimes dx^b=e^{\hat{c}}\otimes e^{\hat{d}}\eta_{\hat{c}\hat{d}},
\end{equation*}
where $e^{\hat{a}}=e^{\hat{a}}{}_{b}dx^b$ is the tetrad 1-form.
We denote the inverse of a tetrad as $e^a_{\enskip\hat{c}}$ so that
\begin{equation*}
e^a_{\enskip\hat{c}}e_a^{\enskip\hat{d}}=\delta_{\hat{c}}^{\enskip \hat{d}}.
\end{equation*}
We may now try to include tetrads in the definitions of the isomorphism \eqref{eqn:16}, \eqref{eqn:17} in the following way
\begin{align*}
&v^{AA'}\longrightarrow v^a=e^a_{\enskip \hat{c}}v^{\hat{c}}= e^a_{\enskip \hat{c}}\sigma^{\hat{c}}_{\enskip AA'}v^{AA'}\equiv  e^{a}_{\enskip AA'}v^{AA'},\\
&v^a\longrightarrow v^{AA'}= v^{\hat{c}}\sigma_{\hat{c}}^{\enskip AA'}= v^{a}e_{a}^{\enskip \hat{c}}\sigma_{\hat{c}}^{\enskip AA'}\equiv v^a e_{a}^{\enskip AA'},
\end{align*}
where $e_{a}^{\enskip\hat{c}}\sigma_{\hat{c}}^{\enskip AA'}= e_{a}^{\enskip AA'}$ is called the \textit{soldering form}, and is, by construction, a spinor-valued one form which encodes the relevant informations about the metric. \\
\section{Null Flags and Spinor Structure on $\mathscr{M}$}
\label{sect:2.4}
We now proceed to the space-time interpretation of spin vectors. As we have already seen in \eqref{eqn:18} every univalent spinor $\kappa^A$ defines a real null vector $k^a=\kappa^A\bar{\kappa}^{A'}$. This is a special case of a more general theorem.
\begin{thm} $\\ $
 Every non-vanishing real null vector $k^a$ can be written in one or other of the forms
\begin{equation}
\label{eqn:34}
k^a=\pm \kappa^A\bar{\kappa}^{A'}.
\end{equation}
\end{thm}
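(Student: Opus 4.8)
The plan is to translate the statement into an elementary fact about Hermitian $2\times 2$ matrices via the spinor--vector correspondence. First I would use \eqref{eqn:17} to pass from $k^a$ to its spinor equivalent $k^{AA'}$; reality of $k^a$ is exactly Hermiticity of $k^{AA'}$, and conversely every Hermitian element of $S\otimes S'$ arises from a real vector in this way. In a Minkowski tetrad $k^{AA'}$ is represented by the matrix displayed in \eqref{eqn:11}, so a direct computation shows that $\det k^{AA'}$ equals $g_{ab}k^ak^b$ up to a fixed positive constant. Hence $k^a$ being null is equivalent to $\det k^{AA'}=0$, i.e. to the Hermitian matrix $k^{AA'}$ having rank at most one; and since $k^a\neq 0$, its rank is exactly one.

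The core of the argument is then the linear-algebraic observation that a rank-one Hermitian matrix is, up to an overall sign, the outer product of a vector with its own conjugate. Concretely, by the spectral theorem $k^{AA'}$ has a single nonzero eigenvalue $\lambda$, necessarily real, together with the eigenvalue $0$, and an orthonormal eigenbasis; writing $v^A$ for a unit $\lambda$-eigenvector we obtain $k^{AA'}=\lambda\,v^A\bar v^{A'}$. If $\lambda>0$ set $\kappa^A=\sqrt{\lambda}\,v^A$, giving $k^{AA'}=\kappa^A\bar\kappa^{A'}$; if $\lambda<0$ set $\kappa^A=\sqrt{-\lambda}\,v^A$, giving $k^{AA'}=-\kappa^A\bar\kappa^{A'}$. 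Transcribing back through \eqref{eqn:16} yields $k^a=\pm\kappa^A\bar\kappa^{A'}$, which is \eqref{eqn:34}. Furthermore $\lambda=\mathrm{tr}\,k^{AA'}$ has the same sign as the time component $k^0$ of $k^a$, so the $+$ case occurs precisely when $k^a$ is future-pointing and the $-$ case when it is past-pointing; thus the sign is determined invariantly by the time orientation.

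I do not expect a genuine obstacle; the statement is elementary once the correspondence is set up. The two places that call for care are (i) verifying that $\det k^{AA'}$ is a positive multiple of $g_{ab}k^ak^b$, so that ``null'' genuinely becomes ``rank at most one'', and (ii) keeping track of the overall sign, which is cleanest done invariantly through $\lambda$ (equivalently through the time orientation of $k^a$) rather than by a case analysis. As an alternative to invoking the spectral theorem, one may reduce to the future-pointing case by replacing $k^a$ with $-k^a$ if necessary and then exhibit $\kappa^A$ explicitly: when $k^0+k^3\neq 0$ take $\kappa^0=\sqrt{k^0+k^3}$ and $\kappa^1=(k^1+ik^2)/\sqrt{k^0+k^3}$, the null condition giving $|\kappa^1|^2=k^0-k^3$ and hence $k^{AA'}=\kappa^A\bar\kappa^{A'}$; the degenerate subcase $k^0+k^3=0$ forces $k^1=k^2=0$ and $k^3=-k^0<0$, and is handled by $\kappa^0=0$, $\kappa^1=\sqrt{2k^0}$.
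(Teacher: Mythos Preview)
Your argument is correct, and the two approaches share the same opening move: both translate ``null'' into ``$\det k^{AA'}=0$'', hence rank one. They diverge at the factorisation step. The paper first uses only the rank-one condition to write $\omega^{AA'}=\kappa^A\bar{\lambda}^{A'}$ for \emph{some} pair of spinors, and then imposes Hermiticity as a second, separate condition: transvecting the reality equation $\kappa^A\bar{\lambda}^{A'}=\lambda^A\bar{\kappa}^{A'}$ with $\kappa_A$ forces $\kappa_A\lambda^A=0$, whence $\lambda^A$ is proportional to $\kappa^A$ and a rescaling (absorbing the real proportionality constant) yields \eqref{eqn:34}. You instead feed Hermiticity in from the outset via the spectral theorem, so the rank-one Hermitian matrix appears immediately as $\lambda\,v^A\bar v^{A'}$ with real $\lambda$, and the sign of $\lambda$ gives the $\pm$. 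Your route buys an explicit and invariant identification of the sign with the time orientation of $k^a$ (through $\lambda=\mathrm{tr}\,k^{AA'}$), which the paper's proof does not make explicit; the paper's route is slightly more spinor-intrinsic, using only the $\epsilon$-contraction $\kappa_A\lambda^A$ rather than an auxiliary Hermitian inner product and eigenbasis. Your explicit coordinate alternative at the end is closer in spirit to the paper's argument than your spectral-theorem main line.
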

\begin{proof}[Proof:] To prove that $k^a$ defined in \eqref{eqn:34} is null is sufficient to note that 
\begin{equation*}
\kappa^A\kappa_A=(\kappa^0o^A+\kappa^1\iota^A)(\kappa^0o_A+\kappa^1\iota_A)=(\kappa^0)^2o^Ao_A+(\kappa^1)^2\iota^A\iota_A+(\kappa^0\kappa^1)(o^A\iota_A+\iota^Ao_A)=0.
\end{equation*}
Suppose conversely that $k^a=\omega^{AA'}$ is real and null. The nullity condition is
\begin{equation*}
\epsilon_{AB}\epsilon_{A'B'}\omega^{AA'}\omega^{BB'}=0
\end{equation*}
which says that the $2\times2$ matrix $\omega^{AA'}$ has vanishing determinant, so that the rows or the columns are linearly dependent. This means that there exist univalent spin vectors $\kappa$, $\lambda$, such that
\begin{equation*}
\omega^{AA'}=\kappa^A\bar{\lambda}^{A'}.
\end{equation*}
The reality condition is 
\begin{equation*}
\kappa^{A}\bar{\lambda}^{A'}=\lambda^A\bar{\kappa}^{A'}.
\end{equation*}
Multiplying by $\kappa_A$ implies that $\kappa_A\lambda^A=0$ so that $\lambda$ must be proportional to $\kappa$. Rescaling $\kappa$ we get \eqref{eqn:34}.
\end{proof}
As just shown every univalent spin vector $\kappa^A$ defines a null vector $k^a$, but given a real $\theta$, $e^{i\theta}\kappa^A$ defines the same null vector, so that in $\kappa$ there is some additional phase information. Complete now $\kappa$ to a spin basis  $(\kappa,\mu)$ and take into account the vectors
\begin{equation*}
s^a=\frac{1}{\sqrt{2}}(\kappa^A\bar{\mu}^{A'}+\mu^{A}\bar{\kappa}^{A'}),
\end{equation*}
\begin{equation*}
t^a=\frac{i}{\sqrt{2}}(\kappa^A\bar{\mu}^{A'}-\mu^{A}\bar{\kappa}^{A'}),
\end{equation*}
that are both spacelike and  orthogonal to $k^a$. Together those vectors span a spacelike 2-surface orthogonal to $k$. If now we perform the phase change $\kappa\rightarrow e^{i\theta}\kappa$ we get $\bar{\kappa}\rightarrow e^{-i\theta}\kappa$. Since $(\kappa,\mu)$ form a spin basis, i.e. $\kappa_A\mu^A=1$, it follows that $\mu\rightarrow e^{-i\theta}\mu$. We thus have
\begin{equation*}
s^a\rightarrow s'^a=s^a\cos 2\theta+t^a\sin 2\theta.
\end{equation*}
The interpretation that Penrose suggests is to imagine a univalent spin vector $\kappa^A$ as a flag whose flagpole is defined to be parallel to the direction of $k^a$, while the flag itself lies in the two plane spanned by $k^a$ and $s^a$. Thus the flagpole lies in the plane of the flag (see Figure \ref{fig:2.61}).
\begin{figure}[h]
\begin{center}
\includegraphics[scale=0.45]{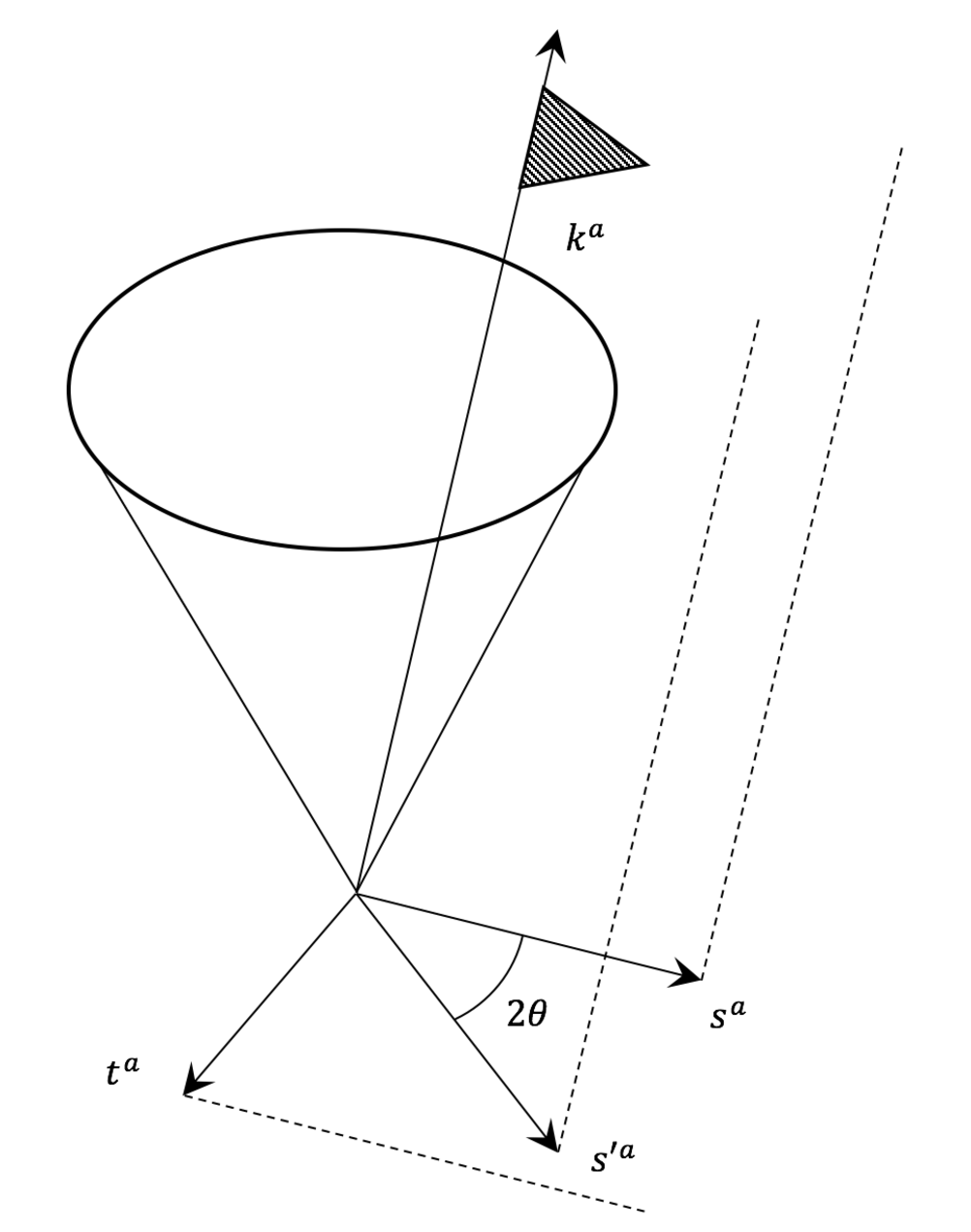}
\caption{The null flag representing $\kappa^A$, and its relation to $k^a$ and $s^a$}
\label{fig:2.61}
\end{center}
\end{figure}
Changing a phase by $\theta$ leaves the flagpole invariant but rotates the flag plane around the pole by an angle $2\theta$. If $\theta=\pi$ the flag plane is left invariant, but $\kappa^A$ takes a minus sign. This is deeply linked to the fact that the homomorphism between $\mathrm{SL}(2,\mathbb{C})$ and $\mathscr{L}$ is $(2,1)$.\\
 We investigate now what are the restrictions to put on $\mathscr{M}$ for it to allow objects like spin vectors to be defined globally. By theorem \eqref{eqn:34} we see that at a certain point $p$ of $\mathscr{M}$ we have an absolute distinction between the two null half-cones: we define the future-pointing and past-pointing null vectors to be those for which, respectively, the decomposition \eqref{eqn:34} holds with a plus or with a minus sign. Thus the request of existence of spin vectors allows us to divide continuously over $\mathscr{M}$ the null half-cones of $\mathscr{M}$ into two classes, \lq future\rq\hspace{0.1mm} and \lq past\rq\hspace{1mm}. Thus according to definition \ref{defn:to} we see that $\mathscr{M}$ must be time-orientable. Furthermore the effect of a phase change on a flag can be used to define a space orientation: the flag plane rotation corresponding to $\theta>0$ is defined to be right-handed. Hence we have a second restriction on $\mathscr{M}$, that has to be space-orientable. It follows that the tetrad of vectors in \eqref{eqn:35} has the standard orientation, i.e. $e_{\hat{0}}$ is future-pointing, and $(e_{\hat{1}},e_{\hat{2}},e_{\hat{3}})$ form a right-handed triad of vectors.\\
 So far we have shown that a necessary condition for the existence of spinor fields in some region of the space-time is that the manifold $\mathscr{M}$ must be both time- and space-orientable (i.e. \textit{orientable}). But those requirements are not sufficient. In fact $\mathscr{M}$ must also permit a \textit{spin structure} to be defined on it, which means, roughly speaking, a prescription for keeping track of the sign of a spin  vector not only if we move it around at a fixed point of $\mathscr{M}$, but also if we move it around from point to point within $\mathscr{M}$ (for a more accurate description of the problem see \cite{Pen67} or \cite{Penrin1}). It should be emphasized that the question of existence of spin structure on a manifold $\mathscr{M}$ is not the same question as that of the existence of certain spinor fields on $\mathscr{M}$. In fact without the spin structure, the concept of global spinor field does not exist. If $\mathscr{M}$ is orientable and admits a spin structure then we say that $\mathscr{M}$ has a \textit{spinor structure}.\\ The problem was deeply studied by \cite{Ger68} and \cite{Ger70}. It turns out that, assuming orientability, the condition on $\mathscr{M}$ for existence and uniqueness of spinor structure depend \textit{only} on the topology of $\mathscr{M}$ and not on the nature of its (Lorentzian) metric. In fact a topologically trivial $\mathscr{M}$ admits a unique spinor structure, while a topologically non-trivial one may or may not permit a consistent spinor structure, and if it does, the possible spin structure may or may not be unique. The result obtained by Geroch is the following
\begin{thm}{\citep{Ger68}} $\\ $
\label{thm:Geroch2}
If $\mathscr{M}$ is a non-compact 4-dimensional manifold, then a necessary and sufficient condition that it should have spinor structure is the existence of four continuous vector fields on $\mathscr{M}$ which constitute a Minkowski tetrad in the tangent space at each point of $\mathscr{M}$, i.e. if there exists on $\mathscr{M}$ a global system of orthonormal tetrads.
\end{thm}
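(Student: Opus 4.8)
The plan is to translate both sides of the equivalence into statements about the bundle of orthonormal frames of $(\mathscr{M},g)$. Write $F(\mathscr{M})$ for the principal bundle whose fibre over $p$ consists of the Minkowski tetrads at $p$ with the standard (future-pointing, right-handed) orientation; its structure group is the connected component $\mathscr{L}\cong\mathrm{SL}(2,\mathbb{C})/\mathbb{Z}_2$. A global system of orthonormal tetrads is exactly a global section of $F(\mathscr{M})$, i.e. a trivialisation $F(\mathscr{M})\cong\mathscr{M}\times\mathscr{L}$, and this is equivalent to parallelisability of $T\mathscr{M}$. A spinor structure is, with the definition adopted above, orientability of $\mathscr{M}$ together with a principal $\mathrm{SL}(2,\mathbb{C})$-bundle double-covering $F(\mathscr{M})$ and restricting on each fibre to the covering homomorphism $\psi$ of \eqref{eqn:10}. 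So the theorem becomes: $F(\mathscr{M})$ is trivial if and only if $\mathscr{M}$ is orientable and $F(\mathscr{M})$ admits such a double cover.

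The sufficiency direction is immediate and does not use non-compactness. Given a global orthonormal tetrad, flip individual legs if necessary --- using connectedness of $\mathscr{M}$ --- to put it in standard orientation, so that $F(\mathscr{M})\cong\mathscr{M}\times\mathscr{L}$. Then $\mathscr{M}\times\mathrm{SL}(2,\mathbb{C})$ together with the map $\mathrm{id}\times\psi$ is a spin structure, and $T\mathscr{M}\cong\mathscr{M}\times\mathbb{R}^{4}$ is in particular orientable. Hence $\mathscr{M}$ carries a spinor structure.

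For the necessity direction --- the real content --- suppose $\mathscr{M}$ has a spinor structure. Then $\mathscr{M}$ is orientable and, as discussed above, time- and space-orientable; in particular (Section~\ref{sect:1.2}) there is a nowhere-vanishing timelike field $t^{a}$, which I normalise to $g(t,t)=1$. This gives a splitting $T\mathscr{M}\cong\langle t\rangle\oplus E$ with $E=t^{\perp}$ an oriented rank-$3$ bundle carrying the positive-definite metric $-g|_{E}$, so the frame bundle of $E$ has structure group $\mathrm{SO}(3)$. Since $\langle t\rangle$ is trivial, $w_{2}(E)=w_{2}(T\mathscr{M})$, and the existence of the $\mathrm{SL}(2,\mathbb{C})$-cover of $F(\mathscr{M})$ forces this class to vanish (standard: $\mathscr{L}$ deformation retracts onto $\mathrm{SO}(3)$, $\mathrm{SL}(2,\mathbb{C})$ double-covers it, and the obstruction to the double cover is $w_{2}$ of the associated $\mathrm{SO}(3)$-bundle). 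I then trivialise $E$ by obstruction theory: the obstructions to a section of its $\mathrm{SO}(3)$-frame bundle lie in $H^{k+1}(\mathscr{M};\pi_{k}(\mathrm{SO}(3)))$, and since $\pi_{0}(\mathrm{SO}(3))=\pi_{2}(\mathrm{SO}(3))=0$ the only candidates are the primary obstruction in $H^{2}(\mathscr{M};\mathbb{Z}_{2})$, equal to $w_{2}(E)=0$, and a class in $H^{4}(\mathscr{M};\pi_{3}(\mathrm{SO}(3)))$. Non-compactness enters precisely here: a connected non-compact $4$-manifold has the homotopy type of a CW complex of dimension at most $3$, so $H^{4}(\mathscr{M};G)=0$ for every coefficient group $G$ and the remaining obstruction is void. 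Hence $E$, and with it $T\mathscr{M}\cong\langle t\rangle\oplus E$, is trivial; a trivialisation, corrected by Gram--Schmidt with respect to $g$ along the splitting and by fixing orientations, yields a global Minkowski tetrad.

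The main obstacle is this last direction, and in it two imported facts do the work: the identification of the primary $\mathrm{SO}(3)$-section obstruction with $w_{2}$ (together with the vanishing of $\pi_{2}(\mathrm{SO}(3))$, which is what prevents a further obstruction in degree $3$), and the topological input that a non-compact $n$-manifold is homotopy equivalent to an $(n-1)$-complex, which kills the degree-$4$ obstruction and is exactly the hypothesis that cannot be dropped. The reduction of the frame bundle via $t^{a}$, the equivalence of a global tetrad with parallelisability, and the whole sufficiency direction are routine.
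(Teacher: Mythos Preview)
The paper does not supply a proof of this theorem: it is stated with a citation to \cite{Ger68} and immediately followed by commentary, with no proof environment. So there is nothing in the paper to compare against line by line.

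That said, your argument is correct and is essentially Geroch's original proof. The sufficiency direction is, as you say, immediate. For necessity, your reduction via the timelike field $t^{a}$ to an oriented Euclidean rank-$3$ bundle $E$, the identification $w_{2}(E)=w_{2}(T\mathscr{M})$, and the vanishing of $w_{2}$ from the existence of the $\mathrm{SL}(2,\mathbb{C})$ lift are all standard and sound. The obstruction-theoretic step is the heart of the matter, and you have it right: the only nonzero homotopy groups of $\mathrm{SO}(3)$ in the relevant range are $\pi_{1}\cong\mathbb{Z}_{2}$ and $\pi_{3}\cong\mathbb{Z}$, the primary obstruction in $H^{2}(\mathscr{M};\mathbb{Z}_{2})$ is $w_{2}(E)=0$, and the final obstruction in $H^{4}(\mathscr{M};\mathbb{Z})$ vanishes precisely because a connected non-compact smooth $4$-manifold has the homotopy type of a $3$-complex. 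This is exactly where non-compactness is used, and you have isolated it correctly. One minor point worth stating explicitly for completeness: the coefficient systems in the obstruction groups are a priori local, but orientability of $E$ (which you have) makes the $\pi_{1}(\mathscr{M})$-action on $\pi_{k}(\mathrm{SO}(3))$ trivial, so ordinary cohomology suffices.
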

Hence a manifold $\mathscr{M}$ can be equipped with a spinor structure if and  only if it is \textit{parallelizable}, i.e. it admits a set of $n$ (dimension of the manifold) vector fields defined through all $\mathscr{M}$ which at every $p\in\mathscr{M}$ constitute a basis for $T_p$.  A classical example of a $2$-dimensional manifold  which is not parallelizable is the sphere $S^2$. \\
We remind here that, from our definition \ref{defn:st}, a space-time is paracompact and hence non-compact. Thus theorem \ref{thm:Geroch2} can be applied to space-times. Furthermore we remark that the property of non-compactness, (i.e. losely speaking, \lq open\rq) is a very reasonable one for $\mathscr{M}$, preventing it to contain closed time-like curves, as explained in section \ref{sect:1.6}.\\
While theorem \ref{thm:Geroch2} represents a strong condition to be satisfied, it is not always the most convenient way to decide whether or not a given space-time has spinor structure. In \cite{Ger70}, it is developed some
criteria for the existence of spinor structure, based on the neighbourhoods of certain 2-spheres in $\mathscr{M}$. With each such 2-sphere, $S$, it can be associated an index, defined as the number of times that $S$ intersects a surface obtained by slightly deforming $S'$. That this index be even for each $S$ in $\mathscr{M}$ is a necessary and sufficient condition for $\mathscr{M}$ to have a spinor structure.
 \section{The Petrov Classification}
 \label{sect:2.5}
In tensor algebra we usually define the totally skew-symmetric tensor $\epsilon_{abcd}$ as \begin{equation*}
\epsilon_{abcd}=\epsilon_{[abcd]},\hspace{0.5cm}\epsilon_{abcd}{}^{abcd}=-24,\hspace{0.5cm}\epsilon_{\hat{0}\hat{1}\hat{2}\hat{3}}=1.
\end{equation*}
It can be shown that spinorially those properties are satisfied by  
\begin{equation*}
\epsilon_{abcd}=i(\epsilon_{AB}\epsilon_{CD}\epsilon_{A'C'}\epsilon_{B'D'}-\epsilon_{AC}\epsilon_{BD}\epsilon_{A'B'}\epsilon_{C'D'}).
\end{equation*}
\begin{thm} $\\ $
\label{thm:1}
Suppose the spinor $\tau_{AB...C}$ to be totally symmetric. Then there exist univalent spinors $\alpha_A$, $\beta_B$,...,$\gamma_C$, such that 
\begin{equation}
\label{eqn:20}
\tau_{AB...C}=\alpha_{(A}\beta_B{}_{...}\gamma_{C)}.
\end{equation}
$\alpha$, $\beta$,...,$\gamma$ are called the principal spinors of $\tau$. The corresponding real null vectors, obtained using \eqref{eqn:35}, are called principal null directions (PND) of $\tau$. 
\end{thm}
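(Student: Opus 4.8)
The plan is to reduce everything to the Fundamental Theorem of Algebra by contracting $\tau$ with a single variable spin vector. Suppose $\tau_{AB\dots C}$ carries $n$ symmetric indices, fix a normalized spin basis $(o^A,\iota^A)$, and introduce the variable spin vector $\xi^A=x\,o^A+y\,\iota^A$ with $(x,y)\in\mathbb{C}^2$, so that $\xi^0=x$ and $\xi^1=y$. First I would form the scalar
\begin{equation*}
P(x,y)=\tau_{AB\dots C}\,\xi^A\xi^B\cdots\xi^C .
\end{equation*}
Expanding and using total symmetry, $P(x,y)=\sum_{k=0}^{n}\binom{n}{k}\,\tau^{(k)}\,x^{\,n-k}y^{\,k}$, where $\tau^{(k)}$ is the basis component of $\tau$ with $n-k$ indices equal to $0$ and $k$ equal to $1$; hence $P$ is a homogeneous polynomial of degree $n$ in $(x,y)$.

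If $P\equiv 0$, then (by the lemma below) $\tau$ itself vanishes and we may take $\alpha_A=0$ with the remaining spinors arbitrary. Otherwise, since $\mathbb{C}$ is algebraically closed, $P$ factorizes as $P(x,y)=\prod_{i=1}^{n}(a_i x+b_i y)$, an overall nonzero constant having been absorbed into the factors. Each linear form $a_i x+b_i y$ equals $\alpha^{(i)}_A\xi^A$ for the spinor $\alpha^{(i)}_A$ whose components in the chosen basis are $(a_i,b_i)$, because $\alpha_A\xi^A=\alpha_0 x+\alpha_1 y$. Relabelling these $n$ spinors as $\alpha_A,\beta_B,\dots,\gamma_C$, we obtain, for every $\xi^A$,
\begin{equation*}
\tau_{AB\dots C}\,\xi^A\cdots\xi^C=(\alpha_A\xi^A)(\beta_B\xi^B)\cdots(\gamma_C\xi^C)=\alpha_{(A}\beta_B\cdots\gamma_{C)}\,\xi^A\cdots\xi^C ,
\end{equation*}
the last equality holding because a contraction with $n$ copies of the same $\xi^A$ only detects the totally symmetric part of the spinor against which it is contracted.

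To pass from this identity of scalars to the identity of spinors $\tau_{AB\dots C}=\alpha_{(A}\beta_B\cdots\gamma_{C)}$, I would prove the polarization lemma: a totally symmetric spinor $\psi_{AB\dots C}$ with $\psi_{AB\dots C}\,\xi^A\cdots\xi^C=0$ for all $\xi^A$ vanishes identically. Indeed, exactly as above $\psi_{AB\dots C}\,\xi^A\cdots\xi^C=\sum_{k=0}^{n}\binom{n}{k}\,\psi^{(k)}\,x^{\,n-k}y^{\,k}$ with $\psi^{(k)}$ the component of $\psi$ having $k$ indices equal to $1$; vanishing of this polynomial in $x,y$ forces every $\psi^{(k)}=0$, and by total symmetry every component of $\psi$ is zero. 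Applying the lemma to the totally symmetric spinor $\psi=\tau-\alpha_{(A}\beta_B\cdots\gamma_{C)}$ completes the argument.

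The main obstacle is essentially invisible: the whole theorem rests on the Fundamental Theorem of Algebra applied to $P(x,y)$, and everything else is bookkeeping with the symplectic form $\epsilon_{AB}$ and with index positions. It is worth noting as a byproduct that the factorization is unique up to a permutation of the factors and rescalings $\alpha_A\mapsto\lambda_i\alpha_A$ with $\prod_i\lambda_i=1$, since it corresponds to the unordered list of roots of $P$ counted with multiplicity; this is precisely what justifies speaking of \emph{the} principal spinors, and hence the principal null directions, of $\tau$.
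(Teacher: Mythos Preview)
Your proof is correct and follows essentially the same route as the paper: contract the symmetric spinor with $n$ copies of a variable $\xi^A=(x,y)$ to obtain a homogeneous polynomial of degree $n$, factorize it over $\mathbb{C}$, and read off the principal spinors from the linear factors. The paper compresses the final step into the phrase ``since $x$ and $y$ are arbitrary,'' whereas you spell out the polarization lemma, the trivial case $P\equiv 0$, and the uniqueness statement explicitly; these additions are welcome but do not change the underlying argument.
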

\begin{proof}[Proof:] Let $\tau$ have valence $n$ and let $\xi^A=(x,y)$. Define then 
\begin{equation*}
\tau(\xi)=\tau_{AB...C}\xi^A\xi^B...\xi^{C}.
\end{equation*}
This is a homogeneous polynomial of degree $n$ in the $(x,y)$ and so we may factorize it as
\begin{equation*}
\tau(\xi)=(\alpha_0x-\alpha_1y)(\beta_0x-\beta_1y)...(\gamma_0x-\gamma_1y),
\end{equation*}
which proves the result since $x$ and $y$ are arbitrary. 
\end{proof}
We observe that from \eqref{eqn:20} that if $\xi^A\neq 0$, then
\begin{equation*}
\tau_{AB...C}\xi^A\xi^B...\xi^{C}=0
\end{equation*}
if and only if $\xi^A$ is a principal spinor. We can say more in the case of a multiple PND. Suppose $\alpha_A$ is a $k$-fold principal spinor,
\begin{equation}
\label{eqn:92}
\tau_{AB...CD...L}=\alpha_{(A}\alpha_{B}...\alpha_{C}\eta_{D}...\lambda_{L)},
\end{equation}
so that $\alpha_A$ occurs $k$ times on the right, none of the spinors $\eta_A$,...,$\lambda_{A}$ being proportional to $\alpha_A$.  Then we have, multiplying \eqref{eqn:92} with the product $\alpha^D...\alpha^L$ of $n-k$ $\alpha$'s,
\begin{equation*}
\tau_{AB...CD...L}\alpha^D...\alpha^L=\kappa\alpha_A\alpha_B...\alpha_C,
\end{equation*}
where
\begin{equation*}
\kappa=\frac{k!(n-k)!}{n!}(\eta_D\alpha^D)...(\lambda_L\alpha^L)\neq 0.
\end{equation*}
If, on the other hand, we multiply \eqref{eqn:92} with $n-k+1$ $\alpha$s it is clear that the expression vanishes. Thus:
\begin{prop}$\\ $
\label{prop1}
A necessary and sufficient condition that $\xi_A\neq 0$ be a $k$-fold principal spinor of the non-vanishing symmetric spinor $\tau_{AB...L}$ is that
\begin{equation*}
\tau_{AB...CD...L}\xi^D...\xi^L
\end{equation*}
should vanish if $n-k+1$ $\xi$'s are transvected with $\tau_{AB...L}$ but not if only $n-k$ $\xi$'s are transvected with $\tau_{AB...L}$.
\end{prop}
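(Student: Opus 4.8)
The plan is to reduce the statement to the factorisation of Theorem~\ref{thm:1} together with the elementary fact that a fully symmetrised product of non-vanishing univalent spinors never vanishes. First I would recall that, in the notation of the proof of Theorem~\ref{thm:1}, the polynomial $\tau(\xi)=\tau_{AB\ldots L}\xi^A\xi^B\cdots\xi^L$ is homogeneous of degree $n$ and its decomposition into linear factors is exactly the decomposition \eqref{eqn:20}; by uniqueness of factorisation over $\mathbb{C}$, the number of factors proportional to a fixed non-zero $\xi_A$ is an invariant of $\tau$, which I will call the \emph{multiplicity} $m$ of $\xi_A$ as a principal spinor. Thus $\xi_A$ is a $k$-fold principal spinor exactly when $m=k$, and it suffices to show that the transvection condition in the statement singles out $m=k$.

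The technical ingredient, which is essentially the computation that follows \eqref{eqn:92} in the text, is the following lemma: if $\xi_A$ has multiplicity $m$, write $\tau_{AB\ldots L}=\xi_{(A}\cdots\xi_{C}\,\eta_{D}\cdots\lambda_{L)}$ with $m$ factors equal to $\xi_A$ and none of $\eta_A,\ldots,\lambda_A$ proportional to $\xi_A$, as in \eqref{eqn:92}. Transvecting $j$ copies of $\xi$ and using $\xi_A\xi^A=0$ annihilates every term of the symmetrisation in which one of the transvected $\xi$'s meets a factor $\xi$; for $j\le n-m$ the surviving terms combine, with strictly positive combinatorial weights, into a non-zero numerical multiple of a symmetrised product of non-vanishing spinors (namely the $m$ copies of $\xi_A$ together with $n-m-j$ of the remaining spinors), which is non-zero by the first paragraph, whereas for $j\ge n-m+1$ no such assignment exists and the expression vanishes identically. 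In particular, taking $j=n-m$ gives a non-zero multiple of $\xi_A\xi_B\cdots\xi_C$ and $j=n-m+1$ gives $0$; applied with $m=k$ this is precisely the necessity half of the proposition, already indicated in the text.

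For sufficiency I would argue by contradiction. Assume $\tau_{AB\ldots L}\xi^{D}\cdots\xi^{L}$ vanishes when $n-k+1$ copies of $\xi$ are transvected but not when only $n-k$ are, and let $m$ be the actual multiplicity of $\xi_A$. If $m<k$ then $n-k+1\le n-m$, so by the lemma transvecting $n-k+1$ copies produces a non-zero spinor, contradicting the hypothesis. If $m>k$ then $n-k\ge n-m+1$; since transvecting $n-m+1$ copies already yields $0$ and contracting further copies of $\xi$ into a zero spinor keeps it zero, transvecting $n-k$ copies also yields $0$, again contradicting the hypothesis. Hence $m=k$, i.e.\ $\xi_A$ is a $k$-fold principal spinor.

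The part I expect to require real care, as opposed to bookkeeping, is the non-vanishing claim underpinning the lemma: one must check that the permanent-like sum of terms left after the transvection carries strictly positive coefficients and admits no accidental cancellation, so that it is a genuinely non-zero symmetrised product. This is exactly the point at which the polynomial-factorisation viewpoint of Theorem~\ref{thm:1}, rather than direct index manipulation, makes the argument clean.
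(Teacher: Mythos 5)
Your necessity half is exactly the paper's argument: the computation following \eqref{eqn:92}, in which the symmetrised product is transvected with $n-k$ copies of the principal spinor to produce the explicit non-zero constant $\kappa$, and with $n-k+1$ copies to produce zero. The paper in fact stops there and lets sufficiency ride on the word ``thus'', so your explicit converse is a genuine addition rather than a restatement; your case $m>k$ is already complete, since it uses only the trivial fact that contracting further copies of $\xi$ into a vanishing spinor keeps it zero.

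The step that would not survive scrutiny as written is the non-vanishing claim of your lemma for intermediate $j$, which your case $m<k$ needs whenever $m<k-1$ (so that $n-k+1<n-m$ strictly). For $j<n-m$ the surviving terms of the symmetrisation do \emph{not} combine into ``a non-zero numerical multiple of a symmetrised product'': they form a sum, over the $j$-element subsets $S$ of the non-proportional factors, of \emph{different} symmetrised products, each weighted by the complex scalars $\prod_{i\in S}(\eta_{iD}\xi^D)$ — nothing here is positive, and cancellation is not excluded by bookkeeping alone. Two repairs are available. First, your own polynomial viewpoint does the job cleanly: up to the factor $n!/(n-j)!$, the transvection of $j$ copies of $\xi$ is $(\xi^D\partial/\partial\zeta^D)^j\tau(\zeta)$; since $\xi^D\partial_D(\xi_A\zeta^A)=\xi_A\xi^A=0$, this operator passes through the factor $(\xi_A\zeta^A)^m$ of the factorisation of Theorem~\ref{thm:1} and acts only on the degree-$(n-m)$ cofactor $Q$, which is not divisible by $\xi_A\zeta^A$; in a basis with $\xi$ as first element this is just $\partial_x^{\,j}Q$ with the coefficient of $x^{n-m}$ in $Q$ non-zero, hence non-vanishing precisely when $j\le n-m$. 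Second, you can bypass intermediate $j$ altogether: if $m<k$, the hypothesis that the $(n-k+1)$-fold transvection vanishes forces the $(n-m)$-fold transvection to vanish by the same monotonicity you invoke in the other case (since $n-m\ge n-k+1$), contradicting the necessity statement applied to the true multiplicity $m$, which requires the $(n-m)$-fold transvection to be non-zero. With either repair your argument is complete.
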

As corollary we have 
\begin{prop} $\\ $
\label{thm:2}
If $\xi^A\neq 0$, $\tau_{A...G}=\tau_{(A...G)}$ and
\begin{equation*}
\xi^A...\xi^C\xi^D\tau_{A...CDE...G}=0,
\end{equation*}
then there exists a $\psi_{A...C}$ such that 
\begin{equation*}
\tau_{A...CDE...G}=\psi_{(A...C}\xi_D\xi_E...\xi_{G)}.
\end{equation*}
\end{prop}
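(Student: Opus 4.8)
The statement is an immediate corollary of Theorem \ref{thm:1} and Proposition \ref{prop1}, so the plan is short. Write $n$ for the valence of $\tau$ and let $q$ be the number of indices in the block $A\ldots C$, so that the hypothesis $\xi^{A}\cdots\xi^{C}\xi^{D}\tau_{A\ldots CDE\ldots G}=0$ says precisely that transvecting the totally symmetric spinor $\tau$ with $p:=q+1$ copies of $\xi$ gives zero, while the sought $\psi$ must be totally symmetric of valence $q=p-1$ and must be accompanied by $n-q=n-p+1$ explicit factors of $\xi$; note $1\leq p\leq n$ and $n-q\geq1$. If $\tau$ vanishes identically one simply takes $\psi=0$, so assume $\tau\neq0$. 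The plan is: (i) to show that $\xi$ is a $k$-fold principal spinor of $\tau$ for some $k\geq n-p+1$; (ii) to regroup the principal-spinor decomposition of $\tau$ so as to exhibit $\psi$.

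For (i), first observe that if transvecting $\tau$ with $m$ copies of $\xi$ yields zero, then so does transvecting with $m+1$ copies, the latter being obtained by contracting the former --- a vanishing spinor --- once more with $\xi$; by the total symmetry of $\tau$ it is irrelevant which slots are used. Hence the set of $m\in\{0,1,\ldots,n\}$ for which the transvection vanishes is upward closed; it contains $p$ by hypothesis and excludes $0$ since $\tau\neq0$, so it equals $\{m_{0},m_{0}+1,\ldots,n\}$ for some $1\leq m_{0}\leq p$. Put $k:=n-m_{0}+1$. Then transvecting $\tau$ with $n-k+1=m_{0}$ copies of $\xi$ vanishes while transvecting with $n-k=m_{0}-1$ copies does not, so Proposition \ref{prop1} tells us that $\xi$ is a $k$-fold principal spinor of $\tau$; and $k=n-m_{0}+1\geq n-p+1\geq1$.

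For (ii), Theorem \ref{thm:1} together with the discussion around \eqref{eqn:20} and \eqref{eqn:92} lets one write
\[
\tau_{A_{1}\ldots A_{n}}=\xi_{(A_{1}}\cdots\xi_{A_{k}}\eta^{(k+1)}_{A_{k+1}}\cdots\eta^{(n)}_{A_{n})}
\]
with $\eta^{(k+1)},\ldots,\eta^{(n)}$ univalent. Since $k\geq n-q$, I would peel off $n-q$ of the $\xi$ factors to occupy the slots $D,E,\ldots,G$ and collect the remaining $q$ univalent factors --- namely $k-(n-q)\geq0$ copies of $\xi$ together with $\eta^{(k+1)},\ldots,\eta^{(n)}$ --- into their symmetrized product $\psi_{A_{1}\ldots A_{q}}$, a totally symmetric spinor of valence $q$. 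Because $\tau$ is itself the symmetrized product of $n$ univalent spinors, this regrouping gives $\tau_{A\ldots CDE\ldots G}=\psi_{(A\ldots C}\xi_{D}\xi_{E}\cdots\xi_{G)}$, as claimed. There is no real obstacle here beyond the index bookkeeping: one must keep the transvection count $p$, the multiplicity $k$ and the valence $p-1$ of $\psi$ mutually consistent, invoke Proposition \ref{prop1} in the direction that extracts a principal-spinor multiplicity from the vanishing of a transvection, and note that the degenerate cases $q=0$ (scalar $\psi$) and $p=n$ (a scalar hypothesis) are covered by exactly the same argument.
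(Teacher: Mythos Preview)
Your proof is correct and is precisely the argument the paper has in mind: the paper states Proposition~\ref{thm:2} with the words ``As corollary we have'' immediately after Proposition~\ref{prop1} and gives no further proof, so you have simply (and accurately) spelled out why it follows from Theorem~\ref{thm:1} and Proposition~\ref{prop1}. The bookkeeping of the multiplicity $k\geq n-q$ and the regrouping of the symmetrized product into $\psi_{(A\ldots C}\xi_D\cdots\xi_{G)}$ are handled correctly, including the trivial case $\tau=0$.
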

We can see two important applications of theorem \ref{thm:1}. \\
As first consider the  totally skew-symmetric Maxwell tensor, $F_{ab}=-F_{ba}$ for the electromagnetic field. Introduce a spinor equivalent $F_{ABA'B'}=-F_{BAB'A'}$ and define
\begin{equation*}
\varphi_{AB}=\frac{1}{2}F_{ABC'}{}^{C'}.
\end{equation*}
Note that $\varphi_{AB}=\varphi_{BA}$ since, using \eqref{eqn:19},
\begin{equation}
\varphi_{[AB]}=\frac{1}{2}\epsilon_{AB}\varphi_C^{\enskip C}=\frac{1}{2}\epsilon_{AB}F_C^{\enskip C}{}_{C'}{}^{C'}=\frac{1}{2}\epsilon_{AB}\eta_{ab}F^{ab}=0.
\end{equation}
Using \eqref{eqn:15} we have
\begin{equation*}
F_{ABA'B'}=F_{AB(A'B')}+\varphi_{AB}\epsilon_{A'B'}.
\end{equation*}
A second application gives
\begin{equation*}
F_{ABA'B'}=F_{(AB)(A'B')}+\varphi_{AB}\epsilon_{A'B'}+\epsilon_{AB}\bar{\varphi}_{A'B'}=\varphi_{AB}\epsilon_{A'B'}+\epsilon_{AB}\bar{\varphi}_{A'B'},
\end{equation*}
the second equality resulting from the skew-symmetric nature of $F$. Using now the result \eqref{eqn:20} we get 
\begin{equation*}
\varphi_{AB}=\alpha_{(A}\beta_{B)}.
\end{equation*}
It is simple to show that, if we define $F^*{}_{ab}=\frac{1}{2}\epsilon_{ab}{}^{cd}F_{cd}$ we have
\begin{equation}
\label{eqn:70}
F_{ab}+i\
F^*_{ab}=2\varphi_{AB}\epsilon_{A'B'}=2\alpha_{(A}\beta_{B)}.
\end{equation}
There are now two possibilities. If $\alpha$ and $\beta$ are proportional then $\alpha$ is called a repeated spinor of $\varphi$ and $\varphi$ is said to be null, or of \textbf{\textit{type N}}. The vector $\alpha_a=\alpha_{A}\bar{\alpha}_{A'}$ is called a repeated null direction. If $\alpha$ and $\beta$ are not proportional $\varphi$ is said to be algebraically general or of \emph{\textbf{type I}}.\\
As a second example we consider the Weyl tensor $C_{abcd}$, the conformally invariant part of the Riemann tensor. As we will derive later $C_{abcd}$ has the following properties
\begin{equation*}
C_{abcd}=C_{[ab][cd]}=C_{cdab}.
\end{equation*}
It can be written as
\begin{equation*}
C_{abcd}=C_{ABCDA'B'C'D'}=\Psi_{ABCD}\epsilon_{A'B'}\epsilon_{C'D'}+\bar{\Psi}_{A'B'C'D'}\epsilon_{AB}\epsilon_{CD},
\end{equation*}
where $\Psi_{ABCD}$ is totally symmetric. Defining $C^*_{\enskip abcd}=\frac{1}{2}\epsilon_{cd}{}^{ef}C_{abef}$ we have similarly
\begin{equation*}
C_{abcd}+iC^*_{\enskip abcd}=2\Psi_{ABCD}\epsilon_{A'B'}\epsilon_{C'D'}=2\alpha_{(A}\beta_B\gamma_C\delta_{D)}.
\end{equation*}
The corresponding space-times can be classified as follows:
\begin{itemize}
\item \textit{\textbf{Type I or}} \{1,1,1,1\}. None of the four principal null directions coincide. This is the algebraically general case;\\
\item \textit{\textbf{Type II or}} \{2,1,1\}. Two directions coincide. This and the subsequent cases are algebraically special;\\
\item \textit{\textbf{Type D or}} \{2,2\}. Two different pairs of repeated principal null directions exist;\\
\item \textit{\textbf{Type III or}} \{3,1\}. Three principal null directions coincide;\\
\item \textit{\textbf{Type N or}} \{4\}. All four principal null directions coincide.
\end{itemize}
\begin{figure}[h]
\begin{center}
\includegraphics[scale=0.45]{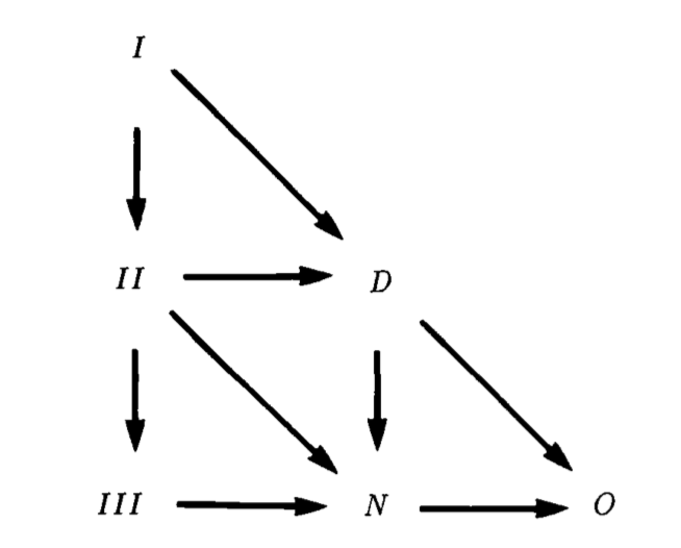}
\caption{A Penrose diagram showing the Petrov classification where the arrows indicate increasing specialization of the solution.}
\label{fig:2.7}
\end{center}
\end{figure}
The previous is called Petrov classification.  We note here that $\Psi$ has two scalar invariants
\begin{equation*}
I=\Psi_{ABCD}\Psi^{ABCD}\hspace{1cm}J=\Psi_{AB}{}^{CD}\Psi_{CD}{}^{EF}\Psi_{EF}{}^{AB}.
\end{equation*}
For type II and type III space-times it can be show that \citep{Penrin2} $I^3=6J^2$ and $I=J=0$, respectively. For type D we have
\begin{equation*}
\Psi_{PQR(A}\Psi_{BC}{}^{PQ}\Psi^R{}_{DEF)}=0,
\end{equation*}
and for type N
\begin{equation*}
\Psi_{(AB}{}^{EF}\Psi_{CD)EF}=0.
\end{equation*}
\section{Spinor Analysis and Curvature}
\label{sect:2.6}
To introduce the space-time curvature we need to define a spinor covariant derivative, whose definition is made axiomatically.
\begin{defn} $\\ $
 Let $\theta$, $\phi$ and $\psi$ be spinor fields of the same valence. Then the \textit{spinor covariant derivative} is a map 
\begin{equation*}
\nabla_{AA'}:\theta_{...}\longrightarrow \theta_{...;AA'}
\end{equation*}
which satisfies
\begin{enumerate}
\item $\nabla_{AA'}(\theta+\phi)=\nabla_{AA'}\theta+\nabla_{AA'}\phi;$
\item $\nabla_{AA'}(\theta\phi)=(\nabla_{AA'}\theta)\phi+\theta(\nabla_{AA'}\psi);$
\item $\psi=\nabla_{AA'}\theta$ implies $\bar{\psi}=\nabla_{AA'}\bar{\theta};$
\item $\nabla_{AA'}\epsilon_{BC}=\nabla_{AA'}\epsilon^{BC}=0;$
\item $\nabla_{AA'}$ commutes with any index substitution not involving $AA';$
\item $(\nabla_{a}\nabla_{b}-\nabla_{b}\nabla_{a})f=0$ for any scalar $f;$
\item for any derivation $D$ acting on spinor fields there exists a spinor $\xi^{AA'}$ such that $D\psi=\xi^{AA'}\nabla_{AA'}\psi$, for all $\psi.$
\end{enumerate}
\end{defn}
\begin{thm} $\\ $
The spinor covariant derivative defined above exists and is unique.
\end{thm}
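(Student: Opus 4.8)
The plan is to establish uniqueness first and then build the operator explicitly, using throughout the Levi-Civita connection $\nabla_a$ on world-tensors, which exists and is unique on $(\mathscr{M},g)$ by paracompactness and the fundamental theorem of (pseudo-)Riemannian geometry.

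\textbf{Uniqueness.} Suppose $\nabla_{AA'}$ and $\hat{\nabla}_{AA'}$ both obey axioms 1--7. On scalars each reduces to the ordinary differential, so the difference $\Delta_{AA'}:=\nabla_{AA'}-\hat{\nabla}_{AA'}$ is a derivation that annihilates scalars; by axioms 1 and 2 it is therefore linear over the spinor fields, i.e. purely algebraic, and $\Delta_{AA'}\kappa_B=\Theta_{AA'B}{}^{C}\kappa_C$ for a fixed spinor field $\Theta$. Feeding $\epsilon_{BC}$ into this and invoking axiom 4 for both operators forces $\Theta_{AA'BC}=\Theta_{AA'CB}$, while axiom 3 shows the conjugate of $\Theta$ controls the primed slots. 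Axioms 4 and 6 make each induced world-tensor connection metric-compatible (since $g_{ab}=\epsilon_{AB}\epsilon_{A'B'}$) and torsion-free, hence equal to $\nabla_a$; so $\Delta_{AA'}$ annihilates every $v^{BB'}$, which after the Leibniz rule becomes the algebraic identity $\Theta_{AA'C}{}^{B}\delta_{C'}{}^{B'}+\bar{\Theta}_{AA'C'}{}^{B'}\delta_{C}{}^{B}=0$. Contracting the free $B$ with $C$ and noting $\Theta_{AA'B}{}^{B}=0$ (immediate from the symmetry of $\Theta$ in $BC$ and the skewness of $\epsilon$) gives $\bar{\Theta}=0$, hence $\Theta\equiv 0$ and $\nabla_{AA'}=\hat{\nabla}_{AA'}$.

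\textbf{Existence.} On a patch carrying a spin dyad $(o^A,\iota^A)$ and its Newman--Penrose tetrad $(l,n,m,\bar{m})$, I would set $\nabla_{AA'}\kappa_B:=\partial_{AA'}\kappa_B-\gamma_{AA'B}{}^{C}\kappa_C$ on components, propagate it to all valences by axioms 1, 2, 5, and to primed indices by conjugation (axiom 3). The coefficients $\gamma_{AA'B}{}^{C}$ are pinned down by requiring $\nabla_{AA'}\epsilon_{BC}=0$ together with the condition that the operator induced on $v^{BB'}$ coincide with $\nabla_a$ acting on the tetrad; this is the algebraic step of ``taking a square root'' of the connection on $T_{\mathbb{C}}\mathscr{M}\cong S\otimes\bar{S}$, reducing its structure to $\mathrm{SL}(2,\mathbb{C})$. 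One then checks that these defining conditions are independent of the chosen dyad, so the local operators glue to a global one --- this is exactly where the standing hypothesis that $\mathscr{M}$ admits a spinor structure (guaranteed under our assumptions by Theorem \ref{thm:Geroch2}) enters --- and verifies the axioms: 1, 2, 3, 5 hold by construction, 4 by the choice of $\gamma$, 7 because $\nabla_{AA'}$ is by design first-order and linear in the direction $AA'$, and 6 is inherited from the vanishing torsion of $\nabla_a$.

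\textbf{Main obstacle.} The delicate point is the existence half, specifically that the real Levi-Civita connection on $S\otimes\bar{S}$ genuinely descends to an $\epsilon$-compatible connection on the single factor $S$: locally this is a non-degeneracy check on the linear system defining the $\gamma$'s, and globally it is precisely the requirement of a consistent spin structure on $\mathscr{M}$. Once that square-root reduction is in hand, verifying axioms 1--7 and patching the local pieces together are routine.
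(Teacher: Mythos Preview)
The paper itself does not prove this theorem: it simply states that ``The proof of this important theorem is given in \cite{Penrin1}'' and moves on. Your proposal, by contrast, supplies an actual argument, and it is essentially the standard Penrose--Rindler proof.

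Your uniqueness argument is sound: the difference of two operators satisfying axioms 1--7 is algebraic, $\epsilon$-compatibility forces $\Theta_{AA'BC}=\Theta_{AA'(BC)}$ (hence $\Theta_{AA'B}{}^{B}=0$), and the fact that both operators induce the unique Levi-Civita connection on world-vectors yields the algebraic constraint from which $\Theta=0$ follows by the trace argument you give. The existence sketch is brief but identifies the correct mechanism --- lifting the Levi-Civita connection to the spin bundle by reducing the structure group to $\mathrm{SL}(2,\mathbb{C})$, which is exactly where the spin-structure hypothesis enters --- and you rightly flag the descent from $S\otimes\bar S$ to $S$ as the one step requiring real work. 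That is indeed the heart of the existence proof in Penrose--Rindler, and nothing in your outline is wrong; it is simply less detailed than a full proof would need to be.
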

The proof of this important theorem is given in \cite{Penrin1}.\\
The first two properties are respectively linearity and Leibniz rule, which characterize $\nabla_{AA'}$ to be a derivative. The third is the reality condition. The fourth property can be interpreted to be a sort of compatibility condition, that allows the symplectic form to raise or lower indices within spinor expressions acted upon by $\nabla_{AA'}$. Note that this is stronger than $\nabla g=0$. The sixth is the torsion-free condition. Sometimes this condition can be weakened and the right member of the equation can be replaced by a non-zero $2S_{ab}{}^{c}\nabla_{c}f$ where $S_{ab}{}^{c}$ is the torsion tensor. The last property is needed to ensure tha $\xi^{AA'}$ is a tangent vector in space-time.\\
As next we are going to derive the spinorial form of the Riemann curvature tensor, using its (skew-) symmetries:
\begin{equation}
\label{eqn:21}
R_{abcd}=-R_{bacd}=-R_{abdc},\hspace{1cm}R_{abcd}=R_{cdab},\hspace{1cm}R_{a[bcd]}=0.
\end{equation}
We have, applying \eqref{eqn:15} repeatedly and the first two of \eqref{eqn:21}
\begin{equation*}
R_{abcd}=R_{ABA'B'CC'DD'}=\frac{1}{2}\epsilon_{AB}R_{FA'}{}^{F}{}_{B'CC'DD'}+R_{(AB)A'B'CC'DD'}
\end{equation*}
\begin{equation*}
=\frac{1}{2}\epsilon_{AB}R_{FA'}{}^{F}{}_{B'CC'DD'}+\frac{1}{2}\epsilon_{A'B'}R_{ABF'}{}^{F'}{}_{CC'DD'}
\end{equation*}
\medskip
\begin{equation*}
=\frac{1}{4}\epsilon_{A'B'}\epsilon_{C'D'}R_{AF'B}{}^{F'}{}_{CL'D}{}^{L'}+\frac{1}{4}\epsilon_{A'B'}\epsilon_{CD}R_{AF'B}{}^{F'}{}_{LC'}{}^{L}{}_{D'}
\end{equation*}
\medskip
\begin{equation*}
+\frac{1}{4}\epsilon_{AB}\epsilon_{CD}R_{A'FB'}{}^{F}{}_{C'LD'}{}^{L}+\frac{1}{4}\epsilon_{AB}\epsilon_{C'D'}R_{A'FB'}{}^{F}{}_{L'C}{}^{L'}{}_{D}
\end{equation*}
\medskip
\begin{equation*}
\equiv \epsilon_{A'B'}\epsilon_{C'D'}X_{ABCD}+\epsilon_{A'B'}\epsilon_{CD}\Phi_{ABC'D'}
\end{equation*}
\medskip
\begin{equation}
\label{eqn:25}
+\epsilon_{AB}\epsilon_{CD}\bar{X}_{A'B'C'D'}+\epsilon_{AB}\epsilon_{C'D'}\bar{\Phi}_{A'B'CD}.
\end{equation}
\medskip
$X_{ABCD}$ and $\Phi_{ABC'D'}$ are uniquely defined \textit{curvature spinors}. Using again the first of \eqref{eqn:21} it is easy to prove that the following properties hold for the curvature spinors:
\begin{subequations}
\label{eqn:23}
\begin{align}
&X_{ABCD}=X_{BACD}=X_{ABDC};\\
&X_{ABCD}=X_{CDAB};\\
&\Phi_{ABC'D'}=\Phi_{BAC'D'}=\Phi_{ABD'C'};\\
&\Phi_{ABC'D'}=\bar{\Phi}_{ABC'D'}.
\end{align}
\end{subequations}
The first two properties imply that
\begin{equation}
\label{eqn:66}
X_{A(BC)}{}^{A}=0,
\end{equation}
while the last two properties force the tensor $\Phi_{ab}$ corresponding to the spinor $\Phi_{AA'BB'}$ to be traceless and real:
\begin{equation*}
\Phi_{ab}=\bar{\Phi}_{ab},\hspace{1cm}\Phi^a_{\enskip a}=0.
\end{equation*}
It is very useful to introduce a dual of $R_{abcd}$, defined as
\begin{equation*}
R^*{}_{abcd}\equiv \frac{1}{2}\epsilon_{cd}{}^{pq}R_{abpq}
\end{equation*}
\begin{equation*}
=\frac{i}{2}(\epsilon_{CD}\epsilon^{PQ}\epsilon_{C'}{}^{P'}\epsilon_{D'}{}^{Q'}-\epsilon_{C}{}^{P}\epsilon_{D}{}^{Q}\epsilon_{C'D'}\epsilon^{P'Q'})(\epsilon_{A'B'}\epsilon_{P'Q'}X_{ABPQ}+\epsilon_{A'B'}\epsilon_{PQ}\Phi_{ABP'Q'}
\end{equation*}
\medskip
\begin{equation*}
+\epsilon_{AB}\epsilon_{PQ}\bar{X}_{A'B'P'Q'}+\epsilon_{AB}\epsilon_{P'Q'}\bar{\Phi}_{A'B'PQ})
\end{equation*}
\medskip
\begin{equation*}
=-iX_{ABCD}\epsilon_{A'B'}\epsilon_{C'D'}+i\Phi_{ABC'D'}\epsilon_{A'B'}\epsilon_{CD}
\end{equation*}
\medskip
\begin{equation*}
-i\bar{\Phi}_{A'B'CD}\epsilon_{AB}\epsilon_{C'D'}+i\bar{X}_{A'B'C'D'}\epsilon_{AB}\epsilon_{CD}\
\end{equation*}
\medskip
\begin{equation}
\label{eqn:22}
=iR_{AA'BB'CD'DC'}.
\end{equation}
\medskip
The second of \eqref{eqn:21} is then equivalent to
\begin{equation}
R^*_{abc}{}^{b}=0,
\end{equation}
thus multiplying \eqref{eqn:22} by $\epsilon^{BD}\epsilon^{B'D'}$ 	we should get zero. Hence we have
\medskip
\begin{equation*}
-X_{ABC}{}^{B}\epsilon_{A'C'}-\Phi_{ACC'A'}+\bar{\Phi}_{A'C'CA}+\epsilon_{AC}\bar{X}_{A'B'C'}{}^{B'}=0,
\end{equation*}
that, using the first of \eqref{eqn:23}, becomes
\begin{equation*}
X_{AB}{}^{B}{}_{C}\epsilon_{A'C'}=\epsilon_{AC}\bar{X}_{A'B'}{}^{B'}{}_{C'}.
\end{equation*}
Thus, on defining 
\begin{equation}
\label{eqn:26}
\Lambda\equiv\frac{1}{6}X_{AB}{}^{AB},
\end{equation}
we get the reality condition
\begin{equation}
\label{eqn:24}
\Lambda=\bar{\Lambda}.
\end{equation}
Relations \eqref{eqn:23} and \eqref{eqn:24} are the only algebraic relations necessarily satisfied by $X_{ABCD}$ and $\Phi_{ABC'D'}$. However, $X_{ABCD}$ and $\Phi_{ABCD}$ also satisfy a differential relation obtained from the Bianchi identities\begin{equation}
\label{eqn36}
\nabla_{[a}R_{bc]de}=0.
\end{equation}
Introducing another dual of $R_{abcd}$ as
\begin{equation*}
{}^*R_{abcd}=\frac{1}{2}\epsilon_{ab}{}^{pq}R_{pqcd}=R^*{}_{cdab}=iR_{CC'DD'AB'BA'}
\end{equation*} 
\begin{equation*}
=-iX_{CDAB}\epsilon_{C'D'}\epsilon_{A'B'}+i\Phi_{CDA'B'}\epsilon_{C'D'}\epsilon_{AB}
\end{equation*}
\begin{equation*}
-i\bar{\Phi}_{C'D'AB}\epsilon_{CD}\epsilon_{A'B'}+i\bar{X}_{C'D'A'B'}\epsilon_{CD}\epsilon_{AB},
\end{equation*}
equation \eqref{eqn36} reads as 
\begin{equation*}
\nabla^a{}^{*}R_{abcd}=0,
\end{equation*}
from which we get (using properties \ref{eqn:23})
\begin{equation*}
-\nabla^A{}_{B'}X_{ABCD}\epsilon_{C'D'}+\nabla^{A'}{}_{B}\Phi_{A'B'CD}\epsilon_{C'D'}-\nabla^{A}{}_{B'}\Phi_{ABC'D'}\epsilon_{CD}+\nabla^{A'}{}_{B}\bar{X}_{A'B'C'D'}\epsilon_{CD}=0.
\end{equation*}
Separating this last equation into parts which are skew-symmetric and symmetric in $C'D'$, respectively, we find it to be equivalent to 
\begin{equation}
\label{eqn:47}
\nabla^A{}_{B'}X_{ABCD}=\nabla^{A'}{}_{B}\Phi_{CDA'B'},
\end{equation}
and its complex conjugate. Equation \eqref{eqn:47} is the spinor form of Bianchi identity.\\
Using \eqref{eqn:25} with \eqref{eqn:23} and \eqref{eqn:24} we get for the Ricci tensor the spinor form 
\begin{equation}
\label{eqn:32}
R_{ab}=R^c_{acb}=6\Lambda\epsilon_{AB}\epsilon_{A'B'}-2\Phi_{ABA'B'}=6\Lambda g_{ab}-2\Phi_{ab},
\end{equation} 
and a further contraction gives the scalar curvature as
\begin{equation}
\label{eqn:33}
R=24\Lambda.
\end{equation}
Eventually we get the Einstein tensor
\begin{equation}
\label{eqn:48}
G_{ab}=R_{ab}-\frac{1}{2}Rg_{ab}=-6\Lambda\epsilon_{AB}\epsilon_{A'B'}-2\Phi_{ABA'B'}=-6\Lambda g_{ab}-2\Phi_{ab}.
\end{equation}
We obtain now a more suitable form of the Riemann curvature.\\
First we note that $X_{ABCD}$ can be decomposed as 
\begin{equation*}
X_{ABCD}=\frac{1}{3}(X_{ABCD}+X_{ACDB}+X_{ADBC})+\frac{1}{3}(X_{ABCD}-X_{ACBD})
\end{equation*}
\begin{equation*}
+\frac{1}{3}(X_{ABCD}-X_{ADCB})
\end{equation*}
\begin{equation*}
=X_{(ABCD)}+\frac{1}{3}\epsilon_{BC}X_{AF}{}^{F}{}_{D}+\frac{1}{3}\epsilon_{BD}X_{AFC}{}^{F}.
\end{equation*}
Since, from \eqref{eqn:26} it follows that $X_{AFC}{}^{F}=3\Lambda\epsilon_{AF}$, we get
\begin{equation}
\label{eqn:43}
X_{ABCD}=\Psi_{ABCD}+\Lambda(\epsilon_{AC}\epsilon_{BD}+\epsilon_{AD}\epsilon_{BC}).
\end{equation}
Inserting this in \eqref{eqn:21} and taking into account \eqref{eqn:24} we get 
\begin{equation*}
R_{abcd}=\Psi_{ABCD}\epsilon_{A'B'}\epsilon_{C'D'}+\Lambda(\epsilon_{BC}\epsilon_{AD}+\epsilon_{BD}\epsilon_{AC})\epsilon_{A'B'}\epsilon_{C'D'}
\end{equation*}
\begin{equation*}
+\bar{\Psi}_{A'B'C'D'}\epsilon_{AB}\epsilon_{CD}+\Lambda(\epsilon_{B'C'}\epsilon_{A'D'}+\epsilon_{B'D'}\epsilon_{A'C'})\epsilon_{AB}\epsilon_{CD}
\end{equation*}
\begin{equation*}
+\Phi_{ABC'D'}\epsilon_{A'B'}\epsilon_{CD}+\bar{\Phi}_{A'B'CD}\epsilon_{AB}\epsilon_{C'D'}.
\end{equation*}
Using the Jacobi identity in the form 
\begin{equation*}
\epsilon_{AB}\epsilon_{CD}=\epsilon_{AC}\epsilon_{BD}-\epsilon_{AD}\epsilon_{BC},
\end{equation*}
we get the desired decomposition of the Riemann curvature
\begin{equation*}
R_{abcd}=\Psi_{ABCD}\epsilon_{A'B'}\epsilon_{C'D'}+\bar{\Psi}_{A'B'C'D'}\epsilon_{AB}\epsilon_{CD}
\end{equation*}
\begin{equation*}
+\Phi_{ABC'D'}\epsilon_{A'B'}\epsilon_{CD}+\bar{\Phi}_{A'B'CD}\epsilon_{AB}\epsilon_{C'D'}
\end{equation*}
\begin{equation}
\label{eqn:27}
+2\Lambda(\epsilon_{AC}\epsilon_{BD}\epsilon_{A'C'}\epsilon_{B'D'}-\epsilon_{AD}\epsilon_{BC}\epsilon_{A'D'}\epsilon_{B'C'}).
\end{equation}
The first two terms in \eqref{eqn:27} are respectively the self-dual and the anti-self-dual Weyl tensors,
\begin{equation*}
{}^{(-)}C_{abcd}=\Psi_{ABCD}\epsilon_{A'B'}\epsilon_{C'D'}, \hspace{1cm}{}^{(+)}C_{abcd}=\bar{\Psi}_{A'B'C'D'}\epsilon_{AB}\epsilon_{CD},
\end{equation*}
which form the Weyl tensor
\begin{equation}
\label{eqn:28}
C_{abcd}={}^{(-)}C_{abcd}+{}^{(+)}C_{abcd}.
\end{equation}
Purely tensorially, this is given by 
\begin{equation*}
C_{abcd}=R_{abcd}+4\Phi_{[a[c}g_{d]b]}-4\Lambda g_{[a[c}g_{d]b]}=
R_{abcd}-2R_{[a[c}g_{d]b]}+\frac{1}{3}Rg_{[a[c}g_{d]b]}.
\end{equation*}
This tensor has the same symmetries as $R_{abcd}$, i.e.
\begin{equation}
\label{eqn:29}
C_{abcd}=-C_{bacd}=-C_{abdc},\hspace{1cm}C_{abcd}=C_{cdab},\hspace{1cm}C_{a[bcd]}=0,
\end{equation}
and is in addition trace-free
\begin{equation}
\label{eqn:30}
C^c{}_{acb}=0.
\end{equation}
As we will see in the remainder, the Weyl tensor is the conformally invariant part of the curvature. Furthermore, introducing the following tensors,
\begin{align*}
&E_{abcd}=\Phi_{ABC'D'}\epsilon_{A'B'}\epsilon_{CD}+\bar{\Phi}_{A'B'CD}\epsilon_{AB}\epsilon_{C'D'},\\
&g_{abcd}=\epsilon_{AC}\epsilon_{BD}\epsilon_{A'C'}\epsilon_{B'D'}-\epsilon_{AD}\epsilon_{BC}\epsilon_{A'D'}\epsilon_{B'C'}=2g_{a[c}g_{d]b},
\end{align*}
equation \eqref{eqn:27} reads
\begin{equation}
\label{eqn:27.1}
R_{abcd}={}^{(-)}C_{abcd}+{}^{(+)}C_{abcd}+E_{abcd}+2\Lambda g_{abcd}.
\end{equation}
In the language of representation theory ${}^{(-)}C$, ${}^{(+)}C$, $E$ and $g$ of equation \eqref{eqn:27.1} belong to representation spaces for the $D^{(2,0)}$, $D^{(0,2)}$, $D^{(1,1)}$, $D^{(0,0)}$ irreducible representations of the Lorentz group. \\
So far, as we said before, we derived the spinorial form of curvature tensors only by using their (skew-)symmetry properties. However the curvature tensor $R_{abcd}$ appears when a commutator of derivatives $\nabla_a$ is applied to vectors and tensors. Thus, we may expect that the spinors which represent $R_{abcd}$ appear when such commutators are applied to spinors. This is indeed the case. We start by building the connection coefficients.\\
Consider a tetrad of vectors $e_a{}^{\hat{c}}$ and its dual basis of co-vectors as done in section \ref{sect:2.3}. 
Define 
\begin{equation}
\label{eqn:72}
\Gamma^{\hat{a}}{}_{\hat{b}\hat{c}}=e_c{}^{\hat{a}}e^{d}{}_{\hat{c}}\nabla_{d}e^{c}{}_{\hat{b}}\equiv e_c{}^{\hat{a}}\nabla_{\hat{c}}e^{c}{}_{\hat{b}}.
\end{equation}
These quantities are called \textit{Ricci rotation coefficients} in the case when the frame is chosen so that the metric has a specific form, for example \eqref{eqn:71}, or \textit{Christoffel symbols} when the basis is naturally derived from a coordinate system. 
To obtain the spinor equivalent of  \eqref{eqn:72} we need to introduce a spinor dyad $\epsilon^{A}{}_{\hat{A}}$ and its symplectic dual $\epsilon_{A}{}^{\hat{A}}$ by 
\begin{equation}
\label{eqn:74}
\epsilon^{A}{}_{\hat{A}}\epsilon_{B}{}^{\hat{A}}=\epsilon^A{}_{B}=-\delta^A{}_{B}.
\end{equation}
A suitable choice is $\epsilon^A{}_{\hat{0}}=o^A$, $\epsilon^A{}_{\hat{1}}=\iota^A$, $\epsilon_A{}^{\hat{0}}=-\iota^A$, $\epsilon_A{}^{\hat{1}}=o^A$.
We could now introduce the \textit{spinor Ricci rotation coefficients} as 
\begin{equation}
\label{eqn:73}
\Gamma_{\hat{A}\hat{B}\hat{C}\hat{C}'}=\epsilon_{A\hat{A}}\epsilon^{C}{}_{\hat{C}}\epsilon^{C'}{}_{\hat{C}'}\nabla_{CC'}\epsilon^{A}{}_{\hat{B}}\equiv\epsilon_{A\hat{A}}\nabla_{\hat{C}\hat{C}'}\epsilon^{A}{}_{\hat{B}}.
\end{equation}
It is immediate to see that the spinor Ricci rotation coefficients constitute the spinor analogous of the the Ricci rotation coefficients.\\
Define 
\begin{equation*}
\square_{CD}=\epsilon^{C'D'}\nabla_{[CC'}\nabla_{DD']}=\frac{1}{2}\epsilon^{C'D'}(\nabla_{CC'}\nabla_{DD'}-\nabla_{DD'}\nabla_{CC'})
\end{equation*}
\begin{equation}
\label{eqn:37}
=\frac{1}{2}(\nabla_{CC'}\nabla_{D}{}^{C'}+\nabla_{DD'}\nabla_{C}{}^{D'})=\nabla_{C'(C}\nabla_{D)}{}^{C'}.
\end{equation}
Consider now the commutator
\begin{equation*}
\Delta_{cd}=2\nabla_{[c}\nabla_{d]}=\nabla_{CC'}\nabla_{DD'}-\nabla_{DD'}\nabla_{CC'}.
\end{equation*}
By applying \eqref{eqn:15} to the previous equation we get, using \eqref{eqn:37} 
\begin{equation*}
\Delta_{cd}=\nabla_{(C'(C}\nabla_{D)D')}+\frac{1}{2}\epsilon_{C'D'}\nabla_{E'(C}\nabla_{D)}{}^{E'}+\frac{1}{2}\epsilon_{CD}\nabla_{E(C'}\nabla_{D')}{}^{E}
\end{equation*}
\begin{equation*}
+\frac{1}{4}\epsilon_{CD}\epsilon_{C'D'}\nabla_{EE'}\nabla^{EE'}-\nabla_{(D(D'}\nabla_{C')C)}-\frac{1}{2}\epsilon_{D'C'}\nabla_{E'(D}\nabla_{C)}{}^{E'}
\end{equation*}
\begin{equation*}
-\frac{1}{2}\nabla_{E(D'}\nabla_{C'}{}^{E}-\frac{1}{4}\epsilon_{DC}\epsilon_{D'C'}\nabla_{EE'}\nabla^{EE'}=
\end{equation*}
\begin{equation}
\label{eqn:38}
=\epsilon_{C'D'}\square_{CD}+\epsilon_{CD}\square_{C'D'}.
\end{equation}
To obtain the action on a spinor, $\kappa^A$,  we begin by forming the self-dual null bi-vector
\begin{equation*}
k^{ab}=\kappa^A\kappa^B\epsilon^{A'B'}.
\end{equation*}
We know that 
\begin{equation*}
\Delta_{ab}k^{cd}=R_{abe}{}^{c}k^{ed}+R_{abe}{}^{d}k^{ce}.
\end{equation*}
 Thus
 \begin{equation*}
 \kappa^C\epsilon^{C'D'}\Delta_{ab}\kappa^{D}+\kappa^D\epsilon^{C'D'}\Delta_{ab}\kappa^{C}=R_{abEE'}{}^{CC'}\kappa^E\kappa^D\epsilon^{E'D'}+R_{abEE'}{}^{DD'}\kappa^{C}\kappa^{E}\epsilon^{C'E'},
 \end{equation*}
 i.e.
 \begin{equation*}
 2\epsilon^{C'D'}\kappa^{(C}\Delta_{ab}\kappa^{D)}=-R_{abE}{}^{D'CC'}\kappa^E\kappa^D+R_{abE}{}^{C'DD'}\kappa^C\kappa^E.
 \end{equation*}
 Using decomposition \eqref{eqn:25} for the curvature we get 
 \begin{equation*}
 2\epsilon^{C'D'}\kappa^{(C}\Delta_{ab}\kappa^{D)}=2\epsilon^{C'D'}(\epsilon_{A'B'}X_{ABE}{}^{(C}\kappa^{D)}+\epsilon_{AB}\Phi_{A'B'E}{}^{(C}\kappa^{D)})\kappa^E
 \end{equation*}
 \begin{equation*}
 +2\epsilon_{E}{}^{[D}\kappa^{C]}\kappa^E(\epsilon_{AB}\bar{X}_{A'B'}{}^{D'C'}+\epsilon_{A'B'}\Phi_{AB}{}^{D'C'}).
 \end{equation*}
 Multiplying by $\epsilon_{C'D'}$ we get
 \begin{equation*}
\kappa^{(C}\Delta_{ab}\kappa^{D)}=(\epsilon_{A'B'}X_{ABE}{}^{(C}+\epsilon_{AB}\Phi_{A'B'E}{}^{(C})\kappa^{D)}\kappa^{E}
 \end{equation*}
 from which
 \begin{equation}
 \label{eqn:41}
 \Delta_{ab}\kappa^C=(\epsilon_{A'B'}X_{ABE}{}^{C}+\epsilon_{AB}\Phi_{A'B'E}{}^{C})\kappa^E.
 \end{equation}
On taking into account the decomposition \eqref{eqn:38} we get easily
\begin{equation}
\label{eqn:39}
\square_{AB}\kappa^C=X_{ABE}{}^{C}\kappa^E,\hspace{1cm}\square_{A'B'}\kappa^C=\Phi_{A'B'E}{}^{C}\kappa^E
\end{equation}
and by taking complex conjugates of \eqref{eqn:41} and \eqref{eqn:39} we get 
\begin{equation*}
\Delta_{ab}\kappa^{C'}=(\epsilon_{AB}\bar{X}_{A'B'E'}{}^{C'}+\epsilon_{A'B'}\Phi_{ABE'}{}^{C'})\kappa^{E'},
\end{equation*}
and
\begin{equation*}
\square_{AB}\kappa^{C'}=\Phi_{ABE'}{}^{C'}\kappa^{E'},\hspace{1cm}\square_{A'B'}\kappa^C=\bar{X}_{A'B'E'}{}^{C'}\kappa^{E'}.
\end{equation*}
These equations may be easily generalized to many-index spinors.\\ From the first of  \eqref{eqn:39} we get, lowering the index $C$ and substituting \eqref{eqn:43}
\begin{equation*}
\square_{AB}\kappa_{C}=-X_{ABC}{}^{D}\kappa_{D}=-\Psi_{ABC}{}^{D}\kappa_D-\Lambda(\epsilon_{AC}\kappa_B+\epsilon_{BC}\kappa_A),
\end{equation*}
and, by symmetrizing on $(ABC)$ and multiplying by $\epsilon^{BC}$, the terms in $\Psi_{ABCD}$ and $\Lambda$ can be respectively singled out
\begin{equation}
\label{eqn:44}
\square_{(AB}\kappa_{C)}=-\Psi_{ABC}{}^{D}\kappa_D,\hspace{1cm}\square_{AB}\kappa^B=-3\Lambda\kappa_A.
\end{equation}
Introducing a spinor dyad $\epsilon^{A}{}_{\hat{A}}$ and its symplectic dual $\epsilon_{A}{}^{\hat{A}}$ as done in \eqref{eqn:74} we get from \eqref{eqn:39}
\begin{equation}
\label{eqn:46}
X_{ABCD}=\epsilon_{D\hat{C}}\square_{AB}\epsilon_{C}{}^{\hat{C}},\hspace{1cm}\Phi_{A'B'CD}=\epsilon_{D\hat{C}}\square_{A'B'}\epsilon_{C}{}^{\hat{C}}
\end{equation}
and from \eqref{eqn:44}
\begin{equation}
\label{eqn:45}
\Psi_{ABCD}=\epsilon_{D\hat{C}}\square_{(AB}\epsilon_{C)}{}^{\hat{C}},\hspace{1cm}\Lambda=\frac{1}{6}\epsilon_{A\hat{C}}\square^{AB}\epsilon_{B}{}^{\hat{C}}.
\end{equation}
Furthermore it can be shown \citep{Penrin1} that the Bianchi identity \eqref{eqn:47} can be deduced by the action of commutators on spinors.\\
We can re-express now \eqref{eqn:47}
in terms of $\Psi_{ABCD}$ and $\Lambda$ by use of \eqref{eqn:43}:
\begin{equation*}
\nabla^{A}_{B'}\Psi_{ABCD}=\nabla^{A'}{}_{B}\Phi_{CDA'B'}-2\epsilon_{B(C}\nabla_{D)B'}\Lambda.
\end{equation*}
By splitting this equation in its symmetric and skew-symmetric part in $BC$ we get
\begin{equation}
\label{eqn:51}
\nabla^A_{B'}\Psi_{ABCD}=\nabla^{A'}{}_{(B}\Phi_{CD)A'B'},\hspace{1cm}\nabla^{CA'}\Phi_{CDA'B'}+3\nabla_{DB'}\Lambda=0.
\end{equation}
As we see from \eqref{eqn:48} this equation is the spinor form of the important result that the Einstein tensor is divergence free, $\nabla^aG_{ab}=0$.
An important point is that the Bianchi identity could be regarded as a field equation for the Weyl tensor. It might be useful here to point out that it is a misconception to consider the Bianchi identity as simply a tautology and to ignore it as contributing no further information, as it is done even today. It is an important piece of the structure on a Riemannian or Lorentzian manifold which relates the (derivatives of the) Ricci and Weyl tensors. If the Ricci tensor is restricted by the Einstein equations to equal the energy-momentum tensor, then the Bianchi identity provides a differential equation for the Weyl tensor. Its structure is very similar to the familiar zero rest-mass equation for a particle with spin 2. In fact, in a sense, one can consider
this equation as the essence of the gravitational theory, as we will see in section \ref{sect:2.9}.
\section{The Newman-Penrose Formalism}
\label{sect:2.7}
As shown in section \ref{sect:2.3}, $(o,\iota)$ induce four null vectors, a Newman-Penrose null tetrad \eqref{eqn:18}. We denote the directional derivatives along these directions by the conventional symbols
\begin{equation}
\label{eqn:101}
D=l^a\nabla_a,\hspace{0.8cm}\Delta=n^a\nabla_a,\hspace{0.8cm}\delta=m^a\nabla_a,\hspace{0.8cm}\bar{\delta}=\bar{m}^a\nabla_a.
\end{equation}
Clearly $\nabla_a$ is a combination of these operators. In fact, using \eqref{eqn:69} we have 
\begin{equation}
\label{eqn:68}
\nabla_a=g_a{}^b\nabla_b=(n_al^b+l_an^b-\bar{m}_am^b-m_a\bar{m}^b)\nabla_b=n_aD+l_a\Delta-\bar{m}_a\delta-m_a\bar{\delta}.
\end{equation}
The idea is now to replace $\nabla_a$ by \eqref{eqn:68} and then convert all the remaining tensor equations to sets of scalar ones by contraction with the N-P null tetrad. This can lead to a big number of equations, but they usually possess discrete symmetries, and involve only scalars so they are easier to handle in specific calculations.\\
We start by considering the connection. In spinor formalism this is described by the spinor Ricci rotation coefficients \eqref{eqn:73}. Each term is of the form
\begin{equation*}
\alpha^A\nabla\beta_A,
\end{equation*}
where $\alpha$ and $\beta$ are $(o,\iota)$ and $\nabla$ is one of $D$, $\Delta$, $\delta$, $\bar{\delta}$. We could also derive them vectorially using the N-P null tetrad of vectors \eqref{eqn:18}, e.g. 
$$\kappa=o^ADo_A=o^ADo_A=o^A\bar{o}_{A'}\bar{\iota}^{A'}o^ADo_A+o^Ao_A\bar{\iota}^{A'}D\bar{o}_{A'}=o^AD\bar{\iota}^{A'}D(o_A\bar{o}_{A'})=m^aDl_a.$$
Note that $\alpha$, $\beta$, $\gamma$, $\epsilon$ have particularly complicated vector descriptions. All of these are given in the box below.
\begin{figure}[h]
\begin{center}
\includegraphics[scale=0.5]{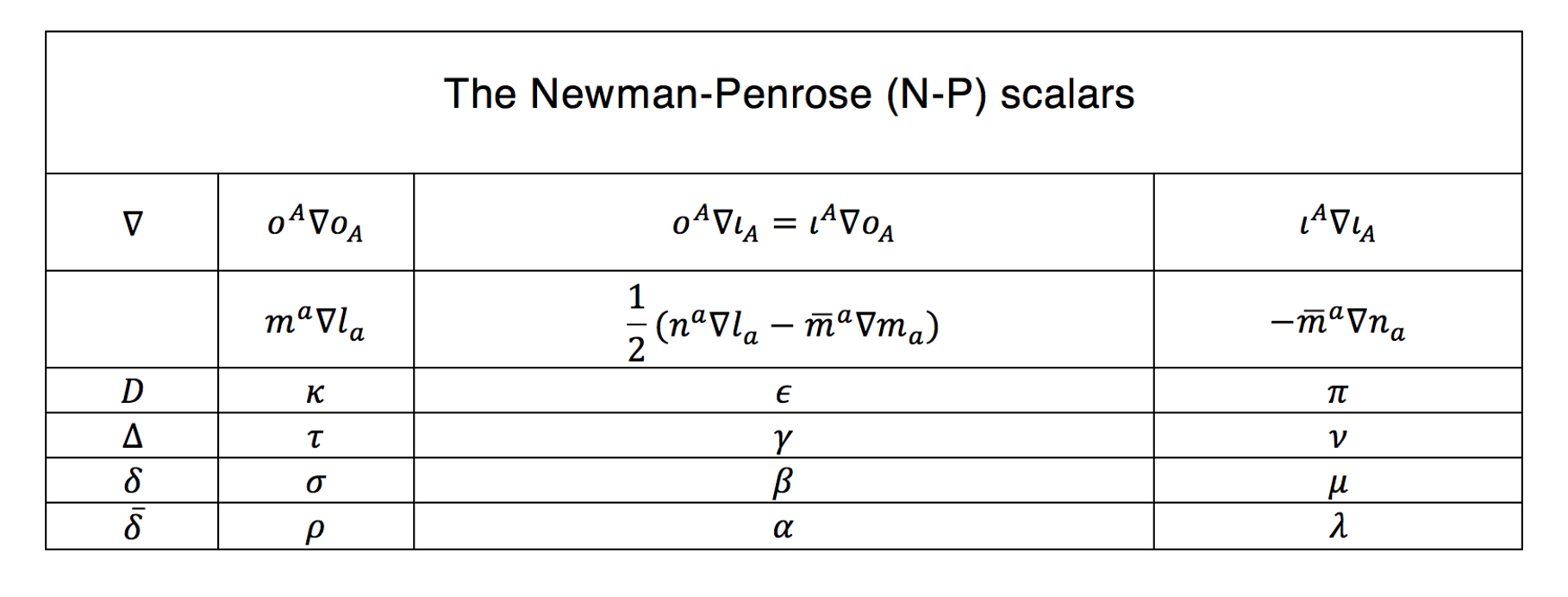}
\caption{A table with the connection coefficient.}
\label{fig:box}
\end{center}
\end{figure}
\\Note that for each scalar in the box there exists another one obtained from $(\bar{o}^{A'},\bar{\iota}^{A'})$, e.g. $\bar{\kappa}=\bar{o}^{A'}D\bar{o}_{A'}$, etc.
\\As a first application of this technique we derive the N-P description of electromagnetism. As we have seen in section \ref{sect:2.5} the Maxwell field tensor is described by a symmetric 2-spinor $\varphi_{AB}$. We can thus form 3 complex scalars
\begin{equation*}
\varphi_0=\varphi_{AB}o^Ao^B,\hspace{1cm}\varphi_1=\varphi_{AB}o^A\iota^B,\hspace{1cm}\varphi_2=\varphi_{AB}\iota^A\iota^B.
\end{equation*}
The generic term $\varphi_n$ has $n$ $\iota$'s and $2-n$ $o$'s. We have
\begin{equation*}
\varphi_{AB}=\epsilon_A{}^{C}\epsilon_B{}^{D}\varphi_{CD}=(o_A\iota^C-\iota_Ao^C)(o_B\iota^D-\iota_Bo^D)\varphi_{CD}
\end{equation*}
\begin{equation*}
=\varphi_2o_Ao_B-\varphi_1\iota_Ao_B-\varphi_1o_A\iota_B+\varphi_0\iota_A\iota_B
\end{equation*}
\begin{equation}
\label{eqn:128}
=\varphi_2o_Ao_B-2\varphi_1o_{(A}o_{B)}+\varphi_0\iota^A\iota^B.
\end{equation}
The Maxwell equations are equivalent to \begin{equation*}
\nabla^a(F_{ab}+iF^*{}_{ab})=0,
\end{equation*}
that, taking into account equation \eqref{eqn:70}, becomes
\begin{equation}
\label{eqn:75}
0=\nabla^{AA'}\varphi_{AB}=\epsilon^{AC}\nabla_{AA'}\varphi_{CB}=(o^A\iota^C-\iota^Ao^C)\nabla_{AA'}\varphi_{CB}.
\end{equation}
As $A',B$ take the values $0,1$, the previous equation splits into four complex equations corresponding to the eight real Maxwell equations. For example we may multiply it by $\bar{o}^{A'}o^B$ to get, after some calculation,
\begin{equation}
D\varphi_1-\bar{\delta}\varphi_0=(\pi-2\alpha)\varphi_0+2\rho\varphi_1-\kappa\varphi_2.
\end{equation}
Similarly, one can obtain the other three complex Maxwell equations, given in Appendix \ref{B}.\\
The same procedure can be followed for the trace-free Ricci tensor, $\Phi_{ABA'B'}$, that can be decomposed into 9 independent real quantities $\Phi_{ij}=\bar{\Phi}_{ji}$ and for the Weyl tensor $\Psi_{ABCD}$, that can be decomposed into 5 complex scalars $\Psi_n$. Note that the first index $i$ of $\Phi_{ij}$ is the number of contraction with $\iota$'s and the second, $j$, is the number of contractions with $\bar{\iota}$'s while the index $n$ of $\Psi_n$ is the number of contractions with $\iota$'s. For example, we have
\begin{equation*}
\Phi_{12}=\Phi_{ABA'B'}o^A\iota^B\bar{\iota}^{A'}\bar{\iota}^{B'},
\end{equation*} 
\begin{equation*}
\Psi_2=\Psi_{ABCD}o^Ao^B\iota^C\iota^D.
\end{equation*}
We know that the curvature tensor can be be expressed in terms of the Ricci rotation coefficients. It implies that there must be some relations in which both $\Phi$ and $\Psi$ can be linked to the N-P scalars. The equations defining the curvature tensor components in terms of derivatives and products of N-P scalars are called \textit{N-P field equations} and are 18 independent relations. They can be obtained with tedious calculations and are reported in Appendix \ref{B}. The Bianchi identities \eqref{eqn:51} too can be expressed in terms of the N-P scalars, giving 11 independent relations, that can be found, for example, in the Appendix B of \cite{Stew}.
\section{Null Congruences}
\label{sect:2.8}
A null congruence $\mathscr{C}$ is a congruence of null curves in space-time, i.e. a family of null curves with the property that precisely one member of the family passes through each point of a given domain under consideration. As we will see null congruences of rays (geodetic null curves) are very important in the gravitational radiation theory.\\
A congruence of curves is specified by giving a vector field $l^a$ on $\mathscr{M}$ and is defined to be the set of integral curves of $l^a$, i.e. the set of curves whose tangent vector is $l^a$. Choosing a parameter $u$ along each curve, the scaling of the vector $l^a$ is therefore defined by the relation
\begin{equation*}
l^a\nabla_au=1.
\end{equation*}
We are taking into account only null congruences, for which the tangent vector $l^a$ is null,
\begin{equation*}
l^al_a=0.
\end{equation*}
Using the map between spinors and vectors we may associate to $l^a$ a spin vector $o^A$ as
\begin{equation}
\label{eqn:98}
l^a=o^A\bar{o}^{A'}.
\end{equation}
If we are considering a geodetic congruence the vector $l^a$ has to be parallelly propagated along the curve,
\begin{equation}
\label{eqn:99}
l^a\nabla_al^b\propto l^b.
\end{equation}
The parameter $u$ is called affine if 
\begin{equation}
\label{eqn:100}
l^a\nabla_al^b=0.
\end{equation}
Using equations \eqref{eqn:98} and the first of \eqref{eqn:101} we have that \eqref{eqn:99} is
written in terms of the spinor $o^A$ as
\begin{equation}
\label{eqn:105}
Do^A\propto o^A,
\end{equation}
while \eqref{eqn:100} \textit{can} be written as
\begin{equation}
\label{eqn:104}
Do^A=0.
\end{equation}
Geometrically this equation tells us that the flag planes has to be parallel along $\mathscr{C}$.\\
Note that \eqref{eqn:105} is equivalent to
\begin{equation}
\label{eqn:102}
o^ADo_A=0
\end{equation}
and from the box of N-P scalars we see that this implies
\begin{equation*}
\kappa=0.
\end{equation*}
We can write down equation \eqref{eqn:102} entirely as  
\begin{equation*}
o^Ao^B\bar{o}^{A'}\nabla_{AA'}o_B=0,
\end{equation*}
and hence we obtain the following relations
\begin{equation}
\label{eqn:103}
o^A\bar{o}^{A'}\nabla_{AA'}o_B=\epsilon o_B,\hspace{0.65cm}o^B\bar{o}^{A'}\nabla_{AA'}o_B=\rho o_A,\hspace{0.65cm}o^Ao^B\nabla_{AA'}o_B=\sigma\bar{o}_{A'},
\end{equation}
$\epsilon$, $\rho$ and $\sigma$ being the N-P scalars given in the box \ref{fig:box}. This follows by transvecting the three previous equations by $\iota^B$, $\iota^A$ and $\bar{\iota}^{A'}$ respectively.\\
Notice that $\epsilon$, $\rho$ and $\sigma$ are defined without any reference to $\iota^A$, being referred to the geometry of the $o^A$ field alone. \\
From equation \eqref{eqn:100} we see that the condition for a null congruence of geodesics to be affinely parametrized is that 
\begin{equation*}
0=o^A\bar{o}^{A'}\nabla_{AA'}\left(o^B\bar{o}^{B'}\right),
\end{equation*}
hence
\begin{equation*}
0=\iota_B\bar{\iota}_{B'}o^A\bar{o}^{A'}\nabla_{AA'}\left(o^B\bar{o}^{B'}\right)=\epsilon+\bar{\epsilon}.
\end{equation*}
The condition for the geodetic congruence $\mathscr{C}$ to have both parallelly propagated flag planes alone and affine parametrization is, from equation \eqref{eqn:104}, 
\begin{equation*}
\epsilon=0.
\end{equation*} 
We now collect these results and one other in the following table:
\begin{itemize}
\item $\mathscr{C}$ geodetic $\Leftrightarrow$ $\kappa=0$;
\item $\mathscr{C}$ geodetic, u affine $\Leftrightarrow$ $\kappa=0$, $\epsilon+\bar{\epsilon}=0$;
\item $Do^A=0\Leftrightarrow\kappa=0$, $\epsilon=0$;
\item $D\iota^A=0\Leftrightarrow\pi=0$, $\epsilon=0$.
\end{itemize}
Consider now a null curve $\mu$ of the congruence $\mathscr{C}$ whose tangent null vector is $l^a=o^A\bar{o}^{A'}$. Complete $o^A$ to a spin basis $(o^A,\iota^A)$ at a point $p$ of $\mu$. We can propagate $o^A$ and $\iota^A$ along $\mu$ via
\begin{equation*}
Do^A=0,\hspace{1cm}D\iota^A=0,
\end{equation*}
where $D=l^a\nabla_a$. This means that $o^A$ and $\iota^A$ are parallelly propagated along $\mu$ and remain a spin basis at each of its points. Consider a connecting vector $\zeta^a$ of any two elements of $\mathscr{C}$, $\mu$ and $\mu'$. By definition of connecting vector, $\zeta^a$ satisfies
\begin{equation}
\label{eqn:s1}
[l,\zeta]_a=0\Rightarrow \nabla_l\zeta_a=\nabla_{\zeta}l_a.
\end{equation}
Suppose that at $p\in\gamma$ the vector $\zeta$ is orthogonal to $\gamma$. Then $l^a\zeta_a=0$ at $p$ and hence, from \eqref{eqn:s1}
\begin{equation*}
D(l^a\zeta_a)=l^aD\zeta_a=l^a\nabla_l\zeta_a=l^a\nabla_{\zeta}l_a=\frac{1}{2}\nabla_{\zeta}(l^al_a)=0.
\end{equation*}
Thus $\zeta^a$ remains always orthogonal to $\gamma$ in each of its points. Neighbouring pair of rays satisfying this property are called \textit{abreast}. Their physical meaning is the following. If we realize the congruence physically by a cloud of photons, then two abreast rays correspond to the world-lines of two neighbouring photons which in some observer's local 3-space lie in a 2-plane element perpendicular to their paths. Moreover, any two local observers will judge the photons to be at the same distance from each other if and only if the rays are abreast.\\ Construct now the N-P tetrad $(l,n,m,\bar{m})$ induced by the spin basis. Since $\zeta^a$ is real and orthogonal to $l^a$ there must exist a real $u$ and a complex $\zeta$ such that 
\begin{equation*}
\zeta^a=ul^a+\bar{\zeta}m^a+\zeta\bar{m}^a
\end{equation*}
\begin{equation*}
=uo^A\bar{o}^{A'}+\bar{\zeta}o^A\bar{\iota}^{A'}+\zeta\bar{o}^{A'}\iota^A.
\end{equation*}
It follows that 
\begin{equation*}
D\zeta^a=\nabla_l\zeta^a=\nabla_{\zeta}l^a=\zeta^b\nabla_{b}l^a=(ul^b+\bar{\zeta}m^b+\zeta\bar{m}^b)\nabla_bl^a=\bar{\zeta}\delta l^a+\zeta\bar{\delta}l^a
\end{equation*}
\begin{equation}
\label{eqn:s2}
=\bar{\zeta}o^A\delta\bar{o}^{A'}+\bar{\zeta}\bar{o}^{A'}\delta o^{A}+\zeta o^A\bar{\delta}\bar{o}^{A'}+\zeta\bar{o}^{A'}\bar{\delta}o^A.
\end{equation}
But we also have
\begin{equation}
\label{eqn:s3}
D\zeta^a=o^A\bar{o}^{A'}Du+o^A\bar{\iota}^{A'}D\bar{\zeta}+\iota^A\bar{o}^{A'}D\zeta.
\end{equation}
Thus, comparing LH sides of \eqref{eqn:s2} and \eqref{eqn:s3} we get
\begin{equation*}
o^A\bar{o}^{A'}Du+o^A\bar{\iota}^{A'}D\bar{\zeta}+\iota^A\bar{o}^{A'}D\zeta=\bar{\zeta}o^A\delta\bar{o}^{A'}+\bar{\zeta}\bar{o}^{A'}\delta o^{A}+\zeta o^A\bar{\delta}\bar{o}^{A'}+\zeta\bar{o}^{A'}\bar{\delta}o^A.
\end{equation*}
Multiplying by $o_A\bar{\iota}_{A'}$ we have
\begin{equation*}
-D\zeta=-\bar{\zeta}o_A\delta o^A-\zeta o_A\bar{\delta}o^A,
\end{equation*}
i.e.,
\begin{equation}
\label{eqn:s4}
D\zeta=-\rho\zeta-\sigma\bar{\zeta}.
\end{equation}
The interpretation of $\zeta$ is as follows. The projection of $\zeta^a$ onto the spacelike 2-plane spanned by $m^a$ and $\bar{m}^a$,  which we call $\Pi$, is
\begin{equation*}
\zeta^bm_bm^a+\zeta^b\bar{m}_b\bar{m}^a=
\zeta m^a+\bar{\zeta}\bar{m}^a.
\end{equation*}
Thus $\zeta$ describes the projection in an Argand 2-plane $\Pi$ spanned by $m^a$ and $\bar{m}^a$, see Figure \ref{fig:piani}.
\begin{figure}[h]
\begin{center}
\includegraphics[scale=0.5]{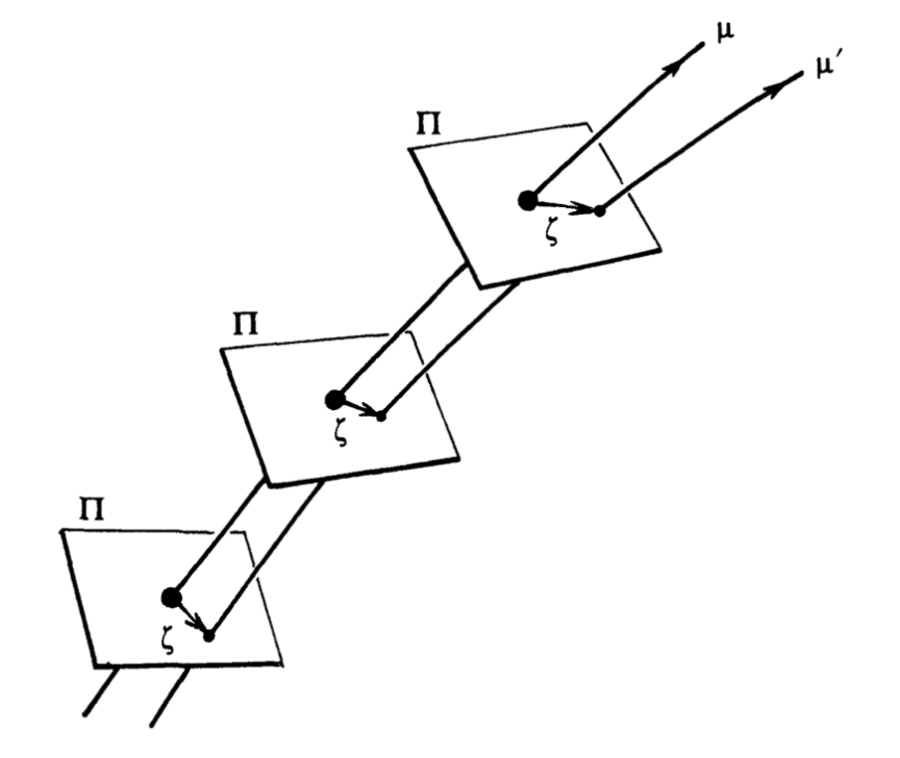}
\caption{Two abreast neighbouring rays. Their separation, as time progresses, is measured by $\zeta$.}
\label{fig:piani}
\end{center}
\end{figure}
\\Define
\begin{equation*}
\rho=k+it,\hspace{1cm}\sigma=se^{2i\theta}.
\end{equation*}
with $k$, $t$, $s$ and $\theta$ real. We consider three cases:
\begin{itemize}
\item For $t=s=0$ we have that \eqref{eqn:s4} reduces to 
\begin{equation*}
D\zeta=-k\zeta,
\end{equation*}
showing that $k=\mathrm{Re}(\rho)$ measures the rate of contraction of the simultaneous bundle of rays, i.e. the \textit{congruence} of $\mathscr{C}$.
\item For $k=s=0$ we have 
\begin{equation*}
D\zeta=-it\zeta,
\end{equation*}
showing that $t=\mathrm{Im}(\rho)$ measures the \textit{twist} (or rotation).
\item For $\rho=\theta=0$, setting $\zeta=x+iy$ we get
\begin{equation*}
Dx=-sx,\hspace{1cm}Dy=sy,
\end{equation*}
which represents a volume-preserving shear at a rate $s$ with principal axes along the $x$ and $y$ axes. Thus $s=\abs{\sigma}$ is a measure of \textit{degree of shearing}, i.e. the tendency of the initial sphere to become distorted into an ellipsoidal shape. If $s=0$, then a small spherical region will remain spherical, but if $s\neq 0$, it will be stretched in some directions. It can be shown that multiplying   $s$ by $e^{2i\theta}$ rotates the principal shear axes by $\theta$.
\end{itemize}
\begin{figure}[h]
\begin{center}
\includegraphics[scale=0.5]{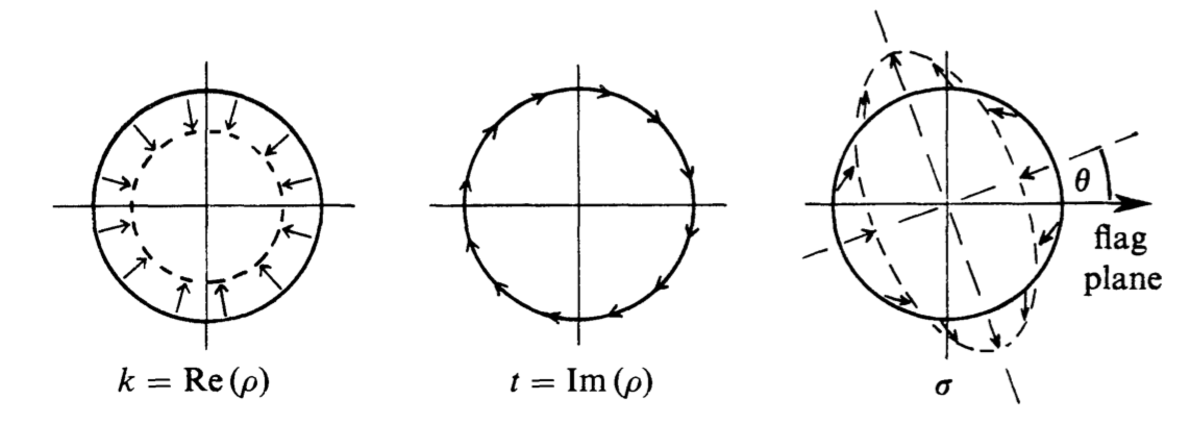}
\caption{The geometrical interpretation of $\rho$ and $\sigma$ in terms of behaviour in $\zeta$-plane.}
\end{center}
\end{figure}
\begin{thm}$\\ $
\label{thm:shear}
If $\tilde{\sigma}$ is the shear of $\tilde{l}^a$ with respect to $\tilde{g}_{ab}$, then the shear $\sigma$ of $l^a=\Omega^{-2}\tilde{l}^a$ with respect to $g_{ab}=\Omega^2\tilde{g}_{ab}$ is given by $\sigma = \Omega^{-2}\tilde{\sigma}$. In particular, shear-freeness is conformally invariant.
\end{thm}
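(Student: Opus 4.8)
The plan is to carry the whole argument in the spinor (Newman--Penrose) language, where, for $l^a = o^A\bar o^{A'}$ and a normalized spin frame $(o^A,\iota^A)$, the shear is the single complex scalar
\begin{equation*}
\sigma \;=\; \bar\iota^{A'}\, o^A o^B\, \nabla_{AA'} o_B
\end{equation*}
(this is the box entry $\sigma = o^A\delta o_A$ of Figure~\ref{fig:box}, equivalently the third relation in \eqref{eqn:103}). The first step is to fix the conformal dictionary. By \eqref{eqn:54}, $g_{ab}=\Omega^2\tilde g_{ab}$ forces $\epsilon_{AB}=\Omega\,\tilde\epsilon_{AB}$ (up to an irrelevant constant phase). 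I would then realize the rescaled congruence $l^a=\Omega^{-2}\tilde l^a$ by the spin frame $o^A=\Omega^{-1}\tilde o^A$, $\iota^A=\tilde\iota^A$: this gives $o^A\bar o^{A'}=\Omega^{-2}\tilde l^a$, preserves the normalization $\epsilon_{AB}o^A\iota^B=\tilde\epsilon_{AB}\tilde o^A\tilde\iota^B=1$, and on lowering indices yields $o_A=\tilde o_A$ and $\bar\iota^{A'}=\bar{\tilde\iota}^{A'}$. A short tensor computation using $l^a l_a=0$ confirms that $l^a=\Omega^{-2}\tilde l^a$ is exactly the rescaling under which an affinely parametrized null geodesic of $\tilde g$ stays an affinely parametrized null geodesic of $g$, so the two shears refer to genuinely corresponding objects.

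The second step is the conformal transformation of the spinor covariant derivative. Writing $\Upsilon_{AA'}=\Omega^{-1}\nabla_{AA'}\Omega$, one has $\nabla_{AA'}\xi_B=\tilde\nabla_{AA'}\xi_B+(\text{terms linear in }\Upsilon)$; the conditions $\nabla_{AA'}\epsilon_{BC}=0$ together with torsion-freeness leave only three possible index structures for the correction, namely multiples of $\Upsilon_{AA'}\xi_B$, $\Upsilon_{BA'}\xi_A$ and $\epsilon_{AB}\Upsilon^C{}_{A'}\xi_C$. The crucial point is that when $\nabla_{AA'}o_B$ is contracted against $\bar\iota^{A'}o^A o^B$, \emph{every} correction term drops: the first two carry a factor $o^B o_B=0$ or $o^A o_A=0$, and the third a factor $\epsilon_{AB}o^A o^B=0$. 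Hence $\sigma = \bar\iota^{A'} o^A o^B \nabla_{AA'} o_B = \bar\iota^{A'} o^A o^B \tilde\nabla_{AA'} o_B$, and the precise coefficients in the connection-difference formula never need to be computed.

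Finally, substituting the frame scalings $\bar\iota^{A'}=\bar{\tilde\iota}^{A'}$, $o^A=\Omega^{-1}\tilde o^A$, $o^B=\Omega^{-1}\tilde o^B$, $o_B=\tilde o_B$ into the last expression gives $\sigma=\Omega^{-2}\,\bar{\tilde\iota}^{A'}\tilde o^A\tilde o^B\tilde\nabla_{AA'}\tilde o_B=\Omega^{-2}\tilde\sigma$, and conformal invariance of the shear-free condition $\sigma=0$ follows at once. I expect the main obstacle to be bookkeeping rather than conceptual: one must keep the conformal weights of the spin frame and of $\epsilon_{AB}$ mutually consistent (so that $o^A\mapsto\Omega^{-1}\tilde o^A$ but $o_A\mapsto\tilde o_A$, and so on) and be careful to compare the ``same'' null direction with the ``same'' affine parametrization. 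Once the three admissible $\Upsilon$-structures are identified, the vanishing of all corrections --- which is the heart of the proof --- is immediate; alternatively one could simply quote the known conformal transformation laws for the Newman--Penrose spin coefficients, but the self-contained route above is cleaner.
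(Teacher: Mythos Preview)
The paper does not actually prove this theorem: it simply remarks that the proof is ``very simple'' and refers the reader to an external source (GRG, pg.~110). Your spinor argument is therefore more than the paper itself provides, and it is correct. With the frame choice $o^A=\Omega^{-1}\tilde o^A$, $\iota^A=\tilde\iota^A$ (so that $o_A=\tilde o_A$, $\bar\iota^{A'}=\bar{\tilde\iota}^{A'}$, and the normalization $\epsilon_{AB}o^A\iota^B=1$ is preserved), the only step that needs care is the elimination of the connection-difference terms. In fact the paper's own rule~2 of Section~\ref{sect:4.1} gives exactly one correction, $\nabla_{AA'}o_B=\tilde\nabla_{AA'}o_B-\Upsilon_{BA'}o_A$, which is killed by $o^Ao_A=0$; your more general observation that \emph{any} admissible $\Upsilon$-correction is annihilated by the contraction with $o^Ao^B\bar\iota^{A'}$ is a nice robustness check but not strictly needed here. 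The final substitution then yields $\sigma=\Omega^{-2}\tilde\sigma$ cleanly.

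One small point of presentation: your remark about affine parametrization being preserved by $l^a=\Omega^{-2}\tilde l^a$ is a useful sanity check, but it is not logically required for the identity $\sigma=\Omega^{-2}\tilde\sigma$, which is a purely algebraic statement about the NP scalar at a point once the frames are matched as you have done.
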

Although the proof to this theorem is very simple \citep[see][pg.~110]{GRG}, it will play an important role in the description of the properties of $\mathscr{I}$.\\
Since \eqref{eqn:s4} is a linear equation, the general case is a superposition of these effects: the congruence, or more precisely the projection of the connecting vector onto an orthogonal spacelike 2-surface, is expanded, rotated and sheared.\\
\begin{defn}$\\ $
A null congruence $\mathscr{C}$ is said to be \textit{hypersurface-orthogonal} if its tangent vector field, $l^a$, is proportional to a gradient field, i.e. there exist $v$ and $f$ such that 
\begin{equation*}
l_a=v\nabla_af.
\end{equation*} 
\end{defn}
If $\mathscr{C}$ is hypersurface-orthogonal we have 
\begin{equation*}
l_{[a}\nabla_bl_{c]}=v\nabla_{[a}f\nabla_b(v\nabla_{c]}f)=v\nabla_{[a}f\nabla_bv\nabla_{c]}f+v^2\nabla_{[a}f\nabla_b\nabla_{c]}f
\end{equation*}
\begin{equation*}
=\frac{1}{3}v(\nabla_bv\nabla_{[a}f\nabla_{b]}f+\nabla_cv\nabla_{[b}f\nabla_{a]}f+\nabla_av\nabla_{[c}f\nabla_{b]}f)
\end{equation*}
\begin{equation*}
+\frac{1}{3}v^2(\nabla_af\nabla_{[b}\nabla_{c]}f+\nabla_bf\nabla_{[c}\nabla_{a]}f+\nabla_cf\nabla_{[a}\nabla_{b]}f)=0.
\end{equation*}
Note that the second term vanishes in virtue of the torsion free condition. The converse is also true, i.e. $\mathscr{C}$ is hypersurface-orthogonal if its tangent vector $l^a$ satisfies $ l_{[a}\nabla_bl_{c]}=0$. This is a particular consequence of the Frobenius theorem \citep[see][pg.~434-436]{Wald}. Hence we can state the following
\begin{prop}$\\ $
\begin{center}
$\mathscr{C}$ is hypersurface-orthogonal $\Leftrightarrow l_{[a}\nabla_bl_{c]}=0$.
\end{center}
\end{prop}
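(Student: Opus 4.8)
The plan is to prove the two implications separately, the first being essentially already done in the displayed computation that precedes the statement.

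\medskip
\emph{Direction $(\Rightarrow)$.} Assuming $l_a=v\nabla_a f$ for some functions $v$ and $f$, I would expand $l_{[a}\nabla_b l_{c]}$ by the Leibniz rule, as in the lines above the proposition, obtaining one term proportional to $v\,\nabla_{[a}f\,\nabla_b v\,\nabla_{c]}f$ and one proportional to $v^2\,\nabla_{[a}f\,\nabla_b\nabla_{c]}f$. The first vanishes because two of the three antisymmetrised slots carry the same factor $\nabla f$, so the antisymmetrisation kills it; the second vanishes because $\nabla_{[b}\nabla_{c]}f=0$ by the torsion-free assumption on $\nabla$ (the Hessian of a scalar is symmetric). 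Hence $l_{[a}\nabla_b l_{c]}=0$, and this half needs nothing more than a clean rewriting of that calculation.

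\medskip
\emph{Direction $(\Leftarrow)$.} Here the natural route is to pass to the language of differential forms and invoke the Frobenius theorem. I would regard $l_a$ as a nowhere-vanishing $1$-form $\ell$ on the domain under consideration; since $\nabla$ is torsion-free, its exterior derivative has components $2\nabla_{[b}l_{c]}$, so the $3$-form $\ell\wedge d\ell$ has components proportional to $l_{[a}\nabla_b l_{c]}$. Thus the hypothesis $l_{[a}\nabla_b l_{c]}=0$ is precisely the Frobenius integrability condition $\ell\wedge d\ell=0$ for the $3$-dimensional distribution $\ker\ell$. By the (dual form of the) Frobenius theorem, $\ker\ell$ is integrable, so every point has a neighbourhood foliated by integral hypersurfaces; choosing a local function $f$ whose level sets are these leaves, one has $\ker\ell=\ker df$ at each point of that neighbourhood, whence $\ell=v\,df$ for some nowhere-zero function $v$, i.e. $l_a=v\nabla_a f$. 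This is exactly hypersurface-orthogonality.

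\medskip
Two remarks on what is and is not delicate here. The statement is local — being hypersurface-orthogonal only requires the local existence of $v$ and $f$ — so the local version of Frobenius suffices and no global foliation obstruction needs to be discussed; and the null character of $l^a$ plays no role, all that is used being that $l^a$ (hence $\ell$) is nowhere zero, which is built into the notion of a congruence. The only genuinely non-trivial input is the Frobenius theorem itself, for which I would simply cite \cite[pg.~434--436]{Wald} as in the text, so the real ``obstacle'' is merely the bookkeeping of checking that the component expression $l_{[a}\nabla_b l_{c]}$ coincides, up to a constant factor, with the coordinate-free $\ell\wedge d\ell$ — and that identification rests squarely on the torsion-free condition, which is what lets antisymmetrised covariant derivatives be replaced by exterior derivatives.
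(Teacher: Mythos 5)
Your proposal is correct and follows essentially the same route as the paper: the forward implication is the explicit Leibniz-rule computation displayed just before the proposition, and the converse is obtained by recognising $l_{[a}\nabla_b l_{c]}=0$ as the Frobenius integrability condition and citing \cite[pg.~434--436]{Wald}, exactly as the text does. Your added remarks on locality and on why the torsion-free condition lets one trade antisymmetrised covariant derivatives for exterior derivatives are sound but do not change the argument.
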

Furthermore we have
\begin{itemize}
\item $0=m^a\bar{m}^bn^cl_{[a}\nabla_bl_{c]}=m^a\bar{m}^bn^c(l_a\nabla_bl_c-l_a\nabla_cl_b+l_c\nabla_al_b-l_c\nabla_bl_a+l_b\nabla_cl_a-l_b\nabla_al_c)$\\
i.e.,\\
$0=m^a\bar{m}^b(\nabla_al_b-\nabla_bl_a)=\bar{m}^b\delta l_b-m^b\bar{\delta}l_b\Rightarrow\rho=\bar{\rho};$ 

\item $0=m^al^bn^cl_{[a}\nabla_bl_{c]}=m^a\bar{m}^bn^c(l_a\nabla_bl_c-l_a\nabla_cl_b+l_c\nabla_al_b-l_c\nabla_bl_a+l_b\nabla_cl_a-l_b\nabla_al_c)$\\
i.e.,\\
$0=l^b\delta l_b-m^bDl_b\Rightarrow 0=-m^bDl_b=-\kappa\Rightarrow \kappa=0.$ 
\end{itemize}
Thus we have
\begin{prop}$\\ $
\label{prop:equivalence}
\begin{center}
$\mathscr{C}$ is hypersurface-orthogonal $\Leftrightarrow l_{[a}\nabla_bl_{c]}=0\Leftrightarrow\left\{\begin{matrix}\kappa=0  \\ \rho=\bar{\rho}\end{matrix}\right.$
\end{center}
i.e. a null congruence $\mathscr{C}$ is hypersurface-orthogonal if and only if it is geodetic and twist-free.
\end{prop}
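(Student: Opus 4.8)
The plan is to establish the two equivalences in Proposition \ref{prop:equivalence} by chaining together work already done in the excerpt. The first equivalence, $\mathscr{C}$ hypersurface-orthogonal $\Leftrightarrow l_{[a}\nabla_b l_{c]}=0$, is exactly the Proposition stated just above (via the Frobenius theorem, \cite[pg.~434-436]{Wald}), so there is nothing new to prove there; I would simply invoke it. The second equivalence, $l_{[a}\nabla_b l_{c]}=0 \Leftrightarrow \{\kappa=0,\ \rho=\bar\rho\}$, is what actually needs to be verified, and the two itemized computations immediately preceding the statement already do the forward direction: transvecting $l_{[a}\nabla_b l_{c]}$ with $m^a\bar m^b n^c$ yields $\rho=\bar\rho$, and transvecting with $m^a l^b n^c$ yields $\kappa=0$. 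So the first step is to collect these as the implication $(\Rightarrow)$.

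Next I would prove the reverse implication $\{\kappa=0,\ \rho=\bar\rho\}\Rightarrow l_{[a}\nabla_b l_{c]}=0$. The cleanest route is to note that the totally antisymmetric tensor $l_{[a}\nabla_b l_{c]}$ lives in a low-dimensional space once one quotients by the factor of $l$: since it is annihilated by contraction with $l^a$ in its first slot (as $l^a l_{[a}\nabla_b l_{c]}=\tfrac13(l^al_a\nabla_{[b}l_{c]}+\cdots)$, and $l^al_a=0$, $l^a\nabla_b l_a=\tfrac12\nabla_b(l^al_a)=0$), it is determined by its contractions against the complementary directions of the N-P tetrad, namely against pairs drawn from $\{n^a, m^a, \bar m^a\}$ in the free slots. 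The genuinely independent scalar components are $m^a\bar m^b n^c\,l_{[a}\nabla_b l_{c]}$ (which by the computation above equals $\tfrac13(\bar m^b\delta l_b - m^b\bar\delta l_b)=\tfrac13(\bar\rho-\rho)$, up to sign conventions), $m^a n^b l^c\,l_{[a}\nabla_b l_{c]}$ and its conjugate (which reduce to $\kappa$ and $\bar\kappa$), and the component with two tetrad vectors from $\{m,\bar m\}$ and one $l$, which collapses to terms in $\kappa,\sigma$ that already vanish when $\kappa=0$. Thus once $\kappa=0$ and $\rho-\bar\rho=0$, every independent component of $l_{[a}\nabla_b l_{c]}$ vanishes, hence the tensor itself vanishes. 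I would present this as: expand $l_{[a}\nabla_b l_{c]}$ in the tetrad $\{l,n,m,\bar m\}$, use the box of N-P scalars (Figure \ref{fig:box}) to identify each coefficient, and observe that all surviving coefficients are among $\kappa$, $\bar\kappa$, $\rho-\bar\rho$.

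Finally I would restate the geometric reading: $\kappa=0$ is precisely the condition that $\mathscr{C}$ be geodetic (as recorded in the table of Section \ref{sect:2.8}), and $\rho=\bar\rho$ means $\mathrm{Im}(\rho)=0$, i.e. vanishing twist $t$, by the interpretation of $\rho=k+it$ established via equation \eqref{eqn:s4}. Chaining the two equivalences then gives the full statement.

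The main obstacle is purely bookkeeping: correctly enumerating the independent components of the $3$-index antisymmetric tensor $l_{[a}\nabla_b l_{c]}$ modulo the degeneracy coming from $l^al_a=0$, and matching each contraction against the right N-P scalar with the correct sign and normalization. The two itemized calculations in the excerpt show the pattern (expand the antisymmetrization into its six terms, kill those with a contracted $l$ against itself or against $\nabla l$), but one must be careful that after imposing $\kappa=0$ and $\rho=\bar\rho$ there is genuinely no residual component — in particular the component involving $\sigma$ must be checked to be proportional to $\kappa$ (and hence zero) rather than to $\sigma$ itself. No hard analysis is involved; the risk is an index slip rather than a conceptual gap.
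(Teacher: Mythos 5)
Your proposal is correct and follows the paper's route: the first equivalence is just the preceding Frobenius-theorem proposition, and the forward direction of the second equivalence is exactly the two tetrad contractions ($m^a\bar m^b n^c$ giving $\rho=\bar\rho$ and $m^a l^b n^c$ giving $\kappa=0$) computed in the text. The paper in fact stops there and simply asserts the proposition, leaving the converse $\{\kappa=0,\ \rho=\bar\rho\}\Rightarrow l_{[a}\nabla_b l_{c]}=0$ implicit; your enumeration of the four independent components of the totally antisymmetric tensor $l_{[a}\nabla_b l_{c]}$ against tetrad triples --- which reduce to $\kappa$, $\bar\kappa$, $\rho-\bar\rho$, and an identically vanishing $(l,m,\bar m)$ component, every term of which carries a factor $l^a l_a$, $m^a l_a$, $\bar m^a l_a$ or $l^a\nabla_b l_a$ --- supplies that missing half and is the right way to close the argument.
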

\begin{figure}[h]
\begin{center}
\includegraphics[scale=0.5]{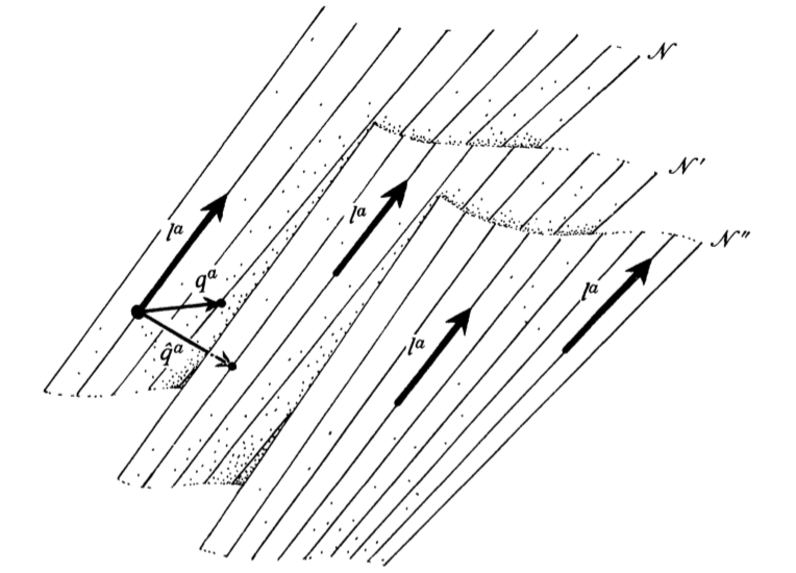}
\caption{A one-parameter family of null hypersurfaces, whose normals $l^a$ are null vectors and which are therefore also tangent vectors.}
\end{center}
\end{figure}
If a null congruence $\mathscr{C}$ with tangent vector $l^a$ is hypersurface-orthogonal, then, by definition, the hypersurfaces $\mathscr{N}$ to which it is orthogonal must be null, and $l^a$ is also tangent to them. Moreover, since the normal direction to a particular $\mathscr{N}$ is unique at each point, and since $l^a$ is the only null direction orthogonal to $l^a$, $l^a$ is the unique future-pointing null tangent direction at each point of $\mathscr{N}$. These directions in $\mathscr{N}$ have a two-parameter family of integral curves called \textit{generators}: they \lq form\rq\hspace{0.1mm} $\mathscr{N}$. Conversely, the generators of a one-parameter family of null hypersurfaces $\mathscr{N}$ constitute a three-parameter family of null lines which are hypersurface-orthogonal. Note that the generators of the hypersurfaces $\mathscr{N}$, being null and hypersurface-orthogonal, must be \textit{geodetic} by proposition \ref{prop:equivalence}. For the equivalence stated above the quantities $\rho$ and $\sigma$ refer as well to the geometry of $\mathscr{N}$ as to the entire congruence $\mathscr{C}$. We can speak of \textit{convergence and shear of a single null hypersurface}. Therefore we have the following characterization.
\begin{prop}$\\ $
A null congruence $\mathscr{C}$ with tangent vector $l^a$ is hypersurface-orthogonal if and only if it is null-hypersurface forming, i.e. there exists a one parameter family of null hypersurfaces to which $l^a$ is tangent at each point.
\end{prop}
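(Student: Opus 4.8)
\emph{Proof proposal.} The plan is to connect the analytic notion of hypersurface-orthogonality, $l_a=v\nabla_a f$, with the geometric notion of a one-parameter family of null hypersurfaces to which $l^a$ is everywhere tangent, by routing both through the differential criterion $l_{[a}\nabla_b l_{c]}=0$ already available from Proposition \ref{prop:equivalence} together with the Frobenius theorem. The key pointwise fact to be used throughout is a piece of Lorentzian linear algebra: if $\mathscr{N}$ is a null hypersurface with (null) normal covector $n_a$, then $n^a$ is itself tangent to $\mathscr{N}$, the induced metric on $T_p\mathscr{N}$ is degenerate with one-dimensional kernel spanned by $n^a$, and consequently $\mathrm{span}(n^a)$ is the \emph{only} null direction contained in $T_p\mathscr{N}$. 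Indeed, decomposing any $v\in T_p\mathscr{N}$ as $v^a=\alpha n^a+w^a$ with $w^a$ in the complementary spacelike $2$-plane gives $v^a v_a=w^a w_a$, which vanishes only when $w^a=0$.

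First I would prove the direction ($\Rightarrow$). Assuming $\mathscr{C}$ is hypersurface-orthogonal, locally $l_a=v\nabla_a f$ with $\nabla_a f\neq 0$, so the level sets $\mathscr{N}_c=\{f=c\}$ form a one-parameter family of hypersurfaces whose normal covector is $\nabla_a f\propto l_a$. Since $l^a l_a=0$, each $\mathscr{N}_c$ has null normal and is therefore a null hypersurface, and by the linear-algebra remark its null normal direction $l^a$ is simultaneously tangent to $\mathscr{N}_c$. Hence $\mathscr{C}$ is null-hypersurface forming.

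For the converse ($\Leftarrow$), suppose a one-parameter family of null hypersurfaces $\mathscr{N}_c$ exists with $l^a$ tangent to each at every point. Fix $p$ on a leaf $\mathscr{N}_c$ with null normal $n_a$; since $l^a$ is null and lies in $T_p\mathscr{N}_c$, the linear-algebra remark forces $l^a\propto n^a$, hence $l_a\propto n_a$. Writing the family as the level sets of a local defining function $f$ (so $n_a\propto\nabla_a f$) then yields $l_a=v\nabla_a f$ for some function $v$, i.e. $\mathscr{C}$ is hypersurface-orthogonal. Equivalently, both directions can be phrased via Frobenius: the rank-$3$ distribution $D_p=\{v^a:v^a l_a=0\}$ contains $l^a$ because $l$ is null, its integral leaves — when they exist — are exactly the $\mathscr{N}_c$, and $D$ is involutive precisely when $l_{[a}\nabla_b l_{c]}=0$, which by Proposition \ref{prop:equivalence} is equivalent to hypersurface-orthogonality.

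The hard part will not be any computation but making the two geometric statements precise and verifying the linear-algebra lemma carefully in Lorentzian signature, since it is exactly the degeneracy of the induced metric on a null hypersurface that makes its null normal coincide with a tangent direction — a phenomenon with no Riemannian analogue. A secondary point requiring care is the local-versus-global distinction: the defining function $f$ and the foliation are only guaranteed locally, so the equivalence should be read locally (or one must invoke that the leaves of an involutive distribution assemble into a foliation), exactly as in the Frobenius theorem cited after Proposition \ref{prop:equivalence}.
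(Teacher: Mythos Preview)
Your proposal is correct and follows essentially the same approach as the paper. The paper's own argument is the informal discussion paragraph immediately preceding the proposition (there is no separate proof environment): it notes that if $l^a$ is hypersurface-orthogonal then the hypersurfaces $\mathscr{N}$ are null with $l^a$ tangent to them, and conversely that the generators of a one-parameter family of null hypersurfaces are hypersurface-orthogonal. Your version is more careful in two respects that the paper leaves implicit: you explicitly state and prove the Lorentzian linear-algebra lemma that the only null tangent direction to a null hypersurface is its normal direction, and in the converse you make clear why a null $l^a$ that is merely \emph{tangent} to the leaves must in fact coincide with the normal; but these are elaborations of the same idea rather than a different route.
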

\section{Einstein's Equations\hspace{1cm}}
\label{sect:2.9}
The decomposition \eqref{eqn:27} of the Riemann tensor into its irreducible spinorial parts allows us to discuss the structure of space-time  curvature, as is implied by Einstein's Field Equations,
\begin{equation}
\label{eqn:31}
R_{ab}-\frac{1}{2}Rg_{ab}+\lambda g_{ab}=-8\pi GT_{ab}.
\end{equation}
Here $\lambda$ is the cosmological constant and $T_{ab}$ the local stress-energy-momentum tensor. \\
Using \eqref{eqn:31} and \eqref{eqn:32}-\eqref{eqn:33} we get 
\begin{equation*}
8\pi GT_{ab}=(6\Lambda-\lambda)g_{ab}+2\Phi_{ab},
\end{equation*}
from which, taking into account the trace-free property of $\Phi_{ab}$,
\begin{equation}
\label{eqn:49}
\Lambda=\frac{\pi G}{3}T^{a}{}_{a}+\frac{\lambda}{6},
\end{equation}
and thus
\begin{equation}
\label{eqn:50}
\Phi_{ab}=4\pi G\left(T_{ab}-\frac{1}{4}g_{ab}T^{c}{}_{c}\right).
\end{equation}
Thus, in terms of the spinors and $\Phi_{ABC'D'}$ and $\Psi_{ABCD}$, introducing the spinor $T_{AA'BB'}$ equivalent to the tensor $T_{ab}$ and taking into account the first of \eqref{eqn:51}, the field equations become
\begin{subequations}
\label{eqn:53}
\begin{align}
&\nabla^A{}_{B'}\Psi_{ABCD}=4\pi G\nabla^{A'}{}_{(B}T_{CD)A'B'},\\
&\Phi_{ABC'D'}=4\pi G\left(T_{ABC'D'}-\frac{1}{4}\epsilon_{AB}\epsilon_{C'D'}T^{EE'}{}_{EE'}\right),\\
&\Lambda=\frac{\pi G}{3}T^{EE'}{}_{EE'}+\frac{\lambda}{6}.
\end{align}
\end{subequations}
If we assume now that $T_{ab}=0$, i.e. in absence of matter, the previous equations reduce to
\begin{subequations}
\label{eqn:52}
\begin{align}
\label{eqn:52a}
&\nabla^{AA'}\Psi_{ABCD}=0,\\
&\Phi_{ABC'D'}=0,\\
&\Lambda=\frac{1}{6}\lambda.
\end{align}
\end{subequations}
In particular equations \eqref{eqn:52a} is in a certain sense analogous to an actual field equation. It has a significance as being formally identical with the wave equation for a massless (zero rest-mass) spin 2-particle.
\chapter{Conformal Infinity}
\label{chap:3}
\begin{center}
\begin{large}
\textbf{Abstract}
\end{large}
\end{center}
In this part of the work the notion of conformal infinity, originally introduced by Penrose, is developed. The idea, which can be found in \citep{Pen62}, is that if the space-time is considered from the point of view of its conformal structure only, \lq points at infinity\rq\hspace{0.1mm} can be treated on the same basis as finite points. This can be done completing the space-time manifold to a highly symmetrical conformal manifold by the addition of a null cone at infinity, called $\mathscr{I}$. In this chapter we will first build up this kind of construction for Minkowski and Schwarzschild space-times and study their properties. Basing on these results we will give the definition of \lq asymptotic simplicity\rq\hspace{0.1mm}. Owing to their conformal invariance, zero rest-mass fields can be studied on the whole of this conformal manifold, as will be done in the next chapter. It is worth noting that this method, which is deeply geometrical and coordinate-free, and hence more elegant, allows to obtain many results about the gravitational radiation in a very simple and natural way.
\section{Introduction}
\label{sect:3.1}
The notion of \lq conformal infinity\rq\hspace{0.1mm} introduced by Penrose almost fifty years ago is one of the most fruitful concepts within Einstein's theory of gravitation. Most of the modern developments in the theory are based on or at least influenced in one way or another by the conformal properties of Einstein's equations in general or, in particular, by the structure of null infinity: the study of radiating solutions of the field equations and the question of fall-off conditions for them; the global structure of space times; the structure of singularities; conserved quantities; the null hypersurface formulation of General Relativity; the conformal field equations and their importance for the numerical evolution of space-times.\\
To introduce the notion of conformal infinity in a suitable way consider the situation of an isolated gravitating source. Then we might expect space-time to become \lq flat asymptotically\rq\hspace{0.1mm} as we move further away from it. How to formulate the concept of asymptotic flatness is a priori rather vague. Somehow we want to express the fact that the space-time \lq looks like\rq\hspace{0.1mm} Minkowski space-time at \lq large distances\rq\hspace{0.1mm} from the source. Obviously, the investigation of asymptotic properties depends critically on \textit{how} we approach infinity. In order to fix ideas, consider an isolated gravitational system that emits gravitational waves carrying positive energy, as consequence of its varying asymmetry \citep{Stew,Pen67}. To find out what energy we should assign to the waves, it is necessary to measure the masses $m_1$ and $m_2$  respectively before and after the emission and then evaluate the difference $m_1 - m_2$. One way to measure the mass $m_1$ would be to integrate some expression of mass density over a spacelike hypersurface $\mathscr{S}_1$; however in this  case we should take into account the non-local mass density of the gravitational field itself. To avoid those complicated effects a good idea would be to take the hypersurface $\mathscr{S}_1$, which can be chosen to be a 2-surface, to infinity, where the curvature should become  in some way\lq small \rq\hspace{0.1mm} for an asymptotically flat space-time. The mass measure obtained in this way is called \textit{ADM mass}. But if we were taking in the same way a spacelike 2-surface $\mathscr{S}_2$ that intercepts the source after the emission of gravitational radiation, we wouldn't get $m_2$ as our mass measurement, but again $m_1$. This is because such a $\mathscr{S}_2$ would intercept, in addiction to the source worldline, the radiation ones too, giving a total mass measure $m_2+(m_1-m_2) =m_1$. 
To obtain a correct measurement of $m_2$ we should bend $\mathscr{S}_2$ so that the already emitted radiation remains in its past, never intersecting it. As a consequence we pick $\mathscr{S}_2$ to be a null surface at far distances from the source and the corresponding mass we obtain is called \textit{Bondi mass}. The measurement of $m_1$ can be done by using a null 2-surface too. We denote the two null surfaces used by $\mathscr{N}_1$ e $\mathscr{N}_2$.\\
\begin{figure}[h]
\begin{center}
\includegraphics[scale=0.38]{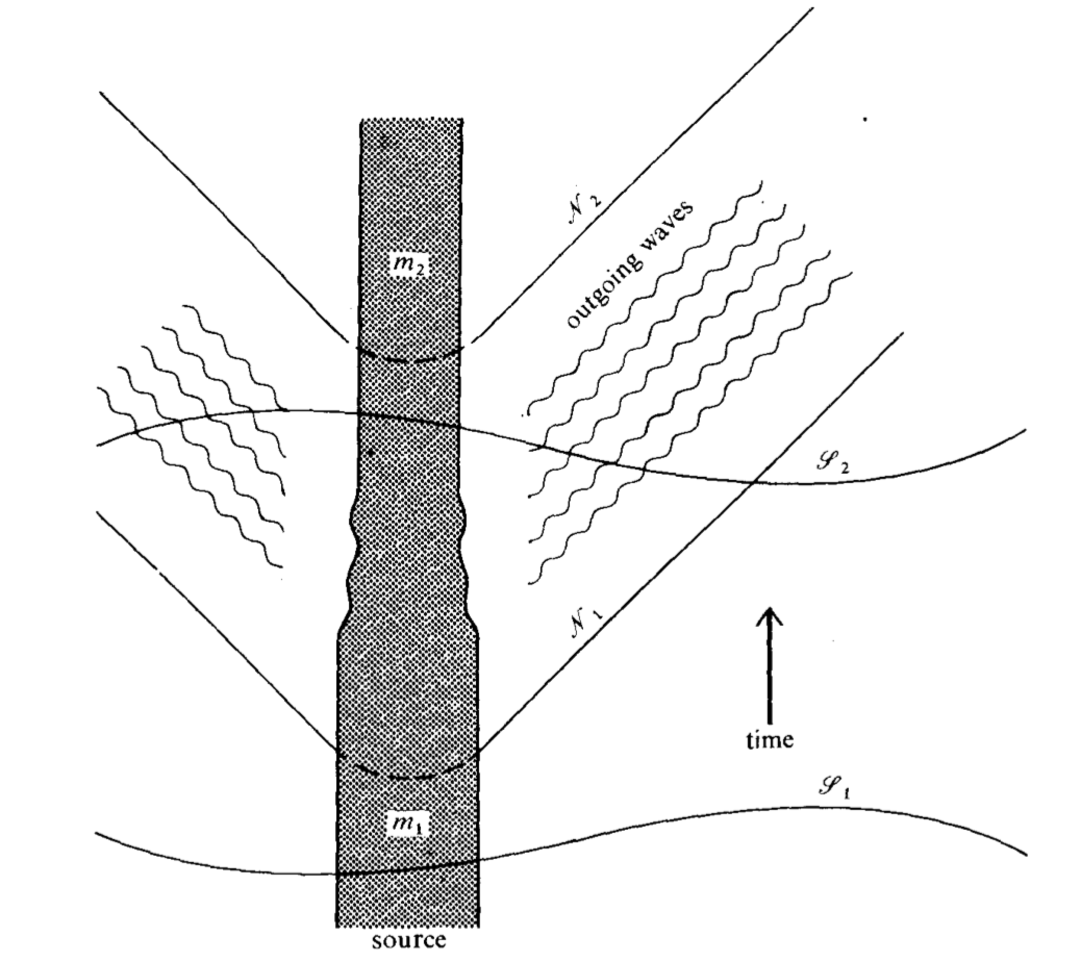}
\hspace{15mm}
\includegraphics[scale=0.5]{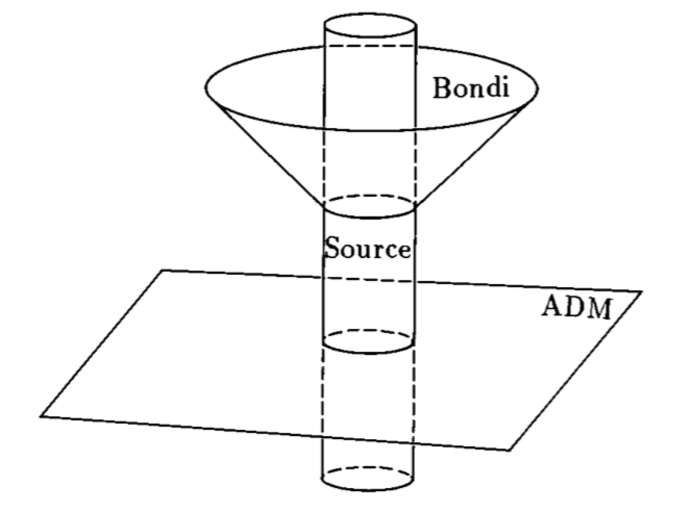}
\caption{On the left, the choice of the hypersurfaces $\mathscr{S}_1$,$\mathscr{S}_2$, $\mathscr{N}_1$ ed $\mathscr{N}_2$. On the right, the difference between ADM and Bondi mass.}
\label{fig:2.1}
\end{center}
\end{figure}
\\The main question now is to understand how to take the above limits to infinity in an appropriate way and to express properly the concept of asymptotic behaviour. While it is certainly possible to discuss asymptotic properties by taking carefully limits as \lq$r\rightarrow\infty$\rq , there is an equivalent way to approach the problem, introduced by Penrose, which allows us to avoid such limit operations and that possesses a more geometrical nature, based on the concept of space-time conformal rescalings. As we will see the main feature of those transformations will be to \lq make infinity finite\rq\hspace{0.1mm} in a certain way.
\section{Conformal Structure of Minkowski and\\Schwarzschild Space-Times\hspace{1cm}}
\label{sect:3.2}
The idea \citep{Pen62,Pen63,Pen64,Pen67} is to construct, starting from the \lq physical space-time\rq\hspace{0.1mm} $(\mathscr{\tilde M},\tilde{g})$, another \lq unphysical space-time\rq\hspace{0.1mm} $(\mathscr{M},g)$ with boundary $\mathscr{I}=\mathscr{\partial M}$ (see Appendix \ref{A}), such that $\mathscr{\tilde M}$ is conformally equivalent to the interior of $\mathscr{M}$ with $g_{ab}=\Omega^2\tilde g_{ab}$, given an appropriate function $\Omega$. The two metric $\tilde g_{ab}$ and $g_{ab}$ define on $\mathscr{\tilde{M}}$ the same null-cone structure. The function $\Omega$ has to vanish on $\mathscr{I}$, so that the physical metric would have to be infinite on it and cannot be extended. The boundary $\mathscr{I}$ can be thought as being at infinity, in the sense that any affine parameter in the metric $\tilde{g}$ on a null geodesic in $\mathscr{M}$ attains unboudedly large values near $\mathscr{I}$. This is because if we consider an affinely parametrized null geodesic $\gamma$ in the unphysical space-time $(\mathscr{M},g)$ with affine parameter $\lambda$, whose equation is\begin{equation*}
\frac{d^2x^a}{d\lambda^2}+\Gamma^a{}_{bc}\frac{dx^b}{d\lambda}\frac{dx^c}{d\lambda}=0,
\end{equation*}
it is easy to see that the corresponding geodesic $\tilde{\gamma}$ in the physical space-time $(\tilde{\mathscr{M}},\tilde{g})$ with affine parameter $\tilde{\lambda}(\lambda)$ is solution of the equation 
\begin{equation*}
\frac{d^2x^a}{d\tilde{\lambda}^2}+\tilde{\Gamma}^a{}_{bc}\frac{dx^b}{d\tilde{\lambda}}\frac{dx^c}{d\tilde{\lambda}}=-\frac{1}{\tilde{\lambda}'}\left(\frac{\tilde{\lambda}''}{\tilde{\lambda}'}+2\frac{\Omega'}{\Omega}\right)\frac{dx^a}{d\tilde{\lambda}},
\end{equation*} 
where a $'$ denotes a $\lambda$ derivative. If we want the parameter $\tilde{\lambda}$ to be affine the right hand side of the above equation must vanish, and hence we must have
\begin{equation*}
\frac{d\tilde{\lambda}}{d\lambda}=\frac{c}{\Omega^2},
\end{equation*}
where $c$ is an arbitrary constant. Since $\Omega=0$ on $\mathscr{I}$, $\tilde{\lambda}$ diverges and hence $\tilde{\gamma}$ never reaches $\mathscr{I}$, which apparently really is at infinity. Thus, from the point of view of the physical metric, the new points (i.e. those on $\mathscr{I}$) are infinitely distant from their neighbours and hence, physically, they represent \lq points at infinity\rq .\\ The advantage in studying the space-time $(\mathscr{M},g)$ instead of $(\tilde{\mathscr{M}},\tilde{g})$ is that the infinity of the latter gets represented by a finite hypersurface $\mathscr{I}$ and the asymptotic properties of the fields defined on it can be investigated by studying $\mathscr{I}$ and the  behaviour of such fields on $\mathscr{I}$. \\
However, there is a large freedom for the choice of the function $\Omega$. Anyway, it will turn out from general considerations that an appropriate behaviour for $\Omega$ is that it should approach zero (both in the past and in the future) like the reciprocal of an affine parameter $\lambda$ on a null geodesic of the space-time considered ($\lambda\Omega\rightarrow \mathrm{constant}$ as $\lambda\rightarrow\pm\infty$).
\begin{figure}[h]
\begin{center}
\includegraphics[scale=0.4]{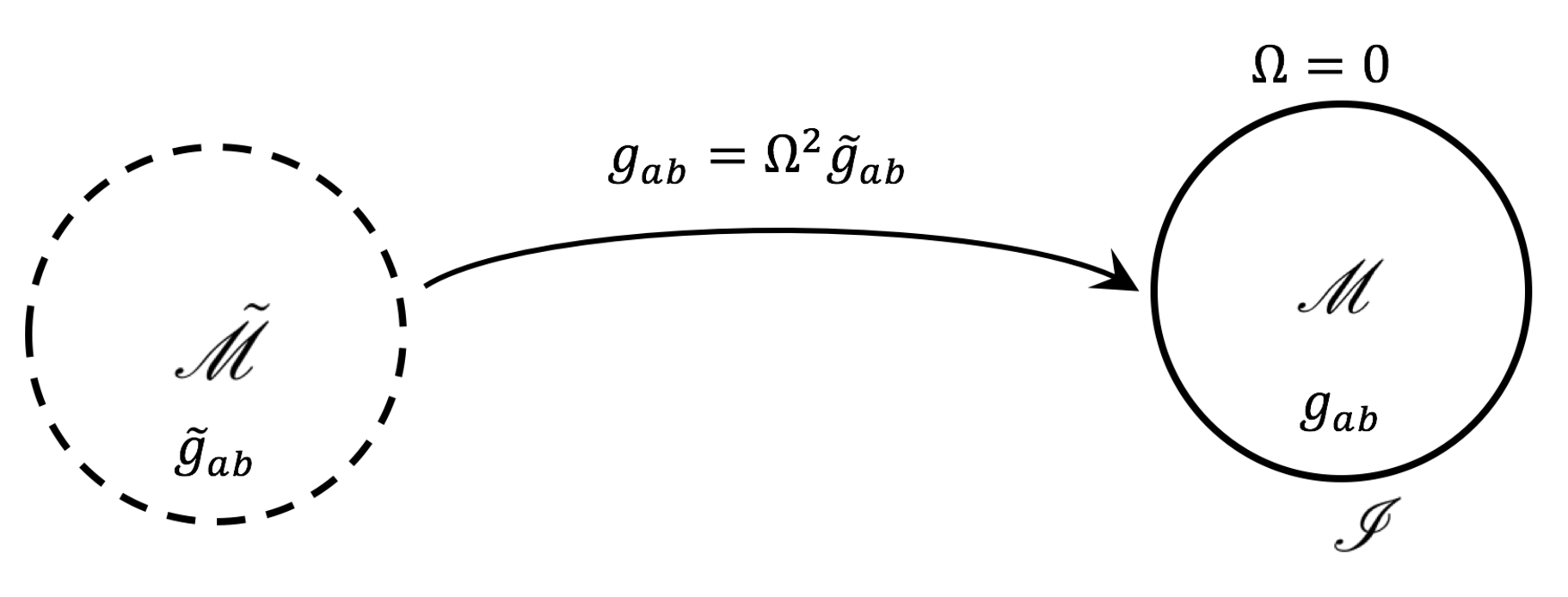}
\caption{Conformal transformation from $\mathscr{\tilde M}$ to $\mathscr{M}$}
\label{fig:2.2}
\end{center}
\end{figure}
\\Consider physical Minkowski space-time in spherical polar coordinates
\begin{equation}
\tilde{g}=dt\otimes dt -dr\otimes dr-r^2\Sigma_2,
\end{equation}
where
\begin{equation}
\label{eqn:sfer}
\Sigma_2=d\theta\otimes d\theta+\sin^2\theta d\phi\otimes d\phi.
\end{equation}
Introduce now the standard retarded and advanced null coordinates $(t,r)\rightarrow(u,v)$ defined by
\begin{equation*}
u=t-r,\hspace{1cm}v=t+r,\hspace{1cm} v\geq u.
\end{equation*}
The coordinates $u$ and $v$ serve as affine parameters into the past and into the future of null geodesics of Minkowski space-time.\\
The metric tensor becomes
\begin{equation*}
\tilde{g}=\frac{1}{2}(du\otimes dv+dv\otimes du)-\frac{1}{4}(v-u)^2\Sigma_2.
\end{equation*}
Consider now the unphysical metric
\begin{equation*}
g=\Omega^2\tilde{g},
\end{equation*}
with the choice
\begin{equation*}
\Omega^2=\frac{4}{(1+u^2)(1+v^2)}.
\end{equation*}
Note that for $u,v\rightarrow\pm\infty$  we have $\Omega u$, $\Omega v\rightarrow \mathrm{constant}$, as pointed out before.\\ Now to interpret this metric it is convenient to introduce new coordinates
\begin{equation*}
u=\tan p,\hspace{1cm}v=\tan q,\hspace{1cm}-\frac{\pi}{2}<p\leq q<\frac{\pi}{2},
\end{equation*}
such that we have
\begin{equation}
\label{eqn:2}
g=2(dp\otimes dq+dq\otimes dp)-\sin^2(p-q)\Sigma_2.
\end{equation}
It is possible to bring the metric \eqref{eqn:2} in a more familiar form by setting
\begin{equation*}
t'=q+p,\hspace{0.8cm}r'=q-p,\hspace{0.8cm}-\pi<t'<\pi,\hspace{0.8cm}-\pi<t'-r'<\pi,\hspace{0.8cm}0<r'<\pi,
\end{equation*}
from which follows 
\begin{equation}
\label{eqn:3}
g=dt'\otimes dt'-dr'\otimes dr'-\sin^2 (r')\Sigma_2.
\end{equation}
It's worth noting that the metric \eqref{eqn:3} is that of \textit{Einstein static universe}, $\mathscr{E}$, the cylinder obtained as product between the real line and the 3-sphere, $S^3\times\mathbb{R}$. However the manifold $\mathscr{M}$ represents just a finite portion of such cylinder.
\begin{figure}[h]
\begin{center}
\includegraphics[scale=0.35]{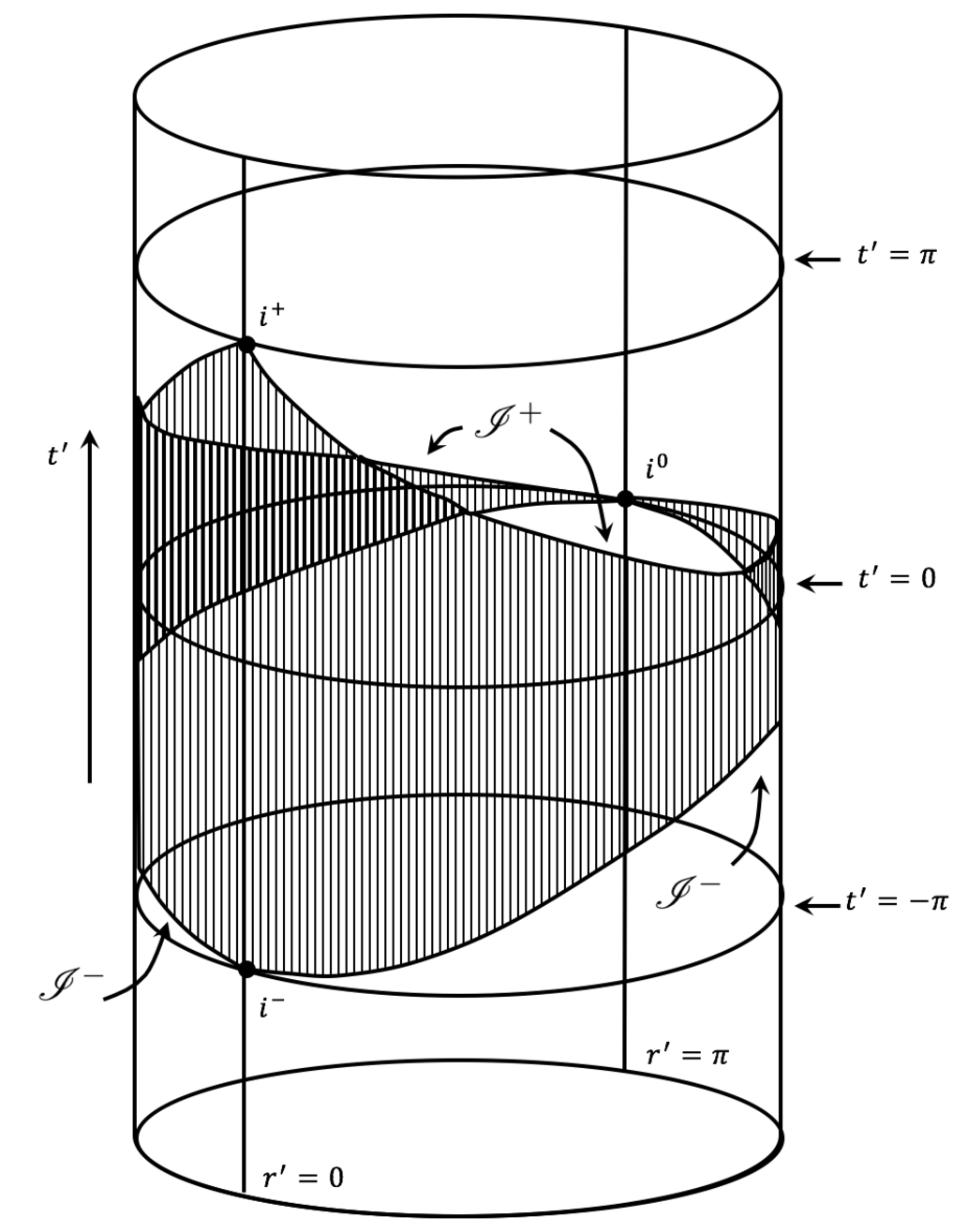}
\caption{The cylinder $\mathscr{E}=S^3\times\mathbb{R}$, of which $\mathscr{M}$ is just a finite portion, delimited by $\mathscr{I^+}$, $\mathscr{I^-}$, $i^+$, $i^-$ and $i^0$.
We note that the $(\theta,\phi)$ coordinates are suppressed, so that each point represents a 2-sphere of radius $\sin r'$. }
\label{fig:2.3}
\end{center}
\end{figure}
\\
The metric \eqref{eqn:2} is defined at $q=\pi/2$ and $p=-\pi/2$: those values correspond to the  infinity of $\mathscr{\tilde M}$ and therefore they represent the hypersurface $\mathscr{I}$. Hence we have defined a conformal structure on  $\mathscr{M}$, whose coordinates are free to move in the range  $-\pi/2\leq p\leq q\leq\pi/2$. The boundary is given by $p=-\pi/2$ or $q=\pi/2$ and the interior of $\mathscr{M}$ is conformally equivalent to Minkowski space-time.\\
We introduce the following points in $\mathscr{M}$:
\begin{itemize}
\item $i^+$, called \textit{future timelike infinity} given by the limits $t\pm r\rightarrow\infty$, $u,v\rightarrow\infty$, $p,q\rightarrow\frac{\pi}{2}$, $t'\rightarrow\pi$, $r'\rightarrow 0$. 
All the images in $\mathscr{M}$  of timelike geodesics terminate at this point;
\item $i^-$, called \textit{past timelike infinity} given by the limits $t\pm r\rightarrow -\infty$, $u,v\rightarrow-\infty$, $p,q\rightarrow-\frac{\pi}{2}$, $t'\rightarrow -\pi$, $r'\rightarrow 0$.  All the images in $\mathscr{M}$ of timelike geodesics originate at this point;
\item $i^0$, called \textit{spacelike infinity} given by the limits $t\pm r\rightarrow\pm \infty$, $u\rightarrow-\infty$, $v\rightarrow\infty$, $p\rightarrow-\frac{\pi}{2}$, $q\rightarrow\frac{\pi}{2}$, $t'\rightarrow0$, $r'\rightarrow \pi$. All spacelike geodesics originate and terminate at this point.
\end{itemize}
We also introduce the following hypersurfaces in $\mathscr{M}$:
\begin{itemize}
\item $\mathscr{I^+}$, called \textit{future null infinity}, is the null hypersurface where all the outgoing null geodesics terminate and is obtained in the following way. Null outgoing geodesics are described by $t=r+c$, with $c$ finite constant, from which $u=t-r=c$ and $v=t+r=2t-c$. Taking the limit $t\rightarrow\infty$ we get $u=c$ and $v=\infty$, hence $q=\pi/2$ and $p=\tan^{-1}c=p_0$ with $-\pi/2< p_0<\pi/2$. In $(t',r')$ coordinates $t'=\pi/2+p_0$ and $r'=\pi/2-p_0$. As $p_0$ runs in its range of values this is a point moving on the segment connecting $i^+$ e $i^0$. All outgoing null geodesics terminate on this segment, described by the equation $t'=\pi-r'$.
\item $\mathscr{I^-}$, called \textit{past null infinity}, is the hypersurface form which all null ingoing geodesics originate. It can be shown that this is given by the region $p=-\pi/2$ and $-\pi/2<q_0<\pi/2$ and is described, in terms of $(t',r')$ coordinates, by the segment of equation $t'=\pi+r'$ connecting $i^{-}$ and $i^{0}$.
\end{itemize}
  Putting 
\begin{equation*}
 f^{\pm}(t',r')=t'\pm r'-\pi,
\end{equation*} 
the two equations defining the hypersurfaces $\mathscr{I}^+$ and $\mathscr{I}^-$ are 
\begin{equation*}
f^{\pm}(t',r')=0,
\end{equation*}
respectively. 
The normal covectors to $\mathscr{I}^+$ and $\mathscr{I}^-$ are 
\begin{equation*}
n^{\pm}_a=\frac{\partial f^{\pm}}{\partial x^a}=(1,\pm1,0,0).
\end{equation*}
Since $g^{ab}n^{\pm}_an^{\pm}_b=0$ it follows that $\mathscr{I}^+$ and $\mathscr{I}^-$ are null hypersurfaces.\\
At this stage, we can build some useful representation of the space-time $\mathscr{M}$. One is a portion of the plane in $(t',r')$ coordinates, that is an example of Penrose diagram. Each point of the Penrose diagram represents a sphere $S^2$, and radial null geodesics are represented by straight lines at $\pm 45^{\circ}$, see Figure \ref{fig:2.3.1}
\begin{figure}[h]
\begin{center}
\includegraphics[scale=0.44]{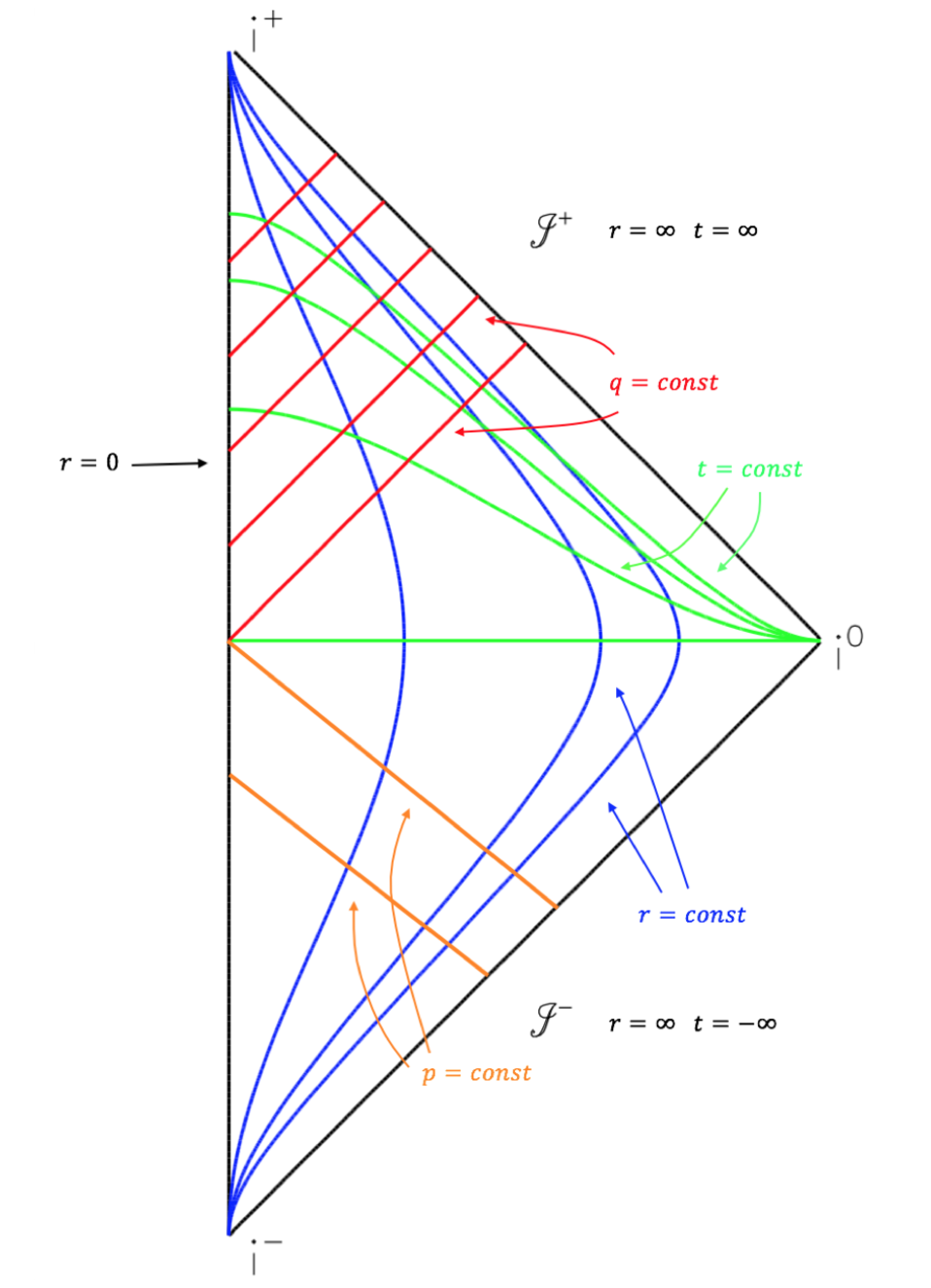}
\caption{A Penrose diagram for $\mathscr{M}$, using $(t',r')$ coordinates.} 
\label{fig:2.3.1}
\end{center}
\end{figure}
\\Another one is depicting $\mathscr{M}$ as a portion of the cylinder $\mathscr{E}=S^3\times E^1$, see Figure \ref{fig:2.3}.\\
One more representation for the Minkowski space-time is furnished by Figure \ref{fig:2.4}.
\begin{figure}[h]
\begin{center}
\includegraphics[scale=0.35]{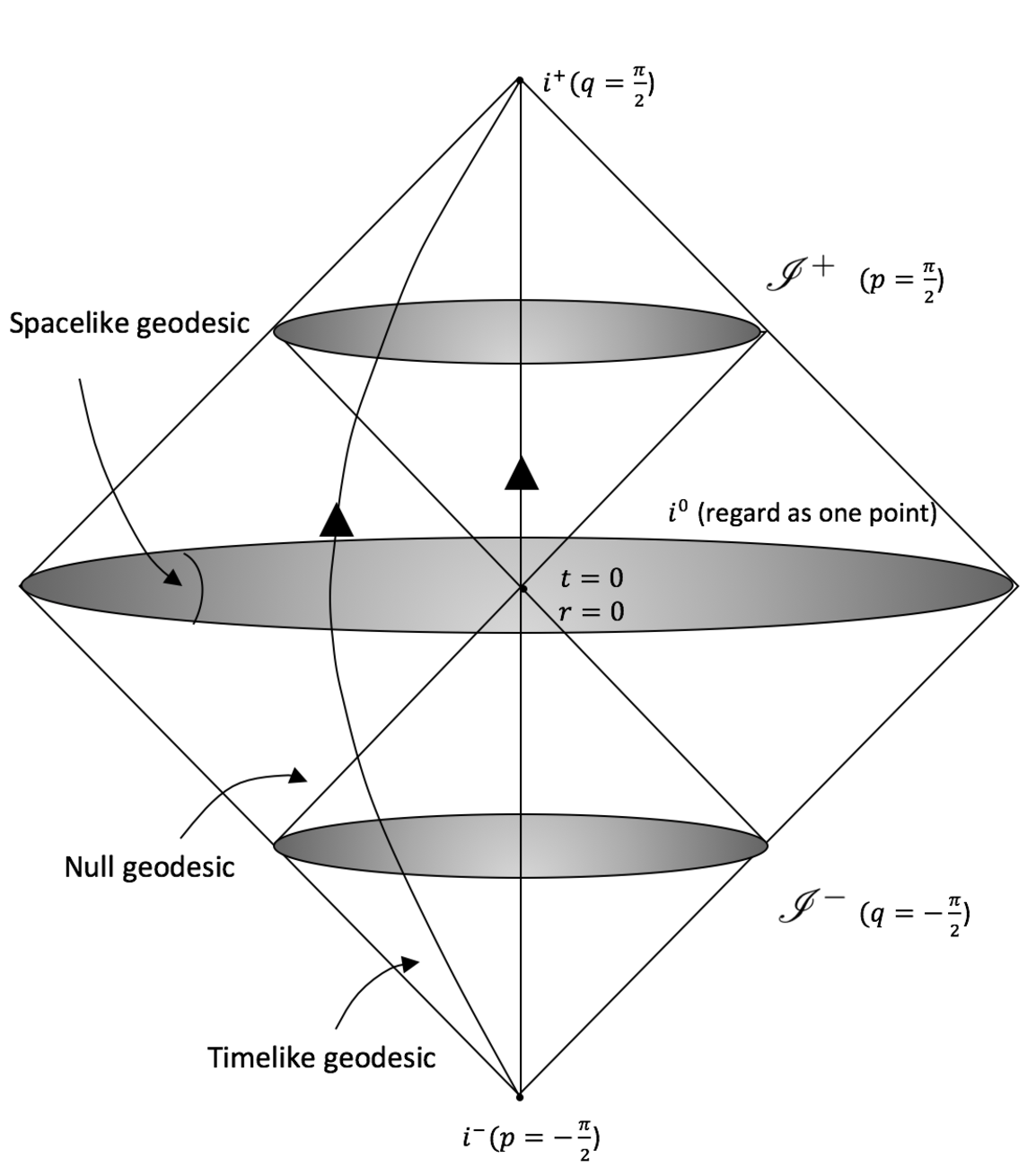}
\caption{This is another useful way of depicting $\mathscr{M}$ as the interior of two cones joined base to base. This picture however is not conformally accurate: in fact $i^0$ appears as an equatorial region whereas it should be a point.}
\label{fig:2.4}
\end{center}
\end{figure}
We note here that  for $\mathscr{M}$ the points $i^+$, $i^-$ and $i^0$ are regular and that $\mathscr{I^-}$ and $\mathscr{I^+}$ both have $S^2\times\mathbb{R}$ topology. Furthermore the boundary of $\mathscr{M}$ is given by $\mathscr{I}=\mathscr{I^+}\cup\mathscr{I^-}\cup i^+\cup i^-\cup i^0$.\\
Consider now Schwarzschild space-time, defined by the metric
\begin{equation}
\label{eqn:4}
\tilde{g}=dt\otimes dt\left(1-\frac{2m}{r}\right)-dr\otimes dr\left(1-\frac{2m}{r}\right)^{-1}-r^2\Sigma_2.
\end{equation}
 Introducing $(u,w)$ coordinates as \begin{equation}
 \label{eqn:5}
u=t-\left[r+2m\ln\left(\frac{r}{2m}-1\right)\right],\hspace{1cm}w=1/r,
\end{equation}
 we have
\begin{equation}
\label{eqn:5.1}
\tilde{g}=du\otimes du\left(1-2mw\right)-(du\otimes dw+dw\otimes du)\frac{1}{w^2}-\frac{1}{w^2}\Sigma_2.
\end{equation}
The first of \eqref{eqn:5} is just the null retarded coordinate, corresponding to a null outgoing geodesic. Note that the coordinate $r^*=r+2m\ln\left(r/2m-1\right)$ in \eqref{eqn:5} is the usual Wheeler-Regge \lq tortoise coordinate\rq\hspace{0.1mm} introduced in \cite{Wheel}.
Consider now the unphysical metric
\begin{equation*}
g=\Omega^2\tilde{g},\hspace{1cm}\Omega=w,
\end{equation*}
\begin{equation}
\label{eqn:6}
g=w^2(1-2mw)du\otimes du-(du\otimes dw+dw\otimes du)-\Sigma_2.
\end{equation}
Schwarzschild space-time, $\mathscr{\tilde M}$, is given by $0<w<1/2m$ because $2m<r<\infty$. We remark that the Schwarzschild solution can easily be extended beyond the event horizon, i.e. $0<r<\infty$ and $0<w<\infty$ because the apparent singular point $r=2m$ of the metric \eqref{eqn:4} is just a coordinate singularity and not a physical one, how can be noticed from \eqref{eqn:5.1}. The metric \eqref{eqn:6} is defined for $w=0$ (i.e. $r=\infty$) and hence for $\mathscr{M}$  we may take the range $0\leq w<1/2m$, such that the hypersurface $\mathscr{I^+}$ is given by $\Omega=w=0$. \\ Re-expressing \eqref{eqn:6} in terms of a null advanced coordinate
\begin{equation*}
v=u+2r+4m\ln\left(\frac{r}{2m}-1\right),
\end{equation*}
corresponding to a null ingoing geodesic we get
\begin{equation}
\label{eqn:7}
g=w^2(1-2mw)dv\otimes dv+(dv\otimes dw+dw\otimes dv)-\Sigma_2.
\end{equation}
By doing this it is now possible to introduce $\mathscr{I^-}$ as the hypersurface of $\mathscr{M}$ described by \eqref{eqn:7} for $w=0$.
\begin{figure}[h]
\begin{center}
\includegraphics[scale=0.51]{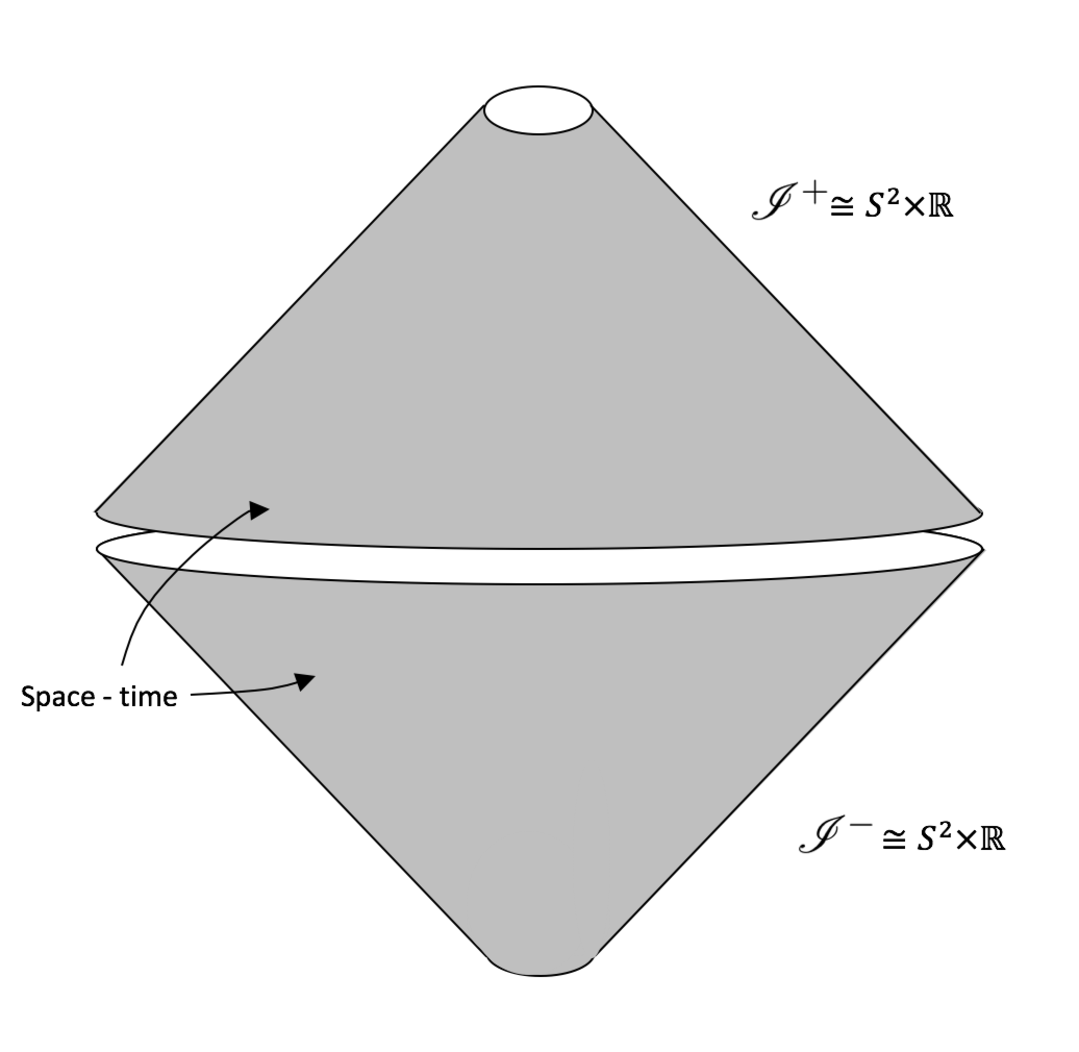}
\caption{Null infinity for the Schwarzschild space-time. Note that $w=0$ corresponds both to $\mathscr{I^+}$ and $\mathscr{I^-}$. The points $i^{\pm}$ and $i^0$ are singular and have been deleted.}
\label{fig:2.5}
\end{center}
\end{figure}
It is easy to check that the hypersurfaces $\mathscr{I}^+$ and $\mathscr{I}^-$, given by the equations $f^{\pm}(w)=w=0$ are again null hypersurfaces. \\The main difference between the Minkowski space-time case emerges from the fact that the points $i^+$, $i^-$ and $i^0$ in the Schwarzschild case are not regular, as could be deduced by the study of the eigenvalues of the Weyl tensor. However it should not be surprising that $i^+$ and $i^-$ turn out to be singular, since the source generating the gravitational field becomes concentrated at these points, at the two ends of its history. Thus we will omit  $i^+$, $i^-$ and $i^0$ from the definition of $\mathscr{I}$, that will just be $\mathscr{I}=\mathscr{I^-}\cup\mathscr{I^+}$. We have two disjoint boundary null hypersurfaces $\mathscr{I^-}$ and $\mathscr{I^+}$ each of which is a cylinder with topology $S^2\times\mathbb{R}$. These null hypersurfaces are generated by rays (given by $\theta$,$\phi=$constant, $w=0$) whose tangents are normals to the hypersurfaces. These rays may be taken to be the $\mathbb{R}'\mathrm{s}$ of the topological product $S^2\times\mathbb{R}$.\\
Take now into account a space-time $(\mathscr{\tilde M},\tilde g)$ with metric tensor \citep{Penrin2,Pen67}
\begin{equation}
\label{eqn:7.1}
\tilde g=r^{-2}Adr\otimes dr+B_i(dx^i\otimes dr+dr\otimes dx^i)+r^2C_{ij}dx^i\otimes dx^j,
\end{equation}
with $A$, $B_i$ and $C_{ij}$ sufficiently differentiable functions (say $C^3$) of $x^{\mu}$, with $x^0=r^{-1}$, on the hypersurface $\mathscr{I}$ defined by $x^0=0$ and in its neighbourhood. If the determinant 
\begin{equation*}
\mathrm{det}\left(\begin{matrix}A & B_i \\ B_j & C_{ij}\end{matrix}\right)
\end{equation*}
 does not vanish, the space-time $(\mathscr{M},g)$ with metric $g=\Omega^2\tilde{g}$, being  $\Omega=r^{-1}$,
\begin{equation*}
g=Adx^0\otimes dx^0-B_i(dx^i\otimes dx^0+dx^0\otimes dx^i)+C_{ij}dx^i\otimes dx^j
\end{equation*}
is regular on $\mathscr{I}$. It is clear that Schwarzschild space-time is just a particular case of this more general situation described by \eqref{eqn:7.1}. Furthermore this metric includes all the Bondi-Sachs type (with which we will deal in chapter \ref{chap:6}). These metrics describe a situation where there is an isolated source (with asymptotic flatness) and outgoing gravitational radiation. Hence a regularity assumption for $\mathscr{I}$ seems a not unreasonable one to impose if we wish to study asymptotically flat space-times and allow the possibility of gravitational radiation. In such situations, therefore, we expect a future-null conformal infinity $\mathscr{I}$ to exist. The choice made for $\Omega$ possesses the important property that its gradient at $\mathscr{I}$, $\partial\Omega/\partial x^{\mu}=(1,0,0,0)$, is not vanishing and hence defines a normal direction to $\mathscr{I}$ ($\mathscr{I}$ being described by the equation $\Omega=0$).\\
Roughly speaking, to say that a space-time is asymptotically flat means that its infinity is \lq similar\rq\hspace{0.1mm} in some way to the Minkowski's one. As a consequence we may expect the conformal structure at infinity of an asymptotically flat space-time to be similar to the one found for the Minkowski case.\\
With those ideas in mind we may now proceed to a rigorous definition of asymptotically simple space-time. However we must also bear in mind that asymptotically flatness is, in itself, a mathematical idealization, and so mathematical convenience and elegance constitute, in themselves, an important criteria for selecting the appropriate idealization.
\section{Aymptotic Simplicity and Weak Asymptotic Simplicity}
\label{sect:3.3}
\begin{defn} $\\ $
\label{defn:AS}
A space-time $(\mathscr{\tilde M},\tilde g)$ is \textit{$k$-asymptotically simple} if some $C^{k+1}$ smooth manifold-with-boundary $\mathscr{M}$ exists, with metric $g$ and smooth boundary $\mathscr{I}=\mathscr{\partial M}$ such that:
\begin{enumerate}
\item $\mathscr{\tilde M}$ is an open submanifold of $\mathscr{M}$; 
\item there exists a real valued and positive function $\Omega>0$, that is $C^k$ throughout $\mathscr{M}$, such that $g_{ab}=\Omega^2\tilde g_{ab}$ on $\mathscr{\tilde{M}}$;
\item $\Omega=0$ and $\nabla_a\Omega\neq 0$ on $\mathscr{I}$;
\item every null geodesic on $\mathscr{M}$ has two endpoints on $\mathscr{I}$.
\end{enumerate}
The space-time $(\mathscr{\tilde M},\tilde g)$ is called \textit{physical space-time}, while $(\mathscr{M},g)$ \textit{unphysical space-time}.
\end{defn}
\begin{defn}{\cite{HawEll}}$\\ $
\label{defn:EAS}
A space-time $(\mathscr{\tilde M},\tilde g)$ is \textit{$k$-asymptotically empty and simple} if it is $k$-asymptotically simple and if satisfies the additional condition
\begin{enumerate}[start=5]
\item $\tilde{R}_{ab}=0$ on an open neighbourhood of $\mathscr{I}$ in $\mathscr{M}$ (this condition can be modified to allow the existence of electromagnetic radiation near $\mathscr{I}$).
\end{enumerate}
\end{defn}
\begin{oss}$\\ $
Note that there are many different definitions of asymptotic simplicity. We used here the one which is due to \cite{Penrin2}, but others which slightly differ from this are possible \citep{Pen67,HawEll,Stew}.
\end{oss}
\begin{oss}$\\ $
Note that although the extended manifold $\mathscr{M}$ and its metric are called \lq unphysical\rq\hspace{0.1mm}, there is nothing unphysical in this construction. The boundary of $\mathscr{\tilde{M}}$ in $\mathscr{M}$ is uniquely determined by the conformal structure of $\mathscr{\tilde{M}}$ and, therefore, it is just as physical as $\mathscr{\tilde{M}}$. 
\end{oss}
Now we try to justify the previous assumptions.\\
Clearly with $1.$, $2.$ and $3.$ we mean to build $\mathscr{I}$ as the null infinity of $(\mathscr{\tilde M},\tilde{g})$, using the results obtained in the Minkowski case, with which must share some properties.  Condition $4.$ ensures that the whole of null infinity is included in $\mathscr{I}$. Furthermore null geodesics in $\mathscr{\tilde M}$ correspond to null geodesics in $\mathscr{M}$ because conformal transformations map null vectors to null vectors: the concept of null geodesic is conformally invariant. Thus we deduce that past and future infinity of any null geodesic in  
$\mathscr{\tilde M}$ are points of  $\mathscr{I}$. Condition $5.$ ensures that the physical Ricci curvature $\tilde{R}_{ab}$ vanishes in the asymptotic region far away from the source of the gravitational field. Finally note how the points $i^+$, $i^-$ and $i^0$ are excluded from the definition of $\mathscr{I}$, since $\mathscr{I}$ is not a smooth manifold at these points. The condition $5.$, together with $3.$, implies that for an asymptotically empty and simple space-time the conformal infinity $\mathscr{I}$ is always null. This is because it is easy to see that the Ricci scalar $R$ of the metric $g_{ab}$ is related to the Ricci scalar $\tilde{R}$ of the metric $\tilde{g}_{ab}$ by 
\begin{equation*}
\tilde{R}=\Omega^{-2}R -6\Omega^{-1}g^{cd}\nabla_c\nabla_d\Omega+3\Omega^{-2}g^{cd}\nabla_c\Omega\nabla_d\Omega,
\end{equation*}
and hence, by multiplying both members by $\Omega^2$, and by evaluating this equation on $\mathscr{I}$ where $\Omega=0$, it follows that $g^{cd}\nabla_c\Omega\nabla_d\Omega=0$. By condition $3.$, since $\nabla_c\Omega\neq 0$, it follows that 
$g^{cd}\nabla_c\Omega\nabla_d\Omega =0$ and thus $\nabla_c\Omega$, the normal vector to $\mathscr{I}$ is null and, by definition, $\mathscr{I}$ is a null hypersurface. Furthermore $R_{ab}$ is related to $\tilde{R}_{ab}$ by 
\begin{equation*}
\tilde{R}_{ab}=R_{ab}-2\Omega^{-1}\nabla_a\nabla_b\Omega-g_{ab}(\Omega^{-1}\nabla_c\nabla^c\Omega-3\Omega^{-2}\nabla_c\Omega\nabla^c\Omega).
\end{equation*}
Since $\nabla_c\Omega$ is null and $R_{ab}$ is defined on $	\mathscr{I}^+$, if condition $5.$ holds, the previous equation on $\mathscr{I}^+$ leads to
\begin{equation*}
2\nabla_a\nabla_b\Omega+g_{ab}\nabla_c\nabla^c\Omega=0. 
\end{equation*}
Contracting with $g^{ab}$ it gives 
\begin{equation*}
\nabla_c\nabla^c\Omega=0\Rightarrow\nabla_a\nabla_b\Omega=0.
\end{equation*}
Hence the normal vector to $\mathscr{I}$ is divergence- and shear-free. Furthermore it will be pointed out in chapter \ref{chap:4} that for any asymptotically simple space-time $\mathscr{I}$ consists of two disconnected components, $\mathscr{I}^+$ on which null geodesics in $\mathscr{\tilde{M}}$ have their future endpoints and $\mathscr{I}^-$ on which they have their past endpoints and both  $\mathscr{I}^+$ and $\mathscr{I}^-$ have the topology $S^2\times\mathbb{R}$, so that the structure of the conformal infinity found for Minkowski space-time is that of any asymptotically simple space-time.
The following theorem is very important because it states the relation that occurs between an asymptotically simple space-time and its causal structure. 
\begin{thm}$\\ $
\label{thm:ASGH}
An asymptotically simple and empty space-time $(\mathscr{\tilde{M}},\tilde g)$ is globally hyperbolic. 
\end{thm}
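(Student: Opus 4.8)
The plan is to pass to the conformal picture and verify the two clauses of Definition~\ref{defn:ghbH} for $N=\tilde{\mathscr{M}}$, and then invoke the equivalence with the existence of a Cauchy surface. This is legitimate because the causal relations $\ll$ and $\prec$, the class of null geodesics, and (in)extendibility of causal curves are all conformally invariant, so the causal structure of $(\tilde{\mathscr{M}},\tilde g)$ is that of the interior of $(\mathscr{M},g)$. Throughout I shall use the structural facts about $\mathscr{I}$ anticipated above and established in the next chapter: for an asymptotically empty and simple space-time $\mathscr{I}=\partial\mathscr{M}$ is a null hypersurface with exactly two connected components $\mathscr{I}^{+}$ and $\mathscr{I}^{-}$, each diffeomorphic to $S^{2}\times\mathbb{R}$, with $\mathscr{I}^{+}$ the future and $\mathscr{I}^{-}$ the past conformal boundary; in particular each of $\mathscr{I}^{\pm}$ is achronal in $\mathscr{M}$, and no point of $\tilde{\mathscr{M}}$ lies in the causal future of $\mathscr{I}^{+}$ nor in the causal past of $\mathscr{I}^{-}$.

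The technical core is the following \emph{Main Lemma}: every causal curve in $\tilde{\mathscr{M}}$ that is future-inextendible (respectively past-inextendible) in $\tilde{\mathscr{M}}$ acquires, when regarded as a curve in $\mathscr{M}$, a future endpoint on $\mathscr{I}^{+}$ (respectively a past endpoint on $\mathscr{I}^{-}$). To prove it I would first exclude that such a curve $\gamma$ is imprisoned in a compact set $K\subset\tilde{\mathscr{M}}$: feeding successive tails of $\gamma$ into the limit-curve construction of Theorem~\ref{thm:caus} yields, after the standard refinement from causal curve to null geodesic (cf. Corollary~\ref{cor:ngeo}), an inextendible null geodesic contained in $\overline{K}\subset\tilde{\mathscr{M}}$, contradicting condition~$4$ of Definition~\ref{defn:AS}, which forces every null geodesic of $\mathscr{M}$ to leave $\tilde{\mathscr{M}}$ through $\mathscr{I}$. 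Hence $\gamma$ leaves every compact subset of $\tilde{\mathscr{M}}$ and therefore accumulates on $\partial\mathscr{M}=\mathscr{I}$; since near $\mathscr{I}$ the unphysical metric is regular and $\mathscr{I}$ is, as shown above, a tame null hypersurface (with $\nabla_{a}\nabla_{b}\Omega=0$ on it), causality is well behaved there and the accumulation can be upgraded to a genuine endpoint $p\in\mathscr{I}$, which lies on $\mathscr{I}^{+}$ because $\gamma$ is future-directed. The delicate point---and the step I expect to be the main obstacle---is ruling out that $\gamma$ escapes ``to infinity along $\mathscr{I}$'', i.e. towards the deleted points $i^{0}$, $i^{\pm}$; this is precisely where genuine asymptotic \emph{simplicity} (condition~$4$ applied to \emph{all} null geodesics, not only those meeting a prescribed neighbourhood) is indispensable, and handling it rigorously requires a careful limit-curve argument manufacturing either a forbidden imprisoned null geodesic or one lacking the required endpoints.

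Granting the Main Lemma, both clauses of Definition~\ref{defn:ghbH} follow. \emph{Strong causality} at $p\in\tilde{\mathscr{M}}$: were it to fail, the limit-curve arguments underlying Theorem~\ref{thm:caus} would produce a future-inextendible causal curve meeting every neighbourhood of $p$; by the Main Lemma this curve has a future endpoint on $\mathscr{I}^{+}$, hence eventually stays in a neighbourhood of that endpoint disjoint from a fixed compact neighbourhood of the interior point $p$, a contradiction. \emph{Compactness of causal diamonds}: given $p,q\in\tilde{\mathscr{M}}$ and a sequence $\{r_{n}\}\subset J^{+}(p)\cap J^{-}(q)$, choose causal curves $\gamma_{n}$ from $p$ to $q$ through $r_{n}$; by the limit-curve theorem in the version for curves joining two fixed points (as in the proof of Theorem~\ref{thm:equiv1}), and using the strong causality just established to preclude escape, a subsequence converges to a causal curve $\gamma$ from $p$ to $q$, along which a further subsequence of $\{r_{n}\}$ converges to a point $r$. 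The curve $\gamma$ cannot meet $\mathscr{I}^{+}$---that would place $q$ in the causal future of $\mathscr{I}^{+}$---nor $\mathscr{I}^{-}$---that would place $p$ in the causal past of $\mathscr{I}^{-}$---so $\gamma\subset\tilde{\mathscr{M}}$ and $r\in J^{+}(p)\cap J^{-}(q)$; hence $J^{+}(p)\cap J^{-}(q)$ is sequentially compact and contained in $\tilde{\mathscr{M}}$. Therefore $\tilde{\mathscr{M}}$ is globally hyperbolic in the sense of Definition~\ref{defn:ghbH}, and by Theorems~\ref{thm:equiv1} and~\ref{thm:equiv2} it admits a Cauchy surface, which is the content of Definition~\ref{defn:ghbW}.
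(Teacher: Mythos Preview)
The paper does not supply its own proof of this theorem; immediately after the statement it says ``The proof can be found in \cite{HawEll}.'' So there is no in-paper argument against which to compare your proposal.

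That said, your proposal has a genuine gap: the Main Lemma is false as stated. Take Minkowski space and the timelike geodesic $r=0$. It is future-inextendible in $\tilde{\mathscr{M}}$, but in the conformal picture it runs off toward $i^{+}$, which is explicitly \emph{excluded} from $\mathscr{M}$ (see the discussion following Definition~\ref{defn:AS}: ``the points $i^{+}$, $i^{-}$ and $i^{0}$ are excluded from the definition of $\mathscr{I}$, since $\mathscr{I}$ is not a smooth manifold at these points''). So this curve acquires no endpoint anywhere in $\mathscr{M}$, let alone on $\mathscr{I}^{+}$. Condition~4 of Definition~\ref{defn:AS} controls only \emph{null geodesics}; there is no hypothesis forcing an arbitrary causal curve to reach $\mathscr{I}$. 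You flag escape toward $i^{\pm},i^{0}$ as ``the delicate point'', but it is not a technicality to be patched: it is a counterexample to the lemma. Since both your strong-causality step and your diamond-compactness step rest on the Main Lemma, neither goes through.

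There is also a circularity hazard in the inputs you list. You invoke $\mathscr{I}^{\pm}\cong S^{2}\times\mathbb{R}$, but in this paper (see the theorem in Section~\ref{sect:4.3}) that topology is proved \emph{using} Theorem~\ref{thm:ASGH}: the sketch begins ``If the space-time is asymptotically simple, it is globally hyperbolic (\ref{thm:ASGH}) and thus it has a Cauchy surface $S$''. Your argument does not visibly need the full $S^{2}\times\mathbb{R}$ structure, but you should prune the assumed facts about $\mathscr{I}$ to those established independently of global hyperbolicity (nullity and the local future/past split into $\mathscr{I}^{\pm}$ are safe). The Hawking--Ellis route avoids both pitfalls by working directly with null geodesics---the only objects condition~4 controls---to obtain strong causality (a violation produces an imprisoned inextendible null geodesic, contradicting condition~4) and then establishing compactness of $J^{+}(p)\cap J^{-}(q)$ without any endpoint claim for general causal curves.
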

The proof can be found in \cite{HawEll}. The interpretation of this result is the following. If a space-time has a certain structure at infinity that resembles the one of Minkowski, then it is \lq predictive\rq\hspace{0.1mm} in the sense specified in section \ref{sect:1.7} and is causally stable and thus there cannot occur any causality violations.
The asymptotic simplicity is a very strong assumption for a space-time.
\\However condition $4.$ is difficult to verify in practice and is not even satisfied by some space-times that we would like to classify as asymptotically flat. As an example, for Schwarzschild space-time, it is known that there exist null circular orbits with radius $3m$, and hence don't terminate on $\mathscr{I^+}$. For this reasons condition $4.$ is often too strong and gets replaced by a weaker one that brings to the notion of weakly asymptotically simple space-time. 
\begin{defn} $\\ $
\label{defn:WAS}
A space-time ($\mathscr{\tilde M},\tilde g)$ is \textit{weakly asymptotically simple} if there exists an asymptotically simple space-time $(\mathscr{\tilde M'},\tilde g')$ with associate unphysical space-time $(\mathscr{M'},g')$, such that for a neighbourhood  $\mathscr{H'}$  of $\mathscr{I'}$ in $\mathscr{M'}$ , the region $\mathscr{\tilde M'}\cap\mathscr{H'}$ is isometric to a similar neighbourhood $\mathscr{\tilde H}$ of $\mathscr{\tilde M}$.
\end{defn}
In this way a weakly asymptotically simple space-time possesses the same properties of the conformal infinity of an asymptotically simple one, but the null geodesics do not necessary reach it because it may have other infinities as well. Such space-times are essentially required to be isometric to an asymptotically simple space-time in a neighbourhood of $\mathscr{I}$. A different condition has been proposed by \cite{GerHor78}
In the remainder for asymptotically flat space-time we will mean a weakly asymptotically simple one.
\chapter{Conformal Rescalings and $\mathscr{I}$}
\label{chap:4}
\begin{center}
\begin{large}
\textbf{Abstract}
\end{large}
\end{center}
In this chapter we will make a direct application of the spinor formalism to the conformal technique, both concepts being introduced in the previous parts of the work. We will study the behaviour of the spinor fields under conformal rescalings and, in particular, that of zero rest-mass fields. This is motivated by the fact that the equation of motion of the Weyl spinor, as shown in section \ref{sect:2.9}, is that of a spin 2 zero-rest mass field. Generally, if we find a way to solve an equation on the unphysical space-time and the fields involved behave in a suitable way under conformal rescalings, then we are able to reconstruct the solution on the physical space-time too. In the last section we show the basic properties of $\mathscr{I}$ for an asymptotically flat space-time. These are (under the assumption that the vacuum Einstein's equations hold and hence the cosmological constant equals zero):
\begin{itemize}
\item $\mathscr{I}$ is a null hypersurface;
\item $\mathscr{I}$ has two connected components, $\mathscr{I}^+$ and $\mathscr{I}^-$, each of which has topology $S^2\times\mathbb{R}$;
\item the conformal Weyl tensor $C_{abcd}$ vanishes on $\mathscr{I}$;
\item $\mathscr{I}$ is shear-free.
\end{itemize}
\section{Conformal Rescalings Formulae\hspace{1cm}}
\label{sect:4.1}
Before exploring asymptotic properties we need to establish the relationship between the connection and curvature tensors in the physical and unphysical space-times. 
As we have seen in chapters \ref{chap:2} and \ref{chap:3} a conformal rescaling is a transformation that maps the physical metric $\tilde{g}_{ab}$ to the unphysical one $g_{ab}$ via
\begin{equation}
\label{eqn:56}
g_{ab}=\Omega^2\tilde{g}_{ab},\hspace{1cm}g^{ab}=\Omega^{-2}\tilde{g}_{ab},
\end{equation}
$\Omega$ being a smooth and positive real valued function that vanishes on $\mathscr{I}$. Note that no transformation of points is involved. Thus we can set, consistently with the transformation of the metric and with \eqref{eqn:54}, 
\begin{equation}
\label{eqn:55}
\epsilon_{AB}=\Omega\tilde{\epsilon}_{AB},\hspace{1cm}\epsilon^{AB}=\Omega^{-1}\tilde{\epsilon}^{AB}.
\end{equation}
The only alternative to this choice, $\epsilon_{AB}=-\Omega\tilde{\epsilon}_{AB}$, is not continuous with the identity scaling and is therefore rejected.
We note here that we could have chosen a complex $\Omega$ in \eqref{eqn:55} and hence replaced $\Omega^2$ with $\Omega\bar{\Omega}$ in \eqref{eqn:56}, but this would naturally give rise to a torsion, as discussed in \cite{Pen83}. By choosing a real $\Omega$ we get for the conjugates
\begin{equation*}
\epsilon_{A'B'}=\Omega\tilde{\epsilon}_{A'B'},\hspace{1cm}\epsilon^{A'B'}=\Omega^{-1}\tilde{\epsilon}^{A'B'}.
\end{equation*}
We establish now the following convention: if the kernel letter of a spinor carries a tilde then $\tilde{\epsilon}^{AB}$, $\tilde{\epsilon}_{AB}$ are to be used; however if there is no such tilde then the standard $\epsilon^{AB}$, $\epsilon_{AB}$ are used.\\
As we mentioned above, a spin vector $\tilde{\kappa}^A$ has a definite geometric interpretation (flag and flagpole) which is quite independent of any rescaling. Hence we can suppose a conformal rescaling \eqref{eqn:55} to leave $\tilde{\kappa}^{A}$ unaffected
\begin{equation*}
\kappa^A=\tilde{\kappa}^A.
\end{equation*}
Then for the associate spin co-vector we have 
\begin{equation*}
\kappa_A=\epsilon_{BA}\kappa^B=\Omega\tilde{\epsilon}_{AB}\kappa^B=\Omega\tilde{\kappa}_A.
\end{equation*}
Thus $\tilde{\kappa}_A$ is a \textit{conformal density} of weight 1, i.e. a quantity that gets multiplied by $\Omega^1$ under a rescaling.\\More generally, it is convenient to work with conformal densities of arbitrary weight.
\begin{defn}$\\ $
Define $\tilde{\theta}^{A}$ to be a \textit{conformal density of weight k} if it is to change under a rescaling \eqref{eqn:55} to 
\begin{equation*}
\theta^A=\Omega^k\tilde{\theta}^A.
\end{equation*}
\end{defn}
Normally $k$ is integer or half-integer. Observe that $\tilde{g_{ab}}$, $\tilde{\epsilon}_{AB}$, $\tilde{\epsilon}^{AB}$, $\tilde{g}^{ab}$ have respective conformal weights 2,1,-1,-2. Consequently, whenever a spinor (tensor) index is raised its conformal weight is reduced by unity (2), and whenever a spinor (tensor) index is lowered its weight is increased by unity (2).

We also require now a  spinor covariant derivative for the unphysical space-time, $\nabla_{AA'}$. To find the right expression for such a covariant derivative we need to note that under the conformal rescaling \eqref{eqn:56} the Christoffel symbols
\begin{equation*}
\Gamma^{a}{}_{bc}=\frac{1}{2}g^{ad}(\partial_{b}g_{cd}+\partial_{c}g_{bd}-\partial_{d}g_{bc})
\end{equation*}
 transform as  
\begin{equation*}
\Gamma^{a}{}_{bc}=\tilde{\Gamma}^{a}{}_{bc}+2\Omega^{-1}\delta^a{}_{(b}\nabla_{c)}\Omega-\Omega^{-1}(\nabla_d\Omega)\tilde{g}^{ad}\tilde{g}_{bc}.
\end{equation*}
Some experimentation shows that the only plausible candidate is defined via the rules
\begin{enumerate}
\item $\tilde{\nabla}_{AA'}\phi=\nabla_{AA'}\phi$ for all scalars $\phi$;
\item $\tilde{\nabla}_{AA'}\xi_B=\nabla_{AA'}\xi_B+\Upsilon_{BA'}\xi_A,$ \medskip  \\  
$\tilde{\nabla}_{AA'}\eta_{B'}=\nabla_{AA'}\eta_{B'}+\Upsilon_{BA'}\eta_{B'};$
\item $\tilde{\nabla}_{AA'}\xi^B=\nabla_{AA'}\xi^B-\epsilon_A{}^{B}\Upsilon_{CA'}\xi^C,$ \medskip \\
$\tilde{\nabla}_{AA'}\eta^{B'}=\nabla_{AA'}\eta^{B'}-\epsilon_{A'}{}^{B'}\Upsilon_{AC'}\eta^{C'};$
\item $\tilde{\nabla}_{AA'}\xi^{B_1...B_r}_{C_1...C_s}=\nabla_{AA'}\xi^{B_1...B_r}_{C_1...C_s}+\Upsilon_{C_1A'}\xi^{B_1...B_r}_{A...C_s}+...+\Upsilon_{C_sA'}\xi^{B_1...B_r}_{C_1...
A}- \medskip \epsilon_{A}{}^{B_1}\Upsilon_{DA'}\xi^{D...B_r}_{C_1...C_s}+...-\epsilon_{A}{}^{B_r}\Upsilon_{DA'}\xi^{B_1...D}_{C_1...C_s}$;\medskip\\ 
$\tilde{\nabla}_{AA'}\xi^{B'_1...B'_r}_{C'_1...C'_s}=\nabla_{AA'}\xi^{B'_1...B'_r}_{C'_1...C'_s}+\Upsilon_{C'_1A}\xi^{B'_1...B'_r}_{A'...C'_s}+...+\Upsilon_{C'_sA}\xi^{B'_1...B'_r}_{C'_1...
A'}- \medskip \epsilon_{A'}{}^{B'_1}\Upsilon_{D'A}\xi^{D'...B'_r}_{C'_1...C'_s}+...-\epsilon_{A'}{}^{B'_r}\Upsilon_{D'A}\xi^{B'_1...D'}_{C'_1...C'_s},$
\end{enumerate}
where $\Upsilon_{AA'}=\nabla_{AA'}\ln\Omega$. The formal existence and uniqueness proof is given in \cite{Penrin1}. We know from the property $4.$ of $\nabla$ that
 \begin{equation*}
0=\tilde{\nabla}_{AA'}\tilde{\epsilon}_{BC}=\nabla_{AA'}\tilde{\epsilon}_{BC}+\Upsilon_{A'B}\tilde{\epsilon}_{AC}+\Upsilon_{A'C}\tilde{\epsilon}_{BA}
\end{equation*}
\begin{equation*}
=\nabla_{AA'}\left(\Omega^{-1}\epsilon_{BC}\right)+\Omega^{-1}\Upsilon_{A'B}\epsilon_{AC}+\Omega^{-1}\Upsilon_{A'C}\epsilon_{BA}
\end{equation*}
\begin{equation*}
=-\Omega^{-2}\epsilon_{BC}\nabla_{AA'}\Omega+\Omega^{-1}\nabla_{AA'}\epsilon_{BC}+\Omega^{-1}\Upsilon_{A'B}\epsilon_{AC}+\Omega^{-1}\Upsilon_{A'C}\epsilon_{BA}.
\end{equation*}
Hence, taking into account the definition of $\Upsilon_{AA'}$ and \eqref{eqn:19} we get
\begin{equation*}
\nabla_{AA'}\epsilon_{BC}=\Upsilon_{AA'}\epsilon_{BC}-2\Upsilon_{A'[B}\epsilon_{C]A}=\epsilon_{BC}\left(\Upsilon_{AA'}-\Upsilon_{A'}{}_{D}\epsilon^{D}{}_{A}\right)=0.
\end{equation*}
Thus, having defined the above transformation properties of the covariant derivative, it naturally follows that if the compatibility condition holds for $\tilde{\epsilon}_{AB}$ then it holds for $\epsilon_{AB}$ too.\\ The transformation formulae for the various parts of the Riemann tensor are quite complicated and deriving them from \eqref{eqn:39} and \eqref{eqn:44} is a tedious, but simple exercise. They are
\begin{align}
\label{eqn:57}
&\tilde{\Psi}_{ABCD}=\Psi_{ABCD},\\
\label{eqn:58}
&\tilde{\Lambda}=\Omega^2\Lambda-\frac{1}{4}\Omega\nabla_{CC'}\nabla^{CC'}\Omega+\frac{1}{2}(\nabla_{CC'}\Omega)(\nabla^{CC'}\Omega),\\
\label{eqn:59}
&\tilde{\Phi}_{ABA'B'}=\Phi_{ABA'B'}+\Omega^{-1}\nabla_{A(A'}\nabla_{B')B}\Omega.
\end{align}
In vector form the last two equations are
\begin{align}
\label{eqn:60}
&\tilde{\Lambda}=\Omega^2\Lambda-\frac{1}{4}\Omega\square\Omega+\frac{1}{2}(\nabla_c\Omega)(\nabla^c\Omega),\\
\label{eqn:61}
&\tilde{\Phi}_{ab}=\Phi_{ab}+\Omega^{-1}\nabla_{a}\nabla_b\Omega-\frac{1}{4}\Omega^{-1}(\square\Omega)g_{ab},
\end{align}
where $\square\equiv\nabla_a\nabla^a$.\\
We see from \eqref{eqn:57} that $\Psi_{ABCD}$ is \textit{conformally invariant} as claimed before. It is a measure of the part of the curvature that remains invariant under conformal rescalings. Since $\tilde{\Psi}_{ABCD}=0$ in flat space-time, it vanishes also in conformally flat space-time.\\
Note that due to equation \eqref{eqn:28} we have under conformal rescalings the following behaviours for the Weyl tensor
\begin{equation}
\label{eqn:82}
\tilde{C}_{abcd}=\Omega^{-2}C_{abcd},\hspace{0.8cm}\tilde{C}^{a}{}_{bcd}=C^a{}_{bcd}, \hspace{0.8cm}\mathrm{etc}.
\end{equation}
\section{Zero Rest-Mass Fields}
\label{sect:4.2}
As mentioned above, the first of equations \eqref{eqn:52} describes the dynamic of a massless free field of spin 2. It is useful to discuss zero rest-mass fields of arbitrary (half or half-integer) spin and their radiation properties in general. In fact they exhibit a certain characteristic asymptotic behaviour, that has been called by Sachs the \textit{peeling-off} property \citep{Sachs61,Sachs62}.\\
A field of spin $s=n/2$, where $n$ is a positive integer, is generally described by a spinor $\tilde{\phi}_{AB...L}$ with $n$ indices. Let $\tilde{\phi}_{AB...L}$ be totally symmetric in its $n$ indices:
\begin{equation}
\label{eqn:64}
\tilde{\phi}_{AB...L}=\tilde{\phi}_{(AB...L)}.
\end{equation}
The massless free-field equation for spin $n/2$ is then taken to be
\begin{equation}
\label{eqn:62}
\tilde{\nabla}^{AA'}\tilde{\phi}_{AB...L}=0.
\end{equation}
The Bianchi identity has this form in empty space, with $\tilde{\Psi}_{ABCD}$ taking the place of $\tilde{\phi}$. It is thus a curved-space spin-2 field equation. Similarly, the source free Maxwell equations have this form with $\tilde{\varphi}_{AB}$ (spin 1) taking the place of $\tilde{\phi}$ (equation \eqref{eqn:75}). The Dirac-Weyl equation for the neutrino also falls into this category, with $\tilde{\phi}=\tilde{\nu}_A$ (spin 1/2). \\
To establish the conformal invariance of \eqref{eqn:62}, it is convenient first to re-express that equation in a different form. 
We have 
\begin{equation*}
\tilde{\nabla}_{M'M}\tilde{\phi}_{AB...L}=\tilde{\nabla}_{M'(M}\tilde{\phi}_{A)B...L}+\tilde{\nabla}_{M'[M}\tilde{\phi}_{A]B...L}.
\end{equation*}
Using \eqref{eqn:19} and \eqref{eqn:62} 
\begin{equation*}
\tilde{\nabla}_{M'[M}\tilde{\phi}_{A]B...L}=-\frac{1}{2}\tilde{\epsilon}_{MA}\tilde{\nabla}_{M'}{}^{C}\tilde{\phi}_{CB...L}=0,
\end{equation*}
thus \eqref{eqn:62} is, taking into account the symmetry property \eqref{eqn:64}, equivalent to
\begin{equation}
\label{eqn:63}
\tilde{\nabla}_{M'M}\tilde{\phi}_{AB...L}=\tilde{\nabla}_{M'(M}\tilde{\phi}_{A)B...L}=\tilde{\nabla}_{M'(M}\tilde{\phi}_{AB...L)}.
\end{equation}
Now choose $\tilde{\phi}_{AB...L}$ to be a \textit{conformal density of weight} -1.
\begin{equation}
\label{eqn:83}
\phi_{AB...L}=\Omega^{-1}\tilde{\phi}_{AB...L}.
\end{equation}
Then we have, using the property $2.$ of the covariant derivative of the unphysical space-time,
\begin{equation*}
\Omega\nabla_{MM'}\phi_{AB...L}=\Omega\nabla_{MM'}\left(\Omega^{-1}\tilde{\phi}_{AB...L}\right)
\end{equation*}
\begin{equation}
\label{eqn:65}
=\tilde{\nabla}_{M'M}\tilde{\phi}_{AB...L}-\Upsilon_{MM'}\tilde{\phi}_{AB...L}-\Upsilon_{AM'}\tilde{\phi}_{MB...L}-...-\Upsilon_{LM'}\tilde{\phi}_{AB...M}.
\end{equation}
Now the RHS of \eqref{eqn:65} beyond the first term is automatically symmetric in $MAB...L$. Consequently the LHS is symmetric in $MAB...L$ if and only if \eqref{eqn:63} holds. But that means that \eqref{eqn:63} is conformally invariant, i.e.
\begin{equation*}
\nabla_{M'M}\phi_{AB...L}=\nabla_{M'(M}\phi_{AB...L)}.
\end{equation*}
We can thus give the following:
\begin{prop}$\\ $
If a totally symmetric spinor $\tilde{\phi}_{AB...L}$ is a conformal density of weight -1 then equation \eqref{eqn:62} is conformally invariant.
\end{prop}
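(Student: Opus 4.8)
The plan is to show that the field equation $\nabla^{AA'}\phi_{AB\ldots L}=0$ in the unphysical space-time is equivalent to the original equation $\tilde\nabla^{AA'}\tilde\phi_{AB\ldots L}=0$, given the density assignment $\phi_{AB\ldots L}=\Omega^{-1}\tilde\phi_{AB\ldots L}$. The key observation, already half-done in the discussion preceding the statement, is that for a totally symmetric spinor the first-order equation $\tilde\nabla^{AA'}\tilde\phi_{AB\ldots L}=0$ is equivalent to the symmetrised form $\tilde\nabla_{M'M}\tilde\phi_{AB\ldots L}=\tilde\nabla_{M'(M}\tilde\phi_{AB\ldots L)}$, i.e. to the vanishing of the antisymmetric part $\tilde\nabla_{M'[M}\tilde\phi_{A]B\ldots L}$; this is exactly equation~\eqref{eqn:63}, derived using \eqref{eqn:19} together with the symmetry \eqref{eqn:64}. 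So it suffices to prove that the property of being symmetric over \emph{all} $n+1$ lower unprimed indices is conformally invariant.

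First I would write out $\Omega\nabla_{MM'}\phi_{AB\ldots L}$ using property~$2.$ (and property~$4.$ for densities) of the unphysical covariant derivative, which gives precisely \eqref{eqn:65}: the right-hand side is $\tilde\nabla_{M'M}\tilde\phi_{AB\ldots L}$ minus a sum of terms of the form $\Upsilon_{\bullet M'}\tilde\phi_{\ldots M\ldots}$, one for each index slot (including the $M$ slot itself, producing $\Upsilon_{MM'}\tilde\phi_{AB\ldots L}$). Next I would observe that every one of these correction terms is manifestly symmetric under permutations of the collection of unprimed indices $\{M,A,B,\ldots,L\}$: the term $\Upsilon_{AM'}\tilde\phi_{MB\ldots L}+\cdots$ summed over all slots, together with $\Upsilon_{MM'}\tilde\phi_{AB\ldots L}$, is just $\sum_{\text{slots}}\Upsilon_{(\,\cdot\,)M'}\tilde\phi_{(\text{others})}$ with $\tilde\phi$ already totally symmetric by \eqref{eqn:64}, so the whole sum is symmetric in $MAB\ldots L$. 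Hence the left-hand side $\Omega\nabla_{MM'}\phi_{AB\ldots L}$ is symmetric in $MAB\ldots L$ \emph{if and only if} $\tilde\nabla_{M'M}\tilde\phi_{AB\ldots L}$ is, and the latter holds exactly when \eqref{eqn:63} holds, i.e. when the physical field satisfies \eqref{eqn:62}. Dividing by $\Omega>0$ and re-running the argument of \eqref{eqn:63} in the unphysical space-time (again via \eqref{eqn:19} and total symmetry of $\phi_{AB\ldots L}$, which follows since $\Omega^{-1}$ is a scalar) converts "symmetric in $MAB\ldots L$" back into $\nabla^{MM'}\phi_{MB\ldots L}=0$. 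This chain of equivalences is the whole proof.

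I do not expect a genuine obstacle here; the only point requiring a little care is bookkeeping — making sure the term $\Upsilon_{MM'}\tilde\phi_{AB\ldots L}$ coming from the $M$ index of the derivative operator is grouped with the other $\Upsilon$-terms so that the symmetry statement really does range over all $n+1$ slots, and checking that no term involving a derivative of $\Omega$ acting on the index structure has been dropped. One should also note explicitly that the argument used to pass between \eqref{eqn:62} and \eqref{eqn:63} is purely algebraic (the $2$-dimensionality identity \eqref{eqn:19} plus symmetry), so it applies verbatim with tildes removed. With these remarks the Proposition follows immediately, and one records it as stated: $\tilde\phi_{AB\ldots L}$ totally symmetric and a conformal density of weight $-1$ implies \eqref{eqn:62} is conformally invariant.
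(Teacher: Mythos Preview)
Your proposal is correct and follows essentially the same approach as the paper: reformulate \eqref{eqn:62} as the total-symmetry condition \eqref{eqn:63}, compute $\Omega\nabla_{MM'}\phi_{AB\ldots L}$ via the transformation rules to obtain \eqref{eqn:65}, observe that all the $\Upsilon$-correction terms are manifestly symmetric in $MAB\ldots L$, and conclude that the symmetry condition holds for $\nabla$ iff it holds for $\tilde\nabla$. Your extra remarks (that the passage between \eqref{eqn:62} and \eqref{eqn:63} is purely algebraic and so works identically in the unphysical space-time, and that $\phi_{AB\ldots L}$ inherits total symmetry) make explicit what the paper leaves implicit, but the argument is the same.
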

Hence for a spin-2 massless field we should have
\begin{equation}
\label{eqn:85}
\phi_{ABCD}=\Omega^{-1}\tilde{\phi}_{ABCD}.
\end{equation}
However from equation \eqref{eqn:57} we see that $\tilde{\Psi}_{ABCD}$ is a conformal density of weight 0, i.e. conformally invariant. Thus equation \eqref{eqn:52}
\begin{equation}
\label{eqn:84}
\tilde{\nabla}^{AA'}\tilde{\Psi}_{ABCD}=0
\end{equation}
is not invariant under rescalings. This is related to the fact that the equations of General Relativity are not conformally invariant. Suppose we have a solution $\tilde{\Psi}_{ABCD}$ of \eqref{eqn:84}. Putting $\tilde{\phi}_{ABCD}=\tilde{\Psi}_{ABCD}$ and transforming according to \eqref{eqn:85} we have
\begin{equation}
\label{eqn:86}
\nabla^{AA'}\phi_{ABCD}=\nabla^{AA'}\left(\Omega^{-1}\tilde{\Psi}_{ABCD}\right)=0,
\end{equation}
from which
\begin{equation*}
\nabla^{AA'}\tilde{\Psi}_{ABCD}=\Upsilon^{AA'}\tilde{\Psi}_{ABCD},
\end{equation*}
i.e. 
\begin{equation*}
\nabla^{AA'}{\Psi}_{ABCD}=\Upsilon^{AA'}{\Psi}_{ABCD}.
\end{equation*}
A conformal rescaling applied to a solution of Einstein's equations will, therefore, generally destroy the satisfaction of the vacuum equations.\\
Furthermore in curved space-times there is an algebraic consistency condition for equation \eqref{eqn:62} that holds for $n>2$, called the Buchdahl constraint \citep{Buch}. To obtain this relations apply $\tilde{\nabla}^B{}_{A'}$ to \eqref{eqn:62}
\begin{equation*}
0=\tilde{\nabla}^B{}_{A'}\tilde{\nabla}^{AA'}\tilde{\phi}_{ABC...L}=\tilde{\nabla}^{(B}{}_{A'}\tilde{\nabla}^{A)A'}\tilde{\phi}_{ABC...L}=\tilde{\square}^{AB}\tilde{\phi}_{ABC...L},
\end{equation*}
where the symmetry of $\tilde{\phi}$ in $AB$ and definition \eqref{eqn:37} have been used. Now using the generalization to many index spinors of first of \eqref{eqn:39} we get 
\begin{equation*}
\tilde{\square}^{AB}\tilde{\phi}_{ABC...L}=-\tilde{X}^{ABE}{}_{A}\tilde{\phi}_{EBC...L}-\tilde{X}^{ABE}{}_{B}\tilde{\phi}_{AEC...L}
\end{equation*}
\begin{equation*}
-\tilde{X}^{ABE}{}_{C}\tilde{\phi}_{ABE...L}-...-\tilde{X}^{ABE}{}_{L}\tilde{\phi}_{ABC...E}.
\end{equation*}
Each of the first two terms of the LHS involves, because of the symmetry of $\tilde{\phi}$, the term $\tilde{X}^{A(BE)}{}_{A}$ that vanishes because of \eqref{eqn:66}. The other terms of LHS involve $\tilde{X}^{(ABE)}{}_{C}=\tilde{\Psi}^{ABE}{}_{C}$, this equation coming from the definition \eqref{eqn:43}. Hence we have, for $n\geq 2$, 
\begin{equation}
\label{eqn:67}
(n-2)\tilde{\phi}_{ABE(C...K}\tilde{\Psi}_{L)}{}^{ABE}=0.
\end{equation}
Thus, if $n>2$, equation \eqref{eqn:67} implies an interconnection between $\tilde{\phi}$ and the conformal curvature spinor $\tilde{\Psi}_{ABCD}$. 
\section{Properties of $\mathscr{I}$}
\label{sect:4.3}
Assume now that Einstein Field Equations
\begin{equation}
\tilde R_{ab}-\frac{1}{2}\tilde R\tilde g_{ab}+\tilde\lambda \tilde g_{ab}=-8\pi G\tilde T_{ab},
\end{equation}
hold in an weakly asymptotically simple space-time $(\mathscr{\tilde M},\tilde g)$ with conformal infinity $\mathscr{I}$, given by the equation 
\begin{equation*}
\Omega=0.
\end{equation*}
Hence, putting
\begin{equation}
\label{eqn:gen}
N_a\equiv-\nabla_a\Omega,
\end{equation}
by the property $3.$ of the definition of asymptotic simplicity it follows that $N_a\neq 0$ on $\mathscr{I}$ and, being orthogonal to each locus $\Omega=\mathrm{constant}$, it constitutes a normal to $\mathscr{I}$ at each of its points.\\
For the time being, we shall allow some massless matter fields in the neighbourhood of $\mathscr{I}$, so that $T^{a}{}_{a}=0$. From equations \eqref{eqn:49} and \eqref{eqn:33} we get 
\begin{equation}
\label{eqn:76}
\tilde{R}=4\tilde{\lambda},\hspace{1cm}\tilde{\lambda}=6\tilde{\Lambda}
\end{equation}
near $\mathscr{I}$ (where \lq near $\mathscr{I}$\rq\hspace{0.1mm} means in $\mathscr{\tilde{H}}\cap\mathscr{\tilde{M}}$, for some neighbourhood $\mathscr{\tilde{H}}$ of $\mathscr{I}$ in the unphysical space-time $\mathscr{M}$).
Now define the quantity $P_{AA'BB'}$
\begin{equation*}
\tilde{P}_{AA'BB'}=\tilde{\Phi}_{AA'BB'}-\tilde{\Lambda}\tilde{\epsilon}_{AB}\tilde{\epsilon}_{A'B'}.
\end{equation*}
Using equation \eqref{eqn:32} and \eqref{eqn:33} we get 
\begin{equation*}
\tilde{P}_{ab}=\frac{1}{12}\tilde{R}\tilde{g}_{ab}-\frac{1}{2}\tilde{R}_{ab}.
\end{equation*}
From \eqref{eqn:76} if we evaluate the trace we get
\begin{equation}
\label{eqn:77}
\tilde{g}^{ab}\tilde{P}_{ab}=\tilde{P}^{a}{}_{a}=-\frac{2}{3}\tilde{\lambda}.
\end{equation}
But from the transformation properties \eqref{eqn:60} and \eqref{eqn:61} 
\begin{equation*}
\tilde{P}_{ab}=P_{ab}+\Omega^{-1}\nabla_a\nabla_b\Omega-\frac{1}{2}\Omega^{-2}\left(\nabla_c\Omega\right)\left(\nabla^c\Omega\right)g_{ab},
\end{equation*}
then evaluating the trace again gives
\begin{equation}
\label{eqn:78}
\tilde{P}^{a}{}_{a}=\Omega^2 P^{a}{}_{a}+\Omega\nabla_a\nabla^a\Omega-2\nabla^a\nabla_a\Omega.
\end{equation}
Comparing \eqref{eqn:77} and \eqref{eqn:78} we obtain on $\mathscr{I}$ (set $\Omega=0$) 
\begin{equation*}
-\frac{2}{3}\tilde{\lambda}=-2\nabla^a\Omega\nabla_a\Omega,
\end{equation*}
and thus
\begin{equation}
\label{eqn:79}
N_aN^a=\frac{1}{3}\tilde{\lambda}.
\end{equation}
Equation \eqref{eqn:79} yields the following: 
\begin{prop} $\\ $
\label{prop:1}
If the trace of the energy tensor vanishes near $\mathscr{I}$, then $\mathscr{I}$ is spacelike, timelike, or null according as $\tilde{\lambda}$ is positive, negative, or zero. 
\end{prop}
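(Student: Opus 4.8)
The plan is to read off the causal character of $\mathscr{I}$ directly from equation \eqref{eqn:79}, which has already done the analytic work. Recall that $\mathscr{I}$ is the hypersurface $\Omega = 0$ in the unphysical space-time $(\mathscr{M},g)$, and by condition $3.$ of Definition \ref{defn:AS} the covector $N_a = -\nabla_a\Omega$ is nonzero on $\mathscr{I}$ and annihilates every vector tangent to the level set $\Omega = \mathrm{constant}$; hence $N_a$ is a well-defined (nonzero) normal covector to $\mathscr{I}$ at each of its points. The causal character of a hypersurface is, by definition, governed by the sign of the squared norm of its normal with respect to the unphysical metric $g_{ab}$ (since $\mathscr{I}$ lives in $\mathscr{M}$, not in $\mathscr{\tilde M}$): the hypersurface is spacelike when $g^{ab}N_aN_b > 0$ (normal timelike), timelike when $g^{ab}N_aN_b < 0$ (normal spacelike), and null when $g^{ab}N_aN_b = 0$.

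First I would invoke the hypothesis $T^a{}_a = 0$ near $\mathscr{I}$, which is exactly what licenses the chain of identities \eqref{eqn:76}--\eqref{eqn:79}: from \eqref{eqn:49} with vanishing trace one gets $\tilde\Lambda = \tilde\lambda/6$, then $\tilde R = 24\tilde\Lambda = 4\tilde\lambda$ by \eqref{eqn:33}, and the definition of $\tilde P_{ab}$ together with the conformal transformation formulas \eqref{eqn:60}--\eqref{eqn:61} produces, upon taking traces and setting $\Omega = 0$, the clean relation \eqref{eqn:79}, namely $N_aN^a = \tfrac{1}{3}\tilde\lambda$ on $\mathscr{I}$ (indices raised with $g$). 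Then the proof is a trichotomy: if $\tilde\lambda > 0$ then $N_aN^a > 0$, so the normal is timelike and $\mathscr{I}$ is spacelike; if $\tilde\lambda < 0$ then $N_aN^a < 0$, so the normal is spacelike and $\mathscr{I}$ is timelike; if $\tilde\lambda = 0$ then $N_aN^a = 0$, so the normal is null and $\mathscr{I}$ is a null hypersurface. In each case one should note that $N_a \neq 0$ (condition $3.$) is what makes these statements meaningful — a zero normal would not determine a causal type.

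There is essentially no obstacle here: the substantive content is packaged in \eqref{eqn:79}, and the remaining argument is just the bookkeeping observation that "sign of normal's norm" is the definition of the causal character of a hypersurface, plus the sign conventions of the signature $\mathrm{diag}(1,-1,-1,-1)$ (so that a timelike vector has positive norm, matching "spacelike hypersurface $\leftrightarrow$ timelike normal"). The only point requiring a word of care is that $\tilde\lambda$ is a constant (the cosmological constant), so its sign is unambiguous and the conclusion is a clean global trichotomy rather than something varying over $\mathscr{I}$; and that the norm in \eqref{eqn:79} is taken with respect to $g_{ab}$, which is regular and nondegenerate on $\mathscr{I}$, unlike $\tilde g_{ab}$. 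This completes the proof of Proposition \ref{prop:1}.
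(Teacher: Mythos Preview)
Your proposal is correct and follows exactly the paper's approach: the paper simply states that equation \eqref{eqn:79} ``yields'' the proposition without further argument, and you have spelled out the immediate trichotomy (sign of $N_aN^a$ determines the causal type of the normal and hence of $\mathscr{I}$) together with the relevant sign conventions and the role of $N_a\neq 0$. There is nothing to add.
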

When $\mathscr{I}$ is null, it consists naturally of two disconnected pieces, $\mathscr{I^+}$ and $\mathscr{I^-}$, since $\mathscr{\tilde{M}}$ lies locally to the past or future of it. A point of $\mathscr{I}$ lies on $\mathscr{I^+}$ ($\mathscr{I^-}$) if the interior of its past (future) light cone lies in $\mathscr{\tilde{M}}$.\\
There is an important theorem which regards the structure of $\mathscr{I}$ when it is null. 
\begin{thm} $\\ $
In any asymptotically simple space-time for which $\mathscr{I}$ is everywhere null, the topology of each of $\mathscr{I^{\pm}
}$ is given by 
\begin{equation}
\label{eqn:89}
\mathscr{I^+}\cong\mathscr{I^-}\cong S^2\times\mathbb{R}
\end{equation}
and the rays generating $\mathscr{I^{\pm}
}$ can be taken to be the $\mathbb{R}$ factors. 
\end{thm}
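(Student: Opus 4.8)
The plan is to establish the topology $\mathscr{I}^{\pm}\cong S^2\times\mathbb{R}$ by exploiting the global hyperbolicity of $(\mathscr{\tilde M},\tilde g)$ guaranteed by Theorem \ref{thm:ASGH}, together with the null-geodesic condition $4.$ in the definition of asymptotic simplicity. First I would recall that, by Theorem \ref{thm:ASGH}, the physical space-time is globally hyperbolic, so by the splitting theorem quoted at the end of Chapter \ref{chap:1} it is diffeomorphic to $\mathbb{R}\times\Sigma$ for a Cauchy surface $\Sigma$; one then argues (as in the Minkowski model of Section \ref{sect:3.2}) that condition $4.$ forces $\Sigma$ to be homeomorphic to $\mathbb{R}^3$, since every inextendible null geodesic, hence every inextendible timelike geodesic, runs from $\mathscr{I}^-$ to $\mathscr{I}^+$ without obstruction. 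With $\mathscr{\tilde M}\cong\mathbb{R}^4$ in hand, the next step is to look at $\mathscr{I}^+$ as the future boundary of $\mathscr{\tilde M}$ inside the unphysical manifold $\mathscr{M}$: since $\mathscr{I}$ is null (Proposition \ref{prop:1}, with $\tilde\lambda=0$ under the vacuum Einstein equations), each connected piece $\mathscr{I}^+$, $\mathscr{I}^-$ is a null hypersurface ruled by its null geodesic generators, and we showed after Definition \ref{defn:EAS} that the normal $N_a=-\nabla_a\Omega$ is divergence- and shear-free on $\mathscr{I}$, so these generators foliate $\mathscr{I}^+$ as a genuine $\mathbb{R}$-bundle over a 2-dimensional base.

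The core of the argument is then to identify that 2-dimensional base of generators with $S^2$. I would do this by fixing a point $p\in\mathscr{\tilde M}$ and considering the future null cone $\mathcal{N}^+(p)$ of $p$: by condition $4.$ every null geodesic through $p$ acquires a future endpoint on $\mathscr{I}^+$, and conversely every generator of $\mathscr{I}^+$ is reached by exactly one such geodesic (uniqueness follows from strong causality / global hyperbolicity, which prevents two distinct null geodesics from $p$ from meeting the same generator). This sets up a continuous bijection from the space of future null directions at $p$ — which is a 2-sphere, being the space of rays in the 3-dimensional cone in $T_p\mathscr{\tilde M}$, or equivalently the projectivisation $\mathbb{PT}^+_p$ of future null spinor directions $\kappa^A\mapsto\kappa^A\bar\kappa^{A'}$ — onto the space of generators of $\mathscr{I}^+$. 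One checks this map is a homeomorphism by invoking compactness of $S^2$ and the manifold-with-boundary structure of $\mathscr{M}$ (the conformal factor $\Omega$ and the non-vanishing of $\nabla_a\Omega$ give local coordinates adapted to $\mathscr{I}^+$, so the endpoint map depends smoothly on the initial direction). Hence the base is $S^2$, the bundle $\mathscr{I}^+\to S^2$ with fibre $\mathbb{R}$ is trivial (an $\mathbb{R}$-bundle over a paracompact base always admits a section, being contractible-fibred), and $\mathscr{I}^+\cong S^2\times\mathbb{R}$ with the $\mathbb{R}$ factors being precisely the generating rays; the same applies verbatim to $\mathscr{I}^-$ using past null cones.

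The main obstacle I anticipate is the rigorous verification that the endpoint map from future null directions at $p$ to generators of $\mathscr{I}^+$ is well-defined, surjective, and a homeomorphism — in particular, ruling out that a null geodesic could "wrap" or fail to have a unique endpoint, and showing the inverse is continuous uniformly over $S^2$. This is exactly where one needs the full strength of asymptotic simplicity: condition $4.$ for existence of endpoints, global hyperbolicity (Theorem \ref{thm:ASGH}) plus achronality of $\mathscr{I}$ for uniqueness, and the $C^{k}$ regularity of $\Omega$ and $\mathscr{M}$ for continuity. A secondary subtlety is the disconnectedness of $\mathscr{I}$ into $\mathscr{I}^+\sqcup\mathscr{I}^-$: this was asserted after Proposition \ref{prop:1} on the grounds that $\mathscr{\tilde M}$ lies locally to one causal side of $\mathscr{I}$, but to use it here one should note that a future-directed causal curve cannot leave $\mathscr{\tilde M}$ through $\mathscr{I}^-$, so the past and future endpoint sets of null geodesics are genuinely disjoint closed pieces of $\mathscr{I}$, and each is connected because the sphere of null directions at any $p$ is connected. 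I would structure the write-up so that the Minkowski computation of Section \ref{sect:3.2} serves as the explicit model that the general argument mimics, invoking the stated theorems rather than re-deriving the causal theory.
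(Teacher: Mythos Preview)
Your approach differs from the paper's, and the step you yourself flag as the main obstacle---bijectivity of the endpoint map from future null directions at $p$ to generators of $\mathscr{I}^+$---is a genuine gap that the tools you invoke do not close. Strong causality and global hyperbolicity prevent causal curves from nearly closing up, but they do not by themselves prevent two distinct null geodesics from $p$ from terminating on the \emph{same} generator of $\mathscr{I}^+$ at different parameter values; nor do they force every generator to meet the future null cone of $p$. That $\dot I^+(p)\cap\mathscr{I}^+$ is a global cross-section is in fact a \emph{consequence} of the topology $\mathscr{I}^+\cong S^2\times\mathbb{R}$ together with further causal analysis, not an independent input, so invoking it here is circular. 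The same circularity afflicts your early assertion that $\Sigma\cong\mathbb{R}^3$: you gesture at condition 4.\ and the Minkowski model but give no mechanism, and indeed in the paper's argument $S\cong\mathbb{R}^3$ emerges as an \emph{output}, not a hypothesis.

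The paper sidesteps the single-light-cone issue entirely by a double-counting of the space $N$ of \emph{all} null geodesics in $\tilde{\mathscr{M}}$. Each null geodesic meets the Cauchy surface $S$ exactly once (Theorem \ref{thm:ngcs}), and the null directions at a point of $S$ form an $S^2$, so $N\cong S\times S^2$. Each null geodesic also has a unique future endpoint on $\mathscr{I}^+$ (condition 4.), and at a point $q\in\mathscr{I}^+$ the null geodesics entering $\tilde{\mathscr{M}}$ are the null directions at $q$ with the single generator direction removed, i.e.\ $S^2\setminus\{\mathrm{pt}\}\cong\mathbb{R}^2$; hence $N\cong\mathscr{I}^+\times\mathbb{R}^2$. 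Writing $\mathscr{I}^+=K\times\mathbb{R}$ with $K$ the leaf space of generators and equating gives $S\times S^2\cong K\times\mathbb{R}^3$, whereupon a topological lemma (cited to \cite{Ger71}) forces $K\cong S^2$ and $S\cong\mathbb{R}^3$ simultaneously. The virtue of this route is that it never asks whether any \emph{particular} light cone cuts $\mathscr{I}^+$ cleanly; it uses only that each null geodesic meets $S$ once and $\mathscr{I}^+$ once, both of which are immediate from the definitions.
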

We give here a sketch of the proof.
\begin{proof}
If the space-time is asymptotically simple, it is globally hyperbolic (\ref{thm:ASGH}) and thus it has a Cauchy surface $S$. Let $N$ denote the collection of all null geodesic in $\tilde{\mathscr{M}}$ and $p$ be a point of $S$. The set of null geodesics emanating from $p$ is just the set of null directions at $p$ and  hence is, topologically, a 2-sphere $S^2$. Therefore the topology of $N$ is $S\times S^2$ since, by theorem \ref{thm:ngcs}, all null geodesics intersect $S$. Let $q$ be a point of $\mathscr{I}^+$. The set of null geodesics emanating from $q$ is a 2-sphere. However, one of these null geodesics lies in $\mathscr{I}^+$, and so does not enter $\tilde{\mathscr{M}}$. Hence, the null geodesics from $q$ which do not enter $\tilde{\mathscr{M}}$ are, topologically, a 2-sphere minus a single point, i.e. a plane $\mathbb{R}^2$. Hence $N$ is topologically $\mathscr{I}^+\times \mathbb{R}^2$ since all null geodesics intersect $\mathscr{I}^+$. But $\mathscr{I}^+$ is the union of its null generators, and so we have $\mathscr{I}^+=\mathbb{R}\times K$, where $K$ is some 2-manifold. It follows that $N=K\times\mathbb{R}^3$ and thus $S\times S^2=K\times\mathbb{R}^3$. It can be shown that the last equation can hold only if $K$ is topologically a 2-sphere and $S$ is topologically $\mathbb{R}^3$ \citep[see][pg.~99]{Ger71}. We have obtained $\mathscr{I}^+=S^2\times\mathbb{R}$. 
\end{proof}
\begin{oss}$\\ $
The first proof of this theorem, involving sophisticated arguments, is due to \cite{Pen65}. However, as remarked in \cite{NewRP}, the arguments carried out by Penrose were incorrect, and a more rigorous proof can be found in \cite{Ger71} or in \cite{HawEll}.
\end{oss}
\begin{oss}$\\ $
Since $S$, as shown, is topologically $\mathbb{R}^3$, $\mathscr{M}$ is topologically $\mathbb{R}^4$. That is, every asymptotically simple space-time is topologically the same as Minkowski space. 
\end{oss}
Thus $\mathscr{I}$, when it is null, it fairly resembles the infinity of Minkowski space-time, each of $\mathscr{I^{\pm}}$ containing $S^2$ null generators. The occurrence of this situation distinguishes an asymptotically simple space-time in which $\mathscr{I}$ is null from one in which $\mathscr{I}$ is spacelike.\\
In order to proceed further and to obtain more results concerning the structure of $\mathscr{I}$ we shall impose that Einstein vacuum equations hold near $\mathscr{I}$. It can be shown that in other certain cases when $T_{ab}$ does not vanish near $\mathscr{I}$, i.e. the vacuum equations are not appropriate to describe the space-time, the consequences are almost the same, but the derivation is more involved \citep{Pen65}. An example of this is the case in which Einstein-Maxwell equations hold near $\mathscr{I}$. Hence we restrict ourselves to the Einstein vacuum equations, assuming that when matter is present the results still hold for $\mathscr{I}$.\\
From equation \eqref{eqn:61} and the second of \eqref{eqn:52} we have \begin{equation*}
\Phi_{ab}=-\Omega^{-1}\nabla_a\nabla_b\Omega+\frac{1}{4}\Omega^{-1}\left(\square\Omega\right)g_{ab},\hspace{0.8cm}\Phi_{ABA'B'}=-\Omega^{-1}\nabla_{A(A'}\nabla_{B')B}\Omega.
\end{equation*}
Multiplying by $\Omega$ both sides of previous equation, assuming $k\geq 2$ for $\mathscr{\tilde{M}}$ and hence that $\Phi_{ab}$ must be continuous at $\mathscr{I}$, we obtain on $\mathscr{I}$ the \textit{asymptotic Einstein condition} 
\begin{equation}
\label{eqn:81}
\nabla_a\nabla_b\Omega\approx\frac{1}{4}g_{ab}\nabla^c\nabla_c\Omega,\hspace{0.8cm}\nabla_{A'(A}\nabla_{B)B'}\Omega\approx 0,
\end{equation}
where we introduced the \lq weak equality\rq\hspace{0.1mm} symbol $\approx$. Considering two spinor fields $\psi^{...}_{...}$ and $\phi^{...}_{...}$, saying that 
\begin{equation}
\label{eqn:80}
\psi^{...}_{...}\approx\phi^{...}_{...}
\end{equation}
means that $\psi^{...}_{...}-\phi^{...}_{...}=0$ on $\mathscr{I}$. We have to be careful when taking the derivatives of a weak equation, bearing in mind that only tangential derivatives can be relied upon to obtain a new weak equation. Hence from \eqref{eqn:80} we could get correctly
\begin{equation*}
N_{[a}\nabla_{b]}\psi^{...}_{...}\approx N_{[a}\nabla_{b]}\phi^{...}_{...}
\end{equation*}
but not $\nabla_a\psi^{...}_{...}\approx \nabla_{b}\phi^{...}_{...}$. In the remainder we could drop the phrase \lq near $\mathscr{I}$\rq, replacing the equality symbol with the weak equality one. Obviously we have 
\begin{equation*}
\Omega\approx0.
\end{equation*}
Equation \eqref{eqn:79} can be written in the form
\begin{equation}
\label{eqn:88}
N_aN^a\approx\frac{1}{3}\tilde{\lambda},
\end{equation}
and \eqref{eqn:81} as 
\begin{equation}
\label{eqn:94}
\nabla_a N_b\approx\frac{1}{4}g_{ab}\nabla_cN^c,\hspace{0.8cm}\nabla_{A'(A}N_{B)B'}\approx 0.
\end{equation}
In the case of null $\mathscr{I}$ we can put
\begin{equation}
\label{eqn:90}
N^a\approx A o^A\bar{o}^{A'}=Al^a,
\end{equation}
introducing a N-P tetrad of vectors as done in \eqref{eqn:18} and with $A$ a non-zero scalar. Multiplying the first of \eqref{eqn:94} by $m^am^b$ we get 
\begin{equation}
\label{eqn:94a}
m^am^b\nabla_aN_b=Am^b\delta l_b=A\sigma\approx 0\Rightarrow\sigma\approx 0.
\end{equation}
i.e. the null congruence with tangent vector $N^a$ is shear-free. Also, from \eqref{eqn:gen}, this congruence is rotation free, i.e. $\rho=\bar{\rho}$ (this last condition is trivial and follows from the fact that $\mathscr{I}$ is null). \\
We can finally say that when $\mathscr{I}$ is a null hypersurface it is generated by the two-parameter family of integral curves of $N^a$, whose null congruence is shear- and rotation-free. The fact that $\mathscr{I}^+$ and $\mathscr{I}^-$ have this kind of structure is essential for the definition of the BMS group, which will be done later, in chapter \ref{chap:6}.
\begin{thm}$\\ $
\label{thm:EAC}
If the vacuum equations $\tilde{R}_{ab}=\tilde{\lambda}\tilde{g}_{ab}$ hold near $\mathscr{I}$, then the Weyl tensor $C_{abcd}\approx 0.$
\end{thm}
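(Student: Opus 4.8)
The plan is to extract, from the already-established weak equations, a differential relation for the rescaled Weyl spinor $\Psi_{ABCD}$ near $\mathscr{I}$ and then use the vanishing of $\Omega$ to conclude that $\Psi_{ABCD}$ itself vanishes there. Recall that the physical Weyl spinor is conformally invariant, $\tilde{\Psi}_{ABCD}=\Psi_{ABCD}$ by \eqref{eqn:57}, while the \emph{field} $\tilde{\phi}_{ABCD}:=\tilde{\Psi}_{ABCD}$ satisfies the vacuum Bianchi identity \eqref{eqn:52a}, $\tilde{\nabla}^{AA'}\tilde{\Psi}_{ABCD}=0$, which has the form of a spin-$2$ zero rest-mass equation. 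By the Proposition of section \ref{sect:4.2}, a spin-$2$ field which is a conformal density of weight $-1$ obeys a conformally invariant equation; so the correct object to carry over to the unphysical space-time is $\phi_{ABCD}=\Omega^{-1}\tilde{\Psi}_{ABCD}=\Omega^{-1}\Psi_{ABCD}$, and it satisfies $\nabla^{AA'}\phi_{ABCD}=0$ throughout $\mathscr{\tilde M}$ (this is \eqref{eqn:86}).

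First I would argue that $\phi_{ABCD}$ extends smoothly (or at least continuously) to $\mathscr{I}$. This is where the regularity hypotheses built into asymptotic simplicity ($k$-smoothness of $\mathscr{M}$, $g$ and $\Omega$, with $k$ large enough) and the metric form \eqref{eqn:7.1}--type assumptions on $\mathscr{\tilde M}$ enter: one needs that the conformally rescaled curvature is regular at the boundary, so that $\phi_{ABCD}=\Omega^{-1}\Psi_{ABCD}$ has a finite limit on $\mathscr{I}$. Granting this, write $\Psi_{ABCD}=\Omega\,\phi_{ABCD}$ with $\phi_{ABCD}$ bounded near $\mathscr{I}$; since $\Omega\approx 0$ on $\mathscr{I}$, we immediately get $\Psi_{ABCD}\approx 0$, hence $C_{abcd}\approx 0$ by the decomposition \eqref{eqn:27} (the other pieces of $R_{abcd}$ involve $\Phi_{ab}$ and $\Lambda$, not the Weyl part, and in any case $C_{abcd}$ is built solely from $\Psi_{ABCD}$ and its conjugate via \eqref{eqn:28}).

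The main obstacle is precisely the smoothness/boundedness of $\phi_{ABCD}=\Omega^{-1}\Psi_{ABCD}$ at $\mathscr{I}$: a priori $\Omega^{-1}$ blows up there, so one must show the physical Weyl spinor $\tilde{\Psi}_{ABCD}=\Psi_{ABCD}$ vanishes at least as fast as $\Omega$. The clean way to see this uses the asymptotic Einstein condition \eqref{eqn:81}: from $\tilde{R}_{ab}=\tilde\lambda\tilde g_{ab}$ and \eqref{eqn:59}, $\Phi_{ABA'B'}=-\Omega^{-1}\nabla_{A(A'}\nabla_{B')B}\Omega$, and continuity of $\Phi_{ab}$ at $\mathscr{I}$ forces $\nabla_{A'(A}\nabla_{B)B'}\Omega\approx 0$. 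One then feeds this, together with $N_aN^a\approx\tfrac13\tilde\lambda$ \eqref{eqn:88} and the unphysical Bianchi identity \eqref{eqn:86}, into an argument controlling $\nabla_a\Psi_{BCDE}$ on $\mathscr{I}$; contracting \eqref{eqn:86} against $N^{AA'}=-\nabla^{AA'}\Omega$ and using $\nabla^{AA'}\phi_{ABCD}=0$ shows that the normal derivative of $\Psi_{ABCD}$ is tied to $\Psi_{ABCD}$ itself times a bounded factor, so that the only consistent behaviour compatible with regularity of the unphysical field is $\Psi_{ABCD}=O(\Omega)$. I would present the cleanest route — invoke regularity of $\phi_{ABCD}$ at $\mathscr{I}$ as guaranteed by the asymptotic-simplicity setup, then conclude $\Psi_{ABCD}=\Omega\phi_{ABCD}\approx 0$ — and relegate the detailed justification of that regularity (via \eqref{eqn:81} and the conformal Bianchi identity) to a remark, citing \cite{Penrin2} and \cite{Pen65} for the technical estimates.
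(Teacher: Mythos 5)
Your proposal has a genuine gap: it is circular at the decisive step. The ``regularity of $\phi_{ABCD}=\Omega^{-1}\Psi_{ABCD}$ at $\mathscr{I}$'' is \emph{not} part of the asymptotic-simplicity hypotheses; those only guarantee smoothness of $g_{ab}$ and $\Omega$ up to the boundary, hence smoothness of the unphysical Weyl spinor $\Psi_{ABCD}$ itself, which a priori need not vanish there. Boundedness of $\Omega^{-1}\Psi_{ABCD}$ near $\mathscr{I}$ is exactly equivalent to the conclusion $\Psi_{ABCD}\approx 0$ you are trying to prove, and indeed the paper only extends $\psi_{ABCD}=\Omega^{-1}\Psi_{ABCD}$ to the boundary \emph{after} Theorem \ref{thm:EAC} is established, by combining it with Lemma \ref{thm:EAC1}. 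So ``invoke regularity and relegate the justification to a remark'' assumes the theorem.

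Your second paragraph gestures at the right computation but does not close the gap. Multiplying $\nabla^{AA'}(\Omega^{-1}\Psi_{ABCD})=0$ by $\Omega^2$ gives $\Omega\nabla^{AA'}\Psi_{ABCD}=\Psi_{ABCD}\nabla^{AA'}\Omega$, whose restriction to $\mathscr{I}$ yields only the \emph{algebraic} constraint $\Psi_{ABCD}N^{AA'}\approx 0$; it does not by itself force $\Psi_{ABCD}=O(\Omega)$. The paper then splits into two cases. If $\tilde{\lambda}\neq 0$, then $N_aN^a\approx\tfrac{1}{3}\tilde{\lambda}\neq 0$, the matrix $N^{AA'}$ is invertible, and $\Psi_{ABCD}\approx 0$ follows at once. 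In the physically central case $\tilde{\lambda}=0$, however, $N^{AA'}\approx A\,o^A\bar{o}^{A'}$ has rank one, and the constraint only forces $\Psi_{ABCD}$ to be of type N, $\Psi_{ABCD}\approx\Psi\,\iota_A\iota_B\iota_C\iota_D$, leaving one scalar $\Psi$ completely undetermined by any local argument. Killing that scalar is the hard part of the proof: one differentiates \eqref{eqn:93}, uses the asymptotic Einstein condition \eqref{eqn:94} to remove the $\nabla N$ terms, and arrives at a propagation equation for $\Psi$ on the spherical cross-sections of $\mathscr{I}$ whose only global solution on $S^2$ is $\Psi=0$ --- this is where the topology $\mathscr{I}^{\pm}\cong S^2\times\mathbb{R}$ is essential. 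Your proposal contains neither the case distinction on $\tilde{\lambda}$ nor the global argument on the sphere, and without them the theorem is not proved.
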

\begin{proof}[Proof:]From the Bianchi identity in spinor form (i.e. equation \eqref{eqn:52}) we have for the physical space-time
$\tilde{\nabla}^{AA'}\tilde{\Psi}_{ABCD}=0$. By \eqref{eqn:86} and \eqref{eqn:57} we obtain\\
\begin{equation*}
\nabla^{AA'}\left(\Omega^{-1}\Psi_{ABCD}\right)=0
\end{equation*}
in $\mathscr{M}$, that implies
\begin{equation}
\label{eqn:93}
\Omega\nabla^{AA'}\Psi_{ABCD}=\Psi_{ABCD}\nabla^{AA'}\Omega,\medskip
\end{equation}
which, by continuity, holds on $\mathscr{I}$. Hence\medskip
\begin{equation}
\label{eqn:87}
\Psi_{ABCD}N^{AA'}\approx 0.\medskip
\end{equation}
From \eqref{eqn:88} we have $N^{AA'}6\tilde{\lambda}^{-1}N_{A'E}=\epsilon_{E}{}^{A}$. Thus, if $\tilde{\lambda}\neq 0$, the matrix $N^{AA'}$ is non-singular and can be inverted so that we have
\begin{equation*}
\Psi_{ABCD}\approx 0
\end{equation*} 
from which the result follows.\\
The case in which $\tilde{\lambda}=0$ is more difficult. The sketch of the proof we are giving here depends upon a global result requiring the topology \eqref{eqn:89}. From equations \eqref{eqn:87} and \eqref{eqn:90} we have
\begin{equation*}
\Psi_{ABCD}\iota^A\approx 0,
\end{equation*}
i.e., by \ref{thm:2}
\begin{equation}
\label{eqn:91}
\Psi_{ABCD}\approx\Psi\iota_A\iota_B\iota_C\iota_D
\end{equation}
for some $\Psi$. Applying $\nabla_{EE'}$ to \eqref{eqn:93} we get
\begin{equation*}
N_{EE'}\nabla^{AA'}\Psi_{ABCD}\approx\nabla_{EE'}\Psi_{ABCD}N^{AA'}+\Psi_{ABCD}\nabla_{EE'}N^{AA'}.
\end{equation*}
Lowering $A'$ and symmetrizing over $A'E'$ it becomes
\begin{equation*}
N{}_{E(E'}\nabla_{A')}{}^{A}\Psi_{ABCD}\approx N^A{}_{(A'}\nabla_{E')E}\Psi_{ABCD}+\Psi_{ABCD}\nabla_{E(E'}N_{A')}{}^{A}.
\end{equation*}
Using the complex conjugate of equation \eqref{eqn:94} the last term vanishes and we obtain
\begin{equation*}
N{}_{E(E'}\nabla_{A')}{}^{A}\Psi_{ABCD}\approx N^A{}_{(A'}\nabla_{E')E}\Psi_{ABCD},
\end{equation*}
hence, raising $E$
\begin{equation*}
N^{[E}{}_{(E'}\nabla_{A')}{}^{A]}\Psi_{ABCD}\approx 0.
\end{equation*}
By theorem \ref{thm:asy} we obtain
\begin{equation*}
\epsilon^{EA}N^{F}_{(E'}\nabla_{A')F}\Psi_{ABCD}\approx 0,
\end{equation*}
thus
\begin{equation*}
N^A{}_{(A'}\nabla_{E')A}\Psi_{EBCD}\approx 0.
\end{equation*}
From symmetry properties of spinors \citep{Penrin1} it can be shown that the previous equation implies
\begin{equation*}
\iota^A\nabla_{AE'}\Psi\iota_E\iota_B\iota_C\iota_D\approx 0.
\end{equation*}
Again it can be shown \citep{Penrin1} that this equation on any spherical cross-section of $\mathscr{I}$ admits the solution $\Psi=0$ on the sphere, whence the proof is complete.
\end{proof}
We refer to the conditions
\begin{equation}
\label{eqn:95}
\Psi_{ABCD}\approx 0,\hspace{1cm}\nabla_{A'(A}N_{B)B'}\approx 0,
\end{equation}
as the \textit{strong asymptotic Einstein condition}.\\
It is now important to mention the following general result.
\begin{lem}$\\ $
\label{thm:EAC1}
Let $(\mathscr{\tilde{M}},\tilde{g})$ be a weakly asymptotically simple space-time and let $\mathscr{H}$ be a neighbourhood of $\mathscr{I}$ in $\mathscr{M}$. Suppose $T^{\mathscr{A}}$ to be a $C^r[\mathscr{H}]$ (with $r\leq k$) spinor that satisfies $T^{\mathscr{A}}\approx 0$. Then there exists a $C^{r-1}[\mathscr{H}]$ spinor $U^{\mathscr{A}}$ such that $\Omega U^{\mathscr{A}}=T^{\mathscr{A}}$.
\end{lem}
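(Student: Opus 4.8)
The plan is to reduce the statement to a standard fact about smooth functions vanishing on a hypersurface, namely that if a $C^r$ function vanishes on the smooth hypersurface $\mathscr{I} = \{\Omega = 0\}$ and $\Omega$ has nonvanishing gradient there, then it is divisible by $\Omega$ with a $C^{r-1}$ quotient. First I would treat the scalar case: let $f \in C^r[\mathscr{H}]$ with $f \approx 0$, i.e. $f|_{\mathscr{I}} = 0$. Because $\nabla_a\Omega \neq 0$ on $\mathscr{I}$ (condition 3 of Definition \ref{defn:AS}), the implicit function theorem lets me choose, in a neighbourhood of any point of $\mathscr{I}$, local coordinates $(x^0, x^1, x^2, x^3)$ in which $\Omega = x^0$, so that $\mathscr{I}$ is locally $\{x^0 = 0\}$. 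In these coordinates $f(0, x^1, x^2, x^3) = 0$, and by the fundamental theorem of calculus
\begin{equation*}
f(x^0, x^1, x^2, x^3) = x^0 \int_0^1 \frac{\partial f}{\partial x^0}(t x^0, x^1, x^2, x^3)\, dt \equiv x^0\, h(x^0, x^1, x^2, x^3),
\end{equation*}
where differentiation under the integral sign shows $h$ is $C^{r-1}$. Thus $f = \Omega h$ locally with $h \in C^{r-1}$.

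The second step is to globalize using a partition of unity. Cover $\mathscr{I}$ by coordinate patches $\{O_\alpha\}$ of the above type, on each of which $f = \Omega h_\alpha$; on the open set $\mathscr{H} \setminus \mathscr{I}$ where $\Omega \neq 0$ simply set $h_0 = f/\Omega$. Since $\mathscr{M}$ (hence $\mathscr{H}$) is paracompact — a consequence of the conventions fixed in Chapter \ref{chap:1} — there is a $C^\infty$ partition of unity $\{\psi_\alpha\}$ subordinate to $\{O_\alpha\} \cup (\mathscr{H}\setminus\mathscr{I})$, and $h \equiv \sum_\alpha \psi_\alpha h_\alpha$ is a well-defined $C^{r-1}$ function on $\mathscr{H}$. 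On the overlaps the local quotients agree wherever $\Omega \neq 0$ (they all equal $f/\Omega$ there) and hence, by continuity, agree on $\mathscr{I}$ as well, so $\Omega h = \sum_\alpha \psi_\alpha (\Omega h_\alpha) = \sum_\alpha \psi_\alpha f = f$ throughout $\mathscr{H}$.

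The third step is to pass from scalars to spinors. A $C^r$ spinor field $T^{\mathscr{A}}$ (the script index standing for an arbitrary collection of primed and unprimed indices) is, relative to a $C^r$ local spin dyad — which exists because $\mathscr{M}$ carries a spinor structure, cf. Theorem \ref{thm:Geroch2} — a finite collection of $C^r$ component functions $T^{\mathscr{A}}_{(i)}$, and $T^{\mathscr{A}} \approx 0$ means every component vanishes on $\mathscr{I}$. Applying the scalar result componentwise gives $C^{r-1}$ functions $U^{\mathscr{A}}_{(i)}$ with $\Omega\, U^{\mathscr{A}}_{(i)} = T^{\mathscr{A}}_{(i)}$; assembling these back into a spinor $U^{\mathscr{A}}$ via the same dyad yields $\Omega\, U^{\mathscr{A}} = T^{\mathscr{A}}$, and $U^{\mathscr{A}}$ is $C^{r-1}$ since both the components and the dyad are at least $C^{r-1}$. (Alternatively, one performs the partition-of-unity patching directly on the spinor level, which avoids any dependence on a global dyad.)

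I expect the only genuine subtlety to be bookkeeping about differentiability orders — checking that the Hadamard-type quotient $h$ really loses exactly one order of smoothness and no more, and that the partition of unity, which can be taken $C^\infty$, does not degrade this — together with making sure the local quotients glue consistently across $\mathscr{I}$ (this is where the continuity argument, rather than mere agreement off $\mathscr{I}$, is needed). None of these is hard; the essential geometric input is simply that $\mathscr{I}$ is a smooth hypersurface cut out transversally by $\Omega$, which is precisely condition 3 of Definition \ref{defn:AS}.
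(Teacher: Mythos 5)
Your argument is correct. Note that the paper itself does not prove this lemma --- it only cites an outline in Penrose and Rindler --- and that cited argument is the same standard one you give: a Hadamard-type division lemma in coordinates adapted to $\Omega$ near $\mathscr{I}$ (where $\nabla_a\Omega\neq 0$ makes $\Omega$ a coordinate), trivial division by $\Omega$ away from $\mathscr{I}$, patching (the local quotients already agree on overlaps by density of $\mathscr{H}\setminus\mathscr{I}$, so the partition of unity is harmless but not strictly needed), and a componentwise reduction for the spinor indices. Your bookkeeping of the loss of exactly one order of differentiability is also right.
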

An outline of the proof can be found in \cite{Penrin2}, pg. 357.\\
In section \ref{sect:4.2} we noted the difference in the conformal behaviour of the Weyl spinor $\Psi_{ABCD}$ with that of a massless spin-2 field $\phi_{ABCD}$. We \textit{define} now a specific massless spin-2 field
\begin{equation*}
\tilde{\psi}_{ABCD}=\tilde{\Psi}_{ABCD}
\end{equation*}
that, under a rescaling \eqref{eqn:55},  transforms as
\begin{equation}
\label{eqn:96}
\psi_{ABCD}=\Omega^{-1}\tilde{\psi}_{ABCD}=\Omega^{-1}\tilde{\Psi}_{ABCD}=\Omega^{-1}\Psi_{ABCD}.
\end{equation}
In that way the zero rest-mass field equation for $\psi_{ABCD}$ is conformally invariant. We thus have
\begin{equation}
\label{eqn:96.1}
\nabla^{AA'}\psi_{ABCD}=0\hspace{1cm}\mathrm{and}\hspace{1cm}\tilde{\nabla^{AA'}}\tilde{\psi}_{ABCD}=0.
\end{equation}
As it stands $\psi_{ABCD}$ is not defined on $\mathscr{I}$. But theorem \ref{thm:EAC}, lemma \ref{thm:EAC1} and the assumption of smoothness allow the extension of $\psi_{ABCD}$ to the boundary as a smooth field on $\mathscr{M}$. Note that applying $\nabla_{AA'}$ to \eqref{eqn:96} we obtain
\begin{equation}
\label{eqn:97}
-\nabla_{AA'}\Psi_{BCDE}\approx N_{AA'}\psi_{BCDE},
\end{equation}
and thus, on $\mathscr{I}$ we obtain $\psi_{ABCD}$ from the derivative of the Weyl spinor. \\
$\psi_{ABCD}$ will be called the \textit{gravitational field}. From the above discussion $\psi_{ABCD}$ is a genuine spin-2 field with the natural conformal behaviour and can be thought to describe the gravitational effects. In particular its values on the boundary are closely related to the gravitational radiation which escapes from the system under consideration and hence the behaviour of its components at $\mathscr{I}$ is fundamental in the understanding of gravitational radiation theory, which can be analysed with the Sachs peeling property that we are going to investigate for a general spin field with a generic conformal weight. The basic idea, which will be developed in the next chapter is the following: consider a space-time $(\tilde{\mathscr{M}},\tilde{g})$ which allows us to attach a conformal boundary, thus defining an unphysical space-time $(\mathscr{M},g)$ conformally related to the given space-time. Suppose we are also given a solution of a conformally invariant equation on this unphysical manifold, as the first of \eqref{eqn:96.1}. Then there exists a rescaling of that unphysical field with power of the conformal factor, which produces a solution of the equation on the physical manifold. Now suppose that the unphysical field is smooth on the boundary. Then the physical solution will have a characteristic asymptotic behaviour which is entirely governed by the conformal weight of the field, i.e. by the power of the conformal factor used for the rescaling. Thus, the regularity requirement of the unphysical field translates into a characteristic asymptotic fall-off or growth behaviour of the physical field, depending on its conformal weight. Penrose used this idea to show that solutions of the zero rest-mass equations for arbitrary spin on a space-time, which can be compactified by a conformal rescaling, exhibit the peeling property in close analogy to the gravitational case as discovered by Sachs. Take as an example the spin-2 zero rest-mass conformally invariant equation \eqref{eqn:96.1} for the field $\tilde{\psi}_{ABCD}$ which is a conformal density of weight 1. Then, assuming a regularity condition of the unphysical field $\psi_{ABCD}$, we would have that $\tilde{\psi}_{ABCD}$ falls asymptotically as $\Omega$ does.
\chapter{Peeling Properties}
\label{chap:5}
\begin{center}
\begin{large}
\textbf{Abstract}
\end{large}
\end{center}
In this part of the work we carry out a detailed proof of the so-called \textit{peeling property}. This will be done using the spinor formalism only. Hence, many of the notions introduced in chapter \ref{chap:2} will be useful here. The peeling property is a characteristic fall-off behaviour of the fields at infinity, in asymptotically flat space-times. Of particular importance for our purposes is the application of the peeling property to zero rest-mass fields. In fact it allows us to find the behaviour of the Weyl spinor that, as already remarked, can be used to build a spin 2 massless gravitational field. We will show that this behaviour is described by a sum of negative powers of $\tilde{r}$, where $\tilde{r}$ is the affine parameter along null geodesics. 
\section{Introduction}
The peeling, or peeling-off of principal null directions is a generic asymptotic behaviour which was firstly developed for spin 1 and spin 2 fields in the flat case by \cite{Sachs61} and in the asymptotically flat case by \cite{Sachs62}. In his works Sachs proposed an invariant condition for outgoing gravitational waves. The intuitive idea was that at large distances from the source, the gravitational field, i.e. the Riemann tensor of outgoing radiation, should have approximately the same algebraic structure as does the Riemann tensor for a plane wave. As one approaches the source, deviations from the plane wave should appear. Sachs analysed these deviations in detail, using the geometry of congruences of null curves, and obtained rather pleasing qualitative insights into the behaviour of the curvature in the asymptotic regime. The notion was explored further in \cite{New62}, in which the authors formulated what has become known as the Newman-Penrose formalism (that we developed in section \ref{sect:2.7}) that combined the spinor methods, which had been developed earlier in \cite{Pen60}, with the (null-)tetrad calculus already used. In particular they applied their formalism to the problem of gravitational radiation using a coordinate system which was very similar to the one used by \cite{Sachs61} and showed that the single assumption $\tilde{\Psi}_0=o(\tilde{r}^{-4})$ already implied the peeling property as stated by Sachs. In the next years Penrose, guided by the idea of \lq following the field along null directions\rq\hspace{0.1mm}, formulated the concept of the conformal structure of the space-time, and in \cite{Pen65} the conformal method is used, together with the spinor formalism, to deduce the peeling properties. In this work the basic qualitative picture we have today is developed. In the remainder of this chapter we will use Penrose approach (that was further improved in \cite{Penrin2}) to derive the peeling property. The proof of the peeling properties depends strongly on the comparison between two parallelly propagated spin frames along a null geodesic $\gamma$ with respect to the two metrics $\tilde{g}_{ab}$ and $g_{ab}$.\\
\section{Parallelly Propagated Spin-Frames}
\label{sect:5.1}
Let $\gamma$ be a null geodesic in $\mathscr{\tilde{M}}$ reaching $\mathscr{I}$ at the point $p$. We choose a spin frame $(\tilde{o}^A,\tilde{\iota}^A)$ at one point of $\gamma$, where the tangent vector is $\tilde{l}^a=\tilde{o}^A\tilde{\bar{o}}^{A'}$, and propagate the basis parallelly along $\gamma$, as in section \ref{sect:2.8}, via
\begin{equation}
\label{eqn:107}
\tilde{D}\tilde{o}_A=0,
\end{equation}
where $\tilde{D}=\tilde{l}^a\tilde{\nabla}_a$. Ler $\tilde{r}$ be an affine parameter on $\gamma$:
\begin{equation*}
\tilde{D}\tilde{r}=\tilde{l}^a\tilde{\nabla}_a\tilde{r}=1.
\end{equation*}
We can choose whatever conformal transformation for behaviour for $\tilde{o}^A$ and $\tilde{\iota}^A$  provided that $\tilde{\epsilon}^{AB}=\tilde{o}^A\tilde{\iota}^B-\tilde{\iota}^A\tilde{o}^B$ transforms according to \eqref{eqn:55}. If we take
\begin{equation}
\label{eqn:106}
o_A=\tilde{o}_A,\hspace{1cm}o^A=\Omega^{-1}\tilde{o}^A,\hspace{1cm}\mathrm{i.e.}\hspace{1cm}
l_a=\tilde{l}_a,\hspace{1cm}
l^a=\Omega^{-2}\tilde{l}^a,
\end{equation}
then the propagation equation \eqref{eqn:107} is preserved. In fact we have, using rule $2.$ of section \ref{sect:4.1} and equations \eqref{eqn:107} and \eqref{eqn:106}
\begin{equation*}
Do_A=l^b\nabla_bo_A=\Omega^{-2}\tilde{l}^b(\tilde{\nabla}_bo_A-\Upsilon_{B'A}o_B)=\Omega^{-2}\tilde{D}\tilde{o}_A=0.
\end{equation*}
We can complete $o^A$ to a spin frame $(o^A,\iota^A)$ and arrange 
\begin{equation}
\label{eqn:108}
Do^A=0,\hspace{1cm}D\iota^A=0,
\end{equation}
where we set, to preserve \eqref{eqn:55},
\begin{equation}
\label{eqn:113}
\iota^A=\tilde{\iota}^A-\nu\tilde{o}^A,
\end{equation}
with $\nu$ scalar function to be determined. Using the second of \eqref{eqn:108} we get 
\begin{equation*}
0=\tilde{D}\tilde{\iota}^A=\tilde{D}(\iota^A+\nu\tilde{o}^A)=\tilde{l}^b(\nabla_b\iota^A-\epsilon_B{}^{A}\Upsilon_{B'C}\iota^C)+\tilde{o}^A\tilde{D}\nu+\nu\tilde{D}\tilde{o}^A
\end{equation*}
\begin{equation*}
=-\tilde{o}^A\tilde{\bar{o}}^{B'}\iota^C\Omega^{-1}\nabla_{B'C}\ln\Omega+\tilde{o}^AD\nu,
\end{equation*}
hence
\begin{equation}
\label{eqn:114}
D\nu=\Omega^{-2}\eta,
\end{equation}
where
\begin{equation}
\label{eqn:115}
\eta=\iota^C\bar{o}^{B'}\nabla_{B'C}\Omega.
\end{equation}
We compare now affine parameters on $\gamma$. We take $r$ to be an affine parameter on $\gamma$, with origin at $p$ and with tangent vector $l^a=o^A\bar{o}^{A'}$. We have
\begin{equation*}
r\approx 0,\hspace{1cm}Dr=1.
\end{equation*}
The symbol $\approx$ now means equality at $p$. We have of course $\Omega\approx0$ and by condition $3.$ of definition \ref{defn:AS} $D\Omega=d\Omega/dr\neq 0$ on $\mathscr{I}$. Thus we can set
\begin{equation}
\label{eqn:118}
\frac{d\Omega}{dr}\approx-A.
\end{equation}
Note that
\begin{equation}
\label{eqn:116}
-l^aN_a=l^a\nabla_a\Omega=D\Omega\approx-A,
\end{equation}
so this $A$ coincides with the one of \eqref{eqn:90}. Using the $C^k$ smoothness of $\Omega$ it follows that 
\begin{equation}
\label{eqn:110}
\Omega=-Ar-A_2r^2-A_3r^3-...-A_kr^k+o(r^k),
\end{equation} 
$A$, $A_1$, $A_2$...being constant. \\
From the asymptotic Einstein condition \eqref{eqn:81} we have, using equation \eqref{eqn:100} ($r$ is an affine parameter for $\gamma$):
\begin{equation*}
0\approx l^al^b\nabla_a\nabla_b\Omega=l^a\nabla_a(l^b\nabla_b\Omega)=D^2\Omega.
\end{equation*}
Thus in expansione \eqref{eqn:110} we have $A_2=0$.
Consider
\begin{equation*}
\frac{dr}{d\tilde{r}}=\tilde{D}r=\Omega^2Dr=\Omega^2,
\end{equation*}
hence
\begin{equation*}
\tilde{r}=\int\Omega^{-2}dr=\int r^{-2}\left[A+A_3r^2+A_4r^3+...+A_kr^{k-1}+o(r^{k-1})\right]^{-2}dr
\end{equation*}
\begin{equation*}
=\int A^{-2}r^{-2}\left[1+B_2r^2+B_3r^3+...B_{k-1}r^{k-1}+o(r^{k-1})\right]dr
\end{equation*}
\begin{equation}
\label{eqn:111}
=- A^{-2}r^{-1}+C_0+C_1r+...+C_{k-2}r^{k-2}+o(r^{k-2}),
\end{equation}
all $B$ and $C$ coefficients being constant on $\gamma$ and $C_0$ being the constant of integration. For large values of $\tilde{r}$ we can invert \eqref{eqn:111} to obtain
\begin{equation}
\label{eqn:121}
r=-A^{-2}\tilde{r}^{-1}+D_2\tilde{r}^{-2}+D_3\tilde{r}^{-3}+...+D_k\tilde{r}^{-k}+o(\tilde{r}^{-k}),
\end{equation}
which, substituted back into \eqref{eqn:110}, yelds the following expansion for $\Omega$:
\begin{equation}
\label{eqn:112}
\Omega=A^{-1}\tilde{r}^{-1}+E_2\tilde{r}^{-2}+E_3\tilde{r}^{-3}+...+E_k\tilde{r}^{-k}+o(\tilde{r}^{-k}),
\end{equation}
the $D$ and $E$ coefficients being again constant on $\gamma
$.\\
The form we just found for $\Omega$ justifies the assumption made in section \ref{sect:2.2}, according to which the conformal factor $\Omega$ should behave like the reciprocal of an affine parameter along any null geodesic. In fact here we have, by \eqref{eqn:112}, $\Omega\tilde{r}\rightarrow A^{-1}$ as $\tilde{r}\rightarrow\infty$.\\
\section{Comparison Between the Spin-Frames}
\label{sect:5.2}
We proceed now to the comparison between the spin frames. By \eqref{eqn:113} we see that 
\begin{equation}
\label{eqn:122}
\tilde{\iota}^A=\iota^A+\nu\Omega o^A.
\end{equation}
We would like to choose the function $\nu$ such that $\nu\Omega\rightarrow 0$ at the point $p$, so that the two spin frames coincide at that point, i.e. $\tilde{o}^A\approx o^A$ and $\tilde{\iota}^A\approx\iota^A$. Integrating equation \eqref{eqn:114} would give
\begin{equation*}
\nu=\int \eta\Omega^{-2}dr,
\end{equation*}
that, using expansion \eqref{eqn:110}, yields the behaviour
\begin{equation*}
\nu\approx-\eta A^{-2}r^{-1}\approx\eta A^{-1}\Omega^{-1}.
\end{equation*}
Hence we are forced to require $\eta\approx 0$ to ensure that the two spin frames coincide at $p$. By equation \eqref{eqn:115} this condition means that $N^{BB'}$ must be, at $p$, a linear combination of $o^B\bar{o}^{B'}=l^b$ and of $\iota^B\bar{\iota}^{B'}=n^b$, so that its transvection with $\iota^B\bar{o}^{B'}$ vanishes.
We thus write
\begin{equation}
\label{eqn:117}
N^b\approx An^b+\frac{1}{6}\tilde{\lambda} A^{-1}l^b.
\end{equation}
We have $l_bN^b\approx A$ so that $A$ of \eqref{eqn:117} is the same of \eqref{eqn:116}. The second term has been chosen to have that form because  it satisfies  $N^bN_b\approx\tilde{\lambda}/3$, just as in equation \eqref{eqn:88}. \\
With this choice for $N^b$, equation $\eta\approx 0$ is satisfied. Hence it is reasonable to assume $\eta\propto\Omega$, using lemma \ref{thm:EAC1}. With this assumption integration of \eqref{eqn:114} yields, using \eqref{eqn:118}
\begin{equation*}
\nu\propto\int \Omega^{-1}dr=\int A^{-1}\Omega^{-1}d\Omega=-A^{-1}\ln\Omega.
\end{equation*}
This would give, as required, $\nu\Omega\rightarrow 0$, but actually it turns out that $\eta$ vanishes to second order at $p$, i.e. $\eta\approx\Omega^2$, so the logarithm is eliminated. This elimination is very important becauase the presence of a logarithm would destroy the power series we are looking for.\\
Consider in fact the difference between the two members of \eqref{eqn:117}. It has to vanish at $p$. Hence, by \ref{thm:EAC1}, there exists a $C^{k-2}$ ($N^b$ is $C^{k-1}$ being the derivative of $\Omega$, that is $C^{k}$) covector $Q_b$, defined along $\gamma$, such that
\begin{equation}
\label{eqn:120}
\Omega Q_b=N_b-An_b-\frac{1}{6}\tilde{\lambda}A^{-1}l_b.
\end{equation}
Acting on this with $\nabla_c$, we get
\begin{equation*}
-N_cQ_b+\Omega\nabla_cQ_b=\nabla_cN_b-A\nabla_cn_b-\frac{1}{6}A^{-1}l_b\nabla_c\tilde{\lambda}-\frac{1}{6}A^{-1}\tilde{\lambda}\nabla_cl_b,
\end{equation*}
and, after transvecting with $l^c\bar{o}^{B'}\iota^{B}$,
\begin{equation*}
-l^cN_c\bar{o}^{B'}\iota^{B}Q_b+\bar{o}^{B'}\iota^{B}\Omega DQ_b=\bar{o}^{B'}\iota^{B}DN_b-A\bar{o}^{B'}\iota^{B}Dn_b-\bar{o}^{B'}\iota^{B}\frac{1}{6}A^{-1}\tilde{\lambda}Dl_b.
\end{equation*}
The last two terms vanish by \eqref{eqn:108}. Using equation \eqref{eqn:116} we get $l^cN_c\approx A$ and using the Einstein asymptotic condition \eqref{eqn:94} $\iota^B\bar{o}^{B'}DN_b\approx 0$, hence on $p$
\begin{equation*}
-A\iota^B\bar{o}^{B'}Q_b\approx 0.
\end{equation*}
Then by lemma \ref{thm:EAC1} there exists some $C^{k-3}$ function $\mu$ on $\gamma$ such that
\begin{equation}
\label{eqn:119}
-Q_{BB'}\iota^Bo^{B'}=\Omega\mu.
\end{equation}
Using equation \eqref{eqn:119} and \eqref{eqn:120} then we get that \eqref{eqn:115} becomes
\begin{equation*}
\eta=\Omega^2\mu.
\end{equation*}
Hence
\begin{equation*}
\nu=\int\mu dr.
\end{equation*}
that is regular $(C^{k-2})$ in $p$. Thus we can write the expansion 
\begin{equation*}
\nu=c_0+c_1r+c_2r^2+...+c_{k-2}r^{k-2}+o(r^{k-2}), 
\end{equation*}
$c_0$ being the constant of integration, that becomes, using \eqref{eqn:121}
\begin{equation*}
\nu=c_0+d_1\tilde{r}^{-1}+d_2\tilde{r}^{-2}+...+d_{2-k}\tilde{r}^{2-k}+o(\tilde{r}^{2-k}).
\end{equation*}
Thus, using \eqref{eqn:112}
\begin{equation*}
\Omega\nu=\nu_1\tilde{r}^{-1}+\nu_2\tilde{r}^{-2}+...+\nu_{1-k}\tilde{r}^{1-k}+o(\tilde{r}^{1-k}).
\end{equation*}
Eventually equation \eqref{eqn:122} and the second of \eqref{eqn:106} become
\begin{equation}
\label{eqn:123}
\tilde{\iota}^A=\iota^A+\left[\nu_1\tilde{r}^{-1}+\nu_2\tilde{r}^{-2}+...+\nu_{k-1}\tilde{r}^{1-k}+o(\tilde{r}^{1-k})\right]o^A,\hspace{1cm}\tilde{o}^A=\Omega o^{A},
\end{equation}
that are the relations between the spin frames we were looking for.\\
\section{Proof of the Peeling Property}
\label{sect:5.3}
Consider now a conformal density $\tilde{\theta}_{A...HK'...Q'}$ of weight $-w$
\begin{equation}
\label{eqn:125}
 \theta_{A...HK'...Q'}=\Omega^{-w}\tilde{\theta}_{A...HK'...Q'},
\end{equation}
and suppose $\theta_{A...HK'...Q'}$ to be $C^h$ with $0\leq h\leq k-1\geq2$ at $p\in\mathscr{I}$. Under this hypothesis  the following expansion holds
\begin{equation}
\label{eqn:124}
\theta_{{}_{...}}=\theta_{{0}_{...}}+r\theta_{{1}_{...}}+r^2\theta_{{2}_{...}}+...+r^h\theta_{{h}_{...}}+o(r^h).
\end{equation}
Consider a typical component
\begin{equation*}
\tilde{\theta}=\tilde{\theta}_{\tilde{0}...\tilde{0}\tilde{1}...\tilde{1}\tilde{0'}...\tilde{0'}\tilde{1'}...\tilde{1'}}=\tilde{\theta}_{A...BC...HK'...D'E'...Q'}\tilde{o}^{A}...\tilde{o}^{B}\tilde{\iota}^{C}...\tilde{\iota}^{H}\tilde{\bar{o}}^{K'}...\tilde{\bar{o}}^{D'}\tilde{\bar{\iota}}^{E'}...\tilde{\bar{\iota}}^{Q'},
\end{equation*}
with a total number $q$ of zero indices ($0$ and $0'$) and the corresponding component with respect to the frame $(o^A,\iota^A)$, 
\begin{equation*}
\theta=\theta_{0...01...10...01...1}.
\end{equation*}
The expansion \eqref{eqn:124} applies to every component, hence:
\begin{equation*}
\theta=\theta_0+r\theta_1+r^2\theta_2+...+r^h\theta_h+o(r^h),
\end{equation*}
which, in terms of $\tilde{r}$ (using equation \ref{eqn:121}) becomes, renaming the constants of the expansion
\begin{equation}
\label{eqn:126}
\theta=\theta_0+\tilde{r}^{-1}\theta_1+\theta_2\tilde{r}^{-2}+...+\theta_h\tilde{r}^{-h}+o(\tilde{r}^{-h}).
\end{equation}
We have, transvecting \eqref{eqn:125} with the basis vector $\tilde{o}^A$ and $\tilde{\iota}^A$,
\begin{equation*}
\tilde{\theta}=\Omega^w\theta_{\tilde{0}...\tilde{0}\tilde{1}...\tilde{1}\tilde{0'}...\tilde{0'}\tilde{1'}...\tilde{1'}}.
\end{equation*}
As next we insert relations \eqref{eqn:123} in the previous equation:
\begin{equation*}
\tilde{\theta}=\Omega^{w+q}\theta_{A...BC...HK'...D'E'...Q'}o^A...o^B\left(\iota^C+\sum_{i=1}^{k-1}[\nu_i\tilde{r}^{-i}+o(\tilde{r}^{1-k})]o^C\right)
\end{equation*}
\begin{equation*}
...\left(\iota^H+\sum_{j=1}^{k-1}[\nu_j\tilde{r}^{-j}+o(\tilde{r}^{1-k})]o^H\right)
\end{equation*}
\begin{equation*}
\times\bar{o}^{K'}...\bar{o}^{D'}\left(\bar{\iota}^{E'}+\sum_{n=1}^{k-1}[\nu^*_n\tilde{r}^{-n}+o(\tilde{r}^{1-k})]\bar{o}^{E'}\right)...\left(\bar{\iota}^{Q'}+\sum_{m=1}^{k-1}[\nu^*_m\tilde{r}^{-m}+o(\tilde{r}^{1-k})]\bar{o}^{Q'}\right),
\end{equation*}
which, for large $\tilde{r}$'s becomes
\begin{equation*}
\tilde{\theta}\simeq\Omega^{w+q}\theta_{0...01...10'...0'1'...1'}=\Omega^{w+q}\theta.
\end{equation*}
Using the expansions \eqref{eqn:112} and \eqref{eqn:126} we get
\begin{equation*}
\tilde{\theta}\simeq\left(\sum_{j=1}^{k}A_j\tilde{r}^{-j}+o(\tilde{r}^{-k})\right)^{w+q}\left(\sum_{i=0}^{h}\theta_i\tilde{r}^{-i}+o(\tilde{r}^{-h})\right).
\end{equation*}
Again, for large values of $\tilde{r}$ we can neglect the terms beyond the first in the first sum and thus obtain, renaming the parameters of the expansion (which remain constant along $\gamma$)
\begin{equation*}
\tilde{\theta}\simeq\sum_{i=0}^{h}\theta_i\tilde{r}^{-i-w-q}=\sum_{i=w+q}^{w+q+h}\tilde{\theta}_i\tilde{r}^{-i}.
\end{equation*} 
We can eventually state the peeling property for the field $\theta_{A...HK'...Q'}$:
\begin{equation}
\label{eqn:127}
\tilde{\theta}=\sum_{i=w+q}^{w+q+h}\tilde{\theta}_i\tilde{r}^{-i}+o(\tilde{r}^{-w-q-h}),
\end{equation}
where $\tilde{\theta}_i$ is constant along $\gamma$. We note that the leading term in the expansion \eqref{eqn:127} is a multiple of $1/\tilde{r}^{w+q}$.
\section{Applications to Massless Fields}
\label{sect:5.4}
Consider a massless field of arbitrary spin $n/2$, described by a totally symmetric spinor field $\tilde{\phi}_{AB...L}$ with $n$ indices, of weight $w=-1$, so that the massless field equation is conformally invariant. Suppose that $\phi_{AB...L}$ is $
C^{0}$ at $p$. Then if we consider the various components $\tilde{\phi}_0:=\tilde{\phi}_{00...0}$, $\tilde{\phi}_1:=\tilde{\phi}_{10...0}$,...,$\tilde{\phi}_n:=\tilde{\phi}_{11...1}$ their behaviour is the following:
\begin{align*}
&\tilde{\phi}_0=\tilde{\phi}^{(0)}_0\tilde{r}^{-1-n}+o(\tilde{r}^{-1-n}),\\
&\tilde{\phi}_1=\tilde{\phi}^{(0)}_1\tilde{r}^{-n}+o(\tilde{r}^{-n}),
\\
&.\\
&.\\
&.\\
&\tilde{\phi}_n=\tilde{\phi}^{(0)}_n\tilde{r}^{-1}+o(\tilde{r}^{-1}),
\end{align*}
where $\tilde{\phi}^{(0)}_{i}$ are constant on $\gamma$.\\
Consider now electromagnetism. In section \ref{sect:2.7} we have seen that, starting from $\tilde{\varphi}_{AB}$, we could build three scalars, $\tilde{\varphi}_0=\tilde{\varphi}_{AB}\tilde{o}^A\tilde{o}^B$, $\tilde{\varphi}_1=\tilde{\varphi}_{AB}\tilde{o}^A\tilde{\iota}^B$ and $\tilde{\varphi}_2=\tilde{\varphi}_{AB}\tilde{\iota}^A\tilde{\iota}^B$. Using \eqref{eqn:127}  we get that each of this scalars has the following expansion in decreasing powers of $\tilde{r}$:
\begin{align*}
&\tilde{\varphi}_0\simeq \tilde{\varphi}_0^{(0)} \tilde{r}^{-3},\\
&\tilde{\varphi}_1\simeq\tilde{\varphi}_1^{(0)}\tilde{r}^{-2}+\tilde{\varphi}_1^{(1)}\tilde{r}^{-3},\\
&\tilde{\varphi}_2\simeq\tilde{\varphi}_2^{(0)}\tilde{r}^{-1}+\tilde{\varphi}_2^{(1)}\tilde{r}^{-2}+\tilde{\varphi}_2^{(2)}\tilde{r}^{-3}.
\end{align*}
Furthermore, using \eqref{eqn:128} we can write
\begin{equation}
\label{eqn:131}
\tilde{\varphi}_{AB}=\frac{\tilde{\varphi}_2^{(0)}\tilde{o}_A\tilde{o}_B}{\tilde{r}}+\frac{\tilde{\varphi}_2^{(1)}\tilde{o}_A\tilde{o}_B-2\tilde{\varphi}_1^{(0)}\tilde{o}_{(A}\tilde{\iota}_{B)}}{\tilde{r}^2}+o(\tilde{r}^{-2}),
\end{equation}
which schematically is 
\begin{equation*}
\tilde{\varphi}_{AB}=\frac{[N]_{AB}}{\tilde{r}}+\frac{[I]_{AB}}{\tilde{r}^2}+o(\tilde{r}^{-2}).
\end{equation*}
The leading term proportional to $\tilde{r}^{-1}$ is the radiation part of the electromagnetic field. It is of type N or null, according to the Petrov classification, and is the component of $\tilde{\varphi}_{AB}$ totally contracted with $\tilde{\iota}$.\\
In the case of gravity we have, following the same procedure of \eqref{eqn:128}, 
\begin{equation*}
\tilde{\Psi}_{ABCD}=\tilde{\Psi}_{0}\tilde{\iota}_A\tilde{\iota}_B\tilde{\iota}_C\tilde{\iota}_D-3!\tilde{\Psi}_1\tilde{o}_{(A}\tilde{\iota}_B\tilde{\iota}_C\tilde{\iota}_{D)}+3!\tilde{\Psi}_2\tilde{o}_{(A}\tilde{o}_B\tilde{\iota}_C\tilde{\iota}_{D)}
\end{equation*}
\begin{equation}
\label{eqn:129}
-3!\tilde{\Psi}_3\tilde{o}_{(A}\tilde{o}_B\tilde{o}_C\tilde{\iota}_{D)}+\tilde{\Psi}_4\tilde{o}_A\tilde{o}_B\tilde{o}_C\tilde{o}_D.
\end{equation}
Due to the peeling property \eqref{eqn:127} we have the following expansions:
\begin{align*}
&\tilde{\Psi}_0\simeq\frac{\tilde{\Psi}_0^{(0)}}{\tilde{r}^5},\\
&\tilde{\Psi}_1\simeq\frac{\tilde{\Psi}_1^{(0)}}{\tilde{r}^4}+\frac{\tilde{\Psi}_1^{(1)}}{\tilde{r}^5},\\
&\tilde{\Psi}_2\simeq\frac{\tilde{\Psi}_2^{(0)}}{\tilde{r}^3}+\frac{\tilde{\Psi}_2^{(1)}}{\tilde{r}^4}+\frac{\tilde{\Psi}_2^{(2)}}{\tilde{r}^5},\\
&\tilde{\Psi}_3\simeq\frac{\tilde{\Psi}_3^{(0)}}{\tilde{r}^2}+\frac{\tilde{\Psi}_3^{(1)}}{\tilde{r}^3}+\frac{\tilde{\Psi}_3^{(2)}}{\tilde{r}^4}+\frac{\tilde{\Psi}_3^{(3)}}{\tilde{r}^5},\\
&\tilde{\Psi}_4\simeq\frac{\tilde{\Psi}_4^{(0)}}{\tilde{r}}+\frac{\tilde{\Psi}_4^{(1)}}{\tilde{r}^2}+\frac{\tilde{\Psi}_4^{(2)}}{\tilde{r}^3}+\frac{\tilde{\Psi}_4^{(3)}}{\tilde{r}^4}+\frac{\tilde{\Psi}_4^{(4)}}{\tilde{r}^5}.
\end{align*}\\ \\ \\
Hence we can express \eqref{eqn:129} as 
\begin{equation*}
\tilde{\Psi}_{ABCD}\simeq\frac{\tilde{\Psi}_4^{(0)}\tilde{o}_A\tilde{o}_B\tilde{o}_C\tilde{o}_D}{\tilde{r}}+\frac{\tilde{\Psi}_4^{(1)}\tilde{o}_A\tilde{o}_B\tilde{o}_C\tilde{o}_D-3!\tilde{\Psi_3^{(0)}}\tilde{o}_{(A}\tilde{o}_B\tilde{o}_C\tilde{\iota}_{D)}}{\tilde{r}^2}
\end{equation*}
\begin{equation*}
+\frac{\tilde{\Psi}_4^{(2)}\tilde{o}_A\tilde{o}_B\tilde{o}_C\tilde{o}_D-3!\tilde{\Psi}_3^{(1)}\tilde{o}_{(A}\tilde{o}_B\tilde{o}_C\tilde{\iota}_{D)}+3!\tilde{\Psi}_2^{(0)}\tilde{o}_{(A}\tilde{o}_B\tilde{\iota}_C\tilde{\iota}_{D)}}{\tilde{r}^3}
\end{equation*}
\begin{equation}
\label{eqn:132}
+\frac{\tilde{\Psi}_4^{(3)}\tilde{o}_A\tilde{o}_B\tilde{o}_C\tilde{o}_D-3!\tilde{\Psi}_3^{(2)}\tilde{o}_{(A}\tilde{o}_B\tilde{o}_C\tilde{\iota}_{D)}+3!\tilde{\Psi}_2^{(1)}\tilde{o}_{(A}\tilde{o}_B\tilde{\iota}_C\tilde{\iota}_{D)}-3!\tilde{\Psi}_1^{(0)}\tilde{o}_{(A}\tilde{\iota}_B\tilde{\iota}_C\tilde{\iota}_{D)}}{\tilde{r}^4},
\end{equation}
which can be written schematically as 
\begin{equation}
\label{eqn:130}
\tilde{\Psi}_{ABCD}=\frac{[N]_{ABCD}}{\tilde{r}}+\frac{[III]_{ABCD}}{\tilde{r}^{2}}+\frac{[II]_{ABCD}}{\tilde{r}^3}+\frac{[I]_{ABCD}}{\tilde{r}^4}+o(\tilde{r}^{-4}).
\end{equation}
\begin{figure}[t]
\begin{center}
\includegraphics[scale=0.5]{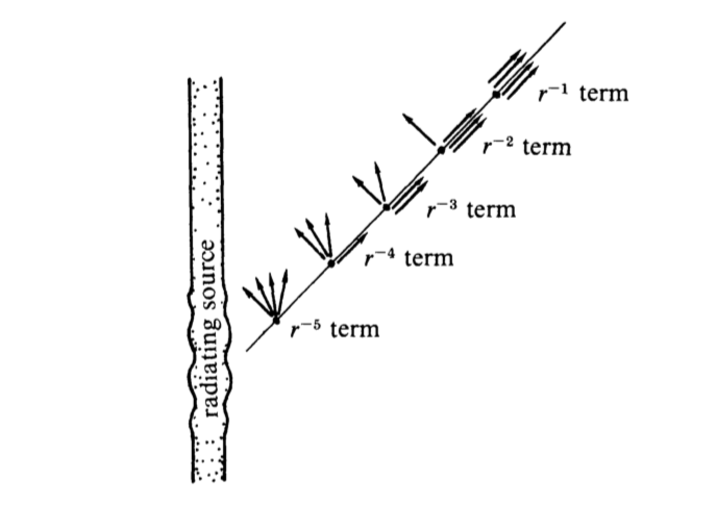}
\caption{The Sachs peeling property, expressed by equation \eqref{eqn:130}, illustrating the multiplicity of the radial PND of the Weyl curvature for the various terms in the expansion in negative powers of $\tilde{r}$.}
\label{fig:2.8}
\end{center}
\end{figure}
In particular, $\tilde{\Psi}_4^{(0)}$, which may be thought of as describing the gravitational radiation field, can be identified with that component of $\tilde{\psi}_{ABCD}$ on $\mathscr{I}$ which is  totally contracted with $\tilde{\iota}$. Note that the part of the curvature which has no relation to the null direction of the outgoing geodesic goes as $1/\tilde{r}^5$, and hence to that order the curvature is not related to the geodesic.\\
This argument can be carried out more generally. In fact, we can always decompose a $n/2$ spin massless field $\tilde{\phi}_{AB...L}$ as
\begin{equation*}
\tilde{\phi}_{AB...L}=\sum_{i=0}^{n-1}\sum_{k=0}^ic_{ik}\tilde{\phi}_{n-k}^{(i-k)}\underbrace{\tilde{o}_A...\tilde{o}_B}_{n-k}\underbrace{\tilde{\iota}_C...\tilde{\iota}_L}_{k}\tilde{r}^{-(i+1)}+o(\tilde{r}^{-n})=\sum_{i=0}^{n-1}\tilde{\phi}_i{}_{AB...L}\tilde{r}^{-(i+1)}+o(\tilde{r}^{-n}),
\end{equation*}
where
\begin{equation*}
\tilde{\phi}_i{}_{AB...L}\equiv\sum_{k=0}^ic_{ik}\tilde{\phi}_{n-k}^{(i-k)}\underbrace{\tilde{o}_A...\tilde{o}_B}_{n-k}\underbrace{\tilde{\iota}_C...\tilde{\iota}_L}_{k}.
\end{equation*}
This is the generalization of equations \eqref{eqn:131} and \eqref{eqn:132} to an arbitrary spin field.\\
Consider now the transvection of $\tilde{\phi}_i{}_{AB...L}$ with $i+1$ $\tilde{o}$'s:
\begin{equation}
\label{eqn:133}
\tilde{\phi}_i{}_{AB...C}\underbrace{{}_{D...L}}_{i+1}\underbrace{\tilde{o}^D...\tilde{o}^L}_{i+1}=0.
\end{equation}
This transvection always vanishes because in $\tilde{\phi}_i{}_{AB...L}$ there are at least $n-i$ $\tilde{o}$'s and hence in \eqref{eqn:133} there must be at least one term of the form $\tilde{o}_A\tilde{o}^A=0$. From equation \eqref{eqn:133} and proposition \ref{prop1} we see that $\tilde{\phi}_i{}_{AB...L}$, and hence the term of $\tilde{\phi}_{AB...L}$ that behaves like $\tilde{r}^{-(i+1)}$, has always at least $n-i$ principal null directions pointing along the direction of $\gamma$, i.e. along $\tilde{l}^a$.  In particular the $\tilde{r}^{-1}$ part, what we call the radiation field, is always null.
\chapter{Bondi-Metzner-Sachs Group}
\label{chap:6}
\begin{center}
\begin{large}
\textbf{Abstract}
\end{large}
\end{center}
Minkowski space-time has an interesting and useful group of isometries. But for a general space-time, the isometry group is simply the identity and hence provides no significant informations. Yet symmetry groups have important role to play in physics; in particular, the Poincar\'e group, describing the isometries of Minkowski space-time plays a role in the standard definitions of energy-momentum and angular-momentum. For this reason alone it would seem to be important to look for a generalization of the concept of isometry group that can apply in a useful way to suitable curved space-times. The curved space-times that will be taken into account are the ones that suitably approach, at infinity, Minkowski space-time. In particular we will focus on asymptotically flat space-times. In this chapter the concept of \textit{asymptotic symmetry group} of those space-times will be studied. In the first two sections we derive the asymptotic group, which is referref to as \lq BMS\rq, by Bondi, Metzner and Sachs, following the classical approach which is basically due to \cite{Bondi62,Sachs62,Sachs1}. This is essentially the group of transformations between coordinate systems of a certain type in asymptotically flat space-times. In the third section the derivation is made following arguments developed by Penrose, which involve the conformal structure (which we carried out in chapter \ref{chap:3}), and is thus more geometrical and fundamental \citep{NewPen66,Pen72,Penrin2,Stew}. In the remaining sections we will discuss the properties of the BMS group, such as its group structure, its algebra and the possibility to obtain as its subgroup the Poincar\'e group, as we may expect.
\section{Introduction}
\label{sect:6.1}
The importance of the concept of energy within a physical theory, if introduced correctly, arises from 
the fact that it is a conserved quantity in time and hence a very useful tool. 
Thus, in general relativity one of the most interesting questions is related to the meaning of gravitational energy.\\
Starting from any vector $J^a$ that satisfies a local conservation equation, that can be put in the form
\begin{equation}
\label{eqn:0}
\nabla_aJ^a=0,
\end{equation}
one can deduce an integral conservation law which states that the integral over the boundary 
$\partial\mathscr{D}$ of some compact region $\mathscr{D}$ of the flux of the vector $J^a$ across this 
boundary necessarily vanishes. In fact, using Gauss' theorem we have
\begin{equation}
\label{eqn:0.1}
\int_{\partial\mathscr{D}} J^ad\sigma_a=\int_{\mathscr{D}}\nabla_aJ^adv=0.
\end{equation}
Now we know that in General Relativity the energy-momentum tensor $T_{ab}$  satisfies the local conservation law
\begin{equation}
\label{eqn:1}
\nabla_aT^{ab}=0,
\end{equation}
which follows directly from the Einstein field equations. However from \eqref{eqn:1} we cannot deduce any 
conservation law. This is because in this case the geometric object to integrate over a 4-volume (as on the 
right-hand side of \eqref{eqn:0.1}) would be a vector and we can not take the sum of two vectors at different 
points of a manifold. This picture is ameliorated if space-time possesses symmetries, 
i.e. Killing vectors. If $K^a$ is a Killing vector,
\begin{equation*}
\nabla_{(a}K_{b)}=0,
\end{equation*}
we may build the vector 
\begin{equation*}
P^a=T^{ab}K_b,
\end{equation*}
that satisfies \eqref{eqn:0}, since
\begin{equation*}
\nabla_a P^a=\nabla_aT^{ab}K_b+T^{ab}\nabla_aK_b=0.
\end{equation*}
The second term vanishes because $T^{ab}$ is symmetric and so $T^{ab}\nabla_aK_b=T^{ab}\nabla_{(a}K_{b)}=0$.
Therefore the presence of Killing vectors for the metric leads to an integral conservation law. In  
flat Minkowski space-time we know that there are 10 Killing vectors:
\begin{equation*}
\textbf{L}_{\alpha}=\frac{\partial}{\partial x^{\alpha}},\hspace{2.2cm}(\alpha=0,1,2,3)
\end{equation*}
\begin{equation*}
\textbf{M}_{\alpha\beta}=e_{\alpha}x^{\alpha}\frac{\partial}{\partial x^{\beta}}
-e_{\beta}x^{\beta}\frac{\partial}{\partial x^{\alpha}},\hspace{2cm}
(\mathrm{no}\hspace{1.3mm}\mathrm{summation};\alpha,\beta =0,1,2,3)
\end{equation*}
where $e_{\alpha}$ is +1 if $\alpha=0$ and -1 if $\alpha=1,2,3$. The first four generate space-time translations 
and the second six \lq rotations\rq\hspace{0.1mm} in space-time (these are just the usual ten generators of 
the inhomogeneous Lorentz group). One may use them to define ten vectors $P^a_{\alpha}$ and 
$P^a_{\alpha\beta}$ which will obey \eqref{eqn:0}. We can think of $\textbf{P}_0$ as representing the flow of 
energy, and $\textbf{P}_1$, $\textbf{P}_2$ and $\textbf{P}_3$ as the flow of the three components of linear 
momentum. The $\textbf{P}_{\alpha\beta}$ can be interpreted as the flow of angular momentum. If the metric is 
not flat there will not, in general, be any Killing vectors. It is worth noting that the diffeomorphism group 
has, for historical reasons, frequently been invoked as a possible substitute for the Poincar\'e group for a 
general space-time. However, it is not really useful in this context, being much too large and preserving only 
the differentiable structure of the space-time manifold rather than any of its physically more important property.
\\ However, one could introduce in a suitable neighbourhood of a point $q$ normal coordinates $\{x^a\}$ so that 
the components $g_{ab}$ of the metric are $e_a\delta_{ab}$ (no summation) and that the components of 
$\Gamma^a{}_{bc}$ are zero at $q$. One may take a neighbourhood $\mathscr{D}$ of $q$ in which $g_{ab}$ and 
$\Gamma^a{}_{bc}$ differ from their values at $q$ by an arbitrary small amount. Then 
$\nabla_{(a}L_{\alpha\hspace{1mm}b)}$ and $\nabla_{(a}M_{\alpha\beta\hspace{1mm}b)}$ will not exactly vanish in 
$\mathscr{D}$, but will in this neighbourhood differ from zero by an arbitrary small amount. Thus\begin{equation*}
\int_{\partial\mathscr{D}}P^b_{\alpha}d\sigma_b\hspace{1cm}\mathrm{and}\hspace{1cm}
\int_{\partial\mathscr{D}}P^b_{\alpha\beta}d\sigma_b
\end{equation*}
will still be zero in the first approximation.
Hence the best we can get from \eqref{eqn:1} is an approximate integral conservation law, if we integrate 
over a region whose typical dimensions are very small compared with the radii of curvature involved in 
$R_{abcd}$. We can interpret this by thinking the space-time curvature as giving a non-local contribution 
to the energy-momentum, that has to be considered in order to obtain a correct integral conservation law.\\
From the above discussion we deduce that no exact symmetries can be found for a generic space-time. 
However, if we turn to the concept of asymptotic symmetries and we apply it to asymptotically flat space-times, 
we will see that the picture is not so bad and that we can still talk about the Poincar\'e group. The basic 
idea, developed in the remainder of the chapter, is that, since we are taking into account asymptotically 
flat space-times, we may expect that by going to \lq infinity\rq\hspace{0.1mm} one might acquire the Killing 
vectors necessary for stating integral conservation laws.  
\section{Bondi-Sachs Coordinates and Boundary Conditions} 
\label{sect:6.2}
Consider the Minkowski metric
\begin{equation*}
g=\eta_{ab}dx^a\otimes dx^b=dt\otimes dt-dx\otimes dx-dy\otimes dy-dz\otimes dz.
\end{equation*}
We introduce new coordinates 
\begin{equation}
\label{eqn:135}
u=t-r,\hspace{1cm}r\cos\theta=z,\hspace{1cm}r\sin\theta e^{i\phi}=x+iy,
\end{equation}
in terms of which the Minkowski metric takes the form
\begin{equation}
\label{eqn:137}
g=du\otimes du+du\otimes dr+dr\otimes du-r^2(d\theta\otimes d\theta+\sin^2\theta d\phi\otimes d\phi).
\end{equation}
which can also be written as
\begin{equation}
\label{eqn:139}
g=du\otimes du+du\otimes dr+dr\otimes du-r^2q_{AB}dx^A\otimes dx^B, 
\end{equation}
where 
\begin{equation*}
q_{AB}=\left(\begin{matrix} 1& 0\\ 0& \sin^2\theta\end{matrix}\right),\hspace{0.5cm}A,B,...=2,3.
\end{equation*}
Note that $q_{AB}$ represents the metric on the unit sphere. The coordinate $u$ is called \textit{retarded time}.\\
We proceed to the interpretation of the coordinates \eqref{eqn:135}. The hypersurfaces given by the equation $u=\mathrm{const}$ are null hypersurfaces, since their normal co-vector $k_a=\nabla_a u$ is null. They are everywhere tangent to the light-cone. Note that it is a peculiar property of null hypersurfaces that their normal direction is also tangent to the hypersurface. The coordinate $r$ is such that the area of the surface element $u=\mathrm{const}$, $r=\mathrm{const}$ is $r^2\sin\theta d\theta d\phi$. Define a ray as the line with tangent $k^a=g^{ab}\nabla_bu$. Then the scalars $\theta$ and $\phi$ are constant along each ray.\\Now we would like to introduce for a generic metric tensor a set of coordinates $(u,r,x^A)$ which has the same properties as the ones of \eqref{eqn:135}. These coordinates are known as \textit{Bondi-Sachs coordinates} \citep{Bondi62,Sachs1,Sachs62}. The hypersurfaces $u=\mathrm{const}$ are null, i.e. the normal co-vector $k_a=\nabla_au$ satisfies $g^{ab}(\nabla_au)(\nabla_bu)=0$, so that $g^{uu}=0$, and the corresponding future-pointing vector $k^a=g^{ab}\nabla_bu$ is tangent to the null rays. Two angular coordinates $x^A$, with $A,B,...=2,3$, are constant along the null rays, i.e. $k^a\nabla_ax^A=g^{ab}(\nabla_au)\nabla_bx^A=0$, so that $g^{uA}=0$. The coordinate $r$, which varies along the null rays, is chosen to be an areal coordinate such that $\mathrm{det}[g_{AB}]=r^4\mathrm{det}[q_{AB}]$, where $q_{AB}$ is the unit sphere metric associated with the angular coordinates $x^A$, e.g. $q_{AB}=\mathrm{diag}(1,\sin^2\theta)$ for standard spherical coordinates $x^A=(\theta,\phi)$. The contravariant components $g^{ab}$ and covariant components $g_{ab}$ are related by $g^{ac}g_{cb}=\delta^a_b$, which in particular implies $g_{rr}=0$ (from $\delta_{ur}=0$) and $g_{rA}=0$ (from $\delta_{uA}=0$). See Figure \ref{fig:3.1}.
\begin{figure}[h]
\begin{center}
\includegraphics[scale=0.49]{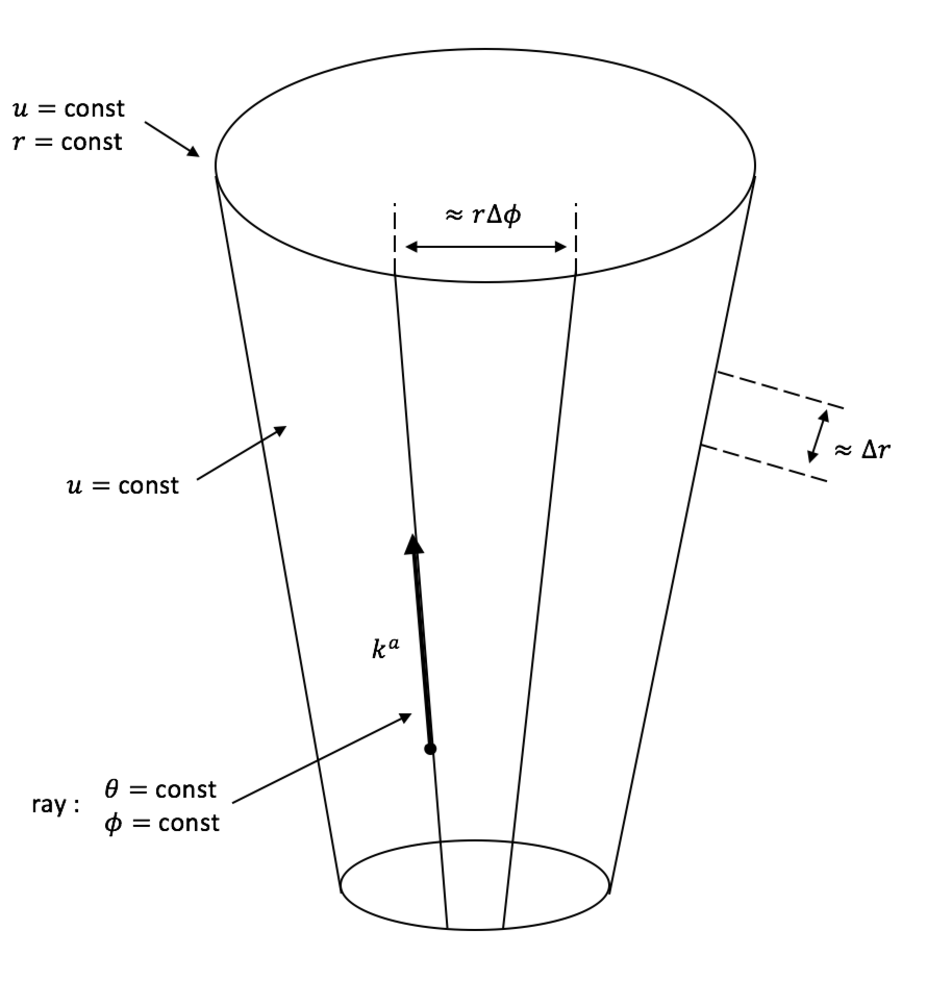}
\caption{The Bondi-Sachs coordinate system. The coordinates $u$, $r$, and $\phi$ and the vector $k^a$ are shown in the hypersurface $\theta=\mathrm{const}$.}
\label{fig:3.1}
\end{center}
\end{figure}
\\It can be shown \citep{Bondi62} that the metric takes the form 
\begin{equation}
\label{eqn:136}
g=g_{ab}dx^a\otimes dx^b=e^{2\beta}\frac{V}{r}du\otimes du+e^{2\beta}(du\otimes dr+dr\otimes du)
\end{equation}
\begin{equation*}
+g_{AB}(dx^A-U^Adu)\otimes(dx^B-U^Bdu),
\end{equation*}
where
\begin{equation}
\label{eqn:168}
g_{AB}=r^2h_{AB},\hspace{1cm}\mathrm{det}[h_{AB}]=h(x^A).
\end{equation}
Using Jacobi's formula for the derivative of a determinant for a generic matrix $g_{\mu\nu}$,
\begin{equation*}
 \partial_{\rho}\mathrm{det}[g_{\mu\nu}]=\partial _{\rho}g=gg^{\mu\nu}\partial_{\rho}g_{\mu\nu},
\end{equation*}
we have from the second of \eqref{eqn:168}
\begin{equation}
\label{eqn:169}
\partial_u h=0\Rightarrow h^{AB}\partial_uh_{AB}=0,\hspace{1cm}\partial_r h=0\Rightarrow h^{AB}\partial_rh_{AB}=0.
\end{equation}
We also have
\begin{equation*}
g^{ur}=e^{-2\beta},\hspace{0.5cm}g^{rr}=-\frac{V}{r}e^{-2\beta},\hspace{0.5cm}g^{rA}
=U^Ae^{-2\beta},\hspace{0.5cm}g^{AB}=-\frac{1}{r^2}h^{AB}.
\end{equation*}
A suitable representation for $h_{AB}$ is the following:\\
\begin{equation}
\label{eqn:141}
h_{AB}=\left(\begin{matrix}\cosh2\delta e^{2\gamma} & \sin\theta\sinh2\delta \\ \\ \\\sin\theta\sinh2\delta 
& \sin^2\theta\cosh2\delta e^{-2\gamma}\end{matrix}\right)\Rightarrow \mathrm{det}[h_{AB}]=\sin^2\theta.
\end{equation}
Here $V$, $\beta$, $U^A$, $\gamma$ and $\delta$ are any six functions of the coordinates. The form 
\eqref{eqn:136} holds if and only if $(u,r,\theta,\phi)$ have the properties stated above. Note that this 
form differs from the original form of Sachs \cite{Sachs62} by the transformation $\gamma\rightarrow(\gamma+\delta)/2$ 
and $\delta\rightarrow(\gamma-\delta)/2$. The original axisymmetric Bondi metric \cite{Bondi62} with rotational 
symmetry in the $\phi$-direction was characterized by $\delta=U^{\phi}=0$ and $\gamma=\gamma(u,r,\theta)$, 
resulting in a metric with reflection symmetry $\phi\rightarrow-\phi$ so that it is not suitable for describing 
an axisymmetric rotating star. \\
The next step is to write down the Einstein vacuum field equations in the above coordinate system in order to find 
the equations that rule the evolution of the six arbitrary functions on which the metric depends. 
As shown in \cite{Sachs62} or \cite{BSF} the Einstein vacuum field equations
\begin{equation*}
G_{ab}=R_{ab}-\frac{1}{2}Rg_{ab}=0,
\end{equation*}
separate into the \textit{Hypersurface equations},
\begin{equation*}
G^u_a=0,
\end{equation*}
and the \textit{Evolution equations},
\begin{equation*}
G_{AB}-\frac{1}{2}g_{AB}g^{CD}G_{CD}=0.
\end{equation*}
The former determines $\beta$ along the null rays ($G^u_r=0$), $U^A$ ($G^u_A=0$) and $V$ ($G^u_u=0$), 
while the latter gives informations about the retarded time derivatives of the two degrees of freedom 
contained in $h_{AB}$. Usually one requires the following conditions:
\begin{enumerate}
\item For any choice of $u$ one can take the limit $r\rightarrow\infty$ along each ray;
\item For some choice of $\theta$ and $\phi$ and the above choice of $u$ the metric \eqref{eqn:136} 
should approach the Minkowski metric \eqref{eqn:137}, i.e. 
\begin{equation}
\label{eqn:138}
\lim_{r\to\infty}\beta=\lim_{r\to\infty}U^A=0,\hspace{0.5cm}\lim_{r\to\infty}\frac{V}{r}
=1,\hspace{0.5cm}\lim_{r\to\infty}h_{AB}=q_{AB}.
\end{equation}
Note that these conditions, as pointed out in \cite{Sachs62}, are rather unsatisfactory from a geometrical 
point of view. They will be completely justified later, using the method of the conformal structure, introduced by Penrose;
\item Over the coordinate ranges $u_0\leq u\leq u_1$, $r_0\leq r\leq\infty$, $0\leq\theta\leq\pi$ 
and $0\leq\phi\leq 2\pi$ all the metric functions can be expanded in series of $r^{-1}$. 
\end{enumerate}
Using the Einstein equations with these assumptions it can be shown \cite{Sachs62,BSF} that the 
following asymptotic behaviours hold:
\begin{subequations}
\label{eqn:145}
\begin{align}
&V=r-2M+O(r^{-1}),\\
&h_{AB}=q_{AB}+\frac{c_{AB}}{r}+O(r^{-2}),\\
&\beta=-\frac{c^{AB}c_{AB}}{32r^2}+O(r^{-3}),\\
&U^A=-\frac{D_{B}c^{AB}}{2r^2}+O(r^{-3}),
\end{align}
\end{subequations}
i.e. the metric \eqref{eqn:136} admits the asymptotic expansion
\begin{eqnarray}
\label{eqn:140}
g&=& du\otimes du+du\otimes dr+dr\otimes du-r^2q_{AB}dx^A\otimes dx^B
\nonumber \\
&-& \frac{2M}{r}du\otimes du-\frac{c^{AB}c_{AB}}{4r^2}(du\otimes dr+dr\otimes du)
\nonumber \\
&-& rc_{AB}dx^A\otimes dx^B-\frac{D_Fc^{F}_{A}}{2}(du\otimes dx^A+dx^A\otimes du)+...
\end{eqnarray}
Here the function $M=M(u,\theta,\phi)$ is called the \textit{mass aspect}, $c_{AB}=c_{AB}(u,\theta,\phi)$ 
represents the $O(r^{-1})$ correction to $h_{AB}$ and $D_A$ is the covariant derivative with respect to the metric 
on the unit $2$-sphere, $q_{AB}$ \citep{Haw17}. Capital letters A, B,... can be raised and lowered with respect 
to $q_{AB}$. In carrying out the $1/r$ expansion of the field equations the covariant derivative $D_A$ corresponding 
to the metric $h_{AB}$ is related to the covariant derivative ${\cal D}_A$ 
corresponding to the unit sphere metric $q_{AB}$ by 
\begin{subequations}
\label{eqn:160}
\begin{equation}
D_AV^B={\cal D}_AV^B+C^B{}_{AE}V^E,
\end{equation}
where
\begin{equation}
C^B{}_{AE}=\frac{1}{2r}q^{BF}\Bigr({\cal D}_Ac_{FE}+{\cal D}_Ec_{FA}-{\cal D}_Fc_{AE}\Bigr)
+O(r^{-2}).
\end{equation}
\end{subequations}
This property will be useful later.
\begin{defn}$\\ $
\label{defn:AF1}
A space-time $(\mathscr{M},g)$ is \textit{asymptotically flat} if the metric tensor $g$ and its components 
satisfy the conditions \eqref{eqn:145} and \eqref{eqn:140}. These conditions are often 
referred to \textit{boundary conditions}.
\end{defn} 
\begin{oss}$\\ $
Note that this definition seems to be completely different from \ref{defn:AS} and \ref{defn:EAS}, given in section \ref{sect:3.3}, that are based on the works of Penrose \citep{Pen63,Pen64,Pen65,Pen67}, in which the conformal technique was first developed. Definition \ref{defn:AF1} is based mainly on the works of \cite{Sachs61,Sachs62,Sachs1,Bondi62}. However the two approaches are completely equivalent, as shown in \cite{New62,NewUn}, since they lead to the same asymptotic properties, using two different ways. It must be pointed out that the conformal method introduced by Penrose represents a \lq natural evolution\rq\hspace{0.1mm} of the previous one, being it more geometrical. It is worth remarking that the peeling property, which was developed in chapter \ref{chap:5} using the Penrose formalism, can be deduced from this approach \citep{Sachs62,NewUn}.
\end{oss}
\section{Bondi-Metzner-Sachs Group}
\label{sect:6.3}
In this section our purpose is to find the coordinate transformations which preserve the asymptotic 
flatness condition. In other words we want to find the asymptotic isometry group of the metric \eqref{eqn:136} and we must demand some conditions to hold in order for the coordinate conventions and boundary conditions 
to remain invariant. It is clear that, from \eqref{eqn:140}, the corresponding changes suffered from the metric 
must therefore obey certain fall-off conditions, i.e.
\begin{equation}
\label{eqn:146}
\delta g_{rr}=0,\hspace{1cm}\delta g_{rA}=0,\hspace{1cm} g^{AB}\delta g_{AB}=0.
\end{equation} 
and 
\begin{subequations}
\label{eqn:181}
\begin{equation}
\label{eqn:147}
\delta g_{uu}=O(r^{-1}),\hspace{1cm}\delta g_{uA}=O(1),
\end{equation} 
\begin{equation}
\label{eqn:180}
\delta g_{ur}=O(r^{-2}),\hspace{1cm}\delta g_{AB}=O(r).
\end{equation} 
\end{subequations}
The third of \eqref{eqn:146} expresses the fact that we \textit{don't} want the angular metric $g_{AB}$ to undergo 
any conformal rescaling under the transformation. However a generalization which includes conformal rescalings 
of $g_{AB}$ can be found in \cite{Barn2010}.\\
We know that the infinitesimal change $\delta g_{ab}$ in the metric tensor is given by the Lie derivative of 
the metric along the $\xi^a$ direction, $\xi^a$ being the generator of the transformation of coordinates:
\begin{equation}
\label{eqn:148}
\delta g_{ab}=-\nabla_a\xi_b-\nabla_b\xi_a.
\end{equation}
Clearly the vector $\xi^a$ obeys \textit{Killing's equation},
\begin{equation*}
\nabla_a\xi_b+\nabla_b\xi_a=0,
\end{equation*}
if and only if the corresponding transformations are isometries. What we want to solve now is an 
\textit{asymptotic Killing's equation}, obtained putting together \eqref{eqn:146} and \eqref{eqn:181} 
with \eqref{eqn:148}. We get from the first of \eqref{eqn:146} 
\begin{equation*}
\nabla_r\xi_r=\partial_r\xi_r-\Gamma^{u}{}_{rr}\xi_u-\Gamma^{r}{}_{rr}\xi_r-\Gamma^{A}{}_{rr}\xi_A=0,
\end{equation*}
and using the Christoffel symbols given in Appendix \ref{C}, we get 
\begin{equation*}
\partial\xi_r=2\partial_r\beta,
\end{equation*}
and hence
\begin{equation}
\label{eqn:143}
\xi_r=f(u,x^A)e^{2\beta},
\end{equation}
where $f$ is a suitably differentiable function of its arguments.\\
From the second of \eqref{eqn:146} we obtain
\begin{equation*}
\nabla_r\xi_A+\nabla_A\xi_r=\partial_r\xi_A+\partial_A\xi_r-2\Gamma^{u}{}_{rA}
\xi_u-2\Gamma^{r}{}_{rA}\xi_r-2\Gamma^{B}{}_{rA}\xi_B=0,
\end{equation*}
and thus, using \eqref{eqn:143} we get 
\begin{equation*}
\partial_r\xi_A-r^2h_{AB}f\left(\partial_rU^B\right)-\frac{2\xi_A}{r}
-\left(\partial_rh_{AC}\right)h^{BC}\xi_B=-\left(\partial_Af\right)e^{2\beta},
\end{equation*}
and after some manipulation
\begin{equation*}
\partial_r\left(\xi_Bg^{BD}+fU^D\right)=-e^{2\beta}g^{AD}\left(\partial_Af\right),
\end{equation*}
which leads to 
\begin{equation*}
\xi_A=-h_{DA}f^Dr^2+fU^Dh_{DA}r^2+r^2h_{DA}\left(\partial_Bf\right)\int_r^{\infty}\frac{e^{2\beta}h^{BD}}{r'^2}dr'
\end{equation*}
\begin{equation}
\label{eqn:144}
=-f_Ar^2+fU_Ar^2+I_Ar^2+O(r),
\end{equation}
where 
\begin{equation*}
I^D(u,r,x^A)=(\partial_Bf)\int_r^{\infty}\frac{e^{2\beta}h^{BD}}{r'^2}dr'=\frac{\partial^Df}{r}+O(r^{-2}),
\end{equation*}
where $f^D$ are suitably differentiable functions of their arguments and the indices A, B etc. 
are raised and lowered with respect to the metric $q_{AB}$.\\ We can solve algebraically 
the third equation in \eqref{eqn:146} to obtain $\xi_u$:
\begin{equation*}
\xi_u=-\frac{e^{2\beta}}{2r}\left(-\partial_a\xi_B+\Gamma^{r}{}_{AB}\xi_r+\Gamma^{C}{}_{AB}\xi_C\right)h^{AB}.
\end{equation*}
Working with Christoffel symbols we get the following expression for $\xi_u$:
\begin{eqnarray}
\label{eqn:149}
\xi_u &=& -\frac{e^{2\beta}r}{4}\partial_D\left(h_{AB}f^D\right)h^{AB}+\frac{e^{2\beta}r}{2}
\left(\partial_Af\right)U^{A}-\frac{e^{2\beta}r}{4}\partial_D\left(h_{AB}I^D\right)h^{AB}
\nonumber \\
&+& e^{2\beta}\frac{V}{r}+r^2h_{AB}(U^Af^B-r^2U^AU^Bf-r^2U^AI^B).
\end{eqnarray}
Now equations \eqref{eqn:181} can be used to give constraints on the arbitrary functions 
$f$ and $f^A$. From the second of \eqref{eqn:180} we get
\begin{equation*}
\nabla_A\xi_B+\nabla_B\xi_A=\partial_A\xi_B+\partial_B\xi_A-2\Gamma^{u}{}_{AB}
\xi_u-2\Gamma^{r}{}_{AB}\xi_r-2\Gamma^C{}_{AB}\xi_C=O(r).
\end{equation*}
Using asymptotic expansions \eqref{eqn:145}, taking the order $r^2$ of the previous equation and putting it equal to zero we get
\begin{equation*}
-\partial_Af_B-\partial_Bf_A+\frac{1}{2}q_{AB}\partial_D\left(q_{CE}f^D\right)q^{CE}+q^{CD}
\left(\partial_Aq_{DB}+\partial_Bq_{DA}-\partial_Dq_{AB}\right)f_C=0,
\end{equation*}
thus
\begin{equation*}
-\partial_Af_B+\gamma^{C}{}_{AB}f_C-\partial_Af_B+\gamma^{C}_{AB}f_C=-\frac{1}{2}q_{AB}\partial_D\left(q_{CE}\right)q^{CE},
\end{equation*}
where $\gamma^A{}_{BC}$ are the Christoffel symbols with respect to the metric on the unit sphere $q_{AB}$. We eventually get
\begin{equation}
\label{eqn:150}
D_Af_B+D_Bf_A=\frac{1}{2}q_{AB}\partial_D\left(q_{CE}f^D\right)q^{CE}.
\end{equation}
and hence
\begin{equation*}
D_Af_B+D_Bf_A=f^D\frac{1}{2}q_{AB}\left(\partial_Dq_{CE}\right)q^{CE}+\left(\partial_Df^D\right)q_{AB}=q_{AB}D_Cf^C.
\end{equation*}
Thus $f^B$ are the conformal Killing vectors of the unit 2-sphere metric $q_{AB}$.\\
From the second of \eqref{eqn:147} we get 
\begin{equation*}
\nabla_u\xi_A+\nabla_A\xi_u=\partial_u\xi_A+\partial_A\xi_u-2\Gamma^{u}{}_{Au}
\xi_u-2\Gamma^{r}{}_{Au}\xi_r-2\Gamma^{B}{}_{Au}\xi_B=O(1).
\end{equation*}
Putting the order $r^2$ of this equation equal to zero we obtain
\begin{equation}
\label{eqn:151}
\partial_uf_A=0. 
\end{equation}
From the first of \eqref{eqn:180} we get
\begin{equation*}
\nabla_u\xi_r+\nabla_r\xi_u=\partial_u\xi_r+\partial_r\xi_u-2\Gamma^{u}_{ur}
\xi_u-2\Gamma^{r}_{ur}\xi_r-2\Gamma^{A}_{ur}	\xi_A=O(r^{-2}).
\end{equation*}
Putting the term of order $r^0$ of the previous equation equal to zero we get 
\begin{equation}
\label{eqn:152}
\partial_uf=\frac{1}{4}\partial_D\left(q_{AB}f^D\right)q^{AB}.
\end{equation}
Putting all the results together we have
\begin{subequations}
\label{eqn:153}
\begin{equation}
\partial_uf_A=0\Rightarrow f_A=f_A(x^B),\end{equation}
\begin{align}
D_Af_B+D_Bf_A=2q_{AB}\partial_uf\Rightarrow \left\{\begin{matrix} \partial^2_uf=0,&\\
\partial_uf=\frac{1}{2}D_Af^A.&\end{matrix}\right.
\end{align}
\end{subequations}
We get for $f$ the following expansion
\begin{equation}
f=\alpha+\frac{u}{2}D_Af^A,
\end{equation}
where $\alpha$ is a suitably differentiable function of $x^A$.\\
Consider now
\begin{equation*}
\xi^a=g^{ab}\xi_b,
\end{equation*}
from which we get 
\begin{eqnarray}
\label{eqn:154}
\xi^u&=&f=\alpha+\frac{u}{2}D_Af^A,\\
\label{eqn:155}
\xi^A&=&f^A-I^A=f^A-\frac{D^A\alpha}{r}-u\frac{D^AD_Cf^C}{2r}+O(r^{-2}),\\
\label{eqn:161}
\xi^r &=& -\frac{r}{2}\left[D_A\xi^A-U^A\partial_Af\right]=-\frac{r}{2}D_{C}\xi^C+O(r^{-1})
\nonumber \\
&=& -\frac{r}{2}D_Cf^C+\frac{D_CD^C\alpha}{2}+u\frac{D_CD^CD_Af^A}{4}+O(r^{-1}).
\end{eqnarray}
The second equality in \eqref{eqn:161} follows from \eqref{eqn:160} and from 
\begin{equation*}
q^{AB}c_{AB}=0,
\end{equation*}
which follows from satisfying at order $r^{-2}$ the second of \eqref{eqn:169} in the form 
\begin{equation*}
0=h^{AB}\partial_rh_{AB}=[q^{AB}-\frac{c^{AB}}{r^2}+O(r^{-3})][-\frac{c_{AB}}{r^2}+O(r^{-3})].
\end{equation*}
As $r\rightarrow\infty$  \eqref{eqn:154} and \eqref{eqn:155} become, respectively
\begin{align}
\label{eqn:156}
&\xi^u=\alpha+\frac{u}{2}D_Af^A,\\
\label{eqn:157}
&\xi^A=f^A.
\end{align}
Finally we can state that the asymptotic Killing vector is of the form \begin{equation}
\label{eqn:158}
\xi=\xi^a\partial_a=\left[\alpha(x^C)+\frac{u}{2}D_Af^A(x^C)\right]\partial_u+f^A(x^C)\partial_A,
\end{equation}
where $\alpha$ is arbitrary and $f^A$ are the conformal Killing vectors of the metric of the unit sphere. In order to fix ideas, set $x^A=(\theta,\phi)$.
It is clear then that $\theta$ and $\phi$ undergo a finite conformal transformation, i.e. 
\begin{subequations}
\label{eqn:159}
\begin{equation}
\label{eqn:182}
\theta\rightarrow\theta'=F(\theta,\phi),
\end{equation}
\begin{equation}
\label{eqn:183}
\phi\rightarrow\phi'=G(\theta,\phi),
\end{equation}
for which
\begin{equation*}
d\theta'^2+\sin^2\theta' d\phi'^2=K^2(\theta,\phi)(d\theta^2+\sin^2\theta d\phi^2),
\end{equation*}
and hence
\begin{equation}
\label{eqn:171}
K^4=J^{2}(\theta,\phi;\theta',\phi')\sin^2\theta\left(\sin\theta'\right)^{-2},\hspace{0.5cm}J
=\mathrm{det}\left(\begin{matrix}\frac{\partial F}{\partial\theta} & \frac{\partial F}
{\partial\phi}\\\frac{\partial G}{\partial\theta} & \frac{\partial G}{\partial\phi}
\end{matrix}\right).
\end{equation}
By definition of conformal Killing vector we also have
\begin{equation}
\label{eqn:214}
K^2=e^{D_Af^A}.
\end{equation}
The finite form of the transformation of the coordinate $u$ is given, as can be easily checked, by 
\begin{equation}
\label{eqn:184}
u\rightarrow u'=K[u+\alpha(\theta,\phi)].
\end{equation}
\end{subequations}
\begin{defn}$\\ $
\label{defn:BMS}
The transformations \eqref{eqn:159} are called \textit{BMS (Bondi-Metzner-Sachs) transformations}, and 
are the set of diffeomorphisms which leave the asymptotic form of the metric of 
an asymptotically flat space-time unchanged.
\end{defn}
The BMS transformations form a group. In fact, as is known, the conformal transformations form a group, so that 
$F$, $G$, and $K$ have all the necessary properties. Thus, one must only check the fact that if one 
carries out two transformation \eqref{eqn:184} the corresponding $\alpha$ for the product is again 
a suitably differentiable function of $\theta$ and $\phi$. If
\begin{equation*}
u_1\rightarrow u_2=K_{12}[u_1+\alpha_{12}]
\end{equation*} 
and 
\begin{equation*}
u_2\rightarrow u_3=K_{23}[u_2+\alpha_{23}]
\end{equation*}
then we have 
\begin{equation*}
u_1\rightarrow u_3=K_{13}[u_1+\alpha_{13}],\hspace{0.5cm}K_{13}=K_{12}K_{23},\hspace{0.5cm}
\alpha_{13}=\alpha_{12}+\frac{\alpha_{23}}{K_{12}}.
\end{equation*}
Since $\alpha_{13}$ is a suitably differentiable function it follows that 
\begin{prop}$\\ $
The BMS transformations form a group, denoted with $\mathscr{B}$.
\end{prop}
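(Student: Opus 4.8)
The plan is to verify the three group axioms — closure, associativity, and existence of inverses and identity — for the set $\mathscr{B}$ of BMS transformations, exploiting the fact (already noted in the excerpt) that the conformal transformations of the unit $2$-sphere, i.e. the pairs $(F,G)$ together with the induced conformal factor $K$, form a group. Since the $(\theta,\phi)$-part of a BMS transformation \eqref{eqn:159} is exactly such a conformal transformation, the only genuinely new thing to check is that the supertranslation part, encoded by the function $\alpha$ together with the composition law for $u$ in \eqref{eqn:184}, behaves consistently. So the bulk of the argument is the computation of how $\alpha$ composes, which the excerpt has in fact already carried out: composing $u_1 \to u_2 = K_{12}[u_1+\alpha_{12}]$ with $u_2\to u_3 = K_{23}[u_2+\alpha_{23}]$ gives $u_1\to u_3 = K_{13}[u_1+\alpha_{13}]$ with $K_{13}=K_{12}K_{23}$ and $\alpha_{13}=\alpha_{12}+\alpha_{23}/K_{12}$.

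First I would observe that $K_{12}K_{23}$ is again a conformal factor of a composite sphere conformal map (this is part of the group property of conformal transformations, and $K$ is built from the Jacobian as in \eqref{eqn:171}), and in particular $K_{13}$ is a smooth, everywhere positive function of $(\theta,\phi)$. Next I would note that $K_{12}$ is smooth and nowhere zero, so $1/K_{12}$ is smooth; since $\alpha_{12}$ and $\alpha_{23}$ are suitably differentiable functions of the sphere coordinates (with $\alpha_{23}$ composed with the sphere map $(F_{12},G_{12})$, which is again smooth), the combination $\alpha_{13}=\alpha_{12}+\alpha_{23}/K_{12}$ is again a suitably differentiable function of $(\theta,\phi)$. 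This shows the composite of two BMS transformations has exactly the form \eqref{eqn:159}, establishing closure.

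Then I would dispatch the remaining axioms quickly. The identity is $(F,G) = (\theta,\phi)$, $K=1$, $\alpha = 0$. For associativity, one can either appeal to the fact that these are diffeomorphisms of (a neighbourhood of) $\mathscr{I}$ and composition of maps is associative, or check it directly from the composition law: iterating the rule for $\alpha$ shows both bracketings give $K_{14} = K_{12}K_{23}K_{34}$ and $\alpha_{14} = \alpha_{12} + \alpha_{23}/K_{12} + \alpha_{34}/(K_{12}K_{23})$. For the inverse, given $(F,G,K,\alpha)$ one takes the inverse sphere conformal map $(F^{-1},G^{-1})$ with its conformal factor $\tilde K = 1/(K\circ(F^{-1},G^{-1}))$, and solves $\alpha_{13}=0$ with $\alpha_{12}=\alpha$, $K_{12}=K$ to get the supertranslation $\tilde\alpha = -K\,(\alpha\circ \text{something})$ of the inverse; one checks it is suitably differentiable because $K$ is smooth and nowhere vanishing.

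The main obstacle — really the only subtle point — is the verification that $\alpha_{13}$ retains the required differentiability: this hinges on $K_{12}$ being smooth and strictly positive everywhere on the sphere, so that division by it does not introduce singularities, and on the composition $\alpha_{23}\circ(F_{12},G_{12})$ remaining well-behaved. Both follow from properties of the conformal group of $S^2$ (the transformations are global conformal diffeomorphisms, hence smooth with smooth inverses, and $K$, being built from a nonvanishing Jacobian via \eqref{eqn:171}, is smooth and positive). Everything else is bookkeeping with the explicit composition law already displayed before the statement.
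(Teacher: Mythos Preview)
Your proposal is correct and follows essentially the same approach as the paper: the paper also reduces the question to the already-known group property of the sphere conformal maps and then checks only that the composite $\alpha_{13}=\alpha_{12}+\alpha_{23}/K_{12}$ is again suitably differentiable. You go further by explicitly verifying associativity, identity, and inverses, which the paper leaves implicit, but the core argument is identical.
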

\begin{defn}$\\ $
The BMS transformations for which the determinant $J$, defined in \eqref{eqn:171}, 
is positive form the \textit{proper} subgroup of the BMS group.
\end{defn}
In the remainder we will omit the word \lq proper\rq, even if all of our considerations will regard this component of $\mathscr{B}$.
\begin{oss}$\\ $
Note that the $r$ coordinate too may be involved in the BMS group of transformations, but such a 
transformation is somewhat arbitrary since it depends on the precise type of radial coordinate used and 
it is not relevant to the structure of the group. Clearly the BMS group is infinite-dimensional since 
the transformations depend upon a suitably differentiable function $\alpha(\theta,\phi)$.
\end{oss}
\section{Symmetries on $\mathscr{I}$}
\label{sect:6.4}
The geometrical approach to asymptotic flatness, discussed in chapter \ref{chap:3}, affords us a much more vivid picture of the significance of the BMS group.\\
The idea is that by adjoining to the physical space-time $(\tilde{\mathscr{M}},\tilde{g})$ an appropriate conformal boundary $\mathscr{I}$, as done in chapter \ref{chap:3}, we may obtain the asymptotic symmetries as conformal transformations of the boundary, the boundary having a much better chance of having a meaningful symmetry group than $\mathscr{\tilde{M}}$. \\ We start by making an example to better understand the nature of the problem, which is due to \cite{Pen72}. Consider Minkowski space-time with standard coordinates 
$(t,x,y,z)$, the metric being given by
\begin{equation*}
g=\eta_{ab}dx^a\otimes dx^b=dt\otimes dt-dx\otimes dx-dy\otimes dy-dz\otimes dz,
\end{equation*} 
and consider the null cone $\mathscr{N}$ through the origin, given by the equation 
\begin{equation}
\label{eqn:162}
t^2-x^2-y^2-z^2=0.
\end{equation}
The generators of $\mathscr{N}$ are the null rays through the origin, given by
\begin{equation*}
t:x:y:z=\mathrm{const},
\end{equation*}
with $t,x,y,z$ satisfying \eqref{eqn:162}. Let us consider $S^2$ to be the section of $\mathscr{N}$ by 
the spacelike 3-plane $t=1$. Then there exists a (1-1)-correspondence between the generators of $\mathscr{N}$ 
and the points of $S^2$ (i.e. that given by the intersections of the generators with $t=1$). We may regard 
$S^2$ as a realization of the space of generators of $\mathscr{N}$. However, we could have used any 
other cross-section $\hat{S}^2$ of $\mathscr{N}$ to represent this space. The important point is to 
realize that the map which carries any one such cross-section into another, with points on the same 
generator of $\mathscr{N}$ corresponding to one another, is a conformal map. The situation is 
given in Figure \ref{fig:6.1}.:
\begin{figure}[h]
\begin{center}
\includegraphics[scale=0.4]{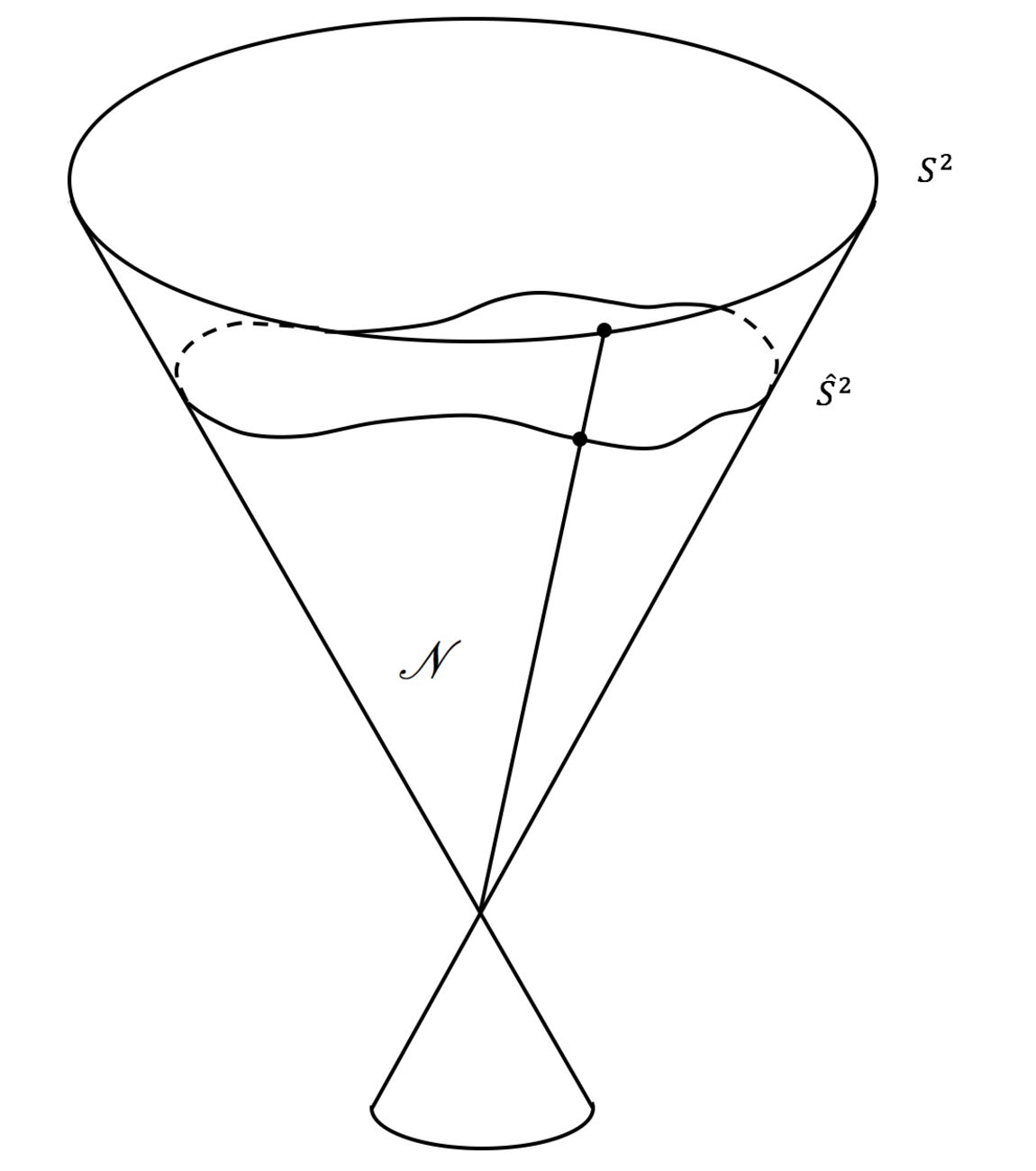}
\caption{The generators of the null cone $\mathscr{N}$ establish a 1-1 map between any two cross-sections of $\mathscr{N}$.}
\label{fig:6.1}
\end{center}
\end{figure}
\\The above mentioned map being conformal, the space of generators of $\mathscr{N}$ may itself be assigned a conformal structure, i.e. that of any of these sections. To see that the map is conformal we may re-express the metric induced on $\mathscr{N}$ in a form
\begin{equation}
\label{eqn:163}
g_{_{\mathscr{N}}}=-r^2\gamma_{\alpha\beta}(x^{\gamma})dx^{\alpha}\otimes dx^{\beta}+0\cdot dr\otimes dr,
\end{equation}
where $x^{\alpha}$ and $r$ are coordinates on $\mathscr{N}$, the generators being given by the coordinate lines $x^{\alpha}=\mathrm{const}$ (the term \lq$0$\rq\hspace{0.1mm} takes into account that, the surface $\mathscr{N}$ being null, its induced metric is degenerate, i.e. with vanishing determinant). There exist obviously many ways of attaining the form \eqref{eqn:163}. One is to use ordinary spherical coordinates for Minkowski space-time, giving $g_{_{\mathscr{N}}}=-r^2(d\theta\otimes d\theta+\sin^2\theta d\phi\otimes d\phi)+0\cdot dr\otimes dr$. Since a cross-section of $\mathscr{N}$ is given by specifying $r$ as function of $x^{\alpha}$ it is clear that any two cross-sections give conformally related metrics, being mapped to one another by the generators of $\mathscr{N}$. It is now obvious that many other cone-like null surfaces will share this property of $\mathscr{N}$, provided their metrics can be put in the form \eqref{eqn:163}. Now if we suppose here to deal with an empty asymptotically simple space-time $(\mathscr{\tilde{M}},\tilde{g})$ (according to definition \ref{defn:EAS}, with associated unphysical space-time $(\mathscr{M},g)$) we know that, if $\mathscr{I}$ is null,  it has the important property to be shear-free, as shown in section \ref{sect:4.3}. Physically, the shear-free nature of the generators of $\mathscr{I}$ tells us that small shapes are preserved as we follow these generators along $\mathscr{I}$. Hence any diffeomorphism which maps each null generator of $\mathscr{I}^+$ into itself is a conformal transformation for any metric on $\mathscr{I}^+$. That is to say, if we take any two cross-sections $S_1$ and $S_2$ of $\mathscr{I}^+$ or $\mathscr{I}^-$, then the correspondence between $S_1$ and $S_2$ established by  the generators is a conformal one. This is exactly the same situation we encountered in the example with $\mathscr{N}$. We have the following
\begin{prop}$\\ $
\label{prop:gen}
If $\mathscr{I}$ is null, then any two cross-sections of $\mathscr{I}^{\pm}$ are mapped to one another conformally by the generators of $\mathscr{I}^{\pm}$. 
\end{prop}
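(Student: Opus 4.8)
The plan is to reduce the statement to an explicit description of the metric induced on a null hypersurface of the type $\mathscr{I}^{\pm}$, mimicking exactly the Minkowski null-cone example that precedes the proposition. First I would recall from Section~\ref{sect:4.3} that, for an asymptotically simple space-time with null $\mathscr{I}$, the generators of $\mathscr{I}^{\pm}$ are the integral curves of $N^a=-\nabla^a\Omega$ and that this congruence is shear-free (equation \eqref{eqn:94a}) and rotation-free. The key consequence I want to extract is that, near $\mathscr{I}^{\pm}$, one can introduce coordinates $(r,x^{\alpha})$ adapted to the generators---with $x^{\alpha}$ ($\alpha=1,2$) constant along each generator and $r$ a parameter along it---in which the degenerate induced metric on $\mathscr{I}^{\pm}$ takes the form
\begin{equation*}
g_{\mathscr{I}}=-r^2\gamma_{\alpha\beta}(x^{\gamma})\,dx^{\alpha}\otimes dx^{\beta}+0\cdot dr\otimes dr,
\end{equation*}
which is the analogue of \eqref{eqn:163}. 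The essential input is precisely shear-freeness: shear of the generator congruence is, by Theorem~\ref{thm:shear}, the obstruction to the transverse metric being a pure conformal rescaling (an $r$-dependent factor times an $r$-independent tensor $\gamma_{\alpha\beta}$) as one moves along the generators; if the congruence had nonzero shear, the cross-sectional shapes would distort and no such separated form would exist.

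Next I would carry out the conformal-map argument itself. Given the form above, any cross-section $S$ of $\mathscr{I}^{\pm}$ is obtained by prescribing $r$ as a function $r=\rho(x^{\gamma})$ of the generator coordinates, and the metric it inherits (the restriction of $g_{\mathscr{I}}$ to the tangent directions of $S$) is $-\rho(x^{\gamma})^2\gamma_{\alpha\beta}(x^{\gamma})\,dx^{\alpha}\otimes dx^{\beta}$. Thus for two cross-sections $S_1,S_2$, corresponding respectively to $r=\rho_1(x^{\gamma})$ and $r=\rho_2(x^{\gamma})$, the diffeomorphism that slides each point of $S_1$ along its generator to the point of $S_2$ on the same generator sends $(\rho_1(x^{\gamma}),x^{\gamma})\mapsto(\rho_2(x^{\gamma}),x^{\gamma})$, i.e. it is the identity in the $x^{\gamma}$ coordinates. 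Its effect on the induced metrics is therefore multiplication by the positive function $\bigl(\rho_2/\rho_1\bigr)^2$, which is exactly the statement that the map is conformal. This is the same bookkeeping as in the $\mathscr{N}$ example, now transplanted to $\mathscr{I}^{\pm}$.

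The main obstacle is the first step: establishing that the adapted form of $g_{\mathscr{I}}$ with an $r$-independent $\gamma_{\alpha\beta}$ genuinely follows from shear-freeness, rather than merely being suggested by the Minkowski model. Concretely, one must show that the transverse part of the induced degenerate metric, evolved along a generator, satisfies $\mathcal{L}_{N}(r^{-2}g_{\alpha\beta})=0$ for a suitable scaling of the generator parameter $r$, and this is where the vanishing of $\sigma$ (together with the rotation-free condition, which holds automatically since $\mathscr{I}$ is null) is used: the shear is the trace-free symmetric part of the transverse derivative of $N_a$, and its vanishing forces the deformation of cross-sections along generators to be a pure (position-dependent) dilation, captured by the single factor $r^2$. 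Once that separated form is in hand the conformal conclusion is essentially immediate, so I would present the shear-free normal-form lemma carefully and then dispatch the rest quickly. I expect the write-up to lean on Theorem~\ref{thm:shear} and on the asymptotic Einstein conditions \eqref{eqn:94}, \eqref{eqn:94a} for the normal-form step, and on nothing more than elementary tensor manipulation for the conformal-map step.
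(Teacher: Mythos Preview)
Your proposal is correct and follows essentially the same route as the paper: the paper argues (in the text immediately preceding the proposition, without a formal proof environment) that the shear-free property of the generators established in Section~\ref{sect:4.3} means ``small shapes are preserved'' along generators, whence the generator map between any two cross-sections is conformal, exactly as in the Minkowski null-cone model \eqref{eqn:163}. Your write-up is in fact more careful than the paper's, which treats the normal-form step heuristically via the physical meaning of shear rather than through an explicit Lie-derivative computation; your plan to justify $\mathcal{L}_{N}(r^{-2}g_{\alpha\beta})=0$ from $\sigma\approx 0$ would make rigorous what the paper leaves to the reader.
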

In section \ref{sect:4.3} we have shown that the topology of $\mathscr{I}^{\pm}$ is $S^2\times\mathbb{R}$, where the $\mathbb{R}$ factor may be taken as the null-geodesic generator  $\mathscr{I}^{\pm}$. Hence these generators, by proposition \ref{prop:gen}, establish a conformal mapping between any two $S^2$ cross-sections of $\mathscr{I}^{\pm}$, these sections being of course conformal spheres. It is a theorem that any conformal 2-surface with the topology of a sphere $S^2$ is conformal to the unit 2-sphere in Euclidean 3-space. Thus we can assume without loss of generality, that the conformal factor $\Omega$ has been chosen so that some cross-section $S$ has unphysical square line element $-ds^2$ of a unit 2-sphere. Given one choice of $\Omega$, we can always make a new choice $\Omega'=\Theta\Omega$ which again has the property of vanishing at $\mathscr{I}$ with non-zero gradient there. The factor $\Theta$ has to be an arbitrary smooth positive function on $\mathscr{I}$ and can be chosen to rescale the metric on $\mathscr{I}$ as we please. It is worth noting that the shear-free condition can be saved by the change $\Omega'=\Theta\Omega$, as stated by theorem \ref{thm:shear}. \citep[see also][pg.~132-133]{Stew}. This property can be interpreted as a \lq gauge freedom\rq\hspace{0.1mm} in the choice of the conformal factor $\Omega$. We can use this freedom to set the metric of a continuous sequence of cross-sections along the generators equal to that of $S$. Hence, in spherical polar coordinates the induced metric on $\mathscr{I}^{+}$ is 
\begin{equation}
\label{eqn:164}
g_{_{\mathscr{I^+}}}=d\theta\otimes d\theta+\sin^2\theta d\phi\otimes d\phi+0\cdot du\otimes du,
\end{equation}
where $u$ is a retarded time coordinate, i.e. a parameter defined along each generator increasing monotonically with time from $-\infty$ to $+\infty$, the corresponding form with an advanced time coordinate $v$ in place of $u$ holding for $\mathscr{I}^{-}$. The surfaces $u=\mathrm{const}$ are cross-sections of $\mathscr{I}^+$, each of which has the metric of a unit 2-sphere, as is clear from \eqref{eqn:164}. \\
From the above discussion it follows that the metric on $\mathscr{I}^+$ belongs to an equivalence class of metrics, two elements being equivalent if they are conformally related one to the other. Hence, the form of the metric \eqref{eqn:164} is just one element of this equivalence class that we have chosen as representative. Let us consider the group of conformal transformations of $\mathscr{I}^{+}$, i.e. the group of transformations which conformally preserve the metric \eqref{eqn:164}. It is clear that any smooth transformation which maps each generator into itself will be allowable:
\begin{equation}
\label{eqn:165}
u\rightarrow u'=F(u,\theta,\phi),
\end{equation}
with $F$ smooth on the whole $\mathscr{I}^+$ and $\partial F/\partial u >0$, since it has to map the whole range for $u$ to itself, for any $\theta$ and $\phi$. In addition, we can allow conformal transformations of the $(\theta,\phi)$-sphere into itself. These transformations can be regarded as those of the compactified complex plane $\mathbb{C}\cup\{\zeta=\infty\}$ into itself. Introducing the complex stereographic coordinate
\begin{equation*}
\zeta=e^{i\phi}\cot\frac{\theta}{2},
\end{equation*}
we have that \eqref{eqn:164} may be written as 
\begin{equation}
\label{eqn:167}
g_{_{\mathscr{I^+}}}=\frac{2(d\zeta \otimes d\bar{\zeta}+d\bar{\zeta}\otimes d\zeta)}{(1+\zeta\bar{\zeta})^2}+0\cdot du\otimes du.
\end{equation}
Then the most general conformal transformation of the compactified plane is given by
\begin{equation}
\label{eqn:166}
\zeta\rightarrow\zeta'=\frac{a\zeta+b}{c\zeta+d},
\end{equation}
where $a$,$b$,$c$,$d\in\mathbb{C}$, that can be normalized to satisfy $ad-bc=1$.
\begin{oss}$\\ $
Since conformal transformations can be equivalently expressed in terms of $x^A$ or $\zeta$ coordinates, in the remainder we will use both of them, depending on the convenience.
\end{oss}
The particular functional form of the transformations in \eqref{eqn:166} results from the request that 
they must be diffeomorphisms of the compactified plane $\mathbb{C}\cup\{\zeta=\infty\}$  into itself. 
Hence the transformations must have at least one pole, at $\zeta^*$ say, corresponding to the point 
that is mapped to the north pole $F(\zeta^*)=\infty$ and at least one zero, at $\zeta^{**}$ say, 
corresponding to the point that is mapped to the south pole $F(\zeta^{**})=0$. Thus, the transformations 
must be some rational complex function where the roots of the numerator and the denominator correspond 
to the points that are mapped to the south and the north pole, respectively. Since the transformation 
must be injective there must be one, and only one, point that is mapped to the south pole, and also 
exactly one other point that is mapped to the north pole. This requires that both numerator and denominator 
be linear functions of $\zeta$. Requiring this map to be surjective finally imposes that the complex 
numbers $a,b,c,d$ in \eqref{eqn:166} must satisfy $ad-bc\neq0$ (all of these parameters can be appropriately 
rescaled to get $ad-bc=1$ leaving the transformation unchanged). It is worth remarking that in pure 
mathematics these transformations were studied by Poincar\'e and other authors when they developed the 
theory of what are nowadays called automorphic functions, i.e. meromorphic functions such that 
$f(z)=f((az+b)/(cz+d))$. It is easy to see that these transformations contain:
\begin{itemize}
\item Translations $\zeta\rightarrow \zeta'=\zeta+b,\hspace{1cm}b\in\mathbb{C}$;
\item Rotations $\zeta\rightarrow \zeta'=e^{i\theta}\zeta,\hspace{1cm}\theta\in\mathbb{R}$;
\item Dilations $\zeta\rightarrow \zeta'=e^{-\chi}\zeta,\hspace{1cm}\chi\in\mathbb{R}$;
\item Special transformations $\zeta\rightarrow \zeta'=
-\displaystyle{\frac{b^2}{\zeta^2}},\hspace{1cm}b\in\mathbb{C}$;
\end{itemize}
Any transformation of the form \eqref{eqn:166} can be obtained as the composition of a special transformation, 
a translation, a rotation and a dilation.\\
Usually, transformations \eqref{eqn:166} are referred to as the \textit{conformal group} (in two dimensions), 
the \textit{projective linear group}, the \textit{M\"{o}bius transformations} or the \textit{fractional linear 
transformations}, and is denoted by PSL$(2,\mathbb{C})\cong\mathrm{SL}(2,\mathbb{C})/\mathbb{Z}_2$ 
(as will be discussed in the next section). Under these transformations we have
\begin{equation*}
\frac{2(d\zeta'\otimes d\bar{\zeta'}+d\bar{\zeta'}\otimes d\zeta')}{(1+\zeta'\bar{\zeta'})^2}
=K^2(\zeta,\bar{\zeta})\frac{2(d\zeta\otimes d\bar{\zeta}+d\bar{\zeta}\otimes d\zeta)}
{(1+\zeta\bar{\zeta})^2}\Rightarrow g'_{_{\mathscr{I}^+}}=K^2g_{_{\mathscr{I}^+}},
\end{equation*}
with 
\begin{equation}
\label{eqn:170}
K(\zeta,\bar{\zeta})=\frac{1+\zeta\bar{\zeta}}{(a\zeta+b)(\bar{a}\bar{\zeta}+\bar{b})
+(c\zeta+d)(\bar{c}\bar{\zeta}+\bar{d})}.
\end{equation}
It can be shown that transformations \eqref{eqn:166} are equivalent to \eqref{eqn:182} and \eqref{eqn:183}, 
and that the conformal factor $K$ in \eqref{eqn:170} is the same as one in \eqref{eqn:171}, 
expressed in terms of the $(\theta,\phi)$ variables.\\
\begin{defn}$\\ $
The group of transformations
\begin{subequations}
\begin{align}
\label{eqn:172}
&\zeta\rightarrow\zeta'=\frac{a\zeta+b}{c\zeta+d},\\
\label{eqn:172a}
&u\rightarrow u'=F(u,\zeta,\bar{\zeta}),
\end{align}
\end{subequations}
with $ad-bc=1$ and with $F$ smooth and $\partial F/\partial u>0$ is the \textit{Newman-Unti (NU) group}.
\end{defn}
\begin{oss}$\\ $
Note that \eqref{eqn:172} are the non-reflective conformal transformations of the $S^2$-space of 
generators of $\mathscr{I}^+$ (the conformal structure being defined equivalently by one of its cross-sections), 
while \eqref{eqn:172a}, when \eqref{eqn:172} is the identity ($a=d=1$, $b=c=0$), give the general non-reflective 
smooth transformations of the generators to  themselves.
\end{oss}
The conformal metric \eqref{eqn:167} is considered to be part of the universal intrinsic structure of 
$\mathscr{I}^+$ (universal, in the sense that any space-time which is asymptotically simple and vacuum near 
$\mathscr{I}$ has a $\mathscr{I}^+$ metric and similarly a $\mathscr{I}^-$ metric which is conformal to 
\eqref{eqn:167}). Hence the NU group can be regarded as the group of non-reflective transformations of 
$\mathscr{I}^+$ preserving its intrinsic (degenerate) conformal metric \cite{Pen72,Pen82}.\\
However, the NU group is different from the BMS group, the former being larger than the latter. In fact the NU 
group allows a greater freedom in the function $F$, while in the BMS group $F$ is constrained to be of the 
form \eqref{eqn:184}. Thus, we want to be somehow able to reduce this freedom, assigning a further geometric 
structure to $\mathscr{I}^+$, the preservation of which will furnish the BMS group, restricting exactly the form of $F$ to be the one of 
\eqref{eqn:184}. This additional structure is referred to as the \textit{strong conformal geometry} 
\cite{Penrin2,Pen72}. The most direct way to specify this structure is the following. Consider a 
replacement of the conformal factor,
\begin{equation}
\label{eqn:173}
\Omega\rightarrow\Omega'=\Theta\Omega.
\end{equation}
We choose the function $\Theta$ to be smooth and positive on $\mathscr{M}$ and nowhere vanishing on 
$\mathscr{I}^+$. Under \eqref{eqn:173} the metric transforms as 
\begin{equation*}
g_{ab}\rightarrow g'_{ab}=\Theta^2g_{ab},\hspace{1cm}g^{ab}\rightarrow g'^{ab}=\Theta^{-2}g^{ab},
\end{equation*}
and the normal co-vector to $\mathscr{I}^{+}$ as
\begin{equation*}
N_a=-\nabla_a\Omega\rightarrow N'_a=-\nabla'_a\Omega'=-\nabla_a\Omega'=-\Omega\nabla_a\Theta
-\Theta\nabla_a\Omega\approx\Theta N_a,
\end{equation*}
while the vector
\begin{equation*}
N^a=g^{ab}\partial_b\Omega\rightarrow N'^a=g'^{ab}N'_b\approx\Theta^{-1}N^a,
\end{equation*}
where we introduced the \lq weak equality\rq\hspace{0.1mm} symbol $\approx$. Considering two fields 
$\psi^{...}_{...}$ and $\phi^{...}_{...}$, saying that 
\begin{equation}
\label{eqn:80}
\psi^{...}_{...}\approx\phi^{...}_{...}
\end{equation}
means that $\psi^{...}_{...}-\phi^{...}_{...}=0$ on $\mathscr{I}$. 
The line element $dl$ of $\mathscr{I}^+$ rescales according to 
\begin{equation}
\label{eqn:176}
dl\rightarrow dl'=\Theta dl.
\end{equation}
Having done any allowable choice of the conformal factor $\Omega$, through the function $\Theta$, and hence 
some specific choice of the metric $dl$ for cross-sections of $\mathscr{I}^+$, then it is defined, from 
$N_a=-\nabla_a\Omega$, a precise scaling for parameters $u$ on the generators of $\mathscr{I}^+$, fixed by
\begin{equation*}
\frac{\partial}{\partial u}=N^a\nabla_a,\hspace{0.7cm}\mathrm{i.e.}\hspace{0.7cm}N^a\nabla_a u=1.
\end{equation*}
Under \eqref{eqn:173} we see that to keep the scaling of the parameters $u$ along the generators fixed we must choose
\begin{equation}
\label{eqn:175}
du\rightarrow du'=\Theta du, 
\end{equation}
so that 
\begin{equation*}
N^a\nabla_a u\rightarrow N'^a\nabla'_au'=N'^a\nabla_au'=N'^a\frac{\partial u'}{\partial x^a}
=\Theta^{-1}\Theta N^a\nabla_a u=1.
\end{equation*}
All the parameters $u$, linked by \eqref{eqn:175}, scale in the same way along the generators of $\mathscr{I}^+$. 
From \eqref{eqn:176} and \eqref{eqn:175} we see that the ratio
\begin{equation}
\label{eqn:177}
dl:du
\end{equation}
remains invariant and it is independent of the choice of the conformal factor $\Omega$. It is the invariant 
structure provided by \eqref{eqn:177} that can be taken to define the strong conformal geometry. To better 
reformulate this invariance we introduce the concept of \textit{null angle} \cite{Pen63,Penrin2,Pen72,BMS75}. 
Consider two non-null tangent directions at a point $P$ of $\mathscr{I}^+$. Let $[X]$ and $[Y]$ be such 
directions. If no linear combination of $X\in [X]$ and $Y\in [Y]$ is the null tangent direction at $P$, 
then the angle between $[X]$ and $[Y]$ is defined by the metric \eqref{eqn:164}. However, if the null tangent 
direction at $P$ is contained in the plane spanned by $[X]$ and $[Y]$, then the angle between $[X]$ and $[Y]$ 
always vanishes. To see this choose $X\in [X]$, $Y\in [Y]$ and $N\in [N]$ ($[N]$ being the null direction 
tangent at $P$), such that $Y=X+N$. Then since $N$ is null we have, using the metric $g$ on $\mathscr{I}^+$ 
given in \eqref{eqn:164} (and hence any other one of its equivalence class)
\begin{equation*}
0=g(N,X)=g(Y-X,X)=g(Y,X)-g(X,X),
\end{equation*}
and 
\begin{equation*}
0=g(N,Y)=g(Y-X,Y)=g(Y,Y)-g(X,Y),
\end{equation*}
from which the angle $\theta$ between $[X]$ and $[Y]$, given by 
\begin{equation*}
\cos\theta=\frac{g(X,Y)}{\sqrt{g(X,X)g(Y,Y)}}=1,
\end{equation*}
vanishes. However, if we require the strong conformal geometry structure to hold and hence the invariance of 
the ratio \eqref{eqn:177} we can numerically define the null angle $\nu$ between two tangent directions at 
a point $P$ of $\mathscr{I}^{+}$ by 
\begin{equation}
\label{eqn:178}
\nu=\frac{\delta u}{\delta l},
\end{equation}
where the infinitesimal increments $\delta u$ and $\delta l$ are as indicated in Figure \ref{fig:6.3} \cite{Penrin2}.
\begin{figure}[h]
\begin{center}
\includegraphics[scale=0.4]{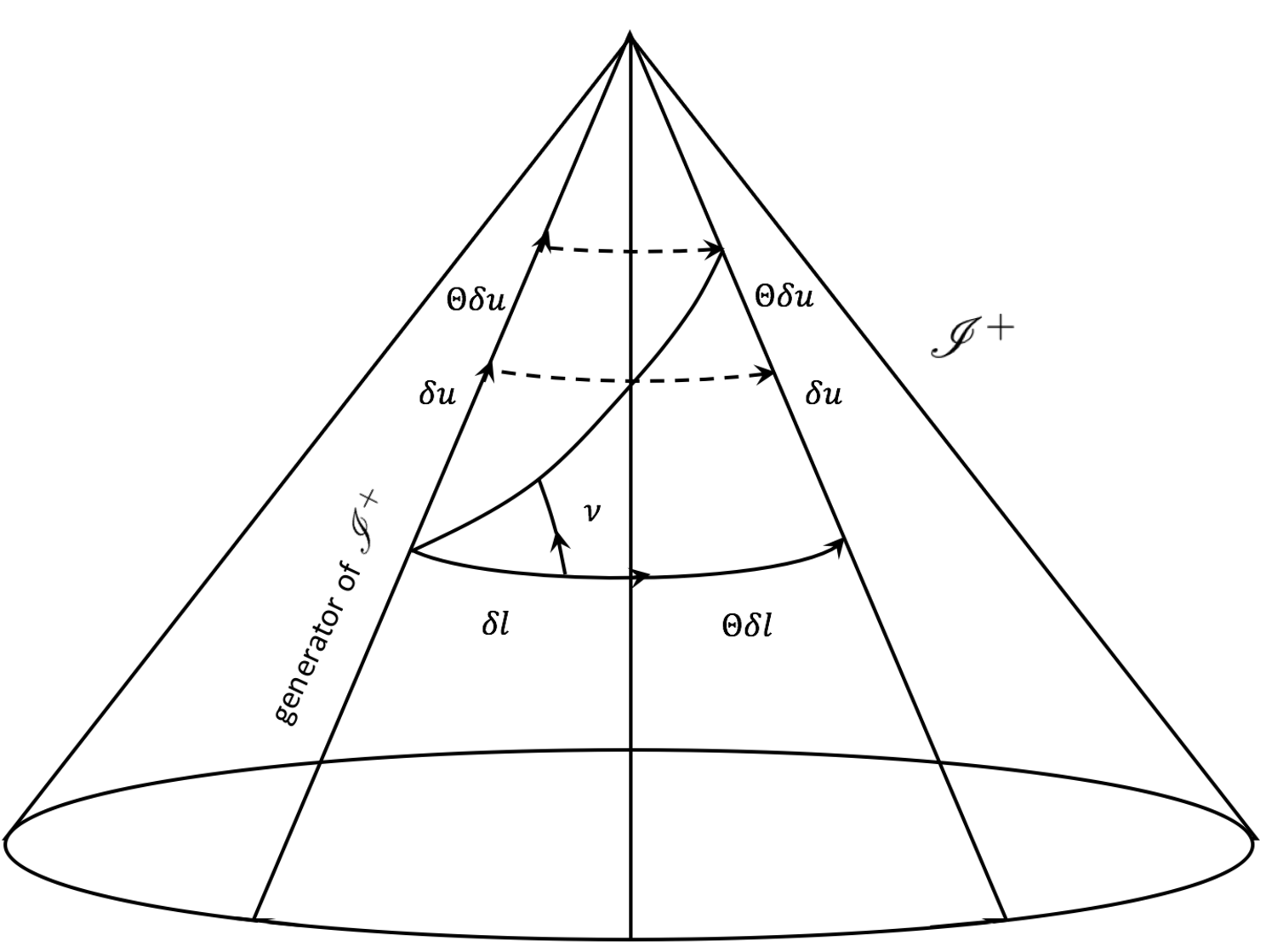}
\caption{A null angle $\nu$ on $\mathscr{I}^+$, given by $\nu=\delta u/\delta l$, is defined between a pair of directions on $\mathscr{I}^+$ whose span contains the null normal direction to $\mathscr{I}^+$.}
\label{fig:6.3}
\end{center}
\end{figure}
\\In virtue of the strong conformal geometry, under change of the conformal factor for the metric of $\mathscr{I}^+$, null angles 
remain invariant. For further insights about the strong conformal geometry and the interpretation of null angles we suggest to read \cite{Pen72} or \cite{BMS75}. \\A transformation of $\mathscr{I}^+$ to itself which 
preserves angles and null angles, i.e. that respects the strong conformal geometry structure, must have 
the effect that any expansion (or contraction) of the spatial distances $dl$ is accompanied by an equal 
expansion (or contraction) of the scaling of the special $u$ parameters. The allowed transformations
have the form \eqref{eqn:172}, where function $F$ must now have the  precise form that allows the ratio 
$du:dl$ to remain invariant. Under the transformation \eqref{eqn:173} we have, as seen, that the sphere 
of the cross-section of $\mathscr{I}^+$ undergoes a conformal mapping, i.e.
\begin{equation*}
dl\rightarrow dl'=\Theta dl.
\end{equation*}
Since $\Theta$ is the conformal factor of the transformation, it depends only on $\theta$ and $\phi$ or, 
equivalently, on $\zeta$ and $\bar{\zeta}$ and must have the form given in \eqref{eqn:170}. 
We must therefore also have 
\begin{equation*}
du\rightarrow du'=\Theta du, 
\end{equation*}
Integrating we get
\begin{equation*}
u\rightarrow u'=\Theta[u+\alpha(\zeta,\bar{\zeta})],
\end{equation*}
where $\Theta$ assumes the form
\begin{equation*}
\Theta(\zeta,\bar{\zeta})=\frac{1+\zeta\bar{\zeta}}{(a\zeta+b)(\bar{a}\bar{\zeta}+\bar{b})
+(c\zeta+d)(\bar{c}\bar{\zeta}+\bar{d})},
\end{equation*}
with $a,b,c,d\in\mathbb{C}$ and $ad-bc=1$. In virtue of definition \ref{defn:BMS} we have obtained the following
\begin{prop}$\\ $
The group of conformal transformations of $\mathscr{I}^+$ which preserve the strong conformal geometry, i.e. 
both angles and null angles, is the BMS group.
\end{prop}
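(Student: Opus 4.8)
The plan is to show that the Newman--Unti freedom in the parameter transformation $u\mapsto u'=F(u,\zeta,\bar\zeta)$ is cut down to exactly the Bondi--Metzner--Sachs form $u'=\Theta[u+\alpha(\zeta,\bar\zeta)]$ precisely when we demand invariance of the ratio $dl:du$, i.e.\ preservation of the strong conformal geometry. The key structural input is already in place: Proposition on the conformal mapping of cross-sections tells us that under the conformal rescaling $\Omega\to\Omega'=\Theta\Omega$ the induced line element on a cross-section of $\mathscr{I}^+$ transforms as $dl\to dl'=\Theta\,dl$, with $\Theta$ depending only on $(\zeta,\bar\zeta)$ and forced, by the classification of conformal automorphisms of $S^2$, to have the form \eqref{eqn:170}. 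The remaining task is to deduce the transformation law of $u$ from the requirement that $\delta u/\delta l$ be invariant.

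First I would recall from the discussion around \eqref{eqn:175} that a choice of conformal factor $\Omega$ fixes, via $N^a\nabla_a u=1$, a scaling of the generator parameter $u$; this is the content that makes $du$ a geometrically meaningful one-form on $\mathscr{I}^+$ up to the rescaling \eqref{eqn:175}. Then the invariance of the null angle \eqref{eqn:178}, $\nu=\delta u/\delta l$, under a transformation of $\mathscr{I}^+$ to itself forces $du$ to transform with the \emph{same} conformal factor as $dl$, namely $du\to du'=\Theta\,du$. Here the fact that $\Theta$ has no $u$-dependence is essential: it is a consequence of $\Theta$ being the conformal factor of a transformation of the cross-section sphere, hence a function of $(\zeta,\bar\zeta)$ alone, as displayed in \eqref{eqn:170}. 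Given $du'=\Theta\,du$ with $\partial\Theta/\partial u=0$, integration along each generator yields $u'=\Theta u+(\text{const along the generator})$, i.e.\ $u'=\Theta[u+\alpha(\zeta,\bar\zeta)]$ with $\alpha$ a smooth function on the sphere (the constant of integration depends on which generator we are on, hence on $(\zeta,\bar\zeta)$). This is exactly the finite BMS form \eqref{eqn:184}, and combined with the M\"obius transformation \eqref{eqn:166}/\eqref{eqn:172} it reproduces Definition~\ref{defn:BMS}. The converse inclusion---that every BMS transformation preserves both ordinary angles and null angles---is immediate: the M\"obius part preserves the conformal class of \eqref{eqn:164} hence ordinary angles, and the coupled rescaling $du'=\Theta\,du$, $dl'=\Theta\,dl$ manifestly leaves $du:dl$ fixed.

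The main obstacle, and the step deserving the most care, is justifying rigorously that the invariance of the null angle forces $du$ to pick up \emph{precisely} the factor $\Theta$ and nothing more---in particular that no additional $u$-dependent or $(\zeta,\bar\zeta)$-dependent smooth factor can sneak in. This rests on two points which I would make explicit: (i) that $\delta l$ at a point $P$ is governed entirely by the conformal metric \eqref{eqn:164} on the cross-section through $P$, so its rescaling factor is the $\Theta$ of \eqref{eqn:170}, with no freedom left; and (ii) that $\delta u$ is measured along the null normal direction $[N]$, and the requirement $N^a\nabla_a u=1$ (together with the transformation $N^a\to\Theta^{-1}N^a$ recorded in Section~\ref{sect:6.4}) pins the transformation of the $u$-increment to be multiplication by $\Theta$ as well. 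Once (i) and (ii) are in hand, the equality $\delta u'/\delta l'=\delta u/\delta l$ gives no slack, and the integration argument closes. I would also remark that smoothness and positivity of $\partial F/\partial u$ are automatically preserved since $\Theta>0$ is smooth and nowhere vanishing on $\mathscr{I}^+$, so the resulting maps indeed form a subgroup, consistently with the group property already established in Section~\ref{sect:6.3}.
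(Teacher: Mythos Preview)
Your proposal is correct and follows essentially the same route as the paper: from the invariance of the ratio $dl:du$ and the fact that the conformal factor $\Theta$ on the cross-section sphere is $u$-independent with the M\"obius form \eqref{eqn:170}, you deduce $du'=\Theta\,du$ and integrate along each generator to obtain $u'=\Theta[u+\alpha(\zeta,\bar\zeta)]$, which is exactly the paper's argument. Your version is somewhat more explicit in justifying why no extra factor can appear (your points (i) and (ii)) and in stating the converse inclusion, but the underlying reasoning is the same.
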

\begin{oss}$\\ $
Conformal transformations, and hence the NU group, always preserve finite angles, but null angles, i.e. 
angles between tangent vectors of which $N^a$ is a linear combination, are preserved by the BMS group only. 
\end{oss}
The general form of a BMS transformation is thus
\begin{subequations}
\begin{equation}
\label{eqn:200}
\zeta\rightarrow\zeta'=\frac{a\zeta+b}{c\zeta+d},
\end{equation}
\begin{equation}
\label{eqn:201}
u\rightarrow u'=\frac{(1+\zeta\bar{\zeta})[u+\alpha(\zeta,\bar{\zeta})]}{(a\zeta+b)(\bar{a}\bar{\zeta}
+\bar{b})+(c\zeta+d)(\bar{c}\bar{\zeta}+\bar{d})},
\end{equation}
\end{subequations}
\\with $a,b,c,d\in\mathbb{C}$ and $ad-bc=1$. Clearly the BMS group is a subgroup of the NU group, the 
function $F$ having its form fixed. However it is still an infinite-dimensional function-space group.
\section{Structure of the BMS Group}
\label{sect:6.5}
We discuss first the BMS transformations obtained by setting $\alpha=0$,
\begin{equation}
\label{eqn:220}
u\rightarrow u'=Ku,\hspace{1cm}\zeta\rightarrow\zeta'=\frac{a\zeta+b}{c\zeta+d},
\end{equation}
with 
\begin{equation*}
K(\zeta,\bar{\zeta})=\frac{1+\zeta\bar{\zeta}}{(a\zeta+b)(\bar{a}\bar{\zeta}+\bar{b})
+(c\zeta+d)(\bar{c}\bar{\zeta}+\bar{d})},
\end{equation*}
\\i.e. a rescaling for $u$ and a transformation of the group $\mathrm{PSL}(2,\mathbb{C})$ for $\zeta$.  Any of these transformations is specified by the 4 constants $a,b,c,d\in\mathbb{C}$, satisfying $ad-bc=1$. 
Hence there are only 3 independent complex parameters, i.e. 6 independent real parameters. 
Any element $f\in \mathrm{PSL}(2,\mathbb{C})$ reads as
\begin{equation*}
f=\frac{a\zeta+b}{c\zeta+d}\equiv\{a,b,c,d\}.
\end{equation*}
Note that under simultaneous change $a\rightarrow-a$, $b\rightarrow-b$, $c\rightarrow-c$, 
$d\rightarrow-d$ any element $f\in\mathrm{PSL}(2,\mathbb{C})$ remains unaffected, i.e.
\begin{equation}
\label{eqn:203}
f=\{a,b,c,d\}=\{-a,-b,-c,-d\}.
\end{equation}
Now take into account the group $\mathrm{SL}(2,\mathbb{C})$ of $(2\times 2)$ complex matrices 
$Q$ with $\mathrm{det}[Q]=1$:
\begin{equation*}
Q=\left(\begin{matrix}A& B\\ C& D
\end{matrix}
\right),\hspace{1cm}\mathrm{det}[Q]=AD-BC=1,\hspace{1cm}A,B,C,D\in\mathbb{C}.
\end{equation*} 
Clearly the dimension of $\mathrm{SL}(2,\mathbb{C})$ is 6. Hence we can consider a map $\tilde{\varphi}$, 
between $\mathrm{PSL}(2,\mathbb{C})$ and $\mathrm{SL}(2,\mathbb{C})$ defined by
\begin{equation}
\label{eqn:202}
\tilde{\varphi}:f=\{a,b,c,d\}\in \mathrm{PSL}(2,\mathbb{C})\longrightarrow\tilde{\varphi}(f)
=\left(\begin{matrix}a & b\\
c & d\end{matrix}\right)\in\mathrm{SL}(2,\mathbb{C}).
\end{equation}
It is easy to show that, since the group operation of $\mathrm{PSL}(2,\mathbb{C})$ is the function 
composition $\circ$, given $f=\{a,b,c,d\}$ and $g=\{a',b',c',d'\}\in\mathrm{PSL}(2,\mathbb{C})$ we have
\begin{equation*}
f\circ g=\{aa'+bc',ab'+bd',ca'+dc',cb'+dd'\}\in\mathrm{PSL}(2,\mathbb{C}).
\end{equation*}
Then taking the images of $f$ and $g$ through $\tilde{\varphi}$,
\begin{equation*}
\tilde{\varphi}(f)=\left(\begin{matrix}a & b\\
c & d\end{matrix}\right),\hspace{1cm}\tilde{\varphi}(g)=\left(\begin{matrix}a' & b'\\
c' & d'\end{matrix}\right),
\end{equation*}
we have, since the operation in $\mathrm{SL}(2,\mathbb{C})$ is the ordinary matrix product,
\begin{equation}
\label{eqn:204}
\tilde{\varphi}(f)\cdot\tilde{\varphi}(g)=\left(\begin{matrix}aa'+bc' &ab'+bd'\\
ca'+dc' & cb'+dd'\end{matrix}\right)=\tilde{\varphi}(g\circ f).
\end{equation}
Note that, in virtue of \eqref{eqn:203}, to the same element $f$ there correspond, through $\tilde{\varphi}$, 
two different elements, $\tilde{\varphi}(f)$ and $-\tilde{\varphi}(f)$. If we consider now the map $\varphi$:
\begin{equation*}
\varphi:f=\{a,b,c,d\}\in \mathrm{PSL}(2,\mathbb{C})\longrightarrow\varphi(f)=\left(\begin{matrix}a & b\\
c & d\end{matrix}\right)\in\mathrm{SL}(2,\mathbb{C})/\mathbb{Z}_2.
\end{equation*}
it is clear that property \eqref{eqn:204} holds for $\varphi$ as well. This map is a group isomorphism, 
$\varphi(f)$ and $-\varphi(f)$ being now identified in $\mathrm{SL}(2,\mathbb{C})/\mathbb{Z}_2$. We can state
\begin{equation}
\label{eqn:205}
\mathrm{PSL}(2,\mathbb{C})\cong\mathrm{SL}(2,\mathbb{C})/\mathbb{Z}_2.
\end{equation}
The group $\mathrm{SL}(2,\mathbb{C})$ is the double covering of $\mathrm{PSL}(2,\mathbb{C})$. As already pointed out in the introduction of chapter \ref{chap:2}, we also have 
\begin{equation}
\label{eqn:210}
\mathscr{L}\cong\mathrm{SL}(2,\mathbb{C})/\mathbb{Z}_2,
\end{equation}
where $\mathscr{L}$ is the connected component of the Lorentz group. Thus
\begin{equation}
\label{eqn:211}
\mathrm{PSL}(2,\mathbb{C})\cong \mathscr{L}.
\end{equation}
We have the following
\begin{prop}$\\ $
The connected component of the Lorentz group is isomorphic with the subgroup $\mathrm{PSL}(2,\mathbb{C})$ of the BMS group.
\end{prop}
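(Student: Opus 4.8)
The plan is to isolate inside the BMS group $\mathscr{B}$ the subgroup consisting of the transformations \eqref{eqn:220} (those obtained by setting $\alpha=0$, the \emph{superrotations}), to exhibit a group isomorphism between this subgroup and $\mathrm{PSL}(2,\mathbb{C})$, and then to invoke \eqref{eqn:211}, i.e. $\mathrm{PSL}(2,\mathbb{C})\cong\mathscr{L}$, which has already been established. Thus the proposition reduces to two verifications: that \eqref{eqn:220} really defines a subgroup, and that the obvious labelling map is a well-defined bijective homomorphism onto $\mathrm{PSL}(2,\mathbb{C})$.

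First I would check closure. A general BMS transformation is specified by a pair $(\alpha,K)$ with $K$ of the form \eqref{eqn:170}, and the composition law recorded in the discussion of the group structure reads $K_{13}=K_{12}K_{23}$, $\alpha_{13}=\alpha_{12}+\alpha_{23}/K_{12}$. Setting $\alpha_{12}=\alpha_{23}=0$ forces $\alpha_{13}=0$, and the inverse of \eqref{eqn:220} is again of the same type, since the inverse of a M\"obius transformation is a M\"obius transformation and the associated conformal factor is replaced by its reciprocal. Hence the transformations \eqref{eqn:220} contain the identity and are closed under composition and inversion, so they form a subgroup $\mathscr{H}\subset\mathscr{B}$.

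Next I would define $\Phi:\mathscr{H}\to\mathrm{PSL}(2,\mathbb{C})$ sending the transformation \eqref{eqn:220} labelled by $\{a,b,c,d\}$, with $ad-bc=1$, to the M\"obius map $\zeta\mapsto(a\zeta+b)/(c\zeta+d)$. This is well defined because an element of $\mathscr{H}$ is completely determined by $\{a,b,c,d\}$: the factor $K$ is itself fixed by $\{a,b,c,d\}$ through \eqref{eqn:170}, and the only redundancy in $\{a,b,c,d\}$ is the simultaneous sign change \eqref{eqn:203}, under which $K$ is invariant; conversely every element of $\mathrm{PSL}(2,\mathbb{C})$ arises in this way, so $\Phi$ is a bijection. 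That $\Phi$ is a homomorphism follows by comparing the composition of the $\zeta$-parts of \eqref{eqn:220} with the product in $\mathrm{PSL}(2,\mathbb{C})$ — precisely the computation already done in \eqref{eqn:204} for $\tilde{\varphi}$ — the $u$-rescaling contributing nothing new because $K_{13}=K_{12}K_{23}$. Therefore $\mathscr{H}\cong\mathrm{PSL}(2,\mathbb{C})$, and combining this with \eqref{eqn:211} gives $\mathscr{H}\cong\mathscr{L}$, which is the statement.

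The only point that needs genuine care — and hence the (mild) main obstacle — is the bookkeeping of the $\mathbb{Z}_2$ redundancy: one must confirm that $K(\zeta,\bar{\zeta})$ in \eqref{eqn:170} is unchanged under $(a,b,c,d)\to(-a,-b,-c,-d)$, so that $\Phi$ descends to $\mathrm{PSL}(2,\mathbb{C})$ and not merely to $\mathrm{SL}(2,\mathbb{C})$, and dually that two elements of $\mathscr{H}$ inducing the same transformation of $\zeta$ must coincide, since they then share the same $K$ and the same trivial affine part. Everything else is the routine checking of the subgroup and homomorphism axioms, already effectively carried out in \eqref{eqn:203}--\eqref{eqn:205}.
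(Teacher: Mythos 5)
Your proof is correct and follows essentially the same route as the paper: identify the $\alpha=0$ transformations with $\mathrm{PSL}(2,\mathbb{C})$ via the labelling $\{a,b,c,d\}$ modulo the sign redundancy \eqref{eqn:203}, use the computation \eqref{eqn:204}--\eqref{eqn:205} to get $\mathrm{PSL}(2,\mathbb{C})\cong\mathrm{SL}(2,\mathbb{C})/\mathbb{Z}_2$, and combine with \eqref{eqn:210} to conclude \eqref{eqn:211}. The only difference is that you make explicit the subgroup axioms for the superrotations and the well-definedness of $K$ under the $\mathbb{Z}_2$ redundancy, which the paper leaves implicit; this is a harmless and indeed welcome addition.
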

To make this isomorphism explicit take an element $f=\{a,b,c,d\}\in\mathrm{PSL}(2,\mathbb{C})$ and through $\varphi$ assign it an element of $\mathrm{SL}(2,\mathbb{C})$,
\begin{equation*}
\varphi(f)=\left(\begin{matrix}a & b\\ c & d
\end{matrix}\right)\in\mathrm{SL}(2,\mathbb{C}),\hspace{1cm}ad-bc=1.
\end{equation*}
Then, using the isomorphism of \eqref{eqn:210} it can be shown with tedious calculations \citep{Oblak16} that to $\varphi(f)$ there corresponds an element of $\mathscr{L}$ through the isomorphism $\psi$ of \eqref{eqn:10}:
\begin{equation}
\label{eqn:iso}
\psi(\varphi(f))=
\end{equation}
\begin{equation*}
\left(\begin{matrix}\frac{1}{2}\left(|a|^2+|b|^2+|c|^2+|d|^2\right) & -\mathrm{Re}\left\{a\bar{b}+c\bar{d}\right\} & \mathrm{Im}\left\{a\bar{b}+c\bar{d}\right\} & \frac{1}{2}\left(|a|^2-|b|^2+|c|^2-|d|^2\right) \\ -\mathrm{Re}\left\{\bar{a}c+\bar{b}d\right\} & \mathrm{Re}\left\{\bar{a}d+\bar{b}c\right\} & -\mathrm{Im}\left\{a\bar{d}-b\bar{c}\right\} & - \mathrm{Re}\left\{\bar{a}c+\bar{b}d\right\}\\
 \mathrm{Im}\left\{\bar{a}c+\bar{b}d\right\} & -\mathrm{Im}\left\{\bar{a}d+\bar{b}c\right\} & \mathrm{Re}\left\{a\bar{d}-b\bar{c}\right\} &  \mathrm{Im}\left\{\bar{a}c+\bar{b}d\right\}\\
 \frac{1}{2}\left(|a|^2+|b|^2-|c|^2-|d|^2\right) & -\mathrm{Re}\left\{a\bar{b}-c\bar{d}\right\} & \mathrm{Im}\left\{a\bar{b}-c\bar{d}\right\} & \frac{1}{2}\left(|a|^2-|b|^2-|c|^2+|d|^2\right)
\end{matrix}\right).
\end{equation*}\\
At this stage, the relation between the Lorentz group and the sphere appears as a mere coincidence. 
In particular, since the original Lorentz group is defined by its linear action on a four-dimensional space, 
there is no reason for it to have anything to do with certain non-linear transformations of a two-dimensional 
manifold such as the sphere. However, it can be shown that it is not accidental. Following \cite{Oblak16}, 
we can suppose to perform a Lorentz transformation in Minkowski space-time 
equipped with standard coordinates $(t,x,y,z)$, i.e.
\begin{equation*}
x'^{\mu}=\Lambda^{\mu}_{\nu}x^{\nu},\hspace{1cm}\Lambda\in \mathscr{L}.
\end{equation*} 
We may introduce Bondi coordinates $(u,r,x^A)$ for Minkowski space-time as done in \eqref{eqn:135}. 
Then if we evaluate the limit for large values of the radial coordinate $r=\sqrt{x^2+y^2+z^2}$ 
keeping the value of $u=t-r$ fixed (i.e. on $\mathscr{I}$) and use the isomorphism \eqref{eqn:210} 
and hence \eqref{eqn:iso} we obtain the following behaviour:
\begin{equation*}
\zeta'=\frac{a\zeta+b}{c\zeta+d}+O(r^{-1}).
\end{equation*}
where
$\zeta=e^{i\phi}\cot\frac{\theta}{2}$. Furthermore it can be checked that both $u$ and $r$, under 
the effect of a Lorentz transformation on $\mathscr{I}$, undergo an angle-dependent rescaling. Hence 
we have obtained a fundamental result: Lorentz transformations acting on $\mathscr{I}$, expressed in 
terms of the parameters $a,b,c,d$ coincide with conformal transformations of $\mathrm{PSL}(2,\mathbb{C})$. 
Since asymptotically flat space-times have the same structure of a Minkowski space-time at infinity, 
this argument can be extended to all of them too. In the remainder we will use $\mathscr{L}$ to describe 
the group structure of $\mathscr{B}$, the isomorphism with $\mathrm{PSL}(2,\mathbb{C})$ being implicit.\\
Now we turn to analyse the transformations which involve a non-vanishing $\alpha(\theta,\phi)$. 
\begin{defn}$\\ $
\label{defn:suptra}
The Abelian subgroup of BMS transformations for which
\begin{equation}
\label{eqn:225}
\theta'=\theta,\hspace{0.5cm}\phi'=\phi,\hspace{0.5cm}u'=u+\alpha(\theta,\phi),
\end{equation}
is called \textit{supertranslations subgroup} and is denoted by $\mathscr{S}$.
\end{defn}
Under such a transformation the system of null hypersurfaces $u=\mathrm{const}$ is transformed into another 
system of null hypersurfaces $u'=\mathrm{const}$.  \\
To proceed further in the analysis of the structure of the BMS group we need to recall the concepts of 
\textit{right} and \textit{left} \textit{cosets} and, hence, that of \textit{normal subgroup} \cite{Algebra}. 
Consider a group $G$ and a subgroup $H$ of $G$. Introduce in $G$ the equivalence relation 
$\sim$ defined, for $g$, $a\in G$, as
\begin{equation*}
g\sim a \Longleftrightarrow ag^{-1}\in H\Longleftrightarrow a\in Hg.
\end{equation*}
It is easy to verify that the previous relation is reflexive, symmetric and transitive. 
\begin{defn}$\\ $
The equivalence class with respect to $\sim$ is called \textit{right coset of $H$ in $G$ with respect to $g$} 
and is denoted by $Hg$:
\begin{equation*}
[g]=\{hg : h\in H\}= Hg.
\end{equation*}
\end{defn}
Similarly, the \textit{left coset of $H$ in $G$ with respect to $g$} can be introduced as
\begin{equation*}
[g]^*=\{gh : h\in H\}=gH.
\end{equation*}
Generally, the right and the left cosets are different sets. 
\begin{defn}$\\ $
A subgroup  $N$ of $G$ which defines a unique partition,
\begin{equation*}
[g]=[g]^*\Longleftrightarrow gN=Ng\hspace{0.5cm}\forall g \in G
\end{equation*}
is called \textit{normal subgroup of} $G$. 
\end{defn}
Clearly it follows that for every $n\in N$ and $g\in G$ the product $gNg^{-1}\subseteq N$. Note that every group 
$G$ possesses normal subgroups, since $G$ and the identity are normal subgroups. \\
Now consider for a general subgroup $H$ of $G$ the \textit{quotient group} (or \textit{factor group}) $G/H$, defined as
\begin{equation*}
G/H=\{[g] : g\in G\}.
\end{equation*}
If $H$ is normal the elements of $G/H$ are, indistinctly, the right and left cosets. Furthermore, 
under this hypothesis, the set $G/H$ can be equipped with a group structure in a natural way by 
defining the product $\ast$:
\begin{align*}
\ast : &G/H\times G/H\longrightarrow G/H \\
&gH\ast g'H \equiv gg'H,
\end{align*}
i.e., 
\begin{equation*}
[g]\ast [g']\equiv [gg'].
\end{equation*}
It can be shown that $G/H$ equipped with the product $\ast$ satisfies the group axioms. We are now ready to 
discuss further the BMS properties.\\
Any element $b$ of $\mathscr{B}$ can be written as
\begin{equation*}
b=(\Lambda,\alpha).
\end{equation*}
Note that with this nomenclature any element $\Lambda$ of $\mathscr{L}$ (or, equivalently, of 
$\mathrm{PSL}(2,\mathbb{C})$) can be written as $\Lambda=(\Lambda,0)$ and any element $s$ of $\mathscr{S}$ 
as $s=(\mathbb{I},\alpha)$, where $\mathbb{I}$ denotes the identity in $\mathscr{L}$.\\ 
The action of $b$ on the variables $(\zeta,u)$ is 
\begin{equation*}
b(\zeta,u)=(f(\zeta),K[u+\alpha(\zeta,\bar{\zeta})]),
\end{equation*}
where $f$ is the element of $\mathrm{PSL}(2,\mathbb{C})$ which corresponds to $\Lambda$ through the above 
discussed isomorphism and $K$ its conformal factor.\\
It is easy to show that, with this notation, we have for the inverse of $b$:
\begin{equation}
\label{eqn:208}
b^{-1}=(\Lambda^{-1},-K\alpha).
\end{equation}
Considering an element $s=(\mathbb{I},\beta)$ of $\mathscr{S}$ we have
\begin{equation*}
bsb^{-1}(\zeta,u)=(\zeta,u+K\beta(\zeta,\bar{\zeta}))=s'(\zeta,u),
\end{equation*}
with 
\begin{equation*}
s'=(\mathbb{I},K\beta)\in\mathscr{S}.
\end{equation*}
From the above discussion we have the following
\begin{prop}$\\ $
\label{prop:BMS1}
The supertranslations $\mathscr{S}$ form an Abelian normal, infinite-parameter, subgroup of the BMS group:
\begin{equation*}
b\mathscr{S}b^{-1}=\mathscr{S}\hspace{0.5cm}\mathrm{for}\hspace{1.2mm} \mathrm{all}\hspace{1.2mm} b\in\mathscr{B}.
\end{equation*}
\end{prop}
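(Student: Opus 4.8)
The plan is to establish Proposition \ref{prop:BMS1} in two parts: first that $\mathscr{S}$ is an Abelian subgroup, and then that it is normal in $\mathscr{B}$. For the subgroup and Abelian properties, I would use the notation $s=(\mathbb{I},\beta)$ introduced just above and compute the composition of two supertranslations directly. Under the parametrization $b(\zeta,u)=(f(\zeta),K[u+\alpha(\zeta,\bar\zeta)])$, an element with trivial Lorentz part has $f=\mathrm{id}$ and $K=1$, so $s_1 s_2(\zeta,u)=(\zeta,u+\beta_1(\zeta,\bar\zeta)+\beta_2(\zeta,\bar\zeta))$. This shows the composition is again a supertranslation with parameter $\beta_1+\beta_2$, that the identity $(\mathbb{I},0)$ lies in $\mathscr{S}$, that $(\mathbb{I},\beta)^{-1}=(\mathbb{I},-\beta)$ (a special case of \eqref{eqn:208} with $\Lambda=\mathbb{I}$, $K=1$), and — since addition of functions is commutative — that the group is Abelian. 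Here I should remark that $\beta_1+\beta_2$ is again a suitably differentiable function of $(\theta,\phi)$, so closure genuinely holds within the function space defining $\mathscr{S}$.

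For normality, the key computation is already sketched in the excerpt: for $s=(\mathbb{I},\beta)\in\mathscr{S}$ and arbitrary $b=(\Lambda,\alpha)\in\mathscr{B}$, one finds $bsb^{-1}(\zeta,u)=(\zeta,\,u+K\beta(\zeta,\bar\zeta))$, i.e. $bsb^{-1}=(\mathbb{I},K\beta)\in\mathscr{S}$. I would carry this out carefully using \eqref{eqn:208} for $b^{-1}=(\Lambda^{-1},-K\alpha)$: apply $b^{-1}$ first, sending $(\zeta,u)$ to $(f^{-1}(\zeta), K^{-1}[u-K\alpha])$ where $K^{-1}$ is the conformal factor of $\Lambda^{-1}$ evaluated appropriately; then apply $s$, which shifts only the $u$-slot; then apply $b$. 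The Lorentz parts compose to the identity ($\Lambda\Lambda^{-1}=\mathbb{I}$), so the $\zeta$-coordinate returns to itself, and tracking the conformal factors through the three $u$-rescalings leaves a net shift by $K\beta$, the $\alpha$-contributions cancelling. Since $K(\zeta,\bar\zeta)=(1+\zeta\bar\zeta)/[(a\zeta+b)(\bar a\bar\zeta+\bar b)+(c\zeta+d)(\bar c\bar\zeta+\bar d)]$ is a smooth strictly positive function on the sphere, $K\beta$ is again a suitably differentiable function, so $(\mathbb{I},K\beta)$ is a bona fide element of $\mathscr{S}$. This gives $b\mathscr{S}b^{-1}\subseteq\mathscr{S}$ for all $b$, and applying the same with $b^{-1}$ in place of $b$ yields the reverse inclusion, hence $b\mathscr{S}b^{-1}=\mathscr{S}$.

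The only mild subtlety — and the step I would be most careful about — is the bookkeeping of the conformal factors when composing the three transformations. A conformal factor of $\Lambda$ depends on the point at which it is evaluated, so when composing $b$, $s$, $b^{-1}$ one must be consistent about whether the factor $K$ attached to $\Lambda$ in $b$ is the same function that appears (as $K^{-1}$ evaluated at the transformed point) in $b^{-1}$; the cocycle identity for conformal factors under composition, together with the fact that $s$ does not move $\zeta$, is what makes the $\alpha$ terms drop out and leaves exactly $K\beta$. Since the excerpt already records the result $bsb^{-1}(\zeta,u)=(\zeta,u+K\beta(\zeta,\bar\zeta))$, I would present the verification at the level of detail of that one-line computation rather than re-deriving the composition law for $\mathscr{B}$ from scratch, and then conclude with the statement that the infinite-parameter nature of $\mathscr{S}$ is immediate since $\beta$ ranges over an infinite-dimensional function space.
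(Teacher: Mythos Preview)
Your proposal is correct and follows essentially the same route as the paper: the paper records the one-line conjugation $bsb^{-1}(\zeta,u)=(\zeta,u+K\beta)$ with $s'=(\mathbb{I},K\beta)\in\mathscr{S}$ and declares the proposition, while the Abelian property is simply asserted in the definition of $\mathscr{S}$. Your version is more explicit about closure, identity, inverses, and the cocycle bookkeeping for $K$, but there is no substantive difference in strategy.
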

Under the assumption that the function $\alpha$ is twice differentiable, we can expand it into spherical harmonics as
\begin{equation}
\label{eqn:206}
\alpha(\theta,\phi)=\sum_{l=0}^{\infty}\sum_{m=-l}^{l}\alpha_{l,m}Y_{lm}(\theta,\phi),\hspace{1cm}\alpha_{l,-m}
=(-1)^m\bar{\alpha}_{l,m}.
\end{equation}
\begin{defn}$\\ $
If in decomposition \eqref{eqn:206} $\alpha_{l,m}=0$ for $l>2$, i.e.
\begin{equation}
\label{eqn:207}
\alpha\equiv\alpha_t=\epsilon_0+\epsilon_1\sin\theta\cos\phi+\epsilon_2\sin\theta\sin\phi+\epsilon_3\cos\theta,
\end{equation}
then the supertranslations reduce to a special case, called \textit{translation} subgroup, denoted by 
$\mathscr{T}$, with just four independent parameters $\epsilon_0,...,\epsilon_3$. 
\end{defn}
It is easy to show that $\zeta=e^{i\phi}\cot\frac{\theta}{2}$ implies
\begin{equation*}
\cos\phi=\frac{\zeta+\bar{\zeta}}{2\sqrt{\zeta\bar{\zeta}}},\hspace{1cm}\sin\phi
=\frac{i(\bar{\zeta}-\zeta)}{2\sqrt{\zeta\bar{\zeta}}},
\end{equation*}
\begin{equation*}
\cos\theta=\frac{\zeta\bar{\zeta}-1}{1+\zeta\bar{\zeta}},\hspace{1cm}\sin\theta
=\frac{2\sqrt{\zeta\bar{\zeta}}}{1+\zeta\bar{\zeta}}.
\end{equation*}
Then equation \eqref{eqn:207} becomes
\begin{equation*}
\alpha_t=\epsilon_0+\epsilon_1\frac{\zeta+\bar{\zeta}}{1+\zeta\bar{\zeta}}+\epsilon_2
\frac{(i\zeta-i\bar{\zeta})}{1+\zeta\bar{\zeta}}+\epsilon_3\frac{\zeta\bar{\zeta}-1}{1+\zeta\bar{\zeta}}
\end{equation*}
\begin{equation*}
=\frac{A+B\zeta+\bar{B}\bar{\zeta}+C\zeta\bar{\zeta}}{1+\zeta\bar{\zeta}},
\end{equation*}
with $A$ and $C$ real. 
Hence in terms of $\zeta$ and $\bar{\zeta}$ a translation is
\begin{equation*}
u=t-r\rightarrow u'=u+\frac{A+B\zeta+\bar{B}\bar{\zeta}+C\zeta\bar{\zeta}}{1+\zeta\bar{\zeta}},
\end{equation*}
\begin{equation*}
\zeta\rightarrow\zeta'=\zeta.
\end{equation*}
If we let $t,x,y,z$ be Cartesian coordinates in Minkowski space-time, it is easy to see that
\begin{equation*}
Z^2\zeta=\frac{(x+iy)(1-z/r)}{4r},\hspace{1cm}x=r(\zeta+\bar{\zeta})Z,
\end{equation*}
\begin{equation*}
y=-ir(\zeta-\bar{\zeta})Z,\hspace{1cm}z=r(\zeta\bar{\zeta}-1)Z,
\end{equation*}
where $Z=1/(1+\zeta\bar{\zeta})$. Now if we perform a translation
\begin{equation*}
t\rightarrow t'=t+a,\hspace{0.5cm}x\rightarrow x'=x+b,\hspace{0.5cm}y\rightarrow y'=y+c,
\hspace{0.5cm}z\rightarrow z'=z+d,
\end{equation*}
it is easy to get 
\begin{align*}
&u=t-r\rightarrow u'=u+Z(A+B\zeta+\bar{B}\bar{\zeta}+C\zeta\bar{\zeta})+O(r^{-1}),\\
&\zeta\rightarrow\zeta'=\zeta+O(r^{-1}).
\end{align*}
with $A=a+d$, $B=b-ic$ and $C=a-d$. Thus, the nomenclature \lq translation\rq\hspace{0.1mm} is consistent 
with that for the space-time translations in Minkowski space-time. In fact we have just shown that any 
translation in the ordinary sense induces a translation (i.e. an element of $\mathscr{T}$) on $\mathscr{I}^+$.\\
It is easy to verify that for any $b=(\Lambda,\alpha)\in\mathscr{B}$ and for any 
$t=(\mathbb{I},\alpha_t)\in\mathscr{T}$ we have 
\begin{equation*}
btb^{-1}(\zeta,u)=(\zeta,u+K\alpha_t)=t'(\zeta,u),
\end{equation*}
with 
\begin{equation*}
t'=(\mathbb{I},K\alpha_t(\zeta,\bar{\zeta}))\in\mathscr{T}.
\end{equation*}
Note that it is not obvious that $K\alpha_t$ is still a function of $\theta$ and $\phi$ containing only 
zeroth- and first-order harmonics. A proof of this result will be given in section \ref{sect:6.7}. Taking for the moment this result for true it follows that:
\begin{prop}$\\ $
The translations $\mathscr{T}$ form a normal four-dimensional subgroup of $\mathscr{B}$:
\begin{equation*}
b\mathscr{T}b^{-1}=\mathscr{T}\hspace{0.5cm}\mathrm{for}\hspace{1.2mm} \mathrm{all}\hspace{1.2mm} b\in\mathscr{B},
\end{equation*}
and clearly
\begin{equation*}
s\mathscr{T}s^{-1}=\mathscr{T}\hspace{0.5cm}\mathrm{for}\hspace{1.2mm} \mathrm{all}\hspace{1.2mm} s\in\mathscr{S},
\end{equation*}
\end{prop}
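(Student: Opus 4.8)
The plan is to establish the two displayed identities in the final Proposition by combining the explicit formula $btb^{-1}(\zeta,u)=(\zeta,u+K\alpha_t)$ (already derived just above the statement) with the claimed fact, to be proven in section \ref{sect:6.7}, that $K\alpha_t$ is again a pure $l=0,1$ harmonic function whenever $\alpha_t$ is. First I would recall that $\mathscr{T}$ is, by Definition, precisely the set of supertranslations $(\mathbb{I},\alpha_t)$ with $\alpha_t$ of the form \eqref{eqn:207}, i.e. spanned by $Y_{00}$ and the three $Y_{1m}$. So to prove $b\mathscr{T}b^{-1}\subseteq\mathscr{T}$ it suffices to show that for each generator $t=(\mathbb{I},\alpha_t)\in\mathscr{T}$ and each $b=(\Lambda,\alpha)\in\mathscr{B}$ the conjugate $btb^{-1}$ is an element $(\mathbb{I},K\alpha_t)$ whose translation function $K\alpha_t$ lies in the same four-dimensional space; reversing the roles of $b$ and $b^{-1}$ (and using $(b^{-1})^{-1}=b$) gives the opposite inclusion, hence equality.

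The key computational step is the conjugation itself. I would write $b^{-1}=(\Lambda^{-1},-K\alpha)$ from \eqref{eqn:208}, act on $(\zeta,u)$ with $b^{-1}$, then with $t$, then with $b$, and check that the $\mathrm{PSL}(2,\mathbb{C})$ part returns to the identity (the Lorentz factors cancel since $t$ carries $\mathbb{I}$) while the retarded-time part collects to $u+K\alpha_t$, exactly as in the one-line derivation already displayed in the text. That reduces everything to the harmonic-content claim. For that claim I would invoke the transformation law of $\alpha_t$ under a fractional linear map: writing $\alpha_t=(A+B\zeta+\bar B\bar\zeta+C\zeta\bar\zeta)/(1+\zeta\bar\zeta)$ and $K(\zeta,\bar\zeta)=(1+\zeta\bar\zeta)/\big((a\zeta+b)(\bar a\bar\zeta+\bar b)+(c\zeta+d)(\bar c\bar\zeta+\bar d)\big)$, one sees that $K\alpha_t = (A+B\zeta+\bar B\bar\zeta+C\zeta\bar\zeta)/\big((a\zeta+b)(\bar a\bar\zeta+\bar b)+(c\zeta+d)(\bar c\bar\zeta+\bar d)\big)$, which is precisely the form obtained by pulling back a linear expression in the ambient Minkowski coordinates $(t,x,y,z)$ through the inverse Lorentz transformation $\psi(\varphi(f))^{-1}$ of \eqref{eqn:iso}; since Lorentz transformations are linear, the pullback of an $l=0,1$ combination is again an $l=0,1$ combination, so $K\alpha_t$ has the form \eqref{eqn:207} with new parameters $\epsilon_0',\dots,\epsilon_3'$. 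This is the content deferred to section \ref{sect:6.7}, and here it may simply be cited. Finally, for the second displayed identity, I would specialize $b=s\in\mathscr{S}$: then $\Lambda=\mathbb{I}$ and $K\equiv1$, so $s\mathscr{T}s^{-1}=\mathscr{T}$ is immediate (in fact $sts^{-1}=t$, since supertranslations commute), and normality of the four-dimensional $\mathscr{T}$ follows.

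The main obstacle is really the harmonic-content lemma, i.e. that $K\alpha_t$ never acquires $l\ge2$ pieces under conjugation; without it the inclusion $b\mathscr{T}b^{-1}\subseteq\mathscr{T}$ fails and one only gets $b\mathscr{T}b^{-1}\subseteq\mathscr{S}$. Everything else — the conjugation bookkeeping, the cancellation of Lorentz factors, the use of \eqref{eqn:208} — is routine group arithmetic already carried out for $\mathscr{S}$ in the proof of Proposition \ref{prop:BMS1}. Since the statement explicitly promises the proof of $K\alpha_t\in\mathscr{T}$ in section \ref{sect:6.7}, I would present the argument modulo that fact, flagging clearly where it is invoked, and note that the cleanest route to it is the geometric identification of $\mathscr{T}$ with Minkowski translations established by the limit computation $\zeta'=(a\zeta+b)/(c\zeta+d)+O(r^{-1})$ together with the explicit angle-dependent rescaling of $u$ already described in section \ref{sect:6.5}.
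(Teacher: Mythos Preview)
Your proposal is correct and follows essentially the same approach as the paper: compute the conjugate $btb^{-1}=(\mathbb{I},K\alpha_t)$ using the formula already established for general supertranslations, then invoke the deferred lemma (proved in section \ref{sect:6.7}) that $K\alpha_t$ retains the form \eqref{eqn:207}. You add two useful details the paper leaves implicit --- the observation that the inclusion $b\mathscr{T}b^{-1}\subseteq\mathscr{T}$ for all $b$ automatically yields equality by replacing $b$ with $b^{-1}$, and the remark that the second identity is trivial because $\mathscr{S}$ is Abelian --- but the core argument is the same.
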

We have the following inclusion relations:
\begin{equation*}
\mathscr{T}\subset\mathscr{S}\subset\mathscr{B}.
\end{equation*}
The next step will be to investigate the group structure of $\mathscr{B}$. It is easy to show 
that for any $b\in\mathscr{B}$ there exists a unique $\Lambda\in\mathscr{L}$ and $s\in\mathscr{S}$ 
such that $b=\Lambda s$. In fact given 
\begin{equation*}
\Lambda=(\Lambda,0)\in\mathscr{L},\hspace{1cm}s=(\mathbb{I},\alpha)\in\mathscr{S},
\end{equation*}
we have that
\begin{equation*}
\Lambda s(\zeta,u)=\Lambda(\zeta,u+\alpha)=(f(\zeta),K[u+\alpha(\zeta,\bar{\zeta})])=b(\zeta,u)
\end{equation*}
with $b=(\Lambda,\alpha)$. The uniqueness results from the observation that $\mathscr{L}\cap\mathscr{S}=\{e\}$ where 
$e=(\mathbb{I},0)$ is the identity in $\mathscr{B}$, since if $fs=f's'$, then $f'^{-1}f=s's^{-1}\in\mathscr{L}\cap\mathscr{S}$ 
implying $f'=f$ and $s'=s$. Hence we have
\begin{equation}
\label{eqn:233}
\mathscr{B}=\mathscr{L}\mathscr{S}.
\end{equation}
Furthermore, the supertranslations $\mathscr{S}$ form an (Abelian) normal subgroup of $\mathscr{B}$, according 
to \ref{prop:BMS1}. Thus we can already state that, by definition of semi-direct product, 
\begin{prop}$\\ $
The BMS group is a semi-direct product of the conformal group of the unit $2$-sphere with the 
supertranslations group, i.e.
\begin{equation*}
\mathscr{B}=\mathscr{L}\rtimes\mathscr{S}.
\end{equation*}
\end{prop}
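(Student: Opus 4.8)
The plan is to verify that the BMS group $\mathscr{B}$ satisfies the defining properties of a semi-direct product $\mathscr{L}\rtimes\mathscr{S}$, which requires three things: that $\mathscr{S}$ is a normal subgroup of $\mathscr{B}$, that $\mathscr{L}$ is a subgroup of $\mathscr{B}$, and that $\mathscr{B}=\mathscr{L}\mathscr{S}$ with $\mathscr{L}\cap\mathscr{S}=\{e\}$. The first is already available as Proposition \ref{prop:BMS1}; the second is immediate from the isomorphism $\mathscr{L}\cong\mathrm{PSL}(2,\mathbb{C})$ and the embedding $\Lambda\mapsto(\Lambda,0)$; and the third has just been established in the factorization \eqref{eqn:233} together with the uniqueness argument using $\mathscr{L}\cap\mathscr{S}=\{e\}$. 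So in essence the proof is a matter of collecting these ingredients and checking that the group multiplication in $\mathscr{B}$, written in the $(\Lambda,\alpha)$-notation, is exactly the one prescribed by the semi-direct product construction.

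First I would recall the action of a generic element $b=(\Lambda,\alpha)$ on $(\zeta,u)$, namely $b(\zeta,u)=(f(\zeta),K[u+\alpha(\zeta,\bar\zeta)])$, where $f\in\mathrm{PSL}(2,\mathbb{C})$ corresponds to $\Lambda$ and $K$ is the associated conformal factor \eqref{eqn:170}. Composing $b_1=(\Lambda_1,\alpha_1)$ with $b_2=(\Lambda_2,\alpha_2)$ and tracking how the conformal factors multiply ($K_{12}=K_1(K_2\circ f_2^{-1}\text{-type factor})$, compare the cocycle relation $\alpha_{13}=\alpha_{12}+\alpha_{23}/K_{12}$ recorded before Proposition on the group structure), I would read off that the product has the form $(\Lambda_1\Lambda_2,\ \alpha_2+\Lambda_2^{-1}\!\cdot\!\alpha_1)$ for a suitable action of $\mathscr{L}$ on $\mathscr{S}$ by $(\Lambda\cdot\alpha)(\zeta,\bar\zeta)=K_\Lambda\,\alpha(f^{-1}(\zeta),\overline{f^{-1}(\zeta)})$. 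That this really defines a left action of $\mathscr{L}$ on the abelian group $\mathscr{S}$ is precisely the content of the conjugation computation $bsb^{-1}=s'$ with $s'=(\mathbb{I},K\beta)$ carried out just above Proposition \ref{prop:BMS1}; I would simply point to it rather than redo it. Combined with \eqref{eqn:233} and the uniqueness of the decomposition $b=\Lambda s$, this identifies $\mathscr{B}$ with the set $\mathscr{L}\times\mathscr{S}$ equipped with the semi-direct product multiplication, which is the claim.

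The main obstacle, such as it is, is bookkeeping rather than conceptual: one must make sure the conformal factor $K$ appearing in $bsb^{-1}$ is evaluated at the correct (transformed) point on the sphere, so that the composite map $\Lambda\mapsto(\text{action on }\mathscr{S})$ is genuinely a homomorphism $\mathscr{L}\to\mathrm{Aut}(\mathscr{S})$ and not merely a cocycle-twisted assignment. Here the fact that $K$ transforms multiplicatively under composition in $\mathrm{PSL}(2,\mathbb{C})$ — i.e. $K_{f\circ g}(\zeta,\bar\zeta)=K_f(g(\zeta),\overline{g(\zeta)})\,K_g(\zeta,\bar\zeta)$, which follows directly from \eqref{eqn:170} — is exactly what is needed, and it should be stated explicitly. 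Once that multiplicativity is in hand, associativity of $\ast$ on $\mathscr{L}\rtimes\mathscr{S}$ is automatic, and the verification that the abstract semi-direct product reproduces the concrete composition law on $(\zeta,u)$ is a short direct check. I would close by remarking that this realization of $\mathscr{B}$ makes transparent why $\mathscr{S}$, being the kernel of the natural projection $\mathscr{B}\to\mathscr{L}$, is normal, consistently with Proposition \ref{prop:BMS1}.
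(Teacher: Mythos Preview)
Your proposal is correct and follows essentially the same approach as the paper: the paper simply collects the three ingredients you list---normality of $\mathscr{S}$ (Proposition~\ref{prop:BMS1}), the factorization $\mathscr{B}=\mathscr{L}\mathscr{S}$ of \eqref{eqn:233}, and the trivial intersection $\mathscr{L}\cap\mathscr{S}=\{e\}$---and then states the result ``by definition of semi-direct product''. Your additional bookkeeping on the multiplicativity of $K$ and the explicit composition law is material the paper places \emph{after} the proposition (when it introduces the right action $\sigma$ and derives the product $(\Lambda_1\Lambda_2,\alpha_2+\alpha_1\Lambda_2)$), so you have anticipated that discussion; note only that the paper phrases it as a \emph{right} action $\alpha\mapsto\alpha\Lambda$ rather than the left action you sketch, a cosmetic difference.
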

We can say more by specifying an action of $\mathscr{L}$ on $\mathscr{S}$ and, hence, by specifying a product rule 
for two elements of $\mathscr{B}$. Let $\mathscr{S}$ be the vector space of real functions on the Riemann sphere. 
Let $\sigma$ be a smooth right action of $\mathscr{L}$ on $\mathscr{S}$ defined as 
\begin{subequations}
\label{eqn:212}
\begin{equation}
\sigma:(\Lambda,\alpha)\in\mathscr{L}\times\mathscr{S}\longrightarrow\sigma_{\Lambda}(\alpha)
\equiv \alpha\Lambda\in\mathscr{S}
\end{equation}
such that 
\begin{equation}
\alpha(\zeta,\bar{\zeta})\Lambda=K^{-1}\alpha(f(\zeta),\bar{f}(\bar{\zeta})),
\end{equation}
\end{subequations}
where $K$ is the conformal factor associated with $f$, the element of $\mathrm{PSL}(2,\mathbb{C})$ that 
corresponds to $\Lambda$. Then it is easy to verify that the composition law for the elements of $\mathscr{B}$ is 
\begin{equation*}
b_1\cdot b_2=(\Lambda_1,\alpha_1)\cdot(\Lambda_2,\alpha_2)=(\Lambda_1\cdot\Lambda_2,\alpha_2+\alpha_1\Lambda_2).
\end{equation*}
Note that the inverse of $b$ in \eqref{eqn:208} may be written as
\begin{equation*}
b^{-1}=(\Lambda^{-1},\left[\alpha \Lambda^{-1}\right]^{-1}).
\end{equation*}
Thus the BMS group is the \textit{right} semi-direct product \cite{Col14} of $\mathscr{L}$ with $\mathscr{S}$ 
under the action $\sigma$, i.e.
\begin{equation}
\label{eqn:209}
\mathscr{B}=\mathscr{L}\rtimes_{\sigma}\mathscr{S} .
\end{equation}
Historically, this semi-direct product structure was realized by \cite{Cant}. Succesively this idea was developed 
by \cite{GerBMS} who gave an incorrect formula for the action \eqref{eqn:212}. Eventually the mistake was amended 
by \cite{McC2}, who defined a good action to describe the semi-direct product structure of the BMS group. 
However, the idea used here of giving a right action and hence of describing the BMS group as a \textit{right} 
semi-direct product was not developed by any of these authors.\\
Furthermore, it can be shown that from the above discussion it follows that, by virtue 
of the first isomorphism theorem,
\begin{equation}
\label{eqn:234}
\mathscr{L}\cong\mathscr{B}/\mathscr{S},
\end{equation}
i.e. $\mathscr{L}$ is the factor group of $\mathscr{B}$ with respect to its normal subgroup $\mathscr{S}$. 
\begin{oss}$\\ $
Note that the structure of the BMS group is similar to that of Poincar\'e group, denoted by $\mathscr{P}$. 
In fact the Poincar\'e group can be expressed as the semi-direct product of the connected component 
of the Lorentz group $\mathscr{L}$ and the translations group $\mathscr{T}$, the former being the factor group of 
$\mathscr{P}$ with respect to the latter, i.e. $\mathscr{L}\cong\mathscr{P}/\mathscr{T}$. The action of 
$\mathscr{L}$ on $\mathscr{T}$ is the \lq natural\rq\hspace{0.1mm} one, i.e. the usual multiplication of 
an element $\Lambda\in \mathscr{L}$ with a vector $b\in\mathscr{T}$.
\end{oss}
\begin{thm}$\\ $
If $N'$ is a 4-dimensional normal subgroup of $\mathscr{B}$, then $N'$ is contained in $\mathscr{S}$.
\end{thm}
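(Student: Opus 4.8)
The plan is to show that any 4-dimensional normal subgroup $N'$ of $\mathscr{B}$ must consist entirely of supertranslations, i.e. $N'\subseteq\mathscr{S}$. First I would use the decomposition $\mathscr{B}=\mathscr{L}\rtimes_{\sigma}\mathscr{S}$ from \eqref{eqn:209}, so that every $b\in N'$ can be written uniquely as $b=(\Lambda,\alpha)$ with $\Lambda\in\mathscr{L}\cong\mathrm{PSL}(2,\mathbb{C})$ and $\alpha\in\mathscr{S}$. The strategy is to assume, for contradiction, that some element $b_0=(\Lambda_0,\alpha_0)\in N'$ has $\Lambda_0\neq\mathbb{I}$, and then exploit normality by conjugating $b_0$ by a carefully chosen one-parameter family of elements of $\mathscr{B}$ to generate, inside $N'$, infinitely many independent elements — contradicting $\dim N'=4$.

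Next I would carry out the conjugation computation. Using the composition law $b_1\cdot b_2=(\Lambda_1\Lambda_2,\alpha_2+\alpha_1\Lambda_2)$ and the inverse formula $b^{-1}=(\Lambda^{-1},[\alpha\Lambda^{-1}]^{-1})$, conjugating $b_0$ by a pure supertranslation $s=(\mathbb{I},\beta)$ yields an element whose $\mathscr{L}$-part is still $\Lambda_0$ but whose $\mathscr{S}$-part is shifted by $\beta-\beta\Lambda_0$, i.e. by $\beta - K^{-1}_{\Lambda_0}(\beta\circ f_0)$. Hence $s b_0 s^{-1} b_0^{-1}=(\mathbb{I},\,\beta-\beta\Lambda_0)\in N'$ for every $\beta\in\mathscr{S}$. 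Since $\Lambda_0\neq\mathbb{I}$, the linear operator $\beta\mapsto\beta-\beta\Lambda_0$ on the infinite-dimensional space $\mathscr{S}$ is not identically zero; I would argue its image is infinite-dimensional (a Möbius transformation other than the identity does not fix all functions on the sphere, and in fact $\beta-\beta\Lambda_0$ ranges over an infinite-dimensional subspace because $\Lambda_0$ has at most two fixed points on the Riemann sphere while $\mathscr{S}$ separates points). This would already force $N'\cap\mathscr{S}$ to be infinite-dimensional, contradicting $\dim N'=4$. Therefore every element of $N'$ has trivial $\mathscr{L}$-part, which is precisely the statement $N'\subseteq\mathscr{S}$.

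I expect the main obstacle to be making rigorous the claim that the operator $T_{\Lambda_0}:\beta\mapsto\beta-\beta\Lambda_0$ has infinite-dimensional image whenever $\Lambda_0\neq\mathbb{I}$. One must rule out the possibility that the ``shift subspace'' $T_{\Lambda_0}(\mathscr{S})$ is accidentally finite-dimensional; this requires knowing the structure of the $\mathscr{L}$-action $\sigma$ on $\mathscr{S}$ given by $\alpha\Lambda=K^{-1}(\alpha\circ f)$ well enough to see that $\ker T_{\Lambda_0}$ has infinite codimension. A clean way around this is to test against the higher spherical harmonics: since supertranslations expand in $Y_{lm}$ with $l$ arbitrarily large, and conformal rescaling by $K^{-1}$ together with a nontrivial Möbius map cannot preserve the finite-dimensional spaces of low-$l$ harmonics for all $l$ simultaneously, one can produce, for each large $l$, an element of $T_{\Lambda_0}(\mathscr{S})$ with a nonzero component of degree $\geq l$. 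An alternative, perhaps slicker route is to conjugate instead by pure Lorentz elements and use the transitivity of $\mathscr{L}$ on a suitable configuration to reach the same conclusion; but the supertranslation-conjugation argument above seems the most direct. Once the infinite-dimensionality of $N'\cap\mathscr{S}$ is contradicted, the result follows, and it dovetails with the picture that $\mathscr{T}$ is the \emph{preferred} 4-dimensional normal subgroup inside $\mathscr{S}$.
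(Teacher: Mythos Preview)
Your approach is correct, but it is a genuinely different route from the paper's.

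The paper argues via the quotient homomorphism $\pi:\mathscr{B}\to\mathscr{B}/\mathscr{S}\cong\mathscr{L}$. Since $N'$ is normal in $\mathscr{B}$, its image $\pi(N')$ is a normal subgroup of $\mathscr{L}\cong\mathrm{PSL}(2,\mathbb{C})$. But $\mathscr{L}$ is simple, so $\pi(N')$ is either trivial or all of $\mathscr{L}$. The latter would force $\dim N'\geq\dim\mathscr{L}=6>4$, a contradiction; hence $\pi(N')=\{e\}$, i.e.\ $N'\subseteq\mathscr{S}$. This is a two-line group-theoretic argument that never touches the explicit action $\sigma$.

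Your proof instead works inside $\mathscr{B}$ by explicit commutators: assuming $b_0=(\Lambda_0,\alpha_0)\in N'$ with $\Lambda_0\neq\mathbb{I}$, conjugation by $s=(\mathbb{I},\beta)$ yields $sb_0s^{-1}b_0^{-1}=(\mathbb{I},\beta-\beta\Lambda_0^{-1})\in N'$ (your formula has $\Lambda_0$ where it should be $\Lambda_0^{-1}$, but this is harmless). You then need the image of $T_{\Lambda_0}:\beta\mapsto\beta-\beta\Lambda_0^{-1}$ to be infinite-dimensional. This is true and your sketch is sound---for instance, for a rotation about the $z$-axis by angle $\psi$ one has $K=1$ and $T_{\Lambda_0}(e^{im\phi}g(\theta))=(1-e^{im\psi})e^{im\phi}g(\theta)$, whose image is manifestly infinite-dimensional; any nontrivial $\Lambda_0$ is conjugate in $\mathscr{L}$ to such an element or to a boost, where the same phenomenon occurs---but it does require this extra analytic check.

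What each approach buys: the paper's quotient-and-simplicity argument is cleaner and uses nothing beyond the standard fact that $\mathrm{PSL}(2,\mathbb{C})$ has no proper nontrivial normal subgroups, together with dimension counting. Your argument is more constructive---it actually exhibits the infinitely many independent supertranslations sitting inside $N'$---and it foregrounds the concrete action $\sigma$ of $\mathscr{L}$ on $\mathscr{S}$, which is thematically relevant to the later discussion of good cuts. The trade-off is that you must verify the infinite-dimensionality of $\mathrm{im}\,T_{\Lambda_0}$, which the paper's route sidesteps entirely.
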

\begin{proof}
Consider the image $N'/\mathscr{S}$ of $N'$ under the homomorphism $\mathscr{B}\rightarrow\mathscr{B}/\mathscr{S}$. 
Since $N'$ by hypothesis is a normal subgroup of $\mathscr{B}$, $N'/\mathscr{S}$ is a normal subgroup of 
$\mathscr{B}/\mathscr{S}$ and hence, by proposition \ref{prop:BMS1}, a subgroup of the connected component 
of the Lorentz group $\mathscr{L}$. However, the only normal subgroups of $\mathscr{L}$ are $\mathscr{L}$ 
itself and the identity $e$ of $\mathscr{L}$. Then $N'$ must be $6$-dimensional, contrary to hypothesis. 
Therefore $N'/\mathscr{S}=e$; $N'$ is thus contained in $\mathscr{S}$. 
\end{proof}
\section{BMS Lie Algebra}
\label{sect:6.6}
In this section we are going to investigate the Lie Algebra of the BMS group. At first we consider the 
generators of $\mathrm{PSL}(2,\mathbb{C})$ of \eqref{eqn:220}. For an infinitesimal conformal transformation we know 
that the $x^A$ coordinates change as
\begin{equation*}
x^A\rightarrow x'^A=x^A+f^A,
\end{equation*}
where $f^A$ is a conformal Killing vector of \eqref{eqn:157} of the unit 2-sphere. Furthermore, from 
\eqref{eqn:220} and taking into account \eqref{eqn:214} an infinitesimal transformation for $u$ reads as
\begin{equation*}
u\rightarrow u'=Ku=e^{\frac{1}{2}D_Af^A}u\simeq u+\frac{u}{2}D_Af^A.
\end{equation*}
Thus, the generator of transformation \eqref{eqn:220} is
\begin{equation}
\label{eqn:221}
\xi_R=f^A\partial_A+\frac{u}{2}D_Af^A\partial_u.
\end{equation}
To see how their Lie algebra closes consider the Lie brackets of two of them, $\xi_{R_1}$ and $\xi_{R_2}$:
\begin{equation*}
[\xi_{R_1},\xi_{R_2}]=\left[f_1^A\partial_A+\frac{u}{2}D_Bf_1^B\partial_u,f_2^C\partial_C
+\frac{u}{2}D_Cf_2^C\partial_u\right]
\end{equation*}
\begin{equation*}
=(f_1^A\partial_Af_2^C-f_2^A\partial_Af_1^C)\partial_C+\frac{u}{2}(f_1^A\partial_AD_Cf_2^C
-f_2^C\partial_CD_Bf_1^B)\partial_u.
\end{equation*}
After some calculation the term proportional to $\partial_u$ becomes
\begin{equation*}
\frac{u}{2}D_C(f_1^A\partial_Af_2^C-f_2^A\partial_Af_1^C)+\frac{u}{2}(\partial_A\Gamma^{C}{}_{CB})
(f_1^Af_2^B-f_1^Bf_2^A).
\end{equation*}
The last term in the previous equation vanishes since it can be shown by direct calculation 
that for the metric $q_{AB}$
\begin{equation*}
\partial_A\Gamma^{C}{}_{CB}=-\frac{1}{\sin^2\theta}\delta^{\theta}_{A}\delta_B^{\theta},
\end{equation*}
and hence
\begin{equation*}
\left(\partial_A\Gamma^{C}{}_{CB}\right)(f_1^Af_2^B-f_1^Bf_2^A)=-\frac{1}{\sin^2\theta}(f_1^{\theta}f_2^{\theta}
-f_1^{\theta}f_2^{\theta})=0.
\end{equation*}
Finally we get 
\begin{subequations}
\label{eqn:222}
\begin{equation}
\label{eqn:223}
[\xi_{R_1},\xi_{R_2}]=\xi_{\hat{R}}=\hat{f}^A\partial_A+\frac{u}{2}D_A\hat{f}^A\partial_u
\end{equation}
where
\begin{equation}
\label{eqn:224}
\hat{f}^A=f_1^B\partial_Bf_2^A-f_2^B\partial_Bf_1^A.
\end{equation}
\end{subequations}
We take now into account the generators of supertranslations. It is clear from \eqref{eqn:225} that these are
\begin{equation*}
\xi_T=\alpha\partial_u.
\end{equation*}
where $\alpha$ is an arbitrary function of $\theta$ and $\phi$. It follows that the Lie brackets of two 
generators, $\xi_{T_1}$ and $\xi_{T_2}$ vanish, i.e.
\begin{equation}
\label{eqn:228}
[\xi_{T_1},\xi_{T_2}]=0,
\end{equation}
that is just a restatement that the supertranslations group is Abelian. The only thing left to do is to 
calculate the Lie brackets of $\xi_{R}$ and $\xi_{T}$. It is easy to see that 
\begin{subequations}
\label{eqn:226}
\begin{equation}
\label{eqn:227}
[\xi_R,\xi_T]=[f^A\partial_A+\frac{u}{2}D_Bf^B\partial_u,\alpha\partial_u]
=\xi_{\hat{T}}=\hat{\alpha}\partial_u,
\end{equation}
where
\begin{equation}
\label{eqn:228}
\hat{\alpha}=f^A\partial_A\alpha-\frac{\alpha}{2}D_Bf^B.
\end{equation}
\end{subequations}
If we consider now $\xi$ as defined in \eqref{eqn:158} it turns out $\xi=\xi_R+\xi_T$. 
From the above discussions one obtains that
\begin{eqnarray}
\label{eqn:229}
[\xi_1,\xi_2]&=& [\xi_{R_1},\xi_{R_2}]+[\xi_{R_1},\xi_{T_2}]+[\xi_{T_1},\xi_{R_2}]
\nonumber \\
&=& \hat{f}^A\partial_A+\frac{u}{2}D_A\hat{f}^A\partial_u+(\hat{\alpha}_{2}-\hat{\alpha}_1)\partial_u,
\end{eqnarray}
with 
\begin{equation*}
\hat{\alpha}_{2}=f^A_1\partial_A\alpha_2-\frac{\alpha_2}{2}D_Bf^B_1,\hspace{1cm}\hat{\alpha}_{1}
=f^A_2\partial_A\alpha_1-\frac{\alpha_1}{2}D_Bf^B_2.
\end{equation*}
To sum up the Lie algebra of the BMS group, $\mathfrak{bms_4}$, is
\begin{align*}
&[\xi_{R_1},\xi_{R_2}]=\xi_{\hat{R}},\hspace{0.5cm}\mathrm{with}\hspace{0.5cm}\hat{f}^A
=f_1^B\partial_Bf_2^A-f_2^B\partial_Bf_1^A; \\
&[\xi_{T_1},\xi_{T_2}]=0;\\
&[\xi_R,\xi_T]=\xi_{\hat{T}},\hspace{0.78cm}\mathrm{with}\hspace{0.5cm}\hat{\alpha}
=f^A\partial_A\alpha-\frac{\alpha}{2}D_Bf^B.
\end{align*}
Since, as shown in the previous section, the BMS group is a semi-direct product, it follows that the 
BMS Lie algebra, $\mathfrak{bms_4}$, should be taken to be the semi-direct sum of the Lie algebra of 
conformal Killing vectors $X=f^A\partial_A$ of the Riemann sphere, denoted by $\mathfrak{so(1,3)}$ (since it 
can be taken to be the algebra of $\mathscr{L}$) with that of the functions $\alpha(x^A)$ on the Riemann 
sphere, which we denote by $\mathscr{S}$, the supertranslations group being Abelian. Given an element 
$X=f^A\partial_A\in\mathfrak{so(1,3)}$ ($f^A$ being a generator of conformal transformations in \eqref{eqn:157}) 
we know that the exponential map associated to it, $e^{X}$, is an element of $\mathscr{L}$. 
Then consider the 1-parameter group of transformations in $\mathscr{S}$ defined as
\begin{equation}
\label{eqn:219}
\sigma_{e^{tX}}(\alpha)=\alpha e^{tX},
\end{equation}
where $\sigma$ is that of \eqref{eqn:212}. Consider the map
\begin{equation*}
\Sigma:f^A\partial_A\in\mathfrak{so(1,3)}\longrightarrow\Sigma_{f^A\partial_A}\in\mathrm{End}\mathscr{S},
\end{equation*}
such that 
\begin{equation*}
\Sigma_{f^A\partial_A}:\alpha\in\mathscr{S}\longrightarrow\Sigma_{f^A\partial_A}(\alpha)=\frac{d}{dt}	
\left.(\sigma_{e^{tf^A\partial_A}}(\alpha))\right|_{t=0}\in\mathscr{S}.
\end{equation*}
\\
Note that $\Sigma_{f^A\partial_A}(\alpha)$ is the infinitesimal generator of \eqref{eqn:219}. 
Hence \cite{OblakBMS} we have 
\begin{equation}
\label{eqn:213}
\mathfrak{bms_4}=\mathfrak{so(1,3)}\oplus_{\Sigma}\mathscr{S}.
\end{equation}
The Lie algebra $\mathfrak{bms_4}$ is determined by three arbitrary functions $f^A$ and $\alpha$ on the circle. 
Thus, defining $X=f^A\partial_A$ and labelling elements of \eqref{eqn:213} as pairs $(X,\alpha)$, 
we know that the Lie brackets in $\mathfrak{so(1,3)}\oplus_{\Sigma}\mathscr{S}$ are
\begin{equation}
\label{eqn:215}
[(X_1,\alpha_1),(X_2,\alpha_2)]=([X_1,X_2],\Sigma_{f_1^A\partial_A}(\alpha_2)-\Sigma_{f^B_2\partial_B}(\alpha_1)).
\end{equation}
Equation \eqref{eqn:215} follows from the fact the $\mathscr{S}$ is Abelian, otherwise there would be an extra 
term involving the commutator of the two elements $\alpha_1$ and $\alpha_2$.
Since we have, using \eqref{eqn:212} and \eqref{eqn:214}
\begin{equation*}
\Sigma_{f^A\partial_A}(\alpha)(x^B)=\frac{d}{dt}\left.(K^{-1}_{e^{tf^A\partial_A}}\alpha
(e^{tf^C\partial_C}x^B))\right|_{t=0}=\frac{d}{dt}\left.(e^{-\frac{1}{2}tD_Af^A}
\alpha(e^{tf^C\partial_C}x^B))\right|_{t=0}
\end{equation*}
\begin{equation*}
=-\frac{\alpha}{2}D_Af^A+ f^B\partial_B\alpha,
\end{equation*}
then \eqref{eqn:215} may be written as
\begin{subequations}
\label{eqn:230}
\begin{equation}
\label{eqn:231}
[(X_1,\alpha_1),(X_2,\alpha_2)]=(\hat{X},\hat{\alpha}),
\end{equation}
with 
\begin{equation}
\label{eqn:232}
\hat{f}^A=f_1^B\partial_Bf_2^A-f_2^B\partial_Bf_1^A,\hspace{1cm}\hat{\alpha}=f_1^B\partial_B\alpha_2
-\frac{\alpha_2}{2}D_Af^A_1-(1\leftrightarrow2),
\end{equation}
\end{subequations}
as remarked in \cite{Barn2010,Barn2010a}.
\begin{oss}$\\ $
Note that this result, obtained from the theory of semi-direct product of groups and their Lie algebra, 
is in complete agreement with that obtained just by looking at the generators, expressed in \eqref{eqn:229}. 
Note also that $\hat{f}^A$ of \eqref{eqn:232} coincides with of \eqref{eqn:224} and that 
$\hat{\alpha}=\hat{\alpha}_2-\hat{\alpha}_1$.
\end{oss}
Depending on the space of functions under consideration, there are many options which define what is actually 
meant by $\mathfrak{bms_4}$. The approach that will be followed in this work is originally due to 
\cite{Sachs1} and successively amended by \cite{Ant91}. Another approach, based on the Virasoro algebra, 
can be found in \cite{Barn2010}. \\
In general, we consider any $S$-dimensional Lie transformation group of a $R$-dimensional space. Let 
the coordinates of the space be $y^{\alpha}$ $(\alpha=1,..,R)$ and the parameters of the group be 
$z^{\mu}$ $(\mu=1,...,S)$, where $z^{\mu}=0$ is the identity of the group. Then the transformations have the form 
\begin{equation*}
y'^{\alpha}=f^{\alpha}(y^{\beta};z^{\mu}),\hspace{1cm}\mathrm{where}\hspace{1cm}f^{\alpha}(y^{\beta};0)=y^{\alpha}.
\end{equation*}
The functions $f^{\alpha}$ are assumed to be twice differentiable. The $S$ generators of the group are 
the vector fields given by
\begin{equation}
\label{eqn:217}
P_{\mu}=\left.\frac{\partial f^{\alpha}}{\partial z^{\mu}}\right|_{z^{\mu}=0}\frac{\partial}{\partial y^{\alpha}}.
\end{equation}
Applying these ideas to the BMS group, with the Sachs's notations \cite{Sachs1}, 
it gives for the supertranslations, using the expansion \eqref{eqn:206}:
\begin{equation*}
P_{lm}=Y_{lm}(\theta,\phi)\frac{\partial}{\partial u},\hspace{1cm}P_{lm}=(-1)^m\bar{P}_{l-m},
\end{equation*}
and hence
\begin{equation*}
[P_{lm},P_{nr}]=0,
\end{equation*}
i.e. two supertranslations commute. \\
To find the generators of conformal transformations we have to be careful. We know that any conformal 
transformation has the form 
\begin{equation*}
\zeta'=\frac{a\zeta+b}{c\zeta+d},\hspace{1cm}\zeta=e^{i\phi}\cot\frac{\theta}{2}.
\end{equation*}
By direct calculations one obtains that the following equations hold for $\theta'$, $\phi'$ and $u'$:
\begin{subequations}
\label{eqn:216}
\begin{align}
&\theta'=2\arctan\left[\left(\frac{|c\zeta+d|^2}{|a\zeta+b|^2}\right)^{1/2}\right],\\ \nonumber\\
&\phi'=\arctan\left[\frac{\mathrm{Im}\{(a\zeta+b)(\bar{c}\bar{\zeta}+\bar{d})\}}
{\mathrm{Re}\{(a\zeta+b)(\bar{c}\bar{\zeta}+\bar{d})\}}\right],\\ \nonumber\\
&u'=\frac{1+|\zeta|^2}{|a\zeta+b|^2+|c\zeta+d|^2}u.
\end{align}
\end{subequations}
On denoting by $x$ the parameter of the transformation it is clear that $a,b,c,d$ are functions of $x$ such 
that $a(0)=d(0)=1$ and $c(0)=b(0)=0$. We have to apply \eqref{eqn:217} to \eqref{eqn:216}. 
It is easy to verify that 
\begin{align*}
\left.\frac{d\theta'}{dx}\right|_{x=0}&=\frac{\cos^3\frac{\theta}{2}}{\sin\frac{\theta}{2}}
\frac{d}{dx}\left.\left[\frac{|c|^2|\zeta|^2+|d|^2+c\bar{d}\zeta+\bar{c}d\bar{\zeta}}
{|a|^2|\zeta|^2+|b|^2+a\bar{b}\zeta+\bar{a}b\bar{\zeta}}\right]\right|_{x=0},\\ \\
\left.\frac{d\phi'}{dx}\right|_{x=0}&=\cos^2\phi\frac{d}{dx}\left.\left[\frac
{\mathrm{Im}\{a\bar{c}|\zeta|^2+a\bar{d}\zeta+b\bar{c}\bar{\zeta}+b\bar{d}\}}
{\mathrm{Re}\{a\bar{c}|\zeta|^2+a\bar{d}\zeta+b\bar{c}\bar{\zeta}+b\bar{d}\}}\right]\right|_{x=0},\\ \\
\left.\frac{du'}{dx}\right|_{x=0}&=\sin^2\frac{\theta}{2}\frac{d}{dx}\left.\left[(|a|^2+|c|^2)|\zeta|^2
+(a\bar{b}+c\bar{d})\zeta+(b\bar{a}+d\bar{c})\bar{\zeta}+|b|^2+|d|^2\right]\right|_{x=0}u.
\end{align*}
These equations hold in general for any conformal transformation. We choose now to work with Lorentz 
transformations, and thus to use Lorentz generators $L_i$ and $R_i$ of rotations and boosts, respectively. 
To know the coefficients $a,b,c,d$ corresponding to a Lorentz transformation we need to use the isomorphism 
\eqref{eqn:210}. Any rotation of an angle $\varphi$ about an axis $\hat{n}$ and any boost of rapidity 
$\chi$ about an axis $\hat{e}$ can be done by using a $\mathrm{SL}(2,\mathbb{C})$ matrix given by
\begin{subequations}
\label{eqn:236}
\begin{align}
&U_{\hat{n}}(\varphi)=e^{\frac{i}{2}\varphi\hat{n}\cdot\vec{\sigma}}=\mathbb{I}\cos\frac{\varphi}{2}
+i\hat{n}\cdot\vec{\sigma}\sin\frac{\varphi}{2},\\
&H_{\hat{e}}(\chi)=e^{\frac{1}{2}\chi\hat{e}\cdot\vec{\sigma}}=\mathbb{I}\cosh\frac{\chi}{2}
+\hat{e}\cdot\vec{\sigma}\sinh\frac{\chi}{2},
\end{align}
\end{subequations}
respectively, where $\vec{\sigma}=(\sigma_x,\sigma_y,\sigma_z)$ are the Pauli matrices of \eqref{eqn:pauli}. The parameter 
$x$ of the two transformations is $\varphi$ and $\chi$, respectively. After some calculations we find that the vector fields that generate the transformations are 
\begin{align}
L^{23}=L_x&=-\sin\phi\frac{\partial}{\partial\theta}-\cot\theta\cos\phi\frac{\partial}{\partial \phi},\\
L^{13}=L_y&=-\cos\phi\frac{\partial}{\partial\theta}+\cot\theta\sin\phi\frac{\partial}{\partial \phi},\\
L^{12}=L_z&=\frac{\partial}{\partial \phi},\\
L^{10}=R_x&=\cos\theta\cos\phi\frac{\partial}{\partial\theta}-\frac{\sin\phi}{\sin\theta}\frac{\partial}{\partial\phi}
-u\sin\theta\cos\phi\frac{\partial}{\partial u},\\
L^{20}=R_y&=-\cos\theta\sin\phi\frac{\partial}{\partial\theta}-\frac{\cos\phi}{\sin\theta}\frac{\partial}{\partial\phi}
+u\sin\theta\sin\phi\frac{\partial}{\partial u},\\
L^{30}=R_z&=-\sin\theta\frac{\partial}{\partial \theta}-u\cos\theta\frac{\partial}{\partial u}.
\end{align}
Note that rotations are characterized by $K=1$. The $\{P_{lm}\}$
and $\{L^{ab}\}$ form a complete set of linearly independent vector fields for the Lie algebra 
$\mathfrak{bms}_4$. We can find now the commutators
\begin{equation*}
[L^{ab},L^{cd}]=\eta^{ac}L^{bd}+\eta^{bd}L^{ac}-\eta^{ad}L^{bc}-\eta^{bc}L^{ad},
\end{equation*}
\begin{equation*}
[L_i,L_j]=\epsilon_{ijk}L_k,\hspace{1cm}[R_i,R_j]=-\epsilon_{ijk}L_k,\hspace{1cm}[L_i,R_j]=-\epsilon_{ijk}R_k,
\end{equation*}
where $\eta^{ab}=\mathrm{diag}(1,-1,-1,-1)$ and $\epsilon_{ijk}$ is the Levi-Civita symbol. Note that we have 
just obtained the classical Lorentz algebra. Furthermore, it is easy to derive the following commutator:
\begin{equation}
\label{eqn:218}
\left[L^{ab},\alpha(\theta,\phi)\frac{\partial}{\partial u}\right]=\left[L^{ab}\alpha(\theta,\phi)
-\alpha(\theta,\phi)W(L^{ab})\right]\frac{\partial}{\partial u},
\end{equation}
where $W(L^{ab})$ is defined by the relation 
\begin{equation*}
\frac{\partial}{\partial u}(L^{ab}f)=L^{ab}\frac{\partial f}{\partial u}+W(L^{ab})\frac{\partial f}{\partial u},
\end{equation*}
for arbitrary $f(u)$. \\
For convenience we introduce the raising and lowering operators,
\begin{align*}
L^{\pm}&=L_y\pm iL_x=-e^{\pm i\phi}\left(\frac{\partial}{\partial \theta}
\pm i\cot\theta\frac{\partial}{\partial \phi}\right),\\
R^{\pm}&=R_x\mp i R_y=e^{\pm i\phi}\left(\cos\theta\frac{\partial}{\partial \theta}\pm\frac{i}{\sin\theta}
\frac{\partial}{\partial \phi}-u\sin\theta\frac{\partial}{\partial u}\right), 
\end{align*}
in terms of which, using equation \eqref{eqn:218}, we give the commutation relations 
with the generators of the supertranslations:
\begin{align*}
&[L_z,P_{lm}]=imP_{lm},\\ \nonumber\\
&[L^+,P_{lm}]=-\sqrt{(l-m)(l+m+1)}P_{l,m+1},\\ \nonumber\\
&[L^-,P_{lm}]=\sqrt{(l+m)(l-m+1)}P_{l,m-1},
\end{align*}
\begin{align*}
&[R_z,P_{lm}]=-(l-1)\sqrt{\frac{(l+m+1)(l-m+2)}{(2l+1)(2l+3)}}P_{l+1,m}\\
&+(l+2)\sqrt{\frac{(l+m)(l-m)}{4l^2-1}}P_{l-1,m},\\ \\
&[R^+,P_{lm}]=(l-1)\sqrt{\frac{(l+m+1)(l+m+2)}{(2l+1)(2l+3)}}P_{l+1,m+1}\\
&+(l+2)\sqrt{\frac{(l-m-1)(l-m)}{4l^2-1}}P_{l-1,m+1},\\ \\
&[R^-,P_{lm}]=-(l-1)\sqrt{\frac{(l-m+1)(l-m+2)}{(2l+1)(2l+3)}}P_{l+1,m-1}\\
&-(l+2)\sqrt{\frac{(l+m-1)(l+m)}{4l^2-1}}P_{l-1,m-1}.
\end{align*}
The form of the commutation relations shows that the BMS algebra is the semi-direct sum of the Lorentz algebra 
$\mathfrak{so(1,3)}$ with the infinite Lie algebra $\mathscr{T}$, as remarked before.
\section{Good and Bad Cuts}
\label{sect:6.7}
We begin this section by citing a remarkable result obtained by Sachs.
\begin{thm}{\cite{Sachs1}}$\\ $
\label{thm:transl}
The only $4$-dimensional normal subgroup of the BMS group is the translation group.
\end{thm}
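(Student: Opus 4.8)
The plan is to combine the semidirect-product structure $\mathscr{B}=\mathscr{L}\rtimes_{\sigma}\mathscr{S}$ established in Proposition of Section \ref{sect:6.5} with the theorem already proved there, which asserts that any $4$-dimensional normal subgroup $N'$ of $\mathscr{B}$ is contained in $\mathscr{S}$. Thus the whole problem reduces to a statement purely about the normal subgroups of $\mathscr{B}$ that lie inside the abelian subgroup $\mathscr{S}$: I must show that the \emph{only} such $4$-dimensional one is $\mathscr{T}$. The main tool will be the action $\sigma$ of $\mathscr{L}$ (equivalently $\mathrm{PSL}(2,\mathbb{C})$) on $\mathscr{S}$, given by $\alpha(\zeta,\bar\zeta)\Lambda=K^{-1}\alpha(f(\zeta),\bar f(\bar\zeta))$ with $K$ as in \eqref{eqn:170}: a subgroup $N'\subseteq\mathscr{S}$ is normal in $\mathscr{B}$ if and only if the corresponding linear subspace of the function space $\mathscr{S}$ is invariant under this action, because conjugation $b\,(\mathbb{I},\beta)\,b^{-1}=(\mathbb{I},K\beta\circ f^{-1})$ (cf.\ Proposition \ref{prop:BMS1}) is exactly the $\sigma$-action together with a shift within $\mathscr{S}$.

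First I would recast the question: $N'$ corresponds to a real-linear subspace $W\subseteq\mathscr{S}$ which is closed under the $\mathscr{L}$-action; since $\mathscr{S}$ is abelian, being a subgroup is the same as being a linear subspace, and $\dim N'=4$ means $\dim_{\mathbb{R}}W=4$. Now decompose $\mathscr{S}$ into spherical harmonics as in \eqref{eqn:206}. The key structural fact — which is essentially the content of the commutation relations with $P_{lm}$ worked out at the end of Section \ref{sect:6.6}, i.e.\ the relations $[R_z,P_{lm}]$, $[R^\pm,P_{lm}]$ — is that the boost generators $R_i$ map the $\ell$ harmonics into $\ell\pm1$ harmonics with \emph{nonzero} coefficients for generic $m$, while the rotation generators $L_i$ preserve $\ell$. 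Consequently the only finite-dimensional $\mathscr{L}$-invariant subspaces of $\mathscr{S}$ that are irreducible are $\{0\}$ and the $\ell\le 1$ subspace $\mathscr{T}$ (spanned by $Y_{00},Y_{1m}$, which is $4$-dimensional and is the well-known Lorentz-vector representation — this is already implicit in \eqref{eqn:207} and the computation $A=a+d$, $B=b-ic$, $C=a-d$ showing $\mathscr{T}$ transforms as a four-vector). Any other nonzero invariant subspace must contain harmonics of order $\ell\ge 2$, and then repeated application of a boost generator forces it to contain harmonics of all orders $\ell'\ge 2$ in the same "ladder", making it infinite-dimensional. Hence no $4$-dimensional $W$ other than $\mathscr{T}$ can be invariant, so $N'=\mathscr{T}$.

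The step I expect to be the main obstacle is making the "ladder argument" watertight at the level of the group rather than the Lie algebra, and handling the conformal weight factor $K^{-1}$ correctly. At the algebra level one must verify that starting from \emph{any} nonzero $\alpha$ with a nonvanishing component $\alpha_{\ell,m}$ for some $\ell\ge2$, successive brackets with $R^{\pm}$ and $R_z$ never annihilate the $\ell$-content — this requires checking that the coefficients $(\ell-1)\sqrt{\tfrac{(\ell\pm m+1)(\ell\pm m+2)}{(2\ell+1)(2\ell+3)}}$ and $(\ell+2)\sqrt{\tfrac{(\ell\mp m-1)(\ell\mp m)}{4\ell^2-1}}$ appearing in the $[R^{\pm},P_{\ell m}]$ relations cannot \emph{all} vanish for a fixed $\ell\ge2$ as $m$ ranges, which is a short but genuine verification (the factor $\ell-1$ is why $\ell=1$ is special and the representation truncates). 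I would also need to argue that closure under the continuous $\mathscr{L}$-action implies closure under the Lie-algebra action (true since $W$ is a linear subspace and $\mathscr{L}$ is connected, so $\frac{d}{dt}\sigma_{e^{tX}}(\alpha)|_{t=0}=\Sigma_X(\alpha)\in W$), and conversely that an $\mathscr{L}$-algebra-invariant finite-dimensional subspace is $\mathscr{L}$-group-invariant by connectedness and exponentiation. Finally, for completeness I would note that $\mathscr{T}$ itself \emph{is} normal (Proposition of Section \ref{sect:6.5}, modulo the deferred claim that $K\alpha_t$ stays in $\mathscr{T}$, which this very argument supplies), so the theorem is an exact characterization and not merely an upper bound.
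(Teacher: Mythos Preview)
The paper does not actually give a proof of this theorem: it merely \emph{cites} it as ``a remarkable result obtained by Sachs'' and then moves on to discuss its consequences. The only thing the paper proves in this direction is the weaker containment statement in Section~\ref{sect:6.5} (any $4$-dimensional normal subgroup lies in $\mathscr{S}$), together with the explicit commutation relations $[L^{ab},P_{\ell m}]$ at the end of Section~\ref{sect:6.6}. So there is no ``paper's own proof'' to compare against.

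That said, your proposal is the right way to complete what the paper leaves as a citation, and it is essentially the argument Sachs gave. You correctly reduce to finding $4$-dimensional $\mathscr{L}$-invariant subspaces of $\mathscr{S}$, and you correctly identify the mechanism: the boost brackets $[R^{\pm},P_{\ell m}]$ and $[R_z,P_{\ell m}]$ computed in Section~\ref{sect:6.6} carry the factor $(\ell-1)$ on the $\ell\to\ell+1$ piece, so the ladder terminates precisely at $\ell=1$ and any nonzero component at $\ell\ge 2$ generates an infinite tower. One small point worth tightening: to conclude that a $4$-dimensional invariant subspace \emph{must} be all of $\mathscr{T}$ (rather than some other $4$-dimensional slice), you should also invoke irreducibility of $\mathscr{T}$ under $\mathscr{L}$ --- equivalently, that the $\ell=0$ and $\ell=1$ pieces are mixed by the boosts (the $(\ell+2)$ coefficient on the lowering term is nonzero at $\ell=1$), so $\mathscr{T}$ has no proper nontrivial invariant subspace. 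With that, the dimension count finishes the job. Your remarks about passing between the group action and the Lie-algebra action via connectedness are fine for a finite-dimensional $W$.
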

Theorem \ref{thm:transl} characterizes translations uniquely: the translations normal subgroup of the BMS group 
is singled out by its group-theoretic properties. Since we have shown that the translations $\mathscr{T}$ are 
the BMS transformations induced on $\mathscr{I}^+$ by translations in Minkowski space-time, theorem 
\ref{thm:transl} makes it possible for  us to define the asymptotic translations 
of a general asymptotically flat space-time 
as the BMS elements belonging to this normal subgroup. However a similar procedure for $\mathscr{L}$, 
i.e. rotations and boosts, fails. Thus, as we will discuss in this section, there are several problems in 
identifying the Poincar\'e group as a subgroup of $\mathscr{B}$.\\
The Poincar\'e group is the symmetry group of flat space-time, hence it might have been thought that a suitably 
asymptotically flat space-time should, in some appropriate sense, have the Poincar\'e group as an asymptotic 
symmetry group. Instead, it turns out that in general we seem only to obtain the BMS group (which has  
the unpleasant feature of being an infinite-dimensional group) as the asymptotic symmetry group of an 
asymptotically flat space-time. \\
To better understand the nature of this problem we revert to Minkowski space-time and see how the Poincar\'e 
group arises in that case as a subgroup of the BMS group. The BMS group was defined as the group of 
transformations which conformally preserves the induced metric on $\mathscr{I}^+$ and the strong conformal 
geometry. However, the BMS group is much larger than the Poincar\'e group and thus the former must preserve 
less structure on $\mathscr{I}^+$ than does the latter. The preservation of this additional structure, 
in the case of Minkowski space-time, should allow us to restrict the BMS transformations to Poincar\'e 
transformations, since we know that $\mathscr{P}$ in that case \textit{is} a subgroup of $\mathscr{B}$.\\
In Minkowski space-time a null hypersurface is said to be a \textit{good cone} if it is the future light cone 
of some point, and a \textit{bad cone} if its generators do not meet at a point. Consequently we define a 
\textit{good cross-section}, often called a \textit{good cut}, a cross-section of $\mathscr{I^+}$ which is 
the intersection of a future light cone of some point and the null hypersurface $\mathscr{I}^+$. 
A \textit{bad cut} is, on the other hand, the intersection of $\mathscr{I}^+$ with some null hypersurface 
which does not come together cleanly at a single vertex. The situation is reported in Figure \ref{fig:6.4}.
\begin{figure}[h]
\begin{center}
\includegraphics[scale=0.5]{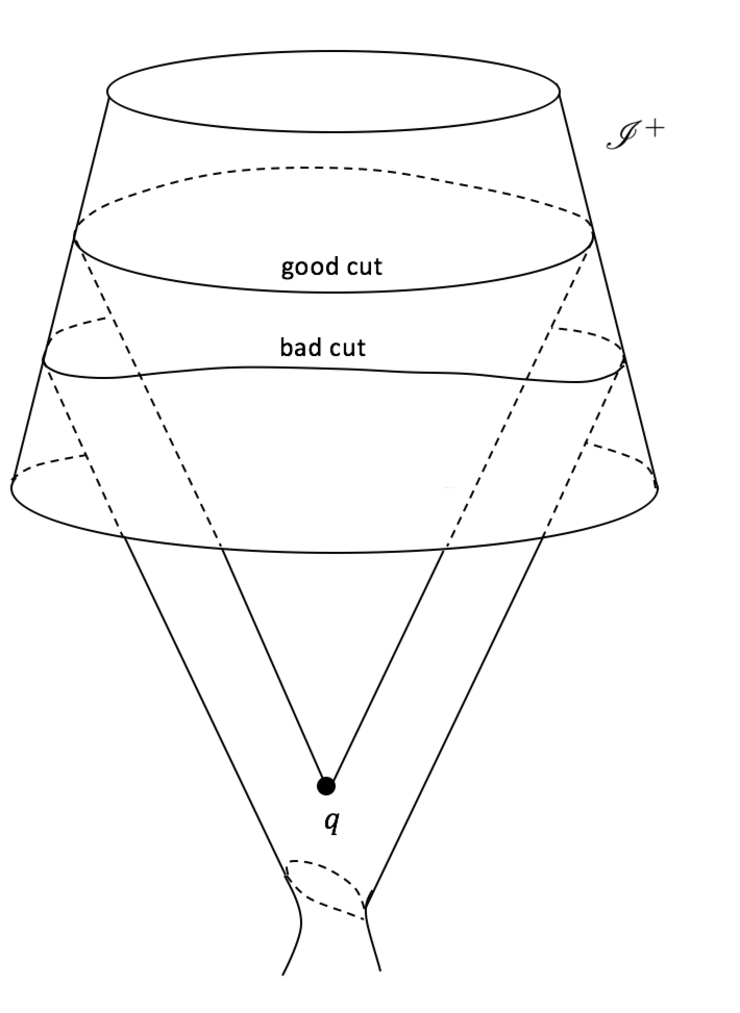}
\caption{A good cross-section of $\mathscr{I}^+$ for Minkowski space-time is one arising as the intersection 
of $\mathscr{I}^+$ with the future light cone of a point.}
\label{fig:6.4}
\end{center}
\end{figure}
\medskip \medskip \\ Using Bondi-Sachs coordinates the Minkowski metric tensor takes the form
\begin{equation*}
g=du\otimes du+du\otimes dr+dr\otimes du-r^2(d\theta\otimes d\theta+\sin^2\theta d\phi\otimes d\phi).
\end{equation*}
We see that each cut of $\mathscr{I}^+$ given by $u=t-r=\mathrm{const}$ is a good cut, since it arises from 
the future light cone of a point on the origin-axis $r=0$. In particular, all the good cuts can be obtained 
from the one given by $u=0$ by means of a space-time translation. Hence, as discussed in section \ref{sect:6.5} we 
obtain that every good cut can be expressed in the form
\begin{equation*}
u=\frac{A+B\zeta+\bar{B}\bar{\zeta}+C\zeta\bar{\zeta}}{1+\zeta\bar{\zeta}}
\end{equation*}
\begin{equation}
\label{eqn:235}
=\left(\frac{A+C}{2}\right)+\left(\frac{C-A}{2}\right)\cos\theta+\left(\frac{B+\bar{B}}{2}\right)
\sin\theta\cos\phi+i\left(\frac{B-\bar{B}}{2}\right)\sin\theta\sin\phi,
\end{equation}
$A,C$ being real and $B$ being complex. Hence the equations describing good cuts are given, generally, 
by setting $u$ equal to a function of $\theta$ and $\phi$ which consists only of zeroth- 
and first-order spherical harmonics.\\
The effect of a transformation of the connected component of the Lorentz group is to leave invariant the 
particular good cut $u=0$. Such a transformation is
\begin{subequations}
\label{eqn:237}
\begin{align}
&\zeta\rightarrow\zeta'=\frac{a\zeta+b}{c\zeta+d},\\
&u\rightarrow u'=\frac{1+\zeta\bar{\zeta}}{|a\zeta+b|^2+|c\zeta+d|^2}u,
\end{align}
\end{subequations}
with $a,b,c,d\in\mathbb{C}$ such that $ad-bc=1$. Note that transformations \eqref{eqn:237} preserve the 
functional form of good cuts given in \eqref{eqn:235}. In fact we have, applying \eqref{eqn:237}, that
\begin{equation*}
u'=\left(\frac{A+B\zeta+\bar{B}\bar{\zeta}+C\zeta\bar{\zeta}}{1+\zeta\bar{\zeta}}\right)
\frac{1+\zeta\bar{\zeta}}{|a\zeta+b|^2+|c\zeta+d|^2},
\end{equation*}
where $\zeta$ and $\bar{\zeta}$ have now to be expressed as functions of $\zeta'$ and $\bar{\zeta}'$. 
It is straightforward to show that
\begin{equation*}
u'=\frac{A'+B'\zeta'+\bar{B}'\bar{\zeta}'+C'\zeta'\bar{\zeta}'}{1+\zeta'\bar{\zeta}'},
\end{equation*}
where 
\begin{align*}
&A'=A|a|^2-Bb\bar{a}-\bar{B}\bar{b}a+C|b|^2,\\
&B'=Bd\bar{a}+\bar{B}\bar{b}c-A\bar{a}c-Cd\bar{b}	,\\
&C'=A|c|^2-B\bar{c}d-\bar{B}c\bar{d}+C|d|^2.
\end{align*}
For example, if we perform a boost in the $z$ direction we have from \eqref{eqn:236} 
$a=e^{\chi/2},d=e^{-\chi/2},c=b=0$. Hence we get $A'=e^{\chi}A,B'=B,C'=e^{-\chi}C$. \\
Now it is clear that the general BMS transformation which maps good cuts into good cuts must obtain the particular 
good cut $u=0$ from some other good cut. We can therefore express the BMS transformation as the composition 
of a translation which maps this other good cut into $u=0$, with a Lorentz transformation which leaves 
$u=0$ invariant. Thus, the BMS transformation is
\begin{subequations}
\label{eqn:238}
\begin{align}
&\zeta\rightarrow\zeta'=\frac{a\zeta+b}{c\zeta+d},\\
&u\rightarrow u'=\left(\frac{1+\zeta\bar{\zeta}}{|a\zeta+b|^2+|c\zeta+d|^2}\right)\left(u+\frac{A+B\zeta
+\bar{B}\bar{\zeta}+C}{1+\zeta\bar{\zeta}}\right).
\end{align}
\end{subequations}
These BMS transformations form a $10$-real-parameters group. This is exactly the Poincar\'e group of Minkowski 
space-time, being a composition of a Lorentz transformation and a translation. We have obtained the following 
\begin{prop}$\\ $
The Poincar\'e group $\mathscr{P}$ is the group of transformations which maps good cuts into good 
cuts in Minkowski space-time.
\end{prop}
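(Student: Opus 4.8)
The plan is to show a double inclusion: the Poincar\'e group $\mathscr{P}$, realized in $\mathscr{B}$ as the transformations of the form \eqref{eqn:238}, is exactly the set of BMS transformations that carry good cuts to good cuts. One inclusion is essentially already established: formulae \eqref{eqn:235}, \eqref{eqn:237} and the discussion leading to \eqref{eqn:238} exhibit every element of $\mathscr{P}$ as a composition of a Lorentz transformation (which leaves $u=0$ invariant and, by the computation of $A',B',C'$, preserves the functional form of a good cut) with a translation (which, as seen in section \ref{sect:6.5}, sends $u=0$ to a general good cut \eqref{eqn:235} and conversely). So the first step is just to assemble these observations into the statement ``every transformation of the form \eqref{eqn:238} maps good cuts to good cuts'', checking that the affine action on $(A,B,C)$ coming from \eqref{eqn:237} composed with the shift by $(A_0,B_0,C_0)$ coming from the translation stays within the $4$-parameter family of zeroth- and first-order harmonics.

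For the reverse inclusion, the plan is to start from an arbitrary BMS transformation $b=(\Lambda,\alpha)$, written in the form \eqref{eqn:200}--\eqref{eqn:201}, and suppose it maps every good cut to a good cut. First apply $b$ to the distinguished good cut $u=0$: the image is the cut $u' = K(\zeta,\bar\zeta)\,\alpha(\zeta,\bar\zeta)$ after relabelling by $\zeta'$, and by hypothesis this must again be a good cut, hence of the form \eqref{eqn:235}, i.e. $K\alpha$ must consist only of $l=0,1$ spherical harmonics. Then I would compose $b$ on the left with the unique translation $t^{-1}$ that sends this image cut back to $u=0$ (such a translation exists and is unique by the discussion of section \ref{sect:6.5} together with Theorem \ref{thm:transl}); the composite $t^{-1}b$ is again BMS, maps good cuts to good cuts, and fixes $u=0$. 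The remaining task is to show any BMS element fixing the cut $u=0$ is a pure Lorentz transformation, i.e. has $\alpha\equiv 0$ in the decomposition $\mathscr{B}=\mathscr{L}\mathscr{S}$ of \eqref{eqn:233}; writing $t^{-1}b=(\Lambda_0,\beta)$, the condition that $u=0$ is fixed forces $K_0\beta\equiv 0$ on the whole sphere, and since $K_0>0$ everywhere this gives $\beta\equiv 0$. Hence $t^{-1}b\in\mathscr{L}$ and $b = t\,(t^{-1}b)$ is a translation times a Lorentz transformation, i.e. an element of $\mathscr{P}$ of the form \eqref{eqn:238}.

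The step I expect to be the main obstacle is the closure computation for the reverse inclusion: showing that the constraint ``$b$ maps \emph{every} good cut to a good cut'' (not merely the one cut $u=0$) is actually consistent with $t^{-1}b$ being Lorentz, and in particular verifying the claim deferred in section \ref{sect:6.5} that $K\alpha_t$ is still built from zeroth- and first-order harmonics when $\alpha_t$ is and $K$ is a Lorentz conformal factor of the form \eqref{eqn:170}. This is really a statement that the $4$-dimensional space of good-cut ``heights'' is invariant under the induced Lorentz action, and it can be checked either by the explicit transformation law for $(A,B,C)\mapsto(A',B',C')$ obtained above, or more invariantly by recognizing the numerator $A+B\zeta+\bar B\bar\zeta+C\zeta\bar\zeta$ together with the denominator $1+\zeta\bar\zeta$ as transforming like the components of a Minkowski four-vector under $\mathrm{SL}(2,\mathbb{C})$ via the isomorphism \eqref{eqn:iso} --- which is exactly why the parameters in \eqref{eqn:238} count $4+6=10$ and reproduce $\mathscr{P}$. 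Once that invariance is in hand, the two inclusions close up and the proposition follows. A closing remark could note that this also re-derives Theorem \ref{thm:transl} in the Minkowski setting, since the translations are precisely the BMS elements preserving the \emph{family} of good cuts while acting trivially on the conformal sphere.
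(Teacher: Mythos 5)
Your proposal is correct and follows essentially the same route as the paper: good cuts are characterized by the form \eqref{eqn:235}, Lorentz transformations \eqref{eqn:237} preserve that form (via the explicit $(A,B,C)\mapsto(A',B',C')$ law, equivalently the four-vector/Hermitian-matrix transformation under $\mathrm{SL}(2,\mathbb{C})$), and a good-cut-preserving BMS element is decomposed as a translation composed with a Lorentz transformation fixing $u=0$, yielding \eqref{eqn:238}. The only difference is that you spell out the reverse inclusion (image of $u=0$ must be a good cut, pull it back by the unique translation, conclude $\beta\equiv 0$ from $K_0\beta\equiv 0$) which the paper asserts as ``clear''; this is a welcome tightening rather than a departure.
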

However, there are many other subgroups of the BMS group which can be expressed in the form \eqref{eqn:238} and 
which are therefore isomorphic with the Poincar\'e group. In fact $\mathscr{L}$ is not a normal subgroup of the 
BMS group since for any $b=(\Lambda,\alpha)\in\mathscr{B}$ and for any $\Lambda'=(\Lambda',0)\in\mathscr{L}$ 
the product $b\Lambda'b^{-1}$ is not necessarily an element of $\mathscr{L}$, as can be easily verified, and 
hence $\mathscr{L}$ does not get canonically singled out, occurring only as a factor group of $\mathscr{B}$ 
by the infinite-parameter Abelian group of supertranslations $\mathscr{S}$. In particular, Lorentz transformations 
do not commute with supertranslations and if we take any supertranslation $s$ and consider the group 
$\mathscr{L}'=s\mathscr{L}s^{-1}$ then it is a subgroup of the BMS group which is distinct from $\mathscr{L}$ 
but still isomorphic, and thus equivalent, to it. Explicitly, having fixed a supertranslation $s=(\mathbb{I},\alpha)$, 
a transformation of $\mathscr{L}'$ reads as
\begin{subequations}
\label{eqn:239}
\begin{align}
&\zeta\rightarrow\zeta'=\frac{a\zeta+b}{c\zeta+d},\\
&u\rightarrow u'=\left(\frac{1+\zeta\bar{\zeta}}{|a\zeta+b|^2+|c\zeta+d|^2}\right)\left(u-\alpha\right)+\alpha.
\end{align}
\end{subequations}
If we start with the good cut described by the equation $u=0$, which is left invariant by $\mathscr{L}$, and 
perform the supertranslation $s$, we obtain a new (bad) cut given by $u=\alpha$. This is the cut which is left 
invariant by \eqref{eqn:239} and hence by $\mathscr{L}'$. Hence $\mathscr{L}'$ maps bad cuts into bad cuts.
It follows that if we conjugate the whole Poincar\'e group $\mathscr{P}$ of \eqref{eqn:238} with respect to 
any supertranslation $s$ which is not a translation obtaining $\mathscr{P}'=s\mathscr{P}s^{-1}$ we get a 
distinct subgroup of $\mathscr{B}$, but isomorphic and completely equivalent to $\mathscr{P}$, which maps 
bad cuts into bad cuts. Of course, for a general $s$, $\mathscr{P'}$ and $\mathscr{P}$ have only the 
translations $\mathscr{T}$ in common. There exist \textit{many} subgroups of $\mathscr{B}$ which are 
isomorphic with $\mathscr{P}$ and hence the Poincar\'e group \textit{is not} a subgroup of the BMS group 
in a canonical way. However we have just seen that in Minkowski space-time, if we require the group of 
transformations to preserve the conformal nature of $\mathscr{I}^+$ and the strong conformal geometry, together 
with the property of mapping good cuts into good cuts, just one of the several copies of the 
Poincar\'e group gets singled out.\\
\begin{oss}$\\ $
Note that, again, the situation is similar to what happens for $\mathscr{L}$ within $\mathscr{P}$. In fact 
$\mathscr{L}$ does not arise naturally as a subgroup of $\mathscr{P}$, since if we form the group 
$\mathscr{L}'=t\mathscr{L}t^{-1}$, where $t$ is a translation, it is a different subgroup of $\mathscr{P}$ 
but isomorphic to $\mathscr{L}$. We can say, since the commutator of a Lorentz transformation and a 
translation is a translation, that $\mathscr{L}$, as a subgroup of $\mathscr{P}$, depends on the choice 
of an arbitrary origin in Minkowski space-time. 
\end{oss}
We turn now to the case when the space-time is asymptotically flat. The difficulty here is that there seems 
to be no suitable family of cuts that can properly take over the role of Minkowskian good cuts. This means 
that, although the translation elements of $\mathscr{B}$ are canonically singled out, there is no canonical 
concept of a \lq supertranslation-free\rq\hspace{0.1mm} Lorentz transformation. Hence the notion of a 
\lq pure translation\rq\hspace{0.1mm} still makes sense, but that of \lq pure rotation\rq\hspace{0.1mm} 
or \lq pure boost\rq\hspace{0.1mm} does not. \\
The most obvious generalization, for an asymptotically flat space-time, of the Minkowskian definition 
of a good cut, i.e. the intersection of future light cone of a point with $\mathscr{I}^+$, is totally 
inappropriate. One first reason is that there are many perfectly reasonable asymptotically flat space-times 
in which no cuts of $\mathscr{I}^+$ at all would arise in this way, e.g. \cite{Pen72}. Even if we 
restrict attention only to asymptotically flat space-times which do contain a reasonable number of good cuts 
of this kind, we are not likely to obtain any of the BMS transformations (apart from the identity) which 
maps this system of cuts into itself. The difficulty lies in the fact that the detailed irregularities of 
the interior of the space-time would be reflected in the definition of \lq goodness\rq\hspace{0.1mm} 
of a cut. In other words, the light-cone cuts are far more complicated than those \eqref{eqn:235} of flat space. \\
However there is a more satisfactory way to characterize good cuts of $\mathscr{I}^+$, based on the shear (see section \ref{sect:2.8}) of null hypersurfaces intersecting $\mathscr{I}^+$. Consider now that the (physical) space-time under consideration contains a null curve $\mu$ of a null geodesics congruence $\mathscr{C}$ affinely parametrized by $\tilde{r}$ and whose tangent null vector is $\tilde{l}^a=\tilde{o}^A\bar{\tilde{o}}^{A'}$. Complete $\tilde{o}^A$ to a spin basis $(\tilde{o}^A,\tilde{\iota}^A)$ at a point $p$ of $\mu$. We can propagate $\tilde{o}^A$ and $\tilde{\iota}^A$ along $\mu$ via
\begin{equation*}
\tilde{D}\tilde{o}^A=0,\hspace{1cm}\tilde{D}\tilde{\iota}^A=0,
\end{equation*}
where $\tilde{D}=\tilde{l}^a\tilde{\nabla}_a$. From the results of section \ref{sect:2.8} we have that $\tilde{\kappa}=\tilde{\epsilon}=\tilde{\pi}=0$. From the first two equations Newman-Penrose field equations in Appendix \ref{B} we obtain for the shear $\tilde{\sigma}$ and for the divergence $\tilde{\rho}$ the so-called \textit{optical equations} of Sachs \citep{Sachs61,Sachs62}:
\begin{subequations}
\label{eqn:240}
\begin{align}
&\frac{\partial \tilde{\rho}}{\partial \tilde{r}}=\tilde{\rho}^2+\tilde{\sigma}\bar{\tilde{\sigma}}+\tilde{\Phi}_{00},\\ \nonumber\\
&\frac{\partial\tilde{\sigma}}{\partial \tilde{r}}=(\tilde{\rho}+\bar{\tilde{\rho}})\tilde{\sigma}+\tilde{\Psi}_0.
\end{align}
\end{subequations}
Introducing the matrices
\begin{equation*}
\textbf{Q}=\left(\begin{matrix}\tilde{\Phi}_{00} & \tilde{\Psi}_0\\\bar{\tilde{\Psi}}_0 & \tilde{\Phi}_{00}\end{matrix}\right),\hspace{1cm}\textbf{P}=\left(\begin{matrix}
\tilde{\rho} & \tilde{\sigma} \\ \bar{\tilde{\sigma}}& \bar{\tilde{\rho}}\end{matrix}\right),
\end{equation*}
then, equations \eqref{eqn:240} may be written as
\begin{equation}
\label{eqn:241}
D\textbf{P}=\textbf{P}^2+\textbf{Q}.
\end{equation}
Consider first the case of a flat space-time, i.e. $\textbf{Q}=0$.
Then equation \eqref{eqn:241} reduces to 
\begin{equation*}
D\textbf{P}=\textbf{P}^2\Longrightarrow\textbf{P}^{-1}(D\textbf{P})\textbf{P}^{-1}=\textbf{I},
\end{equation*}
where $\textbf{I}$ is the ($2\times 2$) unit matrix. But
\begin{equation*}
0=D(\textbf{P}^{-1}\textbf{P})=(D\textbf{P}^{-1})\textbf{P}+\textbf{P}^{-1}D\textbf{P},
\end{equation*}
hence we obtain
\begin{equation*}
D\textbf{P}^{-1}=-\textbf{I}\Longrightarrow\textbf{P}^{-1}=\textbf{A}-r\textbf{I},
\end{equation*}
where $\textbf{A}=\mathrm{const}$ and $r$ is affine along the congruence, i.e. $Dr=1$.
\begin{equation*}
\textbf{P}=\left(\textbf{A}-r\textbf{I}\right)^{-1}.
\end{equation*}
Writing 
\begin{equation*}
\textbf{A}=\left(\begin{matrix} \tilde{\rho}_0& \tilde{\sigma}_0\\ \bar{\tilde{\sigma}}_0 & \bar{\tilde{\rho}}_0\end{matrix}\right)^{-1},
\end{equation*}
where $\tilde{\rho}_0$ and $\tilde{\sigma}_0$ are the values of $\tilde{\rho}$ and $\tilde{\sigma}$ at $r=0$, we get, explicitly
\begin{subequations}
\label{eqn:242}
\begin{align}
\tilde{\rho}=\frac{\tilde{\rho}_0-\tilde{r}(\tilde{\rho}_0\bar{\tilde{\rho}}_0-\tilde{\sigma}_0\bar{\tilde{\sigma}}_0)}{1-\tilde{r}\left(\tilde{\rho}_0+\bar{\tilde{\rho}}_0\right)+\tilde{r}^2\left(\tilde{\rho}_0\bar{\tilde{\rho}}_0-\tilde{\sigma}_0\bar{\tilde{\sigma}}_0\right)},\\
\tilde{\sigma}=\frac{\tilde{\sigma}_0}{1-\tilde{r}\left(\tilde{\rho}_0+\bar{\tilde{\rho}}_0\right)+\tilde{r}^2\left(\tilde{\rho}_0\bar{\tilde{\rho}}_0-\tilde{\sigma}_0\bar{\tilde{\sigma}}_0\right)}.
\end{align}
\end{subequations}
Thus $\tilde{\sigma}$ has the following asymptotic behaviour:
\begin{equation*}
\tilde{\sigma}=\frac{\tilde{\sigma}_0}{\tilde{r}^2\left(\tilde{\rho}_0\bar{\tilde{\rho}}_0-\tilde{\sigma}_0\bar{\tilde{\sigma}}_0\right)}+O(\tilde{r}^{-4})=\frac{\tilde{\sigma}^0}{\tilde{r}^2}+O(\tilde{r}^{-4}),
\end{equation*}
where $\tilde{\sigma}^0=\tilde{\sigma}_0/(\tilde{\rho}_0\bar{\tilde{\rho}}_0-\tilde{\sigma}_0\bar{\tilde{\sigma}}_0)$ is called \textit{asymptotic shear}. Note that in a flat space-time the vanishing of the asymptotic shear implies the vanishing of the whole shear. If we now turn to the case of asymptotically flat space-times with non-vanishing $\tilde{\Phi}_{00}$ and $\tilde{\Psi}_{0}$, it can be shown, in virtue of their asymptotic behaviours, that the leading term of the asymptotic behaviour of $\tilde{\sigma}$ does not change. However the vanishing of $\tilde{\sigma}^0$ does not, in this non-flat case, imply that $\tilde{\sigma}$ vanishes \citep{New2012}.\\
Note that, in virtue of thorem \ref{thm:shear}, if we consider the unphysical space-time obtained with conformal factor $\Omega=\tilde{r}^{-1}$ the shear transforms as
\begin{equation*}
\sigma=\Omega^{-2}\tilde{\sigma}\Longrightarrow\left.\sigma\right|_{_{\mathscr{I}^+}}=\tilde{\sigma}^0.
\end{equation*}
In Minkowski space-time the good cones are characterized locally by the fact that the null rays generating them possess no shear and it can be shown that the cuts of $\mathscr{I}^+$ which we defined earlier are precisely the ones arising from the intersection of $\mathscr{I}^+$ with null hypersurfaces characterized by $\tilde{\sigma}^0=0$. Thus a definition of \lq goodness\rq\hspace{0.1mm} is provided, for Minkowski space-time, which refers only to quantities defined asymptotically. \\
We would like to extend this definition of good cut to asymptotically flat space-times too. To do that all we need to do is to set up $l^a$ and $m^a$ (see section \ref{sect:2.3}) at each point of the cut at $\mathscr{I}^+$, where $m^a$ is complex null, with real and imaginary parts both tangent to the cut and where $l^a$ is real null and orthogonal to $m^a$, i.e. $m_al^a=0$. The cut is a good cut if the complex shear $\sigma=m^am^b\nabla_al_b$ so obtained equals zero (since on $\mathscr{I}^+$ we have $\sigma=\tilde{\sigma}^0$).\\
We cannot, however, define good cones simply by requiring $\tilde{\sigma}^0=0$. In many cases it is not possible to arrange $\tilde{\sigma}^0=0$ for all values of $\theta$ and $\phi$. But even in cases where it is possible we have another problem, which is due to the presence of gravitational radiation. To make this point clear, we cite now some important results regarding the relation between asymptotic shear and gravitational radiation which are basically due to \cite{Bondi62,NewUn,Sachs62}. At first $\tilde{\sigma}^0$ forms part of the initial data on $u=0$ used to determine the space-time asymptotically. We introduce the \textit{Bondi news function} $N$:
\begin{equation*}
N=-\frac{1}{2}R_{ab}\bar{m}^a\bar{m}^b.
\end{equation*}
Clearly, $R_{ab}$ need not vanish near $\mathscr{I}^+$ even though $\tilde{R}_{ab}$ does. It turns out that
\begin{equation*}
\frac{\partial \tilde{\sigma}^0}{\partial u}=-\bar{N},\hspace{1cm}\frac{\partial^2 \tilde{\sigma}^0}{\partial u^2}=\frac{d\bar{N}}{du}=-\bar{\Psi}^{(0)}_4
\end{equation*}
where $\bar{\Psi}^{(0)}_4$ is the (complex conjugate) gravitational radiation field introduced in section \ref{sect:5.4}. The news function $N$ enters in the Bondi-Sachs definition of mass-momentum. In fact let us consider a hypersurface $\mathscr{S}$ which spans some two-dimensional cross-section $S$ of $\mathscr{I}^+$. That is to say, $S=\dot{\mathscr{S}}=\mathscr{S}\cap\mathscr{I}^+$. The total energy-momentum intercepted by $\mathscr{S}$ is equal to the following integral over $S$.
\begin{equation*}
P^a=\frac{1}{4\pi}\int W^a\left(\tilde{\sigma}^0N-\Psi_2^{(0)}\right)dS,
\end{equation*}
where 
\begin{equation*}
W^0=1,\hspace{0.5cm}W^1=\sin\theta\cos\phi,\hspace{0.5cm}W^2=\sin\theta\sin\phi,\hspace{0.5cm}W^3=\cos\theta.
\end{equation*}
Note that, $S$ is topologically a sphere $S^2$ and can always be transformed, by the introduction of a suitable conformal factor into a metric sphere of unit radius. Hence $dS$ can be taken to be	
\begin{equation*}
dS=\sin\theta d\theta d\phi.
\end{equation*}
Thus, the \textit{Bondi mass} (see section \ref{sect:3.1}) at a cut $S$ is
\begin{equation*}
m=\frac{1}{4\pi}\int\left(\tilde{\sigma}^0N-\Psi^{(0)}_2\right)dS.
\end{equation*}
The rate of energy-momentum loss due to gravitational radiation is
\begin{equation}
\label{eqn:244}
\frac{dP^a}{du}=-\frac{1}{4\pi}\int W^a|N|^2dS.
\end{equation}
\begin{figure}[h]
\begin{center}
\includegraphics[scale=0.38]{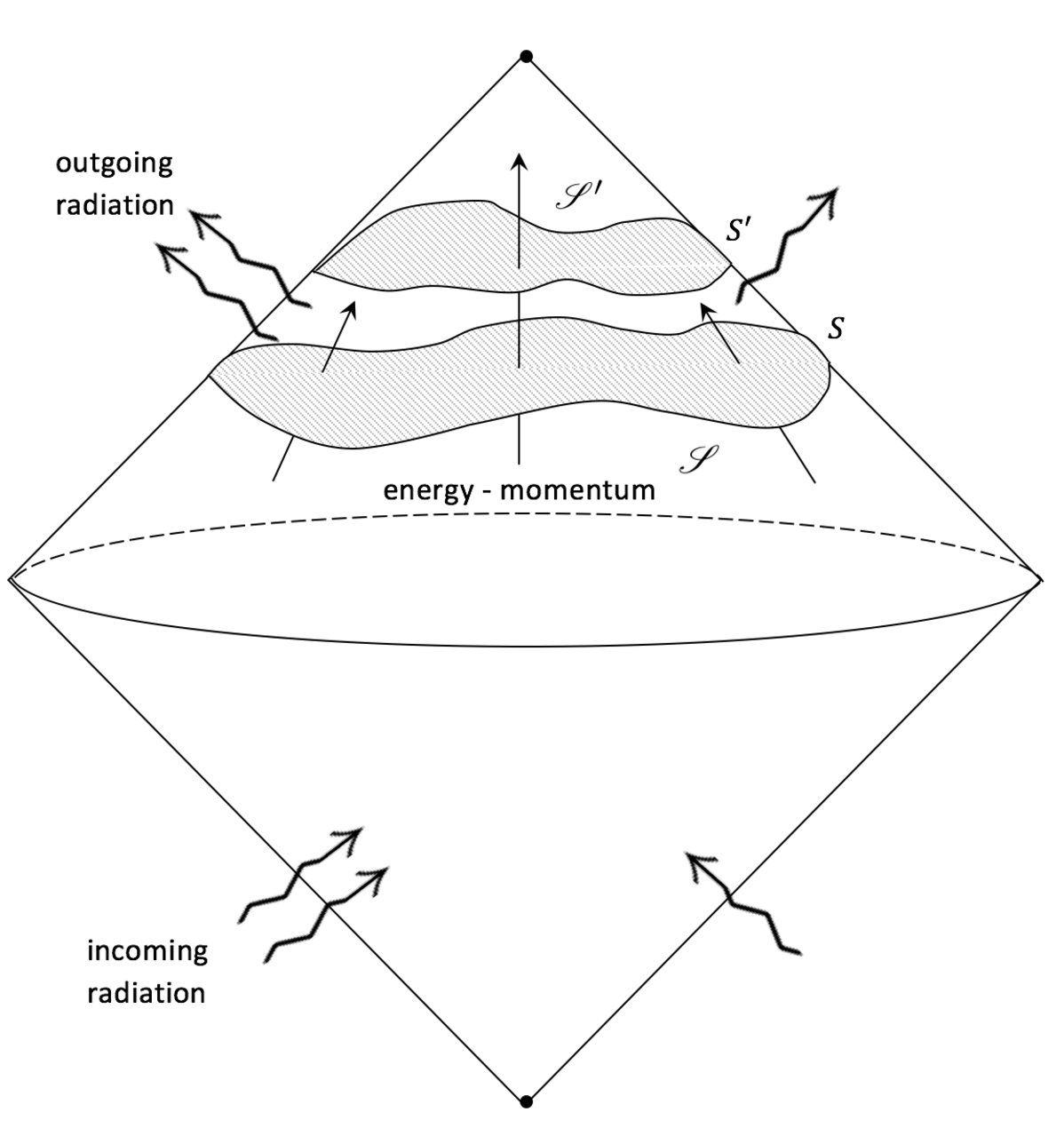}
\caption{The outgoing gravitational radiation through two hypersurfaces $\mathscr{S}$ and $\mathscr{S}'$ with associated cross-sections $S$ and $S'$ respectively.}
\label{fig:6.5}
\end{center}
\end{figure}
\\Thus, the squared modulus of $N$ represents the flux of energy-momentum of the outgoing gravitational radiation. The time component of \eqref{eqn:244} gives the famous \textit{Bondi-Sachs mass-loss formula}:
\begin{equation}
\label{eqn:243}
\frac{dm}{du}=-\frac{1}{4\pi}\int |N|^2dS\leq 0.
\end{equation}
The positivity of the integrand in \eqref{eqn:243} shows that if a system emits gravitational waves, i.e. if there is news, then its Bondi mass must decrease. If there is no news, i.e. $N=0$, the Bondi mass is constant. The reason we get a mass loss rather than a mass gain is simply that the above discussion has been applied to $\mathscr{I}^+$ instead of $\mathscr{I}^-$.\\
From the above discussion it follows that, if the cut $S$ is given by $u=0$ and is shear-free, the cross-sections $u=\mathrm{const}$, which are translations of $\mathscr{I}^+$, will not be shear-free in the presence of gravitational radiation. In other words, if $\tilde{\sigma}^0=0$ for one value of $u$, we will generally have $\tilde{\sigma}^0\neq 0$ for a later value of $u$, i.e. \lq goodness\rq\hspace{0.1mm} would not be invariant under translation. However, there is a more serious difficulty even than this. Since to specify a cut we just need to specify the value of $u$ on each generator of $\mathscr{I}^+$, the freedom in choosing a cut is one real number per point of the cut. On the other hand the quantity $\tilde{\sigma}^0$ is complex, its vanishing therefore, representing two real numbers per point of the section. To solve this problem we need to define the magnetic and the electric part of $\tilde{\sigma}^0$. Under a rotation of the spacelike vectors $\mathrm{Re}(m^{a})$, $\mathrm{Im}(m^{a})$ in their plane, given by 
\begin{equation}
\label{eqn:245}
m^{a}{}'=e^{i\psi}m^a,
\end{equation}
we have, from the definition of shear,
\begin{equation}
\label{eqn:246}
\tilde{\sigma}^{0}{}'=e^{2i\psi}\tilde{\sigma}^{0}.
\end{equation}
We say that $\tilde{\sigma}^0$ has spin weight 2. Furthermore it can be shown \citep{Sachs62} that, under a BMS transformation for which $\alpha=0$, i.e. a conformal transformation, we have
\begin{equation}
\label{eqn:247}
\tilde{\sigma}^{0}{}'=K^{-1}\tilde{\sigma}^0.
\end{equation}
Thus we say that $\tilde{\sigma}^0$ has conformal weight $-1$. Generally we give the following 
\begin{defn}$\\ $
A scalar $\eta$ will be said to have \textit{spin weight} $s$ if it transforms as
\begin{equation*}
\eta'=e^{is\psi}\eta.
\end{equation*}
under \eqref{eqn:245} and a \textit{conformal weight} $w$ if it transforms under BMS conformal transformations as
\begin{equation*}
\eta'=K^{w}\eta.
\end{equation*}
\end{defn}
We define the differential operator $\eth$ \citep{New67,Penrin1} acting on a scalar of spin weight $s$, in the particular coordinate system $(\theta,\phi)$:
\begin{equation*}
\eth\eta=-(\sin\theta)^s\left[\frac{\partial}{\partial \theta}+\frac{i}{\sin\theta}\frac{\partial}{\partial \phi}\right](\sin\theta)^{-s}\eta.
\end{equation*}
It is easily seen that, under \eqref{eqn:245}, we have 
\begin{equation*}
(\eth\eta)'=e^{i(s+1)\psi}(\eth\eta),
\end{equation*}
i.e. $\eth$ raises the spin weight by 1. Similarly $\bar{\eth}$ lowers the spin weight by 1. We also have
\begin{equation*}
(\bar{\eth}\eth-\eth\bar{\eth})\eta=2s\eta.
\end{equation*}
We can define \textit{spin-$s$ spherical harmonics} ${}_sY_{lm}(\theta,\phi)$ for integral $s,l$ and $m$ by
\begin{equation*}
{}_sY_{lm}(\theta,\phi)=\left\{\begin{matrix}\displaystyle{\left[\frac{(l-s)!}{(l+s)!}\right]}^{1/2}\eth^s Y_{lm}(\theta,\phi)\hspace{2cm}(0\leq s\leq l), \\
\\
(-1)^s\displaystyle{\left[\frac{(l+s)!}{(l-s)!}\right]}^{1/2}\bar{\eth}^{-s} Y_{lm}(\theta,\phi)\hspace{0.6cm}\hspace{0.6cm}(-l\leq s\leq 0),
\end{matrix}\right.
\end{equation*}
where $Y_{lm}(\theta,\phi)$ are ordinary spherical harmonics. Note that ${}_sY_{lm}(\theta,\phi)$ are not defined for $|s|>l$. It can be shown that the ${}_sY_{lm}$ form a complete orthonormal set for each value of $s$, i.e. any function of spin weight $s$ can be expanded in a series in ${}_sY_{lm}(\theta,\phi)$. If we pass now to complex stereographic coordinates $(\zeta,\bar{\zeta})$, we have that 
\begin{equation*}
\eth\eta=2P^{1-s}\frac{\partial(P^s\eta)}{\partial \zeta},
\end{equation*}
\begin{equation*}
\bar{\eth}\eta=2P^{1+s}\frac{\partial(P^{-s}\eta)}{\partial \bar{\zeta}},
\end{equation*}
with $P=\frac{1}{2}(1+\zeta\bar{\zeta})$. The spin-$s$ spherical harmonics take now the form
\begin{equation*}
{}_sY_{lm}(\theta,\phi)=(-)^{l-m}\frac{[(l+m)!(l-m)!(2l+1)]^{1/2}}{[(l-s)!(l+s)!4\pi]^{1/2}}(1+\zeta\bar{\zeta})
\end{equation*}
\begin{equation*}
\times\sum_{p}\left(\begin{matrix}l-s\\p
\end{matrix}\right)\left(\begin{matrix}l+s\\p+s-m
\end{matrix}\right)\zeta^p(-\bar{\zeta})^{p+s-m}.
\end{equation*}
We have
\begin{align*}
&\eth\left({}_sY_{lm}(\theta,\phi)\right)=[(l-s)(l+s+1)]^{1/2}{}_{s+1}Y_{lm}(\theta,\phi),\\
&\bar{\eth}\left({}_sY_{lm}(\theta,\phi)\right)=-[(l+s)(l-s+1)]^{1/2}{}_{s-1}Y_{lm}(\theta,\phi),\\
&\bar{\eth}\eth\left({}_sY_{lm}(\theta,\phi)\right)=-(l-s)(l+s+1){}_{s}Y_{lm}(\theta,\phi).
\end{align*}
In particular we see from the above relations that $\eth$ annihilates ${}_sY_{lm}$ whenever $l=s$ and $\bar{\eth}$ annihilates ${}_sY_{lm}$ when if $l=-s$. Furthermore the ${}_sY_{lm}$ are eigenfunctions of $\bar{\eth}\eth$ for each spin weight $s$. If $s=0$, $\bar{\eth}\eth$ is essentially the total angular momentum.\\ 
We give now three important results, whose proof can be found in \citep{NewPen66}. 
\begin{thm}$\\ $
\label{thm:conf0}
The operator $\eth$ is invariant, with spin weight unity, under change of coordinate system which preserves the sphere metric
\begin{equation*}
g_{_{S^2}}=d\theta\otimes d\theta+\sin^2\theta d\phi\otimes d\phi=P^{-2}d\zeta\otimes d\bar{\zeta},
\end{equation*}
i.e., under $\eta'=e^{is\psi}\eta$ we have $\eth'\eta'=e^{i(s+1)\psi}\eth\eta$.
\end{thm}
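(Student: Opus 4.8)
The plan is to show directly that the operator $\eth$, as defined in the $(\theta,\phi)$ coordinates, transforms in the stated way under a change of coordinates that preserves the conformal class of the round sphere metric. The essential point is that $\eth$ is really a coordinate expression for an intrinsically-defined covariant operator, namely (up to normalization) the covariant derivative on the sphere projected along the complex null direction $m^a$, restricted to act on quantities of given spin weight $s$. So the strategy is: first, rewrite $\eth$ in complex stereographic coordinates, where the formula $\eth\eta = 2P^{1-s}\,\partial_{\zeta}(P^{s}\eta)$ with $P=\tfrac12(1+\zeta\bar\zeta)$ is available from the excerpt; second, understand which coordinate changes on the sphere preserve the metric $P^{-2}\,d\zeta\otimes d\bar\zeta$ and what they do to the frame vector $m^a$ (this is exactly the rotation $m^a\mapsto e^{i\psi}m^a$, equation \eqref{eqn:245}); third, check by a direct substitution how $2P^{1-s}\,\partial_\zeta(P^s\,\cdot\,)$ behaves and read off that $\eth'\eta' = e^{i(s+1)\psi}\eth\eta$ when $\eta'=e^{is\psi}\eta$.

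Concretely, I would proceed as follows. The coordinate transformations that send $P^{-2}d\zeta\,d\bar\zeta$ to a conformally equivalent metric of the same form are the holomorphic (and antiholomorphic) maps $\zeta\mapsto\zeta'=\zeta'(\zeta)$; restricting to those which genuinely preserve the sphere metric up to the standard conformal rescaling, these are the fractional linear transformations \eqref{eqn:166}. Under such a map one has $P' = P\,|d\zeta/d\zeta'|^{-1}$ up to the relevant factor, and the phase $e^{i\psi}$ appearing in \eqref{eqn:245} is precisely $\bigl(d\zeta/d\zeta'\bigr)/\bigl|d\zeta/d\zeta'\bigr|$, i.e.\ the phase of the Jacobian of the holomorphic change of variable. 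Then I would substitute into $\eth'\eta' = 2(P')^{1-s}\partial_{\zeta'}\bigl((P')^{s}\eta'\bigr)$, use the chain rule $\partial_{\zeta'} = (d\zeta/d\zeta')\,\partial_\zeta$, insert $\eta' = e^{is\psi}\eta$ and the transformation law for $P$, and collect the powers of $d\zeta/d\zeta'$ and of its modulus. The book-keeping should produce exactly one net factor $e^{i(s+1)\psi}$ multiplying $2P^{1-s}\partial_\zeta(P^s\eta) = \eth\eta$, which is the claim. An alternative, cleaner route is to identify $\eth$ invariantly: write $\eth\eta = m^a\bigl(\partial_a - s\,\Gamma_a\bigr)\eta$ where $\Gamma_a$ is the connection one-form on the unit sphere associated with the frame $(m^a,\bar m^a)$, and observe that under $m^a\mapsto e^{i\psi}m^a$ the quantity $\Gamma_a$ shifts by $-i\partial_a\psi$ in just the way needed so that $m^a(\partial_a - s\Gamma_a)$ sends spin-weight $s$ to spin-weight $s+1$ with the factor $e^{i(s+1)\psi}$; conformal invariance of the sphere metric then guarantees that the underlying geometric data are unchanged, only the frame phase having rotated.

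The main obstacle I anticipate is purely one of care rather than of conceptual depth: keeping track of exactly which powers of the Jacobian $d\zeta/d\zeta'$, and which powers of its modulus $|d\zeta/d\zeta'|$, are absorbed into $P$, into $\eta'$, and into the explicit prefactors $P^{1-s}$ and $P^{s}$, and making sure the holomorphic (as opposed to merely conformal) nature of the allowed coordinate changes is used correctly — it is the holomorphy that lets $\partial_{\zeta'}$ pull back to a pure multiple of $\partial_\zeta$ with no $\partial_{\bar\zeta}$ term, and that is what makes $\eth$ well-defined with a clean spin-weight shift. A secondary subtlety is that the statement presupposes the standard identification of "change of coordinate system which preserves the sphere metric" (up to conformal factor) with the group \eqref{eqn:166}; I would either take this as given from the preceding discussion or remark briefly that any orientation-preserving conformal diffeomorphism of $S^2$ is a Möbius transformation. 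Once those points are handled, the computation is short and the factor $e^{i(s+1)\psi}$ emerges immediately.
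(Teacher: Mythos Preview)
The paper does not supply its own proof of this theorem: immediately before stating it (together with the two theorems that follow) the text says ``We give now three important results, whose proof can be found in \citep{NewPen66}.'' So there is nothing in the paper to compare your proposal against; you are filling in a proof the paper deliberately omitted.

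Your approach is the standard one and is essentially correct, but two points deserve tightening. First, the hypothesis is that the coordinate change \emph{preserves} the sphere metric, i.e.\ is an isometry (a rotation of $S^2$, the $SU(2)$ subgroup of the M\"obius group with $d=\bar a$, $c=-\bar b$), not a general conformal/M\"obius transformation as you write; the behaviour under the full conformal group (where $K\neq 1$) is a separate and more delicate statement, handled in the paper's next theorem. For isometries one has $K=1$ and your relation $P'=P\,\lvert d\zeta'/d\zeta\rvert$ holds exactly. Second, in the direct stereographic computation the key algebraic fact you should make explicit is that, with $e^{i\psi}$ equal to the phase of $d\zeta/d\zeta'$ (equivalently of $\overline{d\zeta'/d\zeta}$), the combination $(P')^{s}\eta'$ equals $P^{s}\,(\overline{d\zeta'/d\zeta})^{s}\,\eta$, and this extra factor is \emph{antiholomorphic} and therefore passes straight through $\partial_\zeta$; that is what kills the would-be derivative of the phase and leaves exactly the factor $e^{i(s+1)\psi}$ in front of $\eth\eta$. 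Your alternative invariant description via $\eth\eta=m^a(\partial_a-s\,\Gamma_a)\eta$ with $\Gamma_a$ shifting by $-i\,\partial_a\psi$ is precisely the mechanism used in the Newman--Penrose reference and is the cleanest way to phrase the argument.
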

\begin{thm}$\\ $
\label{thm:conf1}
Let $\eta$ have conformal weight $w$ and spin weight $s$, where $w\geq s$. Then $\eth^{w-s+1}\eta$ is a quantity of conformal weight $s-1$ and of spin weight $w+1$. Furthermore we have for the commutator 
\begin{equation*}
(\bar{\eth}^s\eth^s-\eth^s\bar{\eth}^s)\eta=2s\eta.
\end{equation*}
\end{thm}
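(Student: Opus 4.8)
\textbf{Proof plan for Theorem \ref{thm:conf1}.}

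The plan is to exploit the transformation behaviour of $\eth$ established in Theorem \ref{thm:conf0} together with the way the conformal factor $K$ interacts with $\eth$. First I would recall that under a BMS conformal transformation $\zeta\rightarrow\zeta'=(a\zeta+b)/(c\zeta+d)$ the sphere metric rescales by $K^2$, with $K$ given by \eqref{eqn:170}, and that a quantity $\eta$ of spin weight $s$ and conformal weight $w$ transforms as $\eta'=K^w e^{is\psi}\eta$ for the induced rotation angle $\psi$ of the tangent frame $m^a\rightarrow e^{i\psi}m^a$. The key computational lemma I would isolate is that, acting on a quantity of conformal weight $w$ and spin weight $s$, one has a twisted Leibniz/commutation rule of the form $\eth(K^{w}\eta)=K^{w-1}\,\eth\eta$ whenever $w=s$ — i.e. $\eth$ lowers conformal weight by one exactly when applied to a quantity whose conformal weight equals its spin weight. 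This is most transparent in the stereographic representation $\eth\eta = 2P^{1-s}\partial_\zeta(P^s\eta)$ with $P=\tfrac12(1+\zeta\bar\zeta)$: one checks that $K$ is (up to the Jacobian factors already accounted for by the frame rotation) essentially the ratio $P/P'$ pulled back, so that the $P$-powers telescope.

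With that lemma in hand the theorem follows by a short induction on the number of $\eth$'s applied. At the bottom step, $\eta$ has conformal weight $w$ and spin weight $s$ with $w\geq s$; applying $\eth$ raises the spin weight to $s+1$ and, since at this stage conformal and spin weights still differ, leaves the conformal weight at $w$. After $k$ applications, $1\le k\le w-s$, the iterate $\eth^{k}\eta$ has spin weight $s+k$ and conformal weight $w$; the inductive hypothesis is exactly that the conformal weight is unchanged as long as the running spin weight $s+k$ has not yet overtaken $w$. Precisely when $k=w-s+1$ we reach $\eth^{w-s}\eta$ of spin weight $w$ and conformal weight $w$ — spin weight equal to conformal weight — so the twisted Leibniz lemma applies and the final application of $\eth$ drops the conformal weight by one to $s-1$ while raising the spin weight to $w+1$. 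Hence $\eth^{\,w-s+1}\eta$ has conformal weight $s-1$ and spin weight $w+1$, as claimed.

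For the commutator identity $(\bar\eth^{s}\eth^{s}-\eth^{s}\bar\eth^{s})\eta = 2s\eta$ I would argue by induction on $s$, using the elementary case $(\bar\eth\eth-\eth\bar\eth)\eta=2s\eta$ already recorded in the text for a quantity of spin weight $s$, together with the ladder relations $\eth({}_sY_{lm})=[(l-s)(l+s+1)]^{1/2}{}_{s+1}Y_{lm}$ and $\bar\eth({}_sY_{lm})=-[(l+s)(l-s+1)]^{1/2}{}_{s-1}Y_{lm}$. Expanding $\eta$ of spin weight $s$ in the complete set ${}_sY_{lm}$ reduces the identity to an algebraic identity among the ladder coefficients: $\bar\eth^s\eth^s$ and $\eth^s\bar\eth^s$ act diagonally on each ${}_sY_{lm}$ by products of these coefficients, and their difference telescopes to $2s$. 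I expect the main obstacle to be the bookkeeping in the twisted Leibniz lemma — keeping the frame-rotation phase $e^{is\psi}$ correctly separated from the genuine conformal factor $K^w$ so that the cancellation of $P$-powers is clean — since a sign or exponent slip there would propagate through the whole induction; everything after that point is routine.
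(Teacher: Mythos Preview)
Your induction for the first part does not go through, and the arithmetic at the end already signals the problem. You claim that for $k<w-s+1$ the iterate $\eth^{k}\eta$ ``has conformal weight $w$'', but this is precisely what is \emph{not} true: for a quantity with $w\neq s$ the single application $\eth\eta$ does not transform homogeneously under the conformal group at all, so it has no conformal weight to carry forward. The content of the theorem is exactly that only the specific power $\eth^{\,w-s+1}$ restores homogeneity; none of the intermediate $\eth^{k}\eta$ are conformal densities, so there is no inductive hypothesis to invoke. And even within your scheme the numbers do not match: if after $w-s$ steps the conformal weight were still $w$ and the final $\eth$ lowered it by one, you would obtain $w-1$, not $s-1$; these agree only in the trivial case $w=s$.

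The paper does not supply its own proof here but defers to \cite{NewPen66}. The argument there does not proceed by induction on the number of $\eth$'s. One writes $\eth^{\,w-s+1}\eta$ explicitly in the stereographic form using $\eth\eta=2P^{1-s}\partial_\zeta(P^s\eta)$, then computes directly how the whole expression transforms under $\zeta\mapsto(a\zeta+b)/(c\zeta+d)$, using the relation between $P$ and $P'$ and the Jacobian $\partial\zeta'/\partial\zeta=(c\zeta+d)^{-2}$. The inhomogeneous pieces generated at each differentiation cancel only after all $w-s+1$ derivatives have been taken, leaving an overall factor $K^{s-1}$ times the appropriate frame phase. Your ``twisted Leibniz lemma'' captures only the $w=s$ case of this computation; the general case requires tracking the full polynomial in $(c\zeta+d)$ that appears and seeing that it is killed precisely by the $(w-s+1)$-st derivative. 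For the commutator identity your spherical-harmonic strategy is reasonable in principle, but you should actually carry out the product of ladder coefficients rather than assert telescoping: the combinatorics is not a one-line cancellation.
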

\begin{thm}$\\ $
\label{thm:conf2}
Given any suitably regular $\eta$ on the sphere of spin weight $s$, there exists $\xi$ of spin weight zero for which
\begin{equation*}
\eta=\eth^s\xi.
\end{equation*}
\end{thm}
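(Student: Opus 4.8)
\textbf{Proof proposal for Theorem \ref{thm:conf2}.}

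The plan is to reduce the statement to a standard fact about the Laplace--Beltrami operator on the round 2-sphere, namely that on the orthogonal complement of the relevant low harmonics it is invertible. First I would treat the case $s\geq 0$; the case $s\leq 0$ follows by applying the result to $\bar\eta$ (which has spin weight $-s\leq 0$), or equivalently by a completely symmetric argument with $\bar\eth$ in place of $\eth$. So assume $s\geq 0$ and that $\eta$ has spin weight $s$ and is sufficiently regular (say $C^\infty$, or of whatever differentiability class is implicit in the ambient hypotheses). Expand $\eta$ in spin-$s$ spherical harmonics, $\eta=\sum_{l\geq s}\sum_{m=-l}^{l}\eta_{lm}\,{}_sY_{lm}$, which is legitimate because, as recalled just above, the ${}_sY_{lm}$ form a complete orthonormal set for each fixed spin weight $s$.

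The key computational input is the iterated relation, already displayed in the excerpt,
\begin{equation*}
\bar\eth\eth\left({}_sY_{lm}\right)=-(l-s)(l+s+1)\,{}_sY_{lm},
\end{equation*}
together with the raising/lowering formulae $\eth({}_sY_{lm})=[(l-s)(l+s+1)]^{1/2}\,{}_{s+1}Y_{lm}$ and $\bar\eth({}_sY_{lm})=-[(l+s)(l-s+1)]^{1/2}\,{}_{s-1}Y_{lm}$. Applying $\eth$ exactly $s$ times to a spin-$0$ harmonic $Y_{lm}=\,{}_0Y_{lm}$ (legitimate only for $l\geq s$, and indeed $\eth$ annihilates ${}_sY_{lm}$ when $l=s$ so no obstruction arises from below) produces a nonzero multiple of ${}_sY_{lm}$; explicitly, the definition of ${}_sY_{lm}$ in terms of $\eth^s Y_{lm}$ quoted above gives the precise constant $c_{l}=[(l-s)!/(l+s)!]^{1/2}\neq 0$ for $l\geq s$. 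Therefore I define
\begin{equation*}
\xi:=\sum_{l\geq s}\sum_{m=-l}^{l}\frac{\eta_{lm}}{c_l}\,Y_{lm},
\end{equation*}
which is a genuine spin-weight-zero field on the sphere (the sum ranges only over $l\geq s$, so every $Y_{lm}$ appearing is well defined), and termwise application of $\eth^{s}$ recovers $\eth^{s}\xi=\sum_{l,m}\eta_{lm}\,{}_sY_{lm}=\eta$. That $\xi$ inherits suitable regularity from $\eta$ follows from the decay of the coefficients $\eta_{lm}$: since $c_l^{-1}$ grows only polynomially in $l$ (like $l^{s}$), the Sobolev norms of $\xi$ are controlled by those of $\eta$, so $\xi$ is as smooth as $\eta$ modulo a fixed finite loss — and if $\eta\in C^\infty$ then $\xi\in C^\infty$.

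The main obstacle, and the point that deserves care, is the invertibility step: one must be sure that $\eth^s$ acting from spin weight $0$ onto spin weight $s$ really is surjective, i.e. that every coefficient constant $c_l$ is nonzero for $l\geq s$ and that the lower harmonics $l<s$ genuinely do not occur in any spin-$s$ field. Both are true precisely because ${}_sY_{lm}$ is simply not defined for $|s|>l$, so the ``missing'' harmonics impose no compatibility condition on $\eta$ — in contrast to Theorem \ref{thm:conf1}, where $\eth^{w-s+1}$ does have a kernel. The only other thing to verify is that the termwise manipulations are justified, which is immediate once one works in the appropriate Sobolev (or $C^\infty$) topology, using that $\eth$ and $\bar\eth$ are first-order differential operators and hence continuous in these topologies; I would phrase the convergence argument in one line rather than belabour it. Finally, invoking Theorem \ref{thm:conf0} I would remark that the construction is coordinate-independent in the appropriate sense, since $\eth$ is an invariant operator on metric-sphere cross-sections, so $\xi$ is well defined independently of the particular $(\theta,\phi)$ chart used to set up the expansion.
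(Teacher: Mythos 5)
The paper does not actually prove Theorem \ref{thm:conf2}: it states the three results \ref{thm:conf0}--\ref{thm:conf2} and defers all proofs to the cited reference (Newman and Penrose 1966), so there is no in-text argument to compare against line by line. That said, your route --- expand $\eta$ in the complete orthonormal system ${}_sY_{lm}$, invert $\eth^s$ mode by mode, and observe that no compatibility condition arises because ${}_sY_{lm}$ is simply undefined for $l<|s|$ --- is the standard proof and is essentially the one the cited source gives; the structure is sound and the surjectivity discussion (contrasting with the nontrivial kernel in Theorem \ref{thm:conf1}) is exactly the right point to isolate. The one thing to fix is a normalization inversion: with the paper's convention ${}_sY_{lm}=\bigl[(l-s)!/(l+s)!\bigr]^{1/2}\eth^sY_{lm}$, one has $\eth^sY_{lm}=c_l^{-1}\,{}_sY_{lm}$ with your $c_l=[(l-s)!/(l+s)!]^{1/2}$, so the correct preimage is $\xi=\sum_{l\geq s}\sum_{m}c_l\,\eta_{lm}Y_{lm}$ rather than $\sum\eta_{lm}c_l^{-1}Y_{lm}$; as written, your $\xi$ returns $\sum\eta_{lm}c_l^{-2}\,{}_sY_{lm}\neq\eta$. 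This is a trivially repairable slip, and it actually improves your regularity remark: since $c_l\sim l^{-s}$ decays, $\xi$ \emph{gains} $s$ derivatives relative to $\eta$ (as one expects from inverting an $s$-th order elliptic operator), rather than losing a fixed finite amount.
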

Using theorem \ref{thm:conf2}, for a quantity $\eta$ of spin weight $s$ on the sphere, we can define its electric and magnetic part, denoted by $\eta_e$ and $\eta_m$ respectively, given by 
\begin{equation*}
\eta_e=\eth^s\mathrm{Re}(\xi),\hspace{1cm}\eta_m=i\eth^s\mathrm{Im}(\xi).
\end{equation*}
such that
\begin{equation*}
\eta=\eta_e+\eta_m,
\end{equation*}
with (see theorem \ref{thm:conf1})
\begin{equation*}
\bar{\eth}^s\eta_e=\eth^s\bar{\eta}_e,\hspace{1cm}\bar{\eth}^s\eta_m=-\eth^s\bar{\eta}_m.
\end{equation*}
The invariance properties of $\eth$ imply that the concepts of magnetic and electric parts of $\eta$ are invariant under rotations of the sphere or of the $m^{a}$ vectors. Furthermore, in virtue of theorem \ref{thm:conf1}, only if $\eta$ has conformal weight $-1$, does the split of $\eta$ onto its electric and magnetic parts turn out to be invariant also under BMS conformal transformations.\\
We apply now these concepts to $\tilde{\sigma}^0$ which has $s=2$ and $w=-1$. Thus, for each hypersurface $u=\mathrm{const}$ we have a splitting $\tilde{\sigma}^0=\tilde{\sigma}^0_e+\tilde{\sigma}^0_m$, which is invariant under Lorentz transformations, which leave the surface $u=0$ invariant. \\
Now, it can be shown \citep{NewPen66} that in the Minkowski case, the asymptotic shear $\tilde{\sigma}^0=\tilde{\sigma}^0_e+\tilde{\sigma}^0_m$ behaves, when $u\rightarrow-\infty$ as
\begin{subequations}
\label{eqn:250}
\begin{align}
&\tilde{\sigma}^{0}_e(u,\theta,\phi)\xrightarrow[u \rightarrow -\infty]{}S_e(\theta,\phi),\\
&\tilde{\sigma}^{0}_m(u,\theta,\phi)\xrightarrow[u \rightarrow -\infty]{}0,
\end{align}
\end{subequations}
with $S_e(\theta,\phi)$ purely electric and independent of $u$. The magnetic part $\tilde{\sigma}^0_m(u,\theta,\phi)$, as $u\rightarrow-\infty$ vanishes, i.e. $S_m(\theta,\phi)=0$.\\
On the basis of what happens in the Minkowski case we wish to impose a physical restriction on the behaviour of $\tilde{\sigma}^0$ as $u\rightarrow-\infty$ for a general space-time. Although no actual cuts of $\mathscr{I}^+$ may be shear-free, it is reasonable to expect that in the limit $u\rightarrow-\infty$ on $\mathscr{I}^+$, such cuts will exist. Requiring that this limiting shear-free cuts be sent into one another, we can actually restrict the BMS transformations to obtain a canonically defined subgroup of the BMS group, which is isomorphic to the Poincar\'e group. The Poincar\'e group which emerges in this way, in virtue of the considerations we have done on the gravitational radiation, may be thought of as that which has relevance to the remote past, \textit{before} all the gravitational radiation has been emitted. In analogy with \eqref{eqn:250} we require that
\begin{equation}
\label{eqn:248}
\tilde{\sigma}^0(u,\theta,\phi)\longrightarrow S(\theta,\phi).
\end{equation}
If the analogy with the Minkowski theory is to be trusted, we would expect $S(\theta,\phi)$ to be purely electric. However, it is not essential since it will be possible to extract the Poincar\'e group only on the basis of \eqref{eqn:248}. Hence we will treat the case in which $S(\theta,\phi)$ could have a magnetic part too, i.e. $S(\theta,\phi)=S_e(\theta,\phi)+S_m(\theta,\phi)$ with $S_m(\theta,\phi)\neq 0$. \\
It can be shown \citep{Sachs62} that, under a full BMS transformation, the asymptotic shear transforms as
\begin{equation}	
\label{eqn:249}
\tilde{\sigma}^0{}'(u,\theta,\phi)=K^{-1}e^{2i\psi}\left[\tilde{\sigma}^0(u,\theta,\phi)+\frac{1}{2}\eth^2\alpha(\theta,\phi)\right].
\end{equation}
It has to be remarked that $\tilde{\sigma}^0{}'(u,\theta,\phi)$ refers to the asymptotic shear of the hypersurface $u'=\mathrm{const}$ of the transformed coordinate system evaluated at $(u,\theta,\phi)$. The complete transformation $\tilde{\sigma}^0{}'(u',\theta',\phi')$ is more complicated. Applying \eqref{eqn:248} to \eqref{eqn:249} gives
\begin{align*}
&S'_e(\theta,\phi)=K^{-1}e^{2i\psi}\left[S_e(\theta,\phi)-\frac{1}{2}\eth^2\alpha(\theta,\phi)\right],\\
&S'_m(\theta,\phi)=K^{-1}e^{2i\psi}S_m(\theta,\phi),
\end{align*}
since $\alpha(\theta,\phi)$ is real. Using theorem \ref{thm:conf2} we can set
\begin{equation*}
S_e(\theta,\phi)=\eth^2G(\theta,\phi),
\end{equation*}
for some real $G(\theta,\phi)$. Since $\alpha(\theta,\phi)$ can be chosen arbitrarily on the sphere, it follows that a BMS transformations for which $\alpha(\theta,\phi)=2G(\theta,\phi)$ imposes $S'_e(\theta,\phi)=0$. Thus we have introduced coordinate conditions for which $S_e(\theta,\phi)=0$ at $u=-\infty$. Now the BMS transformations which preserve the condition $S_e(\theta,\phi)=0$ are those for which
\begin{equation*}
\eth^2\alpha(\theta,\phi)=0.
\end{equation*}
It can be easily shown that this condition restricts $\alpha(\theta,\phi)$ to be of the form of \eqref{eqn:207} and thus the allowed supertranslations are simply the translations. The Lorentz transformations, given by $\alpha(\theta,\phi)=0$ do not spoil the coordinate conventions. We have finally obtained the following 
\begin{thm}$\\ $
The group of asymptotic isometries of an asymptotically flat space-time which preserves the condition $S_e(\theta,\phi)=0$ at $u=-\infty$ is isomorphic with the Poincar\'e group $\mathscr{P}$.
\end{thm}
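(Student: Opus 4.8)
The plan is to carry out the argument already sketched in the surrounding text in a fully organized way, combining the transformation law \eqref{eqn:249} for the asymptotic shear under a general BMS transformation with the harmonic-decomposition properties of the $\eth$ operator (theorems \ref{thm:conf0}, \ref{thm:conf1}, \ref{thm:conf2}). First I would fix, once and for all, a BMS frame on $\mathscr{I}^+$ in which the limiting shear $S(\theta,\phi)=\lim_{u\to-\infty}\tilde{\sigma}^0(u,\theta,\phi)$ exists, as guaranteed by the physical hypothesis \eqref{eqn:248}; this uses the asymptotic-simplicity/topology results of section \ref{sect:4.3} (so that $\mathscr{I}^+\cong S^2\times\mathbb{R}$ and $S$ is a genuine spin-weight-$2$, conformal-weight-$-1$ scalar on the $2$-sphere) and theorem \ref{thm:shear} to guarantee that the gauge freedom $\Omega\to\Theta\Omega$ does not destroy the relevant structure.

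Next I would split $S=S_e+S_m$ into its electric and magnetic parts. By theorem \ref{thm:conf2}, since $S$ has spin weight $2$, there is a spin-weight-$0$ scalar $\xi$ with $S=\eth^2\xi$; setting $S_e=\eth^2\mathrm{Re}(\xi)$, $S_m=i\eth^2\mathrm{Im}(\xi)$ gives the decomposition, and in particular $S_e=\eth^2 G$ for a real scalar $G$ on the sphere. I would then apply \eqref{eqn:249} in the limit $u\to-\infty$, obtaining
\begin{align*}
S'_e(\theta,\phi)&=K^{-1}e^{2i\psi}\Bigl[S_e(\theta,\phi)-\tfrac12\eth^2\alpha(\theta,\phi)\Bigr],\\
S'_m(\theta,\phi)&=K^{-1}e^{2i\psi}S_m(\theta,\phi),
\end{align*}
using that $\alpha$ is real so that $\eth^2\alpha$ contributes only to the electric part. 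Choosing $\alpha=2G$ produces $S'_e=0$: this establishes that a canonical frame with $S_e=0$ at $u=-\infty$ can be reached. It then remains to identify the stabilizer of this condition inside $\mathscr{B}$. A BMS element with conformal parameters $(a,b,c,d)$ and supertranslation parameter $\alpha$ preserves $S_e=0$ iff $\eth^2\alpha=0$ (the Lorentz part, $\alpha=0$, trivially preserving it by the displayed formulas, since $K^{-1}e^{2i\psi}\cdot 0=0$).

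The decisive step — and the one I expect to be the main obstacle — is to show that $\eth^2\alpha=0$ on $S^2$ forces $\alpha$ to have the form \eqref{eqn:207}, i.e. to contain only $l=0$ and $l=1$ spherical harmonics, so that the surviving supertranslations are exactly the translations $\mathscr{T}$. This is where one must use the kernel structure of $\eth$: from the relations $\eth\bigl({}_sY_{lm}\bigr)=[(l-s)(l+s+1)]^{1/2}{}_{s+1}Y_{lm}$ one reads off that $\eth$ raises spin weight and annihilates ${}_sY_{lm}$ precisely when $l=s$, so $\eth^2$ applied to the spin-$0$ harmonic $Y_{lm}$ vanishes iff $l\le 1$; expanding $\alpha=\sum_{l,m}\alpha_{lm}Y_{lm}$ and using completeness and orthonormality of the spin-$2$ harmonics then yields $\alpha_{lm}=0$ for $l\ge 2$. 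Care is needed here because a priori $\alpha$ need only be (twice) differentiable, so I would first invoke the expansion \eqref{eqn:206} and the fact that $\eth^2$ maps this series term-by-term to a convergent series in ${}_2Y_{lm}$, whose vanishing forces each coefficient to vanish. Once this is done, the group of BMS transformations preserving $S_e=0$ at $u=-\infty$ consists of all $(\Lambda,\alpha_t)$ with $\alpha_t$ of the form \eqref{eqn:207}; by \eqref{eqn:233} and \eqref{eqn:238} this is a $10$-parameter group which is a semidirect product of $\mathscr{L}$ with $\mathscr{T}$, hence isomorphic to the Poincaré group $\mathscr{P}$. I would close by noting (as a remark rather than part of the proof) that whether or not $S_m$ can also be removed is irrelevant: the extraction of $\mathscr{P}$ used only the electric condition, and the Lorentz transformations act on $S_m$ without obstruction, consistently with the Minkowski behaviour \eqref{eqn:250}.
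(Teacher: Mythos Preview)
Your proposal is correct and follows essentially the same route as the paper: the argument preceding the theorem statement is exactly the one you outline---use \eqref{eqn:249} in the limit $u\to-\infty$ to split into electric and magnetic parts, kill $S_e$ by a supertranslation $\alpha=2G$, and identify the stabilizer as those BMS elements with $\eth^2\alpha=0$, which restricts $\alpha$ to the four-parameter form \eqref{eqn:207}. You supply more detail than the paper on the one point it leaves implicit, namely why $\eth^2\alpha=0$ forces $\alpha$ to lie in the span of $Y_{lm}$ with $l\le 1$: the paper simply asserts this ``can be easily shown,'' whereas you read it off from the action of $\eth$ on spin-weighted harmonics and the completeness of the ${}_2Y_{lm}$.
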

\begin{oss}$\\ $
We could have carried out the same arguments by taking the limit $u\rightarrow +\infty$, and it would have been an independent choice. Thus, in a similar way, we could have extracted another Poincar\'e group which has relevance to the remote future, i.e. \textit{after} all the gravitational radiation has been emitted. There seems to be no reason to believe that these two Poincar\'e groups will be the same, in general. 
\end{oss}
\section{Applications of the BMS Group\hspace{1cm}}
There are many very recent applications of the BMS formalism and of particular interest are the ones that have to do with black-hole physics in quantum gravity. Maybe the most significant of them is the one which is due to \cite{Stro14}. In this work, the author identifies the two different copies of the BMS group of $\mathscr{I}^+$ and of $\mathscr{I}^-$ as $\mathrm{BMS}^+$ and $\mathrm{BMS}^-$, respectively. Starting from the idea that $\mathrm{BMS}^+$ and $\mathrm{BMS}^-$ act non-trivially on outgoing and ingoing gravitational scattering data, it is argued that in Christodoulou-Klainerman space-times \citep{ChrKl} it is possible to make a canonical identification between the generators $\mathrm{B}^+$ and $\mathrm{B}^-$ of $\mathrm{BMS}^+$ and $\mathrm{BMS}^-$ (using the link between $\mathscr{I}^+$ and $\mathscr{I}^-$), and hence to find what are called the \lq diagonal\rq\hspace{0.1mm} generators $\mathrm{B}^0$ of $\mathrm{BMS}^+\times\mathrm{BMS}^-$. Then, since in asymptotically flat quantum gravity one seeks an $\cal{S}$-matrix relating the \lq in\rq\hspace{0.1mm} and \lq out\rq\hspace{0.1mm} Hilbert states, i.e.
\begin{equation*}
\left| \mathrm{in}\right>=\cal{S}\left|\mathrm{out}\right>,
\end{equation*}
it is asked whether or not the BMS group could provide a symmetry of 
the $\cal{S}$-matrix and it is guessed a relation of the form 
\begin{equation*}
\mathrm{B}^+\cal{S}-\cal{S} \mathrm{B}^{-}=\mathrm{0}.
\end{equation*}
The classical limit of such a relation would give symmetries of classical gravitational scattering. The generators $\mathrm{B}^0$ are made up of Lorentz generators, which provide the usual Lorentz invariance of the $\cal{S}$-matrix, and of the supertranslations generators which provide that the local energy is conserved at each angle. In fact the simple translations, that are supertranslations whose magnitude does not depend on the angle or position $(\zeta,\bar{\zeta})$ on $S^2$, lead, as known, to global energy conservation. Then it is plausible that the supertranslations which act at only one angle lead to the conservation of energy at that one angle. Furthermore in quantum theory, matrix elements of the conservation laws give an infinite number of exact relations between scattering amplitudes in quantum gravity. These relations turned out \citep{MitraStro} to have been previously discovered by \cite{Wein65} using Feynman diagrammatics and are known as the \textit{soft graviton theorem}. \\The results obtained by Strominger paved the way for a deeper investigation of the so called \lq soft hair\rq\hspace{0.1mm} of black holes. In \cite{Haw161} the authors showed that the Minkowski vacuum in quantum gravity is not invariant under supertranslations. In particular such transformations map the Minkowski vacuum into a physically inequivalent zero-energy one. Hence, the supertranslation symmetry is spontaneously broken and the \lq soft\rq\hspace{0.1mm} (i.e. zero-energy) gravitons are the associated Goldstone bosons. Since the information paradox \citep{Haw1975,Haw1976} relies on the fact that the vacuum is unique, this would be a motivation to doubt of its validity. Furthermore, the information loss argument assumes the fact that the black holes are only characterized by their mass $M$, charge $Q$ and angular momentum $J$. But supertranslations map a stationary black hole to a physically inequivalent one and in the process of Hawking evaporation, supertranslation charge will be radiated through null infinity. Since this charge is conserved, the sum of the black hole and radiated supertranslation charge is fixed at all times. This requires that black holes carry soft hair (i.e. additional data, besides mass, charge and angular momentum) arising from supertranslations. Moreover, when the black hole has fully evaporated, the net supertranslation charge in the outgoing radiation must be conserved. This leads in turn to correlations between the early- and late-time Hawking radiation. In \cite{Haw162} the same authors showed that black holes in General Relativity are characterized by an infinite head of supertranslation hair and that distinct black holes are characterized by different classical superrotation charges measured at infinity.

\appendix
\chapter{Topological spaces}
\label{A}
\begin{defn}$\\ $
\label{defn:top}
A \textit{topological space} $(X,\mathscr{T})$ consists of a set $X$ together with a collection $\mathscr{T}$ of subsets of $X$ satisfying the following three properties:
\begin{itemize}
\item The union of an arbitrary collection of subsets, each of which is in $\mathscr{T}$, is in $\mathscr{T}$, i.e. if $O_{\alpha}\in\mathscr{T}$ for all $\alpha$, then $\cup_{\alpha}O_{\alpha}\in\mathscr{T}$;
\item The intersection of a finite number of subsets in $\mathscr{T}$ is in $\mathscr{T}$, i.e. if $O_1,...,O_n\in\mathscr{T}$ then \begin{equation*}
\bigcap_{i=1}^nO_i\in\mathscr{T};
\end{equation*}
\item The entire set $X$ and the empty set $\emptyset$ are in $\mathscr{T}$.
\end{itemize}
\end{defn}
Usually $\mathscr{T}$ is referred to as topology on $X$, and subsets of $X$ which are listed in the collection $\mathscr{T}$ are called \textit{open sets}.\\
Let $(X,\mathscr{T})$ and $(Y,\mathscr{S})$ be topological spaces and a map $f:X\rightarrow Y$ between them.
\begin{defn}$ \\ $
\label{defn:map}
$f$ is said to be \textit{continuous} if the inverse image, $f^{-1}[O]\equiv\{\left.x\in X\right|f(x)\in O\}$, of every open set $O$ in $Y$ is an open set in $X$.
\end{defn}
\begin{defn}$\\ $
\label{defn:hom}
If $f$ is continuous, one-to-one, onto and its inverse is continuous, $f$ is called a \textit{homomorphism} and $(X,\mathscr{T})$ and $(Y,\mathscr{S})$  are said to be homeomorphic.
\end{defn}
\begin{defn} $\\ $
\label{defn:clo}
If $(X,\mathscr{T})$ is a topological space, a subset $C$ of $X$ is said to be \textit{closed} if its complement $X-C=\{\left.x\in X\right| x\notin C\}$ is open.
\end{defn}
Note that a subset may be neither open nor closed or may be both open and closed. This last possibility gives rise to the definition of connectedness. 
\begin{defn}
\label{defn:conn} $\\ $
A topological space $(X,\mathscr{T})$ is said to be \textit{connected} if the only subsets which are both open and closed are $X$ and $\emptyset$.
\end{defn}
Consider a topological space $(X,\mathscr{T})$ and an arbitrary subset of $X$, say $A$.
\begin{defn} $\\ $
\label{defn:closure}
The \textit{closure} of $A$, $\overline{A}$, is defined as the intersection of all closed sets containing $A$.
\end{defn}
Obviously $\overline{A}$ is closed, contains $A$ and equals $A$ if and only if $A$ is closed.
\begin{defn} $\\ $
\label{defn:int}
The \textit{interior} of $A$, $\mathrm{int}[A]$, is defined as the intersection of all open sets contained within $A$.
\end{defn}
Clearly $\mathrm{int}[A]$ is open, is contained in $A$ and equals $A$ if and only if $A$ is open.
\begin{defn} $\\ $
\label{defn:bou}
The \textit{boundary} of $A$, $\dot{A}$, consists of all points which lie in $\overline{A}$ but not in $\mathrm{int}[A]$:
\begin{equation*}
\dot{A}=\overline{A}-\mathrm{int}[A].
\end{equation*}
\end{defn}
\begin{defn} $\\ $
\label{defn:hau}
A topological space $(X,\mathscr{T})$ is said to be \textit{Hausdorff} if for each pair of distinct points $p$, $q\in X$, $p\neq q$, one can find open sets $O_p$, $O_q\in\mathscr{T}$ such that $p\in O_p$ and $q\in O_q$, and $O_p\cap O_q=\emptyset$.
\end{defn}
Let $(X,\mathscr{T})$ be a topological space, $\{O_{\alpha}\}$ be a collection of open sets and $A$ a subset of $X$.
\begin{defn} $\\ $
\label{defn:cov}
$\{O_{\alpha}\}$ is said to be a \textit{open cover} of $A$ if the union of these sets contains $A$.
\end{defn}
\begin{defn} $\\ $
\label{defn:sub}
A subcollection of the sets $\{O_{\alpha}\}$ which also covers $A$ is referred to a \textit{subcover}.
\end{defn}
The two previous definitions allow us to give the following
\begin{defn} $\\ $
\label{defn:com}
The set $A$ is said to be \textit{compact} if every open cover of $A$ has a finite subcover, i.e. a subcover consisting of only a finite number of sets.
\end{defn}
The general relation between compact and closed sets is described by the following two theorems.
\begin{thm}$\\ $
\label{thm:cc1}
Let $(X,\mathscr{T})$ be Hausdorff and let $A\subset X$ be compact. Then $A$ is closed.
\end{thm}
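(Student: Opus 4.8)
The plan is to prove the standard topology fact that a compact subset of a Hausdorff space is closed, by showing that the complement $X - A$ is open. First I would fix an arbitrary point $p \in X - A$ and produce, for every point $q \in A$, a pair of disjoint open sets separating $p$ from $q$; this is exactly what the Hausdorff property (Definition \ref{defn:hau}) supplies: there exist open $U_q \ni p$ and $V_q \ni q$ with $U_q \cap V_q = \emptyset$.

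Next I would observe that the family $\{V_q\}_{q \in A}$ is an open cover of $A$, since each $q \in A$ lies in its corresponding $V_q$. By compactness of $A$ (Definition \ref{defn:com}), there is a finite subcover $V_{q_1}, \dots, V_{q_n}$, so that $A \subset \bigcup_{i=1}^n V_{q_i}$. I would then form the corresponding finite intersection $U = \bigcap_{i=1}^n U_{q_i}$, which is open by the second defining property of a topology (Definition \ref{defn:top}, intersection of finitely many open sets) and contains $p$ since each $U_{q_i}$ does.

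The key step is then to check that $U$ is disjoint from $A$: if $x \in U \cap A$, then $x \in V_{q_j}$ for some $j$ because the $V_{q_i}$ cover $A$, but also $x \in U \subset U_{q_j}$, contradicting $U_{q_j} \cap V_{q_j} = \emptyset$. Hence $U \subset X - A$, so every point of $X - A$ has an open neighbourhood contained in $X - A$, which means $X - A$ is open and therefore $A$ is closed by Definition \ref{defn:clo}.

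I do not expect any serious obstacle here; this is a routine argument. The only point requiring a little care is the bookkeeping in passing from the possibly infinite family $\{U_q\}$ (for which the arbitrary intersection need not be open) to the \emph{finite} family indexed by the subcover $q_1, \dots, q_n$ — it is essential that compactness is invoked \emph{before} taking the intersection of the $U$'s, so that only finitely many are intersected.
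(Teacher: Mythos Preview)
Your argument is correct and is exactly the standard proof of this elementary fact. The paper itself does not supply a proof for this theorem; it is stated in the appendix as a known background result without demonstration, so there is nothing to compare against beyond noting that your proof would serve perfectly well.
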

\begin{thm}$\\ $
\label{thm:cc2}
Let $(X,\mathscr{T})$ be compact and let $A\subset X$ be closed. Then $A$ is compact.
\end{thm}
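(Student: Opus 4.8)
The plan is to reduce compactness of $A$ to compactness of the ambient space $X$ by adjoining a single extra open set to any given cover of $A$. Concretely, I would start by taking an arbitrary open cover $\{O_{\alpha}\}$ of $A$ in the sense of Definition \ref{defn:cov}, so that $A\subset\bigcup_{\alpha}O_{\alpha}$ with each $O_{\alpha}\in\mathscr{T}$. The goal is to extract from this a finite subcover, which by Definition \ref{defn:com} will establish that $A$ is compact.

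The key idea is to exploit that $A$ is closed. By Definition \ref{defn:clo}, the complement $X-A$ is an open set, i.e.\ $X-A\in\mathscr{T}$. I would then form the enlarged collection $\{O_{\alpha}\}\cup\{X-A\}$ and observe that it is an open cover of the whole space $X$: every point of $A$ lies in some $O_{\alpha}$ by hypothesis, while every point not in $A$ lies in $X-A$ by construction. Now I invoke the compactness of $(X,\mathscr{T})$ to extract a finite subcover of $X$, say consisting of finitely many of the $O_{\alpha}$ together with possibly the set $X-A$.

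The final step is to discard $X-A$ from this finite subcover if it happens to occur. I would argue that the remaining finitely many sets $O_{\alpha_1},\dots,O_{\alpha_n}$ still cover $A$: since $X-A$ contains no point of $A$, removing it cannot leave any point of $A$ uncovered, so $A\subset O_{\alpha_1}\cup\cdots\cup O_{\alpha_n}$. This exhibits a finite subcover of the original cover $\{O_{\alpha}\}$, completing the proof.

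There is no genuine analytic or topological obstacle here; the argument is purely formal set-theoretic bookkeeping. The only point requiring a moment's care, which I would state explicitly rather than gloss over, is the last step: one must verify that the deletion of $X-A$ is harmless precisely because $(X-A)\cap A=\emptyset$, so that the truncated finite collection remains a cover of $A$ and not merely of $X$. I would also note, for cleanliness, that the $O_{\alpha}$ are open subsets of $X$ in Definition \ref{defn:cov}, so no appeal to the subspace topology is needed and the compactness hypothesis on $X$ applies directly to the enlarged cover.
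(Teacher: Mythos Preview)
Your argument is correct and is precisely the standard proof: adjoin the open complement $X-A$ to an arbitrary open cover of $A$, apply compactness of $X$, then discard $X-A$. The paper itself states this theorem in Appendix~\ref{A} without proof, so there is nothing to compare against; your write-up would serve perfectly well as the omitted justification.
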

We need now to give the notion of convergence of sequences.
\begin{defn} $\\ $
\label{defn:con}
A sequence $\{x_n\}$ of points in a topological space $(X,\mathscr{T})$ is said to \textit{converge} to point $x$ if given any open neighbourhood $O$ of $x$ (i.e. an open set containing $x$) there is an $N$ such that $x_n\in O$ for all $n>N$. The point $x$ is said to be the \textit{limit} of this sequence. 
\end{defn}
\begin{defn} $\\ $
\label{defn:acc}
A point $y\in X$ is said to be an \textit{accumulation point} of $\{x_n\}$ if every open neighbourhood of $y$ contains infinitely many points of the sequence. 
\end{defn}
It follows that if $\{x_n\}$ converges to $x$, then $x$ is an accumulation point of the sequence. However, in a general topological space, if $y$ is an accumulation point of $\{x_n\}$ it may not even be possible to find a subsequence $\{y_n\}$ of the sequence $\{x_n\}$ which converges to $y$. Hence we give the following
\begin{defn} $\\ $
\label{fco}
A topological space $(X,\mathscr{T})$ is \textit{first countable} if for each $p\in X$ there is a countable collection $\{O_n\}$ of open sets such that every open neighbourhood, $O$, of $p$ contains at least one member of this collection.
\end{defn}
A stronger requirement is the following
\begin{defn} $\\ $
\label{sco}
A topological space $(X,\mathscr{T})$ is \textit{second countable} if for each $p\in X$ there is a countable collection $\{O_n\}$ of open sets such that every open neighbourhood, $O$, of $p$ can be expressed as a union of sets in the collection.
\end{defn}
The Bolzano-Weierstrass theorem expresses the important relation between compactness and convergence of sequences.
\begin{thm}{Bolzano-Weierstrass}$\\ $
\label{thm:BW}
Let $(X,\mathscr{T})$ be a topological space and let $A\subset X$. If $A$ is compact then every sequence $\{x
_n\}$ of points in $A$ has an accumulation point lying in $A$. Conversely if $(X,\mathscr{T})$ is second countable and every sequence of points in $A$ has an accumulation point in $A$, then $A$ is compact. Thus in particular, if $(X,\mathscr{T})$ is second countable, $A$ is compact if and only if every subsequence in $A$ has a convergent subsequence whose limit lies in $A$.
\end{thm}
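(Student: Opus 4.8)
The plan is to break the statement into three pieces: (i) if $A$ is compact then every sequence in $A$ has an accumulation point lying in $A$; (ii) if $(X,\mathscr{T})$ is second countable and every sequence in $A$ has an accumulation point in $A$, then $A$ is compact; and (iii) the concluding equivalence, obtained by upgrading ``accumulation point'' to ``limit of a subsequence''. For (i) I would argue by contradiction: suppose $A$ is compact but some sequence $\{x_n\}$ in $A$ has no accumulation point in $A$. By Definition \ref{defn:acc}, each $p\in A$ then has an open neighbourhood $O_p$ containing $x_n$ for only finitely many indices $n$. The family $\{O_p:p\in A\}$ covers $A$, so by Definition \ref{defn:com} there is a finite subcover $O_{p_1},\dots,O_{p_k}$. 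Every $x_n$ lies in $A\subset O_{p_1}\cup\cdots\cup O_{p_k}$, so the index set is the union of the finite sets $\{n:x_n\in O_{p_i}\}$, $i=1,\dots,k$, hence finite --- impossible. So an accumulation point in $A$ must exist.

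For (ii) the crucial intermediate step is to extract from second countability the Lindel\"of property: every open cover of $A$ has a countable subcover. Working with a countable base $\{B_j\}$ for $\mathscr{T}$, given an open cover $\{O_\alpha\}$ of $A$ one chooses for each $x\in A$ a basic set $B_{j(x)}$ with $x\in B_{j(x)}\subset O_{\alpha(x)}$ for some $\alpha(x)$; the collection $\{B_{j(x)}\}$ is countable, and picking one $O_\alpha$ above each such basic set yields a countable subcover. Granting this, suppose $A$ is not compact; then some open cover has no finite subcover, and by Lindel\"of we may take it to be a countable cover $\{U_n\}_{n\ge 1}$ still without a finite subcover. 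For each $n$ choose $x_n\in A\setminus(U_1\cup\cdots\cup U_n)$. By hypothesis $\{x_n\}$ has an accumulation point $y\in A$, and since the $U_n$ cover $A$ we have $y\in U_m$ for some $m$; but $x_n\notin U_m$ for all $n\ge m$, so the neighbourhood $U_m$ of $y$ contains only $x_1,\dots,x_{m-1}$, contradicting Definition \ref{defn:acc}. Hence $A$ is compact.

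For (iii) I would first note that second countability implies first countability (the ``expressed as a union'' clause of Definition \ref{sco} is stronger than the ``contains at least one member'' clause of Definition \ref{fco}), so at any point $y$ there is a countable neighbourhood base, which may be taken nested, say $O_1\supset O_2\supset\cdots$, after replacing $O_k$ by $O_1\cap\cdots\cap O_k$. If $y\in A$ is an accumulation point of $\{x_n\}$, each $O_k$ contains infinitely many terms, so one inductively selects $n_1<n_2<\cdots$ with $x_{n_k}\in O_k$; then for any open neighbourhood $O$ of $y$ there is $k_0$ with $O_{k_0}\subset O$, whence $x_{n_k}\in O_k\subset O_{k_0}\subset O$ for $k\ge k_0$, i.e. $x_{n_k}\to y$ in the sense of Definition \ref{defn:con}. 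Conversely the limit of a convergent subsequence is trivially an accumulation point of the full sequence. Combining this with (i) and (ii) gives the final claim: if $(X,\mathscr{T})$ is second countable, then $A$ is compact if and only if every sequence in $A$ has a subsequence converging to a point of $A$.

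The main obstacle is step (ii), and within it the reduction to the Lindel\"of property: the argument really needs a genuine countable base for $\mathscr{T}$ (the per-point formulation must be read as providing such a base), and one must also be careful throughout to interpret ``contains infinitely many points of the sequence'' in Definition \ref{defn:acc} as ``contains $x_n$ for infinitely many indices $n$'', so that constant or highly repetitive sequences cause no trouble. Steps (i) and (iii) are then routine finite-subcover and nested-neighbourhood-base selection arguments.
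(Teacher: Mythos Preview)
The paper does not actually prove this theorem: in Appendix~\ref{A} the statement is given and then the text moves directly on to the definition of paracompactness, with no argument in between. So there is no ``paper's own proof'' to compare against, and your proposal stands or falls on its own.

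On those terms your argument is correct and is the standard one. Part~(i) is the usual finite-subcover contradiction; part~(ii) is the standard route via the Lindel\"of property followed by the ``escape a growing finite union'' construction; part~(iii) is the nested-neighbourhood-base extraction of a convergent subsequence in a first countable space. All three pieces are sound.

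One point you already flagged deserves emphasis. The paper's Definition~\ref{sco} of second countability is phrased \emph{per point} (``for each $p\in X$ there is a countable collection $\{O_n\}$ \dots''), which, read literally, allows the collection to depend on $p$ and therefore does not obviously yield a single countable base for $\mathscr{T}$. Your Lindel\"of step genuinely needs a global countable base, so your parenthetical caveat is not optional: the proof only goes through if one reads Definition~\ref{sco} in the intended (standard) way, with one countable collection serving all points simultaneously. With that reading fixed, everything you wrote is fine.
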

We give now the definition of paracompactness. Let $(X,\mathscr{T})$ be a topological space and let $\{O_{\alpha}\}$ be an open cover of $X$. 
\begin{defn} $\\ $
\label{defn:ref}
An open cover $\{V_{\beta}\}$ is said to be a \textit{refinement} of $\{O_{\alpha}\}$ if for each $V_{\beta}$ there exists an $O_{\alpha}$ such that $V_{\beta}\subset O_{\alpha}$.
\end{defn} 
\begin{defn} $\\ $
\label{defn:lof}
The cover $\{V_{\beta}\}$ is said to be \textit{locally finite} if each $x\in X$ has an open neighbourhood $W$  such that only finitely many $V_{\beta}$ satisfy $V_{\beta}\cap W\neq\emptyset$. 
\end{defn}
\begin{defn} $\\ $
\label{par}
A topological space $(X,\mathscr{T})$ is said to be \textit{paracompact} if every open cover $\{O_{\alpha}\}$ of $X$ has a locally finite refinement $\{V_{\beta}\}$.
\end{defn}
We can now give the definition of a manifold. 
\begin{defn}$\\ $
\label{defn:tma}
A $n$-dimensional \textit{topological manifold} $\mathscr{M}$ is a topological space $(X,\mathscr{T})$ with a collection of open subsets $\{O_{\alpha}\}$ of $X$, such that 
\begin{itemize}
\item $\{O_{\alpha}\}$ is an open cover of $\mathscr{M}$, i.e. $\mathscr{M}=\cup_{\alpha}O_{\alpha}$;
\item For each $\alpha$, there is a one-to-one, onto, homomorphism $\psi_{\alpha}:O_{\alpha}\rightarrow U_{\alpha}$, where $U_{\alpha}$ is an open subset of $\mathbb{R}^n$;
\end{itemize}
\end{defn}
\begin{oss}$\\ $
Usually the pair $(O_{\alpha},\psi_{\alpha})$ is called \textit{chart} or \textit{coordinate system} and the collection $\{O_{\alpha},\psi_{\alpha}\}$ is called \textit{atlas}. If $O_{\alpha}\cap O_{\beta}\neq\emptyset$ we can consider the map $\psi_{\beta}\circ\psi_{\alpha}^{-1}$ which takes points in $\psi_{\alpha}[O_{\alpha}\cap O_{\beta}]\subset U_{\alpha}\subset\mathbb{R}^n$ to points in  $\psi_{\beta}[O_{\alpha}\cap O_{\beta}]\subset U_{\beta}\subset\mathbb{R}^n$. This map is an homomorphism and is often called \textit{coordinate change}.
\end{oss}
Consider now two topological manifolds $\mathscr{M}$ and $\mathscr{M'}$, with respective dimensions $n$ and $n'$, and a map $f:\mathscr{M}\longrightarrow\mathscr{M'}$ between them. Let $\{O_{\alpha},\psi_{\alpha}\}$ and $\{O'_{\beta},\psi'_{\beta}\}$ be respectively charts of $\mathscr{M}$ and $\mathscr{M'}$. 
\begin{defn}$\\ $
\label{defn:cka}
$f$ is said to be \textit{$C^{k}$} if the map $\psi'_{\beta}\circ f\circ\psi_{\alpha}^{-1}:U_{\alpha}\subset\mathbb{R}^{n}\longrightarrow U'_{\beta}\subset\mathbb{R}^{n'}$ is $C^k$ in the sense of usual differential calculus.
\end{defn}
Let now $f$ be an homomorphism, between $\mathscr{M}$ and $\mathscr{M'}$.
\begin{defn}$\\ $
$f$ is said to be a \textit{diffeomorphism} if is $C^{\infty}$ with its inverse (i.e. the map $\psi_{\alpha}\circ f\circ\psi'_{\beta}{}^{-1}$).
\end{defn}
We can now give the definition of differentiable manifold.
\begin{defn}
\label{defn:dma}
A $n$-dimensional \textit{$C^k$ differentiable manifold} $\mathscr{M}$ is a topological manifold $\mathscr{M}$ such that all the coordinate changes are $C^k$ maps with their inverses of $\mathscr{M}$ into itself.
\end{defn}
\begin{oss}$\\ $
It is clear that for a $C^{\infty}$ differentiable manifold $\mathscr{M}$ all the coordinate changes must be diffeomorphisms of $\mathscr{M}$ into itself.
\end{oss}
\begin{figure}[h]
\begin{center}
\includegraphics[scale=0.5]{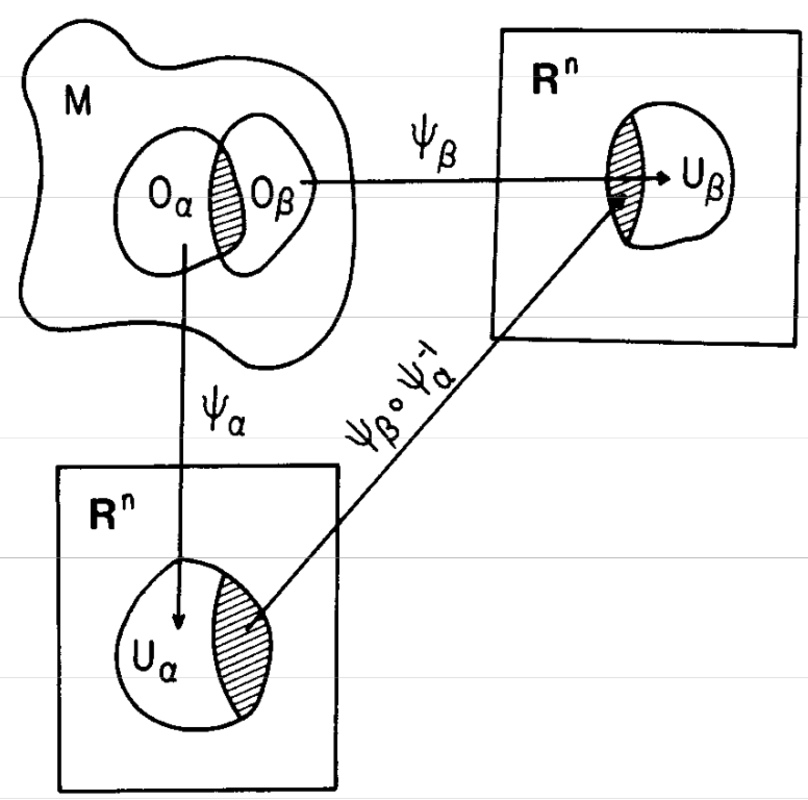}
\caption{An illustration of the map $\psi_{\beta}\circ\psi_{\alpha}^{-1}$.}
\label{fig:2.9}
\end{center}
\end{figure}
Before we state some important theorems about manifolds we want to briefly comment \textit{manifolds-with-boundary} without going into details. It is remarkable that the boundary of a manifold $\mathscr{M}$ is not defined to be the boundary of the topological space $X$. If $\mathscr{M}$ is covered by a family of open sets $\{O_{\alpha}\}$, suppose that some of them is homeomorphic not to $\mathbb{R}^n$, but to an open set of  $\mathbb{H}^n=\{\left.(x^1,...,x^n)\in\mathbb{R}^n\right| x^n\geq 0\}$ ($\mathbb{H}^n$ is called \textit{Euclidean half-space}). Then the set of points which are mapped to points with $x^n=0$ is called the \textit{boundary of} $\mathscr{M}$ and is denoted by $\partial\mathscr{M}$. The coordinates of $\partial\mathscr{M}$ may be given by $n-1$ numbers $(x^1,...,x^{n-1},0)$. A Further details about the smoothness of coordinate changes and examples can be found in \cite{Lang}, in \cite{Nak} or \cite{Spiv}.\\
Consider now a paracompact manifold $\mathscr{M}$, i.e. a manifold whose topological space is paracompact.
\begin{thm} $\\ $
\label{thm:sco}
A paracompact manifold $\mathscr{M}$ is second countable.
\end{thm}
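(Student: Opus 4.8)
The plan is to combine three facts about a (connected) manifold $\mathscr{M}$: it is locally Euclidean, hence each chart domain is second countable; paracompactness supplies a locally finite refinement of the chart cover; and connectedness forces any such locally finite cover by second-countable pieces to be itself countable, after which a countable union of countable bases is a countable basis for $\mathscr{M}$.

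First I would take the atlas $\{O_\alpha,\psi_\alpha\}$ of Definition~\ref{defn:tma}. Since each $\psi_\alpha$ is a homeomorphism of $O_\alpha$ onto an open subset $U_\alpha\subset\mathbb{R}^n$, and every open subset of $\mathbb{R}^n$ is second countable (being a subspace of the second-countable space $\mathbb{R}^n$), each $O_\alpha$ is second countable; in particular each $O_\alpha$ is Lindel\"of, i.e. every open cover of it has a countable subcover. The family $\{O_\alpha\}$ is an open cover of $\mathscr{M}$, so by paracompactness (Definition~\ref{par}) it has a locally finite refinement $\{V_\beta\}$. Each $V_\beta$ is contained in some $O_\alpha$, hence is itself second countable and Lindel\"of.

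The heart of the argument — and the step I expect to be the main obstacle — is to prove that $\{V_\beta\}$ is a countable family, and this splits into two sub-steps. (i) \emph{Each nonempty $V_\beta$ meets only countably many members of the cover.} By local finiteness, every point of $V_\beta$ has an open neighbourhood meeting only finitely many $V_\gamma$; covering $V_\beta$ by such neighbourhoods and extracting a countable subcover by the Lindel\"of property, one sees that only countably many $V_\gamma$ satisfy $V_\gamma\cap V_\beta\neq\emptyset$. (ii) \emph{A chaining argument.} Assuming $\mathscr{M}\neq\emptyset$, fix a nonempty $V_{\beta_0}$, set $\mathcal A_1=\{V_{\beta_0}\}$, and recursively let $\mathcal A_{m+1}$ be $\mathcal A_m$ together with all $V_\gamma$ meeting some member of $\mathcal A_m$. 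By (i) and induction each $\mathcal A_m$ is countable, hence so is $\mathcal A=\bigcup_m\mathcal A_m$. Put $W=\bigcup\mathcal A$. Then $W$ is open and nonempty, and it is also closed: if $x\in\overline W$ then $x\in V_\beta$ for some $\beta$ since the $V$'s cover $\mathscr{M}$, and $V_\beta$ is an open neighbourhood of a point of $\overline W$, so $V_\beta\cap W\neq\emptyset$, hence $V_\beta$ meets some $V_\gamma\in\mathcal A_m$, whence $V_\beta\in\mathcal A_{m+1}\subset\mathcal A$ and $x\in W$. By connectedness $W=\mathscr{M}$, so every nonempty member of $\{V_\beta\}$ lies in the countable family $\mathcal A$, and the refinement is countable.

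Finally, $\mathscr{M}=\bigcup_\beta V_\beta$ is a countable union of second-countable open subspaces; picking a countable basis $\mathcal B_\beta$ for each $V_\beta$, whose members are open in $\mathscr{M}$, the countable family $\bigcup_\beta\mathcal B_\beta$ is a basis for $\mathscr{M}$, which is therefore second countable. I would add the remark that connectedness is used essentially — an uncountable discrete union of copies of $\mathbb{R}$ is a paracompact manifold that is not second countable — so the statement is to be read for the connected manifolds underlying the space-times of Definition~\ref{defn:st}; equivalently, the same proof shows that \emph{each connected component} of a paracompact manifold is second countable.
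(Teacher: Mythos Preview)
The paper states this theorem in Appendix~\ref{A} without proof, so there is no argument in the text to compare your proposal against. Your proof is correct and is the standard one: paracompactness yields a locally finite refinement of the atlas by open sets that are second countable (hence Lindel\"of), local finiteness together with the Lindel\"of property shows each member of the refinement meets only countably many others, and the chaining argument plus connectedness forces the refinement itself to be countable; a countable union of second-countable open subspaces is then second countable. You are also right to flag that connectedness is essential --- your counterexample of an uncountable disjoint union of copies of $\mathbb{R}$ is exactly to the point --- and to note that the statement is to be read for the connected manifolds of Definition~\ref{defn:st}. The paper's surrounding discussion is consistent with this: Corollary~\ref{cor:sco} speaks of covering ``connected components'' by countable families of charts, and the application in Chapter~\ref{chap:1} is to space-times, which are connected by definition.
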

\begin{cor}$\\ $
\label{cor:sco}
A paracompact manifold $\mathscr{M}$ can be covered by a locally finite, countable family of charts $(O_i,\psi_i)$ with each $\overline{O}_i$ compact.
\end{cor}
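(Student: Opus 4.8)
The plan is to begin from the standard coordinate atlas of $\mathscr{M}$, shrink the charts so that their closures are compact, extract a locally finite refinement using paracompactness, and then use Theorem \ref{thm:sco} to force the refinement to be countable.

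First I would fix, for each $p \in \mathscr{M}$, a chart $(U_p,\psi_p)$ coming from the manifold structure of Definition \ref{defn:tma}, with $p \in U_p$ and $\psi_p(U_p)$ open in $\mathbb{R}^n$ (or in $\mathbb{H}^n$ if $p$ is a boundary point). Choosing a small open ball $B_p$ about $\psi_p(p)$ whose closure $\overline{B_p}$ is compact and contained in $\psi_p(U_p)$, I set $O_p := \psi_p^{-1}(B_p)$. Since $\psi_p$ is a homeomorphism onto its image, $\psi_p^{-1}(\overline{B_p})$ is compact, hence closed (a paracompact manifold being Hausdorff, by Theorem \ref{thm:cc1}), so it contains $\overline{O_p}$; being a closed subset of a compact set, $\overline{O_p}$ is compact by Theorem \ref{thm:cc2}. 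Restricting $\psi_p$ then gives a chart $(O_p,\psi_p|_{O_p})$ with $\overline{O_p}$ compact, and $\{O_p\}_{p\in\mathscr{M}}$ is an open cover of $\mathscr{M}$.

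Next, by paracompactness (Definition \ref{par}) this cover admits a locally finite refinement $\{V_\beta\}$ in the sense of Definitions \ref{defn:ref} and \ref{defn:lof}; after discarding any empty members, each nonempty $V_\beta$ satisfies $V_\beta \subseteq O_{p(\beta)}$ for some $p(\beta)$, so $(V_\beta,\psi_{p(\beta)}|_{V_\beta})$ is again a chart, and $\overline{V_\beta} \subseteq \overline{O_{p(\beta)}}$ is compact once more by Theorem \ref{thm:cc2}. The only remaining step is to show that the index set of the $V_\beta$ is countable. For this I would invoke Theorem \ref{thm:sco}: $\mathscr{M}$ is second countable (Definition \ref{sco}), so picking one point from each member of a countable basis produces a countable dense subset $D \subseteq \mathscr{M}$. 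Each nonempty open $V_\beta$ meets $D$, so I may choose $x_\beta \in V_\beta \cap D$. For a fixed $d \in D$, local finiteness provides a neighbourhood $W$ of $d$ meeting only finitely many of the $V_\beta$, and $x_\beta = d$ forces $d \in V_\beta$, hence $V_\beta \cap W \neq \emptyset$; therefore only finitely many indices $\beta$ satisfy $x_\beta = d$. The full index set is thus a countable union of finite sets, hence countable, and I may re-enumerate the charts as $(O_i,\psi_i)_{i\in I}$ with $I$ countable, the family $\{O_i\}$ locally finite, and each $\overline{O_i}$ compact — which is the claim.

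I do not expect a serious obstacle here; the argument is essentially routine. The one place requiring care is the countability reduction: second countability must be converted into separability and then combined with local finiteness as above. A careless attempt to use a Lindel\"of (countable subcover) property directly would not bound the cardinality of the entire locally finite family, so it is important to run the density-plus-local-finiteness argument rather than a bare subcover argument.
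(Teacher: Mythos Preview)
Your argument is correct and complete. The paper itself does not supply a proof of this corollary; it is stated immediately after Theorem~\ref{thm:sco} as a direct consequence of second countability, with the details left implicit. Your proof fills in exactly the standard details one would expect: shrink charts to have compact closure, refine locally finitely via paracompactness, and then use second countability (via separability) together with local finiteness to bound the cardinality of the refinement. The only cosmetic point is the parenthetical ``a paracompact manifold being Hausdorff, by Theorem~\ref{thm:cc1}'': Theorem~\ref{thm:cc1} is what you are invoking to pass from compact to closed, not to establish that $\mathscr{M}$ is Hausdorff; the Hausdorff property is a standing convention on manifolds (cf.\ the discussion around Definition~\ref{defn:st}) rather than a consequence of paracompactness.
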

Consider a locally finite cover $\{O_{\alpha}\}$ of $\mathscr{M}$.
\begin{defn}$\\ $
\label{defn:par}
A \textit{partition of unity} subordinate to $\{O_{\alpha}\}$ is a collection of smooth functions $\{f_{\alpha}\}$ such that 
\begin{enumerate}
\item the support of $f_{\alpha}$ is contained within $O_{\alpha}$;
\item $0\leq f_{\alpha}\leq 1$;
\item $\sum_{\alpha}f_{\alpha}=1$.
\end{enumerate}
\end{defn}
\begin{thm}$ \\ $
\label{thm:par}
Every locally finite open cover $\{O_{\alpha}\}$ of a paracompact manifold $\mathscr{M}$, such that $\overline{O}_{\alpha}$ is compact, admits a subordinate partition of unity.
\end{thm}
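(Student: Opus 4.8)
The plan is to reduce the theorem to two classical inputs — a \emph{shrinking lemma} for locally finite covers of normal spaces and the existence of smooth bump functions on $\mathbb{R}^{n}$ — and then to assemble the partition of unity by a division argument. First I would note that a paracompact manifold (Hausdorff, as assumed throughout) is normal, so the shrinking lemma applies: from $\{O_{\alpha}\}$ one obtains open covers $\{V_{\alpha}\}$ and $\{W_{\alpha}\}$ of $\mathscr{M}$ with $\overline{W}_{\alpha}\subset V_{\alpha}$ and $\overline{V}_{\alpha}\subset O_{\alpha}$ for every $\alpha$. Since $\overline{O}_{\alpha}$ is compact by hypothesis, each $\overline{W}_{\alpha}$ is closed inside a compact set, hence compact, and it lies in the open set $O_{\alpha}$.

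Next I would construct, for each $\alpha$, a nonnegative smooth function $g_{\alpha}$ on $\mathscr{M}$ that is strictly positive on $\overline{W}_{\alpha}$ and has $\mathrm{supp}(g_{\alpha})\subset O_{\alpha}$. Cover the compact set $\overline{W}_{\alpha}$ by finitely many coordinate balls whose closures are contained in $O_{\alpha}$; on each such ball transport, through the chart homomorphism, the standard bump function on $\mathbb{R}^{n}$ (built from $t\mapsto e^{-1/t}$) which is positive on a prescribed ball and vanishes outside a slightly larger one, extend it by zero to all of $\mathscr{M}$, and sum the finitely many pieces. The result $g_{\alpha}$ is smooth — it is a finite sum of functions each smooth where nonzero and identically zero near the complement of its supporting ball — and its support is a finite union of compact subsets of $O_{\alpha}$, hence a compact subset of $O_{\alpha}$.

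Then I would put $g=\sum_{\alpha}g_{\alpha}$. Because $\mathrm{supp}(g_{\alpha})\subset O_{\alpha}$ and $\{O_{\alpha}\}$ is locally finite, the family of supports $\{\mathrm{supp}(g_{\alpha})\}$ is locally finite, so each point of $\mathscr{M}$ has a neighbourhood on which only finitely many terms are nonzero; thus $g$ is a well-defined smooth function. Moreover $g>0$ everywhere, since every point lies in some $W_{\alpha}$, where $g_{\alpha}>0$ already. Finally I would define $f_{\alpha}=g_{\alpha}/g$, which is smooth, satisfies $0\le f_{\alpha}\le 1$, has $\mathrm{supp}(f_{\alpha})=\mathrm{supp}(g_{\alpha})\subset O_{\alpha}$, and obeys $\sum_{\alpha}f_{\alpha}=g/g=1$; this $\{f_{\alpha}\}$ is the required partition of unity subordinate to $\{O_{\alpha}\}$.

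The main obstacle will not be the final assembly, which is mechanical once the pieces exist, but securing the two inputs rigorously: proving that a paracompact Hausdorff manifold is normal (so that the shrinking lemma, and with it the choice of the $W_{\alpha}$ with compact closures inside the $V_{\alpha}$, is legitimate) and carrying out the chart-by-chart bump-function construction while keeping careful control of smoothness across chart overlaps and of the supports inside $O_{\alpha}$. Once these are in hand, local finiteness of the supports, smoothness and positivity of $g$, and the three defining properties of $f_{\alpha}$ all follow directly.
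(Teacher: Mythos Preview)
Your proposal is correct and follows the standard route---shrinking lemma via normality of paracompact Hausdorff spaces, chart-local bump functions summed over a finite subcover of each compact $\overline{W}_{\alpha}$, and the division $f_{\alpha}=g_{\alpha}/g$. The paper does not supply its own proof of this theorem but simply refers the reader to Kobayashi--Nomizu, where precisely this argument is carried out; your outline is therefore in line with the cited source.
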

\begin{cor} $\\ $
\label{cor:met}
Using this theorem we can define a local Riemannian metric $(g_{\alpha})_{ab}$ on each $O_{\alpha}$ and build a Riemannian metric $g_{ab}$ defined all through $\mathscr{M}$ by setting $g_{ab}=\sum_{\alpha}f_{\alpha}(g_{\alpha})_{ab}$.
\end{cor}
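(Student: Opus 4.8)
The plan is to build the metric chart by chart and then glue it together with a partition of unity. First I would use Corollary \ref{cor:sco} to replace, if necessary, the given cover by a locally finite countable atlas $\{(O_{\alpha},\psi_{\alpha})\}$ with each $\overline{O}_{\alpha}$ compact, and then invoke Theorem \ref{thm:par} to obtain a partition of unity $\{f_{\alpha}\}$ subordinate to $\{O_{\alpha}\}$, so that $\mathrm{supp}\,f_{\alpha}\subset O_{\alpha}$, $0\le f_{\alpha}\le 1$, and $\sum_{\alpha}f_{\alpha}\equiv 1$. All of the analytic content is thereby already supplied by the earlier results resting on paracompactness.

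Next, on each $O_{\alpha}$ I would define $(g_{\alpha})_{ab}$ as the pull-back under the homeomorphism (indeed diffeomorphism onto its image) $\psi_{\alpha}$ of the standard Euclidean metric $\delta_{ij}$ on $U_{\alpha}\subset\mathbb{R}^{n}$; this is a smooth, symmetric, positive-definite tensor field of type $(0,2)$ on $O_{\alpha}$. I then extend each $f_{\alpha}(g_{\alpha})_{ab}$ by zero outside $O_{\alpha}$ — legitimate as a \emph{smooth} global field precisely because $\mathrm{supp}\,f_{\alpha}$ is compactly contained in $O_{\alpha}$, so $f_{\alpha}(g_{\alpha})_{ab}$ vanishes on a neighbourhood of the topological boundary of $O_{\alpha}$ — and set $g_{ab}=\sum_{\alpha}f_{\alpha}(g_{\alpha})_{ab}$.

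The verification then has three parts. Well-definedness and $C^{\infty}$ smoothness follow from local finiteness: each $p\in\mathscr{M}$ has a neighbourhood meeting only finitely many $O_{\alpha}$, hence only finitely many summands are non-zero there, so $g_{ab}$ is locally a finite sum of smooth tensors. Symmetry is immediate, since every summand is symmetric. For non-degeneracy, fix $p$ and $0\ne X^{a}\in T_{p}\mathscr{M}$; then $g_{ab}X^{a}X^{b}=\sum_{\alpha}f_{\alpha}(p)\,(g_{\alpha})_{ab}X^{a}X^{b}$ is a sum of non-negative terms, and since $\sum_{\alpha}f_{\alpha}(p)=1$ there is at least one index with $f_{\alpha}(p)>0$, for which $(g_{\alpha})_{ab}X^{a}X^{b}>0$; hence $g_{ab}X^{a}X^{b}>0$ and $g_{ab}$ is Riemannian.

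The ``hard part'' is essentially nothing deep once Theorem \ref{thm:par} is in hand; the only points one should not gloss over are the two just used — that the convex-combination argument for positive-definiteness relies essentially on $\sum_{\alpha}f_{\alpha}=1$ (a merely locally finite family of non-negative bump functions whose sum could vanish somewhere would not give a metric), and that extension by zero is smooth exactly because the supports of the $f_{\alpha}$ sit compactly inside the charts.
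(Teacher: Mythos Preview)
Your proof is correct and follows exactly the construction sketched in the statement itself; the paper does not give a separate proof of this corollary, treating it as an immediate consequence of Theorem~\ref{thm:par} and referring to \cite{KobNom} for details. You have simply filled in the standard verification (smooth extension by zero, local finiteness, and the convex-combination argument for positive-definiteness), which is precisely what is needed.
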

The proof of the last two theorems can be found in \cite{KobNom}.
\begin{thm}$\\ $
\label{thm:cba}
Let $\mathscr{M}$ be a manifold which is locally compact Haussdorff and  whose topology has a countable basis. Then $\mathscr{M}$ is paracompact. 
\end{thm}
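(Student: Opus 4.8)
The plan is to use only the topological structure of $\mathscr{M}$ (Definition \ref{defn:tma}) and to follow the classical route: first upgrade the hypotheses to $\sigma$-compactness, then build an exhaustion by compact sets, and finally extract a locally finite refinement of an arbitrary open cover shell by shell. For the first step I would combine local compactness with the countable basis $\{B_i\}$: each $p\in\mathscr{M}$ has an open neighbourhood $V_p$ with $\overline{V_p}$ compact (this is what local compactness means in a Hausdorff space), and since $\{B_i\}$ is a basis there is a basis element with $p\in B_{i(p)}\subset V_p$, whence $\overline{B_{i(p)}}$ is a closed subset of the compact set $\overline{V_p}$ and so is compact by \ref{thm:cc2}. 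The subfamily $\{B_{i(p)}:p\in\mathscr{M}\}$ of the countable basis is countable and covers $\mathscr{M}$; relabelling it $\{B_i\}_{i\in\mathbb{N}}$ we get $\mathscr{M}=\bigcup_{i}\overline{B_i}$ with every $\overline{B_i}$ compact.

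Next I would construct an increasing sequence of compact sets $K_1\subset K_2\subset\cdots$ with $K_n\subset\mathrm{int}[K_{n+1}]$ and $\bigcup_n K_n=\mathscr{M}$. Put $K_1=\overline{B_1}$. Given $K_n$ compact, the open sets $\{B_i\}$ cover it, so finitely many do, say $B_{i_1},\dots,B_{i_k}$; set $U_n=B_n\cup B_{i_1}\cup\cdots\cup B_{i_k}$. A finite union of the compact sets $\overline{B_n},\overline{B_{i_1}},\dots$ is compact, and $\overline{U_n}$ is a closed subset of it, hence compact by \ref{thm:cc2}; define $K_{n+1}=\overline{U_n}$. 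Then $K_n\subset U_n$ by construction, $U_n$ is open and contained in $K_{n+1}$, so $K_n\subset U_n\subset\mathrm{int}[K_{n+1}]$; also $B_n\subset K_{n+1}$, so $\bigcup_n K_n\supset\bigcup_n B_n=\mathscr{M}$.

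For the final step, let $\{O_\alpha\}$ be an arbitrary open cover of $\mathscr{M}$. With the convention $K_0=K_{-1}=\emptyset$, put $L_n=K_n\setminus\mathrm{int}[K_{n-1}]$ and $W_n=\mathrm{int}[K_{n+1}]\setminus K_{n-2}$ for $n\ge 1$. Then $L_n$ is a closed subset of the compact $K_n$, hence compact by \ref{thm:cc2}; $W_n$ is open; and $L_n\subset W_n$ because $K_{n-2}\subset\mathrm{int}[K_{n-1}]$. The family $\{O_\alpha\cap W_n\}_\alpha$ is an open cover of the compact set $L_n$, so it has a finite subcover $\mathscr{V}_n$, each member of which lies in some $O_\alpha$. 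The collection $\mathscr{V}=\bigcup_{n\ge 1}\mathscr{V}_n$ is then a refinement of $\{O_\alpha\}$; it covers $\mathscr{M}$ because the $L_n$ do (any $p$ lies in $L_n$ for the least $n$ with $p\in K_n$); and it is locally finite, since any $p$ lies in $\mathrm{int}[K_m]$ for some $m$ (as $p\in K_{m-1}\subset\mathrm{int}[K_m]$), and $\mathrm{int}[K_m]\cap W_n\ne\emptyset$ forces $K_m\not\subset K_{n-2}$, i.e. $n\le m+1$, so $\mathrm{int}[K_m]$ meets members of only the finitely many finite families $\mathscr{V}_1,\dots,\mathscr{V}_{m+1}$. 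Hence $\{O_\alpha\}$ admits a locally finite refinement and $\mathscr{M}$ is paracompact.

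The only step requiring genuine care is the inductive construction of the exhaustion: one must arrange at each stage that the enlargement $U_n$ of $K_n$ simultaneously has compact closure and contains $B_n$, so that the $K_n$ genuinely exhaust $\mathscr{M}$ rather than stabilising on a proper open subset; the rest is bookkeeping with compactness, using \ref{thm:cc1} and \ref{thm:cc2} together with the Hausdorff hypothesis to ensure that compact sets are closed and that "compact closure" behaves as expected.
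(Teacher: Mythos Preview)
Your argument is correct and is precisely the classical route (second countability plus local compactness $\Rightarrow$ $\sigma$-compactness $\Rightarrow$ exhaustion by compacta $\Rightarrow$ shell-by-shell refinement). Note, however, that the paper itself supplies no proof of this theorem: it merely states the result and refers the reader to \cite{Lang} and \cite{Rys} for the argument and further discussion. What you have written is essentially the proof one finds in those references, so there is nothing to compare against within the paper; your proposal simply fills in what the paper outsources.
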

The proof and insights of this theorem can be found in \cite{Lang} or in \cite{Rys}.
\chapter{The Newman-Penrose formalism}
\label{B}
The source-free Maxwell equations in the Newman-Penrose formalism are:
\begin{enumerate}
\item $D\phi_1-\bar{\delta}\phi_0=(\pi-2\alpha)\phi_0+2\rho\phi_1-\kappa\phi_2$;
\item $D\phi_2-\bar{\delta}\phi_1=-\lambda\phi_0+2\pi\phi_1+(\rho-2\epsilon)\phi_2$;
\item $\Delta\phi_0-\delta\phi_1=(2\gamma-\mu)\phi_0-2\tau\phi_1+\sigma\phi_2$;
\item $\Delta\phi_1-\delta\phi_2=\nu\phi_0-2\mu\phi_1+(2\beta-\tau)\phi_2$.
\end{enumerate}
The Newman-Penrose field equations are:
\begin{enumerate}
\item $D\rho-\bar{\delta}\kappa = (\rho^2+\sigma\bar{\sigma})+(\epsilon+\bar{\epsilon})\rho -\bar{\kappa}\tau-\kappa(3\alpha+\bar{\beta}-\pi)+\Phi_{00};$
\item $D\sigma-\delta\kappa=(\rho+\bar{\rho}+3\epsilon-\bar{\epsilon})\sigma-(\tau-\bar{\pi}+\bar{\alpha}+3\beta)\kappa+\Psi_0; $
\item $D\tau -\Delta\kappa=(\tau+\bar{\pi})\rho+(\bar{\tau}+\pi)\sigma+(\epsilon-\bar{\epsilon})\tau-(3\gamma+\bar{\gamma})\kappa+\Psi_1+\Phi_{01}; $
\item $D\alpha-\bar{\delta}\epsilon=(\rho+\bar{\epsilon}-2\epsilon)\alpha+\beta\bar{\sigma}-\bar{\beta}\epsilon-\kappa\lambda-\bar{\kappa}\gamma+(\epsilon+\rho)\pi+\Phi_{10}; $
\item $D\beta-\delta\epsilon=(\alpha+\pi)\sigma+(\bar{\rho}-\bar{\epsilon})\beta-(\mu+\gamma)\kappa-(\bar{\alpha}-\bar{\pi})\epsilon+\Psi_1$;
\item $D\gamma-\Delta\epsilon=(\tau+\bar{\pi})\alpha+(\bar{\tau}+\pi)\beta-(\epsilon+\bar{\epsilon})\gamma-(\gamma+\bar{\gamma})\epsilon+\tau\pi-\nu\kappa+\Psi_2-\Lambda+\Phi_{11}; $
\item $D\lambda-\bar{\delta}\pi=(\rho-3\epsilon+\bar{\epsilon})\lambda+\bar{\sigma}\mu+(\pi+\alpha-\bar{\beta})\pi-\nu\bar{\kappa}+\Phi_{20}; $
\item $D\mu-\delta\pi=(\bar{\rho}-\epsilon-\bar{\epsilon})\mu+\sigma\lambda+(\bar{\pi}-\bar{\alpha+\beta})\pi-\nu\kappa+\Psi_2+2\Lambda; $
\item $D\nu-\Delta\pi=(\pi+\bar{\tau})\mu+(\bar{\pi}+\tau)\lambda+(\gamma-\bar{\gamma})\pi-(3\epsilon+\bar{\epsilon})\nu+\Psi_3+\Phi_{21};$
\item $\Delta\lambda-\bar{\delta}\nu=-(\nu+\bar{\nu}+3\gamma-\bar{\gamma})\lambda+(3\alpha+\bar{\beta}+\pi-\bar{\tau})\nu-\Psi_4;$
\item $\delta\rho-\bar{\delta}\sigma=(\bar{\alpha}+\beta)\rho-(3\alpha-\bar{\beta})\sigma+(\rho-\bar{\rho})\tau+(\nu-\bar{\nu})\kappa-\Psi_1+\Phi_{01};$
\item $\delta\alpha-\bar{\delta}\beta=\nu\rho-\lambda\sigma-\alpha\bar{\alpha}+\beta\bar{\beta}-2\alpha\beta+(\rho-\bar{\rho})\gamma+(\nu-\bar{\nu})\epsilon-\Psi_2+\Lambda+\Phi_{11};$
\item $\delta\lambda-\bar{\delta}\mu=(\rho-\bar{\rho})\nu+(\mu-\bar{\mu})\pi+(\alpha+\bar{\beta})\mu+(\bar{\alpha}-3\beta)\lambda-\Psi_3+\Phi_{21};$
\item $\Delta\mu-\delta\nu=-(\mu+\gamma+\bar{\gamma})\mu-\lambda\bar{\lambda}+\bar{\mu}\pi+(\bar{\alpha}+3\beta-\tau)\nu-\Phi_{22};$
\item $\Delta\beta-\delta\gamma=(\bar{\alpha}+\beta-\tau)\gamma-\mu\tau+\sigma\nu+\epsilon\bar{\nu}+(\gamma-\bar{\gamma}-\mu)\beta-\alpha\bar{\lambda}-\Phi_{12};$
\item $\Delta\sigma-\delta\tau=-(\mu-3\gamma+\bar{\gamma})\sigma-\bar{\lambda}\rho-(\tau+\beta-\bar{\alpha})\tau+\kappa\bar{\nu}-\Phi_{02};$
\item $\Delta\rho-\bar{\delta}\tau=(\gamma+\bar{\gamma}-\bar{\nu})\rho-\sigma\lambda+(\bar{\beta}-\alpha-\bar{\tau})\tau+\nu\kappa-\Psi_2-2\Lambda;$
\item $\Delta\alpha-\bar{\delta}\gamma=(\rho+\epsilon)\nu-(\tau+\beta)\lambda+(\bar{\gamma}-\bar{\mu})\alpha+(\bar{\beta}-\bar{\tau})\gamma-\Psi_3.$
\end{enumerate}
\chapter{Christoffel Symbols}
\label{C}
The metric tensor in \eqref{eqn:136} is 
\begin{equation*}
g_{ab}=\left(\begin{matrix}\frac{V}{r}e^{2\beta}-r^2h_{AB}U^AU^B & e^{2\beta} & r^2h_{2B}U^B & r^2h_{3B}U^B \\
e^{2\beta} & 0 & 0 & 0 \\
r^2h_{2A}U^A & 0 & -r^2h_{22} &-r^2h_{23} \\
r^2h_{3A}U^A & 0 & -r^2h_{32} & -r^2h_{33}
\end{matrix}\right),
\end{equation*}
and its inverse is 
\begin{equation*}
g^{ab}=\left(\begin{matrix} 0 & e^{-2\beta} & 0 & 0 \\
e^{-2\beta} & -\frac{V}{r}e^{-2\beta} & U^2e^{-2\beta} & U^3e^{-2\beta} \\
0 & U^2e^{-2\beta}& -\frac{h_{22}}{r^2} &-\frac{h_{23}}{r^2} \\
0 & U^3e^{-2\beta} & -\frac{h_{32}}{r^2} & -\frac{h_{33}}{r^2}
\end{matrix}\right),
\end{equation*}
while the inverse of the matrix $h_{AB}$ in \eqref{eqn:141} is
\begin{equation*}
h^{AB}=\left(\begin{matrix}
e^{-2\gamma}\cosh 2\delta & -\displaystyle{\frac{\sinh 2\delta}{\sin\theta}}\\ \\
-\displaystyle{\frac{\sinh 2\delta}{\sin\theta}} & \displaystyle{\frac{e^{2\gamma}\cosh 2\delta}{\sin^2\theta}}
\end{matrix}\right).
\end{equation*}
It can be easily verified that 
\begin{equation*}
h^{AB}\partial_r h_{AB}=h^{AB}\partial_u h_{AB}=0.
\end{equation*}
The Christoffel symbols are
\begin{align*}
&\Gamma^{u}{}_{rr}=0,\\
&\Gamma^{r}{}_{rr}=2\partial_r\beta,\\
&\Gamma^{A}{}_{rr}=0,\\
&\Gamma^{u}{}_{rA}=0,\\
&\Gamma^{r}{}_{rA}=\frac{e^{-2\beta}r^2}{2}h_{AB}\left(\partial_r U^B\right)+\partial_{A} \beta,\\
&\Gamma^{B}{}_{rA}=\frac{\delta^B_A}{r}+\frac{\left(\partial_r h_{AC}\right)h^{BC}}{2},\\
&\Gamma^{u}{}_{AB}=e^{-2\beta}rh_{AB}+\frac{e^{-2\beta}r^2}{2}\left(\partial_rh_{AB}\right),\\
&\Gamma^{r}{}_{AB}=\frac{e^{-2\beta}r^2}{2}\left(\partial_AU_B+\partial_BU_A\right)+\frac{e^{-2\beta}r^2}{2}\left(\partial_uh_{AB}\right)-Ve^{-2\beta}h_{AB}\\
&-\frac{rVe^{-2\beta}}{2}\left(\partial_rh_{AB}\right)-U^C\frac{e^{-2\beta}r^2}{2}\left(\partial_Ah_{CB}+\partial_Bh_{AC}-\partial_Ch_{AB}\right),\\
&\Gamma^{C}{}_{AB}=rU^Ch_{AB}e^{-2\beta}+\frac{r^2e^{-2\beta}}{2}U^C\left(\partial_rh_{AB}\right)\\
&+\frac{h^{CD}}{2}\left(\partial_Ah_{DB}+\partial_Bh_{DA}-\partial_{D}h_{AB}\right),\\
&\Gamma^{u}{}_{Au}=\partial_A\beta-re^{-2\beta}U_A-\frac{r^2e^{-2\beta}}{2}\left(\partial_rU_A\right),\\
&\Gamma^{r}{}_{Au}=\frac{\partial_AV}{2r}-\frac{r^2e^{-2\beta}}{2}U^B\left(\partial_AU_B\right)+e^{-2\beta}VU_A+\frac{Vre^{-2\beta}}{2}\left(\partial_rU_A\right)\\&
-\frac{r^2e^{-2\beta}}{2}U^B\left(\partial_uh_{AB}\right)-\frac{r^2e^{-2\beta}}{2}U^B\left(\partial_BU_A\right),\\
&\Gamma^{B}{}_{Au}=U^B\left(\partial_A\beta\right)-re^{-2\beta}U^B\left(\partial_rU_A\right)-\frac{r^2e^{-2\beta}}{2}U^B\left(\partial_rU_A\right)-\frac{h^{BC}}{2}\left(\partial_AU_C\right)\\
&+\frac{h^{BC}}{2}\left(\partial_CU_A\right)+\frac{h^{BC}}{2}\left(\partial_uh_{AC}\right),\\
&\Gamma^{u}{}_{ru}=re^{-2\beta}U^AU_A+\frac{r^2e^{-2\beta}}{2}U^A\left(\partial_rU_A\right)-U^A\left(\partial_A\beta\right),\\
&\Gamma^{r}{}_{ru}=\frac{\partial_rV}{2r}-\frac{V}{2r^2}+\frac{V}{r}\left(\partial_r\beta\right)-\frac{r^2e^{-2\beta}}{2}U^A\left(\partial_rU_A\right)-U^A\left(\partial_A\beta\right),\\
&\Gamma^{A}{}_{ru}=-\frac{U^A}{r}-\frac{h^{AB}}{2}\left(\partial_rU_B\right)+\frac{h^{AB}}{r^2}\left(\partial_B\beta\right).
\end{align*}
\bibliographystyle{apa} 
\bibliography{biblio}

\begin{thebibliography}{}

\bibitem[\protect\astroncite{Adamo et~al.}{2012}]{New2012}
Adamo, T.~M., Newman, E.~T., and Kozameh, C. (2012).
\newblock {Null Geodesic Congruences, Asymptotically-Flat Spacetimes and Their
  Physical Interpretation}.
\newblock {\em arXiv:0906.2155v3}.

\bibitem[\protect\astroncite{Antoniou and Misra}{1991}]{Ant91}
Antoniou, I.~E. and Misra, B. (1991).
\newblock {Characterization of semidirect sum Lie algebras }.
\newblock {\em J. Math. Phys.}, 32:864--868.

\bibitem[\protect\astroncite{Avez}{1963}]{Ave63}
Avez, A. (1963).
\newblock {Essais de g\'eometrie Riemannienne hyperbolique globale.
  Applications \`a la Relativit\'e G\'en\'erale}.
\newblock {\em Ann. Inst. Fourier (Grenoble)}, 132:105--190.

\bibitem[\protect\astroncite{Barnich and Troessaert}{2010a}]{Barn2010a}
Barnich, G. and Troessaert, C. (2010a).
\newblock {Aspects of the BMS/CFT correspondence}.
\newblock {\em arXiv:1001.1541v2}.

\bibitem[\protect\astroncite{Barnich and Troessaert}{2010b}]{Barn2010}
Barnich, G. and Troessaert, C. (2010b).
\newblock {Symmetries of Asymptotically Flat Four-Dimensional Spacetimes at
  Null Infinity Revisited}.
\newblock {\em Phys. Rev. Lett.}, 105:1--3.

\bibitem[\protect\astroncite{Barnich and Troessaert}{2012}]{Barn2012}
Barnich, G. and Troessaert, C. (2012).
\newblock {Supertranslations call for superrotations}.
\newblock {\em arXiv:1102.4632v2}.

\bibitem[\protect\astroncite{Beem et~al.}{1996}]{Beem}
Beem, J.~K., Ehrlich, P.~E., and Easley, K.~L. (1996).
\newblock {\em Global Lorentzian Geometry}.
\newblock Marcel Dekker, Inc.

\bibitem[\protect\astroncite{Bengtsson}{1998}]{ADS}
Bengtsson, I. (1998).
\newblock {Anti-de Sitter space}.
\newblock {\em \url{http://www.fysik.su.se/~ingemar/Kurs.pdf}}.

\bibitem[\protect\astroncite{Bondi et~al.}{1962}]{Bondi62}
Bondi, H., Van~der Burg, M. G.~J., and Metzner, A. W.~K. (1962).
\newblock {Gravitational Waves in General Relativity. VII. Waves from
  Axi-Symmetric Isolated Systems}.
\newblock {\em Proc. Roy. Soc. Lond.}, 269:21--52.

\bibitem[\protect\astroncite{Bronstein}{1936}]{Bron}
Bronstein, M.~P. (1936).
\newblock {Quantization of Gravitational Waves}.
\newblock {\em Zhurn. Eksp. i Teor. Fiz.}, 6:195--236.

\bibitem[\protect\astroncite{Buchdahl}{1958}]{Buch}
Buchdahl (1958).
\newblock {On the compatibility of relativistic wave equations for particles of
  higher spin in the presence of a gravitational field}.
\newblock {\em Nuovo Cim.}, 10:96--103.

\bibitem[\protect\astroncite{Cantoni}{1966}]{Cant}
Cantoni, V. (1966).
\newblock {A Class of Representations of the Generalized Bondi-Metzner Group}.
\newblock {\em J. Math. Phys.}, 7:1361--1364.

\bibitem[\protect\astroncite{Carter}{1971}]{Car71}
Carter, B. (1971).
\newblock {Causal Structure in Space-Time}.
\newblock {\em General Relativity and Gravitation}, 1:349--391.

\bibitem[\protect\astroncite{Christodoulou and Klainerman}{1993}]{ChrKl}
Christodoulou, D. and Klainerman, S. (1993).
\newblock {\em The Global Nonlinear Stability of the Minkowski space}.
\newblock Princeton Univerisity Press, Princeton.

\bibitem[\protect\astroncite{Colombo and Jacobs}{2014}]{Col14}
Colombo, L.~J. and Jacobs, H.~O. (2014).
\newblock {Lagrangian Mechanics on Centered Semi-Direct Products }.
\newblock {\em arXiv:1303.3883v2}.

\bibitem[\protect\astroncite{Curzio et~al.}{2014}]{Algebra}
Curzio, M., Longobardi, P., and Maj, M. (2014).
\newblock {\em Lezioni di Algebra}.
\newblock Liguori Editore.

\bibitem[\protect\astroncite{Engelking}{1989}]{Rys}
Engelking, R. (1989).
\newblock {\em General Topology}.
\newblock Berlin: HeldermannVerlag.

\bibitem[\protect\astroncite{Esposito}{1999}]{Esp99}
Esposito, G. (1999).
\newblock {Complex Geometry of Nature and General Relativity}.
\newblock {\em arXiv:gr-qc/9911051}.

\bibitem[\protect\astroncite{Frauendiener}{2000}]{Fra2000}
Frauendiener, J. (2000).
\newblock {Conformal Infinity}.
\newblock {\em Living Rev. Relativity
  \url{http://www.livingreviews.org/Articles/Volume3/2000-4frauendiener/}}, 3.

\bibitem[\protect\astroncite{Geroch}{1968}]{Ger68}
Geroch, R.~P. (1968).
\newblock {Spinor Structure of Space-Times in General Relativity I}.
\newblock {\em J. Math. Phys.}, 9:1739--1744.

\bibitem[\protect\astroncite{Geroch}{1970a}]{Ger70b}
Geroch, R.~P. (1970a).
\newblock {Domain of Dependence}.
\newblock {\em J. Math. Phys.}, 11:437--449.

\bibitem[\protect\astroncite{Geroch}{1970b}]{GerSing}
Geroch, R.~P. (1970b).
\newblock Singularities.
\newblock In Carmeli, M., Fickler, S.~I., and Witten, L., editors, {\em
  Relativity}, pages 259--292. New York: Plenum Press.

\bibitem[\protect\astroncite{Geroch}{1970c}]{Ger70}
Geroch, R.~P. (1970c).
\newblock {Spinor Structure of Space-Times in General Relativity II}.
\newblock {\em J. Math. Phys.}, 11:343--348.

\bibitem[\protect\astroncite{Geroch}{1971}]{Ger71}
Geroch, R.~P. (1971).
\newblock {Space-time structure from a global point of view}.
\newblock In Sachs, R.~K., editor, {\em General Relativity and Cosmology},
  pages 71--103. New York: Academic Press.

\bibitem[\protect\astroncite{Geroch and Horowitz}{1978}]{GerHor78}
Geroch, R.~P. and Horowitz, G.~T. (1978).
\newblock {Asymptotically simple does not imply asymptotically minkowskian}.
\newblock {\em Phys. Rev. Lett.}, 40:203--206.

\bibitem[\protect\astroncite{Geroch and Horowitz}{1979}]{GerHor}
Geroch, R.~P. and Horowitz, G.~T. (1979).
\newblock {Global Structure of Spacetimes}.
\newblock In Hawking, S.~W. and Israel, W., editors, {\em General Relativity,
  an Einstein Centenary Survey}. Cambridge: Cambridge Univeristy Press.

\bibitem[\protect\astroncite{Geroch and Newman}{1970}]{GerBMS}
Geroch, R.~P. and Newman, E.~T. (1970).
\newblock {Application of the Semidirect Product of Groups}.
\newblock {\em J. Math. Phys.}, 12:314.

\bibitem[\protect\astroncite{Goldberg et~al.}{1967}]{New67}
Goldberg, J.~N., Macfarlane, A.~J., Newman, E.~T., Rohrlich, F., and Sudarshan,
  C.~G. (1967).
\newblock {Spin-s Spherical Harmonics and $\eth$}.
\newblock {\em J. Math. Phys.}, 8:2155--2161.

\bibitem[\protect\astroncite{Haco et~al.}{2017}]{Haw17}
Haco, S.~J., Hawking, S.~W., Perry, M.~J., and Bourjaily, J.~L. (2017).
\newblock {The Conformal BMS Group}.
\newblock {\em arXiv:1701.08110v1}.

\bibitem[\protect\astroncite{Hawking}{1969}]{Haw69}
Hawking, S.~W. (1969).
\newblock {The Existence of Cosmic Time Functions}.
\newblock {\em Proc. Roy. Soc. Lond.}, A308:433--435.

\bibitem[\protect\astroncite{Hawking}{1973}]{HawBH}
Hawking, S.~W. (1973).
\newblock {The Event Horizon}.
\newblock In DeWitt, C. and DeWitt, B.~S., editors, {\em Black Holes}. New
  York: Gordon and Breach.

\bibitem[\protect\astroncite{Hawking}{1975}]{Haw1975}
Hawking, S.~W. (1975).
\newblock {Particle Creation by Black Holes}.
\newblock {\em Commun. Math. Phys.}, 43:199--220.

\bibitem[\protect\astroncite{Hawking}{1976}]{Haw1976}
Hawking, S.~W. (1976).
\newblock {Breakdown of Predictability in Gravitational Collapse}.
\newblock {\em Phys. Rev. D}, 14:2460--2473.

\bibitem[\protect\astroncite{Hawking and Ellis}{1973}]{HawEll}
Hawking, S.~W. and Ellis, G. F.~R. (1973).
\newblock {\em The Large Scale Structure of Space-Time}.
\newblock Cambridge: Cambridge University Press.

\bibitem[\protect\astroncite{Hawking et~al.}{1976}]{Haw76}
Hawking, S.~W., King, A.~R., and McCarthy, P.~J. (1976).
\newblock {A New Topology for Curved Space-Time which Incorporates the Causal,
  Differential and Conformal Structures}.
\newblock {\em J. Math. Phys.}, 17:174--181.

\bibitem[\protect\astroncite{Hawking et~al.}{2016a}]{Haw161}
Hawking, S.~W., Perry, M.~J., and Strominger, A. (2016a).
\newblock {Soft hair on Black Holes}.
\newblock {\em Phys. Rev. Lett.}, 116:1--9.

\bibitem[\protect\astroncite{Hawking et~al.}{2016b}]{Haw162}
Hawking, S.~W., Perry, M.~J., and Strominger, A. (2016b).
\newblock {Superrotation Charge and Supertranslation Hair on Black Holes}.
\newblock {\em High. Energ. Phys.}, 161.

\bibitem[\protect\astroncite{He et~al.}{2015}]{MitraStro}
He, T., Lysov, V., Mitra, P., and Strominger, A. (2015).
\newblock {BMS supertranslations and Weinberg's soft graviton theorem}.
\newblock {\em [arXiv:1401.7026}.

\bibitem[\protect\astroncite{Held et~al.}{1970}]{Held70}
Held, A., Newman, E., and Posadas, R. (1970).
\newblock {The Lorentz Group and the Sphere}.
\newblock {\em J. Math. Phys.}, 11:3145--3154.

\bibitem[\protect\astroncite{Hicks}{1965}]{Hicks}
Hicks, N.~J. (1965).
\newblock {\em Notes on Differential Geometry}.
\newblock Princeton: Van Nostrand.

\bibitem[\protect\astroncite{Kobayashi and Nomizu}{1963}]{KobNom}
Kobayashi, S. and Nomizu, K. (1963).
\newblock {\em Foundations of Differential Geometry}, volume~1.
\newblock New York: Interscience.

\bibitem[\protect\astroncite{Lang}{2002}]{Lang}
Lang, S. (2002).
\newblock {\em Introduction to Differentiable Manifolds}.
\newblock New York: Springer-Verlag.

\bibitem[\protect\astroncite{Leray}{1952}]{Ler}
Leray, J. (1952).
\newblock {Hyperbolic Differential Equations}.
\newblock duplicated notes (Princeton Institute for Advanced Studies).

\bibitem[\protect\astroncite{Ludvigsen}{2004}]{GRG}
Ludvigsen, M. (2004).
\newblock {\em General Relativity: a Geometric Approach}.
\newblock Cambridge: Cambridge University Press.

\bibitem[\protect\astroncite{M\"adler and Winicour}{2016}]{BSF}
M\"adler, T. and Winicour, J. (2016).
\newblock {Bondi-Sachs Formalism}.
\newblock {\em
  \url{http://www.scholarpedia.org/article/Bondi-Sachs_Formalism}}.

\bibitem[\protect\astroncite{McCarthy}{1972a}]{McC1}
McCarthy, P.~J. (1972a).
\newblock {Representations of the Bondi-Metzner-Sachs group I. Determination of
  the representations}.
\newblock {\em Proc. Roy. Soc. Lond.}, 330:517--535.

\bibitem[\protect\astroncite{McCarthy}{1972b}]{McC5}
McCarthy, P.~J. (1972b).
\newblock {Structure of the Bondi-Metzner-Sachs Group}.
\newblock {\em J. Math. Phys.}, 13:1837--1842.

\bibitem[\protect\astroncite{McCarthy}{1973}]{McC2}
McCarthy, P.~J. (1973).
\newblock {Representations of the Bondi-Metzner-Sachs group II. Properties and
  classification of representations}.
\newblock {\em Proc. Roy. Soc. Lond.}, 333:317--336.

\bibitem[\protect\astroncite{McCarthy and Crampin.M.}{1973}]{McC3}
McCarthy, P.~J. and Crampin.M. (1973).
\newblock {Representations of the Bondi-Metzner-Sachs group III. Poincare Spin
  Multiplicities and Irreducibility}.
\newblock {\em Proc. Roy. Soc. Lond.}, 335:301--311.

\bibitem[\protect\astroncite{McCarthy and Crampin.M.}{1976}]{McC4}
McCarthy, P.~J. and Crampin.M. (1976).
\newblock {Representations of the Bondi-Metzner-Sachs group IV. Cantoni
  Representations are Induced}.
\newblock {\em Proc. Roy. Soc. Lond.}, 351:55--70.

\bibitem[\protect\astroncite{Nakahara}{2003}]{Nak}
Nakahara, M. (2003).
\newblock {\em Geometry, Topology and Physics}.
\newblock Bristol: Institute of Physics Publishing.

\bibitem[\protect\astroncite{Newman and Penrose}{1962}]{New62}
Newman, E.~T. and Penrose, R. (1962).
\newblock {An approach to gravitational radiation by a method of spin
  coefficients}.
\newblock {\em J. Math. Phys.}, 3:566--578.

\bibitem[\protect\astroncite{Newman and Penrose}{1966}]{NewPen66}
Newman, E.~T. and Penrose, R. (1966).
\newblock {Note on the Bondi-Metzner-Sachs Group}.
\newblock {\em J. Math. Phys.}, 7:863--870.

\bibitem[\protect\astroncite{Newman and Unti}{1962}]{NewUn}
Newman, E.~T. and Unti, T. W.~J. (1962).
\newblock {Behaviour of Asymptotically Flat Empty Spaces}.
\newblock {\em J. Math. Phys.}, 3:891--901.

\bibitem[\protect\astroncite{Newman}{1989}]{NewRP}
Newman, R. P. A.~C. (1989).
\newblock {The global structure of simple space-times}.
\newblock {\em Commun. Math. Phys.}, 123:17--52.

\bibitem[\protect\astroncite{Oblak}{2016a}]{OblakBMS}
Oblak, B. (2016a).
\newblock {BMS Particles in Three Dimensions}.
\newblock {\em arXiv:1610.08526v1}.

\bibitem[\protect\astroncite{Oblak}{2016b}]{Oblak16}
Oblak, B. (2016b).
\newblock {From the Lorentz Group to the Celestial Sphere}.
\newblock {\em arXiv:1508.00920v2}.

\bibitem[\protect\astroncite{Penrose}{1960}]{Pen60}
Penrose, R. (1960).
\newblock {A Spinor Approach to General Relativity}.
\newblock {\em Ann. Phys}, 10:171--201.

\bibitem[\protect\astroncite{Penrose}{1962}]{Pen62}
Penrose, R. (1962).
\newblock {The Light Cone at Infinity}.
\newblock In Infeld, L., editor, {\em Relativistic Theories of Gravitation},
  pages 369--373. Pergamon Press, Oxford.

\bibitem[\protect\astroncite{Penrose}{1963}]{Pen63}
Penrose, R. (1963).
\newblock {Asymptotic Properties of Fields and Space-Time}.
\newblock {\em Phys. Rev. Lett.}, 10:66--68.

\bibitem[\protect\astroncite{Penrose}{1964}]{Pen64}
Penrose, R. (1964).
\newblock {Conformal Treatment of Infinity}.
\newblock In DeWitt, C. and DeWitt, B.~S., editors, {\em Relativity, Groups and
  Topology}, pages 565--586. New York: Gordon and Breach.

\bibitem[\protect\astroncite{Penrose}{1965}]{Pen65}
Penrose, R. (1965).
\newblock {Zero Rest-Mass Fields Including Gravitation: Asymptotyic Behaviour}.
\newblock {\em Proc. Roy. Soc. Lond.}, A284:159--203.

\bibitem[\protect\astroncite{Penrose}{1967}]{Pen67}
Penrose, R. (1967).
\newblock {Structure of Space-Time}.
\newblock In DeWitt, C. and Wheeler, J.~A., editors, {\em Battelle Rencontres},
  pages 121--235. New York: Benjamin.

\bibitem[\protect\astroncite{Penrose}{1972a}]{Pen72}
Penrose, R. (1972a).
\newblock {Relativistic Symmetry Groups}.
\newblock In Barut, A.~O., editor, {\em Group Theory in Non-Linear Problems},
  pages 1--43. D. Reidel Publishing Company, Dordrecht.

\bibitem[\protect\astroncite{Penrose}{1972b}]{PenTDT}
Penrose, R. (1972b).
\newblock {\em Techniques of Differential Topology in Relativity}.
\newblock Society for Industrial and Applied Mathematics.

\bibitem[\protect\astroncite{Penrose}{1982}]{Pen82}
Penrose, R. (1982).
\newblock {Some Unsolved Problems in Classical General Relativity}.
\newblock In Yau, S.-T., editor, {\em Seminar on Differential Geometry}, pages
  631--668. Princeton University Press.

\bibitem[\protect\astroncite{Penrose}{1983}]{Pen83}
Penrose, R. (1983).
\newblock {Spinors and Torsion in General Relativity}.
\newblock {\em Found. Phys.}, 13:325--340.

\bibitem[\protect\astroncite{Penrose and Kronheimer}{1967}]{PenKro}
Penrose, R. and Kronheimer, E.~H. (1967).
\newblock {On the Structure of Causal Space-Times}.
\newblock {\em Math. Proc. Camb. Phil. Soc.}, 63:481--501.

\bibitem[\protect\astroncite{Penrose and Rindler}{1984}]{Penrin1}
Penrose, R. and Rindler, W. (1984).
\newblock {\em Two-Spinor Calculus and Relativistic Fields}, volume~1 of {\em
  Spinors and Space-Time}.
\newblock Cambridge: Cambridge Univeristy Press.

\bibitem[\protect\astroncite{Penrose and Rindler}{1986}]{Penrin2}
Penrose, R. and Rindler, W. (1986).
\newblock {\em Spinors and Twistor Methods in Space-Time Geometry}, volume~2 of
  {\em Spinors and Space-Time}.
\newblock Cambridge: Cambridge Univeristy Press.

\bibitem[\protect\astroncite{Sachs}{1961}]{Sachs61}
Sachs, R.~K. (1961).
\newblock {Gravitational Waves in General Relativity. VI. The Outgoing
  Radiation Condition}.
\newblock {\em Proc. Roy. Soc. Lond.}, 264:309--338.

\bibitem[\protect\astroncite{Sachs}{1962a}]{Sachs1}
Sachs, R.~K. (1962a).
\newblock {Asymptotic Symmetries in Gravitational Theory}.
\newblock {\em Phys. Rev.}, 128:2851--2864.

\bibitem[\protect\astroncite{Sachs}{1962b}]{Sachs62}
Sachs, R.~K. (1962b).
\newblock {Gravitational Waves in General Relativity. VIII. Waves in
  Asymptotically Flat Space-Time}.
\newblock {\em Proc. Roy. Soc. Lond.}, 270:103--126.

\bibitem[\protect\astroncite{Schmidt et~al.}{1975}]{BMS75}
Schmidt, B., Walker, M., and Sommers, P. (1975).
\newblock {A characterization of the Bondi-Metzner-Sachs group}.
\newblock {\em General Relativity and Gravitation}, 6:489--497.

\bibitem[\protect\astroncite{Seifert}{1967}]{Seif67}
Seifert, H.~J. (1967).
\newblock {Global connectivity by timelike geodesics}.
\newblock {\em Zs. f. Naturfor}, 22a:1356--1360.

\bibitem[\protect\astroncite{Spivak}{1999}]{Spiv}
Spivak, M. (1999).
\newblock {\em A Comprehensive Introduction to Differential Geometry},
  volume~1.
\newblock Houston: Publish or Perish, Inc.

\bibitem[\protect\astroncite{Stewart}{1991}]{Stew}
Stewart, J. (1991).
\newblock {\em Advanced General Relativity}.
\newblock Cambridge: Cambridge Univeristy Press.

\bibitem[\protect\astroncite{Strominger}{2014}]{Stro14}
Strominger, A. (2014).
\newblock {On BMS Invariance of Gravitational Scattering}.
\newblock {\em High. Energ. Phys.}, 152.

\bibitem[\protect\astroncite{Wald}{1984}]{Wald}
Wald, R. (1984).
\newblock {\em General Relativity}.
\newblock Chicago: The University of Chicago Press.

\bibitem[\protect\astroncite{Weinberg}{1965}]{Wein65}
Weinberg, S. (1965).
\newblock {Infrared photons and gravitons}.
\newblock {\em Phys. Rev.}, 140:516--524.

\bibitem[\protect\astroncite{Wheeler and Regge}{1957}]{Wheel}
Wheeler, J.~A. and Regge, T. (1957).
\newblock {Stability of a Schwarzschild Singularity}.
\newblock {\em Phys. Rev.}, 108:1063--1069.

\end{thebibliography}
\nocite{Wald}\nocite{HawBH}\nocite{GerSing}\nocite{GerHor}\nocite{Pen60}\nocite{Pen63}\nocite{Pen64}\nocite{Pen67}\nocite{Esp99}\nocite{Car71}\nocite{Ger70b}\nocite{Fra2000}
\nocite{Ger71}\nocite{Sachs1}\nocite{BSF}\nocite{Pen82}\nocite{Held70}\nocite{Oblak16}\nocite{McC1}\nocite{McC2}\nocite{McC3}\nocite{McC4}\nocite{McC5}

\end{document}